\documentclass[10pt, letterpaper]{article}

\usepackage[utf8]{inputenc} 
\usepackage[T1]{fontenc}
\usepackage{microtype}
\usepackage[margin=1in, top=0.8in]{geometry}
\usepackage{mathpazo} 
\usepackage[round]{natbib}

\usepackage{titlesec}
\titleformat{\section}{\large\bfseries\raggedright}{\thesection}{1em}{}
\titlespacing*{\section}{0pt}{1.2ex plus 0.3ex minus 0.2ex}{1.0ex plus 0.2ex minus 0.2ex}
\titleformat{\subsection}{\normalsize\bfseries\raggedright}{\thesubsection}{1em}{}
\titlespacing*{\subsection}{0pt}{1.0ex plus 0.3ex minus 0.2ex}{0.5ex plus 0.2ex}
\titleformat{\subsubsection}{\normalsize\bfseries\raggedright}{\thesubsubsection}{1em}{}
\titlespacing*{\subsubsection}{0pt}{0.8ex plus 0.2ex minus 0.2ex}{0.3ex plus 0.1ex}
\titleformat{\paragraph}[runin]{\normalsize\bfseries}{\theparagraph}{1em}{}
\titlespacing*{\paragraph}{0pt}{0.8ex plus 0.2ex minus 0.2ex}{0.5em}

\renewenvironment{abstract}%
{%
  \vspace*{0.075in}%
  \begin{center}%
    {\large \bfseries Abstract}%
  \end{center}%
  \vspace{-2em}%
  \begin{quote}%
}
{
  \end{quote}%
  \vspace{1ex}%
}

\makeatletter
\def\blfootnote{\gdef\@thefnmark{}\@footnotetext}
\makeatother

\usepackage{url}           
\usepackage{booktabs}
\usepackage{nicefrac}  
\usepackage[table]{xcolor}
\usepackage{enumitem}   
\usepackage{amsmath, amsthm, amsfonts, amssymb, mathtools}
\usepackage{graphicx}
\usepackage{mdframed}
\usepackage{float}
\usepackage{caption}
\usepackage[toc,page,header]{appendix}
\usepackage{wrapfig}
\usepackage{algorithm, algpseudocode}
\usepackage{multicol}
\usepackage{etoolbox}
\usepackage{multirow}
\usepackage[scaled]{beramono} 

\usepackage{notation}

\usepackage[
    linktocpage=true,
    plainpages=false,
    pdfpagelabels
]{hyperref}
\hypersetup{
  colorlinks=true,
  citecolor=blue,
  urlcolor=cyan,
  linkcolor=magenta,
}

\definecolor{light-gray}{gray}{0.93}

\setlist{itemsep=0.1em, topsep=0em, leftmargin=2.5em}

\usepackage{thmtools, thm-restate}

\declaretheoremstyle[
  spaceabove=1.5ex plus 0.2ex minus 0.2ex, 
  spacebelow=0.5ex,                        
  headfont=\bfseries,                     
  bodyfont=\normalfont                  
]{tighterrem}

\declaretheoremstyle[
  spaceabove=1.5ex plus 0.2ex minus 0.2ex, 
  spacebelow=0.5ex,                        
  headfont=\bfseries,                     
  bodyfont=\it                     
]{tighterthm}

\newtheorem*{theorem*}{Theorem}
\newtheorem*{informaltheorem*}{Informal Theorem}

\theoremstyle{definition}

\renewenvironment{proof}{\noindent\textbf{Proof.}\,}{\qed}

\title{  
  \vspace*{-1em}
    \bf
    When and Why is Optimistic Multiplicative Weights Slow?\\ 
    The Geometry of Energy Dissipation
}

\date{}

\author{
  \begin{tabular}{c}
    John Lazarsfeld$^{1}$
  \end{tabular}
  \hfil
  \hspace*{1em}
  \begin{tabular}{c}
    Anas Barakat$^{1}$
  \end{tabular}
  \hfil
  \hspace*{1em}
  \begin{tabular}{c}
    Georgios Piliouras$^{2}$
  \end{tabular}
  \vspace{0.2em} \\ 
  \begin{tabular}{c}
    Antonios Varvitsiotis$^{1, 3, 4}$
  \end{tabular}
  \hspace*{1em}
  \begin{tabular}{c}
    Andre Wibisono$^{5}$
  \end{tabular}
}

\begin{document}
\maketitle
\blfootnote{
    Affiliations:\;\;    
    $^{1}$ SUTD \;\;
    $^{2}$ Google DeepMind \;\;
    $^{3}$ NUS CQT\;\;
    $^{4}$ Archimedes/Athena RC \;\;
    $^{5}$ Yale \;\;
  \vspace*{0.1em}
}
\blfootnote{%
Contact: \qquad
\texttt{jlazarsfeld@gmail.com,
    anas\_barakat@sutd.edu.sg, 
    gpil@deepmind.com, \\
    \hspace*{8em}
    antonios@sutd.edu.sg, 
    andre.wibisono@yale.edu.
}}

\vspace*{-2em}
\begin{abstract}

This paper studies the convergence of the 
Optimistic Multiplicative Weights Update algorithm 
(OMWU) in two-player zero-sum games. 
Recent works have identified instances on which 
the last-iterate of OMWU can converge arbitrarily slowly,
but understanding when and why this 
slow convergence occurs has remained open.
In this work, we develop a new analysis framework 
that gives sharp, quantitative
explanations for this behavior.
Our analysis is based on 
viewing the algorithm's dual iterates
as an \textit{optimistic skew-gradient descent}
with respect to an energy function.
We prove over the dual iterates that
energy is \textit{dissipative},
and by establishing tight bounds 
on the magnitude of dissipation, our 
analysis quantifies the geometric bottlenecks that arise 
when the corresponding primal iterates are close 
to the simplex boundary.
This further translates into a new linear last-iterate 
convergence rate in KL divergence
on games with a unique and interior Nash equilibrium.
Compared to prior work, this new rate contains a much sharper
dependence on game-specific constants,
and we prove this dependence is optimal.
Moreover, these geometric insights further translate
into new separations on \textit{uniform} convergence
rates for OMWU. On the one hand, we prove \textit{constant lower bounds}
on the uniform \textit{best-iterate} convergence rate in 
KL divergence and total variation distance from Nash. 
On the other hand, 
we establish for the $2\times 2$ setting 
a new $\widetilde O(T^{-1/2})$ 
best-iterate rate in duality gap,
improving substantially over prior work.
Together, this shows in general that uniform
convergence rate guarantees do not transfer across
different measures of distance to Nash. 

\bigskip

\captionsetup[figure]{labelfont=footnotesize}
\begin{figure}[hb!]
\centering
\includegraphics[width=1\textwidth]{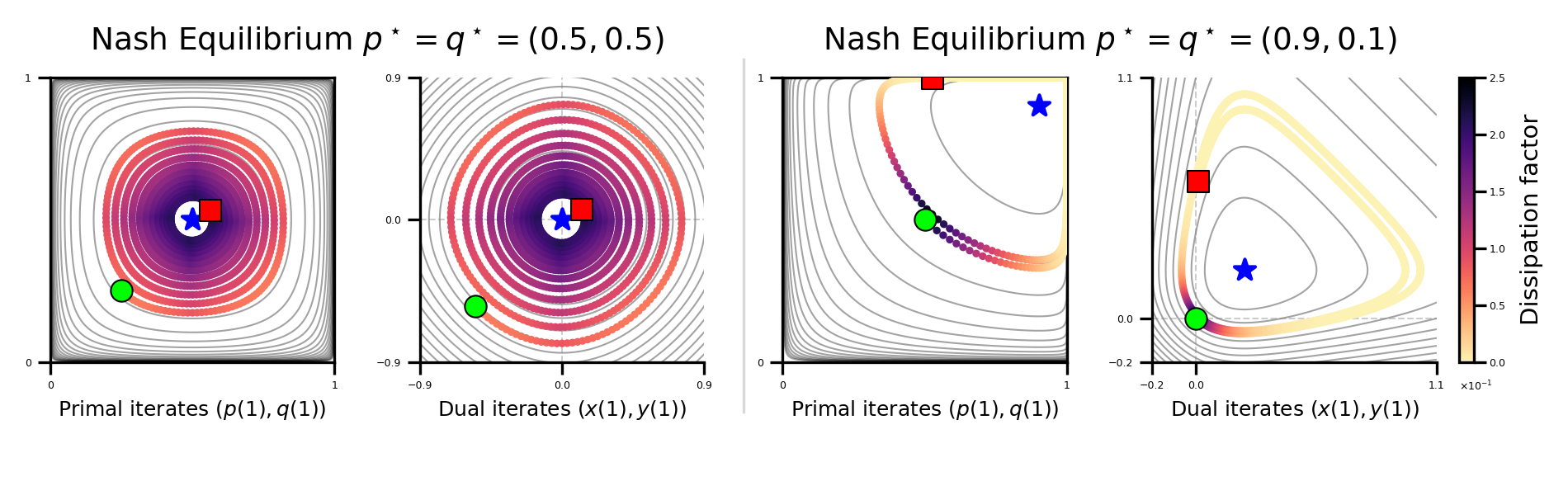}
\vspace*{-3.5em}
\label{fig:abstract}
\caption{%
    \footnotesize
    \textbf{The primal and dual iterates of OMWU} on two instances
    of a $2\times 2$ zero-sum game.
    The algorithm is run for $T=1500$ iterations with 
    stepsize $\eta = 0.2$ in both instances. 
    The Nash equilibrium (and corresponding dual equilibrium point) 
    is denoted by the blue star; the OMWU initializations by
    the green circle; the $T$'th primal
    and dual iterates by the red square. 
    In the left instance, the Nash equilibrium (NE)
    is $p^\star = q^\star = (0.5, 0.5)$
    and the primal initialization is $(0.25, 0.25)$.
    In the right instance, the NE is 
    $p^\star = q^\star = (0.9, 0.1)$  
    and the primal initialization
    is $(0.5, 0.5)$. Note that the algorithm's dual iterates are 
    completely determined by the primal iterates. 
    In both instances, the left subplot shows the 
    trajectory of the \textbf{primal iterates overlayed on the levelsets 
    of KL divergence from NE},
    and the right subplot shows the trajectory of the
    \textbf{dual iterates overlayed on the levelsets 
    of the log-sum-exp energy function}.
    In the right instance, the last-iterate convergence to NE 
    is significantly slower due to the lower energy dissipation
    over the dual iterates.
    }    
\end{figure}

\end{abstract}

\newpage

\section{Introduction}
\label{sec:intro}

Optimistic Multiplicative Weights Update (OMWU)
is among the most prominent algorithms for online
learning in games. 
While the standard Multiplicative Weights Update (MWU)
algorithm diverges to the simplex boundary
in two-player zero-sum games~\citep{bailey2018multiplicative},
the optimistic variant \textit{converges in last-iterate 
to a Nash equilibrium}~\citep{DaskalakisP19, wei2021linear}.
This desirable convergence property
has led OMWU to gain increasing relevance
in machine learning 
applications ranging from adversarial training
\citep{daskalakis-et-al18training-gans,mertikopoulos-et-al18},
to zero-sum Markov games~\citep{cen-et-al23faster},
and  Nash learning from human feedback
\citep{munos2024nash,tiapkin-et-al26pp-nlhf}.

However, recent work has identified instances on which
 the last-iterate convergence of OMWU
can be arbitrarily slow~\citep{FGKLLZ24},
raising new doubts on
the algorithm's reliability for fast learning in games. 
Moreover, this finding has highlighted the fact that,
despite an extensive recent literature studying the
\textit{average-iterate} convergence behavior of 
OMWU (\citet{syrgkanis2015fast, daskalakis2021near,anagnostides2022near}), 
and despite the long history of optimistic algorithms in general,\footnote{%
    OMWU is the canonical instantiation
    of the Optimistic Follow-the-Regularized Leader (OFTRL)
    family with
    entropic regularization~\citep{rakhlin2013optimization},
    and this family itself has its roots in 
    optimistic and extragradient methods
    for solving monotone variational inequalities
    (see, e.g., \cite{korpelevich1976extragradient},
    \cite{popov1980modification},
    \cite{nemirovski2004prox}, \cite{nesterov2007dual}).
}
obtaining a fine-grained
and quantitative understanding of the
\textit{last-iterate} behavior of Optimistic MWU
has still remained surprisingly elusive,
even in the most fundamental constrained setting 
of two-player bilinear zero-sum games.

By way of background,
the \textit{asymptotic} last-iterate convergence
of OMWU to a unique Nash equilibrium
in zero-sum games was first proven 
by~\cite{DaskalakisP19}.
This result was later strengthened by~\cite{wei2021linear},
who proved a non-asymptotic \textit{linear} 
last-iterate convergence rate
for the same setting,
but with a bound containing constant
factors that depend on the game's smallest
non-zero equilibrium mass $\delta$.
Such bounds are called \textit{universal}
convergence rates, and they imply that the algorithm's 
performance can vary dramatically depending on 
the magnitude of $\delta$ in a specific game instance. 
However, whether this potentially poor scaling in their analysis 
accurately captures the 
true iterate-to-iterate behavior of the algorithm
was left unanswered.

Towards this direction,~\cite{FGKLLZ24}
recently gave new evidence that OMWU \textit{can indeed}
converge arbitrarily slowly:
for every time horizon $T$,
they proved that the last iterate 
of OMWU on a $2 \times 2$ instance 
remains an absolute constant
distance away from equilibrium in duality gap. 
While this confirms for fixed time horizons
that the last-iterate of OMWU can remain far from Nash,
it does not explain what \textit{features} of a specific game instance
and what \textit{algorithmic properties} of OMWU govern its
possibly slow iterate-to-iterate behavior.
Moreover, whether the analysis of~\cite{wei2021linear}
-- which involves tracking distances between
\textit{primal} iterates -- 
and its dependence on $\delta$ is quantitatively 
sharp still remains unresolved. 
Thus, the central focus of our paper is to 
address the following fundamental question:
\begin{center}
\vspace*{-0.1em}
\textit{When and why can the last-iterate convergence
of OMWU be arbitrarily slow in zero-sum games?}
\end{center}

\subsection{Our Contributions and Techniques}
Our main contributions provide sharp, quantitative 
answers to this question.

\paragraph{New Analysis Framework via Dual Energy Dissipation.}
For zero-sum games with 
a unique and interior Nash equilibrium,
we develop a new framework for analyzing
the last-iterate convergence of OMWU.
We analyze the \textit{dual iterates}
of OMWU in the unconstrained space of payoff vectors, 
where we show the algorithm follows an 
\textit{optimistic skew-gradient descent} 
with respect to the \textit{log-sum-exp energy function}.
This energy function is the convex conjugate of 
the negative entropy regularizer, 
and we show that minimizing energy over the 
dual iterates corresponds to minimizing KL divergence
from Nash equilibrium 
over the primal iterates
(Proposition~\ref{prop:change-kl-energy-equiv}
and Proposition~\ref{prop:calZ-minimizer}).

Under this dual perspective, 
we prove that \textit{energy is dissipative} 
over the iterates of OMWU, 
which is in stark contrast to the behavior of the
standard MWU variant.
For OMWU, we obtain \textit{tight} upper and 
lower bounds on the one-step dissipation 
in energy, the strength of which is 
quantified in the local geometry of the 
energy function Hessian
(Lemma~\ref{lem:energy-one-step-full}).
Specifically, we relate the magnitude of dual energy dissipation
to the primal KL divergence from Nash 
by proving a novel, general 
\textit{skew-gradient domination} inequality
(Proposition~\ref{prop:dissipation-kl-bound})
that may be of independent interest.
Together, these results give a new and more 
clear description of the geometric properties
that lead to the algorithm's slow behavior:
when the primal iterate is near the simplex boundary, 
the dual iterate moves in directions 
over which the energy function is extremely flat,
thus leading to vanishingly small magnitudes
of energy dissipation.
As the decrease in energy is equivalent to
decrease in KL divergence from Nash,
a geometric bottleneck arises when the 
primal iterates spend many steps near the boundary
of the simplex, a property that emerges
when the game's equilibrium is 
also close to the boundary.

\paragraph{Tight Universal Last-Iterate Convergence Rates.}
Our results on energy dissipation further
translate into a new 
and optimal \textit{linear last-iterate convergence rate}
in KL divergence for zero-sum games with 
a unique and interior Nash equilibrium
(Theorem~\ref{thm:kl-last-unified}).
This new universal rate has a significantly sharper
dependence on $\delta$
(the minimum non-zero Nash equilibrium coordinate)
compared to the prior rate of~\cite{wei2021linear}.
In particular, it improves by a leading multiplicative factor of 
$\exp(\frac{1}{\delta})$, a term that is unbounded as $\delta \to 0$.
While our new bound can still decay 
slowly as $\delta$ grows small, we prove a lower bound 
on the last-iterate convergence rate
with a matching dependence on $\delta$ (Theorem~\ref{thm:kl-last-lower}), establishing in general that our new rate and analysis is tight.
Note also that upper bounds on convergence rates
in KL divergence also imply converges rates in duality gap
(see expression~\eqref{eq:ne-distances}).

\captionsetup[table]{labelfont=small}
\begin{table}[t!]
    \centering
    \small
    \renewcommand{\arraystretch}{1.6}
    \begin{tabular}{c | c | c | c }
        & \textbf{\textbf{KL Divergence}}
        & \textbf{\textbf{TV Distance}}
        & \textbf{\textbf{Duality Gap}} \\ \hline
        \multirow{2}{*}{Universal Last-Iterate}
        & $O(\exp(\frac{1}{\delta}) \cdot 
         \exp(- \exp(\frac{-1}{\delta}) \cdot T))$ 
         $^{\diamond}$ 
        & 
        (via~\eqref{eq:ne-distances})
        $^{\diamond}$     
        & 
        (via~\eqref{eq:ne-distances})
        $^{\diamond}$ 
        \\ \cline{2-4} 
        &
        {\cellcolor{light-gray}
        $\Theta(\exp(- \exp(\frac{-1}{\delta}) \cdot T))$ 
        {\color{magenta}$^{\star}$}
        {\color{magenta}$^{\blacktriangle}$}}
        & 
        (via~\eqref{eq:ne-distances})
         \cellcolor{light-gray}{\color{magenta}$^{\star}$}
        & 
        (via~\eqref{eq:ne-distances})
        \cellcolor{light-gray}{\color{magenta}$^{\star}$} \\
        \hline \hline
        Uniform Last-Iterate 
        &  $\Omega(1)$ $^{\dagger}$ 
        &  $\Omega(1)$ $^{\dagger}$ 
        &  $\Omega(1)$ $^{\dagger}$ \\ \hline
        & \cellcolor{light-gray}
        & \cellcolor{light-gray}
        & $O(T^{-1/6})$ $^{\dagger \dagger}$ \\ \cline{4-4} 
        \multirow{-2}{*}{Uniform Best-Iterate}
        &\multirow{-2}{*}{\cellcolor{light-gray}
        $\Omega(1)$ {\color{magenta}$^{\star \star}$}}
        &\multirow{-2}{*}{\cellcolor{light-gray}
        $\Omega(1)$ {\color{magenta}$^{\star \star}$}}
        & \cellcolor{light-gray}  
        $\widetilde O(T^{-1/2})$ 
        {\color{magenta}$^{\star \star \star}$}
        \\ \hline
    \end{tabular}
  \vspace*{0.5em}
  \caption{%
    \small
    \textbf{The landscape of universal and 
    uniform convergence rates
    for OMWU}, initialized from the uniform distribution and with 
    constant stepsize $\eta$,
    on zero-sum games with a unique and interior
    Nash equilibrium $w^\star$ with minimum 
    coordinate $\delta > 0$.
    Our results are highlighted in gray. 
    $^{\diamond}$: \cite{wei2021linear}.
    $^{\star}$:  upper bound in Theorem~\ref{thm:kl-last-unified},
    and $^{\blacktriangle}$: lower bound in
    Theorem~\ref{thm:kl-last-lower}.
    Both $(^{\star})$ and $(^{\diamond})$ imply rates
    in $\TV$ and duality gap due to~\eqref{eq:ne-distances}.
    Note that Theorem~\ref{thm:kl-last-unified} improves
    over the bound of~\cite{wei2021linear} by
    an $\exp(1/\delta)$ factor, a term that grows
    large as $\delta \to 0$.
    $^{\dagger}$:  proven by \cite{FGKLLZ24} in duality gap,
    which also implies the same lower bound in $\TV$
    and $\KL$ due to~\eqref{eq:ne-distances}.
    \; $^{\star \star}$: Theorem~\ref{thm:uniform-lb-main}. 
    $^{\star \star \star}$: Theorem~\ref{thm:dg-best-2x2},
    which holds only for the $2 \times 2$ setting.
    $^{\dagger \dagger}$: \cite{C25},
    also only for the $2\times 2$ setting.
    \vspace*{-1em}
    }
  \label{table:results}
\end{table}

\paragraph{New Separations on Uniform Convergence Guarantees.}
Our geometric insights additionally yield new
contrasting bounds on \textit{uniform convergence rates}
for OMWU.
Uniform rates hold simultaneously over all instances 
of a fixed dimension and 
have no dependence on game-specific
constants like $\delta$.
Importantly, a potentially slow \textit{universal}
last-iterate convergence rate does not preclude 
the possibility of obtaining
\textit{uniform} rates under
weaker notions of convergence.
While the result of~\cite{FGKLLZ24}
confirms that no uniform \textit{last-iterate}
rates are attainable, their followup work proved for 
the $2\times2$ setting that OMWU 
does have
a uniform $O(T^{-1/6})$ 
\textit{best-iterate}  rate in duality gap
\citep{C25}.
Note that upper bounds on KL divergence imply upper bounds
in duality gap (see the relationships of 
expression~\eqref{eq:ne-distances}),
but the reverse direction does not necessarily hold.
Thus, given our improved geometric understanding
and new universal last-iterate analysis
in KL, it is natural to ask:
\begin{center}
\vspace*{-0.5em}
\textit{Can we obtain any stronger uniform convergence 
guarantees for OMWU?}
\vspace*{-0.5em}
\end{center}
In particular, can we strengthen the guarantee of a
uniform best-iterate convergence rate 
to also hold 
in $\KL$ divergence or total variation ($\TV$) distance
from Nash?
And, can we obtain any \textit{faster} rates than 
the current $O(T^{-1/6})$ bound?

Our work additionally gives answers to these questions.
On the negative side, we prove a new \textit{separation} between
the uniform guarantees attainable in
duality gap versus $\KL$ divergence and $\TV$ distance. 
Specifically, we prove that OMWU has 
\textit{no uniform best-iterate convergence rate}
in $\KL$ or $\TV$ (Theorem~\ref{thm:uniform-lb-main}),
meaning that on certain hard instances, \textit{all} iterates 
of OMWU over a finite time horizon remain far
from Nash under these measures.
On the other hand, in duality gap, we prove a new 
$\widetilde O(T^{-1/2})$ best-iterate 
convergence rate for the $2\times 2$ setting
(Theorem~\ref{thm:dg-best-2x2}), a bound that
substantially improves over the prior rate 
of~\cite{C25} for this same regime.
Together, these results indicate that uniform 
convergence guarantees do not
translate between different measures of distance to Nash,
a distinction that arises due to the inherent relationship
between OMWU and $\KL$ divergence,
and the fundamental geometric differences between 
the level sets of $\KL$ divergence, $\TV$ distance, and
duality gap over the simplex
(see Section~\ref{sec:2x2-levelsets-compare}
for a visual comparison).

Ultimately, our new techniques
and results open the door to
a broader, quantitative
study of last-iterate learning in games
that is both physics-informed
and geometry-aware.
We summarize our new rates in Table~\ref{table:results},
and we defer a more detailed discussion
of other related work to Section~\ref{sec:related-work}.


\section{Preliminaries}
\label{sec:prelims}

\paragraph{Notations.} 
For $k \in \mathbb{N}$, let $[k] = \{1, \dots, k\}$. 
Let $\Delta_k$ denote the probability simplex in $\R^k$,
and let $\relint(\Delta_k) = \{ p \in \Delta_k : 0 < p(i) <1\}$
denote its relative interior.
For $p \in \relint(\Delta_k)$, we say that
$p$ is \textit{interior} or \textit{fully mixed}.
For $u, v \in \R^k$, we write 
$\langle u , v \rangle = u^\top v = \sum_{i=1}^k u(i) \cdot v(i)$
to denote the $\ell_2$ inner product. 
We write $\1_{k} \in \R^k$ to denote the
vector of all ones.

\paragraph{Online learning in zero-sum games.}
We study the setting where
two players repeatedly play a zero-sum game with 
payoff matrix $A \in \R^{m \times n}$, which yields the min-max optimization problem:  
\begin{equation*}
	\textstyle
	\min_{p \in \Delta_m} \, \max_{q \in \Delta_n} \,
	\langle p, A q \rangle \;.
\end{equation*}
At each step $t \ge 0$ the players choose strategies $p_t \in \Delta_m$
and $q_t \in \Delta_n$, incur losses $\langle p_t, A q_t \rangle$
and $- \langle q_t, A^\top p_t \rangle$, and then 
observe gradient feedback payoff vectors $Aq_t \in \R^m$
and $-A^\top p_t \in \R^n$, respectively. 
The players seek to converge to a Nash equilibrium (NE) of $A$. 
For this, let $\calW := \Delta_m \times \Delta_n$.
For $\eps \ge 0$, an ($\eps$-approximate) Nash equilibrium
$w = (p, q) \in \calW$ of $A$ satisfies
\begin{equation}	
	\textstyle
	\max_{q' \in \Delta_m} \, \langle q', A^\top p \rangle
	- \eps
	\le 
	\langle p, A q \rangle 
	\le
	\min_{p' \in \Delta_n} \, \langle p', A q \rangle + \eps .
	\label{eq:nash-prelims}
\end{equation}
Every payoff matrix $A$ has at least one 
($\eps=0$) Nash equilibrium
\citep{v1928theorie}.

\paragraph{Distances to Nash equilibrium.}
This work focuses on games with an \textit{interior} Nash equilibrium. 
Let $\relint(\calW) := \relint(\Delta_m) \times \relint(\Delta_n)$.
For an interior NE $w^\star = (p^\star, q^\star) \in \relint(\calW)$
of $A$, we measure convergence to $w^\star$ under several different
distances. For any $w = (p, q) \in \relint(\calW)$, we define: 
\begin{itemize}[
	topsep=0em,
	itemsep=0em,
	leftmargin=1em,
]
	\item
	\textbf{Duality gap (DG)}:
	$\DG(w) =
	 \max_{q' \in \Delta_m} \langle q', A^\top p\rangle
	 - \min_{p' \in \Delta_n} \langle p', Aq \rangle$.

	\item
	\textbf{Kullback-Leibler divergence (KL)}:
	Define the components
	$\KL(p^\star, p) = \sum_{i=1}^m p^\star(i) 
	\log \big(p^\star(i)/p(i)\big)$ and 
	$\KL(q^\star, q) = \sum_{j=1}^n q^\star(j) 
	\log \big(q^\star(j)/q(j)\big)$.
	Then $\KL(w^\star, w) = \KL(p^\star, p) + \KL(q^\star, q)$.

	\item
	\textbf{Total variation distance (TV)}:
	Define the components $\TV(p^\star,p) = \frac{1}{2}\|p^\star - p\|_1$
	and $\TV(q^\star, q) = \frac{1}{2} \|q^\star - q\|_1$.
	Then $\TV(w^\star, w) = \TV(p^\star, p) + \TV(q^\star, q)$.
\end{itemize}

If $\DG(w) \le \eps$, then $w$ is an $\eps$-approximate NE.
Moreover, letting $a_{\max} := \max_{(i, j) \in [m] \times [n]} |A(i, j)|$ 
denote the largest absolute entry of $A$, the following
relationships hold:
\begin{equation}
	\textstyle
	\frac{1}{(2a_{\max})^2} \cdot 
	\DG(w)^2 
	\le
	\TV(w^\star, w)^2 
	\le
	\KL(w^\star, w) \;.
	\label{eq:ne-distances}
\end{equation}
Thus, upper bounds on $\KL(w^\star, w)$ or $\TV(w^\star, w)$ 
also correspond to approximate Nash equilibria 
in the sense of~\eqref{eq:nash-prelims}.
We give the proof of~\eqref{eq:ne-distances}
and additional preliminaries in 
Section~\ref{app:zsg-prelims}.

\paragraph{Additional preliminaries.}
For a symmetric matrix $M=M^\top \in \R^{k \times k}$, 
we write $M \succeq 0$ when $M$ is positive semi-definite (PSD)
and $M \succ 0$ when $M$ is positive definite (PD).
For a (not-necessarily-symmetric) matrix $M$, 
let $\sigma_{\max}(M) := \max_{x \neq 0} \|Mx\|_2 / \|x\|_2$
denote its spectral norm. 
For a linear subspace $\calV$, let $\Pi_{\calV}(\cdot)$
denote the orthogonal  projection operator onto $\calV$,
and for a matrix $M$, let 
$\sigma_{\min}(M, \calV) 
= \inf_{v \in \calV \setminus \{0\}} \|\Pi_{\calV}(Mv)\|_2 / \|v\|_2$
denote its minimum singular value restricted to $\calV$.
A matrix $M \in \R^{k \times k}$ is \textit{skew-symmetric}
when $M^\top = -M$, which implies
$\langle v, M v \rangle = 0$ for all $v \in \R^k$.
For an integer $k \ge 2$ and $x \in \R^k$, 
we write $\softmax_{k}(x) \in \relint(\Delta_k)$ to denote 
the distribution with coordinates
$\softmax_k(x)(i) = \exp(x(i)) / (\sum_{j=1}^k \exp(x(j)))$
for all $i \in [k]$.
For readability,  we further use the notation
$\softmax_k(x)(i) \propto \exp(x(i))$ for $i \in [k]$.
For a distribution $p \in \relint(\Delta_k)$, the discrete entropy of $p$
is given by $\ent_k(p) = \sum_{i=1}^k p(i) \log (1/p(i))$.
Finally, for continuously differentiable function $f: \calX \to \R$, 
recall the Bregman divergence 
$D_f: \calX \times \relint(\calX) \to \R$
is given by
$D_f(x', x) = f(x') - f(x) - \langle \nabla f(x), x'-x\rangle$
for all $x\in \calX, x' \in \relint(\calX)$.


\section{OMWU as Optimistic Skew-Gradient Descent}
\label{sec:omwu}

In this section, we show that the dual iterates of
OMWU can be viewed as optimistic skew-gradient descent. 
We start with preliminaries on OMWU and 
defer a review of standard  MWU to Section~\ref{app:omwu-prelims}. 

\paragraph{Primal OMWU update rule.}
The algorithm is initialized from an arbitrary
$w_0 = (p_0,  q_0) \in \relint(\calW)$.
Then for $t \ge 1$, using a fixed stepsize $\eta > 0$, 
the iterate  $w_t = (p_t, q_t) \in \relint(\calW)$ has coordinates
\begin{equation}
	\begin{aligned}
		p_t(i)
		&\;\propto\;
		p_{t-1}(i)
		\cdot
		\exp
		\big(
			- \eta \big(2 A q_{t-1} - A q_{t-2}\big)(i)
		\big) &&\text{$\forall i \in [m]$}\,, \\
		q_t(j)
		&\;\propto\;
		q_{t-1}(j) 
		\cdot 
		\exp
		\big(	
			\eta \big( 2 A^\top p_{t-1} - A^\top p_{t-2}\big)(j)
		\big)
		&&\text{$\forall j \in [n]$} \;.
	\end{aligned}
	\label{eq:omwu-primal}
	\tag{OMWU}
\end{equation}	
For notational convenience, we assume
that $w_{-1} = w_0 \in \relint(\calW)$. 

\paragraph{Dual iterates and OMWU as Optimistic FTRL.}
OMWU can also be derived as an instantiation 
of the Optimistic Follow-the-Regularized-Leader (OFTRL)
family using the negative entropy regularizer
\citep{rakhlin2013optimization,syrgkanis2015fast}.
Specifically, under OFTRL, the primal iterates are chosen
by applying a \textit{regularized} best-response map
to a sequence of \textit{dual iterates}. 
In the zero-sum game setting,
these primal and dual iterates
have a compact representation 
that we describe here:

First, given $A \in \R^{m \times n}$,
let $J \in \R^{(m+n) \times (m+n)}$ be the
block skew-symmetric matrix
\begin{equation}
	J   
    =
	\begin{pmatrix}
	0 & A \\
	- A^\top & 0 
	\end{pmatrix}
    = -J^\top
    \,.
	\label{eq:J-def}
\end{equation}
Under OFTRL, the primal 
iterates $\{w_t\}$ are initialized at 
$w_0 = (p_0, q_0) \in \relint(\calW)$.
For the dual iterates,
initialize $x_0 \in \R^m$
and $y_0 \in \R^n$, 
and let $z_0 = z_{-1} = (x_0, y_0) \in \R^{m+n}$.
At time $t=1$, let $z_1 = z_0 - \eta J w_0$.
Then at time $t+1\ge 2$, the dual iterate 
$z_{t+1} = (x_{t+1}, y_{t+1}) \in \R^{m+n}$
is
\begin{equation}
	\textstyle
	z_{t+1} = z_0 - \eta 
	\big(
	\sum_{k=0}^{t}
	Jw_k + J w_{t}
	\big) \;,
	\label{eq:zt-wt-def} 
\end{equation}
which is a sum that depends only on 
$w_0, \dots, w_{t}$.
The primal iterate
at time $t+1$ is defined as follows:
first, let $R_m : \Delta_m \to \R$ and $R_n: \Delta_n \to \R$
be a pair of strictly convex regularizers, and let $R$ be the joint,
separable regularizer defined by $R(w) = R_m(p) + R_n(q)$ 
for $w = (p, q) \in \calW$.
Then OFTRL sets
\begin{equation}
	\textstyle
	w_{t+1} = \argmin_{w = (p, q) \in \calW}\,
	\big\{  
		\langle w, z_{t+1} \rangle 
        + R(w)
	\big\} \;.
	\label{eq:omwu-oftrl}
\end{equation}
For OMWU, the regularizer $R$ is 
instantiated by the 
\textit{negative entropy} functions 
$R_m = -\ent_m$ and $R_n = - \ent_n$. 
Then, the first-order optimality 
conditions of~\eqref{eq:omwu-oftrl}
under this setting of $R$
yield the closed-form expression
$w_{t} = (p_{t}, q_{t}) = 
(\softmax_m(x_{t}), \softmax_n(y_{t})) \in \relint(\calW)$ for all $t \ge 1$.
This recovers the update rule of~\eqref{eq:omwu-primal}.
We review the full derivation of this equivalence
in Section~\ref{app:omwu-prelims}.

\subsection{Dual OMWU Iterates as Optimistic Skew-Gradient Descent}

Our analysis of OMWU is based on 
viewing the dual iterates $\{z_t\}$ 
as an \textit{optimistic skew-gradient descent}
with respect to an energy function $F$.
We derive this connection here:

\paragraph{Energy function.}
Let $\LSE_m$ and $\LSE_n$ denote 
the $m$ and $n$-dimensional \textit{log-sum-exp} functions.
For $x \in \R^{m}$ and $y \in \R^n$, 
$\LSE_m(x) = \log(\sum_{i=1}^m \exp(x(i)))$
and 
$\LSE_n(y) = \log(\sum_{j=1}^n \exp(y(j)))$.
Together, these functions define the convex,  
continuously differentiable, and separable \textit{energy function}
$F: \R^{m+n} \to \R$.
For $z = (x, y) \in \R^{m+n}$, we let
\begin{equation}
	\textstyle
	F(z) = \LSE_m(x) + \LSE_n(y) \;.
	\label{eq:energy}
\end{equation}
Specifically, $F$ is the \textit{dual function}
(convex conjugate) of the negative 
entropy regularizer $R$,
and the gradient of $F$ maps
from the dual to the primal space.
For $z = (x,y) \in \R^{m+n}$, define
$\nabla F (z) = (\nabla_x F(z), \nabla_y F(z))$.
Then for $z \in \R^{m+n}$
and $w = (p, q) \in \calW$ such
that $p= \softmax_m(x)$ and $q = \softmax_n(y)$,
we have $\nabla F(z) = w$. 
The central consequence of this property
is that the dual iterates of~\eqref{eq:zt-wt-def} 
can be written as a modified \textit{skew-gradient descent} 
with respect to $F$. Formally:
\begin{restatable}[Optimistic Skew-Gradient Descent]
{prop}{propskewgrad}
	\label{prop:omwu-skew-grad}
	Let $\{w_t\}$ be the iterates of~\eqref{eq:omwu-primal} with 
	stepsize $\eta > 0$ initialized from $w_0 \in \relint(\calW)$. 
	Let $\{z_t\}$ be the dual iterates 
	of~\eqref{eq:zt-wt-def} initialized from $z_0 \in \R^{m+n}$
	such that $\nabla F(z_0) = w_0$. 
	Then for all $t \ge 0$, 
	it holds that $w_{t} = \nabla F(z_{t})$ and
	\begin{equation}
		z_{t+1} = 
		\underbrace{z_t  - \eta J \nabla F(z_{t})}_{\text{
			skew-gradient descent}}
		-
		\underbrace{
			\eta (J \nabla F(z_{t}) -J \nabla F(z_{t-1}))
		}_{\text{optimistic skew-gradient correction}}\;.
		\label{eq:omwu-dual} 
		\hspace*{-1.5em}
		\tag{OMWU Dual}
	\end{equation} 
\end{restatable}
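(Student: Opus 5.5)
The proof has two parts, handled in order. First I show $w_t = \nabla F(z_t)$ for all $t \ge 0$, by induction. Then I derive the recursion \eqref{eq:omwu-dual} by differencing the closed-form sums in \eqref{eq:zt-wt-def}.

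\emph{Identification of primal and dual iterates.} The base cases are direct. For $t=0$, $\nabla F(z_0) = w_0$ holds by assumption. For $t=-1$, we have $z_{-1}=z_0$ and $w_{-1}=w_0$, so the same identity holds. For $t \ge 1$, we have $\nabla F(z) = (\softmax_m(x), \softmax_n(y))$, since $\nabla \LSE_k(x)(i) = e^{x(i)}/\sum_j e^{x(j)}$. So it suffices to show that $\softmax(z_t)$ satisfies the multiplicative recursion \eqref{eq:omwu-primal}.

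To do this, I use the increment of the dual sequence computed below: $z_{t} - z_{t-1} = -\eta(2Jw_{t-1} - Jw_{t-2})$, valid for all $t\ge1$. Writing this out blockwise with the form of $J$ in \eqref{eq:J-def} gives
\[
x_t = x_{t-1} - \eta(2Aq_{t-1} - Aq_{t-2}), \qquad y_t = y_{t-1} + \eta(2A^\top p_{t-1} - A^\top p_{t-2}).
\]
Applying $\softmax$, we get $\softmax_m(x_t)(i) \propto \exp(x_{t-1}(i)) \exp(-\eta(2Aq_{t-1}-Aq_{t-2})(i))$, and $\exp(x_{t-1}(i)) \propto \softmax_m(x_{t-1})(i)$. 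The inductive hypothesis gives $\softmax_m(x_{t-1}) = p_{t-1}$, and the analogous statement holds for $y$. Hence $\nabla F(z_t)$ obeys exactly the update that defines $w_t$ from $w_{t-1}, w_{t-2}$ (the $J$-terms use $w_{t-1},w_{t-2}$, which equal $\nabla F(z_{t-1}),\nabla F(z_{t-2})$ by induction). Since normalization cancels all proportionality constants, $w_t = \nabla F(z_t)$.

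\emph{Deriving the recursion.} From \eqref{eq:zt-wt-def}, for $t \ge 1$,
\[
z_{t+1} - z_t = -\eta\Big(\sum_{k=0}^{t} Jw_k + Jw_t - \sum_{k=0}^{t-1}Jw_k - Jw_{t-1}\Big) = -\eta(2Jw_t - Jw_{t-1}).
\]
For $t=1$, I must check that the stated $z_1 = z_0 - \eta Jw_0$ agrees with the formula: $z_2 - z_1 = -\eta(Jw_0 + 2Jw_1) + \eta Jw_0$, which is consistent. For $t=0$, we have $z_1 - z_0 = -\eta J w_0 = -\eta(2Jw_0 - Jw_{-1})$, because $w_{-1}=w_0$. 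So the increment identity holds for all $t\ge 0$. Substituting $w_t = \nabla F(z_t)$ and $w_{t-1}=\nabla F(z_{t-1})$, and splitting $2Jw_t - Jw_{t-1} = Jw_t + (Jw_t - Jw_{t-1})$, yields \eqref{eq:omwu-dual}.

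I expect no step to be a real obstacle. The only delicate point is indexing at the boundary $t=0,1$: the conventions $z_{-1}=z_0$, $w_{-1}=w_0$, and the separately specified $z_1$ must match the general formula. The two halves of the induction are also intertwined, since the increment identity is used to prove $w_t = \nabla F(z_t)$. To avoid circularity, I will establish the increment identity first, purely from \eqref{eq:zt-wt-def} in terms of $w$. The induction then needs only that identity.
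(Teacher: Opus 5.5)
Your strategy is the paper's: induct on $w_t=\nabla F(z_t)$ using the additive increment of the dual sequence and the shift/normalization invariance of $\softmax$, then split $2Jw_t - Jw_{t-1} = Jw_t + (Jw_t - Jw_{t-1})$. The problem is the $t=1$ boundary check, which you identified as the only delicate point; it is miscomputed.

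Read literally, \eqref{eq:zt-wt-def} has its sum starting at $k=0$. That gives $z_2 = z_0 - \eta(Jw_0 + 2Jw_1)$, hence $z_2 - z_1 = -\eta(Jw_0+2Jw_1) + \eta Jw_0 = -2\eta Jw_1$. Your increment identity instead requires $z_2 - z_1 = -\eta(2Jw_1 - Jw_0)$. The two differ by $\eta Jw_0$, so the expression you wrote refutes the identity at $t=1$ rather than confirming it.

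Shift invariance does not absorb the discrepancy. We have $Jw_0\in\calS$ only when $Aq_0$ and $A^\top p_0$ are both constant vectors, i.e.\ only when $w_0$ is already an equilibrium. Concretely, the $p$-component of $w_2$ is $\softmax_m(x_0 - 2\eta Aq_1)$, while that of $\nabla F(z_2)$ is $\softmax_m(x_0 - \eta Aq_0 - 2\eta Aq_1)$. So under the literal definition, both conclusions fail for generic $w_0$: $\nabla F(z_2)\neq w_2$, and \eqref{eq:omwu-dual} fails at $t=1$. No argument can succeed from that reading.

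The fix is to recognize that the main-text formula is off by one. The cumulative sum should start at $k=1$, i.e.\ $z_{t+1} = z_0 - \eta\big(\sum_{k=1}^{t} Jw_k + Jw_t\big)$ for all $t\ge 0$. This is how the appendix restates it, it reproduces $z_1 = z_0 - \eta Jw_0$ as the empty-sum case, and it is equivalent to the recursive form \eqref{eq:zt-wt-def-full}, which is what the paper's proof actually works from. With this definition, your differencing gives $z_{t+1}-z_t = -\eta(2Jw_t - Jw_{t-1})$ for every $t\ge 0$ (using $w_{-1}=w_0$), with no special boundary cases. The rest of your argument then goes through unchanged and coincides with the paper's proof.
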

Proposition~\ref{prop:omwu-skew-grad} 
(proof in Section~\ref{app:omwu-prelims:skew-grad})
shows that~\eqref{eq:omwu-dual} consists of 
a \textit{skew-gradient descent} step with 
an additional \textit{optimistic corrective term}.  
As shown by~\cite{wibisono2022alternating}, 
the dual iterates following skew-gradient descent correspond 
to the standard MWU algorithm,
which is a forward Euler discretization of 
the continuous-time skew-gradient flow 
$\dot z (t) = - J \nabla F(z(t))$.
Under skew-gradient flow, energy $F$ is conserved along 
the trajectory, while under the forward discretization,
energy is non-decreasing
(see Section~\ref{app:omwu-prelims:omwu-related} 
for a review).
In contrast, for OMWU, we prove the correction term
leads to strict \textit{energy dissipation},
which will yield last-iterate convergence
to NE.

\subsection{Primal Convergence 
in KL via Dual Dissipation of Energy}
\label{sec:omwu-main:primal-dual}

In this section, we show that 
the change in energy over the dual iterates 
is exactly the change in $\KL(w^\star, w_t)$
over the primal iterates when $A \in \R^{m \times n}$ has an interior NE $w^\star \in \relint(\calW)$. 
This equivalence is due to a more general 
relationship between $F$ and 
$\KL(w^\star, \cdot)$
that holds over the \textit{effective} space of dual iterates. 
To state this relationship, we first require
introducing the linear subspaces
$\calZ, \calS, \calS^\bot \subset \R^{m+n}$
and several additional preliminaries. First, we
define the subspaces:
\begin{equation}
	\calZ = \Span(J\calW)\,,\;
	\calS = \Span\big(
		\big(\begin{smallmatrix}\1_m \\  0\end{smallmatrix}\big),
		\big(\begin{smallmatrix} 0 \\  \1_n \end{smallmatrix}\big)
	\big)\,,
	\;\text{and}\;
	\calS^\bot 
	= 
	\{ v \in \R^{m+n} :  \text{$\forall s \in \calS$, } 
	\langle v, s \rangle = 0 \}.
	\label{eq:calZ-calS-def}
\end{equation}
We refer to $\calZ$ as the \textit{effective dual space},
as, by~\eqref{eq:zt-wt-def}, the dual 
iterates satisfy $z_t \in \calZ$ for all $t \ge 1$.
Moreover, $\calS$ is the subspace of constant shift directions,
and $\calS^\bot$ is its orthogonal complement.

\paragraph{Geometry of the effective dual space.}
A key property is that $\calZ$ is \textit{orthogonal} to
interior Nash equilibria of~$A$.
This is because for any interior NE $w^\star \in \relint(\calW)$, $Ap^\star$ and $A^\top q^\star$
are the constant vectors 
$Ap^\star = c \1_m$ and $A^\top q^\star = -c\1_n$ 
for some $c \in \R$
(see Proposition~\ref{prop:interior-NE}). 
Together with the skew-symmetry of $J$,
this implies $\langle z, w^\star \rangle = 0$ for all $z \in \calZ$.
See Proposition~\ref{prop:dual-subspace-property}.

\paragraph{Primal-dual coupling despite
lack of strict convexity.}
The energy function is not globally strictly
convex, as $\nabla F$ is invariant under
shifts $s \in \calS$
(see Proposition~\ref{prop:energy-linear}).
However, the dual relationship of 
$R$ and $F$ and the structure of the 
gradient map $\nabla R$
still allows for establishing the following key relationship:
for $z \in \R^{m+n}$ and $w = \nabla F(z)$, it holds that
$\KL(\widetilde w, w) = R(\widetilde w) + F(z) - 
\langle z, \widetilde w \rangle$ for
all $\widetilde w \in \relint(\calW)$.
We state and prove this formally 
in Proposition~\ref{prop:restricted-fenchel}.

Together, these properties lead to the following
relationship beween $\KL(w^\star, \cdot)$
and $F$ over $\calZ$:

\begin{restatable}[Equivalence between energy and KL differences]
	{prop}{propchangeklenergy}
	\label{prop:change-kl-energy-equiv}
	Let $A \in \R^{m \times n}$ have an interior NE
	$w^\star \in \relint(\calW)$.
	Fix $z, z' \in \calZ$,
    and let $w = \nabla F(z) \in \relint(\calW)$
    and $w' = \nabla F(z')\in \relint(\calW)$. 
	Then it holds that
	\begin{equation*}
		\KL(w^\star, w') - \KL(w^\star, w) 
		= F(z') - F(z) \;.
	\end{equation*}
\end{restatable}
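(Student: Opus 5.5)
The plan is to combine the restricted Fenchel identity (Proposition~\ref{prop:restricted-fenchel}) with the orthogonality of $\calZ$ to interior Nash equilibria (Proposition~\ref{prop:dual-subspace-property}). Both are stated earlier in the paper, and together they should make the result a two-line computation.

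First, I apply Proposition~\ref{prop:restricted-fenchel} with $\widetilde w = w^\star$. This is allowed because $w^\star \in \relint(\calW)$ by assumption. For the point $z$ with $w = \nabla F(z)$ it gives
\[
\KL(w^\star, w) = R(w^\star) + F(z) - \langle z, w^\star \rangle ,
\]
and the same identity holds for $z'$ and $w' = \nabla F(z')$.

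Second, I use that $z, z' \in \calZ$. By Proposition~\ref{prop:dual-subspace-property} we have $\langle z, w^\star\rangle = \langle z', w^\star \rangle = 0$. Recall why: every element of $\calZ = \Span(J\calW)$ is a linear combination of vectors $Jw$ with $w \in \calW$. For each such vector, $\langle Jw, w^\star\rangle = -\langle w, Jw^\star\rangle$. Now $Jw^\star = (Aq^\star, -A^\top p^\star)$, and by Proposition~\ref{prop:interior-NE} this is a pair of constant vectors whose inner product with $w = (p,q)$ is $c - c = 0$, since $p$ and $q$ each sum to one.

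Hence $\KL(w^\star, w) = R(w^\star) + F(z)$ and $\KL(w^\star, w') = R(w^\star) + F(z')$. Subtracting, the constant $R(w^\star)$ cancels, which yields the claim.

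The only point needing care is that Proposition~\ref{prop:restricted-fenchel} applies to arbitrary $z \in \R^{m+n}$ even though $F$ is not strictly convex. That proposition handles exactly this issue: shifts along $\calS$ change $F(z)$ and $\langle z, \widetilde w\rangle$ by the same amount, because $\widetilde w$ has unit-mass blocks. So I expect no genuine obstacle. If one had to verify it directly, the identity follows from $\log p(i) = x(i) - \LSE_m(x)$, summed against $p^\star(i)$, and likewise for $q$.
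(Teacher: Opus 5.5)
Your proposal is correct and follows essentially the same route as the paper. The paper applies the modified Fenchel--Young identity (Proposition~\ref{prop:restricted-fenchel}) with $\widetilde w = w^\star$ at both $z$ and $z'$, subtracts to obtain $F(z') - F(z) - \langle z' - z, w^\star\rangle$, and then removes the inner-product term using the orthogonality of $\calZ$ to $w^\star$ (Proposition~\ref{prop:dual-subspace-property}).
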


The proof of the proposition is in
Section~\ref{app:energy-kl-equiv:prop-proof}.
As an immediate corollary, 
note that as the  OMWU iterates satisfy
$z_t \in \calZ$ 
and $w_t = \nabla F(z_t)$ for all $t \ge 1$, it holds that
$\KL(w^\star, w_{t+1}) - \KL(w^\star, w_t) = F(z_{t+1}) - F(z_t)$
for all such $t$.
Moreover, we prove in 
Section~\ref{app:energy-kl-equiv:energy-suboptimality} 
that the map $\nabla F: \calZ \to \R^{m+n}$ 
is \textit{surjective} 
(Proposition~\ref{prop:calZ-grad-map}), and that 
$\KL(w^\star, w) = F(z) - \min_{z' \in \calZ} F(z')$
for all 
$z \in \calZ$ and $w = \nabla F(z)$
when $w^\star$ is unique and interior
(Proposition~\ref{prop:calZ-minimizer}).
Taken together, these propositions establish
an exact equivalence between energy dissipation
in the dual space and minimizing $\KL$ divergence to
NE in the primal space.


\section{Tight Bounds on Energy Dissipation}
\label{sec:energy-dissipation}

In this section, we establish tight bounds on
energy dissipation under~\eqref{eq:omwu-dual}.
Due to the relationships discussed in 
Section~\ref{sec:omwu-main:primal-dual},
this leads to a sharp last-iterate convergence 
rate in $\KL$.

\subsection{One-Step Change in Energy under OMWU}
\label{subsec:one-step-change-energy}

We show a dissipative one-step change
in energy under~\eqref{eq:omwu-dual}
that is measured using the local norm 
induced by the Hessian of $F$.
For this, we first introduce the following notation:

\paragraph{Local norms induced by energy Hessian.}
The energy function $F$
is convex and continuously differentiable,
and thus for all $z \in \R^{m+n}$ its 
Hessian matrix $\nabla^2 F(z)$ is PSD
and induces an inner product and norm. 
Specifically, we define 
$\langle u, v \rangle_z := \langle u, \nabla^2 F(z) v \rangle$ and 
$\|v\|_{z} := \sqrt{\langle v, \nabla^2 F(z) v \rangle}$
for $u, v \in \R^{m+n}$.
We refer to $\|\cdot\|_z$ as a \textit{local norm}, 
as the spectrum of $\nabla^2 F(z)$ 
is dependent on the state $z$. 
We give further preliminaries on the 
energy Hessian in Section~\ref{app:energy-gsc-prelims}.

Under the following constant stepsize condition,
we prove our key energy dissipation lemma. 

\begin{restatable}[Stepsize]
	{ass}{assstepsize}
    \label{ass:stepsize}
    $0 < \eta \le \frac{1}{4 (54 \sigma_{\max} + 9)}$,
    where $\sigma_{\max} = \|A \|_2$.
\end{restatable}

\begin{restatable}[Energy Dissipation]
{lem}{lemenergyonestep}
    \label{lem:energy-one-step-full}
    For~\eqref{eq:omwu-dual} with 
    $\eta$ satisfying Assumption~\ref{ass:stepsize},
    for all $t \ge 1$:
    \begin{equation*}
    	\textstyle
        -\frac{5}{4} \cdot 
        \eta^2 \|J \nabla F(z_t) \|^2_{z_t}
        \;\le\;
        F(z_{t+1}) - F(z_t)
        \;\le\;
        - \frac{1}{20} \cdot \eta^2 {}
         \| J \nabla F(z_t) \|^2_{z_t} \;.
    \end{equation*}
\end{restatable}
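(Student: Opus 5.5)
The plan is a second-order Taylor expansion of $F$ along the step, after which every error term is compared, in \emph{relative} terms, to $\eta^2\|J\nabla F(z_t)\|_{z_t}^2$. Write $g_t := J\nabla F(z_t) = Jw_t$ and $\Delta_{t+1} := z_{t+1}-z_t = -\eta(2g_t - g_{t-1})$, using Proposition~\ref{prop:omwu-skew-grad}. Taylor's theorem with integral remainder gives
\[
F(z_{t+1}) - F(z_t) = \langle w_t, \Delta_{t+1}\rangle + \int_0^1 (1-s)\,\|\Delta_{t+1}\|^2_{z_t + s\Delta_{t+1}}\,ds .
\]
Skew-symmetry gives $\langle w_t, Jw_t\rangle = 0$ and $\langle w_{t-1}, Jw_{t-1}\rangle = 0$. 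Hence the first-order term equals $\eta\langle w_t, Jw_{t-1}\rangle = \eta\langle w_t - w_{t-1}, Jw_{t-1}\rangle$.

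Next I linearize the primal increment. Write
\[
w_t - w_{t-1} = \nabla F(z_t) - \nabla F(z_{t-1}) = \int_0^1 \nabla^2 F(z_{t-1}+s\Delta_t)\,\Delta_t\,ds ,
\]
with $\Delta_t = -\eta(2g_{t-1}-g_{t-2})$. The first-order term then becomes $-\eta^2\langle 2g_{t-1}-g_{t-2},\, g_{t-1}\rangle$, measured in an averaged local inner product. The remainder is $\tfrac{\eta^2}{2}\|2g_t-g_{t-1}\|^2$ in averaged local norms. If all the $g$'s and all the local norms coincided, the total would be $-2\eta^2\|g\|^2 + \eta^2\|g\|^2 + \tfrac12\eta^2\|g\|^2 = -\tfrac12\eta^2\|g\|^2$. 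The stated constants $-\tfrac54$ and $-\tfrac1{20}$ bracket $-\tfrac12$, so the task is to show that all deviations are a small fraction of $\eta^2\|g_t\|_{z_t}^2$ under Assumption~\ref{ass:stepsize}.

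Two tools control the deviations.

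\emph{Hessian stability.} Each block of $\nabla^2F$ has the form $\mathrm{diag}(p)-pp^\top$. Since $\|\Delta\|_\infty \le 3\eta a_{\max} \le 3\eta\sigma_{\max}$ is tiny, the softmax coordinates change by factors in $[e^{-2\|\Delta\|_\infty}, e^{2\|\Delta\|_\infty}]$. This gives $e^{-c\eta\sigma_{\max}}\nabla^2F(z) \preceq \nabla^2F(z+s\Delta) \preceq e^{c\eta\sigma_{\max}}\nabla^2F(z)$, by the generalized self-concordance from Section~\ref{app:energy-gsc-prelims}.

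\emph{Relative gradient stability.} Since $0\preceq H:=\nabla^2F \preceq I$, we have $H^2\preceq H$, hence $\|Hv\|_2\le \|v\|_{z}$. Also $\|u\|_z\le\|u\|_2$ for every $u$. Therefore
\[
\|g_t - g_{t-1}\|_{z} \le \|J(w_t-w_{t-1})\|_2 \le \sigma_{\max}\|\bar H\Delta_t\|_2 \lesssim \sigma_{\max}\|\Delta_t\|_{z_{t-1}} \le 3\eta\sigma_{\max}\max(\|g_{t-1}\|,\|g_{t-2}\|)
\]
in local norms, up to the Hessian-stability factors. The crucial feature is that this bound is multiplicative in $\|g\|$ and not additive. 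This matters because near the simplex boundary $\|g_t\|_{z_t}$ can be extremely small, and an additive error would swamp it.

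Chaining these two facts expresses $g_{t-1}$ and $g_{t-2}$ as $g_t$ plus errors of relative size $O(\eta\sigma_{\max})$ in the $z_t$-norm. The same chaining converts every averaged norm into $\|\cdot\|_{z_t}$ with factors $e^{O(\eta\sigma_{\max})}$. Expanding the three quadratic expressions then gives
\[
F(z_{t+1})-F(z_t) = -\tfrac12\eta^2\|g_t\|^2_{z_t}\,(1+\epsilon), \qquad |\epsilon| \le C(\eta\sigma_{\max}+\eta),
\]
and Assumption~\ref{ass:stepsize} forces $|\epsilon|$ small enough to land in $[\tfrac1{10},\tfrac52]$. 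The case $t=1$ is handled the same way using $z_{-1}=z_0$, so that $g_{-1}=g_0$.

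The main obstacle is the backward comparison $\|g_{t-1}\|_{z_t},\|g_{t-2}\|_{z_t} \le (1+O(\eta\sigma_{\max}))\|g_t\|_{z_t}$. The forward direction is easy, but here I need the reverse. I would use the triangle inequality $\|g_{t-1}\| \le \|g_t\| + 3\eta\sigma_{\max}\max(\|g_{t-1}\|,\|g_{t-2}\|)$ and then absorb the error terms. That absorption needs a uniform control of $\|g_{t-2}\|$ by $\|g_{t-1}\|$ as well. I would try to obtain it inductively over $t$, with base case $g_{-1}=g_0$, where the constants $54$ and $9$ in the assumption are chosen to keep the induction closed. If the induction does not close, the fallback is to bound the error terms by $\eta^2(\|g_{t-1}\|^2+\|g_{t-2}\|^2)$ and telescope.
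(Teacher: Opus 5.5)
Your proposal is correct, and it organizes the argument differently from the paper.

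\textbf{The paper's route.} It substitutes the update rule a second time, expanding $\nabla F(z_t)-\nabla F(z_{t-1})$ at $z_t$ with the remainder $G_F(z_t,z_{t-1})$, to isolate the exact term $-\eta^2\|J\nabla F(z_t)\|_{z_t}^2$. It then bounds $D_F(z_{t+1},z_t)$, the $G_F$ inner product and four cubic terms using dual local norms. The key tool is an inductive stability statement on the dual increments, $\|z_{t-1}-z_{t-2}\|_{z_t}\le 2\|z_t-z_{t-1}\|_{z_t}$, which yields $\|z_{t+1}-z_t\|_{z_t}\le\tfrac65\eta\|J\nabla F(z_t)\|_{z_t}$. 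Because it only uses $0\le D_F(z_{t+1},z_t)\le\tfrac{149}{200}\eta^2\|J\nabla F(z_t)\|_{z_t}^2$, it ends with the asymmetric bracket.

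\textbf{Your route.} You keep the exact mean-value form $w_t-w_{t-1}=\bar H\Delta_t$, which removes the remainder term and the dual norms. You run the backward-stability induction on the payoff vectors $g_k$ instead of on the increments. It is driven by the single estimate $\|g_t-g_{t-1}\|_{z}\le\sigma_{\max}\|w_t-w_{t-1}\|_2\le e^{3\eta\sigma_{\max}}\sigma_{\max}\|\Delta_t\|_{z_{t-1}}$. Estimating the Bregman term two-sidedly then gives the sharper $F(z_{t+1})-F(z_t)=-\tfrac12\eta^2\|g_t\|_{z_t}^2\,(1+O(\eta\sigma_{\max}))$. This identifies the constant $\tfrac12$ that the paper only brackets. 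The paper's exact identity with named error terms, in exchange, reuses tools (Lemma~\ref{lem:bregman-gsc}, Lemma~\ref{lem:hessian-remainder-bound}) that it needs elsewhere, e.g.\ in Lemma~\ref{lem:initial-change-energy}.

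\textbf{The step you flagged closes.} Suppose $\|g_{k-2}\|_{z_{k-1}}\le 2\|g_{k-1}\|_{z_{k-1}}$. Then $\|g_{k-1}\|_{z_k}\le\|g_k\|_{z_k}+4\eta\sigma_{\max}e^{6\eta\sigma_{\max}}\|g_{k-1}\|_{z_k}$. Assumption~\ref{ass:stepsize} gives $\eta\sigma_{\max}\le\tfrac1{216}$, so absorbing yields $\|g_{k-1}\|_{z_k}\le 1.02\,\|g_k\|_{z_k}$, and the hypothesis propagates from the base case $g_{-1}=g_0$. Propagating the resulting relative errors puts the one-step change at roughly $[-0.65,-0.35]\cdot\eta^2\|g_t\|_{z_t}^2$, well inside the stated interval. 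The constants $54$ and $9$ are not what closes your induction; in the paper they come from the cubic-term bound, Proposition~\ref{prop:error-third-order}. Your telescoping fallback is therefore unnecessary, and it would not have yielded the per-step statement anyway.
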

The significance of Lemma~\ref{lem:energy-one-step-full}
is due to its \textit{exact} characterization of
the one-step change in energy,
with matching upper and lower bounds.
We make several further remarks on
its interpretation:

\paragraph{Dependence on local geometry.}
The one-step change in energy is always 
negative and proportional
to a \textit{dissipation term} $\|J \nabla F(z_t)\|^2_{z_t}$.
This term measures the magnitude of the skew-gradient
payoff vector under the \textit{local geometry}
of $\nabla^2 F(z_t)$.
When the energy Hessian has low curvature,
the dissipation term is small
and leads to a slow one-step change in energy.
Note that the \textit{flatness} of $F$ is dual to the 
\textit{steepness} of the entropy regularizer $R$,
and thus dissipation will grow small
when $w_t = \nabla F(z_t)$ approaches the boundary 
of the simplex, where $R$ is steepest.

\paragraph{Primal characterization of dissipation.} 
The dissipation term can also be expressed as the 
\textit{variance} of the payoff vector $Jw_t$ under $w_t = (p_t, q_t) \in \relint(\calW)$.
Specifically, $\|J \nabla F(z_t)\|^2_{z_t} = 
\Var_{p_t}(Aq_t) + \Var_{q_t}(A^\top p_t)$.
This characterization stems from the structure of $\nabla^2 F$ and 
the relation $\nabla F(z_t) = w_t$.
See Section~\ref{app:energy-gsc-prelims:energy-hessian}.

\paragraph{Asymptotic last-iterate convergence in KL.}
Together with Proposition~\ref{prop:change-kl-energy-equiv},
the negative upper bound in Lemma~\ref{lem:energy-one-step-full}
also immediately implies that OMWU has \textit{asymptotic} 
last-iterate convergence to NE.
Formally, we have the following result
(see Section~\ref{app:kl-last-iterate} for the short proof):

\begin{restatable}[Asymptotic Last-Iterate Convergence]
{thm}{thmasymptotic}
    \label{thm:asymptotic-convergence}
    Let $A \in \R^{m \times n}$ have a unique and interior 
    NE~$w^\star$, 
    and let $\{w_t\}$ be the iterates of~\eqref{eq:omwu-primal} 
    on $A$ with $\eta$ satisfying Assumption~\ref{ass:stepsize}.
    Then the sequence of iterates $\{w_t\}$ converges, and 
    $\lim_{t \to \infty} w_t = w^\star$.
\end{restatable}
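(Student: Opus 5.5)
By Proposition~\ref{prop:change-kl-energy-equiv} and Lemma~\ref{lem:energy-one-step-full}, the sequence $\KL(w^\star, w_t)$ is non-increasing for $t \ge 1$. It is also bounded below by $0$, so it converges to some limit $L \ge 0$. Telescoping the upper bound of Lemma~\ref{lem:energy-one-step-full} gives
\[
\textstyle \frac{\eta^2}{20}\sum_{t \ge 1} \|J\nabla F(z_t)\|_{z_t}^2 \le \KL(w^\star, w_1) < \infty .
\]
Hence the dissipation term tends to zero. By the primal characterization, this means $\Var_{p_t}(Aq_t) + \Var_{q_t}(A^\top p_t) \to 0$.

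Next I use compactness. Since $\KL(w^\star, w_t) \le \KL(w^\star, w_1)$ and $w^\star$ is interior, every coordinate satisfies $p_t(i) \ge \exp(-(\KL(w^\star,w_1)+\log 2)/p^\star(i))$, and similarly for $q_t$. To see this, note that
\[
p^\star(i)\log\bigl(p^\star(i)/p_t(i)\bigr) \le \KL(p^\star,p_t) + \textstyle\sum_{k\ne i} p^\star(k)\log\bigl(1/p^\star(k)\bigr)\cdot[\ldots],
\]
or more simply use the standard bound $\KL(p^\star,p) \ge -\log 2 - p^\star(i)\log p(i) - \ldots$. I will use whichever version is cleanest. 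Consequently the iterates stay in a compact set $K \subset \relint(\calW)$ bounded away from the boundary.

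Now take any limit point $\bar w = (\bar p, \bar q) \in K$ of $\{w_t\}$. Variance is continuous in $w$, so $\Var_{\bar p}(A\bar q) = 0$ and $\Var_{\bar q}(A^\top \bar p) = 0$. Since $\bar p$ and $\bar q$ have full support, $A\bar q$ and $A^\top \bar p$ must be constant vectors. For $\bar w$ to be an equilibrium we need these constants to agree with the game value, namely $A\bar q = c\1_m$ and $A^\top\bar p = -c'\1_n$ for appropriate constants. Then $\langle \bar p, A\bar q\rangle = c$ and $\langle \bar p, A \bar q\rangle = -c'$, which shows $c = -c'$. With $A\bar q$ constant, every $p'$ gives the same payoff, and likewise for $q'$. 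Hence the duality gap of $\bar w$ is zero, $\bar w$ is a Nash equilibrium, and by uniqueness $\bar w = w^\star$.

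Finally, $\{w_t\}$ lies in a compact set and every limit point equals $w^\star$, so $w_t \to w^\star$. In particular the limit $L$ equals $0$, by continuity of KL on $K$.

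The main obstacle is the compactness step, that is, ensuring the iterates remain uniformly interior. Without it, a limit point on the boundary could have zero variance without being a NE. The monotone bound on KL from an interior $w^\star$ handles this: if some coordinate $p_t(i) \to 0$, then $\KL(p^\star,p_t) \to \infty$. The remaining steps are routine applications of continuity.
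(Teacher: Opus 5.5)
Your proposal is correct and follows essentially the same route as the paper's proof. Both use monotonicity of $\KL(w^\star,w_t)$ from Proposition~\ref{prop:change-kl-energy-equiv} and Lemma~\ref{lem:energy-one-step-full}, confinement of the iterates to a compact $\KL$-sublevel set inside $\relint(\calW)$, and identification of every limit point with $w^\star$ via the variance form of the dissipation term, Proposition~\ref{prop:interior-NE}(ii), and uniqueness. Your telescoping step shows the dissipation vanishes along the whole sequence, which justifies its vanishing along subsequences more cleanly than the paper's wording. Bounding by $\KL(w^\star,w_1)$ also lets you skip Lemma~\ref{lem:kl-init-upperbound}, and your $\log 2$ constant in the coordinate bound is valid by data processing onto the partition $\{i\}$ versus its complement.
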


\subsection{Proof Overview of Lemma~\ref{lem:energy-one-step-full}}
\label{sec:energy-dissipation:sketch}

We now give the main intuitions
for proving the lemma.
The full proof is developed in Section~\ref{app:energy-dissipation}.

\paragraph{Leading dissipation term.}
To see how the term $\|J \nabla F(z_t)\|^2_{z_t}$
arises, recall from~\eqref{eq:omwu-dual} that
\begin{equation*}
	z_{t+1} - z_t = - \eta J \nabla F(z_t) 
	- \eta J  (\nabla F(z_t) - \nabla F(z_{t-1})) .
\end{equation*}
Then using a first-order Taylor expansion, the difference of
energy gradients in the corrective term can be 
expressed exactly as
\begin{equation}
	\textstyle
	\nabla F(z_t) - \nabla F(z_{t-1})
	= 
	\nabla^2 F(z_t) (z_t - z_{t-1}) + G_F(z_t, z_{t-1}) \;,
	\label{eq:energy-main-01}
\end{equation}
where $G_F(z_t, z_{t-1})$ is a
remainder defined precisely in 
Proposition~\ref{prop:grad-diff-general}.
By using the dual update rule at time $t$ and repeating 
a similar expansion, we can further express
$z_t - z_{t-1} = - \eta J \nabla F(z_t)  - Q_t$
for an error term $Q_t$ 
(defined precisely in Proposition~\ref{prop:energy-one-step-expand}).
Substituting this difference into~\eqref{eq:energy-main-01},
and substituting~\eqref{eq:energy-main-01} back 
into the dual update rule at time $t+1$, we can ultimately write
\begin{equation}
	\textstyle
	z_{t+1} - z_t 
	= - \eta J \nabla F(z_t) 
	+ \eta^2 J \nabla^2 F(z_t) J \nabla F(z_t)
	+ \eta W_t \;,
	\label{eq:energy-main-02}
\end{equation}
where $W_t = J \nabla^2 F(z_t) Q_t - J G_F(z_t, z_{t-1})$.
Then, using the definition of the Bregman 
divergence~$D_F$,
we substitute~\eqref{eq:energy-main-02} to quantify the one-step change in energy:
\begin{align}
	F(z_{t+1}) 
	- F(z_t)
	&=
	\langle
	\nabla F(z_t), z_{t+1} - z_t
	\rangle 
	+ D_F(z_{t+1}, z_t)
	\nonumber \\
	&= 
	- \eta^2 \|J \nabla F(z_t)\|^2_{z_t}
	+ 
	\eta \langle \nabla F(z_t), W_t \rangle
	+ D_F(z_{t+1}, z_t)\;.
	\label{eq:energy-main-03}
\end{align}
Here, the final equality follows 
by the skew-symmetry of $J = -J^\top$
and by definition of $\|\cdot \|_{z_t}$.

\paragraph{Controlling the error terms.}
The expansion of $F(z_{t+1}) - F(z_t)$
in~\eqref{eq:energy-main-03} is exact,
and thus the remaining technical task is
to control the magnitude of the latter two terms.
In Propositions~\ref{prop:DH-error},~\ref{prop:RH-inner-bound},
and~\ref{prop:error-third-order},
we prove that 
$\eta | \langle \nabla F(z_t), W_t \rangle + D_F(z_{t+1}, z_t)|
\le C \eta^2 \|J \nabla F(z_t)\|^2_{z_t}$
for an absolute constant $C < 1$ when
$\eta$ satisfies Assumption~\ref{ass:stepsize}.
This allows the leading dissipation term to absorb the
error terms, resulting in the statement of the lemma. 
Our analysis for controlling these error terms
relies on the following
\textit{local Hessian stability} (LHS) property of 
the energy function $F$:
\begin{restatable}[LHS Property]
{prop}{propenergygsc}
	\label{prop:energy-gsc}
	For any $z, z' \in \R^{m+n}$ 
	and $\alpha > 0$, 
	if $\|z-z'\|_\infty \le \alpha$, then:
	\begin{equation}
		\exp(-2\alpha) \cdot \nabla^2 F(z)
		\preceq
		\nabla^2 F(z') 
		\preceq
		\exp(2\alpha) \cdot \nabla^2 F(z)  \;.
		\label{H-gsc}
		\tag{LHS}
	\end{equation}
\end{restatable}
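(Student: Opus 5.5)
Because $F$ is separable in $x$ and $y$, so is its Hessian: $\nabla^2 F(z)=\mathrm{diag}(\nabla^2\LSE_m(x),\nabla^2\LSE_n(y))$. Moreover $\|z-z'\|_\infty\le\alpha$ implies $\|x-x'\|_\infty\le\alpha$ and $\|y-y'\|_\infty\le\alpha$. It therefore suffices to prove the sandwich for a single log-sum-exp function $\LSE_k$ on $\R^k$; the block-diagonal statement follows by applying it to each block and adding the two quadratic forms. By symmetry in $z,z'$, only the lower bound $\nabla^2\LSE_k(x')\succeq e^{-2\alpha}\nabla^2\LSE_k(x)$ needs proof; the upper bound is the same inequality with $x$ and $x'$ swapped.

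For the single-block claim I will use the explicit form $\nabla^2\LSE_k(x)=\mathrm{diag}(p)-pp^\top$ with $p=\softmax_k(x)$, whose quadratic form is the variance
\[
v^\top\nabla^2\LSE_k(x)v=\Var_p(v)=\min_{c\in\R}\sum_{i=1}^k p(i)\,(v(i)-c)^2 .
\]
The key step is a coordinatewise comparison of the two softmax distributions. Write $x'=x+d$ with $|d(i)|\le\alpha$. Then
\[
p'(i)=\frac{p(i)e^{d(i)}}{\sum_j p(j)e^{d(j)}} .
\]
The numerator is at least $p(i)e^{-\alpha}$ and the denominator is at most $e^{\alpha}$. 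Hence $p'(i)\ge e^{-2\alpha}p(i)$ for every $i$.

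Combining this ratio bound with the variational formula gives the lower bound. For every $c$,
\[
\sum_i p'(i)(v(i)-c)^2\;\ge\; e^{-2\alpha}\sum_i p(i)(v(i)-c)^2\;\ge\; e^{-2\alpha}\Var_p(v).
\]
Minimizing the left side over $c$ yields $\Var_{p'}(v)\ge e^{-2\alpha}\Var_p(v)$, which is the desired PSD inequality.

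The only real subtlety is that the weights are compared under a \emph{minimum over the centering constant}. The Hessian is not $\mathrm{diag}(p)$ alone, so a naive comparison of $\mathrm{diag}(p')$ and $\mathrm{diag}(p)$ does not directly control the rank-one term $pp^\top$. Writing the variance as a minimum over $c$ is what avoids this problem, and I expect this to be the main point to verify carefully. Everything else is routine.
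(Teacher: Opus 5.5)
Your proof is correct, and it reaches the same constants $e^{\pm 2\alpha}$ by a different route from the paper. The paper also reduces to a single $\LSE$ block and uses the variance form of the Hessian, but its argument is along the segment between $x$ and $x'$. It fixes $u$, sets $f_u(s)=\langle u,\nabla^2\LSE_m(x+s h)u\rangle$ with $h=x'-x$, and computes
\[
f_u'(s)=\sum_i p_s(i)\,\bigl(h(i)-\langle p_s,h\rangle\bigr)\bigl(u(i)-\langle p_s,u\rangle\bigr)^2 .
\]
It then bounds $|f_u'|\le 2\alpha f_u$ and applies Gr\"onwall.

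Your argument involves no path at all. You write $\softmax(x+d)$ as an exponential tilt of $\softmax(x)$ to get the pointwise likelihood-ratio bound $e^{-2\alpha}\le p'(i)/p(i)\le e^{2\alpha}$. You then combine it with $\Var_p(v)=\min_c\sum_i p(i)(v(i)-c)^2$, which is the paper's Proposition~\ref{prop:variance-variational}. That representation makes the quadratic form monotone in the weights and removes the rank-one term $pp^\top$, exactly as you point out.

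Your route is shorter and needs no differentiation or ODE comparison. It also shows a general principle: two distributions whose density ratio lies in $[e^{-2\alpha},e^{2\alpha}]$ have covariance matrices sandwiched by the same factors. So it applies verbatim to the log-partition function of any exponential family.

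The paper's route instead proves the third-derivative bound $|D^3\LSE(x)[h,u,u]|\le 2\|h\|_\infty\,D^2\LSE(x)[u,u]$, a generalized self-concordance-type inequality. That is the form connecting to the optimization literature the paper cites. From it, \eqref{H-gsc} follows for any function with such a local third-order bound, whether or not its Hessian is a Gibbs covariance.
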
 
Roughly speaking, the~\eqref{H-gsc} property
is a local and relative notion of third-order smoothness.
The property states that the spectra of $\nabla^2 F(z')$ 
and $\nabla^2 F(z)$ are 
similar, up to a multiplicative factor depending
on the closeness of $z, z'$.
The dual OMWU iterates all satisfy
$\|z_{t+1} - z_t\|_{\infty} \le 3 \eta \sigma_{\max}$
(see Proposition~\ref{prop:omwu-dual-diff}), so
for a small enough constant stepsize $\eta$,
the~\eqref{H-gsc} property 
allows for establishing the bounds on
the error terms above.
The~\eqref{H-gsc} property is also
related to \textit{self-concordance},
with the former often derived as a consequence of the latter. 
Note that a recent work of \citet{freund2026second}
established self-concordance of the log-sum-exp function
(which implies Proposition~\ref{prop:energy-gsc} 
by definition of $F$). 
For completeness, we give a short direct proof of
Proposition~\ref{prop:energy-gsc} in 
Section~\ref{app:energy-gsc-prelims:energy-hessian}.

\subsection{Universal Linear Last-Iterate Convergence Rate}

In addition to the asymptotic last-iterate convergence of
Theorem~\ref{thm:asymptotic-convergence},
our new dual energy-based analysis leads to a sharp
linear last-iterate convergence rate.
This relies on a structural inequality relating
$\|J \nabla F(z)\|^2_z$ to $\KL(w^\star, w)$.
For this, let 
$\sigma_{\min} 
:= \sigma_{\min}(J, \calS^\bot)
= \inf_{v \in \calS^\bot \setminus \{0\}} 
\|\Pi_{\calS^\bot}(Jv)\|_2 / \|v\|_2$
denote the minimum singular value of $J$
restricted to $\calS^\bot$.
When $A$ has a unique and interior NE,
then $\sigma_{\min} > 0$ (see Proposition~\ref{prop:gammapositive}). 
We then prove the following general inequality:
\begin{restatable}[Non-uniform skew-gradient domination]
{prop}{propdissipationklbound}
	\label{prop:dissipation-kl-bound}
	Let $A \in \R^{m \times n}$ have a unique and interior NE
	$w^\star \in \relint(\calW)$.
	Fix $z \in \calZ$, and let 
	$w = (p, q) = \nabla F(z) \in \relint(\calW)$. 
	Let $p_{\min} = \min_{i \in[m]} p(i)$
	and $q_{\min} = \min_{j \in [n]} q(j)$, 
	and define $w_{\min} := \min\{p_{\min}, q_{\min}\}$. 
 	Then $\sigma_{\min} > 0$, and moreover 
	\begin{equation*}
		\|J \nabla F(z)\|^2_z
		\ge
		\sigma_{\min}^2 \cdot w_{\min}^2 \cdot \KL(w^\star, w) \;.
	\end{equation*}
\end{restatable}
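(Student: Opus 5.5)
The plan is to pass from the local norm to a Euclidean norm restricted to $\calS^\bot$, use $\sigma_{\min}$ to lower bound that by $\|w-w^\star\|_2^2$, and then upper bound $\KL(w^\star,w)$ by a $\chi^2$-type quantity. Each of the last two steps costs one factor of $w_{\min}$, which accounts for the $w_{\min}^2$ in the statement. That $\sigma_{\min}>0$ is exactly Proposition~\ref{prop:gammapositive}, so only the inequality needs proof.

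\textbf{Step 1 (local norm to Euclidean norm on $\calS^\bot$).} The Hessian of $\LSE_m$ at $x$ is $\mathrm{diag}(p)-pp^\top$ with $p=\softmax_m(x)$, and similarly for $\LSE_n$. Hence for $v=(v_x,v_y)$,
\[
\|v\|_z^2=\Var_p(v_x)+\Var_q(v_y),
\]
as in the primal characterization of dissipation above. For any $u\in\R^m$ we have
\[
\Var_p(u)=\min_{c\in\R}\sum_i p(i)\,(u(i)-c)^2\ \ge\ p_{\min}\min_{c}\|u-c\1_m\|_2^2=p_{\min}\,\|\Pi_{\1_m^\bot}u\|_2^2 .
\]
Applying this to both blocks gives $\|v\|_z^2\ge w_{\min}\|\Pi_{\calS^\bot}v\|_2^2$.

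\textbf{Step 2 (use $\sigma_{\min}$).} Take $v=J\nabla F(z)=Jw$ and write $Jw=J(w-w^\star)+Jw^\star$. Since $w^\star$ is interior, Proposition~\ref{prop:interior-NE} says $Jw^\star$ has constant blocks, i.e.\ $Jw^\star\in\calS$. Therefore $\Pi_{\calS^\bot}(Jw)=\Pi_{\calS^\bot}(J(w-w^\star))$. Both $p,p^\star$ and $q,q^\star$ sum to one, so $w-w^\star\in\calS^\bot$. The definition of $\sigma_{\min}$ then yields
\[
\|\Pi_{\calS^\bot}(Jw)\|_2\ge\sigma_{\min}\|w-w^\star\|_2 ,
\]
and combining with Step 1 gives $\|J\nabla F(z)\|_z^2\ge w_{\min}\,\sigma_{\min}^2\,\|w-w^\star\|_2^2$.

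\textbf{Step 3 (KL to squared distance).} The standard bound $\KL\le\chi^2$ follows from $\log t\le t-1$ applied to $t=p^\star(i)/p(i)$:
\[
\KL(p^\star,p)\le\sum_i\frac{(p^\star(i)-p(i))^2}{p(i)}\le\frac{\|p^\star-p\|_2^2}{p_{\min}} .
\]
The same bound holds for $q$. Summing gives $\KL(w^\star,w)\le\|w-w^\star\|_2^2/w_{\min}$. Substituting this into Step 2 gives the claim.

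No step looks like a real obstacle. The point needing most care is Step 2: the cross term must vanish exactly, which requires checking the sign conventions in $J$ against Proposition~\ref{prop:interior-NE} to confirm that $Jw^\star\in\calS$. We also need $w-w^\star\in\calS^\bot$ so that the restricted singular value applies. Neither step uses $z\in\calZ$ beyond $w=\nabla F(z)$.
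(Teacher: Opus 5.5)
Your proposal is correct and matches the paper's proof step for step: a $w_{\min}$ lower bound on the local norm restricted to $\calS^\bot$, then $Jw^\star\in\calS$ together with $\sigma_{\min}$, then $\KL\le\chi^2\le\|w-w^\star\|_2^2/w_{\min}$. The only difference is that you work directly in primal and variance form, as in the paper's alternative Proposition~\ref{prop:gradient-dom-primal}, whereas the main proof goes through the restricted eigenvalues $M(z)=w_{\min}$, $L(z)=1/w_{\min}$ and the dual energy gap $F(z)-F(z^\star)$.
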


\noindent
The proof of the inequality is given in 
Section~\ref{app:dissipation-kl}.
Here, we make several remarks:

\begin{restatable}[Skew-gradient domination]
	{rem}{remskewgradientdomination}
	\label{remark:skew-gradient-domination}
	As
	$\KL(w^\star, w) = F(z) - \min_{z^\star \in \calZ} F(z^\star)$
	(Proposition~\ref{prop:calZ-minimizer}),
	note that Proposition~\ref{prop:dissipation-kl-bound}
	establishes a structural property akin to 
	\textit{gradient domination} in optimization,
	but with key differences arising 
	from the non-Euclidean geometry 
	induced by $\|\cdot \|_z$ 
	and the concern of \textit{skew}-gradients.
	In particular, the relationship is constrained by 
	a non-uniform and state-dependent factor $w^2_{\min}$, a term that captures a bottleneck arising from the local geometry of $F$. 
	Specifically,
	$w_{\min}$ is exactly the smallest positive
	eigenvalue of $\nabla^2 F(z)$,
	and thus the curvature of $F$
	flattens when the corresponding primal variable 
	is near the simplex boundary.
	In these regions, the term $\|J \nabla F(z)\|^2_z$
	is small even when $w$ is far from Nash in $\KL$.
	In Proposition~\ref{prop:kl-dissipation-upper-2x2},
	we give an \textit{upper bound} on $\|J \nabla F(z)\|^2_z$
	showing that this non-uniform dependence on $w_{\min}$
	is necessary. 
\end{restatable}

\begin{restatable}[Independence of bottlenecks]
	{rem}{remindependencebottlenecks}
	\label{remark:independence-bottlenecks}
	A second bottleneck arises when
	$J$ is ill-conditioned and has a small
	restricted singular value $\sigma_{\min}$.
	Intuitively, this means that large primal perturbations 
	$w-w' \in \calS^\bot$ lead to only small relative changes in 
	$J(w-w')$, which makes OMWU
	less reactive to meaningful signal from payoff vectors. 
	Importantly, $\sigma_{\min}$ is a property of $J$
	that is completely independent from the geometry
	of $F$ and the proximity of the Nash $w^\star$ to 
	the simplex boundary: we show in 
	Propositions~\ref{prop:scaled-mp-rvs-example}
	and~\ref{prop:small-nash-large-rsv}
	that there are families of matrices
	where (i) $w^\star$ is uniform but $\sigma_{\min}$ is arbitrarily small, 
	and (ii) where $\sigma_{\min} = \tfrac{1}{2}$,
	but $w^\star$ is arbitrarily close to a vertex.
	See Section~\ref{sec:sigma-wmin-relationship}.
\end{restatable}

Recall by Proposition~\ref{prop:change-kl-energy-equiv}
that $\KL(w^\star, w_{t+1}) - \KL(w^\star, w_t) 
= F(z_{t+1}) - F(z_t)$ along the OMWU iterates.
Then combining the energy dissipation
bound of Lemma~\ref{lem:energy-one-step-full}
and the skew-gradient domination property of
Proposition~\ref{prop:dissipation-kl-bound},
we obtain a new, linear last-iterate 
convergence rate in $\KL$:

\begin{restatable}[Last-Iterate Convergence Rate in KL]
{thm}{thmkllastunified}
	\label{thm:kl-last-unified}
	Let $A \in \R^{m\times n}$ have 
	a unique and interior NE $w^\star = (p^\star, q^\star)$.
	Let $\{w_t\}$ denote the iterates of~\eqref{eq:omwu-primal} on $A$
	with $\eta$ satisfying Assumption~\ref{ass:stepsize}, initialized from $w_0 \in \relint(\calW)$.
	Let $\delta_p = \min_{i \in [m]} p^\star(i)$, 
	$\delta_q = \min_{j \in [n]} q^\star(j)$, and 
 	$\delta := \min\{\delta_p, \delta_q\}$.
	For every $t$,
	let $p_{t, \min} = \min_{i \in [m]} p_t(i)$, 
	$q_{t, \min} = \min_{j \in [n]} q_t(j)$, and
	$w_{t, \min} := \min\{p_{t, \min}, q_{t, \min}\}$.
	Let
	$\Lambda := \crossent(w^\star, w_0) 
	= \KL(w^\star, w_0) - R(w^\star)$.
	Then for all $t \ge 1$,
	the following hold:
	\begin{alignat*}{2}
		&\text{(1)}
		\quad
		\KL(w^\star, w_{t+1})
		\le
		\KL(w^\star, w_t) \cdot 
		\left(
		1 - \tfrac{1}{20} \cdot \eta^2 \sigma_{\min}^2
		\cdot w_{t, \min}^2
		\right) \;,
		\\
		&\text{(2)}
		\quad
		\KL(w^\star, w_{t+1})
		\le
		2 \KL(w^\star, w_0)
		\cdot 
		\exp\left(
			- \tfrac{1}{20} 
			\cdot \eta^2 \sigma^2_{\min}
			\cdot \exp\left(\tfrac{-2\Lambda}{\delta}\right)
			\cdot t 
		\right) \;.
	\end{alignat*}
\end{restatable}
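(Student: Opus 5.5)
Part (1) follows by chaining the earlier results. Since $z_t, z_{t+1} \in \calZ$ and $w_t = \nabla F(z_t)$ for all $t \ge 1$, Proposition~\ref{prop:change-kl-energy-equiv} gives
\[
\KL(w^\star, w_{t+1}) - \KL(w^\star, w_t) = F(z_{t+1}) - F(z_t).
\]
The upper bound of Lemma~\ref{lem:energy-one-step-full} bounds the right-hand side by $-\frac{1}{20}\eta^2 \|J\nabla F(z_t)\|_{z_t}^2$. Applying Proposition~\ref{prop:dissipation-kl-bound} at $z = z_t \in \calZ$ then gives
\[
\|J\nabla F(z_t)\|^2_{z_t} \ge \sigma_{\min}^2\, w_{t,\min}^2\, \KL(w^\star, w_t).
\]
Rearranging yields (1). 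Assumption~\ref{ass:stepsize} makes the factor in (1) lie in $[0,1)$, so it is a genuine contraction.

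For (2), I iterate (1) and use $1 - x \le e^{-x}$. This requires a uniform lower bound on $w_{t,\min}$ along the trajectory. By (1) and the energy dissipation, $\KL(w^\star, w_t)$ is non-increasing for $t \ge 1$. The first step needs separate care, since Lemma~\ref{lem:energy-one-step-full} only applies from $t \ge 1$ and $z_0$ need not lie in $\calZ$.
\begin{itemize}
\item \textbf{First step.} I would bound $\KL(w^\star, w_1) \le 2\KL(w^\star, w_0)$, which is where the factor $2$ comes from. I expect this to follow from $w_1 \propto w_0 \exp(-\eta J w_0)$ (coordinatewise), giving $\KL(w^\star, w_1) - \KL(w^\star, w_0) = \eta\langle w^\star, J w_0\rangle + \log\sum_i w_0(i) e^{-\eta (Jw_0)(i)}$. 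The inner product vanishes because $J^\top w^\star$ is a constant shift on each block, and the log-partition term is $O(\eta^2\sigma_{\max}^2)$-ish.
\item \textbf{Fallback for the first step.} If that estimate does not close cleanly, I would instead bound $\KL(w^\star, w_1)$ by a quantity of order $\Lambda$ and absorb it.
\item \textbf{Barrier bound.} For each coordinate $i$ in the $p$-block, $\KL(w^\star, w_t) = \crossent(w^\star, w_t) + R(w^\star)$, and $\crossent(w^\star, w_t) \ge p^\star(i)\log(1/p_t(i)) \ge \delta \log(1/p_t(i))$. Hence $\log(1/p_t(i)) \le \crossent(w^\star, w_t)/\delta$.
\item \textbf{Cross-entropy monotonicity.} Since $R(w^\star)$ is fixed, $\crossent(w^\star, w_t) - \crossent(w^\star, w_0) = \KL(w^\star, w_t) - \KL(w^\star, w_0)$. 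The monotonicity above then bounds $\crossent(w^\star, w_t)$ by roughly $\Lambda$.
\end{itemize}
The target is $\crossent(w^\star, w_t) \le \Lambda$, giving $w_{t,\min} \ge \exp(-\Lambda/\delta)$ and hence $w_{t,\min}^2 \ge \exp(-2\Lambda/\delta)$, exactly the exponent in (2).

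The main obstacle is this first-step bookkeeping. I need the cross-entropy to stay at most $\Lambda$ at all $t \ge 1$, even though monotonicity is only guaranteed from $t=1$ onward. Any slack of size $\KL(w^\star, w_1) - \KL(w^\star, w_0)$ would degrade the $\exp(-2\Lambda/\delta)$ constant, so I would aim to show $\KL(w^\star, w_1) \le \KL(w^\star, w_0)$ directly from the step-size condition. If only the weaker factor-$2$ bound holds, it enters as the prefactor. Combining the pieces gives $\KL(w^\star, w_{t+1}) \le \KL(w^\star, w_1)\prod_{s=1}^{t}(1 - c\,w_{s,\min}^2)$ with $c = \frac{1}{20}\eta^2\sigma_{\min}^2$, which is at most $2\KL(w^\star, w_0)\exp(-c\, e^{-2\Lambda/\delta}\, t)$.
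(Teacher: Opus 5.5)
Your part (1) is exactly the paper's argument. The gap is in part (2), at the exponent.

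\textbf{Where part (2) fails.} Because you contract with $w_{t,\min}^2$, you need $w_{t,\min}\ge e^{-\Lambda/\delta}$. That requires $\crossent(w^\star,w_t)\le\Lambda$ for all $t\ge 1$, hence $\KL(w^\star,w_1)\le\KL(w^\star,w_0)$. This is false for every $w_0\neq w^\star$, for any step size. The first step $z_1=z_0-\eta J\nabla F(z_0)$ is a plain MWU step, and your own formula shows it. Once the inner-product term vanishes, Jensen gives
\[
\log\textstyle\sum_i p_0(i)e^{-\eta(Aq_0)(i)}+\log\sum_j q_0(j)e^{\eta(A^\top p_0)(j)}\ \ge\ -\eta\langle p_0,Aq_0\rangle+\eta\langle q_0,A^\top p_0\rangle=0 .
\]
Equality holds only if $Aq_0$ and $A^\top p_0$ are constant vectors, i.e.\ $w_0=w^\star$. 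Equivalently, $\KL(w^\star,w_1)-\KL(w^\star,w_0)=D_F(z_1,z_0)>0$.

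Your fallback is also mis-accounted. The first-step slack does not enter only the prefactor; it enters the barrier bound as well. You then get $\crossent(w^\star,w_t)\le c\Lambda$ with $c>1$ (the paper gets $c=\tfrac54$ via Lemma~\ref{lem:kl-init-upperbound}), hence only $w_{t,\min}^2\ge e^{-2c\Lambda/\delta}$. Since $\Lambda/\delta$ is unbounded, no constant absorbs this.

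A smaller point: an additive $O(\eta^2\sigma_{\max}^2)$ bound on the log-partition term does not give $\KL(w^\star,w_1)\le 2\KL(w^\star,w_0)$ when $\KL(w^\star,w_0)$ is small. You need the slack proportional to $\|J\nabla F(z_0)\|_{z_0}^2\le 2\sigma_{\max}^2\KL(w^\star,w_0)$, as in Lemma~\ref{lem:initial-change-energy} and Proposition~\ref{prop:skew-grad-uniform-upper}.

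\textbf{The fix.} Do not square a single minimum. Run the barrier argument on both blocks at once:
\[
\crossent(w^\star,w_t)\ \ge\ \delta\log\tfrac{1}{p_{t,\min}}+\delta\log\tfrac{1}{q_{t,\min}} .
\]
Combined with $\KL(w^\star,w_1)\le\tfrac54\KL(w^\star,w_0)$ and $-R(w^\star)\ge 0$, this gives
\[
p_{t,\min}\,q_{t,\min}\ \ge\ e^{-5\Lambda/(4\delta)}\ \ge\ e^{-2\Lambda/\delta}\qquad\text{for all } t\ge 1 .
\]
Then use the one-step contraction with $p_{t,\min}q_{t,\min}$ in place of $w_{t,\min}^2$ (Proposition~\ref{prop:gradient-dom-primal} and Corollary~\ref{cor:kl-last-improved-one-step}). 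Iterating from $t=1$ yields (2), with prefactor $\tfrac54\le 2$.

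The paper's written proof has the same squaring slip. Lemma~\ref{lem:wtmin-bound-uniform} only asserts $w_{t,\min}\ge e^{-2\Lambda/\delta}$, which gives $w_{t,\min}^2\ge e^{-4\Lambda/\delta}$, not the $e^{-2\Lambda/\delta}$ substituted there. The product-form route repairs that as well.
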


\noindent
The proof of the theorem is given
in Section~\ref{app:kl-last-iterate}.
Here, we make the following remarks:

\begin{restatable}[Trajectory-dependence]
{rem}{remkllast}
	\label{remark:kl-last}
	The dependence on $w_{t, \min}$ in Part (1) 
	of the theorem highlights the influence of the 
	iterates' trajectory on the rate of convergence. 
	The trajectory depends on the initialization $w_0$,
	and more importantly on the location of the NE.
	When $w^\star$ is close to the simplex boundary
	(meaning $\delta$ is small),
	the iterates $\{w_t\}$ are also more frequently
	near the boundary.
	In these regions $w_{t, \min}$ is small,
	meaning the one-step contraction in $\KL$ is also small.
	This compounds in slower overall convergence.
	Note that the sequence $\{w_{t, \min}\}$ is not monotonic.
	However, 
	by establishing a \textit{uniform lower bound} of
	$w_{t, \min} \ge \exp(\frac{-2\Lambda}{\delta})$
    (Lemma~\ref{lem:wtmin-bound-uniform})
	we obtain the rate in Part (2). 
\end{restatable}

\begin{restatable}[Comparison with~\cite{wei2021linear}]
{rem}{remarkratecomparisonwei}
	The linear rate of Wei et al. (2021, Theorem 3)
	also depends on $\exp(\frac{-\Lambda}{\delta})$.
	However, explicitly tracking this dependence in their proof 
	yields the much slower bound of
	$\KL(w^\star, w_t)
	\le 
	O(\exp(\frac{\Lambda}{\delta}) \cdot \exp(-\eta^2 \exp(\frac{-\Lambda}{\delta})\cdot t))
	$. 
	Part (2) of Theorem~\ref{thm:kl-last-unified} 
	is thus sharper by at least a multiplicative $\exp(\frac{1}{\delta})$ 
	factor, a term that grows unbounded as $\delta \to 0$.
	Moreover, our new energy-based proof technique 
	offers a more transparent and quantitative understanding
	of  the geometric bottlenecks that contribute to slow convergence.
	See Section~\ref{app:wei-comparisons} 
	for further comparisons.
\end{restatable}

\paragraph{Matching Last-Iterate Lower Bound in KL.}
We further prove that the rate in 
Theorem~\ref{thm:kl-last-unified} is \textit{optimal}.
In particular, we prove a \textit{lower bound} on 
the last-iterate convergence rate in $\KL$ 
with a matching dependence on $\delta$, 
establishing that our new analysis is tight.
Formally:

\begin{restatable}[Lower bound in KL]
{thm}{thmkllastlower}
	\label{thm:kl-last-lower}
	Assume the definitions of 
	Theorem~\ref{thm:kl-last-unified}. 
	Fix any $T \ge 3$.
	Then there exists $A \in [-1, 1]^{2\times 2}$
	with an interior NE
	$w^\star = (p^\star, q^\star)$
	with $\delta := \delta_p = \delta_q$
	such that, from a positive measure set of initializations $w_0$,
	each with $\KL(w^\star, w_0)\le 6$,
	the OMWU iterates $\{w_t\}$
	with $\eta$ as in Assumption~\ref{ass:stepsize} satisfy
    \begin{equation*}
	\KL(w^\star, w_t)
		\ge 
		\KL(w^\star, w_0)
		\cdot 
		\exp(- 40 \eta^2 \exp(\tfrac{-1}{\delta}) \cdot t
		)
	\;\;\;\text{for all $t \in [T]$.}
    \end{equation*}
\end{restatable}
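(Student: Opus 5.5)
The plan is to run the lower bound of Lemma~\ref{lem:energy-one-step-full} on a $2\times 2$ instance where the iterates are forced to stay near the simplex boundary for all $t\in[T]$. By Proposition~\ref{prop:change-kl-energy-equiv} and the left inequality of Lemma~\ref{lem:energy-one-step-full},
\[
\KL(w^\star,w_{t+1}) \ge \KL(w^\star,w_t) - \tfrac{5}{4}\eta^2 \|J\nabla F(z_t)\|_{z_t}^2 .
\]
So it suffices to prove a reverse skew-gradient-domination bound along the trajectory, of the form
\[
\|J\nabla F(z_t)\|^2_{z_t} \le c\, e^{-1/\delta}\,\KL(w^\star,w_t)
\]
with $c\le 30$. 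Then $\KL(w^\star,w_{t+1})\ge \KL(w^\star,w_t)(1-\tfrac{5c}{4}\eta^2 e^{-1/\delta})$. Using $1-x\ge e^{-x/(1-x)}$ for small $x$ (valid because $\eta$ is tiny under Assumption~\ref{ass:stepsize}), this gives the claimed $\exp(-40\eta^2 e^{-1/\delta} t)$ after iterating over $t\in[T]$.

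\textbf{Instance.} Take $A=\begin{pmatrix}1-\delta' & -\delta'\\ -(1-\delta') & \delta'\end{pmatrix}$ or a similar matrix in $[-1,1]^{2\times 2}$ with $p^\star=q^\star=(1-\delta,\delta)$. Choose $\delta$ as a function of $T$, for instance so that $\eta^2 e^{-1/\delta} T$ is at most a constant. In the $2\times2$ case the dissipation is $\Var_{p_t}(Aq_t)+\Var_{q_t}(A^\top p_t)$. For $p=(1-a,a)$ this equals $a(1-a)\cdot(\text{payoff difference})^2$, where the payoff difference is linear in $q_t-q^\star$, and symmetrically for $q$. Hence
\[
\|J\nabla F(z)\|^2_z \lesssim a(1-a)\,|q_2-\delta|^2 + b(1-b)\,|p_2-\delta|^2 .
\]
This is the upper estimate, Proposition~\ref{prop:kl-dissipation-upper-2x2}, which I would invoke.

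\textbf{Comparing with KL.} Next I need $a(1-a)\,|q_2-\delta|^2 \lesssim e^{-1/\delta}\KL(w^\star,w)$ on the region the trajectory visits. The mechanism is as follows. If the initialization gives a KL value of order a constant (at most $6$), a point with KL at most $6$ in the direction of the boundary has $a$ as small as $e^{-\Theta(1/\delta)}$. This is because $\KL\approx \delta\log(\delta/a)$, so $\KL\approx 1$ forces $a\approx \delta e^{-1/\delta}$. I would choose $w_0$ in a small open set whose coordinates are of order $e^{-1/\delta}$, on the cycle where the weight lies near the vertex. Since $\KL$ is nonincreasing, the iterates remain on or inside the level set.

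\textbf{Main obstacle.} The hard step is showing that for all $t\le T$ the iterates stay in the part of the level set where the factor $a(1-a)$ (or $b(1-b)$) is exponentially small whenever the other factor $|q_2-\delta|^2$ is of order one. In other words, the iterates must not rotate to the side of the orbit where both coordinates are moderate. Along a KL level set of order one the orbit does visit moderate regions, but only briefly. I would try first to bound, for each $t\le T$ and via a continuous-time Hamiltonian comparison, the time fraction spent there. Failing that, I would choose $w_0$ so that over the horizon $T$ the slow drift, whose speed is of order $\eta e^{-1/\delta}$ per step near the vertex, never reaches the moderate region.
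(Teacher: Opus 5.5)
Your architecture matches the paper's: the left inequality of Lemma~\ref{lem:energy-one-step-full}, Proposition~\ref{prop:change-kl-energy-equiv}, and the $2\times 2$ upper bound of Proposition~\ref{prop:kl-dissipation-upper-2x2}, applied on the region $\mathcal{C}_\delta=\{p(2),q(2)\le e^{-1/\delta}\}$. Note that this proposition needs \emph{both} coordinates near the vertex. If only $p(2)$ is tiny and $q$ is moderate, the term $\mathrm{Var}_q(A^\top p)\approx q(1)q(2)\delta^2$ is not exponentially small relative to $\mathrm{KL}=\Theta(1)$.

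The gap is the confinement step, which you leave open, and neither route you sketch closes it. A time-fraction bound via a Hamiltonian comparison cannot work. At any step where a minimum coordinate is of order $\delta$ or larger, KL contracts by a factor $1-\Omega(\eta^2\delta^2)$, which already violates the target $e^{-40\eta^2 e^{-1/\delta} t}$ at small $t$. So the trajectory must stay in $\mathcal{C}_\delta$ for \emph{every} $t\le T$.

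Your second route has the right idea, but the drift is mis-quantified. Use the dual coordinate $\hat x_t=x_t(1)=\delta\log(p_t(1)/p_t(2))$. Near the vertex, the update of Proposition~\ref{prop:omwu-pd-2x2} is $\hat x_{t+1}-\hat x_t\approx-\eta\delta^2$, so $\log p_t(2)$ grows by about $\eta\delta$ per step. The iterates therefore leave an $e^{-\Theta(1/\delta)}$ neighbourhood after $O(1/(\eta\delta^2))$ steps, not after $e^{\Theta(1/\delta)}$ steps. Consequently your criterion for choosing $\delta$ is not the operative one. If you saturate $\eta^2 e^{-1/\delta}T=\Theta(1)$, i.e.\ $\delta\approx 1/\log(\eta^2 T)$, the iterates cycle many times before $T$ and confinement is simply false.

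The missing idea is to let confinement dictate $\delta$, which makes the step elementary.
\begin{itemize}
\item Take $A=A_{\delta,\delta}$ from Definition~\ref{def:A-2x2}. Your displayed matrix has $p^\star=(1/2,1/2)$ and $q^\star=(\delta',1-\delta')$, so Proposition~\ref{prop:kl-dissipation-upper-2x2} does not apply to it.
\item Set $\delta=1/T$ and initialize in the positive-measure set $e^{-3/\delta}\le p_0(2),q_0(2)\le e^{-2/\delta}$, which gives $\mathrm{KL}(w^\star,w_0)\le 6$.
\item The crude bound $\hat x_{t+1}-\hat x_t\ge -2\eta\delta\ge-\delta$ from Proposition~\ref{prop:omwu-pd-2x2} shows that over $t\le T=1/\delta$ steps, $\hat x_t$ drops by at most $1$. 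Hence $p_t(2)$ grows by at most a factor $e^{1/\delta}$ and stays below $e^{-1/\delta}$ (Corollary~\ref{corr:2x2-pd-threshold}).
\item Meanwhile, as long as $p(1)\ge 1-\delta/3$, part (iv) of Lemma~\ref{lemma:2x2-drift-bounds-unified} makes $\hat y_t$ increase, so $q_t(2)$ only decreases.
\end{itemize}
Thus $w_t\in\mathcal{C}_\delta$ for all $t\in[T]$. Proposition~\ref{prop:kl-dissipation-upper-2x2} then gives your constant as $c=8\sigma_{\max}^2\delta\le 16$, and $1-20\eta^2e^{-1/\delta}\ge \exp(-40\eta^2 e^{-1/\delta})$ finishes the proof. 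No Hamiltonian or time-fraction argument is needed.
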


Together with Theorem~\ref{thm:kl-last-unified},
this lower bound indicates that the most severe 
geometric bottleneck in convergence arises when
$w^\star$ is near a \textit{vertex} of the simplex.
We give the proof in Section~\ref{app:kl-last-lower}.


\section{New Bounds on Uniform Convergence Rates}
\label{sec:uniform-bounds}

We use the geometric insights
developed in Theorems~\ref{thm:kl-last-unified} and~\ref{thm:kl-last-lower}
to additionally
prove new bounds on \textit{uniform} convergence rates under OMWU.
For concreteness, fix the dimensions $m$ and $n$,
and recall that an algorithm has a \textit{uniform last-iterate} or 
\textit{uniform best-iterate} convergence
rate of order $f(T)$ 
under a distance measure $D$ if the following holds:
\textit{for all $T \ge 1$} and \textit{for all} $A \in \R^{m \times n}$: 
\begin{alignat*}{3}
	&\text{(uniform last-iterate)}:\;\;
	&&D(w_T) &&\le O(f(T))  \\
	\text{or}\;\;
	&\text{(uniform best-iterate)}:\;\;
	\min\nolimits_{t \in [T]}
	&&D(w_t) &&\le O(f(T))  \;.
\end{alignat*}
Here, we emphasize that $f$ must be a function only
of the time horizon $T$ and the dimensions
$m$ and $n$, but with \textit{no dependence} on the 
minimum coordinate $\delta$ of the NE in a particular instance.
Proving such uniform rates thus allows
quantitative guarantees to hold in an instance-agnostic manner
(see also~\cite{C25} for further discussion on 
these different notions). 

While~\cite{FGKLLZ24} proved for OMWU that no
uniform last-iterate convergence rates are attainable,
the follow-up of \cite{C25} proved for 
the class of $2\times 2$ games with an interior NE,
that, when initialized at the 
uniform distribution, OMWU does have a
$O(T^{-1/6})$ \textit{uniform best-iterate convergence rate}
in \textit{duality gap}.
Unfortunately, we prove that similar uniform best-iterate 
rates are \textit{not} attainable under 
the stronger measures of $\TV$ distance and $\KL$ divergence.

\paragraph{Lower Bounds on Best-Iterate in KL and TV.}
We prove the following \textit{constant lower bounds}:

\begin{restatable}[Uniform Best-Iterate Lower Bounds]
{thm}{thmuniformlbmain}
	\label{thm:uniform-lb-main}
	For every $T \ge 2$, 
	there exists $A \in [-1, 1]^{2\times 2}$
	with an interior Nash equilibrium $w^\star$ such that,
	for the iterates $\{w_t\}$ of~\eqref{eq:omwu-primal} 
	on $A$ with $\eta$ satisfying Assumption~\ref{ass:stepsize},
	from a positive measure set of initializations that includes 
	the joint uniform distributions, the following hold:
    \begin{equation*}
    	\min\nolimits_{t \in [T]} \TV(w^\star, w_t) \ge \tfrac{1}{3}
        \;\;\text{and}\;\;
	      \min\nolimits_{t \in [T]} \KL(w^\star, w_t) \ge \tfrac{1}{9} \;.
    \end{equation*}
\end{restatable}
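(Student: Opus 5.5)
Given $T$, I will take a $2\times 2$ game whose interior Nash equilibrium sits near a vertex, for instance $p^\star = q^\star = (1-\delta, \delta)$ with $\delta = \delta(T)$ very small. Starting from the uniform distribution, which is far from $w^\star$, the aim is to show that the iterates cannot get within constant TV distance of $w^\star$ in only $T$ steps. A concrete choice is $A = \begin{pmatrix} \delta & -(1-\delta) \\ -(1-\delta) & \delta \end{pmatrix}$ up to sign, rescaled into $[-1,1]^{2\times2}$ and chosen so that the equilibrium is $(1-\delta,\delta)$ for both players. The KL bound then follows from the TV bound via~\eqref{eq:ne-distances}: $\KL \ge \TV^2 \ge 1/9$. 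So the real task is to prove $\min_{t\in[T]}\TV(w^\star,w_t)\ge 1/3$.

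\textbf{Step 1: reduce TV closeness to a dual displacement.} Since $\TV(w^\star,w_t) = |p_t(1)-(1-\delta)| + |q_t(1)-(1-\delta)|$, having $\TV < 1/3$ at some $t$ forces $p_t(1)$ and $q_t(1)$ to be at least roughly $2/3 - \delta$ (at least one coordinate must be that large). In dual coordinates this means $x_t(1)-x_t(2) \ge \log 2 - o(1)$. At initialization the uniform distribution gives $x_0(1)-x_0(2)=0$. So some dual coordinate difference must move by an absolute constant.

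\textbf{Step 2: bound the per-step dual motion.} By~\eqref{eq:omwu-dual}, the increment $z_{t+1}-z_t$ is controlled by the payoff vectors $J w_t$ and $J w_{t-1}$. The key point is that the \emph{difference} of payoffs, $(Aq_t)(1)-(Aq_t)(2)$, is an affine function of $q_t(1)$ that vanishes at $q_t(1)=1-\delta$. Since $p$ lives near the vertex $(1,0)$ while TV is far, I would instead arrange the game so that the relevant payoff differences are $O(\delta)$ in magnitude along the entire region $\TV\ge 1/3$. Then each step moves the dual difference coordinate by at most $3\eta\cdot O(\delta)$, so $T$ steps move it by at most $O(\eta\delta T)$. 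Choosing $\delta \ll 1/(\eta T)$ makes this $o(1)$, which contradicts Step 1.

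\textbf{Main obstacle.} Step 2 is the main obstacle. For a generic game with equilibrium near a vertex, the payoff difference is \emph{not} small away from $w^\star$: it is proportional to $q_t(1)-(1-\delta)$, which is order one at the uniform point. The construction therefore has to scale $A$ so that its entries governing these differences are of size $O(\delta)$, for example $A = \delta\cdot B$ with an interior equilibrium determined by $B$. A pure rescaling keeps the equilibrium fixed while shrinking all payoff differences, which slows the dynamics uniformly; this is exactly the $\sigma_{\min}$ bottleneck of Remark~\ref{remark:independence-bottlenecks}. This gives the cleanest route: fix $B\in[-1,1]^{2\times2}$ with interior equilibrium $p^\star=q^\star=(3/4,1/4)$, so that the uniform point has $\TV$ of order a constant. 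Before committing to $(3/4,1/4)$, I would check that its distance from uniform clears $1/3$; if not, I would move the equilibrium closer to a vertex, e.g.\ $(0.9,0.1)$, giving $\TV(w^\star,\text{unif})=0.8$.

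With that choice, set $A=\epsilon B$ with $\epsilon = c/(\eta T)$. Then $\|z_t - z_0\|_\infty \le 3\eta\epsilon\,\sigma_{\max}(B)\,t \le 3c\,\sigma_{\max}(B)$, so by continuity of softmax the iterates stay within TV $0.8-1/3$ of uniform. This yields $\TV\ge 1/3$ for all $t\le T$.

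\textbf{Positive measure.} The same dual bound holds for every initialization in a small open neighborhood of the uniform distribution. That neighborhood is the required positive measure set of initializations, and it includes the joint uniform distribution.
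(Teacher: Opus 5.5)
Your final argument (take $A=\epsilon B$ and show the iterates barely move) proves the statement as literally written, but it needs one repair. With $\epsilon=c/(\eta T)$ the matrix depends on $\eta$. Since Assumption~\ref{ass:stepsize} permits arbitrarily small $\eta$, this $\epsilon$ can exceed $1$ and push $A$ out of $[-1,1]^{2\times 2}$. The statement also wants one $A$ that works for every admissible $\eta$, and that range itself depends on $\|A\|_2$. Take $\epsilon=c/T$ instead. Assumption~\ref{ass:stepsize} forces $\eta\le 1/36$, and $\|z_{t+1}-z_t\|_\infty\le 3\eta\max_{i,j}|A(i,j)|\le 3\epsilon$. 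So over $T$ steps each player's log-odds move by at most $6c$, each of $p_t(1),q_t(1)$ moves by at most $3c/2$, and $\TV(w^\star,w_t)\ge 0.8-O(c)\ge 1/3$ for all initializations near uniform. (The symmetric matrix in your first paragraph has equilibrium $(1/2,1/2)$, not $(1-\delta,\delta)$, but you do not end up using it.)

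This route is genuinely different from the paper's. The paper keeps the game at unit scale and well conditioned. It takes the canonical $A_{\delta,\delta}$ of Definition~\ref{def:A-2x2} with $\delta=1/(6e^{T})$, for which $\sigma_{\min}=1/2$ (Proposition~\ref{prop:rsv-canonical-2x2}) and the payoff differences at the uniform point are of order one. What you called the main obstacle is resolved by boundary geometry, not by rescaling. On the effective dual space $p_t(1)=1/(1+e^{-x_t(1)/\delta})$, and $x_t(1)$ grows by less than $2\eta\delta\le\delta$ per step from $x_0(1)\le 0$. Hence $x_t(1)\le\delta T\le\delta\log\tfrac{1-3\delta}{3\delta}$, i.e.\ $p_t(1)\le 1-3\delta$ for all $t\le T$ (Corollary~\ref{corr:2x2-pd-threshold}); the log-odds would have to travel a distance of about $\log(1/\delta)\approx T$. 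Lemma~\ref{lemma:2x2-drift-bounds-unified}(iii) then makes $y_t(1)$ decrease, so $q_t(1)\le 1/2$ and $\TV(w^\star,w_t)\ge 1/2-\delta\ge 1/3$. The bound comes from the $q$-player being actively pushed away from equilibrium, not from frozen iterates.

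Your route buys brevity and needs none of the $2\times 2$ machinery. However, it uses only the scale/$\sigma_{\min}$ bottleneck of Remark~\ref{remark:independence-bottlenecks}. Because $\TV(w^\star,\cdot)$ and $\KL(w^\star,\cdot)$ are unchanged under $A\mapsto\epsilon A$ while every payoff-driven step shrinks by $\epsilon$, the same argument rules out uniform TV/KL rates for essentially any fixed-stepsize algorithm (e.g.\ optimistic gradient descent-ascent). It therefore says nothing specific to OMWU. The paper's instance shows the obstruction persists for a unit-scale, well-conditioned game and is caused by the equilibrium sitting exponentially close to a vertex, which is the phenomenon the theorem is meant to exhibit. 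It also gives a single $A$ valid for every admissible $\eta$, with a large initialization set ($p_0(1),q_0(1)\le 1/2$).
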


Note that these lower bounds in best-iterate also 
clearly imply a lower bound on last-iterate.
We give the proof and a full discussion in 
Section~\ref{app:lowerbound-details},
where we also establish an additional uniform best-iterate
lower bound in $\DG$ when the initialization
is not the uniform distribution
(see Theorem~\ref{thm:uniform-lb-secondary}).

\paragraph{Fast Best-Iterate Convergence Rate
in Duality Gap for 2x2 Games.}
In contrast, for $2\times 2$ games,
we obtain in \textit{duality gap}
a new upper bound on \textit{uniform best-iterate convergence rate}
that substantially improves over the prior $O(T^{-1/6})$
rate of~\cite{C25} that was proven for the same setting:

\begin{restatable}[Uniform Best-Iterate Convergence Rate in DG]
{thm}{thmdgbest}
	\label{thm:dg-best-2x2}
	Let $A \in [-1, 1]^{2\times 2}$ be a zero-sum game
	with an interior Nash equilibrium $w^\star$, and let $\{w_t\}$ 
	denote the iterates of running~\eqref{eq:omwu-primal}
	on $A$ with constant stepsize $\eta$ satisfying 
	Assumption~\ref{ass:stepsize}, initialized at the
	uniform distributions. 
	Then for any $T \ge 1$, it holds that 
    \begin{equation*}
        \min\nolimits_{t \in [T]}
        \DG(w_t) \le O(T^{-1/2} \cdot \sqrt{\log T}) \;.
    \end{equation*}
\end{restatable}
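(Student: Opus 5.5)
The plan is to combine the telescoped energy dissipation of Lemma~\ref{lem:energy-one-step-full} with an explicit $2\times 2$ formula relating $\DG$ to the dissipation term $\|J\nabla F(z_t)\|^2_{z_t}$. A separate argument is needed for the regime where that relation degenerates, namely iterates near the boundary.

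\emph{Step 1 (summable dissipation).} By Proposition~\ref{prop:change-kl-energy-equiv} and the upper bound of Lemma~\ref{lem:energy-one-step-full}, summing over $t\in[T]$ gives
\begin{equation*}
\textstyle\frac{\eta^2}{20}\sum_{t=1}^T \|J\nabla F(z_t)\|^2_{z_t}\le \KL(w^\star,w_1)\le \KL(w^\star,w_0)\le 2\log 2,
\end{equation*}
where the last inequality uses the uniform initialization and $m=n=2$. The bound on $\KL(w^\star,w_1)$ follows from monotonicity together with the $t=0$ step, which I would handle as in the lemma. Consequently, for any $\varepsilon>0$, the number of steps with $\|J\nabla F(z_t)\|^2_{z_t}\ge\varepsilon^2$ is at most $C/(\eta^2\varepsilon^2)$, uniformly in $A$ and $\delta$.

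\emph{Step 2 ($2\times2$ algebra).} Write $\Delta_t^p=(Aq_t)(1)-(Aq_t)(2)$ and $\Delta_t^q=(A^\top p_t)(1)-(A^\top p_t)(2)$. The primal characterization of dissipation gives
\begin{equation*}
\|J\nabla F(z_t)\|^2_{z_t}=p_t(1)p_t(2)(\Delta^p_t)^2+q_t(1)q_t(2)(\Delta_t^q)^2.
\end{equation*}
Since $\max_i v(i)-\langle p,v\rangle=p(i')\,|v(1)-v(2)|$, where $i'$ is the non-maximizing coordinate, the duality gap splits as
\begin{equation*}
\DG(w_t)=p_t(i_t)\,|\Delta^p_t|+q_t(j_t)\,|\Delta^q_t|,
\end{equation*}
where $p_t(i_t)$ is the mass the min player puts on its currently worse action, and similarly $q_t(j_t)$ for the max player. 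If that mass is the smaller coordinate, then $p_t(i_t)|\Delta_t^p|\le\sqrt{2p_t(1)p_t(2)}\,|\Delta_t^p|$, which is controlled by Step 1. I call such a player ``well-aligned''. If both players are well-aligned at step $t$, then $\DG(w_t)\le 2\|J\nabla F(z_t)\|_{z_t}$. Otherwise a player puts most of its mass on its worse action, and then only $\DG\lesssim \sqrt{\mathrm{var}/w_{t,\min}}$ is available.

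\emph{Step 3 (misaligned phases; main obstacle).} The task is to show that misaligned steps with non-negligible $|\Delta|$ cannot fill the whole horizon, at a cost of only $\log T$. Take the case where $p_t(i_t)\ge 1/2$ and $|\Delta_t^p|\ge\varepsilon$. By~\eqref{eq:omwu-dual}, the dual difference $x_t(1)-x_t(2)$ then moves by roughly $2\eta\Delta_t^p$ per step toward the better action. Since $A q$ changes by $O(\eta)$ per step (Proposition~\ref{prop:omwu-dual-diff}), the sign of $\Delta^p$ stays fixed over the phase. Two sub-cases follow. If $p_{t,\min}\ge 1/T$, then $\DG\le \sqrt{2T\cdot\mathrm{var}}$ is usable, though it is too weak on its own. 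If $p_{t,\min}<1/T$, the current mass on the wrong action is large, yet the mass on the good action is at least $1/T$. The first thing I would try is to bound the length of a maximal misaligned phase by $O(\log T/(\eta\varepsilon))$: the dual gap $|x(1)-x(2)|$ is at most $\log T$ plus the amount it can still grow, and it shrinks at rate at least $\eta\varepsilon$. Within the same phase, the iterate at which $p(i_t)$ first drops below $1/2$ is well-aligned. A phase therefore either ends at a well-aligned step, where Step 2 applies, or runs for its full length. I then need to rule out $p_{t,\min}\ll 1/T$ at the start of a phase. I would do this via the slow dual motion: with $O(\eta)$ step size, reaching $p_{\min}\le e^{-K}$ from the uniform start takes $\Omega(K/\eta)$ steps, so within $T$ steps $\log(1/w_{t,\min})=O(\eta T)$. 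This only gives $\log(1/w_{t,\min})=O(T)$, so the claimed $\log T$ factor must come from a sharper argument, most likely a Step~1-style energy count restricted to phases. Making this bookkeeping precise is the crux.

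\emph{Step 4 (choose $\varepsilon$).} Set $\varepsilon=c\,T^{-1/2}\sqrt{\log T}$. Step 1 leaves at most $O(T/\log T)$ steps with dissipation at least $\varepsilon^2$. Step 3 bounds the misaligned steps, so some $t\in[T]$ is either well-aligned with small dissipation or has $|\Delta^p_t|,|\Delta^q_t|\le\varepsilon$. In both cases $\DG(w_t)=O(T^{-1/2}\sqrt{\log T})$, uniformly over $A\in[-1,1]^{2\times2}$.
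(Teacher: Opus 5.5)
Your proposal does not yet prove the theorem. Step~3 is left open, and the open part cannot be closed as set up. On the instances that determine the rate, the claim Step~3 needs is false. In addition, the bound from Step~2 is too lossy at exactly the iterates whose gap is small.

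To see this, work on the canonical matrix $A_{\delta,\delta}$ (Definition~\ref{def:A-2x2}; by Proposition~\ref{prop:A-properties}, every non-constant $2\times 2$ game with an interior equilibrium is an affine rescaling of some $A_{\delta_p,\delta_q}$). There $\Delta^p_t = q_t(1)-(1-\delta)$ and $\Delta^q_t = p_t(1)-(1-\delta)$; take $\delta\gg\varepsilon$. Tracking the orbit with Proposition~\ref{prop:omwu-pd-2x2} (as in Lemma~\ref{lemma:2x2-drift-bounds-unified}) suggests the following picture, from the uniform start and while $\KL(w^\star,w_t)$ is still of order one:
\begin{itemize}
\item The orbit cycles with period of order $1/(\eta\delta^2)$, up to logarithms.
\item All but an $O(\delta\log(1/\delta))$ fraction of each cycle is spent in the quadrant $p_t(1)>1-\delta$, $q_t(1)>1-\delta$. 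There $x_t(1)$ moves only $O(\eta\delta^2)$ per step, yet must travel a distance of order one, because the logit $x_t(1)/\delta$ has climbed to order $1/\delta$.
\item In that quadrant, after its first $O(1/(\eta\delta))$ steps, the min player is misaligned with $|\Delta^p_t| = \delta - q_t(2)\ge\delta/2\gg\varepsilon$, and $\DG(w_t)\gtrsim\delta$.
\end{itemize}
So misaligned steps with non-negligible $|\Delta|$ \emph{do} fill almost the entire horizon. The fully aligned steps form only an $O(\delta\log(1/\delta))$ fraction, and each has dissipation $\gtrsim\delta^3\gg\varepsilon^2$. The counting in Steps~1 and~4 therefore forces nothing.

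Your phase-length bound $O(\log(1/w_{\min})/(\eta\varepsilon))$ has the right form. However, $\log(1/w_{t,\min})$ genuinely reaches order $1/\delta$ along this orbit (consistent with Lemma~\ref{lemma:2x2-max-coord-bound}), so no sharper bookkeeping turns it into $\log T$.

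The iterates with gap $O(\delta^2)$ in such a cycle sit near the corner $p_t(1)\ge 1-\delta$, $1-3\delta\le q_t(1)\le 1-\delta$. There the exact piecewise-linear formula~\eqref{eq:dg-2x2-full} gives $\DG(w_t)\le 3\delta^2$. At the same time, $\|J\nabla F(z_t)\|^2_{z_t}\ge q_t(1)q_t(2)(\Delta^q_t)^2\asymp\delta^3$, since on these cycles $p_t(2)\ll\delta$ at the corner. Hence $\DG\le 2\|J\nabla F(z_t)\|_{z_t}$ overshoots by a factor $q_{t,\min}^{-1/2}$ and gives only $\delta^{3/2}$ instead of $\delta^2$. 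When $\delta^2\asymp\sqrt{\log T/T}$, this misses the target, and nothing in your plan shows the orbit has contracted by then.

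The missing ingredients are those of the paper's proof.
\begin{itemize}
\item Reduce to $A_{\delta_p,\delta_q}$ with a rescaled stepsize (Propositions~\ref{prop:2x2-suffices} and~\ref{prop:2x2-dg-compare}).
\item Moderate $T$: explicit trajectory control (Lemma~\ref{lemma:2x2-epoch-bounds}) shows that every window of $\tau=40/(\eta\delta_p\delta_q)$ steps contains such a corner iterate. Its gap is read off directly from the formula, not through the dissipation. This covers $L/(\eta\delta_p\delta_q)\le T\lesssim \log(2/V)/(\eta\delta_p\delta_q)^2$ and gives a bound $O(\delta_p\delta_q)=O(\sqrt{\log T/T})$.
\item Large $T$: the same lemma provides at least $1/\eta$ steps per window with minimum coordinates at least $\delta_p/3$ and $\delta_q/3$. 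Via Corollary~\ref{cor:kl-last-improved-one-step}, this drives $\KL$ below the threshold $V$, after which the last iterate decays exponentially in $\eta\sqrt T$.
\item Small $T$: a monotonicity argument gives $O(\max\{1/(\eta T),\delta_p\})$.
\end{itemize}
The $\sqrt{\log T}$ factor comes from the $\log(2/V)$ in the boundary between the moderate and large regimes, not from a phase-length count.

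There are also two smaller slips. First, the initial step increases energy, so Step~1 should use $\KL(w^\star,w_1)\le\tfrac{5}{4}\KL(w^\star,w_0)$ (Lemma~\ref{lem:kl-init-upperbound}) rather than $\KL(w^\star,w_1)\le\KL(w^\star,w_0)$. Second, an $O(\eta)$ per-step change in $Aq_t$ does not keep the sign of $\Delta^p_t$ fixed over a phase of unbounded length.
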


See Section~\ref{app:dg-upper}
for the proof of the theorem and a full discussion. 
Notably, our proof relies on the new, universal last-iterate 
analysis in KL divergence from Theorem~\ref{thm:kl-last-unified},
along with an exact control of the OMWU trajectory 
in the $2\times 2$ setting (see Section~\ref{app:omwu-2x2}).
While we prove the result only for this low-dimensional regime,
we conjecture that the same bound holds in higher dimensions.
We expect our techniques will be useful for obtaining such 
a result, but we leave this for future work.


\section{Conclusion}
Our work obtains tight new guarantees
on the iterate-to-iterate behavior of OMWU
in two-player zero-sum games,
and it offers sharp geometric explanations 
for the algorithm's potentially slow convergence.
We believe our new energy-based analysis framework
can be naturally extended to study the
last-iterate convergence properties of other instantiations 
of Optimistic FTRL,
in other settings (such as under bandit feedback),
and for other more general classes of games.
We leave these as directions for future work. 

\medskip

\paragraph{Acknowledgements.} 
JL thanks JJ for inspiration.
AV, AB, and JL are supported by the MOE Tier 2 Grant (MOE-T2EP20223-0018),  the CQT++ Core Research Funding Grant (SUTD) (RS-NRCQT-00002), the National Research Foundation Singapore and DSO National Laboratories under the AI Singapore Programme (Award Number: AISG2-RP-2020-016), and partially by Project MIS 5154714 of the National Recovery and Resilience Plan, Greece 2.0, funded by the European Union under the NextGenerationEU Program. AW is supported by NSF awards CCF-2403391 and CAREER CCF-2443097.

\newpage
\bibliographystyle{apalike}
\bibliography{references}

\begin{thebibliography}{}

\bibitem[Abernethy et~al., 2021]{abernethy2021fast}
Abernethy, J., Lai, K.~A., and Wibisono, A. (2021).
\newblock Fast convergence of fictitious play for diagonal payoff matrices.
\newblock In {\em Proceedings of the 2021 ACM-SIAM Symposium on Discrete
  Algorithms (SODA 2021)}.

\bibitem[Anagnostides et~al., 2022a]{anagnostides2022near}
Anagnostides, I., Daskalakis, C., Farina, G., Fishelson, M., Golowich, N., and
  Sandholm, T. (2022a).
\newblock Near-optimal no-regret learning for correlated equilibria in
  multi-player general-sum games.
\newblock In {\em Proceedings of the 54th Annual ACM SIGACT Symposium on Theory
  of Computing (STOC 2022)}.

\bibitem[Anagnostides et~al., 2022b]{APFS22}
Anagnostides, I., Panageas, I., Farina, G., and Sandholm, T. (2022b).
\newblock On last-iterate convergence beyond zero-sum games.
\newblock In {\em International Conference on Machine Learning (ICML 2022)}.

\bibitem[Arora et~al., 2012]{arora2012multiplicative}
Arora, S., Hazan, E., and Kale, S. (2012).
\newblock The multiplicative weights update method: a meta-algorithm and
  applications.
\newblock {\em Theory of computing}, 8(1):121--164.

\bibitem[Azizian et~al., 2024]{azizian2024rate}
Azizian, W., Iutzeler, F., Malick, J., and Mertikopoulos, P. (2024).
\newblock The rate of convergence of bregman proximal methods: Local geometry
  versus regularity versus sharpness.
\newblock {\em SIAM Journal on Optimization}, 34(3):2440--2471.

\bibitem[Bach, 2010]{bach2010self}
Bach, F. (2010).
\newblock Self-concordant analysis for logistic regression.
\newblock {\em Electronic Journal of Statistics}, 4:384--414.

\bibitem[Bailey and Piliouras, 2019a]{bailey2019fast}
Bailey, J. and Piliouras, G. (2019a).
\newblock Fast and furious learning in zero-sum games: Vanishing regret with
  non-vanishing step sizes.
\newblock {\em Advances in Neural Information Processing Systems (NeurIPS
  2019)}.

\bibitem[Bailey et~al., 2020]{bailey2020finite}
Bailey, J.~P., Gidel, G., and Piliouras, G. (2020).
\newblock Finite regret and cycles with fixed step-size via alternating
  gradient descent-ascent.
\newblock In {\em Conference on Learning Theory (COLT 2020)}.

\bibitem[Bailey and Piliouras, 2018]{bailey2018multiplicative}
Bailey, J.~P. and Piliouras, G. (2018).
\newblock Multiplicative weights update in zero-sum games.
\newblock In {\em Proceedings of the 2018 ACM Conference on Economics and
  Computation (EC 2018)}.

\bibitem[Bailey and Piliouras, 2019b]{bailey2019multi}
Bailey, J.~P. and Piliouras, G. (2019b).
\newblock Multi-agent learning in network zero-sum games is a {H}amiltonian
  system.
\newblock In {\em Proceedings of the 18th International Conference on
  Autonomous Agents and MultiAgent Systems (AAMAS 2019)}.

\bibitem[Boyd and Vandenberghe, 2004]{boyd2004convex}
Boyd, S. and Vandenberghe, L. (2004).
\newblock {\em Convex optimization}.
\newblock Cambridge university press.

\bibitem[Cai et~al., 2024]{FGKLLZ24}
Cai, Y., Farina, G., Grand{-}Cl{\'{e}}ment, J., Kroer, C., Lee, C., Luo, H.,
  and Zheng, W. (2024).
\newblock Fast last-iterate convergence of learning in games requires forgetful
  algorithms.
\newblock In {\em Advances in Neural Information Processing Systems 38 (NeurIPS
  2024)}.

\bibitem[Cai et~al., 2025]{C25}
Cai, Y., Farina, G., Grand{-}Cl{\'{e}}ment, J., Kroer, C., Lee, C., Luo, H.,
  and Zheng, W. (2025).
\newblock On separation between best-iterate, random-iterate, and last-iterate
  convergence of learning in games.
\newblock {\em CoRR}, abs/2503.02825.

\bibitem[Cai et~al., 2022]{CZ22}
Cai, Y., Oikonomou, A., and Zheng, W. (2022).
\newblock Finite-time last-iterate convergence for learning in multi-player
  games.
\newblock In {\em Advances in Neural Information Processing Systems 35 (NeurIPS
  2022)}.

\bibitem[Cen et~al., 2023]{cen-et-al23faster}
Cen, S., Chi, Y., Du, S.~S., and Xiao, L. (2023).
\newblock Faster last-iterate convergence of policy optimization in zero-sum
  markov games.
\newblock In {\em The Eleventh International Conference on Learning
  Representations (ICLR 2023)}.

\bibitem[Cesa-Bianchi and Lugosi, 2006]{cesa2006prediction}
Cesa-Bianchi, N. and Lugosi, G. (2006).
\newblock {\em Prediction, learning, and games}.
\newblock Cambridge University Press.

\bibitem[Chen and Peng, 2020]{chen2020hedging}
Chen, X. and Peng, B. (2020).
\newblock Hedging in games: Faster convergence of external and swap regrets.
\newblock {\em Advances in Neural Information Processing Systems (NeurIPS
  2020)}.

\bibitem[Daskalakis et~al., 2021]{daskalakis2021near}
Daskalakis, C., Fishelson, M., and Golowich, N. (2021).
\newblock Near-optimal no-regret learning in general games.
\newblock {\em Advances in Neural Information Processing Systems (NeurIPS
  2021)}.

\bibitem[Daskalakis et~al., 2018]{daskalakis-et-al18training-gans}
Daskalakis, C., Ilyas, A., Syrgkanis, V., and Zeng, H. (2018).
\newblock Training {GAN}s with optimism.
\newblock In {\em International Conference on Learning Representations (ICLR
  2018)}.

\bibitem[Daskalakis and Panageas, 2019]{DaskalakisP19}
Daskalakis, C. and Panageas, I. (2019).
\newblock Last-iterate convergence: Zero-sum games and constrained min-max
  optimization.
\newblock In {\em 10th Innovations in Theoretical Computer Science Conference,
  (ITCS 2019)}.

\bibitem[Facchinei and Pang, 2003]{facchinei2003finite}
Facchinei, F. and Pang, J.-S. (2003).
\newblock {\em Finite-dimensional variational inequalities and complementarity
  problems}.
\newblock Springer.

\bibitem[Fasoulakis et~al., 2022]{fasoulakis-et-al22flbr}
Fasoulakis, M., Markakis, E., Pantazis, Y., and Varsos, C. (2022).
\newblock Forward looking best-response multiplicative weights update methods
  for bilinear zero-sum games.
\newblock In {\em International Conference on Artificial Intelligence and
  Statistics (AISTATS 2022)}.

\bibitem[Feng et~al., 2024]{feng-et-al24lastiterate}
Feng, Y., Li, P., Panageas, I., and Wang, X. (2024).
\newblock Last-iterate convergence separation between extra-gradient and
  optimism in constrained periodic games.
\newblock In {\em The 40th Conference on Uncertainty in Artificial Intelligence
  (UAI 2024)}.

\bibitem[Freund et~al., 2026]{freund2026second}
Freund, Y., Harvey, N.~J., Portella, V.~S., Qi, Y., and Wang, Y.-X. (2026).
\newblock A second order regret bound for normalhedge.
\newblock {\em arXiv preprint arXiv:2602.08151}.

\bibitem[Freund and Schapire, 1999]{freund1999adaptive}
Freund, Y. and Schapire, R.~E. (1999).
\newblock Adaptive game playing using multiplicative weights.
\newblock {\em Games and Economic Behavior}, 29(1-2):79--103.

\bibitem[Gao and Pavel, 2017]{gao-et-al17softmax}
Gao, B. and Pavel, L. (2017).
\newblock On the properties of the softmax function with application in game
  theory and reinforcement learning.
\newblock {\em arXiv preprint arXiv:1704.00805}.

\bibitem[Gidel et~al., 2019]{gidel2018a}
Gidel, G., Berard, H., Vignoud, G., Vincent, P., and Lacoste-Julien, S. (2019).
\newblock A variational inequality perspective on generative adversarial
  networks.
\newblock In {\em International Conference on Learning Representations (ICLR
  2019)}.

\bibitem[Golowich et~al., 2020]{golowich2020tight}
Golowich, N., Pattathil, S., and Daskalakis, C. (2020).
\newblock Tight last-iterate convergence rates for no-regret learning in
  multi-player games.
\newblock {\em Advances in Neural Information Processing Systems (NeurIPS
  2020)}.

\bibitem[Gorbunov et~al., 2022]{gorbunov-et-al22mvi}
Gorbunov, E., Taylor, A., and Gidel, G. (2022).
\newblock Last-iterate convergence of optimistic gradient method for monotone
  variational inequalities.
\newblock {\em Advances in Neural Information Processing Systems (NeurIPS
  2022)}.

\bibitem[Katona et~al., 2026]{katona2024symplectic}
Katona, J.~E., Wang, X., and Wibisono, A. (2026).
\newblock A symplectic analysis of alternating mirror descent.
\newblock {\em Journal of Machine Learning Research}, 27(44):1--61.

\bibitem[Korpelevich, 1976]{korpelevich1976extragradient}
Korpelevich, G.~M. (1976).
\newblock The extragradient method for finding saddle points and other
  problems.
\newblock {\em Matecon}, 12:747--756.

\bibitem[Lazarsfeld et~al., 2025a]{lazarsfeld2025optimism}
Lazarsfeld, J., Piliouras, G., Sim, R., and Skoulakis, S. (2025a).
\newblock Optimism without regularization: Constant regret in zero-sum games.
\newblock In {\em The Thirty-ninth Annual Conference on Neural Information
  Processing Systems (NeurIPS 2025)}.

\bibitem[Lazarsfeld et~al., 2025b]{lazarsfeld2025fp}
Lazarsfeld, J., Piliouras, G., Sim, R., and Wibisono, A. (2025b).
\newblock Fast and furious symmetric learning in zero-sum games: Gradient
  descent as fictitious play.
\newblock {\em Conference on Learning Theory (COLT 2025)}.

\bibitem[Lee et~al., 2021]{lee2021last}
Lee, C.-W., Kroer, C., and Luo, H. (2021).
\newblock Last-iterate convergence in extensive-form games.
\newblock {\em Advances in Neural Information Processing Systems (NeurIPS
  2021)}.

\bibitem[Legacci et~al., 2024]{legacci2024no}
Legacci, D., Mertikopoulos, P., Papadimitriou, C., Piliouras, G., and
  Pradelski, B. (2024).
\newblock No-regret learning in harmonic games: Extrapolation in the face of
  conflicting interests.
\newblock {\em Advances in Neural Information Processing Systems (NeurIPS
  2024)}.

\bibitem[Lei et~al., 2021]{lei-et-al21}
Lei, Q., Nagarajan, S.~G., Panageas, I., et~al. (2021).
\newblock Last iterate convergence in no-regret learning: constrained min-max
  optimization for convex-concave landscapes.
\newblock In {\em International Conference on Artificial Intelligence and
  Statistics (AISTATS 2021)}.

\bibitem[Liang and Stokes, 2019]{liang2019interaction}
Liang, T. and Stokes, J. (2019).
\newblock Interaction matters: A note on non-asymptotic local convergence of
  generative adversarial networks.
\newblock In {\em The 22nd International Conference on Artificial Intelligence
  and Statistics (AISTATS 2019)}.

\bibitem[Mertikopoulos et~al., 2019]{mertikopoulos-et-al18}
Mertikopoulos, P., Lecouat, B., Zenati, H., Foo, C.-S., Chandrasekhar, V., and
  Piliouras, G. (2019).
\newblock Optimistic mirror descent in saddle-point problems: Going the
  extra(-gradient) mile.
\newblock In {\em International Conference on Learning Representations (ICLR
  2019)}.

\bibitem[Mertikopoulos et~al., 2018]{mertikopoulos2018cycles}
Mertikopoulos, P., Papadimitriou, C., and Piliouras, G. (2018).
\newblock Cycles in adversarial regularized learning.
\newblock In {\em Proceedings of the twenty-ninth annual ACM-SIAM Symposium on
  Discrete Algorithms (SODA 2018)}.

\bibitem[Mokhtari et~al., 2020]{mokhtari-et-al20unified}
Mokhtari, A., Ozdaglar, A., and Pattathil, S. (2020).
\newblock A unified analysis of extra-gradient and optimistic gradient methods
  for saddle point problems: Proximal point approach.
\newblock In {\em International Conference on Artificial Intelligence and
  Statistics (AISTATS 2020)}.

\bibitem[Munos et~al., 2024]{munos2024nash}
Munos, R., Valko, M., Calandriello, D., Gheshlaghi~Azar, M., Rowland, M., Guo,
  Z.~D., Tang, Y., Geist, M., Mesnard, T., Fiegel, C., Michi, A., Selvi, M.,
  Girgin, S., Momchev, N., Bachem, O., Mankowitz, D.~J., Precup, D., and Piot,
  B. (2024).
\newblock {N}ash learning from human feedback.
\newblock In {\em Proceedings of the 41st International Conference on Machine
  Learning (ICML 2024)}.

\bibitem[Nemirovski, 2004]{nemirovski2004prox}
Nemirovski, A. (2004).
\newblock Prox-method with rate of convergence o (1/t) for variational
  inequalities with lipschitz continuous monotone operators and smooth
  convex-concave saddle point problems.
\newblock {\em SIAM Journal on Optimization}, 15(1):229--251.

\bibitem[Nesterov, 2007]{nesterov2007dual}
Nesterov, Y. (2007).
\newblock Dual extrapolation and its applications to solving variational
  inequalities and related problems.
\newblock {\em Mathematical Programming}, 109(2):319--344.

\bibitem[Nocedal and Wright, 2006]{nocedal2006numerical}
Nocedal, J. and Wright, S.~J. (2006).
\newblock {\em Numerical optimization}.
\newblock Springer.

\bibitem[Ota and Fujimoto, 2025]{ota-et-al25hamiltonian}
Ota, T. and Fujimoto, Y. (2025).
\newblock Hamiltonian of polymatrix zero-sum games.
\newblock {\em arXiv preprint arXiv:2505.12609}.

\bibitem[Popov, 1980]{popov1980modification}
Popov, L.~D. (1980).
\newblock A modification of the arrow-hurwicz method for search of saddle
  points.
\newblock {\em Mathematical notes of the Academy of Sciences of the USSR},
  28(5):845--848.

\bibitem[Rakhlin and Sridharan, 2013]{rakhlin2013optimization}
Rakhlin, S. and Sridharan, K. (2013).
\newblock Optimization, learning, and games with predictable sequences.
\newblock {\em Advances in Neural Information Processing Systems (NeurIPS
  2013)}.

\bibitem[Shalev-Shwartz et~al., 2012]{shalev2012online}
Shalev-Shwartz, S. et~al. (2012).
\newblock Online learning and online convex optimization.
\newblock {\em Foundations and Trends{\textregistered} in Machine Learning},
  4(2):107--194.

\bibitem[Soleymani et~al., 2025]{soleymani2025faster}
Soleymani, A., Piliouras, G., and Farina, G. (2025).
\newblock Faster rates for no-regret learning in general games via cautious
  optimism.
\newblock In {\em Proceedings of the 57th Annual ACM Symposium on Theory of
  Computing (STOC 2025)}.

\bibitem[Sun and Tran-Dinh, 2019]{sun2019generalized}
Sun, T. and Tran-Dinh, Q. (2019).
\newblock Generalized self-concordant functions: a recipe for newton-type
  methods.
\newblock {\em Mathematical Programming}, 178(1):145--213.

\bibitem[Syrgkanis et~al., 2015]{syrgkanis2015fast}
Syrgkanis, V., Agarwal, A., Luo, H., and Schapire, R.~E. (2015).
\newblock Fast convergence of regularized learning in games.
\newblock {\em Advances in Neural Information Processing Systems (NeurIPS
  2015)}.

\bibitem[Tiapkin et~al., 2026]{tiapkin-et-al26pp-nlhf}
Tiapkin, D., Calandriello, D., Belomestny, D., Moulines, E., Naumov, A., Rasul,
  K., Valko, M., and Menard, P. (2026).
\newblock Proximal point nash learning from human feedback.
\newblock {\em arXiv preprint arXiv:2505.19731}.

\bibitem[Tran-Dinh et~al., 2015]{tran2015composite}
Tran-Dinh, Q., Li, Y.-H., and Cevher, V. (2015).
\newblock Composite convex minimization involving self-concordant-like cost
  functions.
\newblock In {\em Proceedings of the 3rd International Conference on Modelling,
  Computation and Optimization in Information Systems and Management Sciences
  (MCO 2015)}.

\bibitem[von Neumann, 1928]{v1928theorie}
von Neumann, J. (1928).
\newblock Zur theorie der gesellschaftsspiele.
\newblock {\em Mathematische Annalen}, 100(1):295--320.

\bibitem[Wang et~al., 2026]{wang2026lastiterate}
Wang, G., Acharya, K., Lakshmikanthan, L., Ziani, J., and Muthukumar, V.
  (2026).
\newblock Last-iterate convergence for symmetric, general-sum, 2x2 games under
  the exponential weights dynamic.
\newblock In {\em 37th International Conference on Algorithmic Learning Theory
  (ALT 2026)}.

\bibitem[Wei et~al., 2021]{wei2021linear}
Wei, C.-Y., Lee, C.-W., Zhang, M., and Luo, H. (2021).
\newblock Linear last-iterate convergence in constrained saddle-point
  optimization.
\newblock In {\em International Conference on Learning Representations (ICLR
  2021)}.

\bibitem[Wibisono et~al., 2022]{wibisono2022alternating}
Wibisono, A., Tao, M., and Piliouras, G. (2022).
\newblock Alternating mirror descent for constrained min-max games.
\newblock {\em Advances in Neural Information Processing Systems (NeurIPS
  2022)}.

\end{thebibliography}

\newpage
\appendix

\setcounter{tocdepth}{1}
\renewcommand*\contentsname{Table of Contents}
{\setlength{\parskip}{-0.7em}\tableofcontents}


\section{Other Related Work}
\label{sec:related-work}

In this section, we give additional discussion on other related works.

\paragraph{Optimistic algorithms for min-max optimization and
variational inequalities.}
The OMWU algorithm is more broadly related to 
optimistic and extra-gradient methods
in monotone 
variational inequalities~\citep{facchinei2003finite}.
Such algorithms date back to the 1970s and
80s~\citep{korpelevich1976extragradient, popov1980modification},
and within the past decade have received growing theoretical interest,
in part due to their applications in machine learning settings
(see, e.g.,~\cite{nemirovski2004prox},~\cite{nesterov2007dual},
~\cite{daskalakis-et-al18training-gans},
~\cite{gidel2018a},~\cite{mertikopoulos-et-al18},
~\cite{liang2019interaction},
~\citet{mokhtari-et-al20unified}, 
~\cite{golowich2020tight},~\cite{gorbunov-et-al22mvi},
~\cite{CZ22}, and references therein).

\paragraph{Optimistic MWU for online learning and learning in games.}
The Optimistic Mirror Descent and Optimistic Follow-the-Regularized-Leader 
families for adversarial online learning, 
of which OMWU is the canonical instantiation with negative 
entropy regularizer, were introduced by~\cite{rakhlin2013optimization}.
Since then, OMWU has emerged as a premier algorithm
for online learning in games, and it has been proven
to achieve optimal or near-optimal regret bounds
(which correspond to time-average convergence guarantees)
in many classes of games, including
zero-sum and general-sum normal-form games
(see, e.g.,~\cite{syrgkanis2015fast, chen2020hedging, 
daskalakis2021near, anagnostides2022near, soleymani2025faster}, 
and references therein). 

\paragraph{Last-iterate Convergence of OMWU in zero-sum games.}
The asymptotic last-iterate convergence of OMWU in two-player 
bilinear zero-sum games was first established 
by~\cite{DaskalakisP19} on games with a unique Nash equilibrium,
and \cite{lei-et-al21} generalized this result to hold 
also for convex-concave zero-sum games. 
\cite{wei2021linear} later obtained a quantitative (universal) linear 
last-iterate convergence rate in the bilinear case. 
Unlike Optimstic Gradient-Descent-Ascent (OGDA),
which has a uniform $O(T^{-1/2})$ last-iterate convergence 
rate in this setting~\citep{gorbunov-et-al22mvi,CZ22},
OMWU was shown by~\cite{FGKLLZ24} to have no such 
uniform last-iterate convergence rate  in duality gap
(i.e., that does not depend on any game-dependent 
constant such as the minimum non-zero Nash equilibrium mass).
\cite{FGKLLZ24} attributed this behavior of OMWU 
to a notion they  call \textit{forgetfulness}, as the algorithm's update rule 
(as is the case for all OFTRL instantiations) inherently depends 
on the accumulation of all previous gradients. 
Our paper reinforces the main takeaways of their result: 
the last-iterate of OMWU (and, under some measures of distance to Nash,
the best-iterate convergence, 
c.f.~Theorem~\ref{thm:uniform-lb-main})
can be arbitrarily slow. However, our results also give 
new and more precise quantitative explanations for this slow
convergence that are related to geometric properties
of the algorithm. 

\paragraph{Energy-based perspective for learning in games.}
Several prior works have also relied on using
a dual or energy-based perspective for analyzing 
the convergence of various algorithms for learning in games.
See, e.g.,~\cite{mertikopoulos2018cycles},~\cite{bailey2018multiplicative},
~\cite{bailey2019fast},~\cite{bailey2019multi},~\cite{bailey2020finite},
\cite{abernethy2021fast},~\cite{wibisono2022alternating}, 
~\cite{azizian2024rate},
~\cite{legacci2024no},~\cite{ota-et-al25hamiltonian},~\cite{lazarsfeld2025fp},
\cite{lazarsfeld2025optimism}~\cite{katona2024symplectic},
and the references therein. However, our work is the 
first to prove quantitative last-iterate 
convergence rates for the 
bilinear zero-sum setting by proving bounds
on \textit{strictly dissipating} energy.

\paragraph{Other related works on learning in games.}
The extra-gradient variant of MWU has been studied in 
bilinear zero-sum games and proven to
have asymptotic last-iterate convergence~\citep{fasoulakis-et-al22flbr}.
In the unconstained min-max setting, optimistic 
and extra-gradient methods have also been studied
in time-varying and periodic games, where~\cite{feng-et-al24lastiterate}
established a separation between the convergence guarantees 
of the extra-gradient and optimistic variants. 
\cite{lee2021last} proved last-iterate convergence
for a variant of OMWU in extensive-form games. 
\cite{APFS22} proved a $O(T^{-1/2})$ best-iterage convergence 
rate in duality gap using Optimistic Mirror Descent on games with non-negative
sums of regret (which is satisfied by zero-sum games).
However, this result assumes a smoothness condition
on the regularizer that is not satisfied by the negative entropy function.
Thus, the $\widetilde O(T^{-1/2})$ best-iterate convergence
rate in duality gap that we prove in
Theorem~\ref{thm:dg-best-2x2} for the $2 \times 2$ setting
is the fastest-known best-iterate result for OMWU. 
Moreover, this result also adds to a growing body of work
that has established various convergence results for algorithms
in the setting of $2\times 2$ games, which often still requires
intricate proof techniques despite the low dimension
\citep{bailey2019fast,chen2020hedging,FGKLLZ24,
lazarsfeld2025optimism,C25, wang2026lastiterate}.

\section{Preliminaries on Zero-Sum Games}
\label{app:zsg-prelims}

This Section gives additional details on
the preliminaries of zero-sum games introduced 
in Section~\ref{sec:prelims}.

\paragraph{Oragnization of Section.}
This section is organized as follows:
\begin{itemize}[
	leftmargin=1em
]
\item
\textbf{Section~\ref{app:zsg-prelims:nash}}
recalls several important properties
of zero-sum games with an interior NE.

\item
\textbf{Section~\ref{app:zsg-prelims:distances}}
derives the relationships between
distances to Nash introduced in expression~\eqref{eq:ne-distances}.

\item
\textbf{Section~\ref{app:zsg-prelims:spectral}}
states and proves two useful properties related
to the spectrum of the payoff matrix $A$.
\end{itemize}

The claims of this section are mostly standard
but included for completeness.

\subsection{Properties of Zero-Sum Games with Interior Equilibria}
\label{app:zsg-prelims:nash}

Recall that for $A \in \R^{m \times n}$ and $\eps \ge 0$, 
an \textit{$\eps$-approximate Nash equilibrium (NE)}
$w^\star = (p^\star, q^\star) \in \calW$ for the 
min-max game $\min_{p \in \Delta_m} \max_{q \in \Delta_n} \langle p, A q \rangle$ 
satisfies the inequalities (see also~\eqref{eq:nash-prelims})
\begin{equation}
	\textstyle
	\max_{q \in \Delta_m} \, \langle q, A^\top p^\star \rangle
	- \eps
	\le 
	\langle p^\star, A q^\star \rangle 
	\le
	\min_{p \in \Delta_n} \, \langle p, A q^\star \rangle + \eps \;.
	\label{eq:nash-app}
\end{equation}
When $\eps = 0$, we say $w^\star$ is a Nash equlibrium.

If a game has an interior Nash equilibrium,
then the following key properties hold: 

\begin{restatable}{prop}{propinteriornash}
	\label{prop:interior-NE}
	Fix $A \in \R^{m \times n}$. Then the following properties hold:
	\begin{enumerate}[
		label={(\roman*)},
	]
	\item
		Suppose $w^\star = (p^\star, q^\star) \in \relint(\calW)$
		is an interior NE of $A$ with 
		$\langle p^\star, A q^\star \rangle = c\in \R$.
		Then
		$A q^\star = c \1_m$
		and $A^\top p^\star = c\1_n$.
 	\item	
 		Fix $w = (p, q) \in \calW$.
 		Suppose $A q = d \1_m$ 
 		and $A^\top p = d' \1_n$ for $d, d' \in \R$.
 		Then $d = d'$ and 
 		$w$ is a Nash equilibrium of $A$.
	\end{enumerate}
\end{restatable}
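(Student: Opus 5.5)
Both parts follow directly from the Nash inequalities~\eqref{eq:nash-app} with $\eps=0$, together with the fact that a linear functional on the simplex attains its optimum at a vertex.

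For part (i), let $w^\star=(p^\star,q^\star)$ be an interior NE with value $c=\langle p^\star, Aq^\star\rangle$. The right inequality of~\eqref{eq:nash-app} reads $c \le \min_{p\in\Delta_m}\langle p, Aq^\star\rangle$. Testing it at each vertex $e_i$ gives $(Aq^\star)(i)\ge c$ for all $i\in[m]$. On the other hand,
\[
\textstyle c = \sum_i p^\star(i)\,(Aq^\star)(i)
\]
is a convex combination of these coordinates with weights $p^\star(i)>0$. So the nonnegative quantities $(Aq^\star)(i)-c$ average to zero under strictly positive weights, and each must vanish; hence $Aq^\star=c\1_m$. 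Symmetrically, the left inequality gives $\max_{q}\langle q, A^\top p^\star\rangle\le c$, so $(A^\top p^\star)(j)\le c$ for every $j$. Since $c=\sum_j q^\star(j)(A^\top p^\star)(j)$ with $q^\star(j)>0$, we get $A^\top p^\star = c\1_n$. Only strict positivity of the weights is used, so there is no real obstacle here.

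For part (ii), suppose $Aq=d\1_m$ and $A^\top p=d'\1_n$ with $p\in\Delta_m$, $q\in\Delta_n$. Computing the value two ways gives
\[
\langle p, Aq\rangle = d\langle p,\1_m\rangle = d
\quad\text{and}\quad
\langle p, Aq\rangle = \langle q, A^\top p\rangle = d'\langle q,\1_n\rangle = d',
\]
so $d=d'$. Next, $\min_{p'\in\Delta_m}\langle p', Aq\rangle = d = \langle p,Aq\rangle$, since every $p'$ in the simplex yields exactly $d$. Likewise $\max_{q'}\langle q',A^\top p\rangle = d' = \langle p, Aq\rangle$. Thus both inequalities of~\eqref{eq:nash-app} hold with $\eps=0$, and $w$ is a Nash equilibrium.

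The only subtlety is bookkeeping of signs. The statement writes $A^\top p^\star = c\1_n$, whereas the main text uses $J$ with the block $-A^\top$ and writes $-c\1_n$. I would follow the proposition's own convention, in which the maximizing player's payoff vector is $A^\top p$. I would also note that the index sets in~\eqref{eq:nash-app} appear swapped ($q'\in\Delta_m$, $p'\in\Delta_n$); this is a typographical issue, and I would read them as $q\in\Delta_n$ and $p\in\Delta_m$.
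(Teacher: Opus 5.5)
Your proof is correct and follows the paper's argument: testing the Nash inequality at vertices gives $(Aq^\star)(i)\ge c$, interiority then forces equality, and part (ii) is the same two-way evaluation of $\langle p, Aq\rangle$. Your averaging step with strictly positive weights is a cleaner version of the paper's mass-shifting contradiction. The sign convention is not actually a conflict: $Jw^\star=(c\1_m,-c\1_n)$ says $-A^\top p^\star=-c\1_n$, which is the same as $A^\top p^\star=c\1_n$.
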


\begin{proof} 
	We prove the two claims separately:

	\noindent
	\textbf{Claim (i)}:
	By the definition of an NE from~\eqref{eq:nash-app},
	and since $\langle p^\star, A q^\star \rangle = c$,
	we must have
	\begin{equation}
		\textstyle
		c = \langle p^\star, A q^\star \rangle 
		\le \min_{p \in \Delta_m}  \, \langle p, A q^\star \rangle  \;.
		\label{eq:neintprop}
	\end{equation}
	We will first show that 
	both $(Aq^\star)(i) \ge c$ and $(Aq^\star)(i) \le c$ for all $i \in [m]$.
	For the first direction, suppose by way of contradiction that
	there exists $j \in [m]$ such that $(Aq^\star)(j) < c$.
	Then construct $p' \in \Delta_m$ as follows:
	set $p'(j) = 1$, and $p'(i) = 0$ for all other $i \neq j \in [m]$.
	Then observe that 
	\begin{equation*}
		\textstyle
		\min_{p  \in \Delta_m}
		\langle p, Aq^\star \rangle 
		\le \langle p', A q^\star \rangle 
		= \sum_{k=1}^m p'(k) \cdot (Aq^\star)(k)
		= (Aq^\star)(j) < c = \langle p^\star, A q^\star \rangle \;.
	\end{equation*}
	However, this contradicts the fact that
	$\langle p^\star, A q^\star \rangle 
	\le \min_{p \in \Delta_m} \langle p, Aq^\star \rangle$
	from~\eqref{eq:neintprop}. 
	Thus we must have $(Aq^\star)(i) \ge c$
	for all coordinates $i \in [m]$.

	For the other direction, suppose by contradiction
	that there exists $j \in [m]$ where
	$(Aq^\star)(j) > c$. Note that if $(Aq^\star)(j) > c$
	for \textit{all} coordinates $j \in [m]$,
	then clearly $\langle p^\star, Aq^\star \rangle > c$,
	which contradicts~\eqref{eq:neintprop}.
	Thus suppose instead there exists coordinate
	$i \neq j \in [m]$ with $(Aq^\star)(i) \le c$.
	Now construct $p' \in \Delta_m$ as follows:
	set $p'(k) = p^\star(k)$ for 
	all $k \neq i \neq j$, 
	set $p'(j) = 0$, and set $p'(i) = p^\star(i)+p^\star(j)$.
	Observe by construction that this implies
	\begin{equation*}
		\textstyle
		p'(j) \big((Aq^\star)(j)\big) + p'(i) \big( (Aq^\star)(i)\big)
		< 
		p^\star(j) \big(Aq^\star(j)\big) + p^\star(i) \big(Aq^\star)(i)\big) \;.
	\end{equation*}
	Then this further implies by construction that
	\begin{equation*}
		\textstyle
		\min_{p \in \Delta_m} \langle p, Aq^\star \rangle
		\le
		\langle p', Aq^\star \rangle
		< 
		\langle p^\star, A q^\star \rangle
		= c \;,
	\end{equation*}
	which contradicts~\eqref{eq:neintprop}.
	Thus we also have $(Aq^\star)(i) \le c$
	for all coordinates $i \in [m]$,
	and thus $Aq^\star = c\1_m$.
	Repeating identical arguments, we also have
	$A^\top p^\star = c\1_n$, which completes the claim. 

	\smallskip 

	\noindent
	\textbf{Claim (ii).}
	By the assumptions of the claim that 
	$Aq = d\1_m$ and $A^\top p = d' \1_n$, 
	observe for any $u \in \Delta_m$ that 
	$\langle u, Aq \rangle = \langle u, d\1_m \rangle = d$,
	and for any $v \in \Delta_n$ that 
	$\langle v, A^\top p \rangle = \langle v, d' \1_n \rangle = d'$.
	Then since $\langle p, A q \rangle = \langle q, A^\top p \rangle$
	and both $p \in \Delta_m$ and $q \in \Delta_n$,
	we must have $d= d' = \langle p, A q \rangle$.
	To further show that $(p, q)$ is a Nash equilibrium,
	observe that since $Aq = d \1_m$
	and $A^\top p = d \1_n$, then
	\begin{equation*}
		\textstyle
		\max_{q' \in \Delta_n} 
		\langle q', A^\top p \rangle
		= 
		\langle q, A p \rangle 
		=
		\min_{p' \in \Delta_m}
		\langle p', A q \rangle \;.
	\end{equation*}
	Thus by definition in~\eqref{eq:nash-app},
	$w = (p, q)$ is a Nash equilibrium of $A$.
\end{proof}

\subsection{Distances to Nash Equilibrium}
\label{app:zsg-prelims:distances}

Recall from Section~\ref{sec:prelims} that we 
consider several measures of distance to Nash;
duality gap ($\DG$), total variation distance ($\TV$),
and KL divergence ($\KL$). 
We first restate these definitions.
Let $w^\star = (p^\star, q^\star) \in \relint(\calW)$
be a Nash equilibrium of $A$. 
Then for any $w = (p,q) \in \calW$, we define:
\begin{itemize}[
	leftmargin=1em,
]
	\item
	\textbf{Duality gap (DG)}:
	$\DG(w) =
	 \max_{q' \in \Delta_m} \langle q', A^\top p\rangle
	 - \min_{p' \in \Delta_n} \langle p', Aq \rangle$.

	\item
	\textbf{Kullback-Leibler divergence (KL)}:
	Define the components
	$\KL(p^\star, p) = \sum_{i=1}^m p^\star(i) 
	\log \big(p^\star(i)/p(i)\big)$ and 
	$\KL(q^\star, q) = \sum_{j=1}^n q^\star(j) 
	\log \big(q^\star(j)/q(j)\big)$.
	Then $\KL(w^\star, w) = \KL(p^\star, p) + \KL(q^\star, q)$.

	\item
	\textbf{Total variation distance (TV)}:
	Define the components $\TV(p^\star,p) = \frac{1}{2}\|p^\star - p\|_1$
	and $\TV(q^\star, q) = \frac{1}{2} \|q^\star - q\|_1$.
	Then $\TV(w^\star, w) = \TV(p^\star, p) + \TV(q^\star, q)$.
\end{itemize}
We also define the chi-squared divergence
$\chi^2(w^\star, w)$ as follows:
\begin{itemize}[
	leftmargin=1em,
]
	\item
	\textbf{Chi-squared divergence:}
	First define the components
	$\chi^2(p^\star, p) = \sum_{i=1}^m \frac{(p^\star(i)- p(i))^2}{p(i)}$
	and 
	$\chi^2(q^\star, q) = \sum_{j=1}^n \frac{(q^\star(j)- q(j))^2}{q(j)}$.
	Then $\chi^2(w^\star, w) = \chi^2(p^\star, p) + \chi^2(q^\star, q)$.

\end{itemize}

\subsubsection{Bounded Duality Gap Implies Approximate Nash Equilibrium}

A standard fact is that a bounded duality gap implies 
an approximate Nash equilibrium. Formally:

\begin{restatable}[Small DG implies Nash]
	{prop}{propdgnash}
	\label{prop:dg-nash}
	Let $A \in \R^{m \times n}$ be a zero-sum game.
	Fix $w = (p, q) \in \calW$ and $\eps \ge 0$.
	If $DG(w) \le \eps$, then $w$ is an $\eps$-approximate Nash
	equilibrium of $A$.
\end{restatable}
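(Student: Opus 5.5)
The plan is to unpack both definitions and compare them directly. Write $v := \langle p, Aq\rangle$, $M := \max_{q'} \langle q', A^\top p\rangle$ and $m := \min_{p'} \langle p', Aq\rangle$, so that $\DG(w) = M - m$. The $\eps$-approximate Nash condition~\eqref{eq:nash-app}, applied at $w$, is the pair of inequalities $M - \eps \le v$ and $v \le m + \eps$. Each of these will follow from $M - m \le \eps$ once we know that $v$ lies between $m$ and $M$.

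\textbf{Step 1: sandwich $v$ between $m$ and $M$.}
\begin{itemize}
\item For the lower bound, $p$ is itself a feasible choice in the minimization over $p'$, so $m \le \langle p, Aq\rangle = v$.
\item For the upper bound, $q$ is a feasible choice in the maximization over $q'$, so $M \ge \langle q, A^\top p\rangle = \langle p, Aq\rangle = v$, using $\langle q, A^\top p\rangle = \langle p, Aq\rangle$.
\end{itemize}
Hence $m \le v \le M$.

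\textbf{Step 2: conclude.} Using $v \ge m$ and $M - m \le \eps$ gives $M - \eps \le m \le v$. Using $v \le M$ and $M - m \le \eps$ gives $v \le M \le m + \eps$. These are exactly the two inequalities of~\eqref{eq:nash-app} at $w$, so $w$ is an $\eps$-approximate Nash equilibrium.

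The argument has no real obstacle. The only care needed is with the index labels in the definitions: the paper writes $q' \in \Delta_m$ and $p' \in \Delta_n$, which should read $q' \in \Delta_n$ and $p' \in \Delta_m$ so that the products are well-defined. I will read them that way, and then $p$ and $q$ are feasible in the respective optimizations as used in Step 1.
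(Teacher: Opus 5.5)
Your proof is correct and is essentially the paper's argument. The paper writes $\DG(w)$ as the sum of the two nonnegative gaps $M - v$ and $v - m$ and bounds each by $\eps$; this is exactly your sandwich $m \le v \le M$ combined with $M - m \le \eps$. Your reading of the index typo ($q' \in \Delta_n$, $p' \in \Delta_m$) is also the intended one.
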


\begin{proof}
	Using the definition of duality gap and adding 
	and subtracting 
	$\langle p, Aq \rangle = \langle q, A^\top p \rangle$, 
	we can write and expand
	\begin{align*}
		\DG(w)
		&= 
		\max\nolimits_{q' \in \Delta_m}
		\langle q', A^\top p \rangle
		- 
		\min\nolimits_{p' \in \Delta_n}
		\langle p', A q \rangle \\
		&= 
		\big(
		\max\nolimits_{q' \in \Delta_m}
		\langle q', A^\top p \rangle
		- 
		\langle q, A^\top p \rangle
		\big)
		+ 
		\big(
		\langle
			p, A q
		\rangle
		- \min\nolimits_{p' \in \Delta_n}
		\langle p', A q \rangle
		\big) \;.
	\end{align*}
	Observe that both terms in the above expression
	are non-negative. Thus, using the assumption that
	$\DG(w) \le \eps$, it follows by rearranging that
	\begin{equation*}
		\max\nolimits_{q' \in \Delta_m}
		\langle q', A^\top p \rangle
		- \eps \le \langle q, A^\top p \rangle 
		\;\;\;\;\text{and}\;\;\;\;
		\langle
			p, A q
		\rangle
		\le 
		\min\nolimits_{p' \in \Delta_n}
		\langle p', A q \rangle + \eps \;.
	\end{equation*}
	Thus $w = (p, q)$ satisfies the inequalities
	in~\eqref{eq:nash-app} and is therefore an $\eps$-approximate NE
	of $A$.
\end{proof}

\subsubsection{Relationships between Duality Gap, KL, and TV}
\label{sec:prelims-zsg:dg-kl-tv-relations}

In this section, we prove a series of inequalities 
relating the distances $\DG$, $\TV$, $\KL$, and $\chi^2$.
These relationships were introduced
in expression~\eqref{eq:ne-distances}) 
from Section~\ref{sec:prelims}.
Here, we give a proof of each component separately,
most of which are standard, but included for completness.
Futher below, we restate the full sequence of
inequalities from~\eqref{eq:ne-distances}
in Corollary~\eqref{cor:distance-relations}.

\smallskip

\paragraph{Relating Duality Gap and TV Distance.}

\begin{restatable}[Duality-Gap vs. TV distance]
	{prop}{propdgtv}
	\label{prop:dg-tv-relationship}
	Let $A \in \R^{m \times n}$ be a zero-sum game 
	with Nash equilibrium $w^\star = (p^\star, q^\star) \in \calW$. 
	Let $a_{\max} = \max_{(i, j) \in [m] \times [n]} |A(i, j)|$ denote
	the maximum absolute entry of $A$. Then for any 
	$w = (p, q) \in \calW$, it holds that:
	\begin{equation*}
		\DG(w) \le 2 a_{\max} \TV(w^\star, w) \;.
	\end{equation*}
\end{restatable}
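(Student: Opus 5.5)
The plan is to split the duality gap into two one-sided gaps and bound each one by the corresponding component of the TV distance, using the equilibrium property of $w^\star$. As in the proof of Proposition~\ref{prop:dg-nash}, I would write
\begin{equation*}
	\DG(w) = \max\nolimits_{q'} \langle q', A^\top p\rangle - \min\nolimits_{p'} \langle p', Aq\rangle ,
\end{equation*}
and then add and subtract the game value $c := \langle p^\star, A q^\star\rangle$. This gives
\begin{equation*}
	\DG(w) = \big(\max\nolimits_{q'} \langle q', A^\top p\rangle - c\big) + \big(c - \min\nolimits_{p'} \langle p', Aq\rangle\big) .
\end{equation*}

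For the first term I would use that $w^\star$ is a Nash equilibrium. Because $p^\star$ is a maximin strategy for the minimizing player, $\langle q', A^\top p^\star\rangle \le c$ holds for every $q'$. Hence
\begin{equation*}
	\max\nolimits_{q'} \langle q', A^\top p\rangle - c \le \max\nolimits_{q'} \langle q', A^\top (p - p^\star)\rangle .
\end{equation*}
The same reasoning for $q^\star$ gives $\langle p', A q^\star\rangle \ge c$ for every $p'$. Therefore
\begin{equation*}
	c - \min\nolimits_{p'} \langle p', Aq\rangle \le \max\nolimits_{p'} \langle p', A(q^\star - q)\rangle .
\end{equation*}
When $w^\star$ is interior, Proposition~\ref{prop:interior-NE}(i) makes this immediate. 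In general it follows from the saddle-point inequalities~\eqref{eq:nash-app} together with the minimax theorem. The one point needing care is the sign and direction conventions in~\eqref{eq:nash-app}, where the roles of $\Delta_m$ and $\Delta_n$ look swapped. I would just work with the min–max formulation directly.

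Next I would bound each linear term by H\"older's inequality. For any $q' \in \Delta_n$, the vector $A q'$ has entries of absolute value at most $a_{\max}$, so
\begin{equation*}
	|\langle p - p^\star, A q'\rangle| \le a_{\max} \|p - p^\star\|_1 = 2 a_{\max} \TV(p^\star, p) .
\end{equation*}
The same bound gives $2a_{\max}\TV(q^\star,q)$ for the second term. Summing the two bounds yields $\DG(w) \le 2a_{\max}\big(\TV(p^\star,p)+\TV(q^\star,q)\big) = 2a_{\max}\TV(w^\star,w)$.

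This argument has no real obstacle; the only step requiring attention is the careful use of the equilibrium inequalities in the first term.
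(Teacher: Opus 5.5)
Your proof is correct and matches the paper's argument: add and subtract the value $c$, use the equilibrium inequalities $\langle q', A^\top p^\star\rangle \le c \le \langle p', A q^\star\rangle$, and finish with H\"older's inequality, where the paper simply fixes explicit best responses $\widetilde p, \widetilde q$ rather than keeping the maxima. One small note: these one-sided inequalities follow directly from the definition of a Nash equilibrium (the saddle-point condition), so you do not need to invoke the minimax theorem.
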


\begin{proof}
	Let $\widetilde w = (\w p, \w q) \in \calW$ be a 
	pair of distributions such that 
	$\max_{q' \in \Delta_m} \langle q', A^\top p \rangle 
	=  \langle \w q, A^\top p \rangle$
	and 
	$\min_{p' \in \Delta_n} \langle p', Aq \rangle
	= \langle \w p, A q \rangle$.
	Then by definition of a Nash equilibrium $w^\star = (p^\star, q^\star)$
	from~\eqref{eq:nash-app}, it follows that
	\begin{equation*}
		\textstyle
		\langle \w q, A^\top p^\star \rangle
		- \langle q^\star, A^\top p^\star \rangle 
		\le 0 
		\;\text{and}\;
		\langle \w p, A q^\star \rangle
		-
		\langle p^\star, A q^\star \rangle 
		\ge 0 \;.
	\end{equation*}
	Let $c = \langle p^\star, A q^\star \rangle$,
	and thus the above inequalities imply
	$\langle \w q, A^\top p^\star \rangle - c \le 0$
	and 
	$\langle \w p, A q^\star \rangle - c \ge 0$.
	Then it follows by definition of 
	 $\DG(w)$, that we can write and further simplify
	\begin{align}
		\DG(w) 
		&= 
		\max\nolimits_{q' \in \Delta_m} \langle q', A^\top p \rangle
		- \min\nolimits_{p' \in \Delta_n} \langle p', Aq \rangle \nonumber \\
		&= 
		\langle \w q, A^\top p \rangle 
		- \langle \w p, A q \rangle \nonumber \\
		&\le 
		\langle \w q, A^\top p \rangle 
		- (\langle \w q, A^\top p^\star \rangle - c)
		- \langle \w p, A q \rangle
		+ (\langle \w p, A q^\star \rangle - c) \nonumber \\
		&= 
		\langle \w q, A^\top (p - p^\star) \rangle
		+ 
		\langle \w p, A(q^\star - q) \rangle \nonumber \\
		&\le
		\|A \w q\|_{\infty} \cdot \|p - p^\star \|_1
		+ 
		\|A^\top \w p \|_{\infty} \cdot \|q - q^\star \|_1 \;,
		\label{eq:dge-01}
	\end{align}
	where the final inequality comes from 
	H{\"o}lder's inequality.
	As $(\w p, \w q) \in \calW$, it follows that both
	$\|A \w q \|_{\infty}, \|A^\top \w p \|_{\infty} \le a_{\max}$.
	Moreover, by definition of $\TV(w^\star, w)$, 
	we conclude from~\eqref{eq:dge-01} that
	\begin{equation*}
		\DG(w) 
		\le 
		a_{\max}(\|p^\star- p\|_1 + \|q^\star-q\|_1)
		= 2 a_{\max} \TV(w^\star, w)\;,
	\end{equation*}
	which concludes the proof. 
\end{proof}

\smallskip
\paragraph{Relating TV Distance and KL Divergence.}

\smallskip

\begin{restatable}[Pinsker-like Inequality for TV and KL]
	{prop}{proppinskermod}
	\label{prop:pinsker-mod}
	Fix any $w^\star = (p^\star, q^\star) \in \calW$
	and any $w = (p, q) \in \calW$. Then it holds that
	$\TV(w^\star, w)^2 \le \KL(w^\star, w)$.
\end{restatable}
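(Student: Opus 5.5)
The plan is to reduce the joint inequality to the standard Pinsker inequality applied to each player's component separately. Combining the two components then only requires an elementary inequality, $(a+b)^2 \le 2(a^2+b^2)$.

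First, recall Pinsker's inequality for distributions on a finite set, in the form $\TV(p^\star,p) \le \sqrt{\tfrac12 \KL(p^\star,p)}$, where $\TV(p^\star,p)=\tfrac12\|p^\star-p\|_1$. Equivalently, $\TV(p^\star,p)^2 \le \tfrac12\KL(p^\star,p)$, and likewise $\TV(q^\star,q)^2 \le \tfrac12 \KL(q^\star,q)$. If $w$ lies on the boundary and $\KL$ is infinite, the claim is trivial, so we may assume the KL terms are finite.

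If a self-contained argument is wanted, the standard route is to reduce to the binary case via the data-processing inequality, using the set $S=\{i : p^\star(i)>p(i)\}$ so that the coarsened distributions preserve the TV distance. One then verifies the two-point inequality $a\log\frac{a}{b}+(1-a)\log\frac{1-a}{1-b} \ge 2(a-b)^2$ by fixing $a$ and differentiating in $b$. I expect to simply cite this classical fact rather than reprove it.

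Second, combine the two components:
\[
\TV(w^\star,w)^2 = \big(\TV(p^\star,p)+\TV(q^\star,q)\big)^2 \le 2\big(\TV(p^\star,p)^2+\TV(q^\star,q)^2\big) \le \KL(p^\star,p)+\KL(q^\star,q)=\KL(w^\star,w).
\]
The factor $2$ from the elementary inequality exactly cancels the $\tfrac12$ in Pinsker's inequality, and this is the only delicate point. The key is to use the sharp constant in Pinsker's inequality; a weaker form such as $\TV^2\le \KL$ per component would lose a factor of $2$. No other obstacle is expected.
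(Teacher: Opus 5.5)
Your proposal is correct and follows the paper's own proof: apply Pinsker componentwise as $2\,\TV(p^\star,p)^2 \le \KL(p^\star,p)$ and $2\,\TV(q^\star,q)^2 \le \KL(q^\star,q)$, then combine using $(a+b)^2 \le 2(a^2+b^2)$. Your statement of that elementary inequality is also cleaner than the paper's, whose version ``$2(a^2+b^2) \ge a+b$'' drops the square.
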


\begin{proof}
	Recall by our definition of $\TV$ and $\KL$ that
	$\TV(w^\star, w) = \TV(p^\star, p) + \TV(q^\star, q)$
	and 
	$\KL(w^\star, w) = \KL(p^\star, p) + \KL(q^\star, q)$.
	By Pinsker's inequality, we have that for each component that
	\begin{equation*}
		2 \cdot \TV(p^\star, p)^2 \le \KL(p^\star, p) 
		\;\text{and}\; 
		2 \cdot \TV(q^\star, q)^2 \le \KL(q^\star, q)  \;.
	\end{equation*}
	Thus together this implies 
	\begin{equation*}
		\KL(w^\star, w)
		\ge 
		2 \cdot \big(\TV(p^\star, p)^2 + \TV(q^\star, q)^2\big)
		\ge 
		\big(\TV(p^\star, p) + \TV(q^\star, q))^2
		= 
		\TV(w^\star, w)^2 \;.
	\end{equation*}
	Here, the second inequality comes from the fact
	that $\sqrt{(a^2 + b^2) /2} \ge (a+b)/2$
	and thus also $2(a^2 + b^2) \ge a+b$ for all $a, b \in \R$.
\end{proof}

\smallskip
\paragraph{Relating KL Divergence and Chi-Squared Divergence.}

\smallskip

\begin{restatable}[KL Divergence vs. $\chi^2$ Divergence]
	{prop}{propklchisq}
	\label{prop:kl-chisq}
	Fix $w^\star = (p^\star, q^\star) \in \calW$
	and $w = (p, q) \in \calW$. Then it holds that
	$\KL(w^\star, w) \le \chi^2(w^\star, w)$.
\end{restatable}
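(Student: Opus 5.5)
The plan is to prove the inequality separately for each component, $\KL(p^\star, p) \le \chi^2(p^\star, p)$ and $\KL(q^\star, q) \le \chi^2(q^\star, q)$, and then add the two. By symmetry it suffices to treat a single pair of distributions $p^\star, p \in \Delta_k$. If some $p(i) = 0$ with $p^\star(i) > 0$, both sides are $+\infty$ under the usual conventions. Coordinates with $p^\star(i) = 0$ contribute $0$ to the KL sum and a nonnegative amount $p(i)$ to the $\chi^2$ sum. We may therefore assume $p(i) > 0$ on the support of $p^\star$; this holds in particular for the interior $w$ the paper considers.

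The key step is concavity of the logarithm, applied through Jensen's inequality under the measure $p^\star$:
\begin{equation*}
	\KL(p^\star, p)
	= \sum_{i} p^\star(i) \log \frac{p^\star(i)}{p(i)}
	\le \log\Big( \sum_{i} p^\star(i) \cdot \frac{p^\star(i)}{p(i)} \Big)
	= \log\big(1 + \chi^2(p^\star, p)\big) \;.
\end{equation*}
Here the last identity is the routine expansion
\begin{equation*}
	\sum_i \frac{(p^\star(i) - p(i))^2}{p(i)} = \sum_i \frac{p^\star(i)^2}{p(i)} - 2 + 1 \;,
\end{equation*}
which uses $\sum_i p^\star(i) = \sum_i p(i) = 1$. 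If some $p(i) > 0$ has $p^\star(i) = 0$, the Jensen step is applied only over the support of $p^\star$, and the resulting sum is bounded by the full $\chi^2$ expression plus one.

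The final step applies $\log(1+x) \le x$ for $x \ge 0$, giving $\KL(p^\star, p) \le \chi^2(p^\star, p)$. Summing the $p$- and $q$-components then yields $\KL(w^\star, w) \le \chi^2(w^\star, w)$.

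No step is a real obstacle. The only care needed is the bookkeeping for zero coordinates and the identity $\sum_i p^\star(i)^2/p(i) = 1 + \chi^2(p^\star, p)$. An alternative route is the pointwise bound $\log u \le u - 1$ with $u = p^\star(i)/p(i)$, which gives $\KL(p^\star, p) \le \sum_i p^\star(i)\big(p^\star(i)/p(i) - 1\big) = \chi^2(p^\star, p)$ directly, without invoking Jensen.
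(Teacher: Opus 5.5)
Your argument is correct, but your main route differs from the paper's. The paper's proof is exactly the pointwise alternative you mention at the end: it applies $\log u \le u-1$ with $u = p^\star(i)/p(i)$ inside the sum to get $\KL(p^\star,p) \le \sum_i p^\star(i)^2/p(i) - 1 = \chi^2(p^\star,p)$, and then sums the two components.

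Your Jensen step instead first proves the strictly stronger intermediate bound $\KL(p^\star,p) \le \log\bigl(1+\chi^2(p^\star,p)\bigr)$, and only afterwards linearizes via $\log(1+x)\le x$. The sharper form is more informative when $\chi^2$ is large, e.g.\ near the simplex boundary: there $\chi^2$ can blow up like $1/p_{\min}$, while the logarithmic bound grows only like $\log(1/p_{\min})$, matching the actual growth of $\KL$. However, the paper only ever uses the linear form, in Lemma~\ref{lem:dual-kl-chisq} and in part (iii) of Proposition~\ref{prop:gradient-dom-primal}. In both places it is combined with $\chi^2 \le \|p-p^\star\|_2^2/p_{\min}$ or matched exactly to a dual local norm. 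So the one-line pointwise argument is all that is needed there.

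Your explicit treatment of zero coordinates is also more careful than the paper's. The paper states the result for $w \in \calW$ but tacitly assumes $p(i), q(j) > 0$, which holds for all the interior points to which the result is actually applied.
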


\begin{proof}
We have by definition of $\KL(p^\star, p)$ that
\begin{equation*}
	\textstyle
	\KL(p^\star, p)
	= 
	\sum_{i=1}^m p^\star(i) \cdot \log\big(\tfrac{p^\star(i)}{p(i)}\big)
	\le 
	\sum_{i=1}^m p^\star(i) \cdot \big(\tfrac{p^\star(i)}{p(i)} - 1 \big) 
	= 
	\sum_{i=1}^m \tfrac{p^\star(i)^2}{p(i)} - 1
	\;,
\end{equation*}
where the inequality comes from the 
fact that $\log(u) \le u-1$ for all $u > 0$.
On the other hand, we have by definition of $\chi^2(w^\star, w)$
that  
\begin{equation*}
	\textstyle
	\chi^2(p^\star, p)
	= \sum_{i=1}^m \frac{(p^\star(i) - p(i))^2}{p(i)}
	= \sum_{i=1}^m \frac{p^\star(i)^2}{p(i)} 
	+ \frac{p(i)^2}{p(i)} - \frac{2p(i) p^\star(i)}{p(i)}
	= \sum_{i=1}^m \frac{p^\star(i)^2}{p(i)} - 1 \;.
\end{equation*}
Combining the two expressions then yields
$\KL(p^\star, p) \le \chi^2(p^\star, p)$.
An identical calculation also gives
$\KL(q^\star, q) \le \chi^2(q^\star, q)$,
which then implies by definition that 
$\KL(w^\star, w) \le \chi^2(w^\star, w)$.
\end{proof}

\paragraph{Full Sequence of Relationships.}

The following corollary summarizes the relationships
between the measures of distance to Nash
from Propositions~\ref{prop:dg-tv-relationship},
~\ref{prop:pinsker-mod}, and~\ref{prop:kl-chisq}:

\smallskip

\begin{restatable}[Relationships of Distances to Nash]
{cor}{cordistancerelations}
	\label{cor:distance-relations}
	Let $A \in \R^{m \times n}$ be a zero-sum game 
	with a Nash equilibrium $w^\star = (p^\star, q^\star) \in \calW$. 
	Let $a_{\max} = \max_{(i, j) \in [m] \times [n]} |A(i, j)|$ denote
	the maximum absolute entry of $A$. Then for any 
	$w = (p, q) \in \calW$, it holds that:
	\begin{equation*}
	\tfrac{1}{(2a_{\max})^2} \cdot \DG(w)^2
	\le 
	\TV(w^\star, w)^2
	\le 
	\KL(w^\star, w)
	\le \chi^2(w^\star, w) \;.
	\end{equation*}
\end{restatable}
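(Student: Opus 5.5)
The corollary is a chaining of the three componentwise inequalities already established, so the plan is simply to compose them in order.

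\textbf{First inequality.} Start from Proposition~\ref{prop:dg-tv-relationship}, which gives $\DG(w) \le 2a_{\max}\TV(w^\star, w)$. Since $\DG(w) \ge 0$ (it is a sum of two nonnegative gaps, as observed in the proof of Proposition~\ref{prop:dg-nash}), both sides are nonnegative and we may square. Dividing by $(2a_{\max})^2$, assuming $a_{\max} > 0$, yields $\frac{1}{(2a_{\max})^2}\DG(w)^2 \le \TV(w^\star, w)^2$. If $a_{\max} = 0$, then $A = 0$ and $\DG \equiv 0$, so this inequality is vacuous and we interpret it as $0 \le \TV^2$.

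\textbf{Second and third inequalities.} Invoke Proposition~\ref{prop:pinsker-mod} to get $\TV(w^\star, w)^2 \le \KL(w^\star, w)$, and then Proposition~\ref{prop:kl-chisq} to get $\KL(w^\star, w) \le \chi^2(w^\star, w)$. Chaining the three bounds gives the statement.

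\textbf{Caveats.} There is no real obstacle; the only care needed is the nonnegativity of $\DG$, which justifies squaring, and the degenerate case $a_{\max}=0$. When $w$ lies on the boundary of $\calW$, $\KL$ and $\chi^2$ may be $+\infty$, and the inequalities hold trivially in the extended reals.
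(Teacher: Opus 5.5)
Your proposal is correct and matches the paper, which presents the corollary simply as the chaining of Propositions~\ref{prop:dg-tv-relationship}, \ref{prop:pinsker-mod}, and~\ref{prop:kl-chisq}. Your additional remarks are sound details the paper leaves implicit: the nonnegativity of $\DG$ justifies squaring the first bound, and the case $a_{\max}=0$ needs separate handling.
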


Note that, due to Proposition~\ref{prop:dg-nash},
Corollary~\ref{cor:distance-relations} implies
that upper bounds on $\TV$, $\KL$, and $\chi^2$
each also imply an approximate Nash
in the sense of~\eqref{eq:nash-app}.

\subsection{Properties of Spectrum of Payoff Matrix}
\label{app:zsg-prelims:spectral}

Here, we state and prove two useful properties
related to the spectrum of the payoff matrix $A$.

\begin{restatable}[Relationship
between max entry and 
max singular value]
{prop}{propsigmaxamax}
	\label{prop:amax-sigmamax}
	Fix $A \in \R^{m \times n}$.
	Let 
	$a_{\max} = \max_{(i, j) \in [m] \times [n]}
	|A(i, j)|$
	and let $\sigma_{\max} = \|A\|_2$.
	Then:
	\begin{equation*}
		a_{\max} \le \sigma_{\max}
		\le \sqrt{mn} \cdot a_{\max} \;.
	\end{equation*}
\end{restatable}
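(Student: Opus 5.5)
We prove the two inequalities separately, both by elementary comparisons between the spectral norm and the entries of $A$. Throughout, $e_i \in \R^m$ and $e_j \in \R^n$ denote standard basis vectors, and we use the variational characterization $\sigma_{\max} = \|A\|_2 = \max_{x \neq 0} \|Ax\|_2/\|x\|_2$ from the preliminaries.

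\textbf{Lower bound.} Let $(i^\star, j^\star)$ be an index attaining $a_{\max} = |A(i^\star, j^\star)|$. Test the operator norm on the unit vector $e_{j^\star}$. Then
\[
\|A\|_2 \ge \|A e_{j^\star}\|_2 = \Big(\textstyle\sum_{i=1}^m A(i,j^\star)^2\Big)^{1/2} \ge |A(i^\star, j^\star)| = a_{\max}.
\]
Equivalently, $|A(i^\star,j^\star)| = |\langle e_{i^\star}, A e_{j^\star}\rangle| \le \|A\|_2$ by Cauchy--Schwarz.

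\textbf{Upper bound.} Bound the spectral norm by the Frobenius norm. For any $x \in \R^n$, Cauchy--Schwarz applied row by row gives $(Ax)(i)^2 \le \big(\sum_j A(i,j)^2\big)\|x\|_2^2$. Summing over $i$ yields $\|Ax\|_2^2 \le \|A\|_F^2 \|x\|_2^2$, so $\sigma_{\max} \le \|A\|_F$. Since each of the $mn$ entries satisfies $A(i,j)^2 \le a_{\max}^2$, we get $\|A\|_F^2 \le mn\, a_{\max}^2$. Taking square roots gives $\sigma_{\max} \le \sqrt{mn}\, a_{\max}$.

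There is no genuine obstacle here. The only care needed is to use the correct definition of the spectral norm and to apply Cauchy--Schwarz row-wise to obtain the Frobenius bound.
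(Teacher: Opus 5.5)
Your proof is correct and follows the same route as the paper: the lower bound comes from $|\langle e_i, A e_j\rangle| \le \|A\|_2$ (Cauchy--Schwarz with unit basis vectors), and the upper bound comes from $\|A\|_2 \le \|A\|_F \le \sqrt{mn}\, a_{\max}$. You additionally derive $\|A\|_2 \le \|A\|_F$ row-wise instead of citing it. Your Frobenius computation also correctly keeps the squares on the entries and on $a_{\max}$, which the paper's write-up drops.
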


\begin{proof}
	For the first inequality, let $e_i \in \R^m$,
	$e_j \in \R^n$ for $i \in [m]$, $j\in [n]$
	denote the standard $m$ and $n$-dimensional 
	basis vectors. 
	Then by Cauchy-Schwarz and the definition
	of $\sigma_{\max}$, we have 
	\begin{equation*}
		\textstyle
		\max_{i, j}\, |A(i, j)| 
		= 
		\max_{i, j}\, |\langle e_i, A e_j \rangle|
		\le \|e_i\|_2 \cdot \|A e_j\|_2 
		\le \sigma_{\max}  \;.
	\end{equation*}
	For the second inequality, 
	recall that the Frobenius norm of $A$ is given by
	$\|A\|_F = (\sum_{(i, j) \in [m] \times [n]} A(i, j))^{1/2}$, 
	and that $\|A\|_2 \le \|A\|_F$.
	Then we can further bound
	\begin{equation*}
		\textstyle
		\|A\|^2_2 \le \|A\|^2_F 
		= 
		\sum_{(i, j)\in [m]\times [n]} A(i, j)^2 
		\le mn \cdot a_{\max} \;.
	\end{equation*}
	Taking square roots yields the desired inequality. 
\end{proof}

In the next proposition, we show that
$\|J \|_2 = \|A\|_2$, where $J$ is the 
skew-symmetric  matrix from~\eqref{eq:J-def}.

\begin{restatable}[Spectral norms of $J$ and $A$]
{prop}{propspectralnormJA}
\label{prop:spectral-norm-J-A}
Fix $A \in \R^{m \times n}$, 
and let $J =  ((0, A), (-A^\top, 0)) \in\R^{(m+n)\times (m+n)}$
be the block skew-symmetric matrix from~\eqref{eq:J-def}.
Then $\|J\|_2 = \|A\|_2$.
\end{restatable}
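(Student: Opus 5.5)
The plan is to compute $J^\top J$ explicitly and read off its spectrum. Since $J^\top = -J$, we have
\begin{equation*}
	J^\top J = -J^2 = -\begin{pmatrix} 0 & A \\ -A^\top & 0 \end{pmatrix}^2
	= \begin{pmatrix} A A^\top & 0 \\ 0 & A^\top A \end{pmatrix} .
\end{equation*}
This matrix is block diagonal and PSD. Recall that $\|J\|_2^2 = \lambda_{\max}(J^\top J)$. The largest eigenvalue of a block-diagonal symmetric matrix is the maximum of the largest eigenvalues of its blocks. Hence
\begin{equation*}
	\|J\|_2^2 = \max\{\lambda_{\max}(AA^\top), \lambda_{\max}(A^\top A)\}.
\end{equation*}

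The remaining step is the standard fact that $AA^\top$ and $A^\top A$ share the same nonzero eigenvalues. If $A^\top A v = \lambda v$ with $\lambda \neq 0$, then $Av \neq 0$ and $AA^\top (Av) = \lambda (Av)$. Therefore both maxima equal $\sigma_{\max}(A)^2 = \|A\|_2^2$. If $A = 0$, both sides of the claim are trivially zero. Taking square roots gives $\|J\|_2 = \|A\|_2$.

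Alternatively, one could argue directly from the definition of the spectral norm. For $v = (x, y)$ we have $\|Jv\|_2^2 = \|Ay\|_2^2 + \|A^\top x\|_2^2 \le \|A\|_2^2(\|x\|_2^2 + \|y\|_2^2)$, using $\|A^\top\|_2 = \|A\|_2$. This gives the upper bound. For the lower bound, take $v = (0, y)$ with $y$ a top right singular vector of $A$, which attains equality.

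There is no real obstacle here. The only point needing care is the equality $\|A^\top\|_2 = \|A\|_2$, or equivalently the shared nonzero spectrum of $AA^\top$ and $A^\top A$, which is handled by the eigenvector argument above.
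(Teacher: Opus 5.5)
Your proposal is correct and takes the same route as the paper: compute $J^\top J = -J^2$ as the block-diagonal matrix with blocks $AA^\top$ and $A^\top A$, then use $\lambda_{\max}(AA^\top) = \lambda_{\max}(A^\top A)$. The paper only asserts that last equality, and your shared-nonzero-spectrum argument (with the symmetric direction and the case $A=0$) supplies the justification; your alternative direct bound $\|Jv\|_2^2 = \|Ay\|_2^2 + \|A^\top x\|_2^2$, with equality attained at $v=(0,y)$, is also valid.
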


\begin{proof}
	For a symmetric matrix $M \in \R^{k \times k}$, 
	let $\lambda_{\max}(M)$ denote the largest eigenvalue of $M$.
	Then recall by definition of the matrix spectral norm that,
	for a matrix $P$, we have 
	$\sigma_{\max}(P) = \| P \|_2 = \sqrt{\lambda_{\max}(P^\top P)}$.
	For the skew-symmetric $J$, we can compute
	\begin{equation*}
		J^\top J 
		= 
		\begin{pmatrix}
		0 & A \\ 
		-A^\top & 0
		\end{pmatrix}
		\begin{pmatrix}
		0 & -A \\ 
		A^\top & 0
		\end{pmatrix}
		= 
		\begin{pmatrix}
		AA^\top & 0 \\ 
		0 & A^\top A
		\end{pmatrix} \;.
	\end{equation*}
	Given the block structure of $J^\top J$,
	it follows that 
	$\lambda_{\max}(J^\top J) = 
	\max(\lambda_{\max}(AA^\top), \lambda_{\max}(A^\top A))$.
	However, $\lambda_{\max}(AA^\top) = \lambda_{\max}(A^\top A)$,
	and thus 
	\begin{equation*}
		\|J\|_2 
		= \sqrt{\lambda_{\max}(J^\top J)}
		= 
		\sqrt{\lambda_{\max}(A^\top A)}
		= 
		\|A\|_2 \;,
	\end{equation*}
	which yields the desired claim.
\end{proof}

\section{Details on Optimistic MWU Algorithm}
\label{app:omwu-prelims}

In this section, we review preliminaries on the OMWU algorithm
that were introduced in Section~\ref{sec:omwu}.

\paragraph{Organization of Section.}
This section is organized as follows:
\begin{itemize}[
	leftmargin=1em
]
\item
\textbf{Section~\ref{app:omwu-prelims:iterates}}
reviews the derivation of the primal and dual 
OMWU iterates (and also reviews of the standard MWU algorithm).
In particular, Section~\ref{app:omwu-prelims:oftrl}
gives details on OMWU as the instantiation of
Optimistic FTRL using the negative entropy regularizer.

\item
\textbf{Section~\ref{app:omwu-prelims:skew-grad}}
gives the proof of Proposition~\ref{prop:omwu-skew-grad},
which establishes the perspective 
of the dual OMWU iterates as an optimistic skew-gradient
descent on the log-sum-exp energy function. 

\end{itemize}

\subsection{Primal and Dual OMWU Iterates}
\label{app:omwu-prelims:iterates}

\subsubsection{Primal Update Rule}

\paragraph{Standard MWU.}
We first recall the primal update rule of the standard
Multiplicative Weights Update algorithm (MWU)
(see, e.g.,~\cite{freund1999adaptive},
\cite{arora2012multiplicative}),
which is the basis of the optimistic variant.
The MWU iterates are initialized
from $w_0 = (p_0, q_0) \in \relint(\calW)$. 
Then using a fixed stepsize $\eta > 0$,
for $t \ge 1$,  the MWU iterates $w_t = (p_t, q_t) \in \relint(\calW)$ 
are given by
\begin{equation}
	\begin{aligned}
	p_t(i) 
	&\propto
	p_{t-1}(i) \cdot \exp\big(
		- \eta (Aq_{t-1})(i)
	\big)
	\;\;\text{for all $i \in [m]$}
	\\
	\text{and}\quad
	q_t(j)
	&\propto
	q_{t-1}(j) \cdot \exp\big(
		\eta (A^\top p_{t-1})(j)
	\big)
	\;\;\text{for all $j \in [n]$} \;.
	\end{aligned}
	\label{eq:standard-mwu-primal}
	\tag{MWU}
\end{equation}

\paragraph{Optimistic MWU.}
The Optimistic MWU update rule is similar 
to the standard~\eqref{eq:standard-mwu-primal} update, 
but with an additional recency bias. 
This leads to increases
in the weights of coordinates corresponding
to smaller magnitude losses in the most recent gradient vector. 
Specifically, recall from~\eqref{eq:omwu-primal} that 
the primal iterates $\{w_t\}$ are initialized
from $w_0 = (p_0, q_0) \in \relint(\calW)$. 
Then using a fixed stepsize $\eta > 0$,
for $t +1 \ge 1$,  the iterates $w_t = (p_t, q_t) \in \relint(\calW)$ 
are given by
\begin{equation}
	\begin{aligned}
		p_{t+1}(i) 
        &\propto p_{t}(i) 
		\cdot
		\exp\big(- \eta (2Aq_{t} - Aq_{t-1})(i)\big)
		&\text{for all $i \in [m]$}  \\
        \text{and}\quad
		q_{t+1}(j) 
        &\propto q_{t}(j) 
		\cdot 
		\exp\big(\eta (2A^\top p_{t} - A^\top p_{t-1})(j)\big)
		&\text{for all $j \in [n]$}\;.
	\end{aligned}
\label{eq:omwu-primal-components}
\end{equation}
We assume for notational convenience that
$w_{-1} = (p_{-1}, q_{-1}) = w_0 \in \relint(\calW)$.
This means at time $t=1$ that 
$p_1(i) \propto p_0(i) \cdot \exp(- \eta(Aq_0)(i))$
for $i \in[m]$
and  $q_1(j) \propto q_0(j) \cdot \exp(\eta(A^\top p_0)(j))$
for $j \in [n]$.

\subsubsection{Dual Iterates and Optimistic FTRL Instantiation}
\label{app:omwu-prelims:oftrl}

\paragraph{Component-Wise Dual Iterates.}
We start by introducing the player-wise components 
of the dual iterates $\{z_t\}$ from~\eqref{eq:zt-wt-def}. 
For this, let $x_0 = x_{-1} \in \R^{m}$ and
$y_0 = y_{-1} \in \R^n$ be 
initial dual vectors.
Let $\{w_t\}$ denote the sequence of 
primal~\eqref{eq:omwu-primal} iterates, 
where each $w_t = (p_t, q_t) \in \relint(\calW)$. 
Then for stepsize $\eta  > 0$, 
the dual iterates $x_t \in\R^m$,
$y_t \in \R^n$ are defined for all $t \ge 1$ as follows:
first, at time $t=1$, let 
$x_1 = x_0 - \eta Aq_0$
and $y_1 = y_0 + \eta A^\top p_0$.
Then for all $t \ge 2$, we define
\begin{equation}
	\begin{aligned}
		\begin{cases}
		x_t = x_0 - \eta \cdot 
		\big(\sum_{k=0}^{t-1} Aq_k + Aq_{t-1}\big)  \\
		y_t = y_0 + \eta \cdot 
		\big(\sum_{k=0}^{t-1} A^\top p_k + A^\top p_{t-1} \big) \;.
		\end{cases}
	\end{aligned}
	\label{eq:omwu-dual-components}
\end{equation}

\paragraph{Optimistic FTRL Using Negative-Entropy Regularizer.}
As introduced in Section~\ref{sec:omwu}, 
the primal OMWU iterates can also be derived as 
an instantiation of Optimstic FTRL (OFTRL) using
negative entropy regularization
\citep{rakhlin2013optimization,syrgkanis2015fast}.
For this, recall that we define
the strictly convex regularizers $R_m : \Delta_m \to \R$
and $R_n : \Delta_n \to \R$ as 
the discrete negative entropy functions:
\begin{equation}
	\begin{aligned}
	R_n(p) &:= - \ent_m(p) 
	= \ssum\nolimits_{i=1}^m p(i) \log(p(i)) \\
	\text{and}\quad
	R_m(q) &:= - \ent_m(q) 
	= \ssum\nolimits_{j=1}^n q(i) \log(q(i)) \;.
	\end{aligned}
	\label{eq:regularizer-components}
\end{equation}
Here and throughout, we use the standard convention
that $0 \log 0 = 0$.

Under Optimistic FTRL, the primal iterates 
$p_t \in \Delta_m$, $q_t \in \Delta_n$ at time $t\ge1$
are chosen via a regularized best-response map
applied to the dual iterates $x_t \in \R^m$ and
$y_t \in \R^n$, respectively.
Specifically, 
let the OFTRL primal iterates be initialized
at $p_0 \in \relint(\Delta_m)$ and $q_0 \in \relint(\Delta_n)$.
At time $t + 1 \ge 1$, suppose
$x_{t+1} \in \R^m$ and $y_{t+1} \in \R^n$
are the dual iterates from~\eqref{eq:omwu-dual-components}
defined in terms of $\{q_0, \dots, q_t\}$
and $\{p_0, \dots, p_t\}$, respectively.
Then at time $t+1$, the primal iterates of 
OFTRL instantiated with the regularizers $R_m$ and $R_n$
are given by the objectives
\begin{equation}
\begin{aligned}
	p_{t+1} 
	&:= \argmin\nolimits_{p \in \Delta_m}
	\big\{
	\langle p, x_{t+1} \rangle + R_m(p)
	\big\} \\
	q_{t+1} 
	&:= \argmin\nolimits_{q \in \Delta_n}\,
	\big\{
	\langle q, y_{t+1} \rangle + R_n(q)
	\big\} \;.
\end{aligned}
\label{eq:oftrl-update-components}
\end{equation}
Here, observe that the dual variables $(x_{t+1}, y_{t+1})$
from~\eqref{eq:omwu-dual-components} are
already scaled by the stepsize $\eta > 0$.
Thus, smaller values of $\eta$ correspond
to more weight on the regularization terms
in~\eqref{eq:oftrl-update-components}, and vice-versa.
Moreover, under the negative entropy regularizers,
the variables $(p_{t+1}, q_{t+1})$ have the following
closed-form solution: 

\smallskip 

\begin{restatable}[Softmax update rule]
	{prop}{propsoftmaxupdate}
	\label{prop:oftrl-softmax}
	For $t \ge 1$, let $x_t \in \R^m$ and $y_t \in \R^n$,
	and let $p_t \in \Delta_m$ and $q_t \in\Delta_n$
	be defined as in~\eqref{eq:oftrl-update-components}. 
	Then $p_t = \softmax_m(x_t)$ and $q_t = \softmax_n(y_t)$.
\end{restatable}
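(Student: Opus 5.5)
The plan is to solve the constrained minimization in \eqref{eq:oftrl-update-components} directly through its first-order optimality conditions. The two problems decouple, so it suffices to treat the $p$-problem $\min_{p \in \Delta_m} \{\langle p, x_t\rangle + R_m(p)\}$; the $q$-problem is identical with $(n, y_t)$ in place of $(m, x_t)$. Two features settle existence and uniqueness. The objective is continuous on the compact set $\Delta_m$, using the convention $0\log 0 = 0$, so a minimizer exists. It is also strictly convex, because $R_m$ is strictly convex and the linear term does not affect this, so the minimizer is unique. It therefore suffices to exhibit one point satisfying the KKT conditions, or to verify directly that a candidate point is optimal.

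Next I will show that the minimizer lies in $\relint(\Delta_m)$. The reason is that the entropy has infinite slope at the boundary: the derivative of $u \log u$ is $\log u + 1 \to -\infty$ as $u \to 0^+$. Suppose a feasible $p$ had some $p(i)=0$. Move a small mass $\epsilon$ onto coordinate $i$, taking it from a coordinate $k$ with $p(k) > 0$. The objective then changes by $\epsilon\log\epsilon + O(\epsilon)$, which is strictly negative for small $\epsilon$, so $p$ cannot be the minimizer.

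With interiority in hand, only the equality constraint $\sum_i p(i) = 1$ is active. Introduce a Lagrange multiplier $\lambda$ for it. Stationarity gives, for every $i$,
\[
x_t(i) + \log p(i) + 1 + \lambda = 0 ,
\]
so $p(i) = \exp(-x_t(i))\cdot e^{-1-\lambda}$, and normalization fixes the constant. Because the objective is convex and the constraint is affine, this stationary point is the global minimizer.

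The resulting formula is $p(i) \propto \exp(-x_t(i))$, not $\propto \exp(x_t(i))$. To reconcile this with the stated claim $p_t = \softmax_m(x_t)$, I expect the argmin/sign convention in \eqref{eq:omwu-oftrl} to be matched with the sign convention of the dual iterates; this sign bookkeeping is the main obstacle. In \eqref{eq:omwu-dual-components}, $x_t$ accumulates $-\eta A q_k$. Feeding that into the primal rule \eqref{eq:omwu-primal} gives $p_t(i) \propto \exp(x_t(i))$ after telescoping $x_t - x_{t-1} = -\eta(2Aq_{t-1} - Aq_{t-2})$. So the consistent reading is that the regularized best response maximizes $\langle p, x_t\rangle - R_m(p)$, equivalently minimizes $\langle p, -x_t\rangle + R_m(p)$. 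I would state this explicitly and then rerun the computation above with $-x_t$ in place of $x_t$, which yields $p(i) = \exp(x_t(i))/\sum_j \exp(x_t(j)) = \softmax_m(x_t)(i)$. The same argument gives $q_t = \softmax_n(y_t)$.

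A cleaner alternative, which avoids multipliers, is a Gibbs-variational identity. For any $p \in \Delta_m$,
\[
\langle p, x\rangle - R_m(p) = \LSE_m(x) - \KL\big(p, \softmax_m(x)\big).
\]
This follows by expanding $\log \softmax_m(x)(i) = x(i) - \LSE_m(x)$. Since $\KL \ge 0$ with equality exactly when $p = \softmax_m(x)$, this identity gives existence, uniqueness, interiority, and the closed form all at once. I would use it as the main argument and keep the KKT computation as a remark.
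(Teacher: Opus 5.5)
Your proposal is correct, including the sign diagnosis. Taken literally, the minimization in \eqref{eq:oftrl-update-components} (and in \eqref{eq:omwu-oftrl}) returns $\softmax_m(-x_t)$. The statement holds only if $p_t$ is read as the maximizer of $\langle p, x_t\rangle - R_m(p)$ over $\Delta_m$. That is the form the rest of the paper actually uses: $F = R^*$ with $\nabla F(z)$ attaining equality in Fenchel--Young, and Proposition~\ref{prop:omwu-skew-grad}. Stating this correction explicitly, as you plan, is the right move and not a weakness of your argument.

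On the route: the paper omits the proof and only points to the first-order optimality conditions, which is your KKT remark. Your main argument instead uses the identity $\langle p,x\rangle - R_m(p) = \LSE_m(x) - \KL(p,\softmax_m(x))$. This is the one-block version of the paper's own modified Fenchel--Young identity (Proposition~\ref{prop:restricted-fenchel}(ii)), extended to boundary points $p$ via $0\log 0 = 0$.

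The KKT route generalizes to other regularizers, since stationarity reads $\nabla R_m(p) = x + c\1_m$. But it needs the separate ingredients you supply: existence by compactness, uniqueness by strict convexity, interiority via the $\epsilon\log\epsilon$ perturbation, and convexity to upgrade stationarity to global optimality.

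The Gibbs identity is specific to entropy, but it delivers all of these at once from $\KL \ge 0$ with equality iff $p = \softmax_m(x)$. This is valid on all of $\Delta_m$ because $\softmax_m(x)$ has full support. It also quantifies the suboptimality gap as a KL divergence, which is exactly the mechanism the paper later exploits in the energy--KL equivalence.
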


This result is standard and is based on the 
first-order optimality conditions of the objective
in~\eqref{eq:oftrl-update-components},
and thus we omit the proof;
See, e.g.,~\cite{shalev2012online}, Section 2.6.

Now, as previously mentioned,
the update rules of~\eqref{eq:oftrl-update-components}
correspond exactly to the~\eqref{eq:omwu-primal} primal iterates.
Due to Proposition~\ref{prop:oftrl-softmax}, this will
follow as an immediate corollary of
Proposition~\ref{prop:omwu-skew-grad} from Section~\ref{sec:omwu}.
We restate and give the proof of this latter result in the next subsection,
which requires using the \textit{concatenated}
primal and dual variables $w_t =(p_t, q_t) \in \relint(\calW)$
and $z_t = (x_t, y_t) \in \R^{m+n}$.
This notation was introduced in
expressions~\eqref{eq:J-def},~\eqref{eq:zt-wt-def},
and~\eqref{eq:omwu-oftrl} from Section~\ref{sec:omwu},
and we briefly review these preliminaries again here:

\paragraph{Concatenated Primal and Dual Iterates.}
Given $A \in \R^{m\times n}$, recall from expression~\eqref{eq:J-def} 
that $J = - J^\top$
is the block skew-symmetric matrix 
$J = ((0, A), (-A^\top, 0)) \in \R^{(m+n)\times (m+n)}$.
Then observe for any $w = (p, q) \in \calW$ that
\begin{equation*}
	\textstyle
	J w = 
	\begin{pmatrix}
	0 & A \\
	- A^\top & 0 
	\end{pmatrix}
	\begin{pmatrix} 
	p \\ q
	\end{pmatrix}
	=
	\begin{pmatrix}
		Aq  \\
		- A^\top p 
	\end{pmatrix} \in \R^{m+n} \;.
\end{equation*}
Thus, given the sequence of primal iterates
$\{w_k\}_{k=1}^t$, where each $w_k = (p_k, q_k) \in \calW$,
it follows for all $t + 1 \ge 2$ that
the dual iterates $x_{t+1} \in \R^m$ and $y_{t+1} \in \R^n$
from~\eqref{eq:omwu-dual-components} can
be written jointly as
\begin{equation*}
	z_{t+1} = (x_{t+1}, y_{t+1})
	= 
	z_0  - \eta \cdot \big(
		\ssum\nolimits_{k=1}^t Jw_k + Jw_t
	\big) \in \R^{m+n} \;,
\end{equation*}
where $z_0 = z_{-1} = (x_0, y_0) \in \R^{m+n}$ is the dual initialization.
It follows recursively that we can write:
\begin{equation}
	\begin{cases}
	z_1 = z_0 - \eta J w_0 &\text{at $t=1$} \\
	z_{t+1} = z_{t} - 2 \eta J w_t + \eta J w_{t-1} 
	&\text{at $t +1 \ge 2$} \;.
	\end{cases}
	\label{eq:zt-wt-def-full}
\end{equation}

Further recall from Section~\ref{sec:omwu} that 
we define the joint negative entropy 
regularizer $R: \calW \to \R$ to be
$R = R_m + R_n = - (\ent_m + \ent_n)$, 
where for $w = (p, q) \in \calW$, we have
\begin{equation}
	R(w) = R_m(p) + R_n(q)
	= - (\ent_m(p) + \ent_n(q)) \;.
	\label{eq:R-def-full}
\end{equation}
The separability of the regularizer $R$ 
allows us to write the OFTRL objective
functions from~\eqref{eq:oftrl-update-components}
using the concatenated dual variable 
$z_{t+1} = (x_{t+1}, y_{t+1}) \in \R^{m+n}$.
Specifically, it holds for $t +1 \ge 1$ that
$w_{t+1} = (p_{t+1}, q_{t+1}) \in \calW$ is given by
\begin{equation}
	w_{t+1}
	= 
	\argmin_{w = (p, q) \in \calW}\,
	\big\{
		\langle w, z_{t+1} \rangle + R(w)
	\big\}
	= 
	\begin{pmatrix}
	\argmin\nolimits_{p \in \Delta_m}\,
	\big\{ \langle p, x_{t+1} \rangle + R_m(p) \big\} \\
	\argmin\nolimits_{q \in \Delta_n}\,
	\big\{ \langle q, y_{t+1} \rangle + R_n(q) \big\} 
 	\end{pmatrix}
 	\in \calW \;.
 	\label{eq:oftrl-primal-joint-full}
\end{equation}

\subsubsection{Energy Function and Gradient Map}

\paragraph{Log-Sum-Exp Energy Function.}
We recall the definition of the energy function 
$F: \R^{m+n} \to \R$ from~\eqref{eq:energy} in Section~\ref{sec:omwu}.
First, we have that $\LSE_m: \R^m \to \R$ and 
$\LSE_n : \R^n \to \R$ are the $m$ and $n$-dimensional
log-sum-exp functions, where:
\begin{equation*}
	\textstyle
	\LSE_m(x) = \log\big(\sum_{i=1}^m x(i)) 
	\;\;\text{and}\;\;
	\LSE_n(y) = \log\big(\sum_{j=1}^n y(j))  \;.
\end{equation*}
Both $\LSE_m$ and $\LSE_n$ are convex and 
continuously differentiable. 
Together, they define the convex and separable energy function
$F = \LSE_m + \LSE_n$,
where for $z = (x, y) \in \R^{m+n}$ we have
\begin{equation*}
	F(z) = \LSE_m(x) + \LSE_n(y)\;.
\end{equation*}

\paragraph{Energy Gradients.}
By a straightforward differentiation,
note that $\nabla \LSE_k(v) = \softmax_k(v)$ for any $k \ge 2$
and $v \in \R^k$. 
Thus, the gradient map $\nabla F: \R^{m+n} \to \R^{m+n}$
is given by
\begin{equation}
	\nabla F(z) 
	= 
	\begin{pmatrix}
	\nabla_x F(z) \\
	\nabla_y F(z)
	\end{pmatrix} 
	= 
	\begin{pmatrix}
		\softmax_m(x) \\
		\softmax_n(y)
	\end{pmatrix}
	\in \relint(\calW)
	\;\;\text{for all $z =(x, y)\in \R^{m+n}$.}
	\label{eq:F-grad-full}
\end{equation}
Note also that the log-sum-exp energy function $F$ 
and the negative entropy regularizer $R$
are dual functions (convex conjugates).
We give more details and implications
of this duality in Section~\ref{app:energy-kl-equiv:duality}.

\subsection{OMWU as Optimistic Skew-Gradient Descent}
\label{app:omwu-prelims:skew-grad}

The joint OMWU dual iterates $\{z_t\}$ can
be written as an optimistic skew-gradient
descent on the energy function $F$.
This key property was stated in Proposition~\ref{prop:omwu-skew-grad}
in Section~\ref{sec:omwu}.
Here, we restate the proposition and give the proof:

\smallskip

\propskewgrad*

\subsubsection{Proof of Proposition~\ref{prop:omwu-skew-grad}}

For $t \ge 0$, let $w_t = (p_t, q_t) \in \relint(\calW)$
be the primal iterates from~\eqref{eq:omwu-primal}
(equivalently, from~\eqref{eq:omwu-primal-components}),
and let $z_t = (x_t, y_t) \in \R^{m+n}$ be
the dual iterates from~\eqref{eq:zt-wt-def}
(equivalently, from~\eqref{eq:zt-wt-def-full}). 
Moreover, recall from~\eqref{eq:F-grad-full}
that $\nabla_x F(z) = \softmax_m(x)$
and $\nabla_x F(z) = \softmax_n(y)$
for any $z = (x, y) \in \R^{m+n}$.
Thus, we first prove inductively that 
$w_t = (p_t, q_t) = (\softmax_m(x_t), \softmax_n(y_t))$
for all $t \ge 0$.

For the base case, we have by assumption of the
proposition that $z_0 = (x_0, y_0) \in\R^{m+n}$ 
is such that $w_0 = (p_0, q_0) = \nabla F(z_0) 
= (\softmax_m(x_0), \softmax_n(y_0))$.
Now suppose through time $t$
that $w_t = (\softmax_m(x_t), \softmax_n(y_t))$.
Then at time $t+1$, 
we have  by definition
of~\eqref{eq:omwu-primal} that 
for all coordinates $i \in [n]$:
\begin{align}
	p_{t+1}(i)
	&=
	\frac{
		p_t(i) \cdot \exp(-\eta (2 Aq_t - Aq_{t-1})(i))
	}{
		\sum_{k=1}^m p_{t}(k) \cdot 
		\exp(- \eta (2Aq_t - Aq_{t-1})(k))
	} \nonumber \\
	&=
	\frac{\frac{\exp(x_t(i))}{\sum_{\ell=1}^m \exp(x_t(\ell))} 
	\cdot 
	\exp(-\eta (2 Aq_t - Aq_{t-1})(i))
	}{
		\sum_{k=1}^m
		\frac{\exp(x_t(k))}{\sum_{\ell=1}^m \exp(x_t(\ell))}
		\cdot 
		\exp(-\eta (2 Aq_t - Aq_{t-1})(k))
	} \nonumber \\
	&= 
	\frac{\exp(x_t(i) - \eta (2 Aq_t - A q_{t-1})(i))}
	{\sum_{k=1}^m 
	\exp(x_t(k) - \eta (2 Aq_t - Aq_{t-1})(k))
	} \nonumber \\
	&= \softmax_m\big(x_t - \eta (2Aq_t - Aq_{t-1})\big)(i)  \;.
	\label{eq:sderive-01} 
\end{align}
Here, the second equality follows by applying
the inductive hypothesis at time $t$.
By definition of $x_{t+1}$ from~\eqref{eq:omwu-dual-components},
we have $x_{t+1} = x_t - \eta (2 Aq_t - Aq_{t-1})$
for $t + 1\ge 2$ (and note that a
similar derivation also holds if $t = 1$).
As~\eqref{eq:sderive-01} holds for all 
coordinates $i \in [m]$, we
conclude that $p_{t+1} = \softmax_m(x_{t+1})$.
By an identical calculation, 
we similarly find $q_{t+1} = \softmax_n(y_{t+1})$.
Thus it holds inductively that 
$w_{t} = (\softmax_m(x_{t}), \softmax_n(y_{t})) 
= \nabla F(z_{t})$ for all $t \ge 0$.

For the second claim, 
we have by definition of $\{z_t\}$
from~\eqref{eq:zt-wt-def-full} that
$z_1 = z_0 -  \eta J \nabla F(z_1)$
at time $t=1$,
and at times $t+1 \ge 2$:
\begin{equation*}
	z_{t+1} = z_t - 2 \eta J\nabla F(z_t) + \eta J\nabla F(z_{t-1})
	= z_t - \eta J \nabla F(z_t) - \eta J(\nabla F(z_t) - F(z_{t-1})) \;.
\end{equation*}
This concludes the proof. 
\hfill ~ $\blacksquare$

\smallskip

\paragraph{Equivalence of~\eqref{eq:omwu-primal}
and the OFTRL Instantiation.}
As a corollary of Proposition~\ref{prop:omwu-skew-grad},
we also have that the OFTRL iterates from~\eqref{eq:omwu-oftrl} 
(equivalently~\eqref{eq:oftrl-primal-joint-full})
are also identical to those of~\eqref{eq:omwu-primal}.
This follows from the fact that
$\nabla F(z_t) = (\softmax_m(x_t), \softmax_n(y_t))$
for all $t \ge 1$
(from the first part of Proposition~\ref{prop:omwu-skew-grad}),
and from the characterization
of the iterates of~\eqref{eq:omwu-oftrl} from
Proposition~\ref{prop:oftrl-softmax}.

\subsubsection{Skew-Gradient Flow and Standard
Skew-Gradient Descent}
\label{app:omwu-prelims:omwu-related}

Here, we briefly review the continuous-time 
\textit{skew-gradient flow} 
and the standard forward discretization
\textit{skew-gradient descent}.

\paragraph{Skew-gradient flow.}
The skew-gradient flow dynamics is the continuous-time 
ODE: 
\begin{equation}
	\dot z(t) = - J \nabla F(z(t)) \;. 
	\label{eq:skew-grad-flow}
	\tag{Skew-gradient flow}
\end{equation}
By the chain rule, observe that energy is 
\textit{conserved} under~\eqref{eq:skew-grad-flow},
as we can compute
\begin{equation*}
	\frac{d}{dt} F(z(t))
	= 
	\langle
	\nabla F(z(t)), \dot z(t)
	\rangle
	= 
	-\langle
	\nabla F(z(t)),
	J \nabla F(z(t))
	\rangle
	= 0 \;,
\end{equation*}
where the final equality is due to the skew-symmetry of
$J = - J^\top$.
Geometrically, this means that the skew-gradient
vector $J \nabla F(z)$ is \textit{orthogonal}
to the gradient $\nabla F(z)$.
Hence, the skew-gradient flow follows directions
that are tangent to the energy levelset at
the current point, thus conserving energy. 

\paragraph{Skew-gradient descent.}
The first-order forward (Euler) discretization
of~\eqref{eq:skew-grad-flow} is
the forward skew-gradient descent.
This yields the 
dual iterates corresponding to the standard MWU algorithm
from~\eqref{eq:standard-mwu-primal}
with stepsize $\eta > 0$:
\begin{equation}
	z_{t+1} = z_t - \eta J \nabla F(z_t)  \;.
	\label{eq:mwu-dual} 
	\tag{MWU Dual}
\end{equation}
By convexity of $F$, 
the one-step change in energy under~\eqref{eq:mwu-dual} is 
always non-decreasing, since
\begin{equation*}
	F(z_{t+1})
	- F(z_t)
	\ge 
	\langle \nabla F(z_t), z_{t+1} - z_t \rangle
	= 
	- \eta \langle \nabla F(z_t), J \nabla F(z_t) \rangle
	= 
	0 \;,
\end{equation*}
where again the final equality is due
to $J = -J^\top$.

In contrast to the forward discretization,
we show for~\eqref{eq:omwu-dual} that
the additional optimistic correction 
term (which can be viewed as an approximation
of a backward discretization of skew-gradient flow),
leads to strictly dissipating energy
(see Lemma~\ref{lem:energy-one-step-full}).

See also~\cite{wibisono2022alternating}
and~\cite{katona2024symplectic}
for additional background on skew-gradient flows
and their discretizations. 


\section{Details on Equivalence of KL Minimization and 
Energy Dissipation}
\label{app:omwu-prelims:kl-energy}

The energy function $F$ and the regularizer $R$
are closely related via duality. 
Here, we specify several key aspects of this relationship in 
more detail. For the OMWU iterates, this ultimately allows
for establishing an equivalence between minimizing 
KL divergence in the primal space, and energy dissipation
in the effective dual space.

\paragraph{Organization of Section.}
This section is organized as follows:
\begin{itemize}[
	leftmargin=1em
]
\item
\textbf{Section~\ref{app:energy-kl-equiv:duality}} 
gives more details on the dual relationship
of the log-sum-exp energy function $F$ and the 
negative-entropy regularizer $R$, 
and their interplay over the effective dual space $\calZ$.
This includes establishing a certain
modified Fenchel-Young identity (Proposition~\ref{prop:restricted-fenchel})
and proving that $\calZ$ is orthogonal 
to the interior NE of $A$ (Proposition~\ref{prop:dual-subspace-property}).

\item
\textbf{Section~\ref{app:energy-kl-equiv:prop-proof}}
gives the proof of Proposition~\ref{prop:change-kl-energy-equiv},
which establishes an equivalence between
differences in $\KL$ divergence
and differences in energy $F$ over the effective dual space $\calZ$.
The proof of the proposition relies on the properties
introduced in Section~\ref{app:energy-kl-equiv:duality}.

\item
\textbf{Section~\ref{app:energy-kl-equiv:energy-suboptimality}}
establishes additional properties of the energy gradient map 
$\nabla F$ over the effective dual space $\calZ$. 
This culminates in establishing, under the assumption of a
unique and interior NE, that the KL divergence
from Nash is exactly the energy suboptimality
gap in the effective dual space (Proposition~\ref{prop:calZ-minimizer}).

\end{itemize}

\subsection{Duality of Regularizer and Energy Functions}
\label{app:energy-kl-equiv:duality}

The negative entropy regularizer $R$
is the \textit{convex (Fenchel) conjugate} 
$F^\star$ of the log-sum-exp energy function $F$.
Here, recall by definition that $F^*: \R^{m+n} \to \R$ is 
the convex function given by
\begin{equation}
	F^*(w) 
	\;:=\;
	\sup\nolimits_{z \in \R^{m+n}}\, 
	\big\{
		\langle w, z \rangle
		- F(z)
	\big\} \;.
	\label{eq:H-conjugate}
\end{equation}
for all $w \in \R^{m+n}$.
As $F$ and $R$ are both separable,
the fact that $F^* = R$
follows from standard relationships between
the log-sum-exp and negative entropy functions. 
Moreover, it is also well known that 
the Bregman divergence $D_R(w', w)$
induced by the negative entropy regularizer $R$
is exactly the $\KL$ divergence $\KL(w', w)$. 
These relationships are stated
in the following proposition:

\smallskip

\begin{restatable}{prop}{propconjugatekl}
	\label{prop:conjugate-energy-KL}
	Let $F\colon \R^{m+n} \to \R$ and $R\colon \calW \to \R$
	be the energy and regularizer functions defined in~\eqref{eq:energy} and~\eqref{eq:R-def-full},
	respectively. Let $F^*$ be the convex conjugate
	of $F$ as in~\eqref{eq:H-conjugate}.
	Then the following two properties hold:
	\begin{enumerate}[
		label={(\roman*)},
		leftmargin=3em,
	]
		\item
		$F^* = R$, 
		meaning $F^*(w) = R(w)$
		for $w \in \calW$, and $F^*(w) = \infty$ otherwise.
		\item
		For $w, w' \in \relint(\calW)$: 
		$\KL(w', w) = 
		D_R(w', w)
		= R(w') - R(w) - \langle \nabla R(w), w' - w\rangle$.
	\end{enumerate}	
\end{restatable}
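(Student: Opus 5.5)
Both parts are standard facts about the log-sum-exp / negative-entropy pair. Since $F$, $R$, $\KL$ and the indicator of $\calW$ all split over the $x$- and $y$-blocks, I would prove each claim for a single block $\LSE_k$ on $\R^k$ and then add the two blocks.

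\textbf{Part (i).} For a single block I would compute $\LSE_k^*(p) = \sup_{x \in \R^k} \{\langle p, x \rangle - \LSE_k(x)\}$ by cases on $p$.
\begin{itemize}
\item If some coordinate $p(i) < 0$, take $x = -s e_i$ with $s \to \infty$. The objective grows like $s\,|p(i)| - \log(k-1+e^{-s})$, which diverges, so the conjugate is $+\infty$.
\item If $p \ge 0$ but $\langle p, \1_k\rangle \ne 1$, take $x = s\1_k$. The objective is $s(\langle p,\1_k\rangle - 1) - \log k$, which diverges for $s \to +\infty$ or $s \to -\infty$ depending on the sign of $\langle p,\1_k\rangle - 1$.
\item If $p \in \relint(\Delta_k)$, the objective is concave and differentiable. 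Its stationarity condition $p = \softmax_k(x)$ is solved by $x = \log p$, which gives the value $\sum_i p(i)\log p(i) - \log 1 = -\ent_k(p)$.
\end{itemize}
For $p$ on the relative boundary of $\Delta_k$, I would either use the lower semicontinuity of a conjugate together with the continuity of $-\ent_k$ on $\Delta_k$ (with $0\log 0 = 0$), or send $x(i) \to -\infty$ on the zero coordinates and apply the interior computation to the support. Summing over the two blocks gives $F^* = R$ on $\calW$ and $+\infty$ off $\calW$.

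\textbf{Part (ii).} This is a direct computation. On $\relint(\calW)$ the regularizer has partial derivatives $\nabla R(w) = (\log p + \1_m, \log q + \1_n)$. Then
\[
R(w') - R(w) - \langle \nabla R(w), w'-w\rangle = \sum_i p'(i)\log p'(i) - \sum_i p(i)\log p(i) - \sum_i (\log p(i)+1)(p'(i)-p(i)) + (\text{$q$-terms}).
\]
The constant terms vanish because $\langle \1_m, p'-p\rangle = 0$ for $p, p' \in \Delta_m$. What remains is $\sum_i p'(i)\log(p'(i)/p(i)) = \KL(p', p)$, and adding the analogous $q$-block gives $\KL(w', w)$.

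\textbf{Expected obstacle.} The only delicate point is handling the domain correctly in (i): the boundary of the simplex and the points off the affine hull $\{\langle p,\1\rangle = 1\}$. The shift-direction argument $x = s\1_k$ settles the affine constraint, and the lower-semicontinuity argument settles the boundary. A second subtlety in (ii) is that $R$ is really defined only on the simplex, so $\nabla R$ is determined only up to adding vectors in $\calS$. That ambiguity is harmless, because any such vector is orthogonal to $w'-w$.
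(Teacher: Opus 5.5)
Your proposal is the standard argument that the paper defers to; the paper omits the proof and cites Boyd and Vandenberghe. That argument consists of three test-point cases for $\LSE_k^*$ ($x=-se_i$, $x=s\1_k$, and the stationary point $x=\log p$), summed over the two blocks, plus the direct Bregman expansion for (ii). You carry out the expansion correctly, and you rightly note that the $\calS$-ambiguity in $\nabla R$ is harmless.

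One step needs fixing. At a relative-boundary point $p\in\Delta_k$, lower semicontinuity of $\LSE_k^*$ together with continuity of $-\ent_k$ only gives $\LSE_k^*(p)\le\liminf_{p'\to p}\LSE_k^*(p')=-\ent_k(p)$. So it does not by itself ``settle the boundary'' as you claim; it gives one inequality only.

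The reverse inequality comes from your other option. Let $S=\mathrm{supp}(p)$, and set $x_s(i)=\log p(i)$ on $S$ and $x_s(i)=-s$ off $S$. Then $\langle p,x_s\rangle-\LSE_k(x_s)=-\ent_k(p)-\log\bigl(1+(k-|S|)e^{-s}\bigr)\to-\ent_k(p)$, which gives $\LSE_k^*(p)\ge-\ent_k(p)$.

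You can close the gap in either of two ways:
\begin{itemize}
\item Combine lower semicontinuity (for the upper bound) with this sequence (for the lower bound).
\item Make the support argument two-sided. For every $x$, $\LSE_k(x)\ge\LSE_{|S|}(x_S)$, so every objective value is bounded by the interior conjugate on $\Delta_{|S|}$ evaluated at $p_S$, which equals $-\ent_k(p)$.
\end{itemize}
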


The proof is standard and thus we omit it here 
(see, e.g.,~\cite{boyd2004convex}, Examples 3.19 and 3.25).
Moreover, by standard properties of convex conjugates,
note also that the energy $F = R^*$ is
the conjugate function of $R$.

\subsubsection{Modified Fenchel-Young Identity}

Due to the dual relationship of $F$ and $R$,
we additionally establish the following useful 
properties that relate $\KL$ divergence to energy. 
Specifically, we prove the following proposition:

\smallskip

\begin{restatable}[Modified Fenchel-Young Identity]
	{prop}{proprestrictedfenchel}
	\label{prop:restricted-fenchel}
	Let $F\colon \R^{m+n} \to \R$ and $R\colon \calW \to \R$
	be the energy and regularizer functions 
	from~\eqref{eq:energy} and~\eqref{eq:R-def-full},
	respectively.
	Fix $z \in \R^{m+n}$, and let $w = \nabla F(z) \in \relint(\calW)$.
	Then the following properties hold: 
	\begin{enumerate}[
		label={(\roman*)}
	]
	\item
	$R(w) + F(z) = \langle w, z \rangle$.
	\item
	$\KL(w', w) = D_R(w', w) = R(w') + F(z) - \langle z, w'\rangle$
	for any $w' \in \relint(\calW)$.
	\end{enumerate}
\end{restatable}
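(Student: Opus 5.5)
The plan is a direct computation from the softmax form of $w = \nabla F(z)$, exploiting that both $F$ and $R$ are separable. Write $z = (x, y)$ and $w = (p, q)$. By \eqref{eq:F-grad-full}, $p = \softmax_m(x)$ and $q = \softmax_n(y)$. Hence for every $i \in [m]$ we have $\log p(i) = x(i) - \LSE_m(x)$, and for every $j \in [n]$ we have $\log q(j) = y(j) - \LSE_n(y)$. These two coordinatewise identities are the only input needed. I will carry out the argument for the $p$-block; the $q$-block is identical, and summing the two blocks gives the joint statements.

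For part (i), substitute the log identity into $R_m(p) = \sum_i p(i)\log p(i)$. This gives $R_m(p) = \sum_i p(i) x(i) - \LSE_m(x)\sum_i p(i) = \langle p, x\rangle - \LSE_m(x)$, where the last step uses $\sum_i p(i) = 1$. The same calculation gives $R_n(q) = \langle q, y\rangle - \LSE_n(y)$. Adding the two yields $R(w) + F(z) = \langle w, z\rangle$.

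For part (ii), the first equality $\KL(w', w) = D_R(w', w)$ is Proposition~\ref{prop:conjugate-energy-KL}(ii), so only the second equality needs proof. The cleanest route avoids $\nabla R$, which is only defined up to the shift directions $\calS$ on the simplex. Instead I expand the KL divergence directly. Write $\KL(p', p) = \sum_i p'(i)\log p'(i) - \sum_i p'(i)\log p(i)$. Substituting $\log p(i) = x(i) - \LSE_m(x)$ turns the second sum into $\langle p', x\rangle - \LSE_m(x)$, again because $\sum_i p'(i) = 1$. This gives $\KL(p', p) = R_m(p') + \LSE_m(x) - \langle p', x\rangle$. Doing the same for the $q$-block and summing gives $\KL(w', w) = R(w') + F(z) - \langle z, w'\rangle$.

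As a consistency check, one can also derive this from $D_R$ using part (i). Take $\nabla R(w) = (\log p + \1_m, \log q + \1_n)$. Its inner product with $w' - w$ kills the $\1$ terms, because both $p'$ and $p$ sum to one (and likewise $q'$ and $q$). Likewise, $\log p$ differs from $x$ only by a multiple of $\1_m$, and that multiple is annihilated when paired with $w' - w$. So $D_R(w', w) = R(w') - R(w) - \langle z, w' - w\rangle$, and inserting part (i) for $R(w)$ recovers the same formula.

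There is no real obstacle here. The one point requiring care is that $z$ is an arbitrary dual point, not a normalized one. The argument must therefore consistently use the facts that $p', p$ and $q', q$ lie on the simplex, so that the constant shifts $\LSE_m(x)\1_m$ and $\LSE_n(y)\1_n$ drop out of every inner product with a difference of primal points.
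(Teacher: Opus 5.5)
Your argument is correct, but it takes a more elementary route than the paper. The paper proves (i) by invoking the equality case of the Fenchel--Young inequality together with the conjugacy $F^* = R$ from Proposition~\ref{prop:conjugate-energy-KL}, whose proof it omits. It proves (ii) by expanding $D_R(w',w)$, substituting (i) for $R(w)$, and using Proposition~\ref{prop:primal-map} to write $\nabla R(w) = z + s$ with $s \in \calS$; the shift $s$ is then annihilated by $w' - w$. This is exactly your ``consistency check.'' You instead work coordinatewise from the single identity $\log p(i) = x(i) - \LSE_m(x)$ and its $q$-analogue. Direct expansion then gives both (i) and the second equality in (ii), with no convex conjugates and no $\nabla R$. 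In fact (i) is just your formula for (ii) evaluated at $w' = w$, since $\KL(w,w)=0$. Your approach buys self-containedness, and it makes transparent why the unnormalized $z$ is harmless: the $\LSE$ shifts drop out because primal points sum to one. The paper's approach buys structure. It exhibits the identity as a consequence only of the conjugate pair $(F,R)$ and the $\calS$-invariance of $\nabla F$, so the same argument carries over to other OFTRL regularizer/energy pairs, not just the entropic one.
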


The first statement
is due to the standard Fenchel-Young inequality
(see, e.g.,~\cite{shalev2012online}, Section 2.7).
The second statement requires first introducing
several additional and important properties
of the energy function, the negative entropy regularizer,
and their gradients.
We first introduce these components
and then proceed with the proof of
Proposition~\ref{prop:restricted-fenchel}
further below.

\paragraph{Energy function lacks global strict convexity.}
Recall from expression~\eqref{eq:calZ-calS-def} in
Section~\ref{sec:omwu} that the linear subspace $\calS$
of constant shift directions is defined as
\begin{equation*}
	\textstyle
	\calS =
	\Span\big(
		\big(\begin{smallmatrix}\1_m \\  0\end{smallmatrix}\big),
		\big(\begin{smallmatrix} 0 \\  \1_n \end{smallmatrix}\big)
	\big)\;.
\end{equation*}
While the regularizer $R$ is strictly convex over 
the primal space $\calW = \Delta_m \times \Delta_n$,
the energy function $F$ is \textit{not} globally
strictly convex over $\R^{m+n}$.
In particular, the gradients of $F$ are invariant
to constant shifts (i.e., in directions $s \in \calS$),
meaning that $F$ is affine in those directions.
Specifically, the following properties hold:

\begin{restatable}[Energy function is not strictly convex]
	{prop}{propenergylinear}
	\label{prop:energy-linear}
	Let $F: \R^{m+n} \to \R$ be the energy function
	from~\eqref{eq:energy}.
	Fix any $z = (x, y) \in\R^{m+n}$.
	Then for any $s = (d \1_m, d' \1_n) \in \calS$ 
	for some $d, d' \in \R$:
	\begin{equation*}
		\nabla F(z + s) = 
		\nabla F(z) 
		\;\;\text{and}\;\;
		F(z + s) 
		= 
		F(z) + (d + d') \;.
	\end{equation*}
\end{restatable}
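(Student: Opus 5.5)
The plan is a direct computation, treating the $x$- and $y$-blocks separately and then summing; this works because $F=\LSE_m+\LSE_n$ is separable and $s=(d\1_m,d'\1_n)$ shifts each block by a constant.

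\textbf{Step 1 (value of $F$).} First handle a single block. For $x\in\R^m$ and $d\in\R$,
\[
\LSE_m(x+d\1_m)=\log\Big(\textstyle\sum_{i=1}^m e^{x(i)+d}\Big)=\log\Big(e^{d}\textstyle\sum_{i=1}^m e^{x(i)}\Big)=\LSE_m(x)+d .
\]
The same computation gives $\LSE_n(y+d'\1_n)=\LSE_n(y)+d'$. Adding the two identities yields $F(z+s)=F(z)+(d+d')$.

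\textbf{Step 2 (gradient of $F$).} By~\eqref{eq:F-grad-full}, $\nabla F(z+s)=(\softmax_m(x+d\1_m),\softmax_n(y+d'\1_n))$. For each coordinate $i$, the common factor $e^{d}$ appears in both numerator and denominator of $\softmax_m(x+d\1_m)(i)$ and cancels, giving $\softmax_m(x)(i)$. The $y$-block cancels in the same way with $e^{d'}$. Hence $\nabla F(z+s)=\nabla F(z)$.

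As an alternative route to the gradient claim, one can differentiate the affine identity from Step 1 with respect to $z$: the derivative of the constant $d+d'$ vanishes, so the gradient identity follows directly.

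There is no real obstacle here. The only point needing care is to allow independent shifts $d$ and $d'$ on the two blocks, which the separability of $F$ handles automatically.
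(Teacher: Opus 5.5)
Your proposal is correct and matches the paper's proof essentially verbatim: the factor $e^{d}$ (resp.\ $e^{d'}$) cancels in the softmax to give the gradient identity, and pulling it out of the log-sum-exp gives the value identity, with the two separable blocks summed. Your alternative of differentiating $F(z+s)=F(z)+(d+d')$ in $z$ for fixed $s$ is also valid.
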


\begin{proof}
	Recall from~\eqref{eq:F-grad-full} that
	$\nabla F(z) = (\softmax_m(x), \softmax_n(y))$.	
	Then using the definition of $\softmax_k(\cdot)$, we have
	\begin{align*}
		\nabla F(z+s) 
		&= (\softmax_m(x + d\1_m), \softmax_n(y + d' \1_n)) \\
		&=(\softmax_m(x), \softmax_n(y))
		= \nabla F(z)\;.
	\end{align*}
	Similarly, using the definition of $F = \LSE_m + \LSE_n$,
	we have 
	\begin{equation*}
		F(z + s) 
		= 
		\LSE_m(x + d\1_m)
		+ \LSE_n(y + d' \1_n)
		= F(z) + (d + d') \;,
	\end{equation*}
	which follows from the definition of log-sum-exp.
\end{proof}

\smallskip

\paragraph{Gradient of negative entropy regularizer.}
Let $\nabla R \colon \relint(\calW) \to \R^n$
denote the gradient map 
$\nabla R(w) = (\nabla_p R(w), \nabla_q R(w))$
for $w = (p, q) \in\relint(\calW)$.
By definition of $R$ and the negative entropy functions
from~\eqref{eq:R-def-full}, it follows that 
\begin{equation}
	\nabla R(w)
 	= 
 	\begin{pmatrix}
 	\log(p) + \1_m \\
 	\log(q) + \1_n
 	\end{pmatrix} \;,
 	\label{eq:R-grad}
\end{equation}
where $\log(p) = (\log(p(1)), \dots, \log(p(m))) \in \R^m$ and 
$\log(q) = (\log(q(1)), \dots, \log(q(n))) \in \R^n$. 

\smallskip 

Although $R = F^*$ is the convex conjugate of $F$, 
the energy function's lack of global strict convexity 
shown in Proposition~\ref{prop:energy-linear}
means that $F$ is not a \textit{Legendre function}
(see, e.g.,~\cite{cesa2006prediction}, Section 11.2 for a definition). 
The main consequence of this fact is that 
the regularizer gradient $\nabla R = \nabla F^*$ \textit{is not} 
the inverse map of $\nabla F$ 
(since, by Proposition~\ref{prop:energy-linear},
$\nabla F \colon \R^{m+n} \to \R$ is not injective).
However, the map $\nabla R \colon \relint(\calW) \to \R^{m+n}$
does serve as an inverse to $\nabla F$ over
the quotient space $\R^{m+n} / \calS$, 
as formalized in the following proposition: 

\smallskip

\begin{restatable}[Gradient map of regularizer]
{prop}{propprimalmap}
	\label{prop:primal-map}
	Let $R \colon \calW \to \R^n$ be the joint regularizer
	from~\eqref{eq:R-def-full}.
	Fix $z = (x, y) \in \R^{m+n}$, and let 
	$w = (p, q) = \nabla F(z)$. 
	Then there exists $s \in \calS$ such that 
	\begin{equation*}
		\nabla R(w)
		= 
		z + s \;.
	\end{equation*} 
	Specifically, $s = (c_x \1_m, c_y \1_n)$, 
	where $c_x = 1 - \LSE_m(x) \in \R$
	and $c_y = 1 - \LSE_n(y) \in \R$.
\end{restatable}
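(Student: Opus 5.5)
The proof will be a direct coordinatewise computation, using the closed form of $\nabla R$ from~\eqref{eq:R-grad} and the softmax form of $\nabla F$ from~\eqref{eq:F-grad-full}. Fix $z = (x,y)$ and write $w = (p,q) = \nabla F(z)$, so that $p = \softmax_m(x)$ and $q = \softmax_n(y)$. The first step is to take logarithms of the softmax coordinates. For every $i \in [m]$,
\begin{equation*}
	\log p(i) = x(i) - \log\Big(\ssum\nolimits_{\ell=1}^m \exp(x(\ell))\Big) = x(i) - \LSE_m(x),
\end{equation*}
and likewise $\log q(j) = y(j) - \LSE_n(y)$ for every $j \in [n]$.

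Next I would substitute these into~\eqref{eq:R-grad}. The $p$-block becomes $\nabla_p R(w) = \log(p) + \1_m = x + (1 - \LSE_m(x))\1_m = x + c_x \1_m$. In the same way, the $q$-block is $\nabla_q R(w) = y + c_y\1_n$. Stacking the two blocks gives $\nabla R(w) = z + s$ with $s = (c_x \1_m, c_y \1_n)$. This $s$ lies in $\calS$ by the definition in~\eqref{eq:calZ-calS-def}, which proves the claim together with the explicit constants.

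There is no real obstacle here. The only points needing care are that $w \in \relint(\calW)$, since softmax outputs are strictly positive, so that $\log p$ and $\log q$ are finite, and that $\nabla R$ keeps the ``$+\1$'' term coming from differentiating $p(i)\log p(i)$. That term is exactly what produces the constant $1$ in $c_x$ and $c_y$.
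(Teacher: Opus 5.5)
Your proposal is correct and matches the paper's proof. Both take logarithms of the softmax coordinates to get $\log(p) + \1_m = x + (1-\LSE_m(x))\1_m$ (and similarly for $q$), then substitute into the closed form of $\nabla R$ from~\eqref{eq:R-grad}, with your remarks on positivity of the softmax outputs and the origin of the constant $1$ being exactly the relevant details.
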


\begin{proof} 
	By definition of $F$ and the assumption that
	$w = (p, q) = \nabla F(z)$,
	we have $p = \softmax_m(x)$
	and $q = \softmax_n(y)$.
	Using the definition of $\softmax_m$
	and $\softmax_n$, it follows that
	\begin{equation}
		\log(p) + \1_m = 
		\begin{pmatrix}
			\log\Big(\tfrac{\exp(x(1))}{\sum_{j=1}^m \exp(x(j))}\Big) + 1 \\
			\vdots 
			\\ 
			\log\Big(\tfrac{\exp(x(m))}{\sum_{j=1}^m \exp(x(j))}\Big) + 1 \\
		\end{pmatrix}
		= x + (1 - \LSE_m(x)) \cdot \1_m\;.
		\label{eq:logp-x-calc}
	\end{equation}
	By a similar calculation, we also have
	$\log(q) + \1_n = y + (1-\LSE_n(y))\cdot \1_n$. 
	Substituting these simplifications into the 
	definition of $\nabla R(w)$ from~\eqref{eq:R-grad} 
	yields the stated claim.
\end{proof}

\smallskip

In contrast to Proposition~\ref{prop:primal-map},
due to the strict convexity of the joint negative 
entropy regularizer $R$, the energy gradient $\nabla F$
\textit{does} serve as an inverse to $\nabla R$.
We state and prove this in the following proposition,
which is used in later subsections.

\begin{restatable}[Energy gradient is inverse of entropy gradient]
	{prop}{propeentropygradientinverse}
	\label{prop:entropy-grad-inverse}
	Let $F$ and $R$ be the functions from~\eqref{eq:energy}
	and~\eqref{eq:R-def-full}, respectively.
	Fix any $w = (p, q) \in \relint(\calW)$.
	Then $\nabla F(\nabla R(w)) = w$.
\end{restatable}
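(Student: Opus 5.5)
The argument is a direct computation, carried out blockwise. By \eqref{eq:R-grad}, for $w=(p,q)\in\relint(\calW)$ we have $\nabla R(w) = (\log(p)+\1_m,\ \log(q)+\1_n)$. By \eqref{eq:F-grad-full}, $\nabla F$ acts blockwise as $(\softmax_m,\softmax_n)$. So it suffices to show separately that $\softmax_m(\log(p)+\1_m) = p$ and $\softmax_n(\log(q)+\1_n) = q$.

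For the first block, invoke the shift invariance of softmax, which is the content of Proposition~\ref{prop:energy-linear} with $s=(\1_m,\1_n)\in\calS$. It gives $\nabla F(\nabla R(w)) = \nabla F((\log(p),\log(q)))$. Then compute coordinatewise:
\[
\softmax_m(\log p)(i)=\frac{\exp(\log p(i))}{\sum_{k=1}^m \exp(\log p(k))}=\frac{p(i)}{\sum_{k=1}^m p(k)}=p(i).
\]
The last equality uses $p\in\Delta_m$, so the denominator equals one. The strict positivity $p(i)>0$ (from $w\in\relint(\calW)$) guarantees that $\log p(i)$ is finite, so the expression is well defined. The $q$ block is identical.

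No genuine obstacle arises. The only point requiring care is the interiority assumption: it makes $\nabla R(w)$ finite and lets normalization over the simplex cancel the denominator. With that noted, combining the two blocks yields $\nabla F(\nabla R(w)) = w$.
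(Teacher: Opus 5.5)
Your proof is correct and follows essentially the same route as the paper: write $\nabla R(w)$ blockwise as $\log p$ and $\log q$ plus constant vectors, then check that applying the blockwise softmax returns $p$ and $q$ because each sums to one. The only difference is that you explicitly cite the shift invariance of Proposition~\ref{prop:energy-linear} to discard the additive constant, which the paper leaves implicit in its ``straightforward calculation.''
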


\begin{proof}
	Let $v = (v_x, v_y) = \nabla R(w) \in \R^{m+n}$.
	By definition of $\nabla R$ 
	from~\eqref{eq:R-grad}, this means that
	$v_x = (\log_m(p) + \1_m)$
	and $v_y = (\log_n(q)+ \1_n)$.
	By a straightforward calculation, observe that 
	\begin{equation*}
		\begin{aligned}
		\nabla \LSE_m (v_x)
		&= \softmax_m(v_x)
		= \softmax_m(\log_m(p) + \1_m)
		= p  \\
		\text{and}\;
		\nabla \LSE_n (v_y)
		&= \softmax_n(v_y)
		= \softmax_n(\log_n(q) + \1_n)
		= q \;.
		\end{aligned}
	\end{equation*}
	As $\nabla F(v) = (\softmax_m(v_x), \softmax_n(v_y))$,
	we conclude that $\nabla F(\nabla R(w)) = \nabla F(v) = w$.
\end{proof}

\medskip

We are now equipped to give the proof of the ``modified''
Fenchel-Young identity of Proposition~\ref{prop:restricted-fenchel}:

\smallskip

\paragraph{Proof of Proposition~\ref{prop:restricted-fenchel}.}
For part (i), note that as $F$ is convex and differentiable,
the standard Fenchel-Young inequality 
(see, e.g.,~\cite{boyd2004convex}, Section 3.3.2) says that,
for any $z \in \R^{m+n}$ and $w' \in \calW$:
\begin{equation*}
	F(z) + F^*(w') \ge \langle w', z \rangle \;,
\end{equation*}
where equality holds if and only if $w' = \nabla F(z)$.
Thus part (i) follows from the assumption that 
$w = \nabla F(z)$ and that $R = F^*$ (as established in 
the first claim of Proposition~\ref{prop:conjugate-energy-KL}).

For part (ii), recall from the second claim of 
Proposition~\ref{prop:conjugate-energy-KL} that
$\KL(w', w) = D_R(w', w)$. 
Using the definition of $D_R(w', w)$, we can then write
\begin{align}
	\KL(w', w) 
	= 
	D_R(w', w) 
	&= 
	R(w') - R(w) - \langle \nabla R(w), w' - w\rangle \nonumber \\
	&= 
	R(w') + F(z) - \langle w, z\rangle  - \langle \nabla R(w), w' - w\rangle \;,
	\label{eq:rfy-01}
\end{align}
where the second equality follows from rewriting $R(w)$ using
part (i) of the proposition. 
Moreover, by the assumption that $w = \nabla F(z)$, 
it follows from Proposition~\ref{prop:primal-map}
that $\nabla R(w) = z + s$ for some $s \in \calS$. 
Thus we can further simplify 
the final term of~\eqref{eq:rfy-01} and write
\begin{equation}
	\langle \nabla R(w), w' - w \rangle
	= 
	\langle z + s, w'  - w\rangle 
	= 
	\langle z, w' - w \rangle \;.
	\label{eq:rfy-02}
\end{equation}
Here, observe that the term $\langle s, w' - w\rangle = 0$
by definition of $s \in \calS$ and $w, w' \in \calW = \Delta_m \times \Delta_n$. 
Then substituting~\eqref{eq:rfy-02} into~\eqref{eq:rfy-01}, we find
\begin{align*}
	\KL(w', w) 
	&= 
	R(w') + F(z) - \langle w, z \rangle 
	- \langle z, w' - w\rangle 
	=  
	R(w') + F(z) - \langle z, w' \rangle 
	\;,
\end{align*}
which concludes the proof.
\hfill ~ $\blacksquare$

\subsubsection{Effective Dual Space}

Recall from Section~\ref{sec:omwu-main:primal-dual}
that we define the linear subspace $\calZ \subseteq \R^{m+n}$ by 
\begin{equation}
\textstyle
	\calZ := \Span(J \calW)
	= 
	\big\{
		\sum_{i=1}^k \tau_i \cdot J w_i
		\,\,\text{for}\,\, k \in \mathbb{N}\,,
		\tau_i \in \R\,, 
		w_i \in \calW
	\big\}.
	\label{eq:calZ-def}
\end{equation}

We refer to $\calZ$ as the \textit{effective dual space},
as by~\eqref{eq:zt-wt-def}, 
all dual OMWU iterates $\{z_t\}$ lie in $\calZ$.

\paragraph{Effective Dual Space is Orthogonal to Interior NE.}
The defining structural property of $\calZ$ is that,
when $A$ has an interior Nash equilibrium $w^\star$,
every vector $z \in \calZ$ is orthogonal to $w^\star$.
Formally:

\begin{restatable}[Effective dual space is orthogonal to interior Nash]
{prop}{propdualsubspace}
	\label{prop:dual-subspace-property}
	Let $A \in \R^{m \times n}$ have an interior Nash
	equilibrium $w^\star = (p^\star, q^\star) \in \relint(\calW)$.
	Then $\langle z, w^\star \rangle = 0$ 
	for every $z \in \calZ$.
\end{restatable}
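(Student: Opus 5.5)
By linearity of the inner product in its first argument, and since every $z \in \calZ$ is a finite linear combination $\sum_{i} \tau_i J w_i$ with $w_i \in \calW$, it suffices to show that $\langle J w, w^\star \rangle = 0$ for every single $w = (p, q) \in \calW$. The plan is to move $J$ onto $w^\star$ via skew-symmetry and then use the constant-vector structure of $Jw^\star$ from Proposition~\ref{prop:interior-NE}.

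Concretely, I will use $J^\top = -J$ to write $\langle J w, w^\star \rangle = \langle w, J^\top w^\star \rangle = -\langle w, J w^\star \rangle$. From the block form in~\eqref{eq:J-def}, $J w^\star = (A q^\star, -A^\top p^\star)$. By part (i) of Proposition~\ref{prop:interior-NE}, applied to the interior NE $w^\star$ with value $c = \langle p^\star, A q^\star\rangle$, we have $A q^\star = c \1_m$ and $A^\top p^\star = c \1_n$. Hence $J w^\star = (c\1_m, -c\1_n)$.

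Then $\langle w, J w^\star \rangle = c\langle p, \1_m\rangle - c\langle q, \1_n \rangle = c - c = 0$, since $p \in \Delta_m$ and $q \in \Delta_n$ each sum to one. This gives $\langle Jw, w^\star\rangle = 0$ for all $w \in \calW$, and linearity extends this to all of $\calZ$.

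I do not expect a genuine obstacle. The only point needing care is the sign bookkeeping: Proposition~\ref{prop:interior-NE} states $A^\top p^\star = c\1_n$, whereas the main text writes $A^\top q^\star = -c\1_n$. This does not affect the argument, because the key point is that the two blocks of $Jw^\star$ are equal and opposite constants, $c$ and $-c$, and these cancel against the unit masses of $p$ and $q$.
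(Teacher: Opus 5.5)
Your proposal is correct and follows essentially the same route as the paper's proof: write $z$ as a combination of vectors $Jw_i$, use $J^\top = -J$ to move $J$ onto $w^\star$, then apply part (i) of Proposition~\ref{prop:interior-NE} to get $Jw^\star = (c\1_m, -c\1_n)$. This vanishes against each $w_i \in \calW$ because $p_i$ and $q_i$ both sum to one. Your sign bookkeeping is right, and you correctly rely on the appendix statement $A^\top p^\star = c\1_n$ rather than the main-text typo.
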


\begin{proof}
	First, let $c = \langle p^\star, A q^\star\rangle$.
	By the definition of $J$ from~\eqref{eq:J-def} and
	using part (i) of Proposition~\ref{prop:interior-NE}, observe 
	that 
	\begin{equation}
		\textstyle
		Jw^\star 
		= 
		\begin{pmatrix}
			0 & A \\
			-A^\top & 0 
		\end{pmatrix}
        \begin{pmatrix}
		p^\star \\ 
		q^\star
		\end{pmatrix}
        =
		\begin{pmatrix}
		A q^\star \\ 
		-A^\top p^\star
		\end{pmatrix}
		= 
		\begin{pmatrix}
			c \1_m \\ 
			-c \1_n 
		\end{pmatrix} \;.
	\label{eq:czp-01}
	\end{equation}
	Now fix $z \in \calZ$, and observe 
	by definition that we can write
 	$z = J \big(\sum_{i=1}^k \tau_i w_i\big)$
	for some $k \ge 1$, where all $\tau_i \in \R$
	and $w_i = (p_i, q_i) \in \calW = \Delta_m \times \Delta_n$.
	Together with the skew-symmetry of $J = -J^\top$, 
	it then follows that
	\begin{equation}
		\textstyle
		\langle z, w^\star \rangle
		= 
		\Big\langle 
		J \big(\sum_{i=1}^k \tau_i w_i\big), w^\star
		\Big\rangle
		= 
		- \Big\langle
		\big(\sum_{i=1}^k \tau_i w_i\big), J w^\star
		\Big\rangle
		= 
		- \sum_{i=1}^k
		\tau_i \langle w_i, J w^\star \rangle \;.
		\label{eq:czp-02}
	\end{equation}
	Using~\eqref{eq:czp-01}, observe for each $i \in [k]$ that
	\begin{equation*}
		\langle w_i, Jw^\star \rangle
		= 
		\Big\langle 
		\begin{pmatrix}
		p_i \\ q_i 
		\end{pmatrix}, 
		\begin{pmatrix}
		c \1_m \\ -c \1_n 
		\end{pmatrix}
		\Big\rangle
		= 
		c \cdot \big(\langle p_i, \1_m \rangle - \langle q_i, \1_n\rangle\big)
		= 
		0 \;,
	\end{equation*}
	where the final equality comes from the fact that
	$p_i \in \Delta_m$ and $q_i \in \Delta_n$
	are probability distributions.
	Thus each term of~\eqref{eq:czp-02} vanishes,
	and it follows that $\langle z, w^\star \rangle = 0$.
\end{proof}

\subsection{Proof of Proposition~\ref{prop:change-kl-energy-equiv}
-- Change in KL is Change in Energy}
\label{app:energy-kl-equiv:prop-proof}

Using the modified Fenchel-Young identity 
of Proposition~\ref{prop:restricted-fenchel}
and the orthogonality of the effective dual space
$\calZ$ to an interior Nash equilibrium
from Proposition~\ref{prop:change-kl-energy-equiv},
we can now prove the the key structural
result of Proposition~\ref{prop:change-kl-energy-equiv}.
Restated here:

\propchangeklenergy*

\begin{proof} 
	As $w = \nabla H(z)$ and $w' = \nabla H(z')$ by assumption, we have
	by Proposition~\ref{prop:restricted-fenchel} that
	\begin{align*}
		\KL(w^\star, w') &= R(w^\star) + F(z') - \langle z', w^\star \rangle
		\;\;\;\;\text{and}\;\;\;\;
		\KL(w^\star, w) = R(w^\star) + F(z) - \langle z, w^\star \rangle \;.
	\end{align*}
	Taking their difference, we find 
	\begin{equation}
		\KL(w^\star, w') - \KL(w^\star, w) 
		= 
		F(z') - F(z) - \langle z' - z, w^\star \rangle \;.
		\label{eq:dkl-01}
	\end{equation}
	As both $z, z' \in \calZ$ and $w^\star \in \relint(\calW)$,  
	we have from 
	Proposition~\ref{prop:dual-subspace-property} that
	both $\langle z', w^\star \rangle = 0$
	and $\langle z, w^\star \rangle = 0$.
	Thus the final term in~\eqref{eq:dkl-01} vanishes,
	which yields the desired statement. 
\end{proof}

\smallskip

\paragraph{Change in KL is Change in Energy for OMWU iterates.} 
By definition in~\eqref{eq:zt-wt-def}, we
have for the OMWU dual iterates $\{z_t\}$
that $z_t \in \calZ$ for all $t \ge 1$.
Thus as an immediate corollary of
Proposition~\ref{prop:change-kl-energy-equiv}, 
we have the following:

\smallskip

\begin{restatable}[Equivalence between
energy and KL differences for OMWU]
{cor}{corchangeklenergyomwu}
	\label{cor:change-kl-energy-equiv-omwu}
	Let $w^\star$ be the unique and interior NE of $A$,
	and let $\{w_t\}$ and $\{z_t\}$ be the primal and dual iterates
	of OMWU on $A$ with stepsize $\eta > 0$,
	and initialized from $w_0 \in \relint(\calW)$
	and $z_0 \in \calZ$ such that $\nabla F(z_0) = w_0$.
	Then
	\begin{equation*}
	\KL(w^\star, w_{t+1}) - \KL(w^\star, w_t) = 
	F(z_{t+1}) - F(z_t)\;\;\text{for all $t \ge 0$.}
	\end{equation*}
\end{restatable}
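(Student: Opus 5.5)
The plan is to apply Proposition~\ref{prop:change-kl-energy-equiv} to consecutive dual iterates. That proposition needs both points to lie in the effective dual space $\calZ$ and the primal points to be their energy gradients. So the proof reduces to two checks: (a) $z_t \in \calZ$ for every $t \ge 0$, and (b) $w_t = \nabla F(z_t)$ for every $t \ge 0$.

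Check (b) is exactly the first conclusion of Proposition~\ref{prop:omwu-skew-grad}. It applies because the corollary assumes $\nabla F(z_0) = w_0$.

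For check (a), argue by induction using the recursion~\eqref{eq:zt-wt-def-full}.
\begin{itemize}
\item Base case: $z_0 \in \calZ$ by hypothesis.
\item First step: $z_1 = z_0 - \eta J w_0$. Here $J w_0 \in J\calW \subseteq \calZ$, so $z_1 \in \calZ$ because $\calZ$ is a linear subspace.
\item Inductive step for $t+1 \ge 2$: $z_{t+1} = z_t - 2\eta J w_t + \eta J w_{t-1}$. Each term lies in $\calZ$, since $z_t \in \calZ$ by the inductive hypothesis and $w_t, w_{t-1} \in \relint(\calW) \subseteq \calW$. Hence $z_{t+1} \in \calZ$.
\end{itemize}
Equivalently, one can read (a) directly off the closed form~\eqref{eq:zt-wt-def}: $z_{t+1} - z_0$ is a linear combination of the vectors $Jw_k$.

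With (a) and (b) established, fix any $t \ge 0$ and apply Proposition~\ref{prop:change-kl-energy-equiv} with $z = z_t$ and $z' = z_{t+1}$. This gives $\KL(w^\star, w_{t+1}) - \KL(w^\star, w_t) = F(z_{t+1}) - F(z_t)$, which is the claim.

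The uniqueness of the NE is not needed here; interiority alone suffices for Proposition~\ref{prop:dual-subspace-property}.

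The only delicate point is the case $t = 0$. The main text remarks only that $z_t \in \calZ$ for $t \ge 1$, so $z_0 \in \calZ$ must come from the corollary's explicit hypothesis. One also has to confirm that a $z_0 \in \calZ$ with $\nabla F(z_0) = w_0$ exists at all. That existence is the surjectivity of $\nabla F$ on $\calZ$, Proposition~\ref{prop:calZ-grad-map}, which we take as given. Everything else is routine.
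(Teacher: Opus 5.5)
Your proposal is correct and follows the paper's route: the corollary is obtained by applying Proposition~\ref{prop:change-kl-energy-equiv} to $z = z_t$, $z' = z_{t+1}$, using $w_t = \nabla F(z_t)$ from Proposition~\ref{prop:omwu-skew-grad} and $z_t \in \calZ$. Your explicit induction from the hypothesis $z_0 \in \calZ$ is slightly more careful than the paper's one-line remark that $z_t \in \calZ$ for $t \ge 1$, since that remark tacitly also needs $z_0 \in \calZ$.
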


Note that Corollary~\ref{cor:change-kl-energy-equiv-omwu}
assumes that the dual initialization $z_0$
satisfying $\nabla H(z_0) = w_0$
also lies in $z_0 \in \calZ$.
In the following subsection, we establish
additional properties of the energy gradient map $\nabla F$
and the effective dual space $\calZ$ that imply
such an initialization always exists when $A$ has a
unique and interior NE. 
Thus, this assumption on $z_0 \in \calZ$ holds 
without loss of generality.

\subsection{KL Divergence is Energy Suboptimality Gap 
Over Effective Dual Space}
\label{app:energy-kl-equiv:energy-suboptimality}

In this section, we further establish the following properties
about the energy gradient map
when its domain is restricted to the effective dual space $\calZ$.
Specifically, when $A$ has a unique and interior NE $w^\star$,
we establish in this section the following properties:
\begin{itemize}[
	leftmargin=1em
]
\item
\textbf{In Section~\ref{app:energy-kl-equiv:surjective}},
we prove that $\nabla F: \calZ \to \R$ is \textit{surjective}
(Proposition~\ref{prop:calZ-grad-map})
and also \textit{injective}
when the value of the game is centered at zero
(Proposition~\ref{prop:calZ-grad-map-injectivity}).

\item
\textbf{In Section~\ref{app:energy-kl-equiv:energy-gap}},
we prove for $z \in \calZ$ and $w = \nabla F(z) \in \relint(\calW)$
that $\KL(w^\star, w)$ is
exactly the energy suboptimality gap
$F(z) - \min_{z' \in \calZ} F(z')$ over 
the effective dual space (Proposition~\ref{prop:calZ-minimizer}).
\end{itemize}

\subsubsection{Surjectivity (and Injectivity) 
of Energy Gradient Map Over Effective Dual Space}
\label{app:energy-kl-equiv:surjective}

\paragraph{Helper lemma under unique and interior Nash.}
We start with the following helper lemma,
which establishes that the range of $J$
and the orthogonal complement $\calS^\bot$
have only a trivial intersection under the
assumption of a unique and interior NE:

\begin{restatable}
{lem}{lemuniqueinteriorhelper}
	\label{lem:unique-interior-NE-helper}
	Let $A \in \R^{m \times n}$ have a unique and interior NE 
	$w^\star \in \relint(\calW)$.
	Fix any $v \in \R^{m+n}$.
	Suppose that $Jv \in \calS$ and $v \in \calS^\bot$.
	Then $v = 0$.
\end{restatable}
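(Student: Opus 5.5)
The plan is to argue by contradiction: a nonzero $v$ with $Jv \in \calS$ and $v \in \calS^\bot$ would let us perturb $w^\star$ into a second Nash equilibrium, contradicting uniqueness.

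Write $v = (a, b)$ with $a \in \R^m$ and $b \in \R^n$. The condition $v \in \calS^\bot$ means $\langle a, \1_m\rangle = 0$ and $\langle b, \1_n\rangle = 0$. The condition $Jv \in \calS$ means $Jv = (Ab, -A^\top a) = (d\1_m, d'\1_n)$ for some $d, d' \in \R$, so that $Ab = d\1_m$ and $A^\top a = -d'\1_n$.

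Now suppose $v \neq 0$. Since $w^\star$ is interior, there is an $\eps > 0$ small enough that $w_\eps := w^\star + \eps v = (p^\star + \eps a,\, q^\star + \eps b)$ still lies in $\relint(\calW)$. Positivity holds for small $\eps$, and the coordinates still sum to one because $a$ and $b$ sum to zero. By Proposition~\ref{prop:interior-NE}(i), with $c = \langle p^\star, Aq^\star\rangle$ we have $Aq^\star = c\1_m$ and $A^\top p^\star = c\1_n$. It follows that
\[
A(q^\star + \eps b) = (c + \eps d)\1_m, \qquad A^\top(p^\star + \eps a) = (c - \eps d')\1_n .
\]
Both vectors are constant, so Proposition~\ref{prop:interior-NE}(ii) applies and shows that $w_\eps$ is a Nash equilibrium of $A$. (As a byproduct, this also forces $d = -d'$.) Since $w_\eps \neq w^\star$ when $v \neq 0$, this contradicts the uniqueness of $w^\star$. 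Therefore $v = 0$.

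I do not expect a real obstacle here. The only care needed is to check that the perturbed point stays in the simplex, which uses both the interiority of $w^\star$ and the zero-sum condition $v \in \calS^\bot$. One should also note that the sign convention in Proposition~\ref{prop:interior-NE}(i) gives $A^\top p^\star = c\1_n$, and that part (ii) only requires both vectors to be constant, so the particular signs of $d$ and $d'$ play no role.
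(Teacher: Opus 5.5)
Your proposal is correct and follows essentially the same argument as the paper. The paper also perturbs $w^\star$ along $v$, uses $v \in \calS^\bot$ and interiority to keep the perturbed point in $\relint(\calW)$, uses $Jv \in \calS$ with Proposition~\ref{prop:interior-NE} to show the perturbed point is again a Nash equilibrium, and concludes by uniqueness. The only difference is cosmetic: you work component-wise with $(Ab, -A^\top a)$ where the paper writes $Jw_\tau = Jw^\star + \tau Jv \in \calS$ directly.
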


\begin{proof}
	Since $w^\star$ is a unique and interior 
	Nash equilibrium, we have by part (i) 
	of Proposition~\ref{prop:interior-NE}
	that $Jw^\star \in \calS$.
	Now for $\tau \in \R$, 
	define $w_{\tau} = w^\star + \tau v$.
	In other words, $w_{\tau}$
	is on the line through $w^\star$ in the direction
	of $v$.
 	It follows by linearity that
	\begin{equation*}
		J w_{\tau}
		= J (w^\star + \tau v)
		= J w^\star + \tau J v \in \calS \;.
	\end{equation*}
	Here, the inclusion in $\calS$ 
	follows from the assumption that $Jv \in \calS$.

	Now, as also $v \in \calS^\bot$, we have by definition
	that $\langle (\1_m, \1_n), v \rangle = 0$. 
	Thus, since $w^\star$ is interior, there exists
	a sufficiently small $\eps > 0$ such that, 
	for all $\tau \in (-\eps, \eps)$, 
	the point $w_{\tau} \in \relint(\calW)$ is also
	an interior (component-wise) probability distribution.
	Here, note that the fact that $\langle (\1_m, \1_n), v \rangle = 0$
	is what ensures the components of $w_\tau$
	still have coordinates that sum to 1.

	However, since $w^\star$ is unique, and 
	since $Jw_{\tau} \in \calS$, we have
	by part (ii) of Proposition~\ref{prop:interior-NE} that
	$w_{\tau} = w^\star + \tau v = w^\star$ for all 
	$\tau \in (-\eps, \eps)$.
	Thus by the uniqueness of $w^\star$, we must have $v = 0$.
\end{proof}

\medskip

\paragraph{Surjectivity of energy gradient over effective dual space.}
We now show that, when $A$ has a unique and interior NE,
then over the effective dual space $\calZ$,
the gradient map $\nabla F: \calZ \to \relint(\calW)$ is surjective.
Formally, we prove the following:

\smallskip

\begin{restatable}[Surjectivity of $\nabla F$ over $\calZ$]
{prop}{propcalZgradmap}
	\label{prop:calZ-grad-map}
	Let $A \in \R^{m \times n}$ have a unique and interior 
	NE $w^\star = (p^\star, q^\star) \in \relint(\calW)$.
	Then for all $w \in \relint(\calW)$, 
	there exists $z \in \calZ$ such that
	$\nabla F(z) = w$.
\end{restatable}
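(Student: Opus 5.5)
Given $w \in \relint(\calW)$, the natural candidate preimage is $v := \nabla R(w) = (\log p + \1_m, \log q + \1_n)$. By Proposition~\ref{prop:entropy-grad-inverse}, $\nabla F(v) = w$, and by Proposition~\ref{prop:energy-linear}, $\nabla F(v+s) = w$ for every $s \in \calS$. So it suffices to show that every coset $v + \calS$ meets $\calZ$, i.e.\ that
\begin{equation*}
	\calZ + \calS = \R^{m+n} \,.
\end{equation*}
Given this, we write $v = z - s$ with $z \in \calZ$ and $s \in \calS$, and conclude $\nabla F(z) = \nabla F(v + s) = w$.

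To prove $\calZ + \calS = \R^{m+n}$, we pass to orthogonal complements and show $(\calZ + \calS)^\bot = \calZ^\bot \cap \calS^\bot = \{0\}$. The first step is to identify $\calZ$ as a span. Since $\calW$ spans $\R^{m+n}$ (it contains all pairs of standard basis vectors $(e_i, e_j)$, and differences of these give every $(e_i - e_{i'}, 0)$ and $(0, e_j - e_{j'})$, together with the elements $(e_i, e_j)$ themselves), we have $\calZ = \Span(J\calW) = \mathrm{range}(J)$. Hence
\begin{equation*}
	\calZ^\bot = \ker(J^\top) = \ker(J) \,,
\end{equation*}
using $J^\top = -J$.

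Now take $u \in \calZ^\bot \cap \calS^\bot$. Then $Ju = 0 \in \calS$ and $u \in \calS^\bot$, so Lemma~\ref{lem:unique-interior-NE-helper} gives $u = 0$. Therefore $\calZ + \calS = \R^{m+n}$, which completes the argument.

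The main point requiring care is the claim that $\Span(\calW) = \R^{m+n}$, so that $\calZ$ is the full range of $J$ rather than a smaller subspace. This is handled by the explicit basis argument above: the vectors $(e_1, e_1)$, $(e_i - e_1, 0)$, and $(0, e_j - e_1)$ are all in $\Span(\calW)$ and span $\R^{m+n}$. Everything else follows from results already stated.
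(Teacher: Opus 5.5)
Your reduction to $\calZ + \calS = \R^{m+n}$, via $v = \nabla R(w)$ and the shift-invariance of $\nabla F$, is exactly the paper's. However, the step you flagged as needing care is false: $\calW$ does not span $\R^{m+n}$. Every $(p,q) \in \calW$ satisfies $\langle \1_m, p\rangle - \langle \1_n, q\rangle = 0$, so $\Span(\calW)$ is the hyperplane $\{(a,b) : \langle \1_m, a\rangle = \langle \1_n, b\rangle\}$. Your list $(e_1,e_1)$, $(e_i - e_1, 0)$, $(0, e_j - e_1)$ contains only $1 + (m-1) + (n-1) = m+n-1$ vectors.

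Consequently $\calZ = J(\Span(\calW))$ can be a proper subspace of $\mathrm{range}(J)$, and $\calZ^\bot$ can strictly contain $\ker J$. Concretely, $w^\star \in \calZ^\bot$ by Proposition~\ref{prop:dual-subspace-property}, yet $Jw^\star = (c\1_m, -c\1_n) \neq 0$ whenever the game value $c$ is nonzero. For example, take matching pennies plus a nonzero constant: $J$ is invertible, so $\ker J = \{0\}$, while $\calZ^\bot = \Span(w^\star)$. So showing $\ker J \cap \calS^\bot = \{0\}$ does not establish $\calZ^\bot \cap \calS^\bot = \{0\}$. Your final conclusion happens to be true, but the argument as written does not prove it.

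The repair is short, and it is what the paper does. Do not try to get $Ju = 0$; use only the weaker fact $Ju \in \calS$. For $u \in \calZ^\bot$, skew-symmetry gives $\langle Ju, w\rangle = -\langle u, Jw\rangle = 0$ for every $w \in \calW$. Testing with $w = (e_i, e_j)$ gives $(Ju)(i) + (Ju)(m+j) = 0$ for all $i, j$, hence $Ju = (c\1_m, -c\1_n) \in \calS$. Lemma~\ref{lem:unique-interior-NE-helper} requires only $Ju \in \calS$ together with $u \in \calS^\bot$, so it still gives $u = 0$.
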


\smallskip

\begin{proof} 
We prove the claim via the following three steps:

\smallskip
\noindent
\textbf{1. Energy gradient is globally surjective}:

\noindent
First, we show that $\nabla F: \R^{m+n} \to \relint(\calW)$ 
is surjective over the full domain $\R^{m+n}$. 
For this, fix $w \in \relint(\calW)$. Recalling that $R$ is
the joint negative entropy regularizer, observe that
Proposition~\ref{prop:primal-map} shows that
$\nabla R(w) = v$ for some $v \in \R^{m+n}$. 
Using the definition of $\nabla R$, it follows by 
Proposition~\ref{prop:entropy-grad-inverse}
that $\nabla F(v) = \nabla F(\nabla R(w)) = w$.
Thus $\nabla F$ is globally surjective.

\smallskip

\noindent
\textbf{2. Energy gradient is invariant to constant shifts}:

\noindent
Now, again fix $w \in \relint(\calW)$, and let 
$v = \nabla R(w) \in \R^{m+n}$.
Further recall from Proposition~\ref{prop:energy-linear}
that $\nabla F$ is invariant to constant shifts,
meaning for any $s \in \calS$ that
$\nabla F(v + s) = \nabla F(v) = w$.
Thus to prove $\nabla F$ is surjective over $\calZ$,
it is sufficient to establish $v + s \in \calZ$ for some
$s \in \calS$. As $v \in \R^{m+n}$ can lie in the full
space, it suffices to equivalently show that
$\calZ + \calS = \R^{m+n}$. 

\smallskip

\noindent
\textbf{3. Sufficient property holds under a unique and
interior NE}:

\noindent
We will prove that $\calZ + \calS = \R^{m+n}$
holds under the assumption of a unique and interior NE
$w^\star = (p^\star, q^\star) \in \relint(\calW)$. 
For this, first observe that the property $\calZ + \calS = \R^{m+n}$
is equivalent to establishing equality between the orthogonal 
complements of the two sets. 
Since $\calZ$, $\calS$ are linear subspaces, we have
$(\calZ + \calS)^\bot = \calZ^\bot  \cap \calS^\bot$.
Moreover, $(\R^{m+n})^\bot = \{0\}$. 
Thus our goal is to show $\calZ^\bot \cap \calS^\bot = \{0\}$.

For this, pick any $v = (v_x, v_y) \in \calZ^\bot$.
By definition, this means $\langle v, z \rangle = 0$ for every
$z \in \calZ$.
We will show this implies $Jv \in \calS$. 
In particular, for any $w = (p, q) \in \calW$, 
let $z = Jw$. Thus clearly $z \in \calZ$,
and thus we must have $\langle v, z \rangle = 0$. 
As $J = - J^\top$, we can further
express this condition as 
\begin{align*}
	0 
	= 
	\langle v , z \rangle 
	= 
	\langle v, Jw \rangle 
	= 
	- \langle w, J v \rangle  
	= 
	\big( 
	\langle q, A^\top v_x \rangle
	- \langle p, A v_y \rangle
	\big)
\end{align*}
Since this constraint must be satisfied 
for all $w = (p, q) \in \calW$ simultaneously, 
it follows that we must have
$A^\top v_x = c_1 \1_m$ and $A v_y = c_2 \1_n$
for constants $c_1, c_2 \in \R$.
Thus $Jv \in \calS$ by definition of $\calS$.

We have established that $v \in \calZ^\bot$ implies 
$Jv \in \calS$.
Now suppose further that $v \in \calS^\bot$,
meaning $v \in \calZ^\bot \cap \calS^\bot$.
However, since $w^\star$ is the unique and interior NE,
we have by Lemma~\ref{lem:unique-interior-NE-helper}
that $v = 0$.
Thus $\calZ^\bot \cap  \calS^\bot = \{0\}$,
which establishes $\calZ + \calS = \R^{m+n}$
under the assumptions on $w^\star$.
\end{proof}

\medskip

\paragraph{Conditions for injectivity of energy gradient 
over effective dual space.}
We further establish that $\nabla F$ is
injective over $\calZ$ when the value
of the game $A$ is zero. Formally, we prove the following:

\smallskip 

\begin{restatable}[$\nabla F$ injective when game value is zero]
{prop}{propcalZgradmapinjective}
	\label{prop:calZ-grad-map-injectivity}
	Let $A \in \R^{m \times n}$ have a unique and interior 
	NE $w^\star = (p^\star, q^\star) \in \relint(\calW)$.
	If $d = \langle p^\star, A q^\star \rangle = 0$,
	then $\nabla F(z) \neq \nabla F(z')$
	for all $z \neq z' \in \calZ$.
\end{restatable}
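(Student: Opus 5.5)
Since $\nabla F$ is invariant exactly under shifts in $\calS$, I will show that $\calZ \cap \calS = \{0\}$ when the game value is zero; injectivity then follows directly. Suppose $z, z' \in \calZ$ with $\nabla F(z) = \nabla F(z')$, and set $u = z - z' \in \calZ$.

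\textbf{Step 1: $u \in \calS$.} Write $z = (x, y)$ and $z' = (x', y')$. Equality of gradients gives $\softmax_m(x) = \softmax_m(x')$ and $\softmax_n(y) = \softmax_n(y')$. Taking logarithms coordinatewise, $x - x' = (\LSE_m(x) - \LSE_m(x'))\1_m$, and similarly for $y$. Hence $u = (d_1 \1_m, d_2 \1_n)$ for some $d_1, d_2 \in \R$. Equivalently, apply Proposition~\ref{prop:primal-map} to both points: $\nabla R(w) = z + s = z' + s'$ with $s, s' \in \calS$.

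\textbf{Step 2: orthogonality to $w^\star$.} Since $u \in \calZ$, Proposition~\ref{prop:dual-subspace-property} gives $\langle u, w^\star \rangle = 0$. Because $p^\star, q^\star$ are distributions, this reads $d_1 + d_2 = 0$. So $u = d(\1_m, -\1_n)$ with $d = d_1$.

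\textbf{Step 3: the main step, showing $d = 0$.} This is where the zero-value hypothesis must enter. If $c$ denotes the game value, then $J w^\star = (c\1_m, -c\1_n) \in \calZ$, so $(\1_m, -\1_n) \in \calZ$ whenever $c \neq 0$. That is why the hypothesis is needed. With $c = 0$, I will show $(\1_m, -\1_n) \notin \calZ$ by exhibiting a vector $v \in \calZ^\bot$ with $\langle v, (\1_m, -\1_n) \rangle \neq 0$.

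Take $v = (p^\star, -q^\star)$. For any $w = (p, q) \in \calW$,
\[
\langle v, Jw \rangle = \langle p^\star, Aq \rangle + \langle q^\star, A^\top p \rangle = \langle A^\top p^\star, q \rangle + \langle A q^\star, p \rangle .
\]
By Proposition~\ref{prop:interior-NE}(i), $A^\top p^\star = c\1_n$ and $Aq^\star = c\1_m$, so this equals $c + c = 2c = 0$. Thus $v \perp J\calW$ and hence $v \in \calZ^\bot$. On the other hand, $\langle v, (\1_m, -\1_n) \rangle = 1 + 1 = 2 \neq 0$. Therefore $(\1_m, -\1_n) \notin \calZ$.

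Since $u = d(\1_m, -\1_n) \in \calZ$, this forces $d = 0$, so $z = z'$. The only delicate point is the sign convention: the interior-Nash identities from Proposition~\ref{prop:interior-NE} must produce $2c$ rather than a cancellation, which they do with $v = (p^\star, -q^\star)$. Uniqueness of $w^\star$ is not actually needed for this argument beyond its use in the statement's setting.
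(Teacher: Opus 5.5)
Your proof is correct, but it takes a different route from the paper.

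\textbf{The paper's route.} It first establishes the general identity $\calZ \cap \calS = \Span(Jw^\star)$ in \eqref{eq:calZ-calS-intersect}. Writing $z = Jv$ with $v \in \Span(\calW)$, it argues, via an argument modeled on the uniqueness-based Lemma~\ref{lem:unique-interior-NE-helper}, that $Jv \in \calS$ forces $v \in \Span(w^\star)$. Only then does it use $Jw^\star = (d\1_m, -d\1_n) = 0$ when $d = 0$.

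\textbf{Your route.} You bypass the uniqueness lemma entirely. Orthogonality of $\calZ$ to $w^\star$ (your Step~2) already confines $\calZ \cap \calS$ to the line $\Span((\1_m, -\1_n))$. When the value is zero, the explicit annihilator $(p^\star, -q^\star) \in \calZ^\perp$ removes that line. Equivalently, for $c = 0$ both $(p^\star, 0)$ and $(0, q^\star)$ lie in $\calZ^\perp$; this is the componentwise orthogonality the paper only points out later, in Remark~\ref{remark:calZ-orthogonal-compare}.

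\textbf{What each buys.} Your argument is more elementary. It also avoids the paper's compressed ``similar argument'' step, which tacitly requires decomposing $v$ along $w^\star$ plus an $\calS^\perp$ component and matching coefficients using $v \in \Span(\calW)$. As you note, it also shows that uniqueness of $w^\star$ is unnecessary. Moreover, Steps~1--2 together with $Jw^\star = c(\1_m, -\1_n) \in \calZ$ recover the paper's identity $\calZ \cap \calS = \Span(Jw^\star)$ for every game value $c$, again without uniqueness. That is the form the paper reuses in Proposition~\ref{prop:calZ-minimizer}.
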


\begin{proof}
First, recall from Proposition~\ref{prop:energy-linear} that $\nabla F$ 
is invariant under constant shifts. Thus it holds for $z, z' \in \calZ$ that
$\nabla F(z) = \nabla F(z')$ if and only if 
$z-z' \in  \calS$.
This means that $\nabla F: \calZ \to \R^{m+n}$ is injective over $\calZ$
if and only if $\calZ \cap \calS = \{0\}$.
We will show this latter condition holds when 
$d = 0$.

For this, pick any $z \in \calZ \cap \calS$. 
As $z \in \calZ$, there exists
some $v = (v_x, v_y) \in \R^{m+n}$
with $z = Jv$,
where $v = \sum_{i=1}^k \tau_i w_i$
for $k \ge 1$ and $\tau_i \in \R$, 
$w_i = (p_i, q_i) \in \calW$.
Since $z \in \calS$, it then further holds
that $z = Jv \in \calS$.
Using the fact that $w^\star$ is unique and interior
and applying a similar argument as in the proof of 
Lemma~\ref{lem:unique-interior-NE-helper},
$Jv \in \calS$ then further implies that $v = \alpha w^\star$ 
for some constant $\alpha \in \R$.
Together, this means that if
$z \in \calZ \cap \calS$, then $z \in \Span(Jw^\star)$.
Moreover, as clearly $Jw^\star \in \calZ$,
and also since $Jw^\star \in \calS$ by Proposition~\ref{prop:interior-NE},
we have that
\begin{equation}
	\calZ \cap \calS = \Span(J w^\star) \;.
	\label{eq:calZ-calS-intersect} 
\end{equation}
Finally, note that part (i) of Proposition~\ref{prop:interior-NE}
more specifically gives
$J w^\star = (d \1_m, -d \1_n) \in \calS$, 
where $d = \langle p^\star, A q^\star\rangle \in \R$
is the value of the game. Thus if $d = 0$,
then $J w^\star = 0$, and $\Span(Jw^\star) = \{0\}$.
Thus $d=0$ means
$\calZ \cap \calS = \{0\}$,
which implies $\nabla F: \calZ \to \R^{m+n}$
is injective.
\end{proof}

\subsubsection{KL from Nash is Energy Suboptimality Gap}
\label{app:energy-kl-equiv:energy-gap}

Finally, we show that when $A$ 
has a unique and interior NE $w^\star$,
the $\KL$ divergence from Nash in the primal space 
is equivalent to the energy subpoptimality gap 
in the effective dual space. Formally:

\smallskip

\begin{restatable}[KL as Energy Suboptimality Gap]
{prop}{propcalZminimizer}
	\label{prop:calZ-minimizer}
	Let $A \in \R^{m \times n}$ have a unique and interior 
	NE $w^\star = (p^\star, q^\star) \in \relint(\calW)$.
	Let $z \in \calZ$ such that
	$w = \nabla F(z)$. 
	Then there exists $z^\star \in\calZ$
	such that $\nabla F(z^\star) = w^\star$, and moreover
	$
		\KL(w^\star, w) = 
		F(z) - \min_{z' \in \calZ} F(z') 
		= F(z) - F(z^\star) 
	$.
\end{restatable}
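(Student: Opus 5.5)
The proof combines three earlier results: surjectivity of $\nabla F$ over $\calZ$, the modified Fenchel–Young identity, and orthogonality of $\calZ$ to $w^\star$. Existence of $z^\star$ comes first. Since $w^\star \in \relint(\calW)$ and $A$ has a unique and interior NE, Proposition~\ref{prop:calZ-grad-map} directly gives some $z^\star \in \calZ$ with $\nabla F(z^\star) = w^\star$.

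Next we get an exact formula for $F$ on $\calZ$. Fix any $z' \in \calZ$ and set $w' = \nabla F(z') \in \relint(\calW)$. Part (ii) of Proposition~\ref{prop:restricted-fenchel}, applied with $\widetilde w = w^\star$, gives
\begin{equation*}
\KL(w^\star, w') = R(w^\star) + F(z') - \langle z', w^\star \rangle .
\end{equation*}
By Proposition~\ref{prop:dual-subspace-property}, $\langle z', w^\star \rangle = 0$ because $z' \in \calZ$. Hence
\begin{equation*}
F(z') = \KL(w^\star, \nabla F(z')) - R(w^\star) \quad \text{for every } z' \in \calZ .
\end{equation*}

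This identity identifies the minimizer. Since $\KL \ge 0$, we get $F(z') \ge -R(w^\star)$ on $\calZ$. Equality holds at $z' = z^\star$, because $\KL(w^\star, w^\star) = 0$. Therefore $\min_{z' \in \calZ} F(z') = F(z^\star) = -R(w^\star)$, and the minimum is attained. Applying the identity once more at $z' = z$ gives $F(z) - F(z^\star) = \KL(w^\star, w)$, which is the claim. The same conclusion also follows from Proposition~\ref{prop:change-kl-energy-equiv} applied to the pair $(z^\star, z)$.

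The only real obstacle is guaranteeing that the minimum over $\calZ$ is attained, since $F$ is not strictly convex and $\calZ$ is unbounded. The surjectivity result (Proposition~\ref{prop:calZ-grad-map}) removes this difficulty. That is the one place where uniqueness of the NE is used; interiority is used both there and in the orthogonality step.
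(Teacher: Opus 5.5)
Your proof is correct and uses the same three ingredients as the paper: surjectivity of $\nabla F$ on $\calZ$, the modified Fenchel--Young identity, and orthogonality of $\calZ$ to $w^\star$. The only difference is how you certify $z^\star$ as the minimizer. You use $\KL \ge 0$ in the identity $F(z') = \KL(w^\star, \nabla F(z')) - R(w^\star)$ on $\calZ$. The paper instead uses the first-order optimality condition $\langle \nabla F(z^\star), z''\rangle = \langle w^\star, z''\rangle = 0$ for all $z'' \in \calZ$, followed by an extra step describing the minimizing set as $z^\star + \Span(Jw^\star)$, which your route makes unnecessary.
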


\begin{proof}	
	We proceed via the following steps:

	\noindent
	\textbf{1. Existence of minimizer in effective dual space}:

    \noindent
	Recall that $F$ is convex and continuously 
	differentiable. Thus over the linear subspace $\calZ \subset \R^{m+n}$,
	the first-order optimality conditions for $F$ give
	(see, e.g.,~\cite{boyd2004convex}, Sec. 4.2.3):
	\begin{equation*}
		\textstyle
		z \in \argmin_{z' \in \calZ} F(z') 
		\;\iff\;
		\langle \nabla F(z), z'' \rangle = 0 
		\;\text{for all $z'' \in \calZ$} \;.
	\end{equation*}
	Now due to Proposition~\ref{prop:calZ-grad-map},
	$\nabla F: \calZ \to \relint(\calW)$ is surjective over
	$\calZ$. Thus, as $w^\star \in \relint(\calW)$,
	there exists $z^\star \in \calZ$ such that
	$\nabla F(z^\star) = w^\star$.
	Due to Proposition~\ref{prop:dual-subspace-property},
	this further implies that
	\begin{equation*}
	\textstyle
		\langle \nabla F(z^\star), z \rangle 
		= 
		\langle w^\star, z \rangle 
		= 0
		\;\text{for all $z \in \calZ$}
		\quad\implies
		z^\star \in \argmin_{z \in \calZ} F(z) \;.
	\end{equation*}

	\smallskip

	\noindent
	\textbf{2. Minimum energy function value attained}:

    \noindent
	Recall from expression~\eqref{eq:calZ-calS-intersect} 
	in the proof of Proposition~\ref{prop:calZ-grad-map-injectivity}
	that $\calZ \cap \calS = \Span(Jw^\star)$.
	Thus fixing $z^\star \in \argmin_{z \in \calZ} F(z)$
	from Step (1), then also 
	$z_{\tau} = z^\star + \tau Jw^\star \in \argmin_{z \in \calZ} F(z)$
	for any $\tau \in \R$. 
	This is because $\nabla F(z^\star + s) = \nabla F(z^\star)$
	for any $s \in \calS$, and since $Jw^\star \in \calS$.
	Thus $z' \in \calZ$ is in the argmin set if and only if 
	$z' = z_{\tau}$ for some $\tau \in \R$.
	However, since $F(z_\tau) = F(z^\star)$ for every $\tau \in \R$,
	$\min_{z \in \calZ} F(z)$ is well-defined, and thus
	$F(z^\star) = \min_{z\in \calZ} F(z)$.

	\smallskip

	\noindent
	\textbf{3. Apply Fenchel-Young identity}:

    \noindent
	Now by Part (ii) of the Fenchel-Young identity 
	from Proposition~\ref{prop:restricted-fenchel}, we have 
	for all $z \in \calZ$ and $w = \nabla F(z)$ that
	\begin{equation}
		\KL(w^\star, w)
		= R(w^\star) + F(z) - \langle w^\star, z \rangle
		= R(w^\star) + F(z) \;,
		\label{eq:fym-01}
	\end{equation}
	where the second equality is due to
	Proposition~\ref{prop:dual-subspace-property}.
	Now let $z^\star \in \calZ$ such that 
	$\nabla F(z^\star) = w^\star$.
	Then it follows from~\eqref{eq:fym-01} that
	$
		0 
		= 
		\KL(w^\star, w^\star) 
		= 
		R(w^\star) +  F(z^\star)
		\implies
	    R(w^\star) = - F(z^\star)
    $.
	By Step (1), we have $z^\star \in \argmin_{z \in \calZ} F(z)$,
	and by Step (2), we further have 
	$F(z^\star) = \min_{z \in \calZ} F(z)$.
	Then substituting $R(w^\star) = - F(z^\star)$ into~\eqref{eq:fym-01},
	we find for all $z \in \calZ$ and $w = \nabla F(z)$ that
	\begin{equation*}
		\textstyle
		\KL(w^\star, w)
		= 
		F(z) - \min_{z' \in \calZ} F(z')\;,
	\end{equation*}
	which concludes the proof.
\end{proof}


\section{Details on Energy Function Hessian Matrix}
\label{app:energy-gsc-prelims}

This section gives additional preliminaries
on the Hessian of the energy function
that are used in the proof of energy dissipation
under OMWU.

\paragraph{Oragnization of Section.}
The section is organized as follows:
\begin{itemize}[
    leftmargin=1em
]
\item
\textbf{Section~\ref{app:energy-gsc-prelims:energy-hessian}}
gives more details on  energy function Hessian that are 
used throughout the sequel.

\item
\textbf{Section~\ref{app:energy-gsc-prelims:gsc}}
gives more details on the \textit{local hessian stability} (LHS)
property introduced in Section~\ref{sec:energy-dissipation}
and provides the proof of Proposition~\ref{prop:energy-gsc}.

\item
\textbf{Section~\ref{app:energy-gsc-prelims:gsc-bounds}}
proves several bounds using the local norms $\|\cdot \|_z$
that hold under the LHS property.

\end{itemize}

\subsection{Properties of Energy Hessian}
\label{app:energy-gsc-prelims:energy-hessian}

Our analysis of OMWU relies on
several properties of the energy function's Hessian $\nabla^2 F$.
Formally, for any $z = (x, y) \in \R^{m+n}$, the Hessian
$\nabla^2 F(z) \in \R^{(m+n) \times (m+n)}$
is the block diagonal symmetric matrix 
\begin{equation}
    \nabla^2 F(z)
    = 
    \begin{pmatrix}
        \nabla^2 \LSE_m(x) & 0 \\
        0 & \nabla^2 \LSE_n(y)
    \end{pmatrix} \;.
    \label{eq:energy-hessian}
\end{equation}
Note by convexity of $F$ that $\nabla^2 F(z) \succeq 0$
for all $z \in \R^{m+n}$.

In the next proposition, we state several
basic properties about the energy Hessian. 
In particular, we note that 
$\nabla^2 F$ has a natural probabilistic 
chacterization: 
for $z = (x, y) \in \R^{m+n}$ and $w = (p, q) = \nabla F(z) \in \relint(\calW)$,
the Hessian blocks $\nabla^2 \LSE_m(x)$
and $\nabla^2 \LSE_n(y)$
are the \textit{covariance matrices} 
of the probability distributions $p$ and $q$, respectively.
Thus, quadratic forms of these Hessian blocks
can be interpreted as variances under the
distributions $p$ and $q$.
Before stating and proving these properties formally,
we first introduce the folllowing notation for discrete variances:

\paragraph{Notation for variances.}
Fix $(v_x, v_y) \in \R^{m+n}$
and $(p, q) \in \calW$. We write $\Var_p(v_x) \in \R$ to denote
the variance of $v_x \in \R^m$ under the distribution $p \in \Delta_m$,
and we similarly write $\Var_q(v_y) \in \R$ to denote
the variance of $v_y \in \R^n$ under the distribution $q \in\Delta_n$.
Formally, we define:
\begin{equation}
    \Var_p(v_x) :=
    \ssum\nolimits_{i=1}^m 
    p(i)\big(v_x(i) - \langle p, v_x \rangle\big)^2 
    \;\;\text{and}\;\;
    \Var_q(v_y) :=
    \ssum\nolimits_{j=1}^n 
    q(j) \big(v_y(j) - \langle q, v_y \rangle\big)^2
    \,.
    \label{eq:variance-def}
\end{equation}

Then for any $z \in \R^{m+n}$, 
the matrix $\nabla^2 F(z)$ has the following properties:

\begin{restatable}[Properies of Energy Hessian]
{prop}{propenergyhess}
    \label{prop:energy-hessian}
    Let $F$ be the energy function 
    from~\eqref{eq:energy}. 
    For any $z \in \R^{m+n}$, 
    let $\nabla^2 F(z)$ be the matrix in~\eqref{eq:energy-hessian}.
    Let $w = (p, q) = \nabla F(z) \in \relint(\calW)$.
    Then:
    \begin{enumerate}[
        label={(\roman*)}
    ]
    \item
    The Hessian $\nabla^2 F(z)$ is given by
    \begin{equation*}
        \nabla^2 F(z)
        = 
        \begin{pmatrix}
            \big(\Diag(p) - p p^\top\big) & 0  \\
            0 & \big(\Diag(q) - qq^\top\big)
        \end{pmatrix} 
        \;.
    \end{equation*} 
    \item
    For any $v = (v_x, v_y) \in \R^{m+n}$, 
    it holds that
    $\big\langle v, \nabla^2 F(z) v \big\rangle 
        =  \Var_{p}(v_x) + \Var_{q}(v_y)$.

    \item
      $\Null\big( \nabla^2 F(z) \big)
        = 
        \calS 
        = 
        \Span\big(
        \big(\begin{smallmatrix}\1_m \\  0\end{smallmatrix}\big),
        \big(\begin{smallmatrix} 0 \\  \1_n \end{smallmatrix}\big)
        \big)$.

    \item  
    $0 \preceq \nabla^2 F(z) \preceq I$.

    \end{enumerate}
\end{restatable}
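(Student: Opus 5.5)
The plan is to reduce everything to the one-dimensional log-sum-exp function and then use the block-diagonal structure in~\eqref{eq:energy-hessian}. First, for part (i), compute the Hessian of $\LSE_k$ at $v \in \R^k$ directly. Since $\nabla \LSE_k(v) = \softmax_k(v) =: r$, differentiating $r(i) = \exp(v(i))/\sum_\ell \exp(v(\ell))$ with respect to $v(j)$ gives $\partial_j r(i) = r(i)\mathbb{1}[i=j] - r(i)r(j)$. Hence $\nabla^2 \LSE_k(v) = \Diag(r) - rr^\top$. Applying this with $(k,v,r) = (m,x,p)$ and $(n,y,q)$, and using that $F$ is separable so the cross blocks vanish, yields (i).

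For part (ii), expand the quadratic form blockwise: $\langle v_x, (\Diag(p) - pp^\top) v_x\rangle = \sum_i p(i) v_x(i)^2 - \langle p, v_x\rangle^2$. Since $p \in \Delta_m$, this is exactly $\Var_p(v_x)$ as defined in~\eqref{eq:variance-def}, by expanding the square there. The $y$-block gives $\Var_q(v_y)$ in the same way, and summing the two proves (ii).

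For part (iii), use (ii) together with positive semidefiniteness. Because $\nabla^2 F(z) \succeq 0$, we have $v \in \Null(\nabla^2 F(z))$ if and only if $\langle v, \nabla^2 F(z) v\rangle = 0$. By (ii) this holds exactly when $\Var_p(v_x) = 0$ and $\Var_q(v_y) = 0$. Since $p$ and $q$ have strictly positive coordinates (they lie in $\relint(\calW)$), zero variance forces $v_x$ to be constant on all of $[m]$ and $v_y$ to be constant on all of $[n]$. That is, $v \in \calS$. The converse inclusion, $\calS \subseteq \Null(\nabla^2 F(z))$, is immediate from the same computation.

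For part (iv), the lower bound $0 \preceq \nabla^2 F(z)$ is again (ii). For the upper bound, note $\Var_p(v_x) \le \sum_i p(i) v_x(i)^2 \le \|v_x\|_2^2$, because the variance is at most the second moment and each $p(i) \le 1$. Doing the same for $q$ gives $\langle v, \nabla^2 F(z) v\rangle \le \|v\|_2^2$, i.e.\ $\nabla^2 F(z) \preceq I$. No step is a real obstacle. The only point needing care is in (iii), where the interiority of $w$ is what makes zero variance force constancy on every coordinate.
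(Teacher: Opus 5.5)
Your proposal is correct and follows essentially the paper's route for (i), (ii) and (iv): the softmax Jacobian, the blockwise expansion of the quadratic form into variances, and $\mathrm{Diag}(p) - pp^\top \preceq \mathrm{Diag}(p) \preceq I$, which is your ``variance is at most the second moment'' bound. For (iii) the paper only cites Gao and Pavel. Your self-contained argument is valid and fills that gap: a PSD matrix annihilates exactly the vectors with zero quadratic form, and strictly positive weights force zero-variance vectors to be constant on each block.
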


\begin{proof} 
    We prove each of the statements separately:

    \smallskip

    \noindent
    \textbf{Proof of (i).} 
    This result is standard, and a proof can be found in 
    ~\cite{boyd2004convex}, Section 3.1.5,
    or~\cite{gao-et-al17softmax}, Proposition 2.
   
    \smallskip

    \noindent
    \textbf{Proof of (ii).}
    Using the structure of $\nabla^2 F(z)$ from claim (i),
    for $v = (v_x, v_y) \in \R^{m+n}$ we have 
    \begin{equation*}
        \big\langle v, \nabla^2 F(z) v \big\rangle 
        = 
        \big\langle 
        v_x, (\Diag(p) - pp^\top) v_x 
        \big\rangle
        + 
        \big\langle 
        v_y, (\Diag(q) - qq^\top) v_y
        \big\rangle \;.
    \end{equation*}
    Observe that the matrices $(\Diag(p) - pp^\top)$
    and $(\Diag(q) - qq^\top)$ are exactly the
    covariance matrices of the distributions $p \in \Delta_m$
    and $q \in \Delta_n$. 
    Indeed, recalling the definitions of $\Var_p(v_x)$ 
    and $\Var_q(v_y)$ from~\eqref{eq:variance-def}, 
    it is straightforward to check that
    \begin{align*}
        \langle 
        v_x, (\Diag(p) - pp^\top) v_x 
        \rangle
        &= 
        \langle v_x, \Diag(p) v_x \rangle
        - 
        \langle v_x, pp^\top v_x \rangle  \\
        &= 
        \langle v_x, \Diag(p) v_x \rangle
        - 
        \big(\langle p, v_x \rangle\big)^2 \\
        &= 
        \ssum\nolimits_{i=1}^m p(i) \cdot (v_x(i))^2 
        - 
        \big(\ssum\nolimits_{i=1}^m p(i) \cdot v_x(i)\big)^2
        = 
        \Var_p(v_x)  \;.
    \end{align*}
    Using identical calculations, it follows that also
    $\langle  v_y, (\Diag(q) - qq^\top) v_y \rangle = \Var_q(v_y)$, 
    which completes the proof of claim (ii).

    \smallskip

    \noindent 
    \textbf{Proof of (iii).}
    This property is also proven in
    Proposition of~\cite{gao-et-al17softmax}.

    \paragraph{Proof of (iv).} 
    Observe that 
    $(\Diag(p) - pp^\top) \preceq I$, 
    which follows from the fact that all $0 < p(i) \le 1$
    and $pp^\top \succeq 0$. 
    Similarly, $(\Diag(q) - qq^\top) \preceq I$,
    and thus due to the block structure of $\nabla^2 F(z)$
    from part (i) of the proposition,
    it follows that $\nabla^2 F(z) \preceq I$. 
    By convexity of $F$, also $0 \preceq \nabla^2 F(z)$.
\end{proof}

\subsubsection{Hessian of Regularizer and Inverse Energy Hessian}

We also consider the Hessian matrix of
the joint negative entropy regularizer $R: \calW \to R$
from~\eqref{eq:R-def-full}. 

\paragraph{Negative Entropy Hessian.}
Fix any $w= (p, q) \in \relint(\calW)$.
Then let $\nabla^2 R(w) \in \R^{(m+n)\times(m+n)}$
be the Hessian matrix of $R$ given by
\begin{equation}
    \nabla^2 R(w)
    = 
    \begin{pmatrix}
        \nabla^2 R_m(p) & 0 \\
        0 & \nabla^2 R_n(q) 
    \end{pmatrix}
    = 
    \begin{pmatrix}
    \Diag(1/p) & 0  \\
    0 & \Diag(1/q)
    \end{pmatrix} \;.
    \label{eq:R-hessian-full}
\end{equation}
Here, we write $1/p = (1/p(1), \dots,1/p(m)) \in \R^m$
and $1/q = (1/q(1), \dots, 1/q(n)) \in \R^n$.
Note that the second equality in~\eqref{eq:R-hessian-full}
follows from the definitions
of $\nabla R_m(p)$ and $\nabla R_n (q)$ 
from~\eqref{eq:R-grad}.

\smallskip

\paragraph{Negative Entropy Hessian as
Restricted Inverse of Energy Hessian.}

Due to the duality of the energy function
and the negative entropy regularizer,
$\nabla^2 R$ is the inverse of $\nabla^2 F$
when restricted to the orthogonal complement $\calS^\bot$:

\smallskip

\begin{restatable}[Energy hessian restricted inverse]
    {prop}{propenergyhessianinverse}
    \label{prop:energy-hessian-inverse}
    Fix $z \in \R^{m+n}$, and let $w = (p, q) = \nabla F(z)$.
    Then $(\nabla^2 F(z))^{-1} = \nabla^2 R$ 
    over $\calS^\bot$, meaning
    $\nabla^2 F(z) (\nabla^2 R(w)) v = v$
    for all $v \in \calS^\bot$.
\end{restatable}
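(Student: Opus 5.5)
The plan is to verify the identity blockwise, using the block-diagonal structure of $\nabla^2 F(z)$ from part (i) of Proposition~\ref{prop:energy-hessian} and of $\nabla^2 R(w)$ from~\eqref{eq:R-hessian-full}. Write $v = (v_x, v_y) \in \calS^\bot$. Since $\calS^\bot$ is the set of vectors orthogonal to both $(\1_m, 0)$ and $(0, \1_n)$, membership in $\calS^\bot$ means exactly that $\langle \1_m, v_x\rangle = 0$ and $\langle \1_n, v_y\rangle = 0$. This blockwise zero-sum condition is the only property of $v$ the argument needs.

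Because both matrices are block diagonal with matching blocks, the product $\nabla^2 F(z)\nabla^2 R(w)$ is block diagonal with blocks $(\Diag(p) - pp^\top)\Diag(1/p)$ and $(\Diag(q) - qq^\top)\Diag(1/q)$. So it suffices to check each block separately. For the $x$-block, compute
\begin{equation*}
(\Diag(p) - pp^\top)\Diag(1/p)
= I_m - p\,\1_m^\top ,
\end{equation*}
using $\Diag(p)\Diag(1/p) = I_m$ and $p^\top \Diag(1/p) = \1_m^\top$. Applying this to $v_x$ gives $v_x - p\,\langle \1_m, v_x\rangle = v_x$, since $\langle \1_m, v_x\rangle = 0$. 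The identical computation for the $y$-block gives $(I_n - q\,\1_n^\top)v_y = v_y$. Concatenating the two blocks yields $\nabla^2 F(z)\nabla^2 R(w)v = v$.

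The argument is purely algebraic, so there is no real obstacle. The one point that needs care is that $w = \nabla F(z)$ lies in $\relint(\calW)$ by~\eqref{eq:F-grad-full}, so every coordinate of $p$ and $q$ is strictly positive and $\Diag(1/p)$, $\Diag(1/q)$ are well defined. It is also worth noting that the product $I_m - p\,\1_m^\top$ is a projection, not the identity. Its range is all of $\R^m$ minus nothing, but it fixes only vectors with zero coordinate sum, which is why the statement holds only on $\calS^\bot$. This matches part (iii) of Proposition~\ref{prop:energy-hessian}, which says $\nabla^2 F(z)$ has null space $\calS$.
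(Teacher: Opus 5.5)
Your proof is correct and is exactly the paper's argument: you factor $(\Diag(p)-pp^\top)\Diag(1/p) = I_m - p\,\1_m^\top$ blockwise and use $\langle \1_m, v_x\rangle = 0$, and likewise for the $y$-block. One small fix to your closing aside: $I_m - p\,\1_m^\top$ is idempotent with range $\1_m^\bot$ and null space $\Span(p)$, so its range is not all of $\R^m$; it fixes exactly the vectors in $\1_m^\bot$, which is why the identity holds only on $\calS^\bot$.
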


\begin{proof}
    Let $z = (x, y)$, and let $v = (v_x, v_y)$. 
    We first prove the claim for the blocks
    $\nabla^2 \LSE_m(x)$ and $\nabla^2 R_m(p)$.
    For this, we have by definition of 
    $\nabla^2 R_m$ and $\nabla^2 \LSE_m$ that 
    \begin{align*}
        \nabla^2 \LSE_m(x) R_m(p) v_x
        &= 
        (\Diag(p) - pp^\top) 
        (\Diag(1/p))v_x \\
        &=
        \Diag(p) \Diag(1/p) v_x
        - 
        p (p^\top \Diag(1/p)) v_x  \\
        &=
        I v_x - p \cdot \langle \1_m, v_x \rangle  
        = v_x \;. 
    \end{align*}
    Here, the penultimate equality is 
    due to $p^\top \Diag(1/p) = \1_m$,
    and the final equality is due to the
    assumption that $v \in \calS^\bot$,
    and thus $\langle v_x, \1_m \rangle = 0$.
    By an identical calculation, we have 
    $\nabla^2 \LSE_n(y) R_n(q) v_y = v_y$,
    which proves the claim 
    by definition of $\nabla^2 F$ and $\nabla^2 R$.
\end{proof}

\subsubsection{Gradient Differences and Bregman Divergence in Integral Form} 

Throughout, we will also use the energy Hessian $\nabla^2 F$ to
express certain remainder terms when taking Taylor expansions of $F$. 
First, we can express the Bregman divergence
$D_F(\cdot, \cdot)$
as follows:

\begin{restatable}[Bregman divergence in integral form]
 {prop}{propbregmanintegral}
    \label{prop:bregman-integral}
    Let $F$ be the function in~\eqref{eq:energy},
    and let $z, z' \in \R^{m+n}$.
    Let $z_s := z - s(z-z')$ for $s \in [0, 1]$.
    Then the Bregman divergence $D_F(z', z)$ is
    \begin{equation*}
        D_F(z', z) 
        =  
        \int_{0}^1 
        (1-s) \big \langle z - z', \nabla^2 F(z_s)(z-z') \big\rangle \, ds \;.
    \end{equation*}
\end{restatable}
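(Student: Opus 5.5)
The plan is to reduce the statement to Taylor's theorem with integral remainder applied to a scalar function of one variable. Define $\phi(s) := F(z_s)$ for $s \in [0,1]$, where $z_s = z - s(z - z')$. Then $z_0 = z$, $z_1 = z'$, and $\frac{d}{ds} z_s = z' - z$. Since $F = \LSE_m + \LSE_n$ is a composition of exponentials, logarithms of positive sums, and linear maps, it is $C^\infty$ on $\R^{m+n}$. Hence $\phi$ is twice continuously differentiable on $[0,1]$, and the chain rule gives
\begin{equation*}
\phi'(s) = \langle \nabla F(z_s), z' - z\rangle, \qquad \phi''(s) = \langle z' - z, \nabla^2 F(z_s)(z'-z)\rangle .
\end{equation*}

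Next I apply the one-dimensional Taylor formula with integral remainder,
\begin{equation*}
\phi(1) = \phi(0) + \phi'(0) + \int_0^1 (1-s)\phi''(s)\,ds ,
\end{equation*}
which can be verified directly by integrating $\int_0^1 (1-s)\phi''(s)\,ds$ by parts: this equals $[(1-s)\phi'(s)]_0^1 + \int_0^1 \phi'(s)\,ds = -\phi'(0) + \phi(1) - \phi(0)$. Substituting, $\phi(1) = F(z')$, $\phi(0) = F(z)$, and $\phi'(0) = \langle \nabla F(z), z' - z\rangle$. Rearranging then yields
\begin{equation*}
F(z') - F(z) - \langle \nabla F(z), z'-z\rangle = \int_0^1 (1-s)\langle z'-z, \nabla^2F(z_s)(z'-z)\rangle\, ds .
\end{equation*}
The left-hand side is exactly $D_F(z', z)$ by the definition of the Bregman divergence. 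Finally, the quadratic form is invariant under $v \mapsto -v$, so $\langle z'-z, \nabla^2F(z_s)(z'-z)\rangle = \langle z-z', \nabla^2F(z_s)(z-z')\rangle$, which gives the stated form.

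No step is a real obstacle. The only points to check are the smoothness of $F$, which justifies the chain rule and the integration by parts, and the sign convention in the parametrization $z_s$, which the symmetric quadratic form absorbs.
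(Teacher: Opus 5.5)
Your proposal is correct and follows the same route the paper intends. The paper omits the proof as a standard consequence of Taylor's theorem with integral remainder, and your reduction to $\phi(s)=F(z_s)$, the integration by parts, and the symmetry of the quadratic form under $v\mapsto -v$ supply exactly that argument.
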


\smallskip

For $z, z' \in \R^{m+n}$, we can also express
the difference of energy gradients using the energy Hessian:

\begin{restatable}[Gradient difference and Hessian remainder]
{prop}{propgraddiffgeneral}
    \label{prop:grad-diff-general}
    Let $F$ be the function in~\eqref{eq:energy},
    and let $z, z' \in \R^{m+n}$.
    Let $z_s := z - s(z-z')$ for $s \in [0, 1]$.
    Let $G_F(z, z') \in \R^{m+n}$ be the remainder term
    \begin{equation}
      G_F(z, z') =
      \int_0^1 (\nabla^2 F(z_s) - \nabla^2 F(z))(z-z') \, ds \;.
      \label{eq:hess-remainder}
    \end{equation}
    Then 
    $ \nabla F(z) - \nabla F(z') 
    = 
    \nabla^2 F(z) (z - z') + G_F(z, z')
    $. 
\end{restatable}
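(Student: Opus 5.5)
The plan is to apply the fundamental theorem of calculus to $\nabla F$ along the segment $z_s = z - s(z-z')$, $s \in [0,1]$. This segment runs from $z_0 = z$ to $z_1 = z'$. Since $F$ is a sum of log-sum-exp functions, it is $C^\infty$, so $s \mapsto \nabla F(z_s)$ is continuously differentiable. By the chain rule,
\begin{equation*}
\tfrac{d}{ds}\nabla F(z_s) = \nabla^2 F(z_s)\,\tfrac{d}{ds} z_s = -\nabla^2 F(z_s)(z-z') \;.
\end{equation*}

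Integrating from $0$ to $1$ gives
\begin{equation*}
\nabla F(z') - \nabla F(z) = -\int_0^1 \nabla^2 F(z_s)(z-z')\,ds \;,
\end{equation*}
or equivalently
\begin{equation*}
\nabla F(z) - \nabla F(z') = \int_0^1 \nabla^2 F(z_s)(z-z')\,ds \;.
\end{equation*}

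The remaining step is to add and subtract $\nabla^2 F(z)(z-z')$ inside the integral. The integral of this constant integrand over $[0,1]$ is just $\nabla^2 F(z)(z-z')$. What is left over is exactly the remainder $G_F(z,z')$ in~\eqref{eq:hess-remainder}. This yields the identity $\nabla F(z) - \nabla F(z') = \nabla^2 F(z)(z-z') + G_F(z,z')$.

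There is no real obstacle here. The only point to check is the sign convention of the parametrization $z_s$, which goes from $z$ to $z'$. The integrals are well defined because $\nabla^2 F$ is continuous, with entries bounded by Proposition~\ref{prop:energy-hessian}(iv).
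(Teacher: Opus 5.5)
Your proof is correct and matches the paper's approach. The paper omits the proof as a standard consequence of Taylor's theorem, which amounts to the fundamental theorem of calculus applied to $s \mapsto \nabla F(z_s)$, and you carry this out with the sign from $\tfrac{d}{ds} z_s = -(z-z')$ handled correctly.
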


\noindent
Propositions~\ref{prop:bregman-integral} 
and~\ref{prop:grad-diff-general} are both standard
consequences of Taylor's theorem (see, 
e.g.,~\cite{nocedal2006numerical}, Theorem 2.1),
and thus we omit the proofs.

\subsection{Local Hessian Stability Property}
\label{app:energy-gsc-prelims:gsc}

In the sequel, our analysis of the energy dissipation 
under OMWU relies on the following \textit{local Hessian stability}
property of the energy function,
introduced originally in Section~\ref{sec:energy-dissipation}:

\smallskip

\propenergygsc*

\begin{proof} 
    Recalling the block structure of the energy Hessian 
    $\nabla^2 F(z)$ from Proposition~\ref{prop:energy-hessian},
    it suffices to show that the property holds 
    for the block components $\nabla^2 \LSE_m(x)$
    and $\nabla^2 \LSE_n(y)$. 
    We will prove the property holds for the former,
    and for the latter it will follow by identical calculations.

    For this, fix any vector $u \in \R^m$, and let $z=(x, y)$ and $z'=(x', y')$.
    Observe that the assumption $\|z-z'\|_\infty \le \alpha$
    implies $\|x-x'\|_{\infty} \le \alpha$.
    Now for every $s \in [0, 1]$, define $x_s = x + s(x' -x)$, 
    and let $h = x'-x$. 
    Moreover, let $p_s  = \nabla \LSE_m(x_s) = \softmax_{m}(x_s)$,
    meaning 
    \begin{equation*}
        p_s = \softmax_m(x_s) 
        = \softmax_m(x + sh) \;.
    \end{equation*}
    Further let $f_u(s)$ be the scalar function given by
    \begin{equation*}
        \textstyle
        f_u(s) 
        =
        \big\langle 
            u, \nabla^2 \LSE_m(x_s) u
        \big\rangle
        = 
        \big\langle
            u, (\Diag(p_s) - p_s p_s^\top) u
        \big\rangle
        = 
        \sum_{i=1}^m 
        p_s(i) \cdot \big(u(i) - \langle p_s, u \rangle\big)^2
        \;,
    \end{equation*}
    where the equalities follow from the structure of 
    $\nabla^2 \LSE_m$ from Proposition~\ref{prop:energy-hessian}.

    Now, differentiating $f_u(s)$ with respect to $s$
    and simplifying, we find 
    \begin{equation*}
        \textstyle
        f_u'(s) 
        =         
        \sum_{i=1}^m 
        p_s(i) \cdot 
        \big( h(i) - \langle p_s, h \rangle \big)  
        \big( u(i) - \langle p_s, u \rangle \big)^2 \;.
    \end{equation*}
    By the assumption $\|x-x'\|_\infty = \|h\|_\infty \le \alpha$, 
    it follows that $|h(i) - \langle p_s, h \rangle | \le 2 \alpha$. 
    Thus we can further bound 
    \begin{equation*}
        |f'_u(s)|
        \le 
        2\alpha \cdot 
        \sum_{i=1}^m 
        p_s(i) \cdot 
        \big(u(i) - \langle p_s, u\rangle \big)^2
        = 
        2\alpha \cdot f_u(s) \;.
    \end{equation*}
    Applying Gr{\"o}nwall's inequality then yields
    \begin{equation*}
        \exp(-2 \alpha)  \cdot f_u(0)
        \le 
        f_u(1)
        \le 
        \exp(2 \alpha) \cdot f_u(0) \;.
    \end{equation*}
    Since $x_0 = x$ and $x_1 = x'$ and
    $u$ was arbitrary, it follows by definition of $f_u$ that 
    \begin{equation*}
        \exp(-2\alpha) \cdot \nabla^2 \LSE_m(x)
        \preceq
        \nabla^2 \LSE_m(x') 
        \preceq
        \exp(2\alpha) \cdot 
        \nabla^2 \LSE_m(x) \;,
    \end{equation*}
    which concludes the proof.
\end{proof}

\paragraph{Local Hessian Stability and Generalized Self-Concordance.} 
Similar versions of this LHS property have also been exploited 
in the optimization literature as a consequence 
of \textit{(generalized) self-concordance}.
See, e.g.,~\citet{bach2010self}, 
\citet{tran2015composite}
(Theorem 4.b, eq. (20)),
\citet{sun2019generalized} (Proposition 8, eq. (16) when $\nu=2$),
or more recently~\citet{freund2026second} 
(Proposition 16, and Lemma 15),
and the references therein.
For the purposes of the present paper, 
our proof of energy dissipation in Lemma~\ref{lem:energy-one-step-full}
uses the the semidefinite ordering of~\eqref{H-gsc}
via the application of several general inequalities
involving local norms under the energy Hessian. 
We present these tools in the next subsection.

\subsection{Consequences of Local Hessian Stability Property}
\label{app:energy-gsc-prelims:gsc-bounds}

\subsubsection{Review of Notation for Local Norms}

We first recall and expand on 
the notation of the local norms introduced
in Section~\ref{sec:energy-dissipation}.
Fix $z \in \R^{m+n}$.
Then for all $u, v \in \R^{m+n}$, we define
\begin{equation*}
    \langle  u, v \rangle_z 
    := 
    \langle u, \nabla^2 F(z) v \rangle
    \;\;\text{and}\;\;
    \|v \|_{z}
    := 
    \sqrt{\langle v, \nabla^2 F(z) v \rangle}  \;.
\end{equation*}
We further define a local \textit{dual} norm $\|\cdot \|_{z, *}$
over the linear subspace $\calS^\bot \subseteq \R^{m+n}$
(as defined in~\eqref{eq:calZ-calS-def}).
For this, when restricted to acting on vectors $v \in \calS^\bot$,
we write $(\nabla^2 F(z))^{-1}$
to denote the inverse energy Hessian, where
$\nabla^2 F(z) (\nabla^2 F(z))^{-1} u = u$
for all $u \in \calS^{\bot}$ and $z \in \R^{m+n}$.
From Proposition~\ref{prop:energy-hessian-inverse},
note in particular that $(\nabla^2 F(z))^{-1} = \nabla^2 R(w)$
as defined in~\eqref{eq:R-hessian-full},
where $w = \nabla F(z)$.
Then over $\calS^\bot$, for any $z \in \R^{m+n}$, 
the local dual norm $\|\cdot \|_{z, *}$ is given by
\begin{equation*}
    \|u \|_{z, *} = \sqrt{\langle u, (\nabla^2 F(z))^{-1} u\rangle}
    \quad
    \text{for $u \in \calS^\bot$.} 
\end{equation*}
For $z \in \R^{m+n}$, the primal-dual pair
$(\|\cdot \|_{z}, \|\cdot \|_{z, *})$ satisfies the 
generalized Cauchy-Schwarz inequality 
\begin{equation*}
    \big| \langle v, u \rangle \big|
    \le 
    \|v\|_{z} \cdot \|u\|_{z, *}
    \quad
    \text{for all $v \in \R^{m+n}$ and $u \in \calS^\bot$}.
\end{equation*}

\subsubsection{Properties of Local Norms under LHS}

We state and prove several useful properties of 
the local norms under the~\eqref{H-gsc} property. 

\smallskip

\paragraph{Relationships between global and local norms.}

The following proposition relates primal 
and local norms to the Euclidean norm:

\begin{restatable}{prop}{propnormrelationships}
    \label{prop:norm-relations}
    For all $z \in \R^{m+n}$,
    $v \in \R^{m+n}$, and 
    $u \in \calS^\bot$:
    $\|v\|_z  \le \|v\|_2$
    and $\|u\|_2 \le \|u\|_{z,*}$.
\end{restatable}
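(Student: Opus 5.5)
The first inequality follows from the spectral bound $\nabla^2 F(z) \preceq I$ in part (iv) of Proposition~\ref{prop:energy-hessian}. For any $v \in \R^{m+n}$ this gives
\[
\|v\|_z^2 = \langle v, \nabla^2 F(z) v\rangle \le \langle v, v\rangle = \|v\|_2^2,
\]
and taking square roots yields $\|v\|_z \le \|v\|_2$.

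For the dual norm, I will use the identification $(\nabla^2 F(z))^{-1} = \nabla^2 R(w)$ on $\calS^\bot$ from Proposition~\ref{prop:energy-hessian-inverse}, where $w = (p,q) = \nabla F(z)$. By~\eqref{eq:R-hessian-full}, $\nabla^2 R(w)$ is block diagonal with diagonal blocks $\Diag(1/p)$ and $\Diag(1/q)$. Writing $u = (u_x, u_y) \in \calS^\bot$, the dual norm therefore becomes
\[
\|u\|_{z,*}^2 = \sum_{i=1}^m \frac{u_x(i)^2}{p(i)} + \sum_{j=1}^n \frac{u_y(j)^2}{q(j)}.
\]
Since $w \in \relint(\calW)$, every coordinate satisfies $0 < p(i) \le 1$ and $0 < q(j) \le 1$. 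Hence $1/p(i) \ge 1$ and $1/q(j) \ge 1$, so each term dominates $u_x(i)^2$ or $u_y(j)^2$ respectively. Summing gives $\|u\|_{z,*}^2 \ge \|u\|_2^2$, and taking square roots completes the claim.

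The only point needing care is that the dual norm is defined through the restricted inverse on $\calS^\bot$. Proposition~\ref{prop:energy-hessian-inverse} identifies this inverse with the diagonal matrix $\nabla^2 R(w)$ when acting on $u \in \calS^\bot$, which justifies the explicit formula above. Once that identification is in place, both parts reduce to comparing diagonal weights with $1$, and the argument is elementary.
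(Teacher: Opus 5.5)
Your proof is correct. The first inequality is exactly the paper's argument. For the second, the paper argues abstractly: on $\calS^\bot$ the bound $\nabla^2 F(z) \preceq I$ inverts to $I \preceq (\nabla^2 F(z))^{-1}$, by operator monotonicity of inversion for a map that is positive definite on $\calS^\bot$. You instead compute the quadratic form explicitly as $\sum_i u_x(i)^2/p(i) + \sum_j u_y(j)^2/q(j)$ and compare each weight with $1$.

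The paper's route uses only $\nabla^2 F(z) \preceq I$ and the null-space structure, so it transfers verbatim to any energy with those two properties. Yours is more elementary and shows that $\|u\|_{z,*}^2/\|u\|_2^2$ is a $u$-weighted average of the values $1/w(k)$. That gives this proposition and the bound $L(z) \le 1/w_{\min}$ of Proposition~\ref{prop:energy-Lz} in one line; it is also the same formula the paper later uses in Lemma~\ref{lem:dual-kl-chisq}.

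The step you flag deserves a precise justification. The vector $\nabla^2 R(w)u$ need not lie in $\calS^\bot$, so $\nabla^2 R(w)$ is only a right inverse of $\nabla^2 F(z)$ on $\calS^\bot$, not the inverse of the restricted map. However, any $X$ with $\nabla^2 F(z) X u = u$ satisfies $Xu - \nabla^2 R(w)u \in \Null(\nabla^2 F(z)) = \calS$. Since $u \perp \calS$, the value $\langle u, Xu\rangle$ does not depend on the choice, so your explicit formula is the dual norm however the restricted inverse is realized.
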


\begin{proof}
    For the first claim, recall by part (iv) of
    Proposition~\ref{prop:energy-hessian} that 
    $\nabla^2 F(z) \preceq I$. Thus for any $v \in \R^{m+n}$, 
    it holds that
    \begin{equation*}
        \|v\|^2_z
        = 
        \langle v, \nabla^2 F(z) v \rangle
        \le 
        \langle v, v \rangle 
        = 
        \|v\|^2_2 \;,
    \end{equation*}
    which implies $\|v\|_z \le \|v\|_2$. 
For the second claim, note that when restricted to 
$\calS^{\bot}$, the ordering $\nabla^2 F(z) \preceq I$
implies $I \preceq (\nabla^2 F(z))^{-1}$. 
Then for any $u \in \calS^\bot$ and $z \in \R^{m+n}$, 
it holds that
\begin{equation*}
    \|u\|^2_2 
    = 
    \langle u, u \rangle 
    \le 
    \langle u, (\nabla^2 F(z))^{-1} u \rangle
    = 
    \|u\|^2_{z, *} \;,
\end{equation*}
which implies $\|u\|_2 \le \|u\|_{z, *}$. 
\end{proof}

\smallskip

\paragraph{Local norm transfer.}
For vectors $z, z', v \in \R^{m+n}$, 
a key consequence of the~\eqref{H-gsc} property
is the ability to relate the
local norm $\|v\|_z$ to the local norm $\|v\|_{z'}$
up to multiplicative factors depending on $\|z-z'\|_{\infty}$.
Formally, we have the following relationships:

\smallskip

\begin{restatable}{lem}{lemnormtransfer}
    \label{lem:norm-transfer-gsc}
    For $\alpha > 0$, let $z, z' \in \R^{m+n}$ 
    such that $\|z-z'\|_\infty \le \alpha$. Then
    for all $v \in \R^{m+n}$: 
    \begin{equation*}
        \exp(-\alpha) \cdot \|v\|_{z} 
        \le
        \|v\|_{z'} 
        \le
        \cdot \exp(\alpha) \|v\|_{z}\,.
    \end{equation*}
\end{restatable}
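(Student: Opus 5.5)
This follows directly from the \eqref{H-gsc} property of Proposition~\ref{prop:energy-gsc}; the only work is to translate the semidefinite ordering into a statement about local norms. Fix $v \in \R^{m+n}$ and $z, z'$ with $\|z - z'\|_\infty \le \alpha$.

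Proposition~\ref{prop:energy-gsc} gives the Loewner ordering
\begin{equation*}
\exp(-2\alpha)\,\nabla^2 F(z) \preceq \nabla^2 F(z') \preceq \exp(2\alpha)\,\nabla^2 F(z).
\end{equation*}
Take the quadratic form of each matrix at $v$. By definition of the Loewner order, $M \preceq N$ means $\langle v, Mv\rangle \le \langle v, Nv\rangle$ for every $v$. This yields
\begin{equation*}
\exp(-2\alpha)\,\|v\|_z^2 \le \|v\|_{z'}^2 \le \exp(2\alpha)\,\|v\|_z^2,
\end{equation*}
since $\|v\|_z^2 = \langle v, \nabla^2 F(z) v\rangle$ by definition of the local norm.

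Every quantity here is nonnegative, because $\nabla^2 F \succeq 0$ by Proposition~\ref{prop:energy-hessian}(iv). We may therefore take square roots, which preserves the inequalities, and obtain
\begin{equation*}
\exp(-\alpha)\|v\|_z \le \|v\|_{z'} \le \exp(\alpha)\|v\|_z.
\end{equation*}
No step is a real obstacle. The only care needed is this nonnegativity check, which is what justifies the square root even when $v$ has a component in the null space $\calS$ of the Hessian, where both local norms are zero.
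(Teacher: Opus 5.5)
Your proposal is correct and follows the same route as the paper: apply the \eqref{H-gsc} ordering from Proposition~\ref{prop:energy-gsc}, evaluate the quadratic forms at $v$ to get the squared local-norm bounds, and take square roots. Your explicit nonnegativity check via $\nabla^2 F \succeq 0$ is a detail the paper leaves implicit, and you prove the claim for all $v \in \R^{m+n}$, which is what the statement asserts, whereas the paper's write-up phrases it only for $v \in \calZ$.
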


\begin{proof} 
    Due to the \eqref{H-gsc} property, we have for any $z, z' \in \calZ$ such that $\|z-z'\|_\infty \le \alpha$,
    \begin{equation*}
        \exp(-2\alpha) \nabla^2 F(z)
        \preceq
        \nabla^2 F(z') 
        \preceq 
        \exp(2\alpha) \nabla^2 F(z) \;.
    \end{equation*}
    Using the definition of local norms, this implies that for any $v \in \calZ$, 
    \begin{equation}
      \exp(-2 \alpha) \cdot \|v\|^2_{z}  \le   \|v\|^2_{z'} = \langle v, \nabla^2 F(z') v \rangle \le  \exp(2 \alpha) \cdot \|v\|^2_{z}\,. 
    \end{equation}
    Taking the square root in the above inequality concludes the proof. 
\end{proof}

\smallskip 

\paragraph{Non-expansiveness of local norm.}
The next lemma establishes that the energy Hessian
satisfies a certain non-expansiveness property
under its own induced local norm:

\begin{restatable}{lem}{lemhessianlocalsmooth}
    \label{lem:hessian-local-smooth}
    Let $M \in R^{(m+n) \times (m+n)}$ be a matrix,
    and fix $z \in \R^{m+n}$. 
    Then for any $v \in \R^{m+n}$:
    \begin{equation*}
        \| M \nabla^2 F(z) v \|_z \le \|M\|_2 \cdot \|v\|_z \;.
    \end{equation*}
\end{restatable}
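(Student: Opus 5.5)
We need to prove $\|M H v\|_z \le \|M\|_2 \|v\|_z$, where $H = \nabla^2 F(z)$. The plan is to factor through the Euclidean norm in two steps, using only the sandwich $0 \preceq H \preceq I$ from part (iv) of Proposition~\ref{prop:energy-hessian} and the comparison of norms in Proposition~\ref{prop:norm-relations}.

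First, I will apply Proposition~\ref{prop:norm-relations} to the vector $MHv$, which gives
\[
\|MHv\|_z \le \|MHv\|_2 \le \|M\|_2 \,\|Hv\|_2 .
\]
The second inequality is simply the definition of the spectral norm.

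It then remains to show $\|Hv\|_2 \le \|v\|_z$. Since $H$ is symmetric PSD, it has a symmetric square root $H^{1/2}$. Using this,
\[
\|Hv\|_2^2 = \langle H^{1/2}v, \, H\, H^{1/2}v\rangle \le \langle H^{1/2}v, H^{1/2}v\rangle = \langle v, Hv\rangle = \|v\|_z^2 ,
\]
where the inequality uses $H \preceq I$. An equivalent way to see it: the eigenvalues $\lambda$ of $H$ lie in $[0,1]$, so $\lambda^2 \le \lambda$, i.e.\ $H^2 \preceq H$. Chaining this with the first display gives the claim.

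No step here is a real obstacle. The only subtle point is noticing that the bound $H \preceq I$ must be used twice: once to pass from the local norm to the Euclidean norm at the outer level, and once through $H^2 \preceq H$ to absorb the extra factor of $H$ into $\|v\|_z$. In particular, the argument needs no restriction of $v$ to $\calS^\bot$.
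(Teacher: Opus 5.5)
Your proof is correct and follows the paper's argument step for step. You pass from $\|\cdot\|_z$ to $\|\cdot\|_2$ via Proposition~\ref{prop:norm-relations}, pull out $\|M\|_2$, and absorb the remaining Hessian factor using $(\nabla^2 F(z))^2 \preceq \nabla^2 F(z)$, which follows from $0 \preceq \nabla^2 F(z) \preceq I$; your square-root computation is just an explicit verification of that last ordering.
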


\begin{proof} \;
    Applying Proposition~\ref{prop:norm-relations},
    we can write 
    \begin{equation}
        \|M \nabla^2 F(z) v \|_z
        \le 
        \|M \nabla^2 F(z) v\|_2 
        \le 
         \|M\|_2 \cdot \|\nabla^2 F(z) v\|_2 \;,
        \label{eq:hls-01}
    \end{equation}
    where the final inequality comes from the definition
    of the spectral norm $\|M\|_2$. 
    To further bound the term $\|\nabla^2 F(z) v\|_2$, observe
    by definition that 
    \begin{equation}
        \|\nabla^2 F(z) v\|^2_2
        = 
        \langle v, (\nabla^2 F(z))^2 v \rangle
        \le
        \langle v, \nabla^2 F(z) v \rangle
        = 
        \|v\|^2_z \;.
        \label{eq:hls-02}
    \end{equation}
    Here, the inequality comes from the fact that
    $(\nabla^2 F(z))^2 \preceq \nabla^2 F(z)$ 
    since, by Proposition~\ref{prop:energy-hessian},
    $0 \preceq \nabla^2 F(z) \preceq I$.
    Then taking square roots and substituting~\eqref{eq:hls-02} 
    into~\eqref{eq:hls-01} concludes the proof.
\end{proof}

\subsubsection{Bounds on Bregman Divergence
and Hessian Remainder Terms under LHS}
\label{app:energy-gsc-prelims:gsc:bregman-hess-terms}

\paragraph{Additional notation.}
To start, we define the following two scalar functions
which will be useful in our proofs for controlling certain constant 
multiplicative factors.
Specifically, let $\mu: \R \to \R$ and $\nu: \R \to \R$
be the functions such that, for any $x \neq 0$:
\begin{equation}
    \mu(x) = \frac{\exp(x) - x -1}{x^2}\,, \quad 
    \nu(x) = \frac{\exp(x) - x - 1}{x} \;.
    \label{eq:def-mu-nu}
\end{equation}
In particular we will be using the following straightforward bounds:

\begin{restatable}{lem}{lemexpbounds}
    \label{lem:exp-bounds}
    For any $\delta > 0$ and $|x| \le \delta$,
    it holds that
    \begin{equation*}
        \big|\mu(x) - \tfrac{1}{2}\big|
        \le 
        \frac{\delta \exp(\delta)}{6}
        \quad\text{and}\quad
        \big|\nu(x) - \tfrac{x}{2}\big|
        \le
        \frac{\delta^2 \exp(\delta)}{6} \;.
    \end{equation*}
\end{restatable}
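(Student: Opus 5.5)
The plan is to use the Taylor series of the exponential. For $x \neq 0$ we have $\exp(x) - x - 1 = \sum_{k \ge 2} x^k/k!$, and this gives the exact identities
\begin{equation*}
	\mu(x) - \tfrac{1}{2} = \sum_{k \ge 3} \frac{x^{k-2}}{k!}
	\quad\text{and}\quad
	\nu(x) - \tfrac{x}{2} = \sum_{k \ge 3} \frac{x^{k-1}}{k!} \;.
\end{equation*}
Both remainders are absolutely convergent power series. I will bound each one termwise using $|x| \le \delta$.

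For $\mu$, the triangle inequality gives
\begin{equation*}
	\big|\mu(x) - \tfrac{1}{2}\big| \le \sum_{k\ge 3} \frac{\delta^{k-2}}{k!} = \delta \sum_{j \ge 0} \frac{\delta^{j}}{(j+3)!} \;.
\end{equation*}
Since $(j+3)! \ge 6 \cdot j!$ for all $j \ge 0$, the last sum is at most $\frac{1}{6}\sum_{j\ge0} \delta^j/j! = \exp(\delta)/6$. This yields $\delta\exp(\delta)/6$.

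For $\nu$, the same argument gives
\begin{equation*}
	\big|\nu(x) - \tfrac{x}{2}\big| \le \sum_{k \ge 3} \frac{\delta^{k-1}}{k!} = \delta^2 \sum_{j\ge 0}\frac{\delta^j}{(j+3)!} \le \frac{\delta^2 \exp(\delta)}{6} \;.
\end{equation*}

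As an alternative, the Lagrange form of Taylor's theorem would also work: $\exp(x) = 1 + x + x^2/2 + \exp(\xi)x^3/6$ with $|\xi| \le |x|$. Dividing by $x^2$ and by $x$, respectively, gives the same bounds immediately.

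There is no real obstacle here. The only points to check are the inequality $(j+3)! \ge 6\, j!$, which holds because $(j+1)(j+2)(j+3) \ge 6$, and that the case $x=0$ is excluded, or handled by continuous extension if it is needed.
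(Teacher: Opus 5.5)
Your proof is correct. The paper's proof is your alternative route: it writes $\exp(x) = 1 + x + \tfrac{x^2}{2} + \tfrac{\exp(\xi)x^3}{6}$ with $\xi$ between $0$ and $x$, reads off $\mu(x)-\tfrac12 = \tfrac{\exp(\xi)x}{6}$ and $\nu(x)-\tfrac{x}{2} = \tfrac{\exp(\xi)x^2}{6}$, and bounds $\exp(\xi)\le\exp(\delta)$. Your main argument, bounding the full series termwise via $(j+3)!\ge 6\,j!$, controls the same cubic remainder without a mean-value point and gives the identical constant, so the two proofs are interchangeable.
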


\begin{proof}
    Taking a second-order Taylor expansion with Lagrange remainder, 
    we have: 
    \begin{equation}
        \exp(x) = 1 + x + \frac{x^2}{2} + \frac{\exp(\xi) x^3}{6} \,
        \label{eq:exp-taylor}
    \end{equation}
    for some $\xi \in (0,x)$.
    Rearranging, we find
    $\mu(x) -\frac{1}{2} = \frac{\exp(\xi) x}{6}$.
    Noting that $|\exp(\xi) x| \le \delta \exp(\delta)$ for $|x| \le \delta$
    then gives $|\mu(x) - \frac{1}{2}| \le \frac{\delta \exp(\delta)}{6}$,
    which yields the first claim of the lemma.
    
    For the second claim, we can again rearrange~\eqref{eq:exp-taylor}
    to find
    $\nu(x) - \frac{x}{2} 
    = 
    \frac{\exp(\xi) x^2}{6}$.
    Taking absolute values and again using
    $|\exp(\xi) x^2| \le \delta^2 \exp(\delta)$ for $|x| \le \delta$
    then gives $|\nu(x) - \frac{x}{2}| \le \frac{\delta^2 \exp(\delta)}{6}$.
\end{proof}

\smallskip

\paragraph{Bounds on Bregman divergence.}

\smallskip

\begin{restatable}{lem}{lembregmangsc}
  \label{lem:bregman-gsc}
  Fix $z, z' \in \R^{m+n}$
  such that $\|z-z'\|_\infty \le \alpha$
  for $\alpha > 0$.
  Then the Bregman divergence $D_F(z', z)$
  is bounded above and below by 
  \begin{equation*}
    \mu(-2\alpha) \cdot \|z'-z\|^2_z
    \;\le\;
    D_F(z', z) 
    \;\le \;
    \mu(2\alpha) \cdot \|z' - z\|^2_z \;.
  \end{equation*}
\end{restatable}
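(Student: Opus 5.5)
We will start from the integral form of the Bregman divergence (Proposition~\ref{prop:bregman-integral}). Write $h = z' - z$ and $z_s = z + s h$ for $s\in[0,1]$; this is the path in Proposition~\ref{prop:bregman-integral}, since $z - s(z-z') = z + sh$. Then
\[
D_F(z',z) = \int_0^1 (1-s)\,\|h\|^2_{z_s}\,ds .
\]
Using the \eqref{H-gsc} bound $e^{\pm 2 s\alpha}$ directly would be too crude. It would give constants like $\int_0^1(1-s)e^{2s\alpha}ds$ rather than $\mu(2\alpha)$. So we first check whether this integral already equals $\mu(2\alpha)$. Substituting $u = 2s\alpha$,
\[
\int_0^1(1-s)e^{cs}\,ds = \frac{e^{c}-c-1}{c^2} = \mu(c).
\]
The integral is exactly $\mu(c)$, so the natural pointwise approach should give precisely the stated constants.

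The key pointwise step is to apply Proposition~\ref{prop:energy-gsc} to the pair $(z, z_s)$. We have $\|z_s - z\|_\infty = s\|h\|_\infty \le s\alpha$. Proposition~\ref{prop:energy-gsc} is stated for $\alpha>0$, so for $s>0$ we invoke it with parameter $s\alpha$; the case $s=0$ is trivial and has measure zero anyway. This yields
\[
e^{-2s\alpha}\,\nabla^2F(z) \preceq \nabla^2F(z_s) \preceq e^{2s\alpha}\,\nabla^2F(z).
\]
Taking the quadratic form in $h$ gives
\[
e^{-2s\alpha}\|h\|_z^2 \le \|h\|_{z_s}^2 \le e^{2s\alpha}\|h\|_z^2 .
\]
We then multiply by $(1-s)\ge 0$ and integrate over $[0,1]$. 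By the computation above with $c=\pm 2\alpha$, the two integrals equal $\mu(2\alpha)$ and $\mu(-2\alpha)$. Since $\|h\|_z = \|z'-z\|_z$, this gives the claimed sandwich.

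The only real obstacle is the point just made: the naive bound $e^{2\alpha}$ uniformly in $s$ would fail to produce $\mu$. We must keep the $s$-dependent radius $s\alpha$ inside the integral, which is allowed because \eqref{H-gsc} holds for every radius. We will also verify the elementary identity for $\int_0^1(1-s)e^{cs}ds$ by integration by parts. Finally, we note that $\mu$ is continuous and positive, so both bounds are meaningful, with $\mu(\pm 2\alpha)\to\tfrac12$ as $\alpha\to 0$.
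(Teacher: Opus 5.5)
Your proposal is correct and matches the paper's proof: the integral form of $D_F$ from Proposition~\ref{prop:bregman-integral}, the \eqref{H-gsc} sandwich applied at radius $s\alpha$ for each $s$, and the identity $\int_0^1(1-s)e^{cs}\,ds=\mu(c)$ with $c=\pm 2\alpha$. Delete your opening remark that the $s$-dependent bound would be ``too crude'': your own computation two lines later shows it gives exactly $\mu(\pm 2\alpha)$.
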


\begin{proof} 
    Recall from Proposition~\ref{prop:bregman-integral}
    that the Bregman divergence $D_F(z', z)$
    can be written as
    \begin{equation}
        D_F(z', z) 
        = 
        \int_{0}^1 (1-s) 
        \big\langle z-z', \nabla^2 F(z_s) (z-z') \big\rangle \, ds \;,
        \label{eq:bregman-bound-01}
    \end{equation}
    where $z_s := z - s(z-z') \in \calZ$ for $s \in [0, 1]$.
    Fixing $s$, 
    observe that $\|z-z_s\|_\infty = s \|z-z'\|_\infty \le s\alpha$, 
    which follows from the assumption that $\|z-z'\|_\infty \le \alpha$.
    Thus we have by~\eqref{H-gsc} that
    \begin{equation}
        \exp(-2s\alpha) \nabla^2 F(z)
        \preceq
        \nabla^2 F(z_s) 
        \preceq 
        \exp(2s\alpha) \nabla^2 F(z) \;.
        \label{eq:bregman-bound-gsc}
    \end{equation}
    For the upper bound on $D_F(z', z)$, note that
    the relationship in~\eqref{eq:bregman-bound-gsc} implies
    \begin{align*}    
        \big\langle z- z', \nabla^2 F(z_s) (z-z') \big\rangle
        &\le
        \exp(2s\alpha) \cdot 
        \langle z-z', \nabla^2 F(z)(z-z')\rangle \\
        &= 
        \exp(2s\alpha) \cdot 
        \|z-z'\|_{z}^2 \;.
    \end{align*}
    Substituting back into~\eqref{eq:bregman-bound-01} 
    and integrating, we conclude 
    (using the function $\mu$ from~\eqref{eq:def-mu-nu}):
    \begin{align*}
        D_F(z', z) 
        &\le
            \int_{0}^1 (1-s) \exp(2s\alpha)  \|z-z'\|_z^2 \, ds  \\
        &= 
            \Big(\int_0^1 (1-s) \exp(2s \alpha) \, ds \Big)
            \cdot  \|z-z'\|_z^2 
        = 
            \mu(2\alpha)
            \cdot 
            \|z' - z\|^2_z  \;.
    \end{align*}
    For the lower bound on $D_F(z', z)$, we similarly have
    from~\eqref{eq:bregman-bound-gsc} that 
    \begin{align*}    
        \big\langle z- z', \nabla^2 F(z_s) (z-z') \big\rangle
        &\ge
        \exp(-2s\alpha) \cdot 
        \langle z-z', \nabla^2 F(z)(z-z')\rangle \\
        &= 
        \exp(-2s\alpha) \cdot 
        \|z-z'\|_{z}^2 \;.
    \end{align*}
    Thus we can again substitute back
    into~\eqref{eq:bregman-bound-01} 
    and integrate to find 
    \begin{align*}
        D_F(z', z) 
        &\ge 
            \Big(\int_0^1 (1-s) \exp(-2s \alpha) \, ds \Big)
            \cdot  \|z-z'\|_z^2 
        = 
            \mu(-2\alpha)
            \cdot 
            \|z' - z\|^2_z  \;,
    \end{align*}
    which concludes the proof.
\end{proof}

\smallskip

\paragraph{Bound on Hessian remainder term.}

\smallskip

\begin{restatable}{lem}{lemhessianremainderbound}
    \label{lem:hessian-remainder-bound}
    Fix $z, z' \in \R^{m+n}$.
    Then $G_F(z, z') \in \calS^\bot$. 
    Moreover, if $\|z-z'\|_\infty \le \alpha$
    for some $\alpha > 0$, then
    \begin{equation*}
    \|G_F(z, z')\|_{z, *} \le \nu(2\alpha)
    \cdot \|z - z'\|_{z} \,.
    \end{equation*}
\end{restatable}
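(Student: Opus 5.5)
We need to prove two things: $G_F(z,z')\in\calS^\bot$, and a dual-norm bound on it. The plan is to work blockwise and use the integral representation~\eqref{eq:hess-remainder} together with the~\eqref{H-gsc} property.

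\textbf{Membership in $\calS^\bot$.} By Proposition~\ref{prop:energy-hessian}(i), every Hessian $\nabla^2 F(\cdot)$ has blocks $\Diag(p)-pp^\top$. Each block annihilates $\1$ and is symmetric, so its range is orthogonal to $\1$. Hence both $\nabla^2 F(z_s)(z-z')$ and $\nabla^2 F(z)(z-z')$ lie in $\calS^\bot$ for every $s$. Integrating over $s\in[0,1]$ preserves membership in the subspace, so $G_F(z,z')\in\calS^\bot$. (Alternatively, $G_F = \nabla F(z)-\nabla F(z') - \nabla^2F(z)(z-z')$, which is a difference of elements of $\calW$ minus an element of $\calS^\bot$.)

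\textbf{Dual-norm bound.} Set $h=z-z'$ and $H=\nabla^2F(z)$. By duality on $\calS^\bot$,
\[
\|G\|_{z,*}=\sup\{\langle u,G\rangle : \|u\|_z\le 1\}.
\]
Because $G\in\calS^\bot$ and $\calS=\Null(H)$, it suffices to range over $u\in\calS^\bot$ (components of $u$ in $\calS$ pair to zero with $G$). For such $u$,
\[
\langle u,G\rangle=\int_0^1 \langle u,(\nabla^2F(z_s)-H)h\rangle\,ds .
\]
Since $\|z_s-z\|_\infty\le s\alpha$, \eqref{H-gsc} gives
\[
(e^{-2s\alpha}-1)H\preceq \nabla^2F(z_s)-H\preceq (e^{2s\alpha}-1)H .
\]
Thus $M_s:=\nabla^2F(z_s)-H$ is a symmetric matrix with $-c_sH\preceq M_s\preceq c_sH$, where $c_s=e^{2s\alpha}-1$ (using $1-e^{-x}\le e^x-1$). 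A standard lemma then gives $|\langle u,M_sh\rangle|\le c_s\|u\|_z\|h\|_z$. To prove it, apply Cauchy--Schwarz to $H^{-1/2}M_sH^{-1/2}$ on the range of $H$, whose operator norm is at most $c_s$ by the sandwich; both $u$ and $h$ may be projected onto $\calS^\bot=\mathrm{range}(H)$ because $M_s$ also kills $\calS$.

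\textbf{Integration and main obstacle.} Integrating,
\[
\|G\|_{z,*}\le \Big(\int_0^1 (e^{2s\alpha}-1)\,ds\Big)\|h\|_z=\frac{e^{2\alpha}-1-2\alpha}{2\alpha}\,\|h\|_z=\nu(2\alpha)\,\|h\|_z,
\]
which is exactly the claimed bound. The only delicate step is the two-sided sandwich lemma for an \emph{indefinite} $M_s$ relative to a singular PSD $H$. It needs the common null space $\calS$ and the restricted inverse of Proposition~\ref{prop:energy-hessian-inverse}, so that $\|\cdot\|_{z,*}$ is the correct dual norm of $\|\cdot\|_z$ on $\calS^\bot$.
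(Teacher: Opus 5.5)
Your proof is correct and takes essentially the same route as the paper. Membership in $\calS^\bot$ comes from symmetry of the Hessians together with the fact that their null space is $\calS$. The bound comes from using the LHS sandwich to control the normalized operator $H^{-1/2}(\nabla^2 F(z_s)-H)H^{-1/2}$ (the paper's $M-I$) by $e^{2s\alpha}-1$ and integrating to obtain $\nu(2\alpha)$. The only cosmetic difference is that you compute $\|G_F\|_{z,*}$ through its variational form $\sup_{\|u\|_z\le 1}\langle u, G_F\rangle$, whereas the paper pulls the dual norm inside the integral with the triangle inequality; your version makes the handling of the restricted inverse on $\calS^\bot$ slightly more explicit.
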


\begin{proof} 
    Recall from \eqref{eq:hess-remainder} that we define 
    the remainder term $G_F(z, z')$ in integral form by
    \begin{equation}
         G_F(z, z') = \int_0^1 (\nabla^2 F(z-sv) - \nabla^2 F(z)) v \, ds
        \;,
         \label{eq:RH-repeat}
    \end{equation}  
    where $v = z-z'$. 
    Now for the first claim of the lemma,
    recall from part (iii) of Proposition~\ref{prop:energy-hessian} that
    $\Null(\nabla^2 F(z'')) = \calS$,
    and thus $\nabla^2 F(z'') b = 0$ for every $z'' \in \R^{m+n}$ and $b \in \calS$.
    Then for all $u \in \R^{m+n}$, by symmetry of $\nabla^2 F$, 
    it follows for every $b \in \calS$ that 
    \begin{equation*}
        \langle b, \nabla^2 F(z'') u \rangle
        = 
        \langle u, \nabla^2 F(z'') b \rangle
        = 
        0 \;.
    \end{equation*}
    By definition, this means $\nabla^2 F(z'') u \in \calS^\bot$
    for every $z'', u \in \R^{m+n}$, 
    and thus also $G_F(z, z') \in \calS^\bot$.

    For the second claim of the lemma, 
    taking local (dual) norms in~\eqref{eq:RH-repeat}
    and using the triangle inequality:
     \begin{equation}
        \| G_F(z, z') \|_{z, *}
        \le 
        \int_0^1 
        \| (\nabla^2 F(z-sv) - \nabla^2 F(z)) v \|_{z, *} \; ds \;.
        \label{eq:hrb-00} 
    \end{equation}
    Now repeating the previous argument,
    observe that each term within the norm $\|\cdot\|_{z, *}$
    in the right-hand-side of
    \eqref{eq:hrb-00} belongs to $\calS^\bot$.
    Similar to the proof of Lemma~\ref{lem:bregman-gsc}, 
    let $s \in [0, 1]$ and write $z_s := z - sv$.
    For readability, we further write $P_z = \nabla^2 F(z)$, 
    $P_{z_s} := \nabla^2 F(z_s)$, and $B_s := P_{z_s} - P_z$ 
    for $s \in [0, 1]$.
    Moreover, as we are working over $\calS^\bot$, 
    we write $P_z^{-1} = (\nabla^2 F(z))^{-1}$
    to denote the energy Hessian inverse 
    over $\calS^\bot$.
    With this notation, our goal is thus to control the terms 
    $\|B_s v\|_{z, *}$ in the integrand of~\eqref{eq:hrb-00}. 
    For this, define the matrix $M := P^{-1/2}_{z} P_{z_s} P^{-1/2}_z$, 
    for which it holds that 
    \begin{equation}
        P^{-1/2}_z B_{s} = (M-I) P^{1/2}_z \;.
        \label{eq:hrb-03}
    \end{equation}
    Then taking local (dual) norms, we have 
    \begin{align}
        \|B_s v \|^2_{z,*}
        = 
        \langle 
        B_s v, P^{-1}_z B_s v
        \rangle 
        &= 
        \langle 
        B_s v, P^{-1/2}_z P^{-1/2}_z B_s v 
        \rangle \nonumber \\
        &= 
        \|P^{-1/2}_z B_s v \|^2_2  \nonumber \\
        &= 
        \|(M-I) P^{1/2}_z v \|^2_2
        &&\text{(using~\eqref{eq:hrb-03})} \nonumber \\
        &\le 
        \|M-I\|^2_2 \cdot 
        \| P^{1/2}_z v \|^2_2 \;.
        \label{eq:hrb-04}   
    \end{align}
    Now to bound the two terms of~\eqref{eq:hrb-04}, 
    observe first by definition of the local norm that
    \begin{equation}
        \| P^{1/2}_z v \|^2_2
        = 
        \langle P^{1/2}_z v, P^{1/2}_z v \rangle
        = 
        \langle v, P^{1/2}_z  P^{1/2}_z  v \rangle
        = 
        \langle v, P_z v \rangle
        = 
        \|v \|_{z}^2\;.
        \label{eq:hrb-05}
    \end{equation}

    To bound $\|M-I\|^2_2$, recall from the~\eqref{H-gsc} property that 
    \begin{equation}
        \exp(-2s\alpha) P_z
        \preceq
        P_{z_s}
        \preceq
        \exp(2 s \alpha) P_z \;,
        \label{eq:hrb-gsc}
    \end{equation}
    which follows from the fact that 
    $\|z-z_s\|_\infty = s\|z-z'\|_\infty \le s \alpha$ by assumption.
    Now due to the orderings of~\eqref{eq:hrb-gsc} and using 
    the definition of $M$, it follows by congruence that 
    \begin{equation}
        M  
        = 
        P^{-1/2}_{z} P_{z_s} P^{-1/2}_z
        \preceq 
        \exp(2 s\alpha) \cdot P_z^{-1/2} P_z P^{-1/2}_z 
        = 
        \exp(2 s\alpha) I 
        \label{eq:hrb-gsc-M}  
    \end{equation}
    and similarly
    \begin{equation}
        M  \succeq \exp(-2s\alpha) I
        \;.
        \label{eq:hrb-gsc-M2}  
    \end{equation}
    Thus rearranging~\eqref{eq:hrb-gsc-M} and~\eqref{eq:hrb-gsc-M2}
    and using $\exp(2s\alpha) - 1 \ge 1- \exp(-2s\alpha)$, 
    it follows that
    \begin{equation*}
        -(\exp(2s\alpha) - 1)I
        \preceq
        M-I
        \preceq
        (\exp(2s\alpha) - 1)I \;.
    \end{equation*}
    Thus, we have $\|M-I\|_2 \le (\exp(2s\alpha) - 1)$
    by definition of spectral norm. 
    Combining this with~\eqref{eq:hrb-05} and substituting
    back into~\eqref{eq:hrb-04} then yields
    \begin{equation*}
         \|B_s v \|_{z,*} 
         \le 
         (\exp(2s\alpha) - 1) \cdot \|v\|_z \;.
    \end{equation*}
    Finally, putting this bound back into~\eqref{eq:hrb-00}
    and integrating, we find
    \begin{equation*}
        \| G_F(z, z') \|_{z, *}
        \le 
        \int_0^1 
        \|B_s v\|_{z, *} \; ds
        \le 
        \Big(
        \int_0^1             
        (\exp(2s\alpha) - 1)
        \, ds
        \Big)
        \cdot 
        \|v\|_z
        = 
       \nu(2\alpha)
        \cdot 
        \|v\|_z \;,
    \end{equation*}
    which concludes the proof. 
\end{proof}


\section{Details on Energy Dissipation Under OMWU}
\label{app:energy-dissipation}

In this section, we develop the proof of 
Lemma~\ref{lem:energy-one-step-full}, 
which gives matching upper and lower bounds on 
the one-step change in energy under~\eqref{eq:omwu-dual}.
We first restate the lemma, before giving 
a more detailed overview of the proof:

\smallskip

\lemenergyonestep*

\smallskip

\begin{restatable}[Constant Stepsize]
{rem}{remstepsize}
	\label{remark:constant-stepsize}
	Note that, for the sake of presentation,
	we do not attempt to optimize the constraint 
	$0 < \eta \le \frac{1}{4(54\sigma_{\max} + 9)}$
	from Assumption~\ref{ass:stepsize}.
	Up to the leading constants,
	the conclusion of 
	Lemma~\ref{lem:energy-one-step-full}
	should hold under even larger absolute
	constant stepsizes.
\end{restatable}

\smallskip
\noindent
\textbf{Simplifying Notation.}
For readability, we write $\sigma = \sigma_{\max} = \|J\|_2$
to denote the spectral norm of $J$.
By Proposition~\ref{prop:spectral-norm-J-A}, 
recall that $\|J\|_2 = \|A\|_2$.

\subsection{Proof of Lemma~\ref{lem:energy-one-step-full}}
\label{app:energy-dissipation:proof}

The proof follows two steps, both using several intermediate results. 
Here, we will state the conclusions of these results and
derive their consequences for proving Lemma~\ref{lem:energy-one-step-full}.
We defer their full statements and proofs 
to the subsequent subsections.

\smallskip
\noindent
\textbf{1. Exact expansion of change in energy}:

\noindent
The starting point is to take a first-order Taylor expansion
of $F(z_{t+1})$ around $z_t$.
Using the structure of the~\eqref{eq:omwu-dual} update,
as well as the Hessian-based approximation  
of gradient differences from Proposition~\ref{prop:grad-diff-general},
we derive an exact expansion 
of $\Delta F(z_t) = F(z_{t+1}) - F(z_t)$,
the one-step change in energy over the dual OMWU iterates.
Specifically, we prove the following:

\noindent
\textbf{Proposition~\ref{prop:energy-one-step-expand}}
(One-step expansion).
\textit{For every $t \ge 1$, there exists
$\calE_{t} \in \R^{m+n}$ such that}
\begin{equation}
	\Delta F(z_t)
	= 
	- \eta^2 \|J \nabla F(z_t)\|^2_{z_t} 
	+ D_F(z_{t+1}, z_t)
	+ \eta \langle J \nabla F(z_t), G_F(z_t, z_{t-1}) \rangle
	+ 
	\eta^2 \calE_t \;.
	\label{eq:diss-01}
\end{equation}

In particular, the error term $\calE_t$ is comprised
of four subterms that are defined explicitly 
in~\eqref{eq:calE-def}. 
The proof of Proposition~\ref{prop:energy-one-step-expand}
is in Section~\ref{app:energy-dissipation:expansion}.

\smallskip
\noindent
\textbf{2. Controlling the error terms}:

\noindent
Notice that the first term of~\eqref{eq:diss-01} is exactly the
desired dissipation term $-\eta^2 \|J \nabla F(z_t)\|^2_{z_t}$ 
(up to a constant factor) appearing
in the statement of the lemma. 
Thus, the remaining technical challenge is to ensure the
final three terms of~\eqref{eq:diss-01} also scale
at most like $O(\eta^2 \|J \nabla F(z_t)\|^2_{z_t})$,
with a constant factor that can be made less than 1 via
a sufficiently small stepsize $\eta$.

For this task, we distinguish between
the middle two terms of~\eqref{eq:diss-01},
which we refer to as the \textit{quadratic error terms},
and the final term of~\eqref{eq:diss-01},
which we refer to as the \textit{cubic error terms}.
We derive the requisite bounds
on these two sets of terms using 
the local norm machinery developed
in Section~\ref{app:energy-gsc-prelims:gsc-bounds},
which hold due to the~\eqref{H-gsc} property
of the energy Hessian. 
In more details:

\noindent
\textit{Bounds on quadratic error terms}.
Under a sufficiently small stepsize, we establish:

\noindent
\textbf{Proposition~\ref{prop:DH-error}}
(Bound on Bregman divergence term). 
\textit{%
	Fix $L \ge 21$. For $\eta \le \tfrac{1}{L\sigma}$
	and all $t \ge 1$:
}
\begin{equation}
	D_F(z_{t+1},  z_t)
	\le 
	\big(\tfrac{18}{25} + \phi(L) \big) 
	\cdot \eta^2 \|J \nabla F(z_t)\|^2_{z_t} \;,
	\label{eq:bregman-bound}
\end{equation}
\textit{where $\phi(L) = \tfrac{36\exp(6/L)}{25L}$
 is a strictly decreasing function of $L$
that can be made arbitrarily small.
In particular, it holds that $\tfrac{18}{25} + \phi(L) \le \tfrac{149}{200}$, 
when $L \ge 72$.}

\smallskip 

We obtain a similar bound for 
the inner product involving the Hessian remainder term:

\noindent
\textbf{Proposition~\ref{prop:RH-inner-bound}}
(Bound on inner product term).
\textit{%
	Fix $L \ge 21$. For $\eta \le \frac{1}{L\sigma}$ and all $t \ge 1$:
}
\begin{equation}
	\big| 
	\eta \langle 
	J \nabla F(z_t), G_F(z_t, z_{t-1})
	\rangle
	\big|
	\le 
	\psi(L) \cdot 
	\eta^2 \|J \nabla F(z_t)\|^2_{z_t} \;,
	\label{eq:bound-RH-inner}
\end{equation}
\textit{where
$\psi(L) = \tfrac{54}{5L} + \tfrac{108\exp(6/L)}{5L^2}$
is a strictly decreasing function of $L$ that can be made
arbitrarily small. In particular, it holds that
$\psi(L) \le \tfrac{31}{200}$ for all $L \ge 72$.}

\noindent
The proofs of Propositions~\ref{prop:DH-error}
and~\ref{prop:RH-inner-bound} are given in
Section~\ref{app:energy-dissipation:error-control:quadratic}.

\smallskip 

\noindent
\textit{Bounds on cubic error terms}.
For the cubic error terms $\eta^2 \calE_t$,
with $\calE_t$ as defined in~\eqref{eq:calE-def}, 
we prove:

\noindent
\textbf{Proposition~\ref{prop:error-third-order}}
(Bound on cubic eror terms).
\textit{Let $B = \tfrac{1}{5}(54 \sigma + 9)$.
Then for $\eta \le \frac{1}{21 \sigma}$ and $t \ge 1$}:
\begin{equation}
\eta^2 |\calE_t|  \le 
B \cdot \eta^3 \|J \nabla F(z_t) \|^2_{z_t} \;.
\label{eq:cubic-bound}
\end{equation}
The proof of Proposition~\ref{prop:error-third-order}
is in Section~\ref{app:energy-dissipation:error-control:cubic}.

\smallskip

\noindent
\textbf{Combining the pieces for the upper bound.}

\noindent
For the upper bound of Lemma~\ref{lem:energy-one-step-full},
assume the stepsize satisfies
$\eta \le \min\{\frac{1}{72\sigma}, \frac{1}{20B}\} = \tfrac{1}{4(54\sigma + 9)}$,
which is the exact setting of Assumption~\ref{ass:stepsize}.
Then applying Propositions~\ref{prop:DH-error},~\ref{prop:RH-inner-bound}
and~\ref{prop:error-third-order}, 
the error terms of~\eqref{eq:diss-01} can be collectively bounded as
\begin{equation}
	D_F(z_{t+1}, z_t) + 
	\eta \langle J \nabla F(z_t), G_F(z_t, z_{t-1}) \rangle
	+ \eta^2 \calE_t 
	\le 
	\big(\tfrac{9}{10}  + \eta \tfrac{1}{20} \big)
	 \cdot \eta^2 \|J\nabla F(z_t)\|^2_{z_t} \;.
	\label{eq:diss-upper-proof}
\end{equation}
Then together with the expansion of~\eqref{eq:diss-01}, we conclude
for $\Delta F(z_t) = F(z_{t+1}) - F(z_t)$ that
\begin{align*}	
	\Delta F(z_t)
	&\le 
	- \eta^2 \|J \nabla F(z_t)\|^2_{z_t} 
	+ 
	\big(\tfrac{9}{10}  + \eta \tfrac{1}{20}\big) \cdot 
	\eta^2 \|J \nabla F(z_t)\|^2_{z_t}  \\
	&\le 
	- \tfrac{1}{20} \eta^2 \|J \nabla F(z_t)\|^2_{z_t} \;,
\end{align*}
where the final inequality is due to $\eta \le 1$.
This yields the upper bound statement of the lemma.

\smallskip

\noindent
\textbf{Combining the pieces for the lower bound.}

\noindent
For the lower bound on $\Delta F(z_t)$, 
note that $D_F(z_{t+1}, z_t) \ge 0$
by convexity of $F$. Thus again under the setting 
$\eta \le \min\{\frac{1}{72\sigma}, \frac{1}{20B}\} = \tfrac{1}{4(54\sigma + 9)}$,
we apply Propositions~\ref{prop:RH-inner-bound}
and~\ref{prop:error-third-order} to bound
\begin{equation*}
	- \eta | \langle J \nabla F(z_t), G_F(z_t, z_{t-1}) \rangle|
	- \eta^2 | \calE_t | 
	\ge
	- \big( 
		\tfrac{31}{200} + \eta \tfrac{1}{20}
	\big)
	\cdot \eta^2 \|J \nabla F(z_t) \|^2_{z_t}  \;.
\end{equation*}
Then again substituting this bound into~\eqref{eq:diss-01},
we obtain the matching lower bound
\begin{align*}
	\Delta F(z_t)
	&\ge 
	- \eta \|J \nabla F(z_t)\|^2_{z_t} 
	- \eta | \langle J \nabla F(z_t), G_F(z_t, z_{t-1}) \rangle|
	- \eta^2 | \calE_t |  \\
	&\ge
	- 
	\big(1 + \tfrac{31}{200} + \eta \tfrac{1}{20}\big)
	\cdot
	\eta^2 \|J \nabla F(z_t) \|^2_{z_t}\\	
	&\ge 
	- \big(1 + \tfrac{1}{4} \big)
	\cdot
	\eta^2 \|J \nabla F(z_t) \|^2_{z_t}
	= 
	- \tfrac{5}{4} \cdot \eta^2 \|J \nabla F(z_t)\|^2_{z_t} \;,
\end{align*}
which completes the proof of the lemma.
\hfill  ~ $\blacksquare$

\smallskip

\begin{restatable}[Generality of Lemma~\ref{lem:energy-one-step-full}]
{rem}{remarklemmagenerality}
	\label{remark:energy-dissipation-proof}
	We note that the proof of Lemma~\ref{lem:energy-one-step-full}
 	only requires that (1) $J = - J^\top$ is a skew-symmetric 
 	matrix and (2) the function $F$ satisfies
 	the local Hessian stability property~\eqref{H-gsc}. 
	While this latter property is established specifically for
	the log-sum-exp energy $F$ in Proposition~\ref{prop:energy-gsc},
	Lemma~\ref{lem:energy-one-step-full} holds
	for any other function satisfying this property. 
\end{restatable}

\smallskip

\paragraph{Organization of remaining subsections.}
The remainder of this section is organized as follows:
\begin{itemize}[
	leftmargin=1em
]
\item
\textbf{Section~\ref{app:energy-dissipation:expansion}}
gives the full proof of the exact one-step expansion
of $F(z_{t+1}) - F(z_t)$.

\item
\textbf{Section~\ref{app:energy-dissipation:error-control}}
gives the proofs of Propositions~\ref{prop:DH-error}, 
\ref{prop:RH-inner-bound}, and~\ref{prop:error-third-order}
for bounding the error terms.  
This also involves first introducing several technical lemmas
related to the stability of the OMWU iterates 
(see Section~\ref{app:energy-dissipation:error-control:stability-helpers})

\item
\textbf{Section~\ref{app:energy-dissipation:stability}}
gives the proof of the core
stability property of Proposition~\ref{prop:omwu-local-stability}.

\item
\textbf{Section~\ref{app:energy-dissipation:initial}}
gives a useful bound on the initial change in energy 
$F(z_1) - F(z_0)$ during the first step of the algorithm
(which contains no optimistic correction term).
\end{itemize}

\subsection{Expanding the One-Step Change in Energy}
\label{app:energy-dissipation:expansion}

\begin{restatable}{prop}{propenergyonestepexpand}
\label{prop:energy-one-step-expand}
Let $\{z_t\}$ denote the iterates of~\eqref{eq:omwu-dual}.
Then for any $t \ge 2$: 
\begin{align*}
	F(z_{t+1}) - F(z_t)
	= 
	- \eta^2 \|J \nabla F(z_t) \|^2_{z_t}
	+ D_F(z_{t+1}, z_t)  
	+ \eta \langle J \nabla F(z_t), G_F(z_t, z_{t-1}) \rangle 
	+ \eta^2 \calE_t \;,
\end{align*}
where $\calE_t := \calE_{t, 1} + \calE_{t,2} + \calE_{t, 3} + \calE_{t, 4}$
for 
\begin{equation}
	\begin{cases}
		\calE_{t,1}
		:=  \langle J \nabla F(z_t), J \nabla^2 F(z_t) (z_t - z_{t-1}) \rangle_{z_t} \\
		\calE_{t,2}
		:= 
		 \langle J \nabla F(z_t), J G_F(z_t, z_{t-1})\rangle_{z_t} \\
		\calE_{t,3}
		:= 
		- \langle J\nabla F(z_t), J \nabla^2 F(z_{t-1}) (z_{t-1} - z_{t-2})\rangle_{z_t} \\
		\calE_{t,4}
		:= 
		-  \langle J \nabla F(z_t), J G_F(z_{t-1}, z_{t-2}) \rangle_{z_t} \;.
	\end{cases}
	\label{eq:calE-def}
\end{equation} 
\end{restatable}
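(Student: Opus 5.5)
The plan is to expand $F(z_{t+1})$ to first order around $z_t$ exactly, and then substitute the \eqref{eq:omwu-dual} recursion twice, at times $t+1$ and $t$. Write $g_s := \nabla F(z_s)$ and $H_s := \nabla^2 F(z_s)$. By the definition of the Bregman divergence,
\begin{equation*}
F(z_{t+1}) - F(z_t) = \langle g_t, z_{t+1} - z_t \rangle + D_F(z_{t+1}, z_t),
\end{equation*}
which already produces the $D_F(z_{t+1}, z_t)$ term.

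For the inner product, I will use $z_{t+1} - z_t = -\eta J g_t - \eta J(g_t - g_{t-1})$. The term $\langle g_t, J g_t\rangle$ vanishes by skew-symmetry of $J$. Skew-symmetry also gives $-\eta\langle g_t, J v\rangle = \eta \langle J g_t, v\rangle$, so
\begin{equation*}
\langle g_t, z_{t+1} - z_t \rangle = \eta \langle J g_t, g_t - g_{t-1}\rangle .
\end{equation*}
Next I will apply Proposition~\ref{prop:grad-diff-general} to write $g_t - g_{t-1} = H_t(z_t - z_{t-1}) + G_F(z_t, z_{t-1})$. This produces the term $\eta\langle J g_t, G_F(z_t, z_{t-1})\rangle$ from the statement, together with
\begin{equation*}
\eta \langle J g_t, H_t(z_t - z_{t-1})\rangle = \eta\langle J g_t, z_t - z_{t-1}\rangle_{z_t}.
\end{equation*}

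The remaining step, and the only one needing care, is to rewrite $z_t - z_{t-1}$ in terms of $J g_t$. This is where $t \ge 2$ is used, since it is what makes the optimistic form of the update valid at time $t$. By \eqref{eq:omwu-dual},
\begin{equation*}
z_t - z_{t-1} = -\eta J g_{t-1} - \eta J(g_{t-1} - g_{t-2}).
\end{equation*}
Substituting $J g_{t-1} = J g_t - J(g_t - g_{t-1})$ gives
\begin{equation*}
z_t - z_{t-1} = -\eta J g_t + \eta J(g_t - g_{t-1}) - \eta J(g_{t-1} - g_{t-2}).
\end{equation*}
I will then expand both gradient differences again with Proposition~\ref{prop:grad-diff-general}:
\begin{align*}
g_t - g_{t-1} &= H_t(z_t - z_{t-1}) + G_F(z_t, z_{t-1}), \\
g_{t-1} - g_{t-2} &= H_{t-1}(z_{t-1} - z_{t-2}) + G_F(z_{t-1}, z_{t-2}).
\end{align*}

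Plugging this into $\eta\langle J g_t, z_t - z_{t-1}\rangle_{z_t}$ gives the leading term $-\eta^2 \|J g_t\|_{z_t}^2$ from $-\eta J g_t$. The four remaining pieces give exactly $\eta^2(\calE_{t,1} + \calE_{t,2} + \calE_{t,3} + \calE_{t,4})$, with signs $+,+,-,-$ matching \eqref{eq:calE-def}. Collecting all terms yields the claimed identity.

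No inequality is used anywhere, so the argument is a pure bookkeeping identity. The main risk is sign errors in the skew-symmetric manipulations, which I would double-check term by term.
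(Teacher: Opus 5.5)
Your proposal is correct and follows essentially the same route as the paper's proof. Both arguments use the Bregman expansion, then skew-symmetry to reduce the first-order term to $\eta\langle J\nabla F(z_t), \nabla F(z_t)-\nabla F(z_{t-1})\rangle$, then Proposition~\ref{prop:grad-diff-general}, then the rewriting of $z_t - z_{t-1}$ via the update at time $t$, and finally a second expansion of both gradient differences. Your signs check out term by term, and your rewriting of $z_t - z_{t-1}$ is in fact cleaner than the paper's displayed intermediate step, which contains a typo.
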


\begin{proof}
	Let $\Delta F(z_t) := F(z_{t+1}) - F(z_t)$.
  	By definition of the Bregman divergence $D_F$, 
  	recall that:
  	\begin{equation}
  		\Delta F(z_t) 
  		= 
  		z_{t+1} - z_t
  		= 
  		\langle \nabla F(z_t), z_{t+1} - z_t \rangle
  		+ 
  		D_F(z_{t+1}, z_t) \;.
  		\label{eq:H-00}
  	\end{equation}
  	Thus, to obtain the statement of the proposition,
  	our task is to expand the first-order term
  	$\langle \nabla F(z_t), z_{t+1} -z_t \rangle$ 
  	in~\eqref{eq:H-00}. 
  	For this, by the~\eqref{eq:omwu-dual} update rule, 
  	observe that we can write
  	\begin{equation}
  		z_{t+1} - z_t 
  		= - \eta J \nabla F(z_t)
  		- \eta J \big(  \nabla F(z_t) - \nabla F(z_{t-1}) \big) \;.
  		\label{eq:H-01}
  	\end{equation}
  	Now for $k \ge 2$, 
  	let us write $\calM_k := \nabla F(z_k) - \nabla F(z_{k-1})$. 
  	Then observe by substituting~\eqref{eq:H-01}
  	into the first-order term of~\eqref{eq:H-00} that
  	\begin{align}
  		\big \langle \nabla F(z_t), z_{t+1} - z_t  \big\rangle
  		&= 
  		- \eta \big \langle
  		\nabla F(z_t), 
  		J \nabla F(z_t)
  		\big\rangle
  		-
  		\eta 
  		\big\langle
  		\nabla F(z_t), J \calM_t
  		\big\rangle  \nonumber \\
  		&= 
  		- \eta 
  		\big\langle
  		\nabla F(z_t), J \calM_t
  		\big\rangle \nonumber \\
  		&= 
  		\eta 
  		\big\langle 
  		J \nabla F(z_t), \calM_t
  		\big\rangle	\;,
  		\label{eq:H-02}
  	\end{align}
  	where the second and third equalities are due to the
  	skew-symmetry of $J = -J^\top$. 
  	We can then further expand the term $\calM_t$ using
  	the Hessian-based approximation of gradient differences from 
	Proposition~\ref{prop:grad-diff-general}.
	Specifically, applying the proposition yields
	\begin{equation}
		\calM_t 
		= 
		\nabla F(z_t) - \nabla F(z_{t-1})
		= 
		 \nabla^2 F(z_t)(z_t - z_{t-1})
		+ G_F(z_{t}, z_{t-1}) \;.
		\label{eq:H-03}
	\end{equation}
	We can further use the update rule of~\eqref{eq:omwu-dual}
	at time $t$ to write 
	\begin{align}
		z_t - z_{t-1} 
		&= - 2 \eta J \nabla F(z_{t-1}) - \nabla F(z_{t-2}) 
		\nonumber \\
		&= 
		- \eta J \nabla F(z_t)
		+ \eta J ( \nabla F(z_t) - \nabla F(z_{t-1}))
		- \eta J ( \nabla F(z_{t-1}) - \nabla F(z_{t-2}) )
		\nonumber \\
		&= 
		- \eta J \nabla F(z_t)
		+ \eta J \calM_{t} 
		- \eta J \calM_{t-1} \;.
		\label{eq:H-04}
	\end{align}
	Similar to $\calM_t$, we can apply
	Proposition~\ref{prop:grad-diff-general} at time $t-1$ to obtain
	\begin{equation}
		\calM_{t-1} 
		= 
		\nabla F(z_{t-1}) - \nabla F(z_{t-2})
		= 
		 \nabla^2 F(z_{t-1})(z_{t-1} - z_{t-2})
		+ G_F(z_{t-1}, z_{t-2}) \;.
		\label{eq:H-05}
	\end{equation}
	Then combining~\eqref{eq:H-03},~\eqref{eq:H-04},
	and~\eqref{eq:H-05}, we can write
	\begin{align*}
		z_{t} - z_{t-1} 
		= - \eta J \nabla F(z_t) 
		&+ \eta J \nabla^2 F(z_t) (z_t - z_{t-1}) \\
		&+ \eta J G_F(z_t, z_{t-1})  \\
		&- \eta J \nabla^2 F(z_{t-1})(z_{t-1} - z_{t-2})  \\
		&- \eta J G_F(z_{t-1}, z_{t-2}) \;,
	\end{align*}
	from which it follows by~\eqref{eq:H-03} that 
	\begin{equation}
		\begin{aligned}
		\calM_t
		= 
		- \eta \nabla^2 F(z_t) J \nabla F(z_t)
		&+ \eta \nabla^2 F(z_t) J \nabla^2 F(z_t) (z_t - z_{t-1}) \\
		&+ \eta \nabla^2 F(z_t) J G_F(z_t, z_{t-1}) \\
		&- \eta \nabla^2 F(z_t) J \nabla^2 F(z_{t-1}) (z_{t-1} - z_{t-2}) \\
		&- \eta \nabla^2 F(z_t) J G_F(z_{t-1}, z_{t-2}) \\ 
		& + G_F(z_t, z_{t-1}) \;.
		\end{aligned}
		\label{eq:H-mt-expand}
	\end{equation}
	Now recall from~\eqref{eq:H-00} and~\eqref{eq:H-02} that 
	$\Delta F(z_t) = D_F(z_{t+1}, z_t) + 
	\eta \langle J \nabla F(z_t), \calM_t \rangle$.
	Then plugging in the expansion of $\calM_t$ from~\eqref{eq:H-mt-expand}
	and using the local norm and local inner product notation 
	$\|\cdot \|_{z_t}$ and $\langle \cdot, \cdot \rangle_{z_t}$,
	we conclude:
	\begin{equation}
		\begin{aligned}
		\Delta F(z_t) 
		= 
		- \eta^2 \|J \nabla F(z_t) \|^2_{z_t}
		+ D_F(z_{t+1}, z_t)  
		&+ \eta \langle J \nabla F(z_t), G_F(z_t, z_{t-1}) \rangle \\
		&+ \eta^2 \langle J \nabla F(z_t), J \nabla^2 F(z_t) (z_t - z_{t-1}) \rangle_{z_t} \\
		&+ \eta^2 \langle J \nabla F(z_t), J G_F(z_t, z_{t-1})\rangle_{z_t} \\
		&- \eta^2\langle J\nabla F(z_t), J \nabla^2 F(z_{t-1}) (z_{t-1} - z_{t-2})\rangle_{z_t} \\
		&- \eta^2 \langle J \nabla F(z_t), J G_F(z_{t-1}, z_{t-2}) \rangle_{z_t}
		\;.
		\end{aligned}
		\label{eq:H-06}
	\end{equation}
	Here, up to the $\eta^2$ factor, the final four terms of~\eqref{eq:H-06} 
	are the quantities $\calE_{t, 1}$, $\calE_{t, 2}$, $\calE_{t, 3}$, 
	and $\calE_{t, 4}$ defined in the statement of the proposition,
	which concludes the proof.
\end{proof}
  
\subsection{Controlling the Error Terms}
\label{app:energy-dissipation:error-control}

\subsubsection{Helper Propositions on Stability of OMWU Iterates}
\label{app:energy-dissipation:error-control:stability-helpers}

Here, we establish several key properties related 
to the stability of the OMWU dual iterates:
First, the following proposition gives a bound in $\ell_{\infty}$ 
norm on the distance between consecutive iterates:

\smallskip

\begin{restatable}{prop}{propdualdiff}
	\label{prop:omwu-dual-diff}
	Let $\{z_t\}$ be iterates of~\eqref{eq:omwu-dual}.
	Then for all $t \ge 0$:  
	\begin{equation*}
	  \|z_{t+1} - z_{t}\|_\infty \le 3 \eta \sigma_{\max} 
	  \;\;\text{and}\;\;
	  \|z_{t+2} - z_{t}\|_\infty \le 6 \eta \sigma_{\max} \;.
	\end{equation*}
\end{restatable}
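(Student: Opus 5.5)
We work directly from the recursive form of the dual update in~\eqref{eq:zt-wt-def-full}. For $t=0$ we have $z_1 - z_0 = -\eta J w_0$. For $t \ge 1$ we have $z_{t+1} - z_t = -2\eta J w_t + \eta J w_{t-1}$. The whole argument reduces to bounding $\|Jw\|_\infty$ uniformly over $w \in \calW$ and then applying the triangle inequality.

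\textbf{Step 1 (entrywise bound on payoff vectors).} For any $w = (p,q) \in \calW$ we have $Jw = (Aq, -A^\top p)$. Each coordinate is a convex combination of entries of $A$, for instance $(Aq)(i) = \sum_j A(i,j) q(j)$. Hence $\|Jw\|_\infty \le a_{\max}$. By Proposition~\ref{prop:amax-sigmamax}, $a_{\max} \le \sigma_{\max}$, so $\|Jw\|_\infty \le \sigma_{\max}$ for all $w \in \calW$.

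This is the only point where care is needed. One might be tempted to use $\|Jw\|_\infty \le \|Jw\|_2 \le \|J\|_2 \|w\|_2$ instead. Since $\|w\|_2 \le \sqrt 2$ for $w \in \calW$, that route only gives $\sqrt 2\,\sigma_{\max}$, which loses a constant. The entrywise convex-combination argument avoids this loss.

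\textbf{Step 2 (one-step bound).} Apply the triangle inequality to the recursion, together with Step 1.
\begin{itemize}
\item For $t \ge 1$: $\|z_{t+1} - z_t\|_\infty \le 2\eta\|Jw_t\|_\infty + \eta\|Jw_{t-1}\|_\infty \le 3\eta\sigma_{\max}$.
\item For $t = 0$: $\|z_1 - z_0\|_\infty = \eta\|Jw_0\|_\infty \le \eta\sigma_{\max} \le 3\eta\sigma_{\max}$.
\end{itemize}
Here we use that $w_t = \nabla F(z_t) \in \calW$ for all $t$, which is Proposition~\ref{prop:omwu-skew-grad}.

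\textbf{Step 3 (two-step bound).} Write $z_{t+2} - z_t = (z_{t+2} - z_{t+1}) + (z_{t+1} - z_t)$. Applying Step 2 to each summand gives the bound $6\eta\sigma_{\max}$ directly.

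If a sharper constant is ever wanted, one can telescope instead. For $t \ge 1$ this gives $z_{t+2} - z_t = -2\eta Jw_{t+1} - \eta Jw_t + \eta Jw_{t-1}$, whose $\ell_\infty$ norm is at most $4\eta\sigma_{\max}$. The stated $6\eta\sigma_{\max}$ suffices, however.
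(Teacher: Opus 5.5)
Your proposal is correct and follows the paper's argument: bound $\|J\nabla F(z)\|_\infty \le a_{\max} \le \sigma_{\max}$ using the convex-combination structure of $Aq$ and $A^\top p$ together with Proposition~\ref{prop:amax-sigmamax}, then apply the triangle inequality to the update. The only differences are minor. The paper covers $t=0$ through the convention $z_{-1}=z_0$ rather than a separate case, and it leaves the two-step bound as ``follows similarly,'' which your Step 3 writes out explicitly.
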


\begin{proof}
	Recall from the definition of~\eqref{eq:omwu-dual} that 
	for $t \ge 0$:
	\begin{equation*}
		z_{t+1} - z_{t}
		= 
		- 2\eta J \nabla F(z_{t}) 
		+ \eta J \nabla F(z_{t-1}) 
		\;.
	\end{equation*}
	As $\nabla F(z) \in \relint(\calW)$ for 
	all $z \in \R^{m+n}$, 
	we have by definition of $J$ that
	$\|J \nabla F(z) \|_{\infty} \le a_{\max}$, 
	where $a_{\max}= \max_{(i, j) \in [m] \times [n]} | A(i, j)|$. 
	By Proposition~\ref{prop:amax-sigmamax}, $a_{\max} \le \sigma = \|J \|_2$,
	and thus by the triangle inequality
	$\|z_{t+1} - z_{t}\|_{\infty} \le 3 \eta \sigma$.
	The second claim for $\|z_{t+2} - z_{t}\|_{\infty}$
	then follows similarly. 
\end{proof}

\smallskip

The next proposition relates the 
local norm of the dual increments $z_{t+1} - z_t$
to the magnitude of the payoff vector $J \nabla F(z_t)$
in local norm: 

\begin{restatable}{prop}{proplocalnormdiff}
  \label{prop:local-norm-diff}
  Let $\{z_t\}$ be the iterates of~\eqref{eq:omwu-dual}
  with $\eta \le \tfrac{1}{21 \sigma}$.
  Then for $t \ge 1$: 
  \begin{equation*}
    \| z_{t+1} - z_{t} \|_{z_t} 
    \le 
    \tfrac{6}{5} \cdot \eta \|J \nabla F(z_t) \|_{z_t} \;.
  \end{equation*}
\end{restatable}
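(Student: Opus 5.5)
We need to bound $\|z_{t+1}-z_t\|_{z_t}$ by $\tfrac{6}{5}\eta\|J\nabla F(z_t)\|_{z_t}$. The plan is a bootstrapping (induction) argument on the quantity $a_t := \|z_{t+1}-z_t\|_{z_t}/(\eta\|J\nabla F(z_t)\|_{z_t})$, or more directly a fixed-point inequality. We write the update as $z_{t+1}-z_t = -\eta J\nabla F(z_t) - \eta J\calM_t$ with $\calM_t = \nabla F(z_t)-\nabla F(z_{t-1})$. The triangle inequality then gives
\[
\|z_{t+1}-z_t\|_{z_t} \le \eta\|J\nabla F(z_t)\|_{z_t} + \eta\|J\calM_t\|_{z_t}.
\]
So it suffices to show $\|J\calM_t\|_{z_t} \le \tfrac15\|J\nabla F(z_t)\|_{z_t}$, up to slack we can absorb.

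First we bound the correction term. By Proposition~\ref{prop:grad-diff-general}, $\calM_t = \nabla^2F(z_t)(z_t-z_{t-1}) + G_F(z_t,z_{t-1})$. For the Hessian part, Lemma~\ref{lem:hessian-local-smooth} with $M=J$ gives $\|J\nabla^2F(z_t)(z_t-z_{t-1})\|_{z_t}\le \sigma\|z_t-z_{t-1}\|_{z_t}$. For the remainder, we use $\|Jg\|_{z_t}\le\|Jg\|_2\le\sigma\|g\|_2\le \sigma\|g\|_{z_t,*}$, which holds since $g=G_F\in\calS^\bot$ and by Proposition~\ref{prop:norm-relations}. Then Lemma~\ref{lem:hessian-remainder-bound} with $\alpha=3\eta\sigma$ (Proposition~\ref{prop:omwu-dual-diff}) gives $\|G_F\|_{z_t,*}\le\nu(6\eta\sigma)\|z_t-z_{t-1}\|_{z_t}$. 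Transferring norms via Lemma~\ref{lem:norm-transfer-gsc}, $\|z_t-z_{t-1}\|_{z_t}\le e^{3\eta\sigma}\|z_t-z_{t-1}\|_{z_{t-1}}$. Altogether,
\[
\|J\calM_t\|_{z_t}\le \sigma(1+\nu(6\eta\sigma))e^{3\eta\sigma}\|z_t-z_{t-1}\|_{z_{t-1}}.
\]

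Next we close the recursion. Applying the same decomposition at step $t-1$ would give $\|z_t-z_{t-1}\|_{z_{t-1}}$ in terms of $\eta\|J\nabla F(z_{t-1})\|_{z_{t-1}}$ plus a smaller term. That quantity must be compared to $\|J\nabla F(z_t)\|_{z_t}$, and this comparison is the main obstacle, since a naive induction hypothesis on ratios does not transfer. To handle it, we write $J\nabla F(z_{t-1}) = J\nabla F(z_t) - J\calM_t$ and transfer norms: $\|J\nabla F(z_{t-1})\|_{z_{t-1}}\le e^{3\eta\sigma}(\|J\nabla F(z_t)\|_{z_t}+\|J\calM_t\|_{z_t})$. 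Alternatively, we avoid recursion entirely by bounding $\|z_t-z_{t-1}\|_{z_{t-1}}$ crudely by $3\eta\|J\nabla F(z_{t-1})\|$-type terms, using $z_t-z_{t-1}=-2\eta J\nabla F(z_{t-1})+\eta J\nabla F(z_{t-2})$ and expressing everything through $X:=\|J\calM_t\|_{z_t}$ and $Y:=\|J\nabla F(z_t)\|_{z_t}$. Either route yields an inequality of the form $X\le c\,\eta\sigma\,(Y+X)$ with $c$ a modest absolute constant (say $\le 4$, after using $\eta\sigma\le 1/21$ so that the exponential factors are at most $e^{6/21}$). Rearranging gives $X\le \frac{c\eta\sigma}{1-c\eta\sigma}Y\le \tfrac15 Y$ for $\eta\sigma\le 1/21$.

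The delicate point is that the older term $J\nabla F(z_{t-2})$ also needs to be expressed relative to $z_t$. For this we plan to run a strong induction over $t$: we simultaneously maintain $\|z_{k}-z_{k-1}\|_{z_{k-1}}\le \tfrac65\eta\|J\nabla F(z_{k-1})\|_{z_{k-1}}$ for all earlier $k$, with the base case $t=1$ holding directly since $z_1-z_0=-\eta J\nabla F(z_0)$ and $z_{-1}=z_0$. Combined with the step-$t$ comparison above, these yield the $\tfrac65$ constant with room to spare.
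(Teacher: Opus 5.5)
Your first route is correct, and it differs from the paper's. The paper bounds the correction term by $\sigma\bigl(1+\nu(6\eta\sigma)\bigr)\|z_t-z_{t-1}\|_{z_t}$, just as you do. It then invokes the backward stability estimate $\|z_t-z_{t-1}\|_{z_t}\le 3\|z_{t+1}-z_t\|_{z_t}$ (Proposition~\ref{prop:omwu-local-stability}(ii)) and absorbs the \emph{current} increment into the left-hand side. That estimate is itself proved by a separate induction using four helper inequalities (Lemma~\ref{lem:helper-inequalities}).

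You instead induct on the statement itself. Set $X=\|J\mathcal{M}_t\|_{z_t}$ and $Y=\|J\nabla F(z_t)\|_{z_t}$. The hypothesis at $t-1$, two norm transfers, and the identity $J\nabla F(z_{t-1})=J\nabla F(z_t)-J\mathcal{M}_t$ give $X\le c\,\eta\sigma(X+Y)$ with $c=\tfrac65\bigl(1+\nu(6\eta\sigma)\bigr)e^{6\eta\sigma}\le 1.85$ for $\eta\sigma\le\tfrac1{21}$. Hence $X\le 0.097\,Y$, and the claim holds with constant about $1.10$.

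Compared with the paper, your argument needs only Lemmas~\ref{lem:norm-transfer-gsc}, \ref{lem:hessian-local-smooth} and~\ref{lem:hessian-remainder-bound}, and it gives a better constant. It also directly yields $\|z_t-z_{t-1}\|_{z_t}\le\tfrac65e^{6\eta\sigma}\eta(X+Y)$, which is the kind of comparison the downstream error bounds need. The paper's stability proposition buys reuse: it is applied again to control $\|z_{t-1}-z_{t-2}\|_{z_t}$ in $\mathcal{E}_{t,3}$ and $\mathcal{E}_{t,4}$.

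Three corrections to your write-up:
\begin{itemize}
\item The closing condition is $c\le\tfrac72$, since $\tfrac{c/21}{1-c/21}\le\tfrac15\iff c\le\tfrac72$. So ``say $c\le 4$'' is not enough ($4/17>1/5$); you need the actual value computed above.
\item Drop the ``crude'' alternative. Splitting $z_t-z_{t-1}=-2\eta J\nabla F(z_{t-1})+\eta J\nabla F(z_{t-2})$ and bounding the pieces separately gives $\|z_t-z_{t-1}\|_{z_{t-1}}\le\eta(3Y_{t-1}+X_{t-1})$, where $X_{t-1},Y_{t-1}$ are the same quantities at time $t-1$. That forces $c\ge 3\bigl(1+\nu(6\eta\sigma)\bigr)e^{6\eta\sigma}\approx 4.6$ at $\eta\sigma=\tfrac1{21}$, which does not reach $\tfrac65$.
\item With the first route, the base case is $t=0$: $\|z_1-z_0\|_{z_0}=\eta\|J\nabla F(z_0)\|_{z_0}$ because $z_{-1}=z_0$. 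Ordinary induction suffices and no $z_{t-2}$ term ever appears, so the ``delicate point'' in your last paragraph does not arise.
\end{itemize}
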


\smallskip
The proof of Proposition~\ref{prop:local-norm-diff},
which is central to establishing the bounds on the error terms,
relies further on the following stability property 
of the OMWU iterates: 

\smallskip

\begin{restatable}{prop}{proplocalstability}
	\label{prop:omwu-local-stability}
	Let $\{z_t\}$ be the iterates of~\eqref{eq:omwu-dual}
    with $\eta \le \tfrac{1}{21\sigma}$.
	Then for all $t \ge 2$, the following inequalities hold:
    \begin{equation*}
	    \text{(i)}
		\;\;
		\|z_{t-1} - z_{t-2}\|_{z_t}
		\le
		2 \cdot  \|z_t - z_{t-1} \|_{z_t}
		\;\;\text{and}\;\;
		\text{(ii)}
		\;\;
		\|z_t - z_{t-1} \|_{z_t}
		\le
		3 \cdot \|z_{t+1} - z_t\|_{z_t} \;.
    \end{equation*}
\end{restatable}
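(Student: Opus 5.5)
The plan is to work with the increments $d_k := z_{k+1} - z_k$ and to compare consecutive increments through the \eqref{eq:omwu-dual} recursion. The dual update gives $d_k = -2\eta J\nabla F(z_k) + \eta J \nabla F(z_{k-1})$, so
\begin{equation*}
d_k - d_{k-1} = -2\eta J\big(\nabla F(z_k) - \nabla F(z_{k-1})\big) + \eta J\big(\nabla F(z_{k-1}) - \nabla F(z_{k-2})\big).
\end{equation*}
Each gradient difference will be written in integral form as $\nabla F(z_j) - \nabla F(z_{j-1}) = \int_0^1 \nabla^2 F(\xi_s)\, d_{j-1}\, ds$, with $\xi_s$ on the segment $[z_{j-1}, z_j]$. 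This is Proposition~\ref{prop:grad-diff-general} with the remainder absorbed back into the integral.

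The basic estimate bounds $\|J \nabla^2 F(\xi) v\|_{z_t}$ for points $\xi$ near $z_t$. We chain Proposition~\ref{prop:norm-relations} (local norm $\le$ Euclidean), the operator norm $\sigma = \|J\|_2$, and the bound $\|\nabla^2 F(\xi) v\|_2 \le \|v\|_\xi$ from the proof of Lemma~\ref{lem:hessian-local-smooth}. This gives
\begin{equation*}
\|J \nabla^2 F(\xi) v\|_{z_t} \le \sigma \|v\|_\xi .
\end{equation*}
We then use Lemma~\ref{lem:norm-transfer-gsc} to get $\|v\|_\xi \le e^{\alpha}\|v\|_{z_t}$. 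Proposition~\ref{prop:omwu-dual-diff} guarantees $\alpha \le 9\eta\sigma \le 9/21$ for all points among $z_{t-3},\dots,z_{t+1}$ and the segments between them. The same lemma lets us move norms between base points, e.g.\ $\|v\|_{z_t} \le e^{3\eta\sigma}\|v\|_{z_{t-1}}$. Combining,
\begin{equation*}
\|d_k - d_{k-1}\|_{z_t} \le 2\eta\sigma e^{\alpha}\,\|d_{k-1}\|_{z_t} + \eta\sigma e^{\alpha}\,\|d_{k-2}\|_{z_t}.
\end{equation*}

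Part (i) is proved by induction on $t$. For the base case $t=2$, the convention $z_0 = z_{-1}$ makes the second term vanish. The triangle inequality gives
\begin{equation*}
\|d_0\|_{z_2} \le \|d_1\|_{z_2} + 2\eta\sigma e^{\alpha}\|d_0\|_{z_2}.
\end{equation*}
Since $2\eta\sigma e^{\alpha} \le 2e^{3/7}/21$ is small, rearranging yields the factor $2$ with room to spare.

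For the inductive step, the triangle inequality gives $\|d_{t-2}\|_{z_t} \le \|d_{t-1}\|_{z_t} + \|d_{t-1}-d_{t-2}\|_{z_t}$. The right side is controlled by the display with $k = t-1$, which involves $\|d_{t-3}\|_{z_t}$. To handle that term, first transfer to base point $z_{t-1}$, then apply the induction hypothesis (i) at time $t-1$, then transfer back:
\begin{equation*}
\|d_{t-3}\|_{z_t} \le 2 e^{6\eta\sigma}\|d_{t-2}\|_{z_t}.
\end{equation*}
This leaves $(1 - c)\|d_{t-2}\|_{z_t} \le \|d_{t-1}\|_{z_t}$ with $c \le \eta\sigma(2e^{\alpha} + 2e^{\alpha + 6\eta\sigma})$. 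With $\eta\sigma \le 1/21$ this is roughly $0.4$, so $1/(1-c) \le 2$.

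Part (ii) then follows directly. We write
\begin{equation*}
\|d_{t-1}\|_{z_t} \le \|d_t\|_{z_t} + 2\eta\sigma e^{\alpha}\|d_{t-1}\|_{z_t} + \eta\sigma e^{\alpha}\|d_{t-2}\|_{z_t},
\end{equation*}
bound the last term using (i) at time $t$ by $2\eta\sigma e^{\alpha}\|d_{t-1}\|_{z_t}$, and rearrange. The resulting constant is $1/(1 - 4\eta\sigma e^{\alpha}) \le 3$.

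The main obstacle is bookkeeping the exponential transfer factors. In the inductive step, the hypothesis is stated in the $z_{t-1}$ local norm while we need it at $z_t$. Each transfer costs $e^{\|\cdot\|_\infty}$, and the accumulated constant must stay below the thresholds $2$ and $3$ under $\eta \le 1/(21\sigma)$. I would fix the worst-case distances once (every relevant pair is within $9\eta\sigma$ in $\ell_\infty$ by Proposition~\ref{prop:omwu-dual-diff}) and use a single uniform factor $e^{9/21}$ throughout, checking that the numerics still close.
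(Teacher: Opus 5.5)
Your proof is correct, and its structure matches the paper's. Both proofs run an induction on $t$ through the recursion for consecutive increments, use the LHS property (Proposition~\ref{prop:energy-gsc}) to move every norm to base point $z_t$, and absorb an $O(\eta\sigma)$ multiple of the older increment. The execution differs in three places.

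First, the paper writes each gradient difference as $\nabla^2F(z_{k-1})(z_{k-1}-z_{k-2}) + G_F(z_{k-1},z_{k-2})$. It then controls the remainder with Lemma~\ref{lem:hessian-remainder-bound}, which is where the constants $\nu(6\eta\sigma)$ and $\nu(12\eta\sigma)$ in Lemma~\ref{lem:helper-inequalities} come from. You keep the exact integral $\int_0^1\nabla^2F(\xi_s)\,ds$, so one norm transfer per point suffices and no remainder estimate is needed.

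Second, the paper inducts on the stronger auxiliary bound $\|(z_t-z_{t-1})-(z_{t-1}-z_{t-2})\|_{z_t}\le \eta B\,\|z_{t-1}-z_{t-2}\|_{z_t}$, which says consecutive increments nearly coincide, and derives (i) from it. You induct on (i) directly and pay a factor $2e^{6\eta\sigma}$ on the oldest increment. Your slack covers this: the constant $c$ is about $0.34$ at $\eta\sigma=1/21$.

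Third, for (ii) the paper applies (i) at time $t+1$ and transfers both sides from $z_{t+1}$ back to $z_t$, giving the constant $2e^{6\eta\sigma}\approx 2.66$. You instead reuse the increment recursion at base point $z_t$ together with (i) at time $t$. That gives $(1-4\eta\sigma e^{\alpha})^{-1}\approx 1.4$ with no extra transfer.

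Your route is more self-contained and has cleaner constants. The paper's decomposition has a different advantage: it uses the same Hessian-term-plus-remainder form $G_F$ as the energy expansion of Proposition~\ref{prop:energy-one-step-expand}. That lets the same helper bounds be reused in Proposition~\ref{prop:local-norm-diff} and in the error-term estimates.
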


We defer the proof of Proposition~\ref{prop:omwu-local-stability}
to Section~\ref{app:energy-dissipation:stability},
and we assume it is true for now.
We then proceed with the proof of
Proposition~\ref{prop:local-norm-diff}:

\paragraph{Proof of Proposition~\ref{prop:local-norm-diff}}
First, using the OMWU update rule at time $t+1$
and taking local norms, we have 
\begin{align}
    \|z_{t+1} - z_t \|_{z_t} 
    &= 
    \eta 
    \| J \nabla F(z_t) + J (\nabla F(z_t) - \nabla F(z_{t-1})) \|_{z_t} 
    \nonumber \\
    &\le
    \eta \|J \nabla F(z_t)\|_{z_t}
    + 
    \eta \|J (\nabla F(z_t) - \nabla F(z_{t-1})) \|_{z_t}  \;.
    \label{eq:lnd-01}
\end{align}
Thus to prove the proposition, it suffices to control
the second term of~\eqref{eq:lnd-01} in terms of $\|J \nabla F(z_t)\|_{z_t}$.

\paragraph{Controlling the second term of~\eqref{eq:lnd-01}.}
Recall from Proposition~\ref{prop:grad-diff-general} that 
\begin{equation*}
    \nabla F(z_t) - \nabla F(z_{t-1})
    = \nabla^2 F(z_{t}) (z_t - z_{t-1})
    + G_F(z_t, z_{t-1}) \;.
\end{equation*}
Thus it follows by taking norms and applying the triangle inequality that
\begin{align}
    \|J (\nabla F(z_t) - \nabla F(z_{t-1})) \|_{z_t}
    &=
    \|J \nabla^2 F(z_{t}) (z_t - z_{t-1})
    + J G_F(z_t, z_{t-1})\|_{z_t} \nonumber  \\
    &\le
    \|J \nabla^2 F(z_t)(z_t - z_{t-1})\|_{z_t} 
    + 
    \|J G_F(z_t, z_{t-1})\|_{z_t} \;.
    \label{eq:lnd-02}
\end{align}
To control the two terms of~\eqref{eq:lnd-02},
the key is to upper bound both quantities in terms of 
$\|z_{t+1} - z_t \|_{z_t}$. 
For this, we rely on the stability property established
in Proposition~\ref{prop:omwu-local-stability}. 
Specifically, for the first term of~\eqref{eq:lnd-02} we have 
\begin{align}
    \|J \nabla^2 F(z_t)(z_t - z_{t-1})\|_{z_t} 
    &\le 
    \|J\|_2 \cdot \|z_t - z_{t-1} \|_{z_t} 
    &&\text{(by Lemma~\ref{lem:hessian-local-smooth})} \nonumber \\
    &\le 
    3 \sigma \cdot \|z_{t+1} - z_{t} \|_{z_t} 
    &&\text{(by Proposition~\ref{prop:omwu-local-stability})}
    \label{eq:lnd-03} \;.
\end{align}
For the second term of~\eqref{eq:lnd-02}, recall by 
Proposition~\ref{prop:omwu-dual-diff} 
that $\|z_t - z_{t-1}\|_\infty \le 3\eta \sigma$.  
Then using similar calculations as in the proof of 
Part \ref{item:stability-helper-c} of Lemma~\ref{lem:helper-inequalities}, 
and also applying Proposition~\ref{prop:omwu-local-stability}, 
we have 
\begin{align}
    \|J G_F(z_t, z_{t-1})\|_{z_t} 
    &\le 
    \|J G_F(z_t, z_{t-1})\|_2 
    &&\text{(by Proposition~\ref{prop:norm-relations})} \nonumber \\
    &\le 
    \sigma
    \cdot 
    \|G_F(z_t, z_{t-1})\|_2 
    &&\text{(by definition of $\|J\|_2$)} \nonumber \\
    &\le 
    \sigma
    \cdot 
    \|G_F(z_t, z_{t-1})\|_{z_t, *} 
    &&\text{(by Proposition~\ref{prop:norm-relations})} \nonumber \\
    &\le 
    \sigma r_1 \cdot \|z_t - z_{t-1}\|_{z_t}  
    &&\text{(by Lemma~\ref{lem:hessian-remainder-bound})} \nonumber \\
     &\le 
    3 \sigma r_1 \cdot \|z_{t+1} - z_{t}\|_{z_t}  
    &&\text{(by Proposition~\ref{prop:omwu-local-stability})}
    \label{eq:lnd-04} \;,
\end{align} 
where for readability we write
$r_1 := \nu(6 \eta \sigma)$
(for the function $\nu$ as defined in~\eqref{eq:def-mu-nu}).
Then combining~\eqref{eq:lnd-03} and~\eqref{eq:lnd-04}
and substituting back into~\eqref{eq:lnd-02} yields
\begin{equation}
    \|J (\nabla F(z_t) - \nabla F(z_{t-1})) \|_{z_t}   
    \le 
    3\sigma (1 + r_1) \cdot \|z_{t+1} - z_{t}\|_{z_t} \;.
    \label{eq:lnd-05}
\end{equation}

\paragraph{Combining the pieces.}
Combining~\eqref{eq:lnd-05} with~\eqref{eq:lnd-01}, 
we can thus write
\begin{equation}
    \|z_{t+1}-z_t \|_{z_t}
    \le
    \eta \|J \nabla F(z_t)\|_{z_t}
    + 
    3\eta \sigma (1+r_1) \cdot \|z_{t+1} -z_t\|_{z_t} \;.
    \label{eq:lnd-06}
\end{equation}
Using the definition of $r_1 := 6\eta \sigma$ and 
the upper bound on $\nu(\cdot)$ from
Lemma~\ref{lem:exp-bounds}, it is then 
straightforward to check that 
$3\eta \sigma (1+  r_1) \le \tfrac{1}{6}$ when
$\eta \le \tfrac{1}{21\sigma}$.
Thus rearranging~\eqref{eq:lnd-06}, we find
\begin{equation}
    \| z_{t+1} - z_t \|_{z_t} 
    \le 
    \Big(
        \frac{1}{1-3\eta(1+r_1)\sigma}
    \Big)
    \cdot \eta 
    \|J \nabla F(z_t)\|_{z_t} 
    \le
    \frac{6}{5} \cdot \eta \|J \nabla F(z_t)\|_{z_t}  \;,
\end{equation}
which concludes the proof. 
\hfill $\blacksquare$

\subsubsection{Bounds on Quadratic Error Terms}
\label{app:energy-dissipation:error-control:quadratic}

We bound the Bregman divergence term
in local norm as follows:

\smallskip

\begin{restatable}{prop}{propDHerror}
	\label{prop:DH-error}
	Let $\{z_t\}$ be the iterates of~\eqref{eq:omwu-dual}
	with $\eta \le \tfrac{1}{21 \sigma}$.
	Then for $t \ge 1$:
	\begin{equation*}
	D_F(z_{t+1}, z_t)
	\le 
	\tfrac{36}{25} \cdot 
	\mu(6\eta \sigma) \cdot
	\eta^2 \|J \nabla F(z_t)\|^2_{z_t} \;.
	\end{equation*}
	Moreover, if $\eta \le \tfrac{1}{L\sigma}$
	for $L \ge 72$, then 
	$\tfrac{36}{25} \cdot 
	\mu(6\eta \sigma) \le \tfrac{18}{25} + \tfrac{36\exp(6/L)}{25L}
	\le \tfrac{149}{200}$.
\end{restatable}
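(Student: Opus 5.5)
The plan is to combine the Bregman divergence bound of Lemma~\ref{lem:bregman-gsc} with the local-norm increment bound of Proposition~\ref{prop:local-norm-diff}.

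\textbf{Step 1: Bregman bound.} By Proposition~\ref{prop:omwu-dual-diff}, consecutive dual iterates satisfy $\|z_{t+1}-z_t\|_\infty \le 3\eta\sigma$. Lemma~\ref{lem:bregman-gsc} therefore applies with $\alpha = 3\eta\sigma$, giving
\begin{equation*}
D_F(z_{t+1},z_t) \le \mu(6\eta\sigma)\,\|z_{t+1}-z_t\|^2_{z_t}.
\end{equation*}
The local norm here is taken at $z_t$, the base point of the Bregman divergence, so no norm transfer is needed.

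\textbf{Step 2: Increment bound.} Since $\eta \le \tfrac{1}{21\sigma}$, Proposition~\ref{prop:local-norm-diff} gives $\|z_{t+1}-z_t\|_{z_t} \le \tfrac{6}{5}\eta\|J\nabla F(z_t)\|_{z_t}$. Squaring and substituting into Step~1 yields the first claim with constant $\tfrac{36}{25}\mu(6\eta\sigma)$.

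\textbf{Boundary case $t=1$.} The increment bound relies on Proposition~\ref{prop:omwu-local-stability}, which is stated only for $t\ge 2$, so $t=1$ must be handled directly. Under the convention $z_{-1}=z_0$ we have $z_1 - z_0 = -\eta J\nabla F(z_0)$. I would then aim to bound $\|z_2-z_1\|_{z_1}$ directly, using the fact that $z_1-z_0$ is a single skew-gradient step and repeating the argument in the proof of Proposition~\ref{prop:local-norm-diff}. I expect this edge case to be the only real obstacle. Otherwise, I would accept the statement as covering $t\ge1$ given that the cited propositions are applied there.

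\textbf{Step 3: The numeric constant.} Assume $\eta\le \tfrac{1}{L\sigma}$, so $x = 6\eta\sigma \le 6/L$. Lemma~\ref{lem:exp-bounds} with $\delta = 6/L$ gives
\begin{equation*}
\mu(x) \le \tfrac12 + \tfrac{(6/L)\exp(6/L)}{6} = \tfrac12 + \tfrac{\exp(6/L)}{L}.
\end{equation*}
Multiplying by $\tfrac{36}{25}$ gives $\tfrac{18}{25}+\tfrac{36\exp(6/L)}{25L}$. This expression is decreasing in $L$, so it suffices to check $L=72$:
\begin{equation*}
\tfrac{36}{25}\cdot\tfrac{e^{1/12}}{72} = \tfrac{e^{1/12}}{50} \approx 0.0217 \le \tfrac{5}{200} = 0.025,
\end{equation*}
so $\tfrac{18}{25}+0.0217 \le \tfrac{144}{200}+\tfrac{5}{200} = \tfrac{149}{200}$.
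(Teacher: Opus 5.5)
Your proof is correct and follows the paper's own argument: Lemma~\ref{lem:bregman-gsc} with $\alpha = 3\eta\sigma$, then the increment bound of Proposition~\ref{prop:local-norm-diff}, then Lemma~\ref{lem:exp-bounds} with $\delta = 6/L$ for the numeric constant. The $t=1$ case needs no separate argument. Part (ii) of Proposition~\ref{prop:omwu-local-stability} at time $t$ is derived from part (i) at time $t+1$, so at $t=1$ it rests on part (i) at $t=2$, which is exactly the base case of that induction. Hence Proposition~\ref{prop:local-norm-diff} already holds for all $t \ge 1$.
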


\smallskip

\begin{proof}
	Recall from Proposition~\ref{prop:omwu-dual-diff}
	that $\|z_{t+1} - z_t \|_{\infty} \le 3 \eta \sigma$. 
	Then applying the upper bound on $D_F(z_{t+1}, z_t)$
	from Lemma~\ref{lem:bregman-gsc}, we have 
	\begin{equation}
		D_F(z_{t+1}, z_t)
		\le 
		\mu(6\eta\sigma) 
		\cdot \|z_{t+1} - z_t\|^2_{z_t} \;.
		\label{eq:dhb-01} 
	\end{equation}
	Further using the bound
	$\|z_{t+1} - z_t\|_{z_t} \le \frac{6}{5} \cdot \|J \nabla F(z_t)\|_{z_t}$
	from Proposition~\ref{prop:local-norm-diff} (which holds under
	the constraint $\eta \le \frac{1}{21\sigma}$), we find 
	\begin{equation*}
		D_F(z_{t+1}, z_t)
		\le 
		\tfrac{36}{25} \cdot 
		\mu(6\eta\sigma)
		\cdot \eta^2 \|J\nabla F(z_t)\|^2_{z_t} \;.
	\end{equation*}
	Now using the definition of $\mu(\cdot)$	
	from~\eqref{eq:def-mu-nu} and the bounds of 
	Lemma~\ref{lem:exp-bounds}, observe that if $\eta \le 1/(L\sigma)$
	for some positive $L > 0$, then 
	\begin{equation*}
		\tfrac{36}{25} \cdot 
		\mu(6\eta\sigma)
		\le 
		\tfrac{18}{25} + \tfrac{36 \exp(6/L)}{25L} \;.
	\end{equation*}
	The right hand side of this term is decreasing with $L$.
	It is then straightforward to check that, for $L \ge 72$, 
	then $\tfrac{36\exp(6/L)}{25L} \le \tfrac{1}{40}$.
	Thus for all such $L \ge 72$, 
	it holds that 
	$\tfrac{18}{25} + \tfrac{36 \exp(6/L)}{25L} \le \tfrac{149}{200}$.
\end{proof}

\smallskip

For the inner product involving $G_F(z_t, z_{t-1})$, 
we derive a similar bound: 

\begin{restatable}{prop}{proprhinnerbound}
	\label{prop:RH-inner-bound}
	Let $\{z_t\}$ be the iterates of~\eqref{eq:omwu-dual}
	with $\eta \le \frac{1}{21\sigma}$.
	Then for $t \ge 1$: 
	\begin{equation*}
		\big| \eta \langle J \nabla F(z_t), G_F(z_t, z_{t-1}) \rangle \big|
		\le 
		\tfrac{18}{5} \cdot \nu(6 \eta \sigma) \cdot
		\eta^2 \|J \nabla F(z_t)\|^2_{z_t} \;.
	\end{equation*}
	Moreover, if $\eta \le \tfrac{1}{L \sigma}$
	for $L \ge 72$, then 
	$\tfrac{18}{25} \cdot \nu(6 \eta \sigma) \le 
	\tfrac{54}{5L} + \tfrac{108 \exp(6/L)}{5L^2}
	\le
	\tfrac{31}{200}$.
\end{restatable}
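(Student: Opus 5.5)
The plan is to bound the inner product by the generalized Cauchy--Schwarz inequality for the primal--dual local-norm pair, then reduce everything to the already established local-norm bound on the dual increment. First, Lemma~\ref{lem:hessian-remainder-bound} guarantees $G_F(z_t, z_{t-1}) \in \calS^\bot$, so the dual norm is well defined on it, and
\begin{equation*}
	\big|\eta\langle J\nabla F(z_t), G_F(z_t,z_{t-1})\rangle\big|
	\le \eta\, \|J\nabla F(z_t)\|_{z_t}\cdot \|G_F(z_t,z_{t-1})\|_{z_t,*}\;.
\end{equation*}
By Proposition~\ref{prop:omwu-dual-diff}, $\|z_t-z_{t-1}\|_\infty\le 3\eta\sigma$, so the second part of Lemma~\ref{lem:hessian-remainder-bound} with $\alpha=3\eta\sigma$ gives $\|G_F(z_t,z_{t-1})\|_{z_t,*}\le \nu(6\eta\sigma)\,\|z_t-z_{t-1}\|_{z_t}$.

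Next I need to convert $\|z_t-z_{t-1}\|_{z_t}$ into $\eta\|J\nabla F(z_t)\|_{z_t}$. Part (ii) of Proposition~\ref{prop:omwu-local-stability} gives $\|z_t-z_{t-1}\|_{z_t}\le 3\|z_{t+1}-z_t\|_{z_t}$, and Proposition~\ref{prop:local-norm-diff} gives $\|z_{t+1}-z_t\|_{z_t}\le \tfrac65\eta\|J\nabla F(z_t)\|_{z_t}$ under $\eta\le \frac{1}{21\sigma}$. Chaining these gives the factor $3\cdot\tfrac65=\tfrac{18}{5}$, which yields the claimed $\tfrac{18}{5}\nu(6\eta\sigma)\eta^2\|J\nabla F(z_t)\|_{z_t}^2$. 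The case $t=1$, where Proposition~\ref{prop:omwu-local-stability} is stated only for $t\ge2$, needs separate handling. Since $z_{-1}=z_0$, we have $z_1-z_0=-\eta J\nabla F(z_0)$, and I would either verify the stability inequality directly at $t=1$ using LHS with $\|z_1-z_0\|_\infty\le\eta\sigma$, or note that the analogous chain still holds with the same constants. I expect this boundary case to be the only slightly delicate point; everything else is a direct chaining of earlier results.

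For the numerical part, Lemma~\ref{lem:exp-bounds} with $x=6\eta\sigma\le 6/L$ gives $\nu(6\eta\sigma)\le 3\eta\sigma+\frac{36\eta^2\sigma^2 e^{6/L}}{6}\le \frac{3}{L}+\frac{6e^{6/L}}{L^2}$. Multiplying by $\tfrac{18}{5}$ produces $\tfrac{54}{5L}+\tfrac{108e^{6/L}}{5L^2}$; note that the ``$\tfrac{18}{25}$'' in the statement should read $\tfrac{18}{5}$, consistent with $\psi(L)$. This expression is decreasing in $L$, and at $L=72$ it evaluates to about $0.15+0.0045<\tfrac{31}{200}$, which concludes the proof.
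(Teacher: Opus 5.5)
Your proposal is correct and follows the paper's proof essentially verbatim: generalized Cauchy--Schwarz, then Lemma~\ref{lem:hessian-remainder-bound} with $\alpha = 3\eta\sigma$, then part (ii) of Proposition~\ref{prop:omwu-local-stability}, then Proposition~\ref{prop:local-norm-diff}, and finally Lemma~\ref{lem:exp-bounds} for the constant (you are also right that $\tfrac{18}{25}$ should read $\tfrac{18}{5}$). Your $t=1$ case needs no separate argument: the paper derives part (ii) at time $t$ from part (i) at time $t+1$, and for $t=1$ this is the induction's base case $t+1=2$, so the same chain and constants already hold at $t=1$.
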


\begin{proof}
	Applying the generalized Cauchy-Schwarz inequality
	with the pair $(\|\cdot\|_{z_t}, \|\cdot\|_{z_t, *})$, 
	we have 
	\begin{equation}
		\big| \eta \langle J \nabla F(z_t), G_F(z_t, z_{t-1}) \rangle\big|
		\le 
		\eta \|J \nabla F(z_t) \|_{z_t} 
		\cdot 
		\|G_F(z_t, z_{t-1})\|_{z_t, *} \;.
		\label{eq:rhb-01}
	\end{equation}
	Since by Proposition~\ref{prop:omwu-dual-diff} 
	$\|z_{t} - z_{t-1}\|_{\infty} \le 3 \eta \sigma$, 
	we apply the dual norm bound 
	of Lemma~\ref{lem:hessian-remainder-bound}
	on $G_F(z_t, z_{t-1})$ to further write 
	\begin{align}
		\|G_F(z_t, z_{t-1})\|_{z_t, *}
		\le
		\nu(6\eta\sigma) \cdot \|z_t - z_{t-1}\|_{z_t} 
		&\le 
		3 \cdot  \nu(6\eta \sigma) 
		\cdot \|z_{t+1} - z_{t}\|_{z_t}  \nonumber \\
		&\le
		\tfrac{18}{5} \cdot \nu(6 \eta \sigma) 
		\cdot \eta \|J\nabla F(z_t)\|_{z_t} \;.
		\label{eq:rhb-02}
	\end{align}
	Here, the second inequality comes from applying
	the stability bound of Proposition~\ref{prop:omwu-local-stability},
	and the final inequality comes from 
	Proposition~\ref{prop:local-norm-diff}
	(which holds under the constraint $\eta \le 1/(21\sigma)$).
	Combining expressions~\eqref{eq:rhb-02}
	and~\eqref{eq:rhb-01} then yields the 
	first claim.

	For the second claim, we have from the definition of $\nu(\cdot)$
	from~\eqref{eq:def-mu-nu} and the bounds of 
	Lemma~\ref{lem:exp-bounds}, that, if $\eta \le 1/(L\sigma)$
	for some positive $L > 0$, then 
	\begin{equation*}
		\tfrac{18}{5} \cdot \nu(6 \eta \sigma)
		\le 
		\tfrac{54}{5 L} + \tfrac{108 \exp(6/L)}{5L^2} \;.
	\end{equation*}
	The right hand side of this term is decreasing with $L$,
	and it is straightforward to check that 
	this term is at most $\tfrac{31}{200}$
	for all $L \ge 72$.
\end{proof}

\subsubsection{Bounds on Cubic Error Terms}
\label{app:energy-dissipation:error-control:cubic}

\begin{restatable}{prop}{properrorthirdorder}
    \label{prop:error-third-order}
    Let $\{z_t\}$ be the iterates of~\eqref{eq:omwu-dual}
    with $\eta \le \tfrac{1}{21\sigma}$, 
    where $\sigma = \|J\|_2$. 
    Recall the terms $\calE_{t, 1}$, 
    $\calE_{t,  2}$, $\calE_{t, 3}$, and $\calE_{t,4}$
    from~\eqref{eq:calE-def} in Proposition~\ref{prop:energy-one-step-expand} 
    are given by:
    \begin{equation*}
	\begin{cases}
		\calE_{t, 1} 
		:=  \langle J \nabla F(z_t), J \nabla^2 F(z_t) (z_t - z_{t-1}) \rangle_{z_t} \\
		\calE_{t, 2}
		:= 
		 \langle J \nabla F(z_t), J G_F(z_t, z_{t-1})\rangle_{z_t} \\
		\calE_{t, 3}
		:= 
		- \langle J\nabla F(z_t), J \nabla^2 F(z_{t-1}) (z_{t-1} - z_{t-2})\rangle_{z_t} \\
		\calE_{t, 4}
		:= 
		-  \langle J \nabla F(z_t), J G_F(z_{t-1}, z_{t-2}) \rangle_{z_t} \;.
	\end{cases}
    \end{equation*}
    Then the following bounds hold:
    \begin{align*}
        |\calE_{t, 1}|
        &\le 
        \frac{18}{5} \sigma \cdot \eta \|J \nabla F(z_t)\|^2_{z_t}\,,
        \quad
        |\calE_{t, 2}|
        \le
        \frac{3}{5} \cdot \eta \|J \nabla F(z_t)\|^2_{z_t} \,, \\
        |\calE_{t,3}|
        &\le 
        \frac{36}{5} \sigma \cdot \eta \|J \nabla F(z_t)\|^2_{z_t} \,,
        \quad
        |\calE_{t, 4}|
        \le
        \frac{6}{5} \cdot \eta \|J \nabla F(z_t)\|^2_{z_t} \;.
    \end{align*}
    Moreover, recalling that 
    $\calE_t := \calE_{t,1} + \calE_{t,2} + \calE_{t,3} + \calE_{t, 4}$, 
    then 
    $
        \eta^2 |\calE_t|
        \le         
        \Big(\frac{54}{5}\sigma + \frac{9}{5}\Big)
        \cdot \eta^3
        \|J \nabla F(z_t)\|^2_{z_t}
    $.
\end{restatable}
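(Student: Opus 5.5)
Each of the four terms is a local inner product $\langle J\nabla F(z_t), \cdot\rangle_{z_t}$, so I will first apply Cauchy--Schwarz in the local inner product to get $|\calE_{t,k}| \le \|J\nabla F(z_t)\|_{z_t}\cdot \|V_k\|_{z_t}$. Here $V_k$ is the second argument of the $k$-th term. The task is then to show $\|V_k\|_{z_t} \le c_k\,\eta\|J\nabla F(z_t)\|_{z_t}$ with the stated $c_k$. The common reduction uses the stability chain. By Proposition~\ref{prop:omwu-local-stability} and Proposition~\ref{prop:local-norm-diff},
\begin{equation*}
\|z_t-z_{t-1}\|_{z_t}\le 3\|z_{t+1}-z_t\|_{z_t}\le \tfrac{18}{5}\eta\|J\nabla F(z_t)\|_{z_t},
\end{equation*}
and similarly $\|z_{t-1}-z_{t-2}\|_{z_t}\le 2\|z_t-z_{t-1}\|_{z_t}\le \tfrac{36}{5}\eta\|J\nabla F(z_t)\|_{z_t}$.

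For $\calE_{t,1}$, Lemma~\ref{lem:hessian-local-smooth} with $M=J$ gives $\|J\nabla^2F(z_t)(z_t-z_{t-1})\|_{z_t}\le\sigma\|z_t-z_{t-1}\|_{z_t}$. Combined with the chain, this yields the constant $\tfrac{18}{5}\sigma$.

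For $\calE_{t,3}$ the Hessian is evaluated at $z_{t-1}$ rather than $z_t$. I will bound $\|J\nabla^2F(z_{t-1})u\|_{z_t}\le \sigma\|\nabla^2F(z_{t-1})u\|_2\le\sigma\|u\|_{z_{t-1}}$ using Proposition~\ref{prop:norm-relations} and $(\nabla^2F)^2\preceq\nabla^2F$. I then transfer $\|u\|_{z_{t-1}}\le e^{3\eta\sigma}\|u\|_{z_t}$ via Lemma~\ref{lem:norm-transfer-gsc} and Proposition~\ref{prop:omwu-dual-diff}. This produces $\tfrac{36}{5}\sigma e^{3\eta\sigma}$. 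The exact constant $\tfrac{36}{5}\sigma$ needs the transfer factor to be absorbed, and that absorption is the main obstacle. First I will try bounding directly in $z_t$-norm through the chain, since Proposition~\ref{prop:omwu-local-stability}(i) is already stated in the $z_t$-norm. If a factor $e^{3/21}$ survives, I will slightly loosen the constant; it is harmless for the final sum.

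For the remainder terms $\calE_{t,2}$ and $\calE_{t,4}$ I will use Proposition~\ref{prop:norm-relations} twice:
\begin{equation*}
\|JG\|_{z_t}\le\sigma\|G\|_2\le\sigma\|G\|_{z_t,*},
\end{equation*}
which is valid since $G\in\calS^\bot$ by Lemma~\ref{lem:hessian-remainder-bound}. For $G_F(z_t,z_{t-1})$, Lemma~\ref{lem:hessian-remainder-bound} gives the bound $\nu(6\eta\sigma)\|z_t-z_{t-1}\|_{z_t}$. For $G_F(z_{t-1},z_{t-2})$ it gives the bound in the $z_{t-1}$-dual norm, and I will transfer to $z_t$ using the LHS property; the $\infty$-distances are again $O(\eta\sigma)$. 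Lemma~\ref{lem:exp-bounds} gives $\nu(6\eta\sigma)\le 3\eta\sigma+O((\eta\sigma)^2)$, so $\sigma\nu(6\eta\sigma)\le 3\eta\sigma^2(1+o(1))$. Under $\eta\le 1/(21\sigma)$ this is at most roughly $\sigma/6$, and after the chain factors $\tfrac{18}{5}$ and $\tfrac{36}{5}$ the products fall below the absolute constants $\tfrac35$ and $\tfrac65$. Summing the four bounds gives $(\tfrac{54}{5}\sigma+\tfrac95)\eta\|J\nabla F(z_t)\|^2_{z_t}$, and multiplying by $\eta^2$ yields the final claim.
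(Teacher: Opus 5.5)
The step that fails is your last one for the remainder terms. Your chain correctly yields $|\calE_{t,2}|\le\tfrac{18}{5}\,\sigma\,\nu(6\eta\sigma)\,\eta\|J\nabla F(z_t)\|^2_{z_t}$. However, the hypothesis $\eta\sigma\le\tfrac{1}{21}$ only controls the scale-free factor $\nu(6\eta\sigma)$, bounding it by $\tfrac16$. So what you have actually proved is $|\calE_{t,2}|\le\tfrac35\,\sigma\,\eta\|J\nabla F(z_t)\|^2_{z_t}$. Your own estimate $\tfrac{18}{5}\cdot 3\eta\sigma^2=\tfrac{54}{5}\eta\sigma^2$ is below the absolute constant $\tfrac35$ only when $\eta\sigma^2\le\tfrac{1}{18}$, and $\eta\le\tfrac{1}{21\sigma}$ does not imply this once $\sigma>\tfrac76$. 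The same applies to $\calE_{t,4}$.

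No sharper estimate can close this. Replace $(A,\eta)$ by $(cA,\eta/c)$: this leaves $\eta J$, hence every $z_t$, and the hypothesis $\eta\sigma\le\tfrac1{21}$ unchanged. But $\calE_{t,2}$ and $\calE_{t,4}$, being quadratic in $J$, get multiplied by $c^2$, while $\eta\|J\nabla F(z_t)\|^2_{z_t}$ is multiplied only by $c$. These terms do not vanish identically (already for $2\times 2$ games), so the bounds with absolute constants $\tfrac35$, $\tfrac65$, and the $\tfrac95$ in the total, cannot hold for all $\sigma$; each is missing a factor of $\sigma$. The paper's proof reaches them only because it writes $|\calE_{t,2}|=|\langle J\nabla F(z_t),G_F(z_t,z_{t-1})\rangle_{z_t}|$, silently dropping the $J$ in front of $G_F$. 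Your bookkeeping, which keeps that $J$ and pays for it through $\|JG\|_{z_t}\le\sigma\|G\|_{z_t,*}$, is the correct one.

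What your argument does prove, stated honestly, is the following: $|\calE_{t,2}|\le\tfrac35\sigma\,\eta\|J\nabla F(z_t)\|^2_{z_t}$, $|\calE_{t,4}|\le\tfrac65\sigma\,\eta\|J\nabla F(z_t)\|^2_{z_t}$, and $\eta^2|\calE_t|\le\tfrac{63}{5}\sigma\,\eta^3\|J\nabla F(z_t)\|^2_{z_t}$. Consequently the constant $B$ feeding Assumption~\ref{ass:stepsize} in the proof of Lemma~\ref{lem:energy-one-step-full} must be adjusted when $\sigma>1$.

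Your handling of $\calE_{t,1}$ is the paper's. For $\calE_{t,3}$ you rightly flag the base-point mismatch that the paper glosses over; it just asserts $|\calE_{t,3}|\le 2|\calE_{t,1}|$. The transfer factors there and in $\calE_{t,4}$ can be absorbed without loosening any constant. Each diagonal block of $\nabla^2F(z)$ is a covariance matrix, and $\Var_p(u)\le\tfrac14(\max_i u(i)-\min_i u(i))^2\le\tfrac12\|u\|_2^2$, so $\nabla^2F(z)\preceq\tfrac12 I$.
\begin{itemize}
\item For $\calE_{t,3}$: $(\nabla^2F(z_{t-1}))^2\preceq\tfrac12\nabla^2F(z_{t-1})\preceq\tfrac12 e^{6\eta\sigma}\nabla^2F(z_t)$ by Proposition~\ref{prop:energy-gsc}. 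Since $\tfrac12 e^{2/7}<1$, this gives $\|J\nabla^2F(z_{t-1})u\|_{z_t}\le\sigma\|\nabla^2F(z_{t-1})u\|_2\le\sigma\|u\|_{z_t}$, hence exactly $\tfrac{36}{5}\sigma$.
\item For $\calE_{t,4}$: the bound $\|v\|_{z_t}\le\tfrac{1}{\sqrt2}\|v\|_2$ more than offsets the factor $e^{3\eta\sigma}$ incurred when moving from $\|z_{t-1}-z_{t-2}\|_{z_{t-1}}$ to $\|z_{t-1}-z_{t-2}\|_{z_t}$.
\end{itemize}
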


\begin{proof}
	For each term, we use a combination of 
    (i) (generalized) Cauchy-Schwarz
    (ii) the bounds from 
    Lemma~\ref{lem:hessian-local-smooth} and
    Lemma~\ref{lem:hessian-remainder-bound} stemming
    from the~\eqref{H-gsc} property
    and (iii) the local stability property of OMWU 
    from Proposition~\ref{prop:omwu-local-stability}
    to show the magnitude of the term 
    scales at most like a constant multiple of
    $\|J \nabla F(z_t)\|_{z_t} \cdot \| z_{t+1} - z_t\|_{z_t}$. 
    From there, we apply to each case Proposition~\ref{prop:local-norm-diff},
    which establishes $ \| z_{t+1} - z_t\|_{z_t} = O(\eta \|J\nabla F(z_t)\|_{z_t})$.
    In more details:
    \begin{itemize}[
        leftmargin=1em,
        itemsep=0em,
    ]
        \item
        \textbf{Term $\calE_{t, 1}$}.
        Using Cauchy-Schwarz, we have
        \begin{align*}
        |\calE_{t, 1}|
        &= 
            \big|\big\langle
            J \nabla F(z_t), J \nabla^2 F(z_t)(z_t - z_{t-1})
            \big\rangle_{z_t}\big| \\
        &\le
            \|J \nabla F(z_t)\|_{z_t}
            \cdot 
            \|J \nabla^2 F(z_t)(z_t-z_{t-1}) \|_{z_t}\\
        &\le
            \|J \nabla HF(z_t)\|_{z_t}
            \cdot 
            \|J\|_2 \cdot \| z_t  - z_{t-1} \|_{z_t} 
            &&(\text{by Lemma~\ref{lem:hessian-local-smooth}})\\
        &\le 
            \|J \nabla F(z_t)\|_{z_t}
            \cdot 
            \sigma  \cdot 
            3 \cdot \| z_{t+1} - z_t \|_{z_t} 
            &&(\text{by Proposition~\ref{prop:omwu-local-stability}})\\
        &\le
            \|J \nabla F(z_t)\|_{z_t}
            \cdot 
            \sigma \cdot 3 \cdot \tfrac{6}{5}  
            \eta 
            \cdot
            \|J \nabla F(z_t)\|_{z_t} 
            &&(\text{by Proposition~\ref{prop:local-norm-diff}})\\
        &=
            \tfrac{18}{5} \sigma \cdot \eta \|J \nabla F(z_t)\|^2_{z_t} \;.
        \end{align*}

        \item
        \textbf{Term $\calE_{t, 2}$}.
        Recall by Proposition~\ref{prop:omwu-dual-diff}
        that $\|z_t - z_{t-1}\|_\infty \le 3\eta \sigma$.
        Let $r_1 := \nu(6 \eta \sigma)$ 
        for $\nu(\cdot)$ as defined in~\eqref{eq:def-mu-nu}.
        Using the upper bound on $\nu(\cdot)$ from
        Lemma~\ref{lem:exp-bounds}, it is straightforward
        to check that $r_1 \le \frac{1}{6}$ when $\eta \le \frac{1}{21\sigma}$. 
        Then using generalized Cauchy-Schwarz, we can write 
        \begin{align*}
            |\calE_{t, 2} |
            &=
                \big|
                \big\langle
                J \nabla F(z_t), G_F(z_t, z_{t-1})
                \big\rangle_{z_t}\big| \\
            &\le 
                \|J \nabla F(z_t)\|_{z_t}
                \cdot 
                \|G_F(z_t, z_{t-1})\|_{z_t, *} \\
            &\le
                \|J \nabla F(z_t)\|_{z_t}
                \cdot 
                r_1 \cdot \|z_t - z_{t-1}\|_{z_t} 
                &&(\text{by Lemma~\ref{lem:hessian-remainder-bound}}) \\
            &\le 
                \|J \nabla F(z_t)\|_{z_t}
                \cdot 
                r_1
                \cdot 3 \|z_{t+1} - z_t \|_{z_t}
                &&(\text{by Proposition~\ref{prop:omwu-local-stability}}) \\
            &\le
                \|J \nabla F(z_t)\|_{z_t}
                \cdot  r_1
                \cdot 3
                \cdot \tfrac{6}{5} 
                \cdot \eta \|J \nabla F(z_t)\|_{z_t} 
                &&(\text{by Proposition~\ref{prop:local-norm-diff}}) \\
            &\le
                \tfrac{3}{5}
                \cdot \eta \|J \nabla F(z_t)\|^2_{z_t}
                &&(\text{using $r_1 \le \tfrac{1}{6}$ when $\eta \le \tfrac{1}{28\sigma}$}) 
                \;.
        \end{align*}
        \item
        \textbf{Term $\calE_{t, 3}$}.
        We repeat nearly identitical steps as for 
        $\calE_{t, 1}$, but for the third inequality we get 
        from the application of Proposition~\ref{prop:omwu-local-stability}
        that $\|z_{t-1} - z_{t-2}\|_{z_t} \le 6 \|z_{t+1} - z_t\|_{z_t}$. 
        Then it follows that:
        \begin{align*}
            |\calE_{t, 3}| 
            = 
            \big|
            \big\langle
                J \nabla F(z_t), J \nabla^2 F(z_{t-1})(z_{t-1} - z_{t-2})
                \big\rangle_{z_t}
            \big| 
            \le
            2 \cdot |\calE_{t, 1} |  
            \le 
            \tfrac{36}{5} \sigma \cdot \eta \|J \nabla F(z_t)\|^2_{z_t} \;.
        \end{align*}
    
        \item
        \textbf{Term $\calE_{t, 4}$}.
        We repeat nearly identical steps as for 
        $\calE_{t, 2}$, and from the application of 
        Proposition~\ref{prop:omwu-local-stability}
        we get $\|z_{t-1} - z_{t-2}\|_{z_t} \le 6 \|z_{t+1} - z_{t}\|_{z_t}$. 
        Thus it follows that
        \begin{align*}
            |\calE_{t,4} |
            \le
            \big|
            \big\langle
            J \nabla F(z_t), G_F(z_{t-1}, z_{t-2})
            \big\rangle_{z_t} 
            \big|
            \le 
            2 \cdot |\calE_{t, 2}| 
            \le 
            \tfrac{6}{5} \cdot \eta \|J \nabla F(z_t)\|^2_{z_t} \;.
        \end{align*}
    \end{itemize}
    Summing up the constant multiplicative factors, we find
    \begin{equation*}
        |\calE_t|
        \le 
        |\calE_{t, 1}| + |\calE_{t, 2}| + |\calE_{t, 3}| + |\calE_{t, 4}|
        \le         
        \big(\tfrac{54}{5}\sigma + \tfrac{9}{5}\big)
        \cdot \eta
        \|J \nabla F(z_t)\|^2_{z_t} \;,
    \end{equation*}
    which concludes the proof. 
\end{proof}

\subsection{Stability of Iterates Under Local Norm}
\label{app:energy-dissipation:stability}

In this section, we give the proof of
Proposition~\eqref{prop:omwu-local-stability}, 
which establishes the stability of 
the dual OMWU iterates in local norm. 
Restated here:

\smallskip

\proplocalstability*

\smallskip

\subsubsection{Proof of Proposition~\ref{prop:omwu-local-stability}}

We start by proving inequality (i). 
Inequality (ii) will follow from (i). 

\paragraph{Shorthand notation.}
First, for readability, we define several pieces of shorthand notation
that will be used throughout: 
\begin{equation}
	\begin{cases}
	\, \Delta_{t} := z_t - z_{t-1}  \\
	\, g_t := \nabla F(z_t) - \nabla F(z_{t-1}) 
	\end{cases} 
  \;\text{for all $t \ge 1$}.
	\label{eq:stability-notation}
\end{equation}
Moreover, for readability, we write $\sigma = \sigma_{\max} := \|J \|_2$
throughout to denote the spectral norm of $J$.

\paragraph{Definition of constants.}
Recall by Proposition~\ref{prop:omwu-dual-diff} that
the dual iterates of OMWU satisfy the $\ell_\infty$ bounds 
of $\|z_{t} - z_{t-1}\|_\infty \le 3\eta \sigma$
and $\|z_{t} - z_{t-2}\|_\infty \le 6\eta \sigma$
for all $t \ge 2$.
Thus, to streamline the proof of the proposition,
we also define and use the following constants. 
These capture the multiplicative dependencies in applying
the norm transfer inequality of Lemma~\ref{lem:norm-transfer-gsc}
and the Hessian remainder bound 
of Lemma~\ref{lem:hessian-remainder-bound}
with either $\alpha_1 := 3\eta \sigma$ or
$\alpha_2 := 6\eta \sigma$:
\begin{equation}
    \begin{aligned}
        \kappa_1
        &:=  \exp(3\eta \sigma)
        \quad
        &&\text{(for applying Lemma~\ref{lem:norm-transfer-gsc} with $\alpha_1$)}
        \\
        \kappa_2 
        &:=  \exp(6\eta \sigma)
        \quad
        &&\text{(for applying Lemma~\ref{lem:norm-transfer-gsc} with $\alpha_2$)}
        \\
        r_1 
        &:= 
        \nu(6\eta \sigma)
        &&\text{(for applying Lemma~\ref{lem:hessian-remainder-bound} with $\alpha_1$)}
        \\
        r_2
        &:= \nu(12\eta \sigma)
        &&\text{(for applying Lemma~\ref{lem:hessian-remainder-bound} with $\alpha_2$)} 
        \\
        \sigma 
        &:= \|J\|_2 
        &&\text{(spectral norm of $J$)} 
    \end{aligned}
    \;\;.
    \label{eq:stability-constants}
\end{equation}
Here, recall that $\nu(x) = \frac{\exp(x)-x -1}{x}$ is the
scalar function defined in~\eqref{eq:def-mu-nu}.
We then further define the constants $B_1$, $B_2$, and $B$ by:
\begin{equation}
	B_1 :=  \sigma \kappa_1^2 + \sigma r_1 \kappa_1  
	\qquad
	B_2 := \sigma \kappa_2^2 + \sigma r_2 \kappa_2 
	\qquad
    B := 2 (B_1 + B_2) 
    \;\;.
    \label{eq:stability-constants-B}
\end{equation}
We will make use of the following properties
which are straightforward to verify: 
\begin{restatable}{prop}{propstabilityconstantbounds}
	\label{prop:stability-constants-bounds}
	Consider the constants
	defined in expressions~\eqref{eq:stability-constants}
	and~\eqref{eq:stability-constants-B}. 
	Then the following inequalities hold 
	for all $0 < \eta \le \frac{1}{21\sigma}$:
	\begin{equation*}
		\textstyle
		\text{(a)}\;\;  0 \le B_1 \le B_2 
		\qquad
		\text{(b)}\;\; \eta \le \tfrac{1}{4\kappa_1^2(B_1+B_2)}
		\qquad
		\text{(c)}\;\; \tfrac{1}{2\kappa_1^2} \le \tfrac{1}{2}  
		\qquad
		\text{(d)}\;\;\kappa_1 \le \tfrac{3}{2} \;.
	\end{equation*}
\end{restatable}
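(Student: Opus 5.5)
This is a direct numerical verification. I will use only the standing constraint $\eta\sigma \le \tfrac{1}{21}$, write $x := \eta\sigma \in (0, \tfrac{1}{21}]$, and bound each constant in \eqref{eq:stability-constants} and \eqref{eq:stability-constants-B} by an explicit function of $x$.

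I would carry out the four claims in the order (d), (c), (a), (b), since (b) relies on the bounds obtained earlier.

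\textbf{Claim (d).} Since $\kappa_1 = \exp(3x) \le \exp(1/7)$ and $\exp(1/7) \approx 1.154$, we get $\kappa_1 \le \tfrac{3}{2}$.

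\textbf{Claim (c).} Because $3x \ge 0$, we have $\kappa_1 \ge 1$. Hence $\tfrac{1}{2\kappa_1^2} \le \tfrac12$.

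\textbf{Claim (a).} The function $\nu(y) = (e^y - y - 1)/y$ is nonnegative and increasing on $y > 0$. This follows from its power series $\nu(y) = \sum_{k \ge 2} y^{k-1}/k!$, which has nonnegative coefficients. Since $6x \le 12x$, this gives $0 \le r_1 \le r_2$. We also have $1 \le \kappa_1 \le \kappa_2$. Every term in $B_1$ and $B_2$ is a product of nonnegative factors, and each factor in $B_1$ is dominated by the corresponding factor in $B_2$. Therefore $0 \le B_1 \le B_2$.

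\textbf{Claim (b).} This is the only step requiring care with the numerical constants. I would write $B_1 + B_2 = \sigma\, c(x)$ with
\[
c(x) = e^{6x} + \nu(6x)e^{3x} + e^{12x} + \nu(12x)e^{6x},
\]
and this function is increasing in $x$. The target inequality $\eta \le 1/(4\kappa_1^2(B_1+B_2))$ is equivalent to $4x\,e^{6x}c(x) \le 1$. The left-hand side is increasing in $x$, so it suffices to check it at $x = \tfrac{1}{21}$.

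At $x = \tfrac{1}{21}$ the terms of $c$ are as follows.
\begin{itemize}
\item $e^{12/21} \approx 1.77$ and $e^{6/21} \approx 1.33$.
\item By Lemma~\ref{lem:exp-bounds}, $\nu(12/21) \le 0.286 + \tfrac{(0.572)^2 e^{0.572}}{6} \approx 0.38$, and $\nu(6/21) \approx 0.15$.
\end{itemize}
These give $c \lesssim 1.33 + 0.17 + 1.77 + 0.51 \approx 3.8$. Then
\[
4x\,e^{6x}c \approx 4 \cdot 0.0476 \cdot 1.33 \cdot 3.8 \approx 0.96 \le 1,
\]
so claim (b) holds, though only narrowly.

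Because the margin is tight, the main risk is an arithmetic slip. I would therefore recompute $\nu$ exactly rather than via the Lemma~\ref{lem:exp-bounds} bound: $\nu(0.5714) = (1.7708 - 1.5714)/0.5714 \approx 0.349$. Using this value gives a total of about $0.94$, which confirms (b) with a small cushion.
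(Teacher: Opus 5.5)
Your proposal is correct and is essentially the paper's proof. Parts (a), (c) and (d) come from monotonicity of $\exp$ and $\nu$. For (b), the paper likewise reduces by monotonicity to the endpoint check $4\kappa_1^2(B_1+B_2)\le 21\sigma$, bounding $r_1,r_2$ via Lemma~\ref{lem:exp-bounds} to get about $20.2\le 21$, which is your $\approx 0.96\le 1$. (With exact $\nu$ the value is about $0.95$ rather than $0.94$, and a written version should round each factor upward, but the margin absorbs this.)
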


\paragraph{Proof of Proposition~\ref{prop:stability-constants-bounds}.}
We prove each of relationships separately:

\begin{itemize}[
  label={},
  leftmargin=*,
]

\item\textit{Part (a)}. 
Observe first that each term in the definition of $B_1$ and $B_2$, 
is non-negative, and thus $B_1, B_2 \ge 0$. 
Moreover, as $r_1 := \nu(6\eta \sigma)$
and $r_2 = \nu(12 \eta \sigma)$, 
it follows by definition of $\nu$ (from~\eqref{eq:def-mu-nu})
that $r_1 \le r_2$. Similarly, 
$\kappa_1 = \exp(3\eta \sigma) \le \exp(6\eta \sigma) = \kappa_2$.
Thus $B_1 \le B_2$.

\item\textit{Part (b)}.
Suppose $\eta \le \frac{1}{L\sigma}$ for some $L \ge 1$.
Then to verify $\eta \le 1/(4 \kappa^2_1 (B_1 + B_2))$
under such a constraint on $\eta$, it is sufficent to check that 
$4 \kappa^2_1 (B_1 + B_2) \le L \sigma$. 
For this, we start by simplifying $4 \kappa^2_1 (B_1 + B_2)$.
Using the definition of $B_1$ and $B_2$ we have:
\begin{equation}
	4 \kappa_1^2 (B_1 + B_2) 
	= 
	4 \sigma \kappa^2_1
	(\kappa^2_1 + r_1 \kappa_1
	+ \kappa^2_2 + r_2 \kappa_2 
	) 
	= 
	4 \sigma 
	(
		\kappa^4_1 + \kappa^3_1 r_1 
		+ 
		\kappa^2_1 \kappa^2_2 +  r_2 \kappa^2_1 \kappa_2
	) \;.
	\label{eq:sc-01}
\end{equation}
Note further that for $\eta \le \frac{1}{L\sigma}$, 
we have $6\eta\sigma \le \frac{6}{L}$
and $12 \eta \sigma \le \frac{12}{L}$. 
Thus by definition of $r_1$ and $r_2$, 
and using the bounds on $\nu(\cdot)$
from Lemma~\ref{lem:exp-bounds}, we can further write:
\begin{equation}
	\begin{aligned}
	r_1 
	&= 
		\nu(6\eta \sigma)
		\le 
		\tfrac{6\eta\sigma}{2}
		+ 
		\tfrac{1}{6} \cdot (\tfrac{6}{L})^2 \exp(\tfrac{6}{L})
		\le 
		\tfrac{3}{L} + \tfrac{6}{L^2} \exp(\tfrac{6}{L})  \\
	\text{and}\quad
	r_2
	&=
		\nu(12 \eta \sigma)
		\le
		\tfrac{12\eta\sigma}{2}
		+ 
		\tfrac{1}{6} \cdot (\tfrac{12}{L})^2 \exp(\frac{12}{L})
		\le 
		\tfrac{6}{L} + \tfrac{24}{L^2} \exp(\tfrac{12}{L}) \;.
	\end{aligned}
	\label{eq:sc-r1r2}
\end{equation}
Combining expresions~\eqref{eq:sc-r1r2} and~\eqref{eq:sc-01}
and also using the definitions from~\eqref{eq:stability-constants} 
of $\kappa_1 = \exp(3\eta\sigma)$ and 
$\kappa_2 = \exp(6\eta \sigma)$, 
we can then write:
\begin{equation}
	4 \kappa^2_1 (B_1 + B_2)
	\le 
	4 \sigma \cdot 
	\big( 
	\exp(\tfrac{12}{L})
	+ 
	\tfrac{3}{L} \exp(\tfrac{9}{L}) + \tfrac{6}{L^2} \exp(\tfrac{15}{L})
	+ 
	\exp(\tfrac{18}{L})
	+ \tfrac{6}{L} \exp(\tfrac{12}{L}) + \tfrac{24}{L^2}\exp(\tfrac{24}{L}) 
	\big)
	\label{eq:sc-02} \;.
\end{equation}
Note that each term in the right hand side of~\eqref{eq:sc-02} is decreasing
in $L$. 
It is then straightforward to verify that
the right hand side is at most $L \sigma$ for all $L \ge 21$,
which corresponds to the constraint on $L$ in the assumptions of
the proposition and concludes the proof of part (b).

\item\textit{Part (c)}.
Observe that $\frac{1}{2\kappa^2_1} \le \frac{1}{2} 
\iff \kappa^2_1 \ge 1$. 
By definition, we have $\kappa^2_1 = \exp(6\eta \sigma) \ge 1$
for all $\eta > 0$, which proves the claim. 

\item\textit{Part (d)}.
By definition $\kappa_1 = \exp(6\eta \sigma)$.
Thus for $\eta \le \frac{1}{21\sigma}$, 
it follows that $\kappa_1 \le \exp(\tfrac{6}{21}) \le \frac{3}{2}$.
\hfill $\square$

\end{itemize}

\medskip

Using this new notation and relationships between constants, 
we now give the overall template for proving claim (i)
of the proposition:

\paragraph{General template for proving (i).}
Our goal in proving (i) is to establish that
\begin{equation}
  \| \Delta_{t-1}\|_{z_t} \le C \cdot \|\Delta_t\|_{z_t}\;
  \label{eq:stability-i}
\end{equation}
for $C = 2$.
For this, observe by the triangle inequality that 
\begin{equation}
    \|\Delta_{t-1}\|_{z_t} 
    =  \|\Delta_{t} - (\Delta_t - \Delta_{t-1})\|_{z_t} 
    \le
    \|\Delta_{t}\|_{z_t} + 
    \|\Delta_{t} - \Delta_{t-1}\|_{z_t} \;.
    \label{eq:stability-i-01}
\end{equation}
Now suppose there exists an absolute constant
$B < 1/\eta$ such that 
\begin{equation}
  \|\Delta_{t} - \Delta_{t-1}\|_{z_t} 
  \le 
  \eta B \|\Delta_{t-1} \|_{z_t} \;.
  \tag{$\clubsuit$}
  \label{eq:stability-i-goal}
\end{equation}
Then it will follow from~\eqref{eq:stability-i-01} that
\begin{equation}
  \|\Delta_{t-1}\|_{z_t} 
  \le 
  \Big(\frac{1}{1-\eta B}\Big)
  \|\Delta_t\|_{z_t} \;,
  \label{eq:stability-i-final}
\end{equation}
which establishes the inequality (i) so long as $\eta B < 1/2$.
Thus, our goal is to prove the inequality~\eqref{eq:stability-i-goal}.

\paragraph{Helper inequalities.}
To prove the inequality~\eqref{eq:stability-i-goal}, 
we first establish the following helper inequalities 
that hold at general time indices. 
These follow primarily via application of
the norm transfer inequality of Lemma~\ref{lem:norm-transfer-gsc}
and the dual norm bound on the Hessian remainder
terms of Lemma~\ref{lem:hessian-remainder-bound}. 
We state and prove these inequalities in the following lemma: 

\begin{restatable}{lem}{lemhelperinequalities}
\label{lem:helper-inequalities}
  Let $\{z_t\}$ be the dual iterates of OMWU. 
  Then the following inequalities hold:

  \begin{enumerate}[
    label={({\alph*})},
    leftmargin=3.6em,
  ]

  \item\label{item:stability-helper-a}
  For all $k \ge 2$: \, 
  $\|J \nabla^2 F(z_{k-1}) \Delta_{k-1}\|_{z_k} 
  \le \sigma \kappa_1^2  \cdot \|\Delta_{k-1}\|_{z_k}$. 

  \item\label{item:stability-helper-b}
  For all $k \ge 3$: \, 
  $\|J \nabla^2 F(z_{k-2}) \Delta_{k-2}\|_{z_k} 
  \le \sigma \kappa_2^2  \cdot \|\Delta_{k-2}\|_{z_k}$.

  \item\label{item:stability-helper-c}
  For all $k \ge 3$: \, 
  $\|J G_F(z_{k-1}, z_{k-2})\|_{z_k} \le 
  \sigma r_1 \kappa_1 \cdot \|\Delta_{k-1}\|_{z_k}$.

  \item\label{item:stability-helper-d}
  For all $k \ge 4$: \, 
  $\|J G_F(z_{k-2}, z_{k-3})\|_{z_k} \le 
  \sigma r_2 \kappa_2 \cdot \|\Delta_{k-2}\|_{z_k}$.
  \end{enumerate}
  
\end{restatable}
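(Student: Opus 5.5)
Each of the four inequalities follows the same two-step template: first remove the matrix $J\nabla^2 F(\cdot)$ (or $J G_F(\cdot,\cdot)$) by a bound stated in the local norm at the \emph{native} point, and then transfer local norms between nearby iterates using Lemma~\ref{lem:norm-transfer-gsc}. The needed proximity comes from Proposition~\ref{prop:omwu-dual-diff}. It gives $\|z_k - z_{k-1}\|_\infty \le 3\eta\sigma$ (so the transfer factor is $\kappa_1 = \exp(3\eta\sigma)$) and $\|z_k - z_{k-2}\|_\infty \le 6\eta\sigma$ (so the factor is $\kappa_2 = \exp(6\eta\sigma)$).

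\textbf{Parts (a) and (b).} For (a), I first transfer the outer norm from $z_k$ to $z_{k-1}$:
\[
\|J\nabla^2F(z_{k-1})\Delta_{k-1}\|_{z_k} \le \kappa_1\,\|J\nabla^2F(z_{k-1})\Delta_{k-1}\|_{z_{k-1}}.
\]
Next, Lemma~\ref{lem:hessian-local-smooth} with $M = J$ at the point $z_{k-1}$ bounds this by $\kappa_1\sigma\,\|\Delta_{k-1}\|_{z_{k-1}}$. Transferring back to $z_k$ costs another factor $\kappa_1$, which gives $\sigma\kappa_1^2\|\Delta_{k-1}\|_{z_k}$. Part (b) is identical, with $z_{k-2}$ in place of $z_{k-1}$ and the two-step bound, so each transfer costs $\kappa_2$.

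\textbf{Parts (c) and (d).} For (c), I first apply Proposition~\ref{prop:norm-relations} to get $\|JG\|_{z_k} \le \|JG\|_2 \le \sigma\|G\|_2$. Lemma~\ref{lem:hessian-remainder-bound} gives $G_F(z_{k-1},z_{k-2}) \in \calS^\bot$, so the second part of Proposition~\ref{prop:norm-relations} applies and yields $\|G\|_2 \le \|G\|_{z_{k-1},*}$. This dual norm is taken at $z_{k-1}$, which matches the base point of $G_F$, so no dual-norm transfer is needed. Lemma~\ref{lem:hessian-remainder-bound} with $\alpha = 3\eta\sigma$ then gives $\|G\|_{z_{k-1},*} \le \nu(6\eta\sigma)\,\|\Delta_{k-1}\|_{z_{k-1}} = r_1\|\Delta_{k-1}\|_{z_{k-1}}$. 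A single transfer from $z_{k-1}$ to $z_k$ costs $\kappa_1$, giving $\sigma r_1\kappa_1\|\Delta_{k-1}\|_{z_k}$. Part (d) runs the same way, with $G_F(z_{k-2},z_{k-3})$ bounded at $z_{k-2}$ by $\nu(2\cdot 3\eta\sigma)\|\Delta_{k-2}\|_{z_{k-2}}$ and a transfer over distance $6\eta\sigma$ costing $\kappa_2$.

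The one point to watch is the radius used in Lemma~\ref{lem:hessian-remainder-bound} for (d). The stated constant is $r_2 = \nu(12\eta\sigma)$, whereas the natural radius $\alpha = 3\eta\sigma$ (the one-step difference $z_{k-2} - z_{k-3}$) gives $\nu(6\eta\sigma) = r_1$. Since $\nu$ is increasing, $r_1 \le r_2$, so the claimed bound still holds. Beyond that, the argument only requires tracking which base point each local norm uses, so that every transfer is over a distance already controlled by Proposition~\ref{prop:omwu-dual-diff}.
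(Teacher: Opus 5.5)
Your proof is correct and follows the paper's argument step for step. For (a)--(b) you sandwich Lemma~\ref{lem:hessian-local-smooth} between two applications of Lemma~\ref{lem:norm-transfer-gsc}. For (c)--(d) you use the chain $\|JG\|_{z_k}\le\|JG\|_2\le\sigma\|G\|_{z_{k-1},*}$, then Lemma~\ref{lem:hessian-remainder-bound}, then a single norm transfer. For $r_2$ in (d), you use the one-step radius $3\eta\sigma$ together with monotonicity of $\nu$, while the paper applies the remainder lemma directly with the looser radius $6\eta\sigma$; the two are equivalent.
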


\medskip

\noindent\textbf{Proof of Lemma~\ref{lem:helper-inequalities}}\;
We prove each of the inequalities separately:

\begin{itemize}[
  label={},
  leftmargin=*,
]
  \item\textit{Part}~\ref{item:stability-helper-a}.
  By Proposition~\ref{prop:omwu-dual-diff}, we have
  $\|z_k - z_{k-1}\|_\infty \le 3\eta \sigma$, and thus we can write
  \begin{align*}  
    \|J \nabla^2 F(z_{k-1}) \Delta_{k-1}\|_{z_k}
    &\le  
    \kappa_1 \cdot \|J \nabla^2 F(z_{k-1}) \Delta_{k-1}\|_{z_{k-1}} 
    &&(\text{by Lemma~\ref{lem:norm-transfer-gsc}}) \\
    &\le 
    \sigma \kappa_1  \cdot \|\Delta_{k-1}\|_{z_{k-1}} 
    &&(\text{by Lemma~\ref{lem:hessian-local-smooth}}) \\
    &\le 
    \sigma \kappa_1^2  \cdot \|\Delta_{k-1}\|_{z_{k}} 
    &&(\text{by Lemma~\ref{lem:norm-transfer-gsc}}) \,.
  \end{align*}

\item\textit{Part}~\ref{item:stability-helper-b}.
Follows identically to that of \ref{item:stability-helper-a}, 
but now using $\|z_k - z_{k-2}\|_\infty \le 6\eta \sigma$.

\item\textit{Part}~\ref{item:stability-helper-c}.
Using Proposition~\ref{prop:norm-relations}, we have 
\begin{align*}
  \|J G_F(z_{k-1}, z_{k-2})\|_{z_k} 
  &\le 
  \|J G_F(z_{k-1}, z_{k-2})\|_{2} \\
  &\le 
  \|J\|_2 \cdot  \|G_F(z_{k-1}, z_{k-2})\|_{2}  
  \le 
  \sigma  \cdot \|G_F(z_{k-1}, z_{k-2})\|_{z_{k-1}, *} \;.
\end{align*}

Moreover, as $\|z_{k-1}-z_{k-2}\| \le 3 \eta a_{\max}$, then
then applying Lemma~\ref{lem:hessian-remainder-bound} yields 
\begin{align*}
    \|G_F(z_{k-1}, z_{k-2})\|_{z_{k-1}, *}
    &\le 
    r_1  \|z_{k-1} - z_{k-2}\|_{z_{k-1}} 
    &&(\text{by Lemma~\ref{lem:hessian-remainder-bound}}) \\
    &\le 
    r_1 \kappa_1 \|\Delta_{k-1}\|_{z_k} 
    &&(\text{by Lemma~\ref{lem:norm-transfer-gsc}}) \;.
\end{align*}

\item\textit{Part}~\ref{item:stability-helper-d}.
Follows identically to that of \ref{item:stability-helper-c}, 
but now using $\|z_k - z_{k-2}\|_\infty \le 6\eta \sigma$.
\hfill $\square$
\end{itemize}

\smallskip 

\paragraph{Proof of Inequality~\eqref{eq:stability-i-goal}.}
We prove this inequality by induction on $t$
and with $B := 2(B_1 + B_2)$. 

\smallskip 
\noindent
\textbf{Base case}: At time $t=2$, recall fist by the initialization
of the dual OMWU iterates that we have 
$z_1 = z_0 - \eta J \nabla F(z_0)$. It follows that 
\begin{equation*}
  \Delta_2 - \Delta_{1}
  = 
  - \eta J(2\nabla F(z_1) - \nabla F(z_0))
  + \eta J \nabla F(z_0)
  = 
  - 2 \eta J g_1 \;,
\end{equation*}
which means that $\|\Delta_2 - \Delta_1\|_{z_2}  \le 2 \eta \|J g_1\|_{z_2}$.
Now by the definitions of $g_1$ and $\Delta_1$ and using the 
gradient difference expression of Proposition~\ref{prop:grad-diff-general}, 
we have 
\begin{equation}
  g_1 = \nabla F(z_1) - \nabla F(z_0)
  = \nabla^2 F(z_1)\Delta_1 + G_F(z_1, z_0)  \;.
\end{equation}
Thus we can write 
\begin{align}
  \|J g_1 \|_{z_2} 
  &\le 
    \|J \nabla^2 F(z_1) \Delta_1 \|_{z_2} 
    + 
    \|J G_F(z_1, z_0) \|_{z_2}  
    \label{eq:stability-i-base-01}  \\
  &\le
    \sigma \kappa_1^2 \|\Delta_1\|_{z_2} 
    + 
    \sigma r_1 \kappa_1 \|\Delta_1\|_{z_2} \;,
\end{align}
where the final inequality comes from 
applying Part~\ref{item:stability-helper-a} and 
Part~\ref{item:stability-helper-c} of 
Lemma~\ref{lem:helper-inequalities} to the two 
left-hand terms of~\eqref{eq:stability-i-base-01}, respectively.
Recalling the definition of $B_1$ from~\eqref{eq:stability-constants-B},
we thus conclude
\begin{equation*}
  \|\Delta_2 - \Delta_1\|_{z_2}  
  \le 
 \eta  \cdot 2 \|J g_1\|_{z_2}   
  \le
  \eta \cdot 2 B_1 \|\Delta_1\|_{z_2}
  \le 
\eta \cdot 2(B_1+B_2) \|\Delta_1\|_{z_2} \;.
\end{equation*}
Here, the inequality $2B_1 \le 2(B_1 + B_2) = B$
comes from Part (a)
of Proposition~\ref{prop:stability-constants-bounds}.
Thus, \eqref{eq:stability-i-goal} holds for the
base case at time $t=2$. 

\smallskip
\noindent
\textbf{Inductive step:}
Now suppose that~\eqref{eq:stability-i-goal} holds 
for all $2 \le k \le t-1$. 
To show the inequality holds also at time $t$, 
we start by repeating similar calculations as in 
proof of the base case. First, we decompose the
difference $\Delta_{t} - \Delta_{t-1}$ by 
\begin{align*}
  \Delta_t - \Delta_{t-1}
  &= 
  - \eta J \big(
  2\nabla F(z_{t-1}) - 2\nabla F(z_{t-2})
  - (\nabla F(z_{t-2}) - \nabla F(z_{t-3}))
  \big)    \\
  &= 
  - \eta J (2 g_{t-1} - g_{t-2})  \;.
\end{align*}
Taking local norms and using the triangle inequality,
we can then further write
\begin{equation}
  \|\Delta_t - \Delta_{t-1}\|_{z_t} 
  \le 
  2 \eta \|J g_{t-1}\|_{z_t} + \eta \|J g_{t-2}\|_{z_t} \;.
  \label{eq:stability-i-inductive-01}
\end{equation}
To bound $\|J g_{t-1}\|_{z_t}$, using the definitions 
of $g_{t-1}$ and $\Delta_{t-1}$, and using 
Proposition~\ref{prop:grad-diff-general} to express 
the difference of gradients, we have
\begin{equation*}
  g_{t-1} = \nabla F(z_{t-1}) - \nabla F(z_{t-2}) 
  = \nabla^2 F(z_{t-1}) \Delta_{t-1} + G_F(z_{t-1}, z_{t-2}) \;.
\end{equation*}
Thus it follows that 
\begin{align}
  \|J g_{t-1}\|_{z_t} 
  &\le 
  \|J \nabla^2 F(z_{t-1}) \Delta_{t-1}\|_{z_t}
  + 
  \|J G_F(z_{t-1}, z_{t-2})\|_{z_t}  
  \label{eq:stability-i-inductive-02} \\
  &\le 
  (\sigma \kappa^2_1 + \sigma r_1 \kappa_1 ) 
  \cdot \|\Delta_{t-1}\|_{z_t}  \\
  &= 
  B_1  \cdot \|\Delta_{t-1}\|_{z_t} \;,
  \label{eq:stability-i-inductive-02.5}
\end{align}
where the final inequality comes from applying 
Part~\ref{item:stability-helper-a} and Part~\ref{item:stability-helper-c} 
of Lemma~\ref{lem:helper-inequalities} to the two terms 
in~\eqref{eq:stability-i-inductive-02}, respectively. 
Now using an analogous setup for $\|J g_{t-2}\|_{z_t}$ 
and applying Part~\ref{item:stability-helper-b}
and Part~\ref{item:stability-helper-d} of 
Lemma~\ref{lem:helper-inequalities}, we similarly find
\begin{equation}
  \|J g_{t-2}\|_{z_t} 
  \le 
  (\sigma \kappa^2_2 + \sigma r_2 \kappa_2) \|\Delta_{t-2}\|_{z_t}  \;.
  \label{eq:stability-i-inductive-03}  
\end{equation}
Our task is now to further derive an upper bound on $\|J g_{t-2}\|_{z_t}$
in terms of the $\|\Delta_{t-1}\|_{z_t}$. 
For this, as we assume the inductive hypothesis holds at time $t-1$, 
we know
\begin{equation}
  \|\Delta_{t-1} - \Delta_{t-2}\|_{z_{t-1}} \le \eta B \|\Delta_{t-2}\|_{z_{t-1}} \;.
  \label{eq:stability-i-inductive-04}
\end{equation}
Thus by two applications of the local-norm transfer
of Lemma~\ref{lem:norm-transfer-gsc} and using 
the inductive hypothesis from~\eqref{eq:stability-i-inductive-04}, 
we can write
\begin{equation*}
	\|\Delta_{t-1} - \Delta_{t-2}\|_{z_t} 
	\le 
	\kappa_1 \cdot \| \Delta_{t-1} - \Delta_{t-2} \|_{z_{t-1}}
	\le 
	\kappa_1 \eta B \cdot \| \Delta_{t-2} \|_{z_{t-1}}
	\le 
	\kappa_1^2 \eta B \cdot \| \Delta_{t-2} \|_{z_t}   \;.
\end{equation*}
Then using the reverse triangle inequality, we have 
\begin{equation}
  \|\Delta_{t-1}\|_{z_t} 
  \ge 
  \|\Delta_{t-2}\|_{z_t} 
  - \|\Delta_{t-1} - \Delta_{t-2}\|_{z_t} 
  \ge 
  \big(1-\kappa^2_1 \eta B\big) \cdot \|\Delta_{t-2}\|_{z_t} \;.
\end{equation}
Now recalling $B := 2(B_1 + B_2)$, it follows
from Parts(b) and (c) in 
Proposition~\ref{prop:stability-constants-bounds}
that $\eta \kappa_1^2 B \le 1/2 < 1$ 
under the constraint that $\eta \le 1/(21\sigma)$.
Thus it follows that 
\begin{equation}
  \|\Delta_{t-2}\|_{z_t} 
  \le 
  \Big(\frac{1}{1-\kappa^2_1 \eta B} \Big)
  \cdot 
  \|\Delta_{t-1}\|_{z_t}  \;.
  \label{eq:stability-i-inductive-05}
\end{equation}
Ultimately, susbtituting this bound on $\|\Delta_{t-2}\|_{z_t}$
back into~\eqref{eq:stability-i-inductive-03} 
and recalling $B_2 := \sigma \kappa^2_2 + \sigma r_2 \kappa_2$,
we find 
\begin{equation}
   \|J g_{t-2}\|_{z_t} 
  \le 
  \Big(
    \frac{B_2}
    {1-\kappa^2_1 \eta B}    
  \Big) 
  \cdot 
  \|\Delta_{t-2}\|_{z_t} \;.
  \label{eq:stability-i-inductive-06}
\end{equation}
Together with the bound on $\|J g_{t-1}\|_{z_t}$ 
from~\eqref{eq:stability-i-inductive-02.5}, 
we conclude from~\eqref{eq:stability-i-inductive-01} that
\begin{align*}
	\|\Delta_t - \Delta_{t-1}\|_{z_t} 
	&\le 
	2 \eta \|J g_{t-1}\|_{z_t} + \eta \|J g_{t-2}\|_{z_t}   \\
	&\le   
	\eta \Big(
	2B_1 + \frac{B_2}{1-\kappa^2_1 \eta B}
	\Big) \cdot 
	\|\Delta_{z_{t-1}}\|_{z_t} 
	\le   
	\eta \cdot 2(B_1 + B_2) \cdot 
	\|\Delta_{z_{t-1}}\|_{z_t} \;.
\end{align*}
Here, the final inequality comes from the assumption 
that $\eta < 1/(21\sigma)$ and applying Part (b) 
of Proposition~\ref{prop:stability-constants-bounds}.
Thus we conclude that the inductive step of~\eqref{eq:stability-i-goal}
also holds for $B := 2(B_1+B_2)$, which proves
the inequality for all $t \ge 2$. 
Finally, substituting this setting of $B$ into~\eqref{eq:stability-i-final}
yields 
\begin{equation*}
    \| \Delta_{t-1}\|_{z_t} 
    \le 
    \big(
    \tfrac{1}{1-\eta \cdot 2 (B_1+B_2)}
    \big)
    \|\Delta_t\|_{z_t}
    \le 
    2 \|\Delta_t \|_{z_t} \;,
\end{equation*}
where the final inequality is due to $\eta B \le 1/2$, which holds 
by Parts (b) and (c)
of Proposition~\ref{prop:stability-constants-bounds}.
This completes the proof of part (i) of the proposition. 

\paragraph{Proof of claim (ii) of main proposition.}
To prove the second inequality of the proposition,
we apply the previous inequality (i) at time $t+1$. 
This yields
\begin{equation}
    \|z_{t} - z_{t-1}\|_{z_{t+1}}
    \le 
    2 \|z_{t+1} - z_t\|_{z_{t+1}} \;.
    \label{eq:stability-ii-01}
\end{equation}
As $\|z_{t+1}-z_t\|_\infty \le 3 \eta a_{\max}$, applying 
two norm transfers via Lemma~\ref{lem:norm-transfer-gsc} 
to both sides of~\eqref{eq:stability-ii-01} yields:
\begin{equation}
    \|z_{t+1} - z_t\|_{z_{t+1}}
    \le 
    \kappa_1 \|z_{t+1} -z_t \|_{z_t} 
    \;\;\text{and}\;\;
    \|z_t - z_{t-1}\|_{z_{t+1}} 
    \ge 
    \frac{1}{\kappa_1} \|z_t - z_{t-1} \|_{z_t}
    \;.
    \label{eq:stability-ii-02}  
\end{equation}
Combining~\eqref{eq:stability-ii-01} 
and~\eqref{eq:stability-ii-02} then yields 
\begin{equation*}
    \|z_t - z_{t-1} \|_{z_t}
    \le 
    2 (\kappa_1)^2 \cdot 
    \|z_{t+1} -z_t \|_{z_t}
    \le
    3  \|z_{t+1} -z_t \|_{z_t} \;,
\end{equation*}
where the final inequality is due to 
$\kappa_1 \le 3/2$, which comes from 
Part~(d) of 
Proposition~\ref{prop:stability-constants-bounds}
under the assumption that $\eta \le 1/(21\sigma)$.
This completes the proof of part (ii) of the proposition. 
\hfill $\blacksquare$

\subsection{Bound on Initial Change in Energy}
\label{app:energy-dissipation:initial}

By definition of~\eqref{eq:omwu-dual}, 
we have at time $t=1$ that $z_1 = z_0 - \eta J \nabla F(z_0)$.
In this initial step, the iterate $z_1$
is obtained only with a single skew-gradient
step, with no optimistic correction term.
Thus, the initial change in energy $F(z_1) - F(z_0)$
will be non-decreasing 
(although by Lemma~\ref{lem:energy-one-step-full},
$F(z_{t}) - F(z_{t-1}) < 0$
for all subsequent $t \ge 1$). 
In the following lemma, we thus derive 
a worst-case upper bound on this initial change 
in energy at the first iteration:

\smallskip

\begin{restatable}[Initial Change in Energy]
{lem}{leminitialenergychange}
\label{lem:initial-change-energy}
Let $\{z_t\}$ be the iterates of~\eqref{eq:omwu-dual}
with $\eta$ satisfying Assumption~\ref{ass:stepsize}.
Then at time $t=1$:
\begin{equation*}
	F(z_1) - F(z_0)
	\le 
	\tfrac{5}{8} \cdot \eta^2 \|J \nabla F(z_0)\|^2_{z} \;.
\end{equation*}
\end{restatable}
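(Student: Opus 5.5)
The first step $z_1 = z_0 - \eta J\nabla F(z_0)$ is a plain skew-gradient step, so I will expand exactly via the Bregman divergence and then bound the divergence with the local Hessian stability machinery. Writing
\begin{equation*}
F(z_1) - F(z_0) = \langle \nabla F(z_0), z_1 - z_0\rangle + D_F(z_1, z_0),
\end{equation*}
the first-order term equals $-\eta\langle \nabla F(z_0), J\nabla F(z_0)\rangle = 0$ by skew-symmetry $J = -J^\top$, exactly as in the discussion of (MWU Dual) in Section~\ref{app:omwu-prelims:omwu-related}. Hence $F(z_1) - F(z_0) = D_F(z_1, z_0)$, which is nonnegative, consistent with the remark preceding the lemma.

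To bound the Bregman term, I will use Proposition~\ref{prop:omwu-dual-diff}, or directly $\|J\nabla F(z_0)\|_\infty \le a_{\max} \le \sigma$ via Proposition~\ref{prop:amax-sigmamax}. This gives $\|z_1 - z_0\|_\infty \le \eta\sigma$, and in any case at most $3\eta\sigma$. Lemma~\ref{lem:bregman-gsc} with $\alpha = \eta\sigma$ then yields
\begin{equation*}
D_F(z_1, z_0) \le \mu(2\eta\sigma)\,\|z_1 - z_0\|_{z_0}^2 = \mu(2\eta\sigma)\,\eta^2\|J\nabla F(z_0)\|_{z_0}^2 .
\end{equation*}
Unlike the general step, no stability argument such as Proposition~\ref{prop:local-norm-diff} is needed, since the increment is exactly $-\eta J\nabla F(z_0)$.

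It remains to check the constant. By Lemma~\ref{lem:exp-bounds}, $\mu(x) \le \tfrac12 + \tfrac{\delta e^\delta}{6}$ for $|x| \le \delta$. Under Assumption~\ref{ass:stepsize} we have $\eta\sigma \le \tfrac{1}{216}$, so $2\eta\sigma \le \tfrac{1}{108}$ and $\mu(2\eta\sigma) \le \tfrac12 + \tfrac{e^{1/108}}{648} < \tfrac58$. Even using the looser $\alpha = 3\eta\sigma$, we get $\mu(6\eta\sigma) \le \tfrac12 + \tfrac{(1/36)e^{1/36}}{6} < \tfrac58$.

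This gives the claim, with the local norm on the right-hand side read as $\|\cdot\|_{z_0}$. I expect no real obstacle here: the only care needed is identifying the subscript norm as the local norm at $z_0$ and verifying this numeric constant.
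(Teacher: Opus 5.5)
Your proof is correct and follows the paper's argument essentially verbatim. The first-order term vanishes by skew-symmetry, so $F(z_1)-F(z_0)=D_F(z_1,z_0)$, and Lemma~\ref{lem:bregman-gsc} combined with Lemma~\ref{lem:exp-bounds} bounds this by $\mu(\cdot)\,\eta^2\|J\nabla F(z_0)\|_{z_0}^2$ with $\mu(\cdot)\le\tfrac58$. The only cosmetic differences are two: you observe the sharper increment bound $\|z_1-z_0\|_\infty\le\eta\sigma$ where the paper uses the generic $3\eta\sigma$ from Proposition~\ref{prop:omwu-dual-diff}, and you check the constant directly from $\eta\sigma\le\tfrac{1}{216}$ rather than via $L\ge 72$. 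Neither changes the argument.
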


\begin{proof}
	By a first-order Taylor expansion of 
	$F(z_1)$ around $z_0$
	and using the dual OMWU update rule
	$z_1 = z_0 - \eta J \nabla F(z_0)$, 
	we have 
	\begin{align*}
		F(z_1) - F(z_0)
		&= 
		\langle \nabla F(z_0), z_1 - z_0 \rangle
		+ 
		D_F(z_1, z_0) \\
		&= 
		- \eta \langle
			\nabla F(z_0), J \nabla F(z_0)
		\rangle
		+ D_F(z_1, z_0) \\
		&= 
		D_F(z_1, z_0) \;.
	\end{align*}
	Here, the final equality comes from the skew-symmetry
	of $J = - J^\top$. 
	Now recall from Proposition~\ref{prop:omwu-dual-diff}
	that $\|z_1 - z_0\|_{\infty} \le 3 \eta \sigma$. 
	Then applying the bound on $D_F$ in local norm
	from Lemma~\ref{lem:bregman-gsc} and recalling
	the definition of $\mu(\cdot)$ from~\eqref{eq:def-mu-nu},
	we find
	\begin{equation*}	
		D_F(z_1, z_0)
		\le 
		\mu(6\eta \sigma)
		\cdot \|z_1 - z_0 \|^2_{z_0}
		= 
		\mu(6\eta \sigma)
		\cdot \eta^2 \|J \nabla F(z_0)\|^2_{z_0} \;.
	\end{equation*}
	Now if $\eta \le \tfrac{1}{L\sigma}$ for any $L > 0$, 
	the bound on $\mu(\cdot)$ from 
	Lemma~\ref{lem:exp-bounds} further gives
	\begin{equation*}
		\mu(6 \eta \sigma)
		\le 
		\tfrac{1}{2} + \tfrac{\exp(6/L)}{L} \;.
	\end{equation*}
	The second term in the above is strictly
	decreasing for all $L > 0$, and it is straightforward
	to check that 
	$\tfrac{1}{2} + \frac{\exp(6/L)}{L} \le \tfrac{1}{2} + \tfrac{1}{8} =\tfrac{5}{8}$
	for all $L \ge 72$.
	Thus for $\eta \le \tfrac{1}{72\sigma}$
	(which holds under Assumption~\ref{ass:stepsize}),
	we conclude that 
	\begin{equation*}
		F(z_1)
		- 
		F(z_0)
		\le 
		\tfrac{5}{8} \cdot \eta^2 \|J \nabla F(z_0)\|^2_{z_0} \;,
	\end{equation*}
	which concludes the proof.
\end{proof}


\section{Details on Non-Uniform Skew-Gradient Domination}
\label{app:dissipation-kl}

In this section, we develop the proof of 
Proposition~\ref{prop:dissipation-kl-bound},
which relates the \textit{dissipation term} $\|J \nabla F(z)\|^2_{z}$
from Lemma~\ref{lem:energy-one-step-full}
to the KL divergence $\KL(w^\star, w)$
via a \textit{skew-gradient domination} inequality.
To restate the proposition:

\smallskip 

\propdissipationklbound*

\smallskip

\paragraph{Oragnization of Section.}
The proof of Proposition~\ref{prop:dissipation-kl-bound}
first requires introducing several intermediate results, 
and thus this section is organized as follows:
\begin{itemize}[
    leftmargin=1em
]
\item
\textbf{Section~\ref{sec:prelims-spectral-params}}
introduces preliminaries on the 
the spectrum of the energy function hessian $\nabla^2 F$
and on the spectrum of the payoff operator $J$
when restricted to the subspace $\calS^\bot$.

\item
\textbf{Section~\ref{app:dissipation-kl:energy-gap}}
establishes a dual characterization 
of the $\KL \le \chi^2$ relationship
from Proposition~\ref{prop:kl-chisq}
that bounds the energy suboptimality gap 
in the effective dual space $\calZ$.

\item
\textbf{Section~\ref{sec:gradient-domination-proof}}
then gives the proof of Proposition~\ref{prop:dissipation-kl-bound}
using the preliminaries of Sections~\ref{sec:prelims-spectral-params}
and~\ref{app:dissipation-kl:energy-gap}.

\item
\textbf{Section~\ref{sec:primal-gradient-domination}}
presents a second proof of
the skew-gradient domination inequality, 
but using a primal perpsective (see 
Proposition~\ref{prop:gradient-dom-primal}). 
This section also derives a general (but crude) upper 
bound on the dissipation term $\|J \nabla F(z)\|^2_{z}$.

\item
\textbf{Section~\ref{sec:sigma-wmin-relationship}}
gives examples that demonstrate the independence
of the magnitude of $\sigma_{\min}$
with respect to the location of the interior NE $w^\star$.
\end{itemize}

\subsection{Preliminaries on Restricted Spectral Parameters}
\label{sec:prelims-spectral-params}

\subsubsection{Non-uniform local strong convexity in primal norm.}

As mentioned in Section~\ref{sec:omwu} 
(and explained further in 
Section~\ref{app:energy-kl-equiv:duality}), the 
energy function $F$ lacks global strict convexity
over $\R^{m+n}$. In particular, 
Proposition~\ref{prop:energy-linear} implies that
$F$ is affine over the subspace
of constant shifts $\calS$, which reflects the fact that 
$\Null(\nabla^2 F(z)) = \calS$ for all $z \in \R^{m+n}$.
However, when restricted to the orthogonal complement 
$\calS^\bot$, the energy Hessian $\nabla^2 F(z)$
is positive definite, and $F$ is strictly convex,
although the degree of its curvature is highly dependent on 
the local point $z$.
To quantify this dependency, we consider a local and 
non-uniform notion of 
restricted strong convexity that can be defined in terms of 
the minimum restricted eigenvalue $M(z)$ 
of $\nabla^2 F(z)$:

\smallskip

\begin{restatable}[Minimum restricted eigenvalue]
{define}{defminrestrictedeigenvalue}
	\label{def:local-norm-strong-convex}
	For $z \in \R^{m+n}$, let $M(z)$ be the scalar
	\begin{equation*}
		M(z) 
		:= 
		\inf_{v \in \calS^{\bot} \setminus \{0\}} \,
		\frac{\langle v, \nabla^2 F(z) v \rangle}{\|v\|^2_2} \;.
	\end{equation*}
\end{restatable}

By definition of the local norm $\|\cdot \|_z$ 
and the subspaces $\calS$ and $\calS^\bot$,
we then have the following:

\begin{restatable}{prop}{proplocalnormstrongconvex}
	\label{prop:local-norm-strong-convex}
	Fix $z \in \R^{m+n}$.
	Then for any $v \in \R^{m+n}$,
	it holds that
	\begin{equation*}
		\|v\|^2_z \ge M(z)
	 	\cdot \|\Pi_{\calS^{\bot}}(v)\|^2_2 \;.
	\end{equation*}
\end{restatable}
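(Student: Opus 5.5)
The proof decomposes $v$ along $\calS$ and $\calS^\bot$ and then applies the definition of $M(z)$.

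Write $v = \Pi_{\calS}(v) + \Pi_{\calS^\bot}(v)$, and set $s := \Pi_{\calS}(v) \in \calS$ and $u := \Pi_{\calS^\bot}(v) \in \calS^\bot$.

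By part (iii) of Proposition~\ref{prop:energy-hessian}, $\Null(\nabla^2 F(z)) = \calS$, so $\nabla^2 F(z) s = 0$. Expanding the quadratic form and using the symmetry of $\nabla^2 F(z)$ gives
\[
\|v\|^2_z = \langle u, \nabla^2 F(z) u\rangle + 2\langle u, \nabla^2 F(z) s\rangle + \langle s, \nabla^2 F(z) s\rangle = \langle u, \nabla^2 F(z) u\rangle .
\]
The shift component therefore contributes nothing to the local norm.

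It remains to compare $\langle u, \nabla^2 F(z) u\rangle$ with $\|u\|_2^2$.

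If $u = 0$, the inequality reads $\|v\|^2_z \ge 0$, which holds because $\nabla^2 F(z) \succeq 0$.

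If $u \neq 0$, then $u \in \calS^\bot \setminus \{0\}$, so $u$ is admissible in the infimum of Definition~\ref{def:local-norm-strong-convex}. This gives
\[
\frac{\langle u, \nabla^2 F(z) u\rangle}{\|u\|_2^2} \ge M(z),
\]
and multiplying through by $\|u\|_2^2$ yields $\|v\|_z^2 \ge M(z)\,\|\Pi_{\calS^\bot}(v)\|_2^2$.

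The only step that needs any care is the cancellation of the cross term $\langle u, \nabla^2 F(z) s\rangle$. It vanishes because $\calS$ is exactly the kernel of the symmetric matrix $\nabla^2 F(z)$, so the orthogonal decomposition into $\calS$ and $\calS^\bot$ is compatible with the quadratic form. Everything else is routine.
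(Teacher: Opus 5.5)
Your proof is correct and follows the same route as the paper. Both decompose $v$ orthogonally into $\calS$ and $\calS^\bot$, drop the shift component using $\Null(\nabla^2 F(z)) = \calS$ and symmetry, and then apply the definition of $M(z)$. Your explicit handling of the case $\Pi_{\calS^\bot}(v) = 0$ is a small bit of care the paper omits, since the infimum defining $M(z)$ excludes that vector.
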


\begin{proof}
	Recall that every $v \in \R^{m+n}$ 
	has a unique orthogonal decomposition 
	with respect to $\calS$ and $\calS^\bot$, 
	where $v = \Pi_{\calS}(v) + \Pi_{\calS^\bot}(v)$,
	where $\Pi_{\calS}(\cdot)$ and $\Pi_{\calS^\bot}(\cdot)$ 
	denote the orthogonal	projections onto 
	$\calS$ and $\calS^\bot$, respectively.
	Then using this decomposition and also the 
	definition of the local norm $\|\cdot \|_z$,
	we have
	\begin{align*}
		\|v \|^2_{z}
		= 
		\langle v, \nabla^2 F(z) v \rangle
		&= 
		\big\langle (\Pi_{\calS}(v) + \Pi_{\calS^\bot}(v)),
		\nabla^2 F(z)
		( \Pi_{\calS}(v) + \Pi_{\calS^\bot}(v))
		\big\rangle \\
		&= 
		\big\langle
			\Pi_{\calS^\bot}(v),
			\nabla^2 F(z) 
			\Pi_{\calS^\bot}(v)
		\big\rangle \\
		&\ge 
		M(z) 
		\cdot 
		\|\Pi_{\calS^\bot}(v)\|^2_2 \;.
	\end{align*}
	Here, the second equality comes from 
	the fact that $\Null(\nabla^2 F(z)) = \calS$
	(part (iii) of Proposition~\ref{prop:energy-hessian}),
	and the symmetry of the $\nabla^2 F(z)$,
	and the final inequality comes from the definition
	of $M(z)$ in Definition~\ref{def:local-norm-strong-convex},
	which yields the claim.
\end{proof}

Moreover, using the characterization of the energy Hessian
from Proposition~\ref{prop:energy-hessian}, 
the minimum restricted eigenvalue $M(z)$ is exactly the minimum
coordinate of the primal variable $w = \nabla F(z)$ 
corresponding to the dual variable $z$:

\begin{restatable}[Minimum restricted eigenvalue 
is minimum primal coordinate]{prop}{propenergymz}
	\label{prop:energy-mz}
	Fix $z \in \R^{m+n}$,
	and let $w = (p, q) = \nabla F(z) \in \relint(\calW)$.
	Let $p_{\min} = \min_{i \in [m]} p(i)$,
	$q_{\min} = \min_{j \in [n]} q(j)$,
	and $w_{\min} := \min\{p_{\min}, q_{\min}\}$.
	Then $M(z) = w_{\min}$.
\end{restatable}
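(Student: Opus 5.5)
The plan is to reduce $M(z)$ to the two diagonal blocks of the Hessian and then prove matching lower and upper bounds against $w_{\min}$ block by block. By Proposition~\ref{prop:energy-hessian}(i), $\nabla^2 F(z)$ is block diagonal with blocks $\Diag(p)-pp^\top$ and $\Diag(q)-qq^\top$. The subspace $\calS^\bot$ splits as $\{(v_x,v_y): \langle v_x,\1_m\rangle = 0,\ \langle v_y,\1_n\rangle=0\}$. Writing $M_p := \inf\{\Var_p(u)/\|u\|_2^2 : u\perp \1_m,\ u\neq 0\}$, and $M_q$ analogously, we get $M(z)=\min\{M_p,M_q\}$. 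This holds because the Rayleigh quotient of a block-diagonal form over a product subspace is minimized by concentrating on one block (use Proposition~\ref{prop:energy-hessian}(ii) to write the quadratic form as $\Var_p(v_x)+\Var_q(v_y)$). It therefore suffices to show $M_p = p_{\min}$, and symmetrically $M_q = q_{\min}$.

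\textbf{Lower bound $M_p \ge p_{\min}$.} For $u\perp\1_m$, use the variational form of the variance, $\Var_p(u) = \min_{c\in\R}\sum_i p(i)(u(i)-c)^2 \ge p_{\min}\min_c \|u-c\1_m\|_2^2$. Since $u\perp \1_m$, the inner minimum is attained at $c=0$ and equals $\|u\|_2^2$. This gives $\Var_p(u)\ge p_{\min}\|u\|_2^2$, hence $M_p\ge p_{\min}$ and $M(z)\ge w_{\min}$.

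\textbf{Upper bound $M_p\le p_{\min}$.} This is the step I expect to be the real obstacle. The plan is to pick a test vector in $\1_m^\perp$ concentrated on the smallest coordinate. Let $i^\star=\arg\min_i p(i)$, and take $u = e_{i^\star} - \tfrac{1}{m-1}\sum_{j\neq i^\star} e_j$, or alternatively $u = e_{i^\star}-e_j$ for the index $j$ with the next-smallest mass. Then compute $\Var_p(u)/\|u\|_2^2$ explicitly. For $u=e_{i^\star}-e_j$ one gets $\big(p(i^\star)+p(j)-(p(i^\star)-p(j))^2\big)/2$, which equals $p_{\min}$ only under additional structure of $p$.

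Checking the case $m=2$, $p=(0.9,0.1)$, this quotient is $2p(1)p(2)=0.18>0.1=p_{\min}$. So exact equality appears to fail in general, and only the bounds $p_{\min}\le M_p \le \tfrac12(p_{(1)}+p_{(2)})$ seem to survive, where $p_{(1)}\le p_{(2)}$ are the two smallest coordinates. I would first try to pin down the exact minimizer via the secular equation for the rank-one perturbation $\Diag(p)-pp^\top$ restricted to $\1^\perp$. By eigenvalue interlacing, the smallest restricted eigenvalue lies in $[p_{(1)},p_{(2)}]$.

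If equality as stated is required, I would re-examine whether $M(z)$ is intended up to constants, or with a different normalization. For the downstream use in Proposition~\ref{prop:dissipation-kl-bound}, only the lower bound $M(z)\ge w_{\min}$ is needed, and that bound follows cleanly from the lower-bound step.
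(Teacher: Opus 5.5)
Your diagnosis is correct: the equality $M(z)=w_{\min}$ is false, and your example is a valid counterexample. Take $m=n=2$ and $p=q=(0.9,0.1)$. The only direction in each block of $\calS^\bot$ is $(1,-1)$, so $M(z)=2p(1)p(2)=0.18\neq 0.1=w_{\min}$. More generally, $\1_m^\bot$ is invariant under $\Diag(p)-pp^\top$, and the eigenvalues of this rank-one perturbation interlace with the sorted coordinates of $p$. Hence the restricted minimum eigenvalue lies in $[p_{(1)},p_{(2)}]$. The eigen-equation $(p(i)-\lambda)v(i)=p(i)\langle p,v\rangle$ shows that it equals $p_{(1)}$ only when the minimum coordinate is attained at least twice. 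So the stated equality holds only in that degenerate case, within a block attaining $w_{\min}$.

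The paper's own proof in fact establishes only the lower bound. After reducing to eigenpairs $(\lambda,v)$ with $v\in\1_m^\bot$, it rules out $\lambda<p_{\min}$, and its last line concludes $M(z)\ge w_{\min}$. That inequality is the only direction used downstream, in Step 1 of the proof of Proposition~\ref{prop:dissipation-kl-bound}. So your proposal proves exactly what is true and needed.

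Your lower-bound route differs from the paper's. The paper argues spectrally: for an eigenpair with $v\perp\1_m$ and $\lambda<p_{\min}$, summing $v(i)=p(i)\langle p,v\rangle/(p(i)-\lambda)$ forces $\langle p,v\rangle=0$, and then $v=0$. You instead bound the Rayleigh quotient directly via $\Var_p(u)=\min_c\sum_i p(i)(u(i)-c)^2\ge p_{\min}\|u\|_2^2$ for $u\perp\1_m$. This is shorter and avoids eigenvector bookkeeping. It does not even need the block reduction, since $\Var_p(v_x)+\Var_q(v_y)\ge p_{\min}\|v_x\|_2^2+q_{\min}\|v_y\|_2^2\ge w_{\min}\|v\|_2^2$. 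It is the same argument the paper uses later in Proposition~\ref{prop:gradient-dom-primal}(ii).

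One improvement: the first test vector you proposed but did not evaluate gives the right repair of the statement. Write $u=e_{i^\star}-\frac{1}{m-1}\sum_{j\neq i^\star}e_j=\frac{m}{m-1}\big(e_{i^\star}-\frac1m\1_m\big)$. By shift-invariance of variance, $\Var_p(u)/\|u\|_2^2=\frac{m}{m-1}\,p_{\min}(1-p_{\min})$. Hence $w_{\min}\le M(z)\le 2w_{\min}$. This is sharper than your bound $\frac12(p_{(1)}+p_{(2)})$. It is also what actually supports the paper's qualitative claim that curvature on $\calS^\bot$ scales with $w_{\min}$.

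The same caveat applies to Proposition~\ref{prop:energy-Lz}. On $\calS^\bot$ the dual-norm quadratic form coincides with the pseudo-inverse, so $L(z)=1/M(z)$, and only $L(z)\le 1/w_{\min}$ holds.
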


\begin{proof}
To prove the proposition, it suffices to show that,
when restricted to $\calS^\bot$, 
the minimum eigenvalue of $\nabla^2 F(z)$
is $w_{\min}$. Now recall by part (i) of 
Proposition~\ref{prop:energy-hessian} that 
\begin{equation*}
	\nabla^2 F(z) = 
	\begin{pmatrix}
		\Diag(p) - pp^\top & 0 \\
		0 & \Diag(q) - qq^\top 
	\end{pmatrix}\;.
\end{equation*}
Due to its separable block structure,
we will prove the corresponding block-wise inequalities:
e.g.,  when restricted to $\1_m^\bot$, 
the minimum eigenvalue of $\nabla^2 F_x(z) = \Diag(p) - pp^\top$
is $p_{\min}$, and when restricted to $\1_n^\bot$,
the minimum eigenvalue of $\nabla^2 F_y(z) = \Diag(q) - qq^\top$
is $q_{\min}$.

We start with the claim for $\nabla^2 F_x(z)$.
For this, let $\{\lambda^p_i\}$ denote the eigenvalues of
$\nabla^2 F_x(z)$, and observe by part (iv)
of Proposition~\ref{prop:energy-hessian} that
$0 \le \lambda^p_i \le 1$ for all $i \in [m]$.
Moreover, $\nabla^2 F(z)$ indeed has minimum eigenvalue 0
corresponding to eigenvector $\1_m$, since:
\begin{equation*}
	\nabla^2 F_x(z) \1_m 
	= 
	\Diag(p) \1_m - p \langle p, \1_m \rangle
	= 
	p - p
	=
	0 \;.
\end{equation*}
Thus, because $\nabla^2 F_x(z)$ is symmetric,
all other non-zero eigenvalues correspond
to eigenvectors orthogonal to $\1_m$
(i.e., belonging to $\1_m^\bot$).
Therefore, it suffices to prove that
all positive eigenvalues are bounded below by $p_{\min}$.
For this, let $v \in \1^\bot_{m} \setminus \{0\}$ and $\lambda > 0$
be such an eigenvector-eigenvalue pair, 
which means
$\Diag(p) v - p \langle p, v \rangle = \lambda v$.
By rearranging, we must have at each coordinate
\begin{equation}
	(p(i) - \lambda) \cdot v(i) = p(i)\cdot \langle p, v\rangle
	\quad\text{for all $i \in [m]$.}
	\label{eq:sp-01}
\end{equation}
Now assume by way of contradiction that 
$\lambda < p_{\min}$. This implies that 
$p(i) - \lambda > 0$ for all coordinates $i \in [m]$.
Then by rearranging~\eqref{eq:sp-01} and summing
over all $i \in [m]$, we find
\begin{equation}
	\textstyle
	\langle \1_m, v \rangle 
	= 
	\sum_{i=1}^m v(i)
	= 
	\langle p, v \rangle \cdot 
 	\sum_{i=1}^m \frac{p(i)}{p(i) - \lambda} \;.
 	\label{eq:sp-02}
\end{equation}
As $v \in \1^{\bot}_m$ by assumption,
we then must have $\langle \1_m, v \rangle = 0$,
meaning expression~\eqref{eq:sp-02} evaluates to 0.
However, the sum $\sum_{i=1}^m \frac{p(i)}{p(i) - \lambda} > 0$
by the assumption that $\lambda < p_{\min}$
(meaning $p(i) - \lambda > 0$ for all $i$)
and the fact that $p \in \relint(\Delta_m)$. 
Moreover, if $\langle p, v \rangle = 0$, 
the equality in expression~\eqref{eq:sp-01} further forces
$v(i) = 0$ at every coordinate $i \in [m]$,
which contradicts the assumption that $v \neq 0$.
Thus the equality in~\eqref{eq:sp-02} cannot be satisfied
when $\lambda < p_{\min}$,
and we conclude $\lambda \ge p_{\min}$.

Repeating an identical argument for $\nabla^2 F_y(z)$
similarly implies all positive eigenvalues
corresponding to eigenvectors in $\1_n^\bot$
are bounded below by $q_{\min}$.
By the block structure of $\nabla^2 F(z)$,
it follows that all positive eigenvalues of $\nabla^2 F(z)$
are bounded below by $w_{\min} = \min(p_{\min}, q_{\min})$.
Further using that $\calS^\bot = \1_m^\bot \times \1_n^\bot$,
it then follows by definition that $M(z) \ge w_{\min}$, 
which completes the proof.
\end{proof}

\subsubsection{Non-uniform local smoothness in dual norm.}

We also establish a local notion of smoothness
in the dual norm $\|\cdot \|_{z, *}$ when restricted to $\calS^\bot$. 
For this, we define maximum restricted eigenvalue
$L(z)$ of the inverse energy Hessian as follows:

\begin{restatable}[Maximum restricted eigenvalue of inverse energy Hessian]
{define}{defmaxrestrictedeigenvalueinverse}
	\label{def:local-dual-norm-smoothness}
	For $z \in \R^{m+n}$, let $L(z)$ be the scalar
	\begin{equation*}
		L(z) 
		:= 
		\sup_{v \in \calS^{\bot} \setminus \{0\}} \,
		\frac{\langle v, (\nabla^2 F(z))^{-1} \, v \rangle}{\|v\|^2_2} \;.
	\end{equation*}
\end{restatable}

Using the definition of the local dual norm $\|\cdot \|_{z, *}$,
the following proposition is then immediate:
\begin{restatable}{prop}{proplocaldualsmoothness}
	\label{prop:local-dual-smoothness}
	Fix $z \in  \R^{m+n}$.
	Then for any $v \in \calS^\bot$, it holds that
	$\|v\|^2_{z, *} \le L(z) \cdot \|v\|^2_2$.
\end{restatable}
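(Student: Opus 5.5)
The statement to prove is Proposition~\ref{prop:local-dual-smoothness}: for every $z \in \R^{m+n}$ and every $v \in \calS^\bot$, $\|v\|^2_{z,*} \le L(z)\,\|v\|_2^2$. The plan is to unfold both definitions and compare them directly.

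First, recall what $(\nabla^2 F(z))^{-1}$ means on $\calS^\bot$. By Proposition~\ref{prop:energy-hessian-inverse}, it is the restricted inverse $\nabla^2 R(w)$, where $w = \nabla F(z)$. This is a diagonal matrix with finite positive entries, because $w \in \relint(\calW)$. So the quadratic form $\langle v, (\nabla^2 F(z))^{-1} v\rangle$ is well defined and finite for $v \in \calS^\bot$, and the supremum defining $L(z)$ in Definition~\ref{def:local-dual-norm-smoothness} is finite. In fact $L(z)$ is at most $1/w_{\min}$, since the ratio is bounded by the largest diagonal entry of $\nabla^2 R(w)$.

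Next, split into two cases. If $v = 0$, both sides of the inequality are zero, so there is nothing to prove. If $v \in \calS^\bot \setminus \{0\}$, then $v$ is one of the vectors over which the supremum in $L(z)$ is taken. Hence
\[
\frac{\langle v, (\nabla^2 F(z))^{-1} v\rangle}{\|v\|_2^2} \le L(z).
\]
Multiplying by $\|v\|_2^2 > 0$ and using the definition $\|v\|^2_{z,*} = \langle v, (\nabla^2 F(z))^{-1} v\rangle$ gives exactly $\|v\|^2_{z,*} \le L(z)\,\|v\|_2^2$.

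There is no real obstacle in this argument. The one point that needs care is that the inverse is taken only on $\calS^\bot$, and the finiteness of $L(z)$ comes from identifying that inverse with $\nabla^2 R(w)$.
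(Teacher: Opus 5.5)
Your proof is correct and matches the paper, which simply notes that the inequality is immediate from the definition of $L(z)$ as a supremum over $\calS^\bot\setminus\{0\}$; your handling of the case $v=0$ is the only extra step. The finiteness remark via $\nabla^2 R(w)$ is harmless but not needed, since the inequality would hold trivially even if $L(z)$ were infinite.
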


Moreover, it follows similarly to Proposition~\ref{prop:energy-mz}
that $L(z)$ is characterized exactly in terms of the minimum 
coordinate of the primal variable $w$ corresponding to $z$:

\begin{restatable}{prop}{propenergyLz}
	\label{prop:energy-Lz}
	Fix $z \in \R^{m+n}$,
	and let $w = (p, q) = \nabla F(z) \in \relint(\calW)$.
	Let $p_{\min} = \min_{i \in [m]} p(i)$ and
	$q_{\min} = \min_{j \in [n]} q(j)$,
	and define $w_{\min} := \min(p_{\min}, q_{\min})$.
	Then $L(z) = \frac{1}{w_{\min}}$.
\end{restatable}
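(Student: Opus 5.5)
The plan is to identify the restricted inverse Hessian with the entropy Hessian and read off its quadratic form coordinatewise, mirroring the proof of Proposition~\ref{prop:energy-mz}. By Proposition~\ref{prop:energy-hessian-inverse}, for $v \in \calS^\bot$ we have $(\nabla^2 F(z))^{-1} v = \nabla^2 R(w) v$ with $\nabla^2 R(w) = \mathrm{blockdiag}(\Diag(1/p), \Diag(1/q))$ from~\eqref{eq:R-hessian-full}. Hence, writing $v = (v_x, v_y)$ with $v_x \in \1_m^\bot$ and $v_y \in \1_n^\bot$ (since $\calS^\bot = \1_m^\bot \times \1_n^\bot$),
\begin{equation*}
\langle v, (\nabla^2 F(z))^{-1} v \rangle = \ssum\nolimits_{i=1}^m \tfrac{v_x(i)^2}{p(i)} + \ssum\nolimits_{j=1}^n \tfrac{v_y(j)^2}{q(j)} .
\end{equation*}
The block-separable structure means it suffices to treat the $p$-block on $\1_m^\bot$ and the $q$-block on $\1_n^\bot$ separately, and then take the worse of the two.

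\textbf{Upper bound.} Each term satisfies $v_x(i)^2/p(i) \le v_x(i)^2/p_{\min}$, and likewise $v_y(j)^2/q(j) \le v_y(j)^2/q_{\min}$. Summing, the quadratic form is at most $\|v\|_2^2 / w_{\min}$. Therefore $L(z) \le 1/w_{\min}$. This step is routine.

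\textbf{Lower bound.} Without loss of generality suppose $w_{\min} = p_{\min} = p(i^\star)$. I would test directions $v = (v_x, 0)$ with $v_x \in \1_m^\bot$ concentrated on $i^\star$. The first attempt is $v_x = e_{i^\star} - e_k$ for a second coordinate $k$, which gives the Rayleigh quotient $\tfrac12\big(\tfrac{1}{p(i^\star)} + \tfrac{1}{p(k)}\big)$. This equals $1/w_{\min}$ exactly when $p(k) = p_{\min}$ as well. More generally, I would choose $v_x$ supported on the whole set $I$ of minimizing coordinates when $|I| \ge 2$, and then the quotient is exactly $1/p_{\min}$. I would also try weighting $v_x = e_{i^\star} - \tfrac{1}{m-1}\ssum_{k \ne i^\star} e_k$ and letting the off-minimizer coordinates carry vanishing mass relative to the constraint $\langle \1_m, v_x\rangle = 0$, in order to push the supremum toward $1/p_{\min}$.

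\textbf{Main obstacle.} The main obstacle is this lower bound when the minimum coordinate is unique. The sum-zero constraint forces some mass off $i^\star$, and in that case I expect the supremum to be strictly below $1/p_{\min}$, with only a bound of order $1/(2w_{\min})$ following directly. If the direct argument cannot close the gap, the fallback is to carry out the proof with the inequality $L(z) \le 1/w_{\min}$. That inequality is precisely what is needed downstream: combined with Proposition~\ref{prop:local-dual-smoothness}, it gives $\|v\|_{z,*}^2 \le \|v\|_2^2 / w_{\min}$ in the skew-gradient domination argument of Proposition~\ref{prop:dissipation-kl-bound}. Under that fallback I would state the matching lower bound only in the case where the minimizing coordinate is not unique.
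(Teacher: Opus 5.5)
Your upper bound follows the paper's route: identify the restricted inverse with $\nabla^2 R(w)$ via Proposition~\ref{prop:energy-hessian-inverse}, then bound each term $v(i)^2/w(i)$ by $v(i)^2/w_{\min}$, giving $L(z)\le 1/w_{\min}$. Your suspicion about the lower bound is also correct, and it is not a shortcoming of your method: the claimed equality is false in general. Take $m=n=2$ and $p=q=(\tfrac{1}{10},\tfrac{9}{10})$. Since $\1_2^\bot=\Span\{(1,-1)\}$, each block's Rayleigh quotient is forced to equal $\tfrac12(10+\tfrac{10}{9})=\tfrac{50}{9}$. Hence $L(z)=\tfrac{50}{9}<10=1/w_{\min}$; in general, for $2\times 2$ games, $L(z)=\max\{\frac{1}{2p(1)p(2)},\frac{1}{2q(1)q(2)}\}$.

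More generally, $\langle u,\Diag(1/p)u\rangle=\|u\|_2^2/p_{\min}$ forces $u\in\Span\{e_i: p(i)=p_{\min}\}$. This span meets $\1_m^\bot$ nontrivially if and only if the minimum is attained at least twice. So $L(z)=1/w_{\min}$ holds exactly when some block whose minimum equals $w_{\min}$ attains it at least twice; otherwise $L(z)<1/w_{\min}$. In all cases $\tfrac{1}{2w_{\min}}\le L(z)\le\tfrac{1}{w_{\min}}$, and your two-point test vector gives the left inequality. Your weighted test vector therefore cannot close the gap: the supremum is attained on the unit sphere of $\calS^\bot$ and is strictly below $1/w_{\min}$ in the unique-minimizer case.

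The paper's proof passes over exactly this point. It asserts that the maximum eigenvalue of $\Diag(1/w)$ restricted to $\calS^\bot$ coincides with its unrestricted maximum $1/w_{\min}$. But compressing $\Diag(1/p)$ to the hyperplane $\1_m^\bot$ can strictly lower the top eigenvalue: Cauchy interlacing only places it between the reciprocals of the smallest and second-smallest coordinates of $p$. The paper's ``alternative'' argument yields only $(\nabla^2F(z))^{-1}\preceq\frac{1}{w_{\min}}I$ on $\calS^\bot$, which is your upper bound.

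Only $L(z)\le 1/w_{\min}$ is used later, in Step~3 of the proof of Proposition~\ref{prop:dissipation-kl-bound} through Proposition~\ref{prop:local-dual-smoothness}. So your fallback is the correct statement, and nothing downstream breaks. The companion Proposition~\ref{prop:energy-mz} carries the same overclaim: its proof only establishes $M(z)\ge w_{\min}$. In fact $L(z)=1/M(z)$, because the restricted Hessian and the compression of $\nabla^2R(w)$ are mutual inverses on $\calS^\bot$. In the example above, $M(z)=2p(1)p(2)=\tfrac{9}{50}>\tfrac{1}{10}$.
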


\begin{proof}
	By Proposition~\ref{prop:energy-hessian-inverse},
	we have that
	$(\nabla^2 F(z))^{-1} = \nabla^2 R(w) = \Diag(1/w)$
	when restricted to $\calS^\bot$.	
	Thus the maximum restricted eigenvalue of
	$L(z)$ is exactly the maximum eigenvalue of
	$\Diag(1/w)$, which is $1/w_{\min}$. 
	Alternatively, note in general that if
	$\nabla^2 F(z) \succeq c I$ (over the restricted subspace)
	for some $c > 0$,
	then $(\nabla^2 F(z))^{-1} \preceq \frac{1}{c} I$.
	Recall that in Proposition~\ref{prop:energy-mz}, 
	we established $\nabla^2 F(z) \succeq w_{\min} I$
	over $\calS^\bot$, and thus we have 
	$(\nabla^2 F(z))^{-1} \preceq \frac{1}{w_{\min}} I$
	over this subspace. 
\end{proof}

\subsubsection{Minimum restricted singular value of J.}

For the game's block skew-symmetric matrix payoff matrix 
$J \in \R^{(m+n)\times(m+n)}$ (as defined in~\eqref{eq:J-def}), 
recall that we define $\sigma_{\min} := \sigma_{\min}(J, \calS^\bot)$ 
to be the minimum singular value 
of $J$ when restricted to $\calS^\bot$.
This definition uses the projection operator 
$\Pi_{\calS^\bot}$ onto $\calS^\bot$,
which we first explicitly define:

\smallskip

\begin{restatable}[Projection onto $\calS^\bot$]
	{define}{defSbotprojection}
	\label{def:S-bot-projection}
	Let $\Pi_{\calS^\bot}: \R^{m+n} \to \calS^\bot$ 
	be the orthogonal projection operator onto $\calS^\bot$.
	Specifically, $\Pi_{\calS^\bot}$ is given by the matrix
	\begin{equation*}
		\Pi_{\calS^\bot} 
		:= 
		\begin{pmatrix}
			\Pi_{\1_m^\bot} & 0 \\
			0 & \Pi_{\1_n^{\bot}}
		\end{pmatrix} 
		:=
		\begin{pmatrix}
			 I_m - \frac{1}{m} \1_m \1^\top_m & 0 \\
			 0  & I_n - \frac{1}{n} \1_n \1^\top_n
		\end{pmatrix} \;.
	\end{equation*}
	Here, $\Pi_{\1_m^{\bot}}$ and $\Pi_{\1_n^{\bot}}$	
	are the blockwise projection matrices onto 
	$\1_m^\bot$ and $\1_n^\bot$, respectively. 
	By slight abuse of notation, for $v \in \R^{m+n}$,
	we write $\Pi_{\calS^\bot}(v)$
	to denote the matrix-vector product $\Pi_{\calS^\bot} v$.
\end{restatable}

Then recall from Section~\ref{sec:prelims}
that we define $\sigma_{\min}$ as follows:

\begin{restatable}[Minimum restricted
singular value of J]{define}{defrestrictedsvJ}
	\label{def:restricted-sv-J}
	Let $J \in \R^{(m+n) \times (m+n)}$ 
	be the matrix from~\eqref{eq:J-def}.
	Then let $\sigma_{\min}$ denote the minimum singular
	value of $J$ restricted to $\calS^\bot$:
	\begin{equation*}
		\sigma_{\min} := 
		\sigma_{\min}(J, \calS^\bot)
		= 
		\inf_{v \in \mathcal{S}^\bot \setminus \{0\}}\,
		\frac{\|\Pi_{\calS^\bot}(Jv)\|_2}{\|v\|_2} \;.
	\end{equation*}
\end{restatable}

When $A$ has a unique and interior Nash
equilibrium $w^\star$, then $\sigma_{\min} > 0$ 
is strictly positive:

\begin{restatable}{prop}{propgammapositive}
	\label{prop:gammapositive}
	Let $A \in \R^{m \times n}$ have a unique interior 
	Nash equilibrium $w^\star \in \relint(\calW)$,
	let $J$ be the corresponding
	block skew-symmetric matrix from~\eqref{eq:J-def},
	and let $\sigma_{\min}$ be the minimum restricted singular
	value of $J$ from Definition~\ref{def:restricted-sv-J}.
	Then $\sigma_{\min} > 0$.
\end{restatable}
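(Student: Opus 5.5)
The plan is a compactness argument combined with Lemma~\ref{lem:unique-interior-NE-helper}. The restricted singular value in Definition~\ref{def:restricted-sv-J} is the infimum of the continuous function $v \mapsto \|\Pi_{\calS^\bot}(Jv)\|_2$ over the unit sphere of the finite-dimensional subspace $\calS^\bot$. Here we use homogeneity of the ratio in $v$ to restrict to $\|v\|_2 = 1$. That sphere is compact, and $v \mapsto \Pi_{\calS^\bot} J v$ is linear, hence continuous. So the infimum is attained at some unit vector $v^\dagger \in \calS^\bot$. It therefore suffices to show that $\Pi_{\calS^\bot}(Jv) \neq 0$ for every nonzero $v \in \calS^\bot$.

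Suppose for contradiction that $\sigma_{\min} = 0$, so that $\Pi_{\calS^\bot}(Jv^\dagger) = 0$ for some unit $v^\dagger \in \calS^\bot$. Since $\Pi_{\calS^\bot}$ is the orthogonal projection with kernel $\calS$ (Definition~\ref{def:S-bot-projection}), this means $Jv^\dagger \in \calS$. We then have $v^\dagger \in \calS^\bot$ and $Jv^\dagger \in \calS$. Lemma~\ref{lem:unique-interior-NE-helper} applies, since $A$ has a unique and interior NE, and gives $v^\dagger = 0$. This contradicts $\|v^\dagger\|_2 = 1$, so $\sigma_{\min} > 0$.

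There is essentially no hard step, because all the game-theoretic content was already packaged in Lemma~\ref{lem:unique-interior-NE-helper}. That lemma perturbs $w^\star$ along $v$ to produce a family of interior points $w_\tau$ with $Jw_\tau \in \calS$, which are Nash equilibria by Proposition~\ref{prop:interior-NE}(ii), contradicting uniqueness. The only points requiring care are the following:
\begin{itemize}
\item the identification $\ker \Pi_{\calS^\bot} = \calS$;
\item attainment of the infimum, which one could also avoid by arguing directly that the restricted linear map $\Pi_{\calS^\bot}J|_{\calS^\bot}$ on a finite-dimensional space is injective, hence has a positive smallest singular value;
\item the degenerate case $\calS^\bot = \{0\}$, which cannot occur since $m, n \ge 2$.
\end{itemize}
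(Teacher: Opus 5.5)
Your proof is correct and follows the paper's argument. The paper likewise reduces $\sigma_{\min} > 0$ to $\Null(\Pi_{\calS^\bot} J) \cap \calS^\bot = \{0\}$, notes that $\Pi_{\calS^\bot}(Jv) = 0$ means $Jv \in \calS$, and then invokes Lemma~\ref{lem:unique-interior-NE-helper}; your compactness (or injectivity) step only makes explicit the equivalence the paper calls straightforward.
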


\begin{proof}
	First, note by definition of $\sigma_{\min}$ 
	that we can equivalently write
	\begin{equation*}
		\textstyle
		\sigma_{\min} =
		\inf_{v \in \mathcal{S}^\bot \setminus \{0\}, \|v\|_2 =1} 
		\|\Pi_{\calS^\bot}(Jv)\|_2 \;.
	\end{equation*}
	It is straightforward to see that 
	$\sigma_{\min} = 0$ if and only if 
	$\Null(\Pi_{\calS^\bot} J)$ and $\calS^\bot$
	have a non-trivial intersection.
	Thus our goal is to establish that, when $A$ 
	has a unique interior Nash equilibrium, then 
	\begin{equation}
		\Null(\Pi_{\calS^\bot} J) \cap \calS^\bot = \{0\} \;.
		\label{eq:g-goal}
	\end{equation}
	For this, suppose $v \in \Null(\Pi_{\calS^\bot} J) \cap \calS^\bot$.
	This implies that the following two properties hold:
	\begin{enumerate}[
		label={(\roman*)},
		leftmargin=2.5em,
		topsep=0em,
	]
	\item
	As $v \in \calS^\bot$, then $\langle v, s \rangle = 0$
	for all $s \in \mathcal{S}$.
	\item
	As $v \in \Null(\Pi_{\calS^\bot} J)$,
	then by linearity 
	$0 = \Pi_{\calS^\bot} J v =  \Pi_{\calS^\bot}(Jv)$,
	which means $Jv \in \calS$.
	\end{enumerate}
	Recall that we assume $A$ has a unique
	and interior NE. 
 	Then applying Lemma~\ref{lem:unique-interior-NE-helper},
	we have that $v \in \calS^\bot$ and $Jv \in \calS$
	implies $v = 0$.
	Thus we conclude that if $v \in \Null(\Pi_{\calS^\bot} J) \cap \calS^\bot$,
	then $v = 0$. This establishes~\eqref{eq:g-goal}
	and concludes the proof. 
\end{proof}

\subsection{Properties of Energy Suboptimality Gap}
\label{app:dissipation-kl:energy-gap}

Recall that Proposition~\ref{prop:calZ-minimizer}
established that, over the effective dual space $\calZ$,
and under the assumption of a unique and interior NE, then 
for $z \in \calZ$ and $w = \nabla F(z)$:
\begin{equation}
	\KL(w^\star, w) 
	= 
	F(z) - \min\nolimits_{z' \in\calZ} F(z')
	= 
	F(z) - F(z^\star)  \;,
	\label{eq:energy-gap}
\end{equation}
where $z^\star \in \calZ$ is a point such that
$\nabla F(z^\star) = w^\star$
(and such a point $z^\star$ exists due to
Proposition~\ref{prop:calZ-grad-map}).
In the following proposition, using the primal relationship 
$\KL(w^\star, w) \le \chi^2(w^\star, w)$,  we then further
bound the energy suboptimality gap $F(z) - F(z^\star)$
in terms of the dual analogue of $\chi^2(w^\star, w)$.
Formally, we have the following relationship:

\begin{restatable}[Dual KL vs. $\chi^2$ relationship]
	{lem}{lemdualklchisquare}
	\label{lem:dual-kl-chisq}
	Let $A \in \R^{m \times n}$ have unique interior 
	Nash equilibrium $w^\star \in \relint(\calW)$, and let
	$z^\star \in \calZ$ such that $\nabla F(z^\star) = w^\star$.
	Fix $z \in \calZ$, and let $w = \nabla F(z)$. Then:
	\begin{equation*}
		F(z) - F(z^\star)
		\le
		\|\nabla F(z) - \nabla F(z^\star)\|^2_{z, *} \;.
	\end{equation*}
\end{restatable}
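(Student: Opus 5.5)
The plan is to translate the left-hand side into a primal $\KL$ divergence, bound that by the $\chi^2$ divergence, and then recognize the $\chi^2$ divergence as the local dual norm. First, by Proposition~\ref{prop:calZ-minimizer}, since $z \in \calZ$ and $w = \nabla F(z)$, we have
\begin{equation*}
F(z) - F(z^\star) = F(z) - \min\nolimits_{z'\in\calZ} F(z') = \KL(w^\star, w).
\end{equation*}
Next, Proposition~\ref{prop:kl-chisq} gives $\KL(w^\star, w) \le \chi^2(w^\star, w)$. So it suffices to show the identity
\begin{equation*}
\chi^2(w^\star, w) = \|\nabla F(z) - \nabla F(z^\star)\|^2_{z,*} = \|w - w^\star\|^2_{z,*}.
\end{equation*}

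For this identity, first note that $w - w^\star \in \calS^\bot$, because $p - p^\star$ and $q - q^\star$ each sum to zero. Hence the dual norm $\|\cdot\|_{z,*}$ is well defined on this vector. By Proposition~\ref{prop:energy-hessian-inverse}, the restricted inverse $(\nabla^2 F(z))^{-1}$ acts on $\calS^\bot$ as $\nabla^2 R(w) = \Diag(1/p) \oplus \Diag(1/q)$.

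One subtlety is that $\nabla^2 R(w)\,u$ need not lie in $\calS^\bot$. So I will make sure the quadratic form $\langle u, (\nabla^2 F(z))^{-1} u\rangle$ is unambiguous. Any two candidate preimages of $u$ under $\nabla^2 F(z)$ differ by an element of $\Null(\nabla^2 F(z)) = \calS$, by Proposition~\ref{prop:energy-hessian}(iii). Since $u \in \calS^\bot$, this element is orthogonal to $u$, so the inner product with $u$ is the same for every choice of preimage.

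With that settled, the computation is direct:
\begin{equation*}
\|w - w^\star\|^2_{z,*} = \sum_{i} \frac{(p(i)-p^\star(i))^2}{p(i)} + \sum_j \frac{(q(j)-q^\star(j))^2}{q(j)} = \chi^2(w^\star, w).
\end{equation*}
Chaining the three relations completes the proof. The only step needing any care is this well-definedness of the restricted inverse; everything else is invoking earlier results.
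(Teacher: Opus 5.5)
Your proposal is correct and matches the paper's proof. You identify $F(z)-F(z^\star)$ with $\KL(w^\star,w)$ via Proposition~\ref{prop:calZ-minimizer}, bound it by $\chi^2(w^\star,w)$ using Proposition~\ref{prop:kl-chisq}, and then recognize $\chi^2$ as $\|w-w^\star\|_{z,*}^2$ through the restricted inverse $\nabla^2 R(w)$ of Proposition~\ref{prop:energy-hessian-inverse}. Your check that the quadratic form is well defined (any two preimages differ by an element of $\calS$, which is orthogonal to $u\in\calS^\bot$) is a small point the paper leaves implicit.
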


\begin{proof} 
	First, we have from~\eqref{eq:energy-gap} that
	$F(z) - F(z^\star) = \KL(w^\star, w)$. 
	Then by Proposition~\ref{prop:kl-chisq}:
	\begin{equation}
		F(z) - F(z^\star)
		= 
		\KL(w^\star, w) \le \chi^2(w^\star, w) 
		= 
		\sum\nolimits_{i=1}^{m+n}
		\frac{(w(i) - w^\star(i))^2}{w(i)} \;.
		\label{eq:dk-01}
	\end{equation}
	By definition of the dual local norm $\| \cdot \|^2_{z,*}$,
	we for $v \in \calS^\bot$ that 
	\begin{equation}
		\|v\|^2_{z, *}
		= 
		\langle 
		v, (\nabla^2 F(z))^{-1} v \rangle
		= 
		\sum\nolimits_{i=1}^{m+n}
		\frac{(v(i))^2}{(\nabla F(z))(i)} \;.
		\label{eq:dk-02}
	\end{equation}
	In the final equality, we use the characterization
	of the inverse $(\nabla^2 F(z))^{-1} = \Diag(1/(\nabla F(z)))$
	over $\calS^\bot$ from Proposition~\ref{prop:energy-hessian-inverse}.
	As $\nabla F(z) - \nabla F(z^\star) = w - w^\star \in \calS^\bot$,
	substituting $v = \nabla F(z) - \nabla F(z^\star)$ 
	in~\eqref{eq:dk-02} then exactly yields 
	the right hand side of~\eqref{eq:dk-01}.
\end{proof}

\subsection{Proof of Proposition~\ref{prop:dissipation-kl-bound}}
\label{sec:gradient-domination-proof}

We now give the proof of Proposition~\ref{prop:dissipation-kl-bound}
through the lens of non-uniform \textit{skew-gradient domination}.
Recall that we have $z \in \calZ$ and $w = \nabla F(z) \in \relint(\calW)$.
The proof uses the following four steps:

\smallskip

\noindent
\textbf{1. Non-uniform restricted strong convexity.}

\noindent
Using the definition of the minimum restricted
eigenvalue $M(z)$ from Definition~\ref{def:local-norm-strong-convex}, 
the inequality of Proposition~\ref{prop:local-norm-strong-convex},
and the characterization of $M(z)$ from 
Proposition~\ref{prop:energy-mz}, we have
\begin{equation*}
	\|J \nabla F(z)\|^2_z 
	\ge
	M(z) \cdot 
	\|\Pi_{\calS^\bot}\big(J \nabla F(z)\big)\|^2_2 \
	\ge
	w_{\min} \cdot \|\Pi_{\calS^\bot}\big(J \nabla F(z)\big)\|^2_2 \;.
\end{equation*}

\smallskip

\noindent
\textbf{2. Invariance to Nash and 
restricted spectrum of J.}

\noindent
Let $z^\star \in \calZ$ such that $\nabla F(z^\star) = w^\star$
(recall that such a $z^\star$ exists due to
Proposition~\ref{prop:calZ-grad-map}).
Now by Proposition~\ref{prop:interior-NE},
we have $J \nabla F(z^\star) = Jw^\star \in \calS$.
Thus by definition of the projection $\Pi_{\calS^\bot}$, 
we have $\Pi_{\calS^{\bot}}(J \nabla F(z^\star)) = 0$,
and therefore by linearity
\begin{equation*}
	\Pi_{\calS^\bot}(J \nabla F(z))
	= 
	\Pi_{\calS^\bot}(J (\nabla F(z) -\nabla F(z^\star)))\;.
\end{equation*}
Moreover, as $w, w^\star \in \calW$, it follows that
$\langle w - w^\star, s \rangle = 0$ for all $s \in \calS$,
and thus $w - w^\star = \nabla F(z)  - \nabla F(z^\star) \in \calS^\bot$.
Combining these pieces and using the definition
of the minimum restricted singular value $\sigma_{\min}$ from
Definition~\ref{def:restricted-sv-J}, we can then further write:
\begin{align*}
	\|\Pi_{\calS^\bot}\big(J \nabla F(z)\big)\|^2_2
	= 
	\|\Pi_{\calS^\bot}\big(J (\nabla F(z) - \nabla F(z^\star)\big)\|^2_2  
	\ge
	\sigma_{\min}^2 \cdot 
	\|\nabla F(z) - \nabla F(z^\star) \|^2_2 \;.
\end{align*}
Together with the inequality of Step 1, this means
\begin{equation}
	\|J \nabla F(z)\|^2_z \ge 
	\sigma_{\min}^2 \cdot w_{\min} \cdot 
	\|\nabla F(z) - \nabla F(z^\star) \|^2_2 \;.
	\label{eq:gd-01}
\end{equation}

\smallskip

\noindent
\textbf{3. Bounding the dual suboptimality gap.}

\noindent
We now derive the following lower bound on
$\| \nabla F(z) - \nabla F(z^\star)\|^2_2$
in terms of the dual gap $F(z) - F(z^\star)$.
We proceed in two steps. 
First, we have from Lemma~\ref{lem:dual-kl-chisq}
the dual relationship between 
$\KL(w^\star, w)$ and $\chi^2(w^\star, w)$,
which gives
\begin{equation*}
	F(z) - F(z^\star) 
	\le 
	\| \nabla F(z) - \nabla F(z^\star) \|^2_{z, *} \;.
\end{equation*}
Next, by the restricted local smoothness properties
of Proposition~\ref{prop:local-dual-smoothness} and
Proposition~\ref{prop:energy-Lz}, we further have
\begin{equation*} 
	F(z) - F(z^\star)
	\le 
	\|\nabla F(z) - \nabla F(z^\star)\|^2_{z, *}
	\le 
	\frac{1}{w_{\min}} \cdot
	\|\nabla F(z) - \nabla F(z^\star)\|^2_2 \;.
\end{equation*}
Rearranging  then yields
\begin{equation}
	\|\nabla F(z) - \nabla F(z^\star)\|^2_2 
	\ge 
	w_{\min} \cdot \big(F(z) - F(z^\star)) \;.
	\label{eq:gd-02}
\end{equation}

\smallskip

\noindent
\textbf{4. Equivalence between primal and
dual suboptimality gaps.}

\noindent
Combining~\eqref{eq:gd-01} from Step 2
and~\eqref{eq:gd-02} from Step 3 yields
\begin{equation}
	\|J \nabla F(z) \|^2_{z} 
	\ge 
	\sigma_{\min}^2  \cdot w_{\min}^2  \cdot
	(F(z) - F(z^\star)) \;.
	\label{eq:gd-03}
\end{equation}
This is exactly a non-uniform skew-gradient-domination property for 
the energy function $F$.
To relate $\|J \nabla F(z)\|^2_z$ back to the primal space,
we apply the equivalence of Proposition~\ref{prop:calZ-minimizer}
(also restated in~\eqref{eq:energy-gap}), which establishes 
for $z \in \calZ$ and $w = \nabla F(z)$ that
\begin{equation}
	F(z) - F(z^\star) = 
	F(z) -\min\nolimits_{z' \in \calZ}  F(z')
	= 
	\KL(w^\star, w) \;.
	\label{eq:gd-04}
\end{equation}
Combining~\eqref{eq:gd-03} with~\eqref{eq:gd-04}
yields the statement of the lemma
and concludes the proof. 
\hfill $\blacksquare$

\subsection{Primal Viewpoint of Non-Uniform Skew-Gradient Domination}
\label{sec:primal-gradient-domination}

The proof of Proposition~\ref{prop:dissipation-kl-bound}
in Section~\ref{sec:gradient-domination-proof} is written
purely in the dual perspective. In this section, we provide
an alternative proof of the inequality from a primal perspective.
This primal viewpoint stems from the structure of the 
energy Hessian $\nabla^2 F(z)$ from part (ii) of
Proposition~\ref{prop:energy-hessian}. This implies 
that the dissipation term $\|J \nabla F(z)\|^2_z$
is the \textit{variance} of the payoff vector 
$J \nabla F(z)$ under the component-wise pair of distributions
$w = (p, q) = \nabla F(z)$. 

The proof and slightly refined result 
via this second perspective is 
useful in later establishing the
uniform best-iterate duality gap 
bound of Theorem~\ref{thm:dg-best-2x2}.
We state and prove the result
further below in Proposition~\ref{prop:gradient-dom-primal}.
However, we first introduce several additional preliminaries
used in the proof:

\paragraph{Component-wise restricted
spectrum of A.}
First, we define the following component-wise analogues
of the minimum restricted singular value 
$\sigma_{\min}(J, \calS^\bot)$ from Definition~\ref{def:restricted-sv-J}.
Specifically, we define $\sigma_{\min, m}$ and $\sigma_{\min, n}$
as follows:
\begin{equation}
	\begin{aligned}
	\sigma_{\min, n}
	&:= 
	\sigma_{\min}(A, \1^{\bot}_n)
	= 
	\inf_{v \in \1^\bot_{n} \setminus \{0\}} \,
	\frac{\|\Pi_{\1_m^{\bot}}(A v)\|_2}{\|v\|_2} \\
	\text{and}\quad
	\sigma_{\min, m}
	&:= 
	\sigma_{\min}(A^\top, \1^{\bot}_m)
	= 
	\inf_{u \in \1^\bot_{m} \setminus \{0\}} \,
	\frac{\|\Pi_{\1_n^{\bot}}(A^\top u)\|_2}{\|u\|_2}  \;.
	\end{aligned}
	\label{eq:component-rsv-A}
\end{equation}

\smallskip 

We have the following relationship between
$\sigma_{\min}(J, \calS^\bot)$
and the component-wise analogues in~\eqref{eq:component-rsv-A}:

\smallskip

\begin{restatable}{prop}{propcomponentminrestrictedsv}
	\label{prop:component-restricted-sv}
	Let $A \in \R^{m \times n}$,
	and consider the quantities $\sigma_{\min}$
	from Definition~\ref{def:restricted-sv-J}
	and $\sigma_{\min, m}$ and $\sigma_{m, n}$
	from~\eqref{eq:component-rsv-A}.
	Then
	$
	\sigma_{\min}
	= 
	\min\big\{
	\sigma_{\min,m}, \, \sigma_{\min, n}
	\}$ .
\end{restatable}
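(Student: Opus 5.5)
The plan is to exploit the block structure of $J$ and $\Pi_{\calS^\bot}$ to decouple the variational problem defining $\sigma_{\min}$ into two independent problems, one per block. Write any $v \in \calS^\bot$ as $v = (u, w)$ with $u \in \1_m^\bot$ and $w \in \1_n^\bot$. Then $Jv = (Aw, -A^\top u)$, and since $\Pi_{\calS^\bot}$ is block diagonal (Definition~\ref{def:S-bot-projection}),
\begin{equation*}
	\|\Pi_{\calS^\bot}(Jv)\|_2^2 = \|\Pi_{\1_m^\bot}(Aw)\|_2^2 + \|\Pi_{\1_n^\bot}(A^\top u)\|_2^2, \qquad \|v\|_2^2 = \|u\|_2^2 + \|w\|_2^2 .
\end{equation*}

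For the lower bound $\sigma_{\min} \ge \min\{\sigma_{\min,m}, \sigma_{\min,n}\}$, apply the definitions in~\eqref{eq:component-rsv-A} to each block. This gives $\|\Pi_{\1_m^\bot}(Aw)\|_2^2 \ge \sigma_{\min,n}^2 \|w\|_2^2$ and $\|\Pi_{\1_n^\bot}(A^\top u)\|_2^2 \ge \sigma_{\min,m}^2 \|u\|_2^2$. These bounds hold trivially when $u$ or $w$ vanishes. Summing yields $\|\Pi_{\calS^\bot}(Jv)\|_2^2 \ge \min\{\sigma_{\min,m}^2,\sigma_{\min,n}^2\}\,\|v\|_2^2$, and taking the infimum over $v$ gives the inequality.

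For the upper bound, restrict the infimum in Definition~\ref{def:restricted-sv-J} to test vectors supported on a single block. Taking $v = (u, 0)$ with $u \in \1_m^\bot\setminus\{0\}$ (which lies in $\calS^\bot$), the ratio equals $\|\Pi_{\1_n^\bot}(A^\top u)\|_2/\|u\|_2$, so $\sigma_{\min} \le \sigma_{\min,m}$. Symmetrically, taking $v = (0,w)$ gives $\sigma_{\min} \le \sigma_{\min,n}$.

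The only step needing care is the degenerate case $m=1$ or $n=1$, where one of $\1^\bot$ is trivial and the corresponding infimum is over an empty set (conventionally $+\infty$). Here the min convention makes both directions still go through. Otherwise the argument is a direct computation with no real obstacle.
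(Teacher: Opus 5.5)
Your proposal is correct and follows essentially the same route as the paper. Both arguments split $\|\Pi_{\calS^\bot}(Jv)\|_2^2$ and $\|v\|_2^2$ blockwise and sum the two blockwise lower bounds to get $\sigma_{\min} \ge \min\{\sigma_{\min,m},\sigma_{\min,n}\}$, then restrict to test vectors supported on a single block for the reverse inequality. Your handling of the degenerate case $m=1$ or $n=1$ is a small addition that the paper does not mention.
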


\begin{proof}
	Fix $x = (u, v) \in \R^{m+n} \in \calS^\bot \setminus \{0\}$.
	By definition of $\calS^\bot$, 
	this means $u \in \1_{m}^{\bot}$
	and $v \in \1_n^{\bot}$. Moreover, by the structure of
	of $\Pi_{\calS^\bot}$ from Definition~\ref{def:S-bot-projection},
	we have:
	\begin{equation*}
		\Pi_{\calS^\bot}(Jx) 
		= 
		\begin{pmatrix}
			\Pi_{\1_m^\bot} & 0 \\
			0 & \Pi_{\1_n^{\bot}}
		\end{pmatrix} 
		\begin{pmatrix}
			A v \\ 
			- A^\top u
		\end{pmatrix} 
		= 
		\begin{pmatrix}
			\Pi_{\1_m^\bot}(A v) \\
			\Pi_{\1_m^\bot}(-A^\top u) 
		\end{pmatrix} \;.
	\end{equation*}
	Taking squared Euclidean norms, this means
	that both
	\begin{equation*}
			\|x\|^2_2 = \|u\|^2_2 + \|v\|^2_2 \\
			\quad\text{and}\quad
			\|\Pi_{\calS^\bot}(Jx) \|^2_2
			= 
			\|\Pi_{\1_m^\bot}(A v)\|^2_2
			+ 
			\|\Pi_{\1_n^\bot}(A^\top u)\|^2_2 \;.
	\end{equation*}
	Moreover, by definition of $\sigma_{\min}$,
	we have
	\begin{equation}
		\sigma^2_{\min}
		= 
		\inf_{x = (u, v) \in \calS^\bot \setminus \{0\}}\,
		\frac{\|\Pi_{\1_m^\bot}(A v)\|^2_2
			+ 
			\|\Pi_{\1_n^\bot}(A^\top u)\|^2_2
		}{\|u\|^2_2 + \|v\|^2_2} \;.
		\label{eq:sv-01}
	\end{equation}

	Now let $\sigma_{*} = \min\{\sigma_{\min, m}, \sigma_{\min_n}\}$.
	To prove the claim, we show in two parts that
	both $\sigma_{\min} \ge \sigma_{*}$ and $\sigma_{\min} \le \sigma_{*}$.
	For the first direction, observe by the definitions of
	$\sigma_{\min, m}$ and $\sigma_{\min, n}$ in
	\eqref{eq:component-rsv-A} that both
	\begin{equation*}
		\begin{aligned}
			\|\Pi_{\1^\bot_m}(Av)\|^2_2 
			&\ge 
			\sigma^2_{\min,n} \cdot \|v\|^2_2 
			\ge \sigma^2_{*} \cdot \|v\|^2_2 
			\\
			\text{and}
			\quad
			\|\Pi_{\1^\bot_n}(A^\top u)\|^2_2 
			&\ge 
			\sigma^2_{\min,m} \cdot \|u\|^2_2 
			\ge \sigma^2_{*} \cdot \|u\|^2_2 
			\;,
		\end{aligned}
	\end{equation*}
	which implies that 
	\begin{equation*}
	\|\Pi_{\calS^\bot}(Jx) \|^2_2
	\ge \sigma^2_{*} \cdot 
	\big(
	\|u\|^2_2 + \|v\|^2_2
	\big)
	= 
	\sigma^2_{*}\cdot \|x\|^2_2 \;.
	\end{equation*}
	Dividing by $\|x\|^2$, taking an infimum
	over all $\calS^\bot \setminus \{0\}$, 
	and taking square roots then yield
	$\sigma_{\min} \ge \sigma_{*}$.
 
	For the reverse direction, observe from
	expression~\eqref{eq:sv-01} that by 
	further restricting the domain, 
	the infimum is non-decreasing, and thus
	\begin{equation*}
		\sigma^2_{\min}
		\le
		\inf_{x = (u, v) \in (0, \1_{n}^\bot) \setminus \{0\}}\,
		\frac{\|\Pi_{\1_m^\bot}(A v)\|^2_2
			+ 
			\|\Pi_{\1_n^\bot}(0)\|^2_2
		}{\|0\|^2_2 + \|v\|^2_2} 
		= 
		\sigma^2_{\min, n}
		\;,
	\end{equation*}
	where the final equality follows by definition of 
	$\sigma_{\min, n}$ in~\eqref{eq:component-rsv-A}.
	Taking square roots, we thus have 
	$\sigma_{\min} \le \sigma_{\min, n}$.
	By an identical calculation, we also have
	$\sigma_{\min} \le \sigma_{\min, m}$.
	Together, this yields 
	$\sigma_{\min} \le \min\{\sigma_{\min, n}, \sigma_{\min, m}\}$,
	which concludes the proof.
\end{proof}

\smallskip 

\paragraph{Variational characterization of variances.}
We also use the following 
variational characterization of variance terms 
$\Var_p(\cdot)$ and $\Var_q(\cdot)$
from~\eqref{eq:variance-def}, 
which we state and prove here:

\begin{restatable}{prop}{propvarmin}
	\label{prop:variance-variational}
    For any $v = (v_x, v_y) \in \R^{m+n}$,
    the following hold:
    \begin{equation*}
    	\begin{aligned}
        \Var_{p}(v_x)  
        &= 
        {\textstyle 
        \min_{c \in \R} \, \sum_{i=1}^m p(i) \cdot \big(v_x(i) - c\big)^2 
        }\\
        \quad\text{and}\quad
        \Var_{q}(v_y)
        &= 
        {\textstyle
        \min_{c \in \R} \, \sum_{j=1}^n q(j) \cdot \big(v_y(j) - c\big)^2 
        } \;.
        \end{aligned}
    \end{equation*}
\end{restatable}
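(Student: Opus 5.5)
The plan is to expand the weighted squared deviation around an arbitrary constant $c$ and minimize the resulting quadratic in $c$. I treat the $p$-component in detail; the $q$-component is identical after replacing $(m, p, v_x)$ with $(n, q, v_y)$.

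Fix $c \in \R$ and set $\bar v := \langle p, v_x \rangle$. Write $v_x(i) - c = (v_x(i) - \bar v) + (\bar v - c)$, square, and sum against $p$:
\begin{equation*}
\textstyle
\sum_{i=1}^m p(i)\big(v_x(i) - c\big)^2
= \Var_p(v_x) + 2(\bar v - c)\sum_{i=1}^m p(i)\big(v_x(i) - \bar v\big) + (\bar v - c)^2 \sum_{i=1}^m p(i) \;.
\end{equation*}

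The cross term vanishes because $\sum_i p(i) v_x(i) = \bar v$ and $\sum_i p(i) = 1$. For the same reason the last term equals $(\bar v - c)^2$. This gives the bias–variance identity
\begin{equation*}
\textstyle
\sum_{i=1}^m p(i)\big(v_x(i) - c\big)^2 = \Var_p(v_x) + (\bar v - c)^2 \;.
\end{equation*}

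The right-hand side is at least $\Var_p(v_x)$ for every $c$, and equality holds exactly at $c = \bar v$. So the minimum over $c \in \R$ is attained and equals $\Var_p(v_x)$, which is the claim.

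There is no genuine obstacle. The only thing to track is that both simplifications use $p \in \Delta_m$, namely normalization $\sum_i p(i) = 1$. Nonnegativity of $p$ is not needed for the identity itself, only to make the left-hand side a genuine weighted least-squares objective.
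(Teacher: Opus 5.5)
Your proof is correct and is essentially the paper's argument. The paper expands $\sum_{i} p(i)\big(v_x(i)-c\big)^2 = c^2 - 2c\langle p, v_x\rangle + \sum_i p(i)\,v_x(i)^2$ and minimizes by differentiating, getting $c^* = \langle p, v_x\rangle$; you instead complete the square around the mean to get $\Var_p(v_x) + (\bar v - c)^2$, which lands directly on the centered definition in \eqref{eq:variance-def} without needing the raw-moment form of the variance.
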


\begin{proof}
    We prove the claim for $\Var_p (v_x)$, and the claim for
    $\Var_q(v_y)$ will follow by identical calculations.
    For this, define the scalar function 
    $F(c) = \sum_{i=1}^m p(i) \cdot \big(v_x(i) - c \big)^2$. 
    Using the fact that $\sum_{i=1}^m p(i) = 1$, expanding
    the definition of $F(c)$ yields
    \begin{equation*}
    	\textstyle
        F(c)
        = 
        \sum_{i=1}^m 
        p(i) \cdot \big((v_x(i))^2 - 2c v_x(i) + c^2\big) 
        = 
        c^2 - 2 c \langle p, v_x \rangle  + 
        \sum_{i=1}^m p(i) \cdot (v_x(i))^2 \;.
    \end{equation*} 
    Thus $F$ is convex in $c$, and differentiating
    shows $F$ is minimized at 
    $c^* = \langle p, v_x\rangle$. It follows that
    \begin{equation*}
    	\textstyle
        \min_{c \in \R}
        F(c) = 
        F(c^*)
        =
        \sum_{i=1}^m p(i) \cdot (v_x(i))^2 
        - \big(\langle p, v_x\rangle \big)^2
        = 
        \Var_p(v_x) \;,
    \end{equation*}
    which concludes the proof.
\end{proof}

\medskip

We now state following proposition, which establishes
the non-uniform skew-gradient domination result of
Proposition~\ref{prop:dissipation-kl-bound}
from a primal viewpoint:

\medskip

\begin{restatable}[
	Primal view of non-uniform
	skew-gradient domination]
{prop}{lemgradientdomprimal}
	\label{prop:gradient-dom-primal}
	Let $A \in \R^{m \times n}$ have a unique
	and interior Nash equilibrium 
	$w^\star = (p^\star, q^\star) \in \relint(\calW)$. 
	Fix $z \in \R^{m+n}$, and let 
	$w = (p, q) = \nabla F(z)$. 
	Let $p_{\min} = \min_{i \in [m]}p(i)$
	and $q_{\min} = \min_{j \in [n]} q(j)$.
	Let $\sigma_{\min, m}$ 
	and $\sigma_{\min, n}$ be defined
	as in~\eqref{eq:component-rsv-A},
	and let $\sigma_{\min}$ be
	the value from Definition~\ref{def:restricted-sv-J}.
	Then the following hold:
	\begin{enumerate}[
		label={(\roman*)},
		topsep=0em,
		leftmargin=3em,
	]
	\item
	$\|J \nabla F(z)\|^2_{z} = \Var_{p}(Aq) + \Var_q(A^\top p)$.

	\item
	$\Var_p(Aq) 
	\ge \sigma^2_{\min, n} \cdot p_{\min} 
	\cdot \|q - q^\star \|^2_2$ 
	and 
	$\Var_q(A^\top p ) 
	\ge \sigma^2_{\min, m} \cdot q_{\min} 
	\cdot \|p - p^\star \|^2_2$ 

	\item
	$\|p- p^\star\|^2_2 \ge p_{\min} \cdot \KL(p^\star, p)$
	and 
	$\|q- q^\star\|^2_2 \ge q_{\min} \cdot \KL(q^\star, q)$.
	\end{enumerate}
	Furthermore, it then holds that:
	\begin{equation*}
		\|J \nabla F(z)\|^2_{z} 
		\ge 
		\sigma^2_{\min} \cdot p_{\min} \cdot q_{\min} 
		\cdot \KL(w^\star, w) \;.
	\end{equation*}
\end{restatable}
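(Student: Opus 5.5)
I will prove the three parts in order and then chain them into the final inequality.

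\textbf{Part (i).} This is immediate from part (ii) of Proposition~\ref{prop:energy-hessian}. Since $J\nabla F(z) = Jw = (Aq, -A^\top p)$, we get
\[
\|Jw\|_z^2 = \Var_p(Aq) + \Var_q(-A^\top p) = \Var_p(Aq) + \Var_q(A^\top p),
\]
because variance is invariant under negation.

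\textbf{Part (ii).} I will use the variational characterization of Proposition~\ref{prop:variance-variational}. By part (i) of Proposition~\ref{prop:interior-NE}, $Aq^\star = c\1_m$, so $Aq = A(q-q^\star) + c\1_m$. Because variance is shift-invariant, $\Var_p(Aq) = \Var_p(A(q-q^\star))$. The variational formula then gives
\[
\Var_p(A(q-q^\star)) \ge p_{\min}\min_{c'}\|A(q-q^\star) - c'\1_m\|_2^2 = p_{\min}\|\Pi_{\1_m^\bot}(A(q-q^\star))\|_2^2,
\]
since the unweighted minimum over shifts is exactly the orthogonal projection onto $\1_m^\bot$. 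Finally $q - q^\star \in \1_n^\bot$, so the definition of $\sigma_{\min,n}$ in~\eqref{eq:component-rsv-A} gives the bound $\sigma_{\min,n}^2\|q-q^\star\|_2^2$. The $q$-side bound is symmetric, using $A^\top p^\star = c\1_n$ and $\sigma_{\min,m}$.

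\textbf{Part (iii).} This is the componentwise version of the dual computation in Lemma~\ref{lem:dual-kl-chisq} and Step 3 of Section~\ref{sec:gradient-domination-proof}. By Proposition~\ref{prop:kl-chisq},
\[
\KL(p^\star,p) \le \chi^2(p^\star,p) = \sum_i \frac{(p(i)-p^\star(i))^2}{p(i)} \le \frac{1}{p_{\min}}\|p-p^\star\|_2^2,
\]
and the same argument applies to $q$.

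\textbf{Combining.} From (i)–(iii),
\[
\|J\nabla F(z)\|_z^2 \ge \sigma_{\min,n}^2\, p_{\min} q_{\min}\KL(q^\star,q) + \sigma_{\min,m}^2\, q_{\min} p_{\min}\KL(p^\star,p).
\]
Applying Proposition~\ref{prop:component-restricted-sv}, each of $\sigma_{\min,m}, \sigma_{\min,n}$ is at least $\sigma_{\min}$. Summing the two terms gives $\sigma_{\min}^2 p_{\min} q_{\min}\KL(w^\star,w)$.

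The only step needing care is the passage from the $p$-weighted variance to the unweighted projection norm in part (ii). The minimizing constant differs between the two problems, but the inequality holds pointwise in $c'$ before minimizing, so taking the minimum on both sides preserves it. I expect no real obstacle; the factor $p_{\min}q_{\min}$, rather than $w_{\min}^2$, arises naturally because the $p$-weights pair with $\|q-q^\star\|$.
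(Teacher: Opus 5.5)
Your proposal is correct and follows the paper's proof essentially step for step. Part (i) uses the variance form of the Hessian quadratic. Part (ii) uses the variational characterization of variance, the bound $p(i)\ge p_{\min}$, and the definition of $\sigma_{\min,n}$ applied to $q-q^\star\in\1_n^\bot$. Part (iii) uses $\KL\le\chi^2$, and the final combination uses Proposition~\ref{prop:component-restricted-sv}. The only cosmetic difference is that you remove $Aq^\star=c\1_m$ by shift-invariance of the variance before projecting, whereas the paper projects first and then uses linearity of $\Pi_{\1_m^\bot}$.
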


\subsubsection{Proof of Proposition~\ref{prop:gradient-dom-primal}}

Granting parts (i), (ii), and (iii) of the lemma 
as true for now, the proof of the final statement 
follows by combining these inequalities. Specifically, we find
\begin{align*}
	\|J \nabla F(z) \|^2_{z}
	&\ge 
	\min\{\sigma^2_{\min, m}, \sigma^2_{\min ,n}\} 
	\cdot p_{\min} \cdot q_{\min}
	\cdot 
	\big(
		\KL(p^\star, p) + \KL(q^\star, q)
	\big) \\
	&= 
	\sigma^2_{\min} 
	\cdot p_{\min} \cdot q_{\min}
	\cdot 
	\KL(w^\star, w) \;.
\end{align*}
Here, the final equality comes from the definition
of $\KL(w^\star, w)$, and from 
Proposition~\ref{prop:component-restricted-sv},
which established that 
$\sigma_{\min} = \min\{\sigma_{\min, m}, \sigma_{\min, n}\}$.
Thus it remains to prove the first three statements of the lemma,
which we do in independent steps here:

\smallskip
\noindent
\textbf{Proof of part (i).} 

\noindent
Using the definition of $J$, the relationship 
$w = (p, q) = \nabla F(z)$, the local norm $\|\cdot \|_z$,
and the characterization of part (ii) in 
Proposition~\ref{prop:energy-hessian}, we have
\begin{align*}
	\|J \nabla F(z) \|^2_{z}
	&= 
	\langle J \nabla F(z), \nabla^2 F(z) J \nabla F(z)\rangle \\
	&= 
	\Var_{p}(Aq) + \Var_q(-A^\top p)  
	=  
	\Var_{p}(Aq) + \Var_q(A^\top p)  \;.
\end{align*}
Here, the final equality uses the fact that
$\Var_q(v) = \Var_q(-v)$ for any $v \in \R^n$. 

\smallskip

\noindent
\textbf{Proof of part (ii).} 

\noindent
Using the variational characterization of variance 
from Proposition~\ref{prop:variance-variational}, 
we have
\begin{align}
	\Var_p(Aq)
	&=
	\min\nolimits_{c \in \R} \,
	\ssum\nolimits_{i =1}^m p(i) \cdot 
	\big(
		(Aq - c \1_m)(i)
	\big)^2 \nonumber \\
	&\ge 
	p_{\min} \cdot 
	\min\nolimits_{c \in \R} \,
	\ssum\nolimits_{i =1}^m
	\big(
		(Aq - c \1_m)(i)
	\big)^2
	= 
	p_{\min} \cdot 
	\|\Pi_{\1^{\bot}_m}(Aq) \|^2_2 \;.
	\label{eq:p2-01}
\end{align}
Here, the final equality comes from the
definition of the length of the orthogonal projection
onto $\1_m^{\bot}$.
Moreover, by Proposition~\ref{prop:interior-NE},
we have that $Aq^\star = c \cdot \1_m$ for some
constant $c \in \R$, and thus $\Pi_{\1_m^\bot}(Aq^\star) = 0$.
Thus, it follows from the linearity  of $\Pi_{\1_m^\bot}$ that
we can write
\begin{equation}
	\Pi_{\1_m^\bot}(Aq)
	= 
	\Pi_{\1_m^\bot}(Aq)  - 
	\Pi_{\1_m^\bot}(Aq^\star) 
	= 
	\Pi_{\1_m^\bot}(A(q - q^\star)) \;.
	\label{eq:p2-02}
\end{equation}
Observe also that since $q, q^\star \in \Delta_n$,
we have $\langle q - q^\star, s \rangle = 0$
for all $s \in \Span(\1_n)$, which means by 
definition that  $q-q^\star \in \1_n^\bot$.
Using the definition of $\sigma_{\min, n}$
from~\eqref{eq:component-rsv-A},
it then follows from~\eqref{eq:p2-02} that
\begin{equation}
	\| \Pi_{\1_m^\bot}(Aq) \|^2_2
	\ge 
	\sigma^2_{\min, n} \cdot \|q - q^\star \|^2_2 \;.
	\label{eq:p2-03}
\end{equation}
Combining~\eqref{eq:p2-03} with~\eqref{eq:p2-01}
then yields 
$\Var_p(Aq) \ge \sigma^2_{\min, n} 
\cdot p_{\min}
\cdot \|q - q^\star \|^2_2$.
Following identical calculations, we similarly conclude 
$\Var_q(A^\top p) \ge \sigma^2_{\min, m} 
\cdot q_{\min}
\cdot \|p - p^\star \|^2_2$.

\smallskip

\noindent
\textbf{Proof of part (iii).} ~

\noindent
Using the relationship $\KL(p^\star, p) \le \chi^2(p^\star, p)$,
we can write and further bound
\begin{equation*}
	\KL(p^\star, p) \le \chi^2(p^\star, p)
	= 
	\sum_{i \in [m]} 
	\frac{(p(i) - p^\star(i))^2}{p(i)} 
	\le	
	\frac{1}{p_{\min}} \cdot 
	\sum_{i \in [m]} 
	(p(i) - p^\star(i))^2
	= 
	\frac{1}{p_{\min}} \cdot \| p - p^\star \|^2_2 \;.
\end{equation*}
Rearranging yields 
$\|p - p^\star\|^2_2 \ge p_{\min} \cdot \KL(p^\star, p)$.
Similarly, following identical calculations,
we also conclude that
$\|q - q^\star\|^2_2 \ge q_{\min} \cdot \KL(q^\star, q)$,
which concludes the proof.
\hfill $\blacksquare$

\subsubsection{Uniform Upper Bound on Dissipation Term}
\label{app:skew-grad-primal:uniform-upper}

Here, we additionally establish the following uniform
\textit{upper bound} on the dissipation term $\|J \nabla F(z)\|^2_{z}$
in terms of the KL divergence to Nash:

\smallskip 

\begin{restatable}[Uniform Upper Bound on Dissipation Term]
    {prop}{propskewgraduniformupper}
    \label{prop:skew-grad-uniform-upper}
	Fix $A \in \R^{m \times n}$ with unique and interior 
	NE $w^\star =  (p^\star, q^\star) \in \relint(\calW)$.
	Let $z \in \R^{m+n}$ and $w = (p, q) = \nabla F(z) \in \relint(\calW)$.
	Recall that $\sigma_{\max} = \|A \|_2$.
	Then:
	\begin{equation*}
		\|J \nabla F(z)\|^2_{z}
		\le 
		2 \cdot \sigma_{\max} \cdot \KL(w^\star, w) \;.
	\end{equation*}
\end{restatable}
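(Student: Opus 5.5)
The approach is to use the primal (variance) form of the dissipation term and bound each variance by a product of two different norms of $A(q-q^\star)$ (resp.\ $A^\top(p-p^\star)$). One factor carries $a_{\max}$ and the other carries $\sigma_{\max}$. The leftover $\ell_1$-type distance is then converted to $\KL$ via Pinsker.

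\textbf{Expected obstacle, stated first.} The inequality as worded is not scale-consistent. Replacing $A$ by $cA$ leaves $w$, $w^\star$ and $\KL(w^\star,w)$ unchanged, multiplies $\|J\nabla F(z)\|_z^2$ by $c^2$, and multiplies $\sigma_{\max}$ only by $c$. So the stated bound cannot hold for all $A$ with no condition on its entries. The proof I can carry out gives $\|J\nabla F(z)\|^2_z\le 2\,a_{\max}\sigma_{\max}\,\KL(w^\star,w)$. This coincides with the statement exactly when $a_{\max}\le 1$, i.e.\ for $A\in[-1,1]^{m\times n}$, the class of games used in Theorems~\ref{thm:kl-last-lower}--\ref{thm:dg-best-2x2}. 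I would record this entry normalization explicitly rather than try to remove it. The remaining steps are routine.

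\textbf{Step 1: reduce to variances of centered payoffs.} By part (i) of Proposition~\ref{prop:gradient-dom-primal}, $\|J\nabla F(z)\|_z^2=\Var_p(Aq)+\Var_q(A^\top p)$. By Proposition~\ref{prop:interior-NE}, $Aq^\star=c\1_m$. Variance is invariant under constant shifts, so $\Var_p(Aq)=\Var_p(A(q-q^\star))$. Likewise $\Var_q(A^\top p)=\Var_q(A^\top(p-p^\star))$.

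\textbf{Step 2: bound each variance by a product of norms.} Write $u=A(q-q^\star)$. Since a variance is at most the second moment, $\Var_p(u)\le \sum_i p(i)u(i)^2\le \|u\|_\infty^2$. Now bound the two factors of $\|u\|_\infty^2=\|u\|_\infty\cdot\|u\|_\infty$ differently:
\begin{itemize}
\item by H\"older, $\|u\|_\infty\le a_{\max}\|q-q^\star\|_1$;
\item via $\|u\|_\infty\le\|u\|_2$, we get $\|u\|_\infty\le\sigma_{\max}\|q-q^\star\|_2\le\sigma_{\max}\|q-q^\star\|_1$.
\end{itemize}
This gives $\Var_p(Aq)\le a_{\max}\sigma_{\max}\|q-q^\star\|_1^2$, and symmetrically $\Var_q(A^\top p)\le a_{\max}\sigma_{\max}\|p-p^\star\|_1^2$, using $\|A^\top\|_2=\sigma_{\max}$.

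\textbf{Step 3: convert to $\KL$.} Pinsker gives $\|q-q^\star\|_1^2=4\TV(q^\star,q)^2\le 2\KL(q^\star,q)$, and similarly for $p$. Summing the two components yields
\[
\|J\nabla F(z)\|_z^2\le 2a_{\max}\sigma_{\max}\bigl(\KL(p^\star,p)+\KL(q^\star,q)\bigr)=2a_{\max}\sigma_{\max}\KL(w^\star,w).
\]
With $a_{\max}\le 1$ this is the stated bound $2\sigma_{\max}\KL(w^\star,w)$. Proposition~\ref{prop:amax-sigmamax} only gives $a_{\max}\le\sigma_{\max}$, so it does not remove the extra factor; by the scaling argument above, nothing can remove it in general.
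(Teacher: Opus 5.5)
Your argument is correct, and so is your diagnosis of the statement. As written, the bound $2\sigma_{\max}\KL(w^\star,w)$ is linear in the scale of $A$, while the left side is quadratic. For any fixed $w\neq w^\star$ the left side is positive, so the bound fails for sufficiently large multiples of any game. For example, $10\times$ Matching Pennies at $p=q=(\tfrac34,\tfrac14)$ gives $37.5$ on the left versus $40\log\tfrac43\approx 11.5$ on the right. The paper's own proof actually ends with $2\sigma_{\max}^2\KL(w^\star,w)$, and that squared form is what Lemma~\ref{lem:kl-init-upperbound} later invokes. The missing square in the statement is therefore a typo.

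Your route shares the paper's skeleton: the variance form from Proposition~\ref{prop:gradient-dom-primal}(i), recentering by the Nash payoff $Aq^\star=c\1_m$, a second-moment bound, and Pinsker. The only difference is how you control $\sum_i p(i)u(i)^2$ with $u=A(q-q^\star)$. The paper bounds it by $\|u\|_2^2\le\sigma_{\max}^2\|q-q^\star\|_2^2$. You instead bound it by $\|u\|_\infty^2$ and split the two factors between H\"older ($a_{\max}$) and the spectral norm ($\sigma_{\max}$).

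This gives the constant $2a_{\max}\sigma_{\max}$. By Proposition~\ref{prop:amax-sigmamax} it is never worse than the paper's $2\sigma_{\max}^2$, so it also covers the downstream use. It also recovers the literal statement exactly for $A\in[-1,1]^{m\times n}$. The paper's version has the advantage of involving no entrywise quantity; it reads off directly from the Euclidean norm.
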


\begin{proof}
	The proof of the upper bound uses
	the primal perspective established by 
	Proposition~\ref{prop:gradient-dom-primal}.
	For this, recall by Part (i) of that proposition that
	\begin{equation*}
		\|J \nabla F(z)\|^2_{z}
		=
		\Var_p(Aq) + \Var_q(A^\top p) \;.
	\end{equation*}
	Now by Part (i) of Proposition~\ref{prop:interior-NE},
	recall that $Aq^\star = d \1_m$
	and $A^\top p^\star = d \1_n$ for some
	constant $d \in \R$.
	Then by the variational characterization of variance from
	Proposition~\ref{prop:variance-variational}, 
	observe that 
	\begin{align}
		\Var_p(Aq)
		&=
		\min\nolimits_{c \in \R} \,
		\ssum\nolimits_{i =1}^m p(i) \cdot 
		\big(
			(Aq - c \1_m)(i)
		\big)^2 \nonumber \\
		&\le 
		\ssum\nolimits_{i=1}^m
		p(i) \cdot 
		\big((A(q-q^\star))(i)\big)^2  \nonumber \\
		&\le
		\|A(q-q^\star)\|^2_2  \le 
		\sigma^2_{\max} 
		\cdot \|q -q^\star\|^2_2 
		\label{eq:var-upperb}
	\end{align}
	Here, the final inequality follows by definition of $\sigma_{\max} = \|A\|_2$.
	As $\|u\|^2_2 \le \|u\|^2_1$ for any $u \in \R^m$, 
	and recalling that $\TV(q^\star,q) = \tfrac{1}{2} \|q^\star - q\|_1$,
	we then further have by Pinsker's inequality
	\begin{equation}
		\|q-q^\star\|^2_2
		\le 
		\|q-q^\star\|^2_1
		= 
		4 \TV(q^\star, q)^2 
		\le 
		2\KL(q^\star, q) \;.
		\label{eq:var-upper-pinsker}
	\end{equation}
	Combining expressions~\eqref{eq:var-upperb}
	and~\eqref{eq:var-upper-pinsker} and repeating
	an identical calculation for $\Var_q(A^\top p)$,
	we thus conclude that 
	\begin{equation*}
		\|J \nabla F(z)\|^2_{z}
		\le
		2 \sigma^2_{\max}
		\cdot 
		\big(
		\KL(q^\star, q)		
		+ \KL(p^\star, p)
		\big)
		= 
		2 \cdot \sigma^2_{\max} \cdot \KL(w^\star, w) \;,
	\end{equation*}
	which yields the desired claim.
\end{proof}

\smallskip

Note that this upper bound holds uniformly
over the dual and primal spaces and does 
not contain multiplicative factors depending on the local 
spectrum of $\nabla^2 F(z)$.
However, we will later prove in
Proposition~\ref{prop:kl-dissipation-upper-2x2}
a much tighter \textit{non-uniform}
upper bound on $\|J \nabla F(z)\|^2_{z}$ for the $2\times 2$ setting.
In particular, this latter bound \textit{does} contain
factors depending on the
minimum coordinates $p_{\min}, q_{\min}$
that essentially matches the lower
bound on $\|J \nabla F(z)\|^2_{z}$
from Proposition~\ref{prop:gradient-dom-primal}.
See Section~\ref{app:kl-last-lower}.

\subsection{Independence Between
Restricted Spectrum of $J$ and Location of Nash Equilibrium}
\label{sec:sigma-wmin-relationship}

In this section, we show that 
the magnitude of the
minimum restricted value $\sigma_{\min}$of $J$
is independent from the 
location of the game's Nash equilibrium.
In particular, we show examples 
where (i) the interior NE is uniform
but the $\sigma_{\min}$
can be arbitrarily small,
and (ii) where the interior NE of the
game is arbitrarily close to a vertex,
but $\sigma_{\min}$ is constant.

In the context of Proposition~\ref{prop:dissipation-kl-bound},
these examples also imply a certain independence
between $\sigma_{\min}$
and the local strong convexity parameter
$w_{\min}$ of the energy function: even
if a primal iterate $w$ is 
very close to the simplex boundary
(which one would expect for the OMWU iterates 
if the NE $w^\star$ is near the boundary),
the value of $\sigma_{\min}$ can be still be 
an absolute constant. 
For this reason, the result of
Proposition~\ref{prop:dissipation-kl-bound}
indicates that the more salient geometric bottleneck
in the skew-gradient domination inequality 
comes from the restricted strong convexity 
parameter of $F$, and not from $\sigma_{\min}$.

We proceed with the two examples:

\subsubsection{Example: Uniform Nash, 
but Arbitrarily Small $\sigma_{\min}$.}

Consider the scaled Matching Pennies game
with payoff matrix:
\begin{equation}
	A = 
	\begin{pmatrix}
		\eps & -\eps \\
		-\eps & \eps
	\end{pmatrix}
	\quad\text{for $\eps \in [-1, 1]$.}
	\label{eq:scaled-mp}
	\tag{$\eps$-Scaled Matching Pennies}
\end{equation}

\begin{restatable}[Minimum Restricted Singular
Value of $\eps$-Scaled MP]
{prop}{propscaledmprsv}	
	\label{prop:scaled-mp-rvs-example}
	Fix any $\eps \in [-1,1]$, and let $A \in \R^{2\times 2}$
	be the~\eqref{eq:scaled-mp}
	payoff matrix. Let
	$\sigma_{\min} = \min\{\sigma_{\min, n}, \sigma_{\min,n}\}$
	be the minimum restricted singular value 
	from Definition~\ref{def:restricted-sv-J}
	with $\sigma_{\min, n}$ and $\sigma_{\min,n}$
	as in~\eqref{eq:component-rsv-A}.
	Then the following hold:
	\begin{enumerate}[
		label={(\roman*)},
		itemsep=0em,
		topsep=0em,
		leftmargin=3em
	]
		\item
		The unique Nash equilibrium $w^\star = (p^\star, q^\star)$
		of $A$ is given by 
		$p^\star = q^\star = (\frac{1}{2}, \frac{1}{2})$.
		\item
		$\sigma_{\min} = 2\eps$ \;.
	\end{enumerate}
\end{restatable}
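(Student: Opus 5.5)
The plan is to verify both claims by direct computation in the $2\times 2$ setting, using Proposition~\ref{prop:interior-NE} for part (i) and Proposition~\ref{prop:component-restricted-sv} for part (ii). Throughout I take $\eps \neq 0$. When $\eps = 0$ every strategy pair is a Nash equilibrium, so uniqueness fails. Both $\sigma_{\min,m}$ and $\sigma_{\min,n}$ are then $0$, which is consistent with the formula read as $2|\eps|$. In general I will establish $\sigma_{\min} = 2|\eps|$, which equals $2\eps$ for $\eps > 0$.

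For part (i), I first note that $A\1_2 = 0$ and $A^\top \1_2 = 0$. Hence $Aq^\star = 0\cdot\1_2$ and $A^\top p^\star = 0\cdot\1_2$ for $p^\star = q^\star = (\tfrac12,\tfrac12)$. Part (ii) of Proposition~\ref{prop:interior-NE} then shows this pair is a Nash equilibrium with game value $0$.

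For uniqueness I write $p = (a, 1-a)$ and $q = (b, 1-b)$. This gives $Aq = \eps(2b-1)(1,-1)^\top$ and $A^\top p = \eps(2a-1)(1,-1)^\top$. At any Nash equilibrium the value must equal the game value $0$. If $b \neq \tfrac12$, the row player's unique best response is a pure strategy. The column player's best response to that pure strategy is again pure and pushes $b$ to the opposite vertex, which contradicts the original choice of $b$ (the usual matching-pennies cycling argument). A symmetric argument forces $a = \tfrac12$. So $w^\star$ is unique.

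For part (ii), I use Proposition~\ref{prop:component-restricted-sv}, which gives $\sigma_{\min} = \min\{\sigma_{\min,m},\sigma_{\min,n}\}$. Since $\1_2^\bot$ is one-dimensional, spanned by $u = (1,-1)/\sqrt2$, each infimum in~\eqref{eq:component-rsv-A} reduces to a single evaluation. I compute $Au = 2\eps u$. This vector already lies in $\1_2^\bot$, so the projection leaves it unchanged and $\|\Pi_{\1_2^\bot}(Au)\|_2/\|u\|_2 = 2|\eps|$, giving $\sigma_{\min,n} = 2|\eps|$. The identical computation with $A^\top = A$ gives $\sigma_{\min,m} = 2|\eps|$. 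Hence $\sigma_{\min} = 2|\eps|$.

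The only step requiring care is the uniqueness argument in part (i); everything else is a one-line computation.
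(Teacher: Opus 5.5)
Your proposal is correct and follows the paper's proof. Part (ii) is the same single-vector computation on the one-dimensional space $\1_2^\bot$ using $A = A^\top$, and part (i) likewise checks $A\1_2 = A^\top \1_2 = 0$ and applies Part (ii) of Proposition~\ref{prop:interior-NE}. The only difference is that you prove uniqueness by a direct best-response cycling argument, whereas the paper cites Part (ii) of Proposition~\ref{prop:2x2-interior-unique}, which applies because $\phi = a-b-(c-d) = 4\eps \neq 0$.

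Your caveats are correct, and the paper's proof glosses over them by dropping absolute values in its ratio computation. The conclusion should read $\sigma_{\min} = 2|\eps|$, and uniqueness in (i) requires $\eps \neq 0$.
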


\begin{proof}
We prove the two claims of the proposition separately:

\paragraph{Proof of part (i).} ~

\noindent
Observe by definition of $A$ that
\begin{equation*}
A \begin{pmatrix}
	\tfrac{1}{2} \\ 
	\tfrac{1}{2}
\end{pmatrix}
= 
A^\top 
\begin{pmatrix}
	\frac{1}{2} \\ 
	\frac{1}{2}
\end{pmatrix}
= 
\begin{pmatrix}
		\eps & -\eps \\
		-\eps & \eps
\end{pmatrix}
 \begin{pmatrix}
	\frac{1}{2} \\ 
	\frac{1}{2}
\end{pmatrix}
= 
\begin{pmatrix} 
	\frac{1}{2} (\eps - \eps) \\
	\frac{1}{2} (- \eps + \eps)
\end{pmatrix}
= 
\begin{pmatrix}
	0 \\ 0
\end{pmatrix}\;.
\end{equation*}
By Part (ii) of Proposition~\ref{prop:interior-NE}, this means 
$w^\star = ((\frac{1}{2}, \frac{1}{2}), (\frac{1}{2}, \frac{1}{2}))$
is a Nash equilibrium of $A$. 
Since $A \in \R^{2 \times 2}$
and $w^\star$ is interior, we further have by
Part (ii) of Proposition~\ref{prop:2x2-interior-unique} that
$w^\star$ is unique.

\paragraph{Proof of part (ii).} ~ 

\noindent
Let $\1 = (1, 1)$.
Observe that in this two-dimensional setting,
the orthogonal subspace $\1^\bot$
is given by vectors $c \cdot (1, -1)$ for $c \in \R$.
Thus, fix any $c \in \R$, and let $v = c (1, -1)$.
We can compute
\begin{equation*}
	Av = 
	\begin{pmatrix}
		\eps & -\eps \\
		-\eps & \eps
	\end{pmatrix}
	\begin{pmatrix}
		c \\ -c
	\end{pmatrix} 
	= 
	\begin{pmatrix}
		2c \eps \\
		- 2 c \eps
	\end{pmatrix}
	= 
	(2c \eps) \cdot 
	\begin{pmatrix}
		1 \\ -1 
	\end{pmatrix} \;.
\end{equation*}
Observe by definition that $Av \in \1^\bot$.
Thus taking norms, we have
\begin{equation}
	\frac{\|\Pi_{\1^\bot}(Av) \|_2}{\|v\|_2}
	= 
	\frac{\|(Av) \|_2}{\|v\|_2}
	= 
	\frac{(2c\eps)\cdot \sqrt{2}}{c \sqrt{2}}
	= 2 \eps \;.
	\label{eq:e-01}
\end{equation}
As~\eqref{eq:e-01} holds for any $c \in \R$,
and since $A = A^\top$, 
it follows that $\sigma_{\min, n} = \sigma_{\min, m} = 2 \eps$.
\end{proof}

\subsubsection{Example: 
	Vanishing Minimum Nash Coordinate,
	but Constant $\sigma_{\min}$.}

Consider the payoff matrix $A_\delta \in \R^{2 \times 2}$
given by
\begin{equation}
	A_\delta = 
	\begin{pmatrix}
		\delta^2 & -\delta(1-\delta) \\
		- \delta(1-\delta) & (1-\delta)^2
	\end{pmatrix}
	\quad
	\text{for $\delta \in (0, 0.5]$.}
	\label{eq:A2x2-sym}
\end{equation}

\begin{restatable}[Minimum Restricted Singular Value 
of $A_{\delta}$]
	{prop}{propsmallnashlargersv}
	\label{prop:small-nash-large-rsv}
	Fix $\delta \in (0, 0.5]$,
	and let $A_\delta \in \R^{2\times 2}$ be the 
	payoff matrix from~\eqref{eq:A2x2-sym}. Let
	$\sigma_{\min} = \min\{\sigma_{\min, n}, \sigma_{\min,n}\}$
	be the minimum restricted singular value 
	from Definition~\ref{def:restricted-sv-J}
	with $\sigma_{\min, n}$ and $\sigma_{\min,n}$
	as in~\eqref{eq:component-rsv-A}.
	Then the following hold:
	\begin{enumerate}[
			label={(\roman*)},
			itemsep=0em,
			topsep=0em,
			leftmargin=3em
	]
		\item
		The unique Nash equilibrium $w^\star = (p^\star, q^\star)$
		of $A$ is given by $p^\star = q^\star = (1-\delta, \delta)$.
		\item
		$\sigma_{\min} = \frac{1}{2}.$

	\end{enumerate}
\end{restatable}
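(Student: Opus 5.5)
The approach is to exploit the rank-one structure of $A_\delta$. Writing $v_\delta := (\delta, -(1-\delta))^\top$, a direct multiplication shows $A_\delta = v_\delta v_\delta^\top$. In particular $A_\delta = A_\delta^\top$, so the two component quantities $\sigma_{\min,m}$ and $\sigma_{\min,n}$ from~\eqref{eq:component-rsv-A} coincide, and it suffices to compute one of them.

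For part (i), set $p^\star = q^\star = (1-\delta,\delta)$. Then
\begin{equation*}
	\langle v_\delta, p^\star\rangle = \delta(1-\delta) - (1-\delta)\delta = 0,
\end{equation*}
so $A_\delta q^\star = v_\delta \langle v_\delta, q^\star\rangle = 0 = 0\cdot\1_2$, and likewise $A_\delta^\top p^\star = 0$. Part (ii) of Proposition~\ref{prop:interior-NE} then shows that $w^\star=(p^\star,q^\star)$ is a Nash equilibrium. Since $\delta\in(0,0.5]$, $w^\star$ is interior. Uniqueness follows exactly as in the proof of Proposition~\ref{prop:scaled-mp-rvs-example}, by invoking the $2\times 2$ uniqueness result (Proposition~\ref{prop:2x2-interior-unique}) for interior equilibria.

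For part (ii), I use that in dimension two $\1_2^\bot = \{c(1,-1): c\in\R\}$ is one-dimensional. For $u = c(1,-1)$ we get $\langle v_\delta, u\rangle = c(\delta + 1-\delta) = c$, so $A_\delta u = c\,v_\delta$. Next I project $v_\delta$ onto $\1_2^\bot$. Its coordinate mean is $(2\delta-1)/2$, so
\begin{equation*}
	\Pi_{\1^\bot}(v_\delta) = \big(\tfrac12, -\tfrac12\big),
\end{equation*}
which is independent of $\delta$. This gives
\begin{equation*}
	\frac{\|\Pi_{\1^\bot}(A_\delta u)\|_2}{\|u\|_2} = \frac{|c|/\sqrt2}{|c|\sqrt2} = \frac12
\end{equation*}
for every $c\neq 0$. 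Because the subspace is one-dimensional, the infimum in~\eqref{eq:component-rsv-A} is this constant, so $\sigma_{\min,n}=\tfrac12$. By symmetry of $A_\delta$, $\sigma_{\min,m}=\tfrac12$ as well. Proposition~\ref{prop:component-restricted-sv} then gives $\sigma_{\min}=\tfrac12$.

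There is no real obstacle here. The only insight needed is the factorization $A_\delta=v_\delta v_\delta^\top$, together with the observation that the projection of $v_\delta$ removes all dependence on $\delta$. The remaining steps are short verifications.
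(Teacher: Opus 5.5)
Your proof is correct and follows essentially the same direct computation as the paper. Both arguments verify $A_\delta q^\star = A_\delta^\top p^\star = 0$, then invoke Propositions~\ref{prop:interior-NE} and~\ref{prop:2x2-interior-unique} for uniqueness (here $\phi = (\delta + (1-\delta))^2 = 1 \neq 0$), and finally compute $\Pi_{\1^\bot}(A_\delta u) = c(\tfrac12,-\tfrac12)$ on the one-dimensional space $\1^\bot$; your rank-one factorization $A_\delta = v_\delta v_\delta^\top$ is a tidy repackaging of that same calculation.
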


\begin{proof} 
We again prove the two claims of the proposition separately.

\paragraph{Proof of part (i).} ~

\noindent
Observe by definition of $A_\delta = A^\top_{\delta}$ that
\begin{align*}
	A_\delta 
	\begin{pmatrix}
		1-\delta \\
		\delta
	\end{pmatrix}
	= 
	A^\top_\delta 
	\begin{pmatrix}
		1-\delta \\
		\delta
	\end{pmatrix}
	&= 
	\begin{pmatrix}
		\delta^2 & -\delta(1-\delta) \\
		- \delta(1-\delta) & (1-\delta)^2
	\end{pmatrix}
	\begin{pmatrix}
		1-\delta \\
		\delta
	\end{pmatrix} \\
	&= 
	\begin{pmatrix}	
		\delta^2 (1-\delta) - \delta^2 (1-\delta) \\
		-\delta (1-\delta)^2 + \delta (1-\delta)^2 
	\end{pmatrix}
	= 
	\begin{pmatrix}
		0 \\ 0
	\end{pmatrix}
	\;.
\end{align*}
By Part (ii) of Proposition~\ref{prop:interior-NE}
and Part (ii) of Proposition~\ref{prop:2x2-interior-unique},
this means $w^\star = ((1-\delta, \delta), (1-\delta, \delta))$
is the unique interior Nash equilibrium of $A_\delta$.

\paragraph{Proof of part (ii).} ~

\noindent
Similarly to the proof of part (ii) of 
Proposition~\ref{prop:scaled-mp-rvs-example}, 
let $\1 = (1, 1)$, and thus $v \in \1^\bot$
when $v = c (-1, 1)$ for some $c \in \R$.
For such $v$, we can compute
using the definition of $A_{\delta}$ that
\begin{equation*}
	A_\delta v = A_\delta^\top v
	= 
	\begin{pmatrix}
	\delta^2 & -\delta (1-\delta) \\
	-\delta(1-\delta) & (1-\delta)^2
	\end{pmatrix}
	\begin{pmatrix}
		c \\ -c
	\end{pmatrix}
	= 
	c 
	\begin{pmatrix}
		\delta \\
		\delta-1
	\end{pmatrix} \;.
\end{equation*}
Moreover, recalling from Definition~\ref{def:S-bot-projection}
that $\Pi_{\1^\bot} = I - \frac{1}{2} \1 \1^\top$, 
it follows that 
\begin{equation*}
	\Pi_{\1^\bot}(A_\delta v)
	= 
	c \cdot 
	\begin{pmatrix}
	\delta - \frac{1}{2} (2\delta - 1) \\
	\delta -1 - \frac{1}{2} (2\delta - 1)
	\end{pmatrix}
	= 
	c \begin{pmatrix}
		\frac{1}{2} \\ - \frac{1}{2}
	\end{pmatrix} \;.
\end{equation*}
Taking norms, we find for any $c \in \R \setminus \{0\}$ that
\begin{equation*}
	\frac{\|\Pi_{\1^\bot}(A_\delta v)\|_2}{\|v\|_2}
	= 
	\frac{|c|/\sqrt{2}}{|c|\sqrt{2}}
	= 
	\frac{1}{2} \;.
\end{equation*}
It then follows by definition that
$\sigma_{\min} = \frac{1}{2}$.
\end{proof}


\section{Details on Universal Last-Iterate Convergence in KL}
\label{app:kl-last-iterate}

This section gives the proofs of 
Theorem~\ref{thm:asymptotic-convergence}
and Theorem~\ref{thm:kl-last-unified},
which establish asymptotic last-iterate convergence
and a linear last-iterate convergence rate
in $\KL$ divergence, respectively.

\paragraph{Oranization of section.}
The section is organized as follows:
\begin{itemize}[
    leftmargin=1em
]
\item
\textbf{Section~\ref{app:kl-last-iterate-initial}}
first proves a helper lemma that 
establishes an upper bound on the
initial change in $\KL(w^\star, \cdot)$
over the first step of the algorithm.
\item
\textbf{Section~\ref{app:kl-last-iterate:rate}}
gives the proof of Theorem~\ref{thm:asymptotic-convergence}
(asymptotic last-iterate convergence to NE).
\item
\textbf{Section~\ref{app:kl-last-iterate:rate}}
gives the proof of the 
Theorem~\ref{thm:kl-last-unified}
(new, linear last-iterate convergence rate in KL).
\item
\textbf{Section~\ref{app:wei-comparisons}}
gives a brief comparison with the prior
result of~\cite{wei2021linear}.
\item
\textbf{Section~\ref{app:kl-last:numerical}}
presents several numerical simulations.

\end{itemize}

\subsection{Initial Change in KL Divergence}
\label{app:kl-last-iterate-initial}

We start by proving the following lemma:

\smallskip

\begin{restatable}[Bound on Initial Change in KL]
{lem}{leminitialchangekl}
    \label{lem:kl-init-upperbound}
    Fix $A \in \R^{m \times n}$ with 
    unique and interior NE $w^\star \in \relint(\calW)$.
    Let $\{w_t\}$ denote the iterates
    of~\eqref{eq:omwu-primal}
    and let $\{z_t\}$ denote the iterates
    of~\eqref{eq:omwu-primal}
    with $\eta$ satisfying Assumption~\ref{ass:stepsize}.
    Recall $\sigma_{\max} = \|A\|_2$.
    Then
    $\KL(w^\star, w_1) \le 
    \tfrac{5}{4} \cdot \KL(w^\star, w_0)$.
\end{restatable}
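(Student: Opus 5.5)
Since the first OMWU step has no optimistic correction, the natural route is to translate the change in KL into a change in energy, bound that change by the Bregman divergence, and then compare the dissipation term to KL. Concretely, I would first use Corollary~\ref{cor:change-kl-energy-equiv-omwu}, taking the dual initialization $z_0\in\calZ$ with $\nabla F(z_0)=w_0$; such a $z_0$ exists by Proposition~\ref{prop:calZ-grad-map}. This gives
\[
\KL(w^\star,w_1)-\KL(w^\star,w_0)=F(z_1)-F(z_0).
\]
Next I would invoke Lemma~\ref{lem:initial-change-energy}, which holds under Assumption~\ref{ass:stepsize}:
\[
F(z_1)-F(z_0)=D_F(z_1,z_0)\le \tfrac58\,\eta^2\|J\nabla F(z_0)\|^2_{z_0}.
\]

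The remaining task is to bound the dissipation term $\|J\nabla F(z_0)\|^2_{z_0}$ by a multiple of $\KL(w^\star,w_0)$. For this I would use the uniform upper bound of Proposition~\ref{prop:skew-grad-uniform-upper}. Its proof actually yields the constant $2\sigma_{\max}^2$ rather than $2\sigma_{\max}$, so I would work with
\[
\|J\nabla F(z_0)\|^2_{z_0}\le 2\sigma_{\max}^2\,\KL(w^\star,w_0),
\]
which is the version the argument supports. Combining the two displays gives
\[
\KL(w^\star,w_1)\le\bigl(1+\tfrac54\,\eta^2\sigma_{\max}^2\bigr)\KL(w^\star,w_0).
\]

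It then suffices to show $\tfrac54\,\eta^2\sigma_{\max}^2\le\tfrac14$, i.e.\ $\eta\sigma_{\max}\le 1/\sqrt5$. This follows from Assumption~\ref{ass:stepsize}, since $\eta\le\frac{1}{216\sigma_{\max}}$ gives $\eta\sigma_{\max}\le\frac{1}{216}$, and the lemma follows.

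The only delicate point is the exponent of $\sigma_{\max}$ in Proposition~\ref{prop:skew-grad-uniform-upper}. If one uses the stated form $2\sigma_{\max}$ instead, the condition becomes $\tfrac54\eta^2\sigma_{\max}\le\tfrac14$. This still holds, because $\eta\le\frac{1}{4\cdot 9}$ and $\eta\le\frac{1}{216\sigma_{\max}}$ together give $\eta^2\sigma_{\max}\le\frac{1}{36\cdot 216}$. So the conclusion survives either way.
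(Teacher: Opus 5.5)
Your proposal is correct and follows the paper's proof essentially step for step: it uses the energy--KL equivalence (Corollary~\ref{cor:change-kl-energy-equiv-omwu}), the first-step bound of Lemma~\ref{lem:initial-change-energy}, and the uniform upper bound of Proposition~\ref{prop:skew-grad-uniform-upper}, and closes with the crude stepsize check $\tfrac{5}{4}\eta^2\sigma_{\max}^2 \le \tfrac14$. You are also right that the proposition's proof actually yields the constant $2\sigma_{\max}^2$ (the stated $2\sigma_{\max}$ is a typo), and the squared version is exactly what the paper's own computation uses.
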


\smallskip

\begin{proof}
    First, recall by Corollary~\ref{cor:change-kl-energy-equiv-omwu} 
    the equivalence
    \begin{equation}
        \KL(w^\star, w_{1}) - \KL(w^\star, w_0)
        = 
        F(z_1) - F(z_0) \;.
        \label{eq:kl-init-01}
    \end{equation}
    Then applying the upper bound on $F(z_1) - F(z_0)$
    from Lemma~\ref{lem:initial-change-energy}
    and the uniform upper bound of 
    Proposition~\ref{prop:skew-grad-uniform-upper},
    we find
    \begin{equation}
        F(z_1) - F(z_0)
        \le \tfrac{5}{8} \cdot \eta^2 \|J\nabla F(z_0)\|^2_{z}
        \le \tfrac{10}{8} \cdot \sigma^2_{\max} \cdot \eta^2 \cdot \KL(w^\star, w_0) 
        \le \tfrac{1}{4} \cdot \KL(w^\star, w_0) \;.
        \label{eq:kl-init-02}
    \end{equation}
    Here, the final inequality is due to 
    $\eta \le \tfrac{1}{4(54 \sigma_{\max} + 9)}
    \le 
    \tfrac{1}{4(54 \sigma_{\max})}$
    by Assumption~\ref{ass:stepsize},
    which implies crudely that 
    $\frac{10}{8} \sigma^2_{\max} \eta^2 \le \tfrac{1}{4}$.
    Then substituting~\eqref{eq:kl-init-02} into~\eqref{eq:kl-init-01}
    and rearranging further gives
    \begin{equation*}
        \KL(w^\star, w_1)
        \le 
        \KL(w^\star, w_0) \cdot \big(1 + \tfrac{1}{4}\big) \;,
    \end{equation*}
    which completes the proof.
\end{proof}

\subsection{Proof of Theorem~\ref{thm:asymptotic-convergence} -- 
Asymptotic Last-Iterate Convergence}
\label{app:kl-last-iterate:asymptotic}

We now give a proof of asymptotic
last-iterate convergence to a unique and interior NE.
We first restate the theorem:

\smallskip

\thmasymptotic*

\smallskip

\begin{proof}
Let $\{z_t\}$ denote the dual iterates of OMWU. 
The proof follows via the following steps:

\smallskip
\noindent
\textbf{1. Primal iterates lie in compact sublevelset}:

\noindent
Combining Corollary~\ref{cor:change-kl-energy-equiv-omwu}
and the upper bound of Lemma~\ref{lem:energy-one-step-full}, 
we have for all $t \ge 1$:
\begin{equation*}
    \KL(w^\star, w_{t+1}) - \KL(w^\star, w_{t})
    = 
    F(z_{t+1}) - F(z_t) 
    \le 
    - \tfrac{1}{20} \eta^2 \|J \nabla F(z_t)\|^2_{z_t} \le 0 \;.
\end{equation*}
Moreover, by Lemma~\ref{lem:kl-init-upperbound},
we have $\KL(w^\star,w_1) \le (5/4) \cdot \KL(w^\star, w_0)$.
Together, this means that 
$\KL(w^\star, w_t) \le (5/4) \cdot \KL(w^\star, w_0)$
for all $t \ge 0$.
Letting $\kappa = (5/4) \cdot \KL(w^\star, w_0)$,
this implies all iterates $w_t$
lie in the compact sublevelset 
\begin{equation*}
\calU = \{w \in \relint(\calW) \;:\; 
\KL(w^\star, w) \le \kappa\} \subset \relint(\calW) \;.
\end{equation*}

\smallskip
\noindent
\textbf{2. Dissipation term is zero only at Nash}:

\noindent
For $z \in \R^{m+n}$, let $w = (p, q) = \nabla F(z) \in \relint(\calW)$.
Observe by Part (i) of Proposition~\ref{prop:gradient-dom-primal}
that  $\|J \nabla F(z)\|^2_{z} = \Var_p(Aq) + \Var_q(A^\top p)$. 
By definition of variance (see expression~\eqref{eq:variance-def}),
note that $\Var_p(v) = 0$ and  $\Var_q(u) = 0$ if and only if 
$v = c \1_m$ and $u = d \1_n$ for some $c, d \in \R$. 
Together, this implies that 
$\|J \nabla F(z)\|^2_{z} = 0$ if and only if both 
$Aq$ and $A^\top p$ are constant vectors. 
Due to the uniqueness of $w^\star$,
this means by Part (ii) of Proposition~\ref{prop:interior-NE}
that $\|J \nabla F(z) \|^2_{z} = 0$ if and only if $\nabla F(z) = w = w^\star$.
Now for $w=(p,q)$ let $D(w) = (\eta^2/20)\cdot (\Var_p(Aq) + \Var_q(A^\top p))$.
Then for all $t \ge 1$:
\begin{equation}
    \KL(w^\star, w_{t+1}) - \KL(w^\star, w_t) \le - D(w_t) 
    \quad\;\text{and}\;
    \begin{cases}
    D(w_t)
    \ge 0 \;\text{for all $w_t$} \\
    D(w_t) = 0 \;\text{iff $w_t = w^\star$} \;.
    \end{cases}
    \label{eq:klasym-01}
\end{equation}

\noindent
\textbf{3. Subsequence and global convergence}:

\noindent
As all $w_t \in \calU$ for the compact set $\calU$,
we have by the Bolzano-Weierstrass theorem
that every infinite subsequence of $\{w_t\}$
has at least one limit point $w_{\infty} \in \calU$.
Moreover, by the continuity of $D(\cdot)$, 
if $w_{t_k} \to w_{\infty}$ for some subsequence 
$\{w_{t_k}\}$, 
then also $D(w_{t_k}) \to D(w_{\infty})$. 
Now as $\calU$ is compact,~\eqref{eq:klasym-01}
implies that $\KL(w^\star, w_{t_k})$ converges
to a finite limit, which means that $D(w_{t_k}) \to 0$.
Thus,we must also have $D(w_{\infty}) = 0$,
which further implies that $w_{\infty} = w^\star$.
Because every infinite subsequence 
converges to the same limit point $w_{\infty} = w^\star$,
then the entire sequence $\{w_t\}$ converges to $w^\star$.
\end{proof}

\subsection{Proof of Theorem~\ref{thm:kl-last-unified} -- 
Universal Last-Iterate Converence Rate}
\label{app:kl-last-iterate:rate}

We first restate the theorem:

\smallskip

\thmkllastunified*
 
\medskip

\begin{proof}
The proof follows from the
two main previously established properties:
energy dissipation (Lemma~\ref{lem:energy-one-step-full})
and skew-gradient domination (Proposition~\ref{prop:dissipation-kl-bound}).
We proceed to outline these steps.

\smallskip
\noindent
\textbf{1. Change in KL is change in energy}:

\noindent
First, let $\{z_t\}$ denote the dual OMWU iterates. 
Recall from Corollary~\ref{cor:change-kl-energy-equiv-omwu}
that for all $t \ge 0$, we have
\begin{equation}
    \KL(w^\star, w_{t+1})
    - \KL(w^\star, w_t)
    = 
    F(z_{t+1}) - F(z_t) \;.
    \label{eq:kllast-01}
\end{equation}

\smallskip
\noindent
\textbf{2. Upper bound on energy dissipation}:

\noindent
From Lemma~\ref{lem:energy-one-step-full}, we have 
for all $t \ge 1$ under the constraint on $\eta$ that
\begin{equation}
    F(z_{t+1}) - F(z_t)
    \le 
    - \tfrac{1}{20} \cdot \eta^2 \|J \nabla F(z_t)\|^2_{z_t} \;.
    \label{eq:kllast-02}
\end{equation}

\smallskip
\noindent
\textbf{3. Non-uniform skew-gradient domination}:

\noindent
By Proposition~\ref{prop:dissipation-kl-bound}, 
we further have the lower bound
\begin{equation}
    \|J \nabla F(z_t) \|^2_{z_t}
    \ge 
    \sigma^2_{\min} \cdot w^2_{t, \min} 
    \cdot \KL(w^\star, w_t) \;,
    \label{eq:kllast-03}
\end{equation}
where $\sigma_{\min} > 0$ under the
unique and interior Nash equilibrium assumption. 

Combining expressions~\eqref{eq:kllast-01},~\eqref{eq:kllast-02},
and~\eqref{eq:kllast-03}, we find for all $t \ge 1$
\begin{equation*}
    \KL(w^\star, w_{t+1})
    - 
    \KL(w^\star, w_t)
    \le 
    - \tfrac{1}{20} \eta^2 \cdot \sigma^2_{\min} w^2_{t, \min} 
    \cdot
    \KL(w^\star, w_t) \;.
\end{equation*}
Rearranging then yields
\begin{equation}
    \KL(w^\star, w_{t+1})
    \le 
    \KL(w^\star, w_t)
    \cdot \big(
    1
    - \tfrac{1}{20} \eta^2 \cdot \sigma^2_{\min} w^2_{t, \min}  
    \big) \;,
    \label{eq:kllast-p1}
\end{equation}
which yields statement (1) of the theorem.

\smallskip

For the second statement, 
we require establishing a lower bound on 
the coordinates $w_{t, \min}$
that holds \textit{uniformly} for all iterates $t \ge 0$.
For this, we state and prove the following 
lemma, which establishes a
worst-case uniform bound of $w_{t, \min} \ge \exp(-2\Lambda/\delta)$,
where $\Lambda = \KL(w^\star, w_0) - R(w^\star)$
is the cross-entropy between $w^\star$
and the initialization $w_0 \in \relint(\calW)$.

\begin{restatable}[Uniform lower bound on minimum coordinates]
{lem}{lemuniformlbcoordinates}
    \label{lem:wtmin-bound-uniform}
    Assume the setting of Theorem~\ref{thm:kl-last-unified}.
    Then for all $t \in [T]$ it holds that
    $w_{t, \min} \ge \exp(\frac{-2\Lambda}{\delta})$.
\end{restatable}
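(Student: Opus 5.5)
The plan is to bound $\KL(w^\star, w_t)$ uniformly in $t$ and then convert that into a coordinate lower bound. The key observation is that a single small coordinate already forces a large KL divergence, because every coordinate of $w^\star$ is at least $\delta$.

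\textbf{Step 1: uniform KL bound.} Combine the energy dissipation upper bound of Lemma~\ref{lem:energy-one-step-full} with Corollary~\ref{cor:change-kl-energy-equiv-omwu}. This gives $\KL(w^\star, w_{t+1}) \le \KL(w^\star, w_t)$ for all $t \ge 1$. Lemma~\ref{lem:kl-init-upperbound} then gives
\[
\KL(w^\star, w_t) \le \tfrac{5}{4}\,\KL(w^\star, w_0) \quad \text{for all } t \ge 0.
\]

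\textbf{Step 2: pass to cross-entropy.} Write $\KL(w^\star, w_t) = \crossent(w^\star, w_t) + R(w^\star)$. Since $R(w^\star) \le 0$, the cross-entropy $-\sum_i w^\star(i)\log w_t(i)$ is controlled by the KL term plus the entropy $-R(w^\star)$. Using Step 1 and $-R(w^\star) \ge 0$, we get
\[
\crossent(w^\star, w_t) \le \tfrac{5}{4}\KL(w^\star, w_0) - R(w^\star) \le \tfrac{5}{4}\bigl(\KL(w^\star, w_0) - R(w^\star)\bigr) = \tfrac{5}{4}\Lambda \le 2\Lambda.
\]

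\textbf{Step 3: isolate one coordinate.} Fix the player (say $p$) and the coordinate $i$ attaining $w_{t,\min}$. Every term $-p^\star(j)\log p_t(j)$ is nonnegative, and the same holds for the $q$-terms. So the cross-entropy is at least $-p^\star(i)\log p_t(i) \ge \delta \log(1/w_{t,\min})$, using $p^\star(i) \ge \delta$. Hence $\delta\log(1/w_{t,\min}) \le 2\Lambda$, which gives $w_{t,\min} \ge \exp(-2\Lambda/\delta)$.

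The only delicate point is the constant in Step 1. The first step (from $w_0$ to $w_1$) has no optimistic correction and can increase KL, which is why the factor $5/4$ from Lemma~\ref{lem:kl-init-upperbound} enters. Absorbing that factor is exactly where the slack between $5/4$ and $2$ in the exponent is spent. For $t = 0$ the bound follows directly from $\crossent(w^\star, w_0) = \Lambda$.
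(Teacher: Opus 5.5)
Your proposal is correct and follows essentially the same route as the paper: it uses monotone decrease of $\KL(w^\star,w_t)$ for $t\ge 1$ plus the $\tfrac54$ factor on the first step, converts this to $\crossent(w^\star,w_t)\le\tfrac54\Lambda$ via $-R(w^\star)\ge 0$, and then keeps only the term at the minimizing coordinate, where $w^\star(i)\ge\delta$. The only cosmetic difference is that you cite Lemma~\ref{lem:kl-init-upperbound} for the first step, whereas the paper cites Lemma~\ref{lem:initial-change-energy}, from which that bound is derived.
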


We note that the statement and proof of the lemma 
follows similarly to that of Lemma 19 in~\cite{wei2021linear},
but generalized to hold under arbitrary initializations.

\paragraph{Proof of Lemma~\ref{lem:wtmin-bound-uniform}}
First, we have by
Corollary~\ref{cor:change-kl-energy-equiv-omwu},
Lemma~\ref{lem:energy-one-step-full},
and Lemma~\ref{lem:initial-change-energy}
that, for all $t \ge 1$:
\begin{equation}
    \KL(w^\star, w_t) 
    \le 
    \KL(w^\star, w_{t-1})
    \le 
    \dots 
    \le 
    \KL(w^\star, w_1)
    \le
    \big(\tfrac{5}{4}\big)\cdot
    \KL(w^\star, w_0) \;.
    \label{eq:kl-decrease}
\end{equation}
Now for $w = (p, q)$, let $\ent(w) = \ent_m(p) + \ent_n(q)$
be the sum of component-wise entropies.
Moreover, recall that $R(w) = - \ent(w)$,
and let $\crossent(w^\star, w)$ denote the cross
entropy between $w^\star$ and $w$ defined by
\begin{equation*}
    \crossent(w^\star, w) 
    = 
    \KL(w^\star, w) - R(w^\star) \;.
\end{equation*}
Then using~\eqref{eq:kl-decrease}, we can write
\begin{align}
\crossent(w^\star, w_t) 
    &= 
    \KL(w^\star, w_t) - R(w^\star) \nonumber \\
    &\le  
    \big(\tfrac{5}{4}\big) \cdot \KL(w^\star, w_0) - R(w^\star) \\
    &\le
    \big(\tfrac{5}{4}\big) \cdot \big(\KL(w^\star, w_0) - R(w^\star)\big)
    = 
    \big(\tfrac{5}{4}\big) \cdot \crossent(w^\star, w_0) \;.
    \label{eq:ckl-01}
\end{align}
Here, the second inequality comes from the fact
that $-R(w^\star) \ge 0$. 
Now, let $i_{\min} \in \argmin_{i \in [m]} p_t(i)$, 
and $j_{\min} \in \argmin_{j \in [n]} q_t(j)$
denote the indices of $p_t$ and $q_t$ with
smallest mass, respectively.
Thus $p_{t, \min} = p_t(i_{\min})$ and $q_{t, \min} = q_{t}(j_{\min})$.
Then by definition of $\crossent(w^\star, w_t)$, we further have
\begin{align}
    \crossent(w^\star, w_t)
    &= 
    \sum_{i=1}^m p^\star(i)\cdot \log\Big(\frac{1}{p_t(i)}\Big)
    + 
    \sum_{j=1}^n q^\star(j) \cdot \log\Big(\frac{1}{q_t(j)}\Big) \nonumber \\
    &\ge 
    p^\star(i_{\min}) \cdot \log \bigg(\frac{1}{p_{t, \min}}\bigg)
    + 
    q^\star(j_{\min}) \cdot \log \bigg(\frac{1}{q_{t, \min}}\bigg) \;.
    \label{eq:ckl-02}
\end{align}
Observe that both terms in~\eqref{eq:ckl-02} are positive. 
Now let $\Lambda := \crossent(w^\star, w_0)$ 
and $c = \tfrac{5}{4}$.
Then combining~\eqref{eq:ckl-01} and~\eqref{eq:ckl-02}, 
we have
\begin{equation*}
    p^\star(i_{\min}) \log \left(\frac{1}{p_{t, \min}}\right) 
    \le 
    c \Lambda
    \implies 
    \log \left(\frac{1}{p_{t, \min}}\right) 
    \le \frac{c \Lambda}{p^\star(i_{\min})}
    \le 
    \frac{c \Lambda}{\delta_p}
    \le
    \frac{c \Lambda}{\delta}
     \;,
\end{equation*}
where the final two inequalities are due to 
$p^\star(i_{\min}) \ge \delta_p \ge \delta$.
Rearranging, we find 
\begin{equation}
    \textstyle
    p_{t, \min} \ge
    \exp\left(\frac{- c \Lambda}{\delta} \right)   \;.
\end{equation}
By an identical calculation, we also have
$q_{t, \min} \ge \exp(\frac{-c \Lambda}{\delta})$.
As $w_{t, \min} = \min\{p_{t, \min}, q_{t,\min}\}$,
it follows that 
$w_{t, \min} \ge \exp(\frac{- c \Lambda}{\delta})$,
which holds for all $t \ge 1$.
Noting the crude bound of $c=\tfrac{5}{4} \le 2$
completes the proof.
~\hfill $\square$

\smallskip

We now return to the main proof of 
Theorem~\ref{thm:kl-last-unified}.
Picking up from expression~\eqref{eq:kllast-p1},
we can apply the bound of Lemma~\ref{lem:wtmin-bound-uniform}
to write for all $t \ge 0$ that
\begin{align*}
    \KL(w^\star, w_{t+1})
    &\le 
    \KL(w^\star, w_{t})
    \cdot \big(
    1
    - \tfrac{1}{20} \eta^2 \cdot \sigma^2_{\min} w^2_{t, \min}  
    \big) \\
    &\le 
    \KL(w^\star, w_{t})
    \cdot \big(
    1
    - \tfrac{1}{20} \eta^2 \cdot \sigma^2_{\min} \exp(-2\Lambda/\delta)
    \big)  \\
    &\le 
    \KL(w^\star, w_{t})
    \cdot \exp\big(
    - \tfrac{1}{20} \eta^2 \cdot \sigma^2_{\min} \exp(-2\Lambda/\delta)
    \big) \;,
\end{align*}
where the final inequality is due to $1-u \le \exp(-u)$ for all $u \in \R$.
Inductively, for $t \ge 1$ we have:
\begin{equation}
    \KL(w^\star, w_{t+1})
    \le
    \KL(w^\star, w_{1})
    \cdot \exp\big(
    - \tfrac{1}{20} \eta^2 \cdot \sigma^2_{\min} \exp(-2\Lambda/\delta) \cdot t
    \big) 
    \;.
    \label{eq:kl-rate-last}
\end{equation}
Finally, recall from Lemma~\ref{lem:initial-change-energy} that 
$\KL(w^\star, w_1) \le 
\big(\tfrac{5}{4}\big) \cdot \KL(w^\star, w_0)$.
Substituting this bound into~\eqref{eq:kl-rate-last}
and using the crude bound $\tfrac{5}{4} \le 2$
then establishes for all $t+1 \ge 1$ that
\begin{equation*}
    \KL(w^\star, w_{t+1})
    \le 
    2 \KL(w^\star, w_0)
    \cdot 
      \exp\big(
    - \tfrac{1}{20} \eta^2 \cdot \sigma^2_{\min} \exp(-2\Lambda/\delta) \cdot t
    \big)  \;.
\end{equation*}
This yields statement (2) of the theorem 
and concludes the proof.
\end{proof}

\smallskip

\paragraph{Improved dependence
    on local state in Part (1) of Theorem~\ref{thm:kl-last-unified}.}
We note that, as a corollary of 
the ``primal'' variant proof of non-uniform
skew-gradient domination from 
Proposition~\ref{prop:gradient-dom-primal},
we can obtain a slightly improved bound 
on the multiplicative one-step change in $\KL$. 
In particular, assuming the setting of 
Theorem~\ref{thm:kl-last-unified} and 
comparing the skew-gradient domination bounds
of Proposition~\ref{prop:dissipation-kl-bound} 
(proven using primarily the ``dual" perspective)
and Proposition~\ref{prop:gradient-dom-primal}
(proven using a slightly more involved ``primal'' perspective),
we have:
\begin{equation*}
\begin{aligned}
\text{Proposition~\ref{prop:dissipation-kl-bound}}:
\quad
&\|J \nabla F(z)\|^2_{z} 
\ge \sigma^2_{\min} \cdot  w^2_{\min} \cdot \KL(w^\star, w) \\
\text{Proposition~\ref{prop:gradient-dom-primal}}:
\quad
&\|J \nabla F(z)\|^2_{z} 
\ge \sigma^2_{\min} \cdot p_{\min} \cdot q_{\min} \cdot \KL(w^\star, w)  \;.
\end{aligned}
\end{equation*}
Thus the bound of Proposition~\ref{prop:gradient-dom-primal}
can be larger than
that of Proposition~\ref{prop:dissipation-kl-bound}
when only one of $p_{\min}, q_{\min}$ is small.
Repeating an identical argument as 
in the first three steps of the 
proof of Theorem~\ref{thm:kl-last-unified}, 
we then have the following corollary:

\smallskip

\begin{restatable}[Improved one-step multiplicative
decrease in KL]
{cor}{corimprovedklonestep}
\label{cor:kl-last-improved-one-step}
Assume the setting of Theorem~\ref{thm:kl-last-unified}.
Then for every $t \ge 0$, it holds that
$
\KL(w^\star, w_{t+1}) 
\le 
\KL(w^\star, w_t)
\cdot 
\exp\big(- \tfrac{1}{20} \eta^2 \sigma^2_{\min}
\cdot p_{t,\min} \cdot  q_{t, \min}
\big) 
$.
\end{restatable}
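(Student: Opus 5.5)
The plan is to repeat the first three steps of the proof of Theorem~\ref{thm:kl-last-unified} verbatim, and to replace only the skew-gradient domination inequality of Proposition~\ref{prop:dissipation-kl-bound} by its primal refinement in Proposition~\ref{prop:gradient-dom-primal}.

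First, Corollary~\ref{cor:change-kl-energy-equiv-omwu} gives the identity $\KL(w^\star, w_{t+1}) - \KL(w^\star, w_t) = F(z_{t+1}) - F(z_t)$. Second, for $t \ge 1$ the upper bound of Lemma~\ref{lem:energy-one-step-full} gives $F(z_{t+1}) - F(z_t) \le -\tfrac{1}{20}\eta^2 \|J \nabla F(z_t)\|^2_{z_t}$.

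Third, apply Proposition~\ref{prop:gradient-dom-primal} at $z = z_t$, where $w_t = \nabla F(z_t)$. Its final statement requires only $z \in \R^{m+n}$, so membership in $\calZ$ is not even needed. It yields $\|J\nabla F(z_t)\|^2_{z_t} \ge \sigma^2_{\min}\, p_{t,\min}\, q_{t,\min}\, \KL(w^\star, w_t)$.

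Combining the three displays gives
\[
\KL(w^\star, w_{t+1}) \le \KL(w^\star, w_t)\big(1 - \tfrac{1}{20}\eta^2\sigma^2_{\min} p_{t,\min} q_{t,\min}\big).
\]
The inequality $1-u \le \exp(-u)$ then finishes the argument for $t \ge 1$. Note that the bracket is nonnegative because $\eta\sigma_{\min} \le \eta\sigma_{\max} < 1$ and $p_{t,\min}, q_{t,\min} \le 1$. Even without this, the right-hand side exponential form still upper bounds it, since $\KL \ge 0$.

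The only real obstacle is the index range. The corollary claims $t \ge 0$, but Lemma~\ref{lem:energy-one-step-full} is stated only for $t \ge 1$, and at $t=0$ the first step has no optimistic correction: by Lemma~\ref{lem:initial-change-energy}, energy can actually increase at $t=0$. I would therefore state and prove the bound for all $t \ge 1$, matching the range of Theorem~\ref{thm:kl-last-unified}. I would then remark that the $t=0$ case must instead be handled by Lemma~\ref{lem:kl-init-upperbound}, i.e.\ $\KL(w^\star,w_1)\le \tfrac54\KL(w^\star,w_0)$, with the convention that the one-step contraction is asserted from $t \ge 1$ onward.
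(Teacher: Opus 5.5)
Your proposal is correct and is exactly the paper's argument: rerun the first three steps of the proof of Theorem~\ref{thm:kl-last-unified} with Proposition~\ref{prop:gradient-dom-primal} in place of Proposition~\ref{prop:dissipation-kl-bound}, then apply $1-u\le e^{-u}$. Your caveat on the index range is also right: the argument, like Lemma~\ref{lem:energy-one-step-full}, only covers $t\ge 1$, and at $t=0$ one has $F(z_1)-F(z_0)=D_F(z_1,z_0)\ge 0$, so the stated range $t\ge 0$ should read $t\ge 1$.
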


\subsection{Comparisons with Analysis of Wei et al. (2021)}
\label{app:wei-comparisons}

In this section, we give a brief comparison of
our result with that  of \cite{wei2021linear}.

\paragraph{Overview of Rate in Wei et al.}
Theorem 3 of~\cite{wei2021linear} proves
the following result: 
for a zero-sum game $A \in [-1, 1]^{m \times n}$
with a unique Nash equilibrium $w^\star \in \calW$,
it holds for the iterates $\{w_t\}$ of OMWU 
with stepsize $\eta \le \tfrac{1}{8}$ and
initialized from the uniform distribution that
\begin{equation}
    \KL(w^\star, w_t)
    \le 
    C_3 (1+C_4)^{-t}
    \;\;\text{for all $t \ge 1$} \;,
    \label{eq:wei21-01}
\end{equation}
where $C_3, C_4 > 0$ are 
constants depending on the matrix $A$.
Here, note that the constants $C_3, C_4$ 
depend on the minimum non-zero mass
$\delta$ in the Nash equilibrium $w^\star$ of $A$,
and one must explicitly track their dependence on
$\delta$ throughout the proof to make a comparison.

Note also that Theorem 3 of Wei et al. does not necessarily
assume that $w^\star \in \relint(\calW)$ is interior,
as we require in Theorem~\ref{thm:kl-last-unified}.
Thus, while their proof technique involves a two-phase analysis,
under the assumption that the $w^\star$ is interior, 
it suffices to apply only the second phase of their analysis
(using the second case of their Lemma 2).
For this, tracking the explicit dependence on $\delta$
in the constants $C_3$ and $C_4$ 
(and, for simplicity, hiding other absolute constant 
and instance-specific dependencies) yields that both
\begin{equation}
    C_3 = \exp\big(\tfrac{\log{mn}}{\delta}\big)
    \;\;\text{and}\;\;
    C_4 = \eta^2 \cdot \exp\big(\tfrac{- 3\log{mn}}{\delta}\big) \;.
    \label{eq:wei21-02}
\end{equation}
Here, we note that the $\log(mn)$ term is exactly
the cross-entropy $\Lambda = \crossent(w^\star, w_0)$ 
under the uniform initialization $p_0 = \tfrac{\1_m}{m}$
and $q_0 = \tfrac{\1_n}{n}$.
Combining~\eqref{eq:wei21-01} and~\eqref{eq:wei21-02} 
then yields for Wei et al. a last-iterate convergence rate of
\begin{equation}
    \KL(w^\star, w_t)
    \le
    \exp\big(\tfrac{\Lambda}{\delta}\big)
    \cdot 
    \big(1 + \eta^2 \exp(\tfrac{-3\Lambda}{\delta})\big)^{-t}
    \;.
    \label{eq:wei21-03}
\end{equation}

\paragraph{Comparison with rate in Theorem~\ref{thm:kl-last-unified}.}
Comparing~\eqref{eq:wei21-03} to Part (2) of Theorem~\ref{thm:kl-last-unified}, 
we note the following: in the geometrically decaying term,
both bounds depend on $\exp(-\Lambda/\delta)$.
This stems from the \textit{uniform} lower bound on
the minimum mass $w_{t, \min}$ of the iterates over
time, a bound we establish in Lemma~\ref{lem:wtmin-bound-uniform},
and which is similarly established by Wei et al.
in their Lemma 19
(note also this uniform bound on $w_{t, \min}$ is tight in general;
see the proof our lower bound of Theorem~\ref{thm:kl-last-lower}
in Section~\ref{app:kl-last-lower}).
On the other hand, 
the rate of Wei et al. from~\eqref{eq:wei21-03} also 
contains the leading multiplicative factor of $\exp(\Lambda/\delta)$,
a term that grows unbounded as $\delta \to 0$.
In comparison, the leading multiplicative factor
in Part (2) of our Theorem~\ref{thm:kl-last-unified}
is the initial $\KL(w^\star, w_0)$.
Thus, compared to Wei et al., our new rate in 
Theorem~\ref{thm:kl-last-unified} is sharper by least this
leading term of $\exp(\Lambda/\delta)$.
Moreover, we prove in Theorem~\ref{thm:kl-last-lower}
a linear last-iterate convergence rate lower bound 
showing that the dependence on $\delta$
in Theorem~\ref{thm:kl-last-unified} is optimal.

\paragraph{Comparison of proof techniques.}
Our technique for proving Theorem~\ref{thm:kl-last-unified}
is novel and distinct from the techniques
of Wei et al. Note that the analysis of Wei et al. 
does not directly establish a bound on the one-step change 
$\KL(w^\star, w_{t+1}) - \KL(w^\star, w_t)$
over the iterates of the algorithm
(in contrast to our proof, which does).
Instead, the proof of Wei et al. establishes 
a decreasing one step change of the potential function
\begin{equation*}
    \Theta_t = \KL(w^\star, \widehat w_t) + \KL(\widehat w_t, w_{t-1}) \;,
\end{equation*}
where $\{\widehat w_t\}$ is the sequence of  half-step iterates 
when defining OMWU via 
its Optimistic Mirror Descent formulation
(for the details of this formulation, 
    see, e.g.,~\cite{wei2021linear}, expressions (1) and (2),
    Section 3.1.1 of~\cite{syrgkanis2015fast},
    or Section 2 of~\cite{FGKLLZ24}).
Establishing a linear convergence rate 
for the sequence $\{\Theta_t\}$ then translates
into the linear convergence rate for $\KL(w^\star, w_t)$.

In contrast, recall that our proof of Theorem~\ref{thm:kl-last-unified}
is based on the dual perspective of \textit{energy dissipation}
(via the one-step bounds of Lemma~\ref{lem:energy-one-step-full}),
which we then relate back to the primal space via
the skew-gradient domination inequality of 
Proposition~\ref{prop:dissipation-kl-bound}. 
While the calculations required to establish the tight energy dissipation
bounds of Lemma~\ref{lem:energy-one-step-full}
are still somewhat involved, the approach is conceptually simple,
as outlined in the proof in Section~\ref{app:energy-dissipation:proof}.
Together, our analysis also gives a more precise and quantitative
understanding of the geometric bottlenecks that cause 
the energy dissipation (and thus primal convergence in KL)
to be slow over the iterates.

\paragraph{On the assumption of an interior Nash equilibrium.}
While our analysis does assume both a unique and interior Nash
(while the results of~\cite{DaskalakisP19} and \cite{wei2021linear}
only assume a unique Nash equilibrium), 
we believe our energy-based analysis can likely be adapted
to also hold only under the uniqueness assumption.
We leave this as a direction for future work. 

\subsection{Numerical Simulations}
\label{app:kl-last:numerical}

In this section, we present plots of several numerical simulations
that higlight the dependence on $\delta$ 
in the linear rate of Theorem~\ref{thm:kl-last-unified}.

\paragraph{Setup.}
We run OMWU on two classes of payoff matrices: 
the $2\times 2$ canonical matrix $A_{\delta_p, \delta_q}$
from Definition~\ref{def:A-2x2}, and a diagonal 
payoff matrix $A \in [-1, 1]^{10 \times 10}$. 
\begin{itemize}[
    leftmargin=1em
]
\item
    \textbf{On canonical 2x2 matrix}: 
    We instantiate $A_{\delta_p, \delta_q} \in [-1, -1]^{2\times 2}$
    with $\delta = \delta_p = \delta_q$,
    for a range of $\delta$ between 0 and 0.5. 
    In this instantiation, $w^\star = (p^\star, q^\star)$
    is given by $p^\star = q^\star = (1-\delta, \delta)$.
    Thus, smaller $\delta$ corresponds to 
    a Nash equilibrium closer to a vertex of $\calW$.

\item
    \textbf{On diagonal 10x10 matrix:}
    We instantiate a diagonal $A \in [-1, 1]^{10 \times 10}$
    as follows:
    we construct a vector $v_{\delta} \in (0, 1)^{10}$
    where the first 5 coordinates 
    are $\delta/5$, and the last 5 coordinates are $(1-\delta)/5$. 
    Then, we set $A= \Diag(d_\delta)$, where
    $d_{\delta} \propto 1/v_{\delta}$ (normalized so that 
    the maximum entry of the vector $d_{\delta}$ is 1).
    It follows that the unique and interior NE
    $w^\star = (p^\star, q^\star)$ of $A$ is given by 
    $p^\star = q^\star = v_{\delta}$.
    Thus for smaller values of $\delta$,
    the NE components $p^\star$ and $q^\star$ are roughly uniform among 
    the latter 5 coordinates, and close to boundary face of the simplex 
    corresponding to the support of these coordinates. 
\end{itemize}

\paragraph{Figure~\ref{fig:sim-compare-broad}: 
Dependence of $\delta$ in linear rate.}
In Figure~\ref{fig:sim-compare-broad}, 
we run OMWU on instantiations of the 2x2 and 10x10 diagonal
matrices previously described with several values of $\delta$.
In each instance, we set $\eta = 0.2$, and we initialize the
iterates at the uniform distributions.
We plot the value of $\KL(w^\star, w_t)$ over the iterates
in log scale. 
In both cases, the figure highlights the dependence on
$\delta$ seen in the last-iterate convergence rate of 
Part (2) of
Theorem~\ref{thm:kl-last-unified}.
For smaller $\delta$, the slope of the linear convergence
in the figure is much flatter, as suggested by the analytic bound.

\captionsetup[figure]{labelfont=small}
\begin{figure}[htb!]
    \centering
    \includegraphics[width=0.8\textwidth]{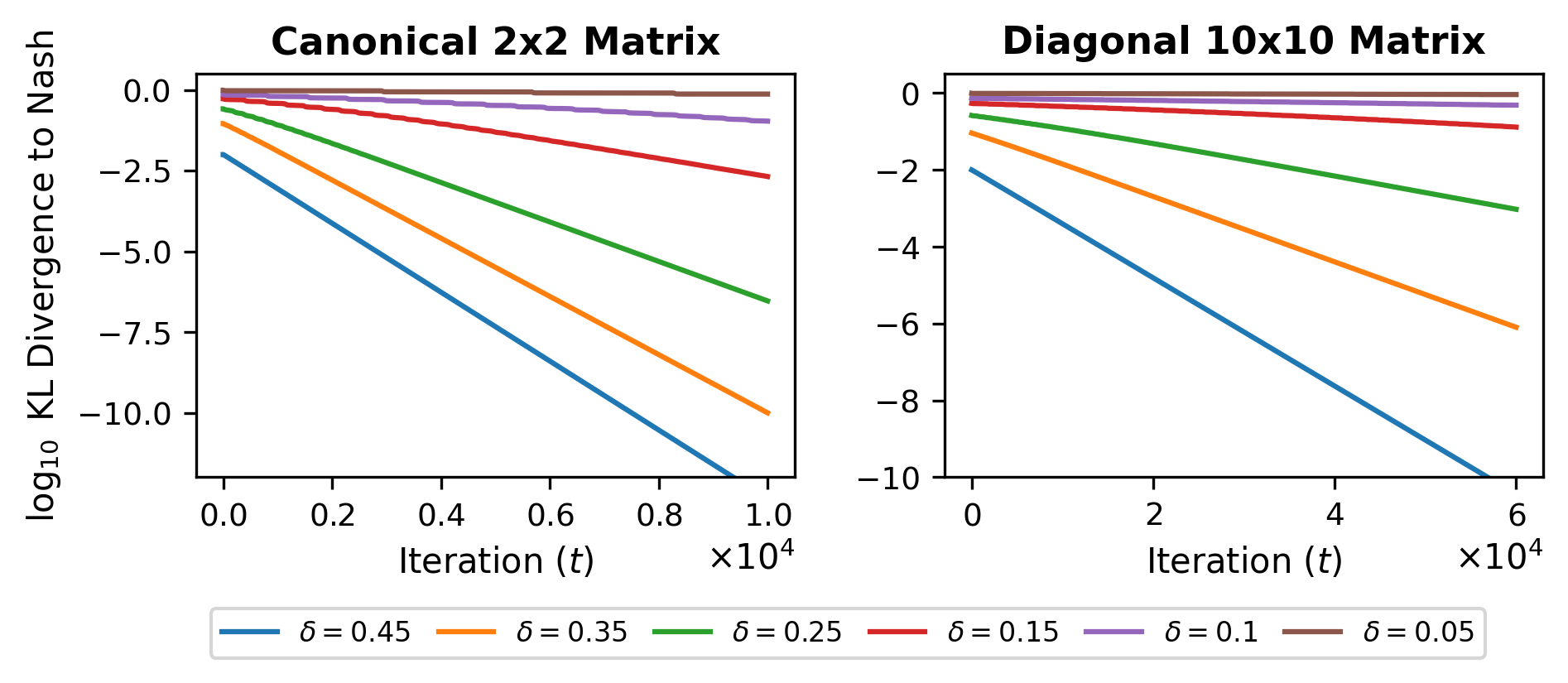}
    \vspace*{-0.5em}
    \caption{%
        \small
        Plot of $\KL(w^\star, w_t)$ of OMWU in log-scale over time
        on instantiations of 2x2 and 10x10 
        payoff matrices 
        with varying values of $\delta$.
        The minimum NE coordinate is
        $\delta$ for the 2x2 instances and 
        $\delta/5$ for the 10x10 instances.
    }    
    \label{fig:sim-compare-broad}
\end{figure}

\paragraph{Figure~\ref{fig:sim-compare-small}: 
Non-uniform one-step change.}
In Figure~\ref{fig:sim-compare-small}, 
we run OMWU on instantiations of the 2x2 and 10x10 diagonal
matrices previously described with varying values of small $\delta$.
In each instance, we set $\eta = 0.2$, and we initialize the
iterates at the uniform distributions.
We plot the value of $\KL(w^\star, w_t)$ over the iterates
in log scale. 
In both cases, the figure highlights the fact that 
the one-step change in $\KL(w^\star, w_t)$ can
decay in a highly non-uniform manner, especially 
when $\delta$ is small, and when
$\KL(w^\star, w_t)$ is still relatively large.
This property is exactly indicated by the non-uniform dependence
on $w_{t, \min}$ in Part (1) of Theorem~\ref{thm:kl-last-unified}.
In this regime, the iterates will initially spend longer periods of
time close to the simplex boundary 
(where $w_{t, \min}$ is smaller), which leads to
smaller magnitudes of dual energy dissipation,
and consequently a slower overall last-iterate convergence to NE.

\captionsetup[figure]{labelfont=small}
\begin{figure}[htb!]
    \centering
    \includegraphics[width=0.8\textwidth]{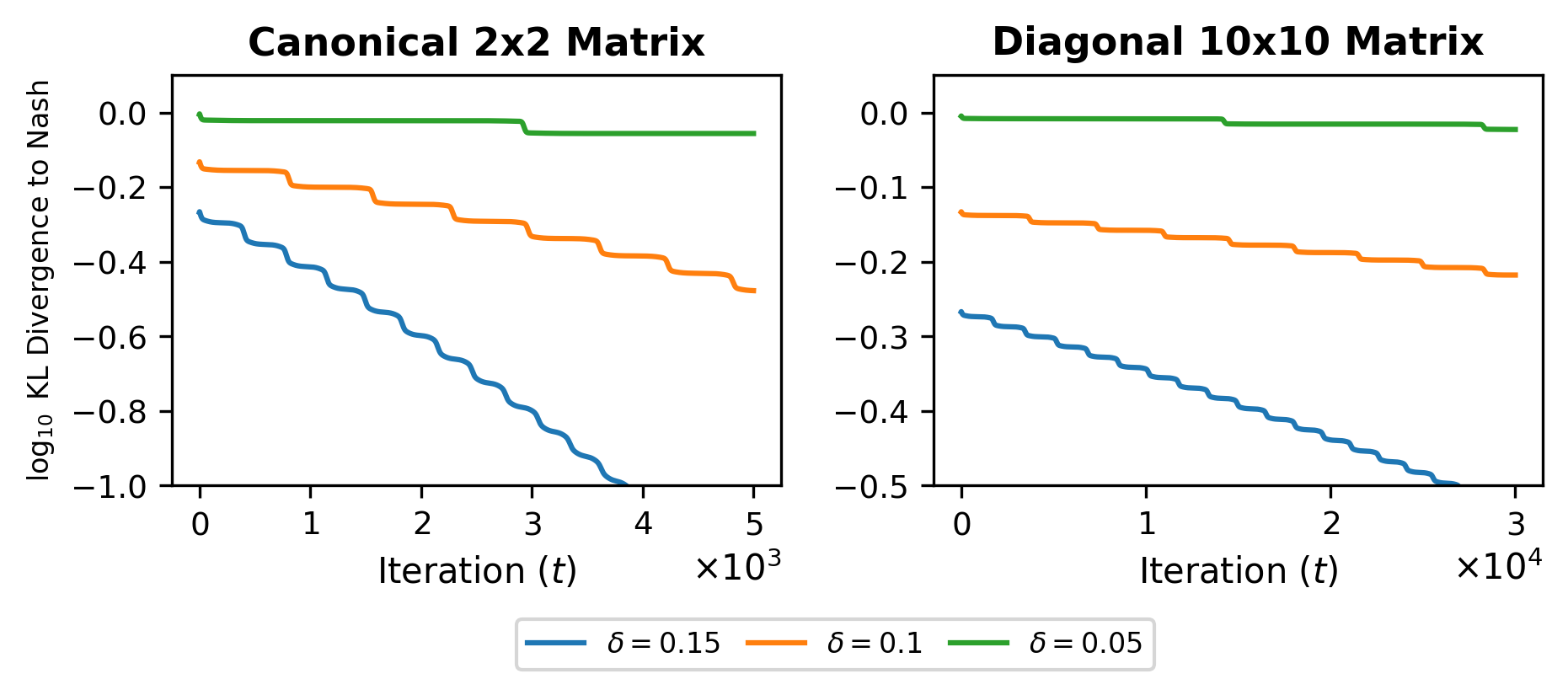}
    \vspace*{-0.5em}
    \caption{%
        \small
        Plot of $\KL(w^\star, w_t)$ of OMWU in log-scale over time
        on instantiations of  2x2 and 10x10 
        payoff matrices with varying values of $\delta$.
        The minimum NE coordinate is
        $\delta$ for the 2x2 instances and 
        $\delta/5$ for the 10x10 instances.
        Observe that the one-step change in $\KL$
        decays non-uniformly over time, 
        as suggested by Part (1) of Theorem~\ref{thm:kl-last-unified}.
    }    
    \label{fig:sim-compare-small}
\end{figure}


\section{Details on OMWU Dynamics in 2x2 Setting}
\label{app:omwu-2x2}

This section gives additional preliminaries and details on OMWU 
in the $2 \times 2$ setting. These preliminaries are used in the proofs
of the lower bounds of Theorem~\ref{thm:kl-last-lower} and
Theorem~\ref{thm:uniform-lb-main}, and the upper bound of 
Theorem~\ref{thm:dg-best-2x2}. In particular, we establish
properties of the following canonical matrix:

\begin{restatable}{define}{defAmatrix}
	\label{def:A-2x2}
	Let $\delta_p, \delta_q \in (0, 1)$.
	Then $A_{\delta_p, \delta_q} \in [-1, 1]^{2\times 2}$ 
	is the payoff matrix given by
	\begin{equation*}
		A_{\delta_p, \delta_q} = 
		\begin{pmatrix}
			\delta_p \delta_q & -\delta_p (1-\delta_q) \\
			-(1-\delta_p)\delta_q & (1-\delta_p)(1-\delta_q)
		\end{pmatrix} \;.
	\end{equation*}
\end{restatable}

\paragraph{Significance of Canonical Matrix.}
For the $2\times 2$ setting, the matrix $A_{\delta_p, \delta_q}$
serves as a \textit{canonical} payoff matrix. 
In particular, the unique and interior Nash equilbrium $w^\star = (p^\star, q^\star)$
of  $A_{\delta_p, \delta_q}$ is given by
$p^\star = (1-\delta_p, \delta_p)$ and $q^\star = (1-\delta_q, \delta_q)$,
and every other $A \in \R^{2\times 2}$
with $w^\star$ as an interior NE  can be written as an 
affine transformation  of $A_{\delta_p, \delta_q}$. 
We state and prove these structural properties in
Proposition~\ref{prop:A-properties} in the following subsection.
Note also that running OMWU on the matrix $A_{\delta_p, \delta_q}$ 
with $\delta_p = \delta \in (0, 1/2)$ and $\delta_q = 1/2$
captures the behavior identified in the
hard $2\times 2$ instance of~\cite{FGKLLZ24}.

\paragraph{Organization of Section.}
The remainder of this section is organized as follows:
\begin{itemize}[
    leftmargin=1em
]
\item
\textbf{Sections~\ref{app:2x2-setup:primal-dual},
\ref{app:2x2-setup:energy-distances},
and~\ref{app:2x2-setup:monotonicity}} give 
details on the OMWU primal and dual iterates 
when run on $A_{\delta_p, \delta_q}$.
These preliminaries are used 
in the lower bound proofs of Theorem~\ref{thm:kl-last-lower}
and Theorem~\ref{thm:uniform-lb-main},
which are based on instantiations of $A_{\delta_p, \delta_q}$
(see Sections~\ref{app:kl-last-lower} and~\ref{app:lowerbound-details},
respectively),
and the upper bound proof of Theorem~\ref{thm:dg-best-2x2},
which relies on the affine transformation of $A_{\delta_p, \delta_q}$
(see Section~\ref{app:dg-upper}).

\item
\textbf{Section~\ref{app:2x2-setup:energy-distances}}
gives simplified expressions for duality gap, 
TV distance, and KL divergence in the $2 \times 2$ setting,
as well as a visual  comparison of their levelsets,
in Section~\ref{sec:2x2-levelsets-compare}.

\end{itemize}

\paragraph{Simplifications on notation.}
Throughout, we will drop the subscript and write
$\1 = (1, 1) \in \R^2$ to denote the two-dimensional
all-ones vector, and $\softmax = \softmax_2$ to denote the
two-dimensional softmax map. 

\subsection{Structural Properties of Canonical Payoff Matrix}

We prove the following properties of the 
matrix $A_{\delta_p, \delta_q}$ from Definition~\ref{def:A-2x2}:

\medskip

\begin{restatable}{prop}{propAproperties}
	\label{prop:A-properties}
	Fix $\delta_p, \delta_q \in (0, 1)$.
	Let $A_{\delta_p, \delta_q} \in [-1, 1]^{2\times 2}$
	be the matrix from Definition~\ref{def:A-2x2}.
	Let $w^\star = (p^\star, q^\star)$
	with $p^\star = (1-\delta_p, \delta_p)$
	and $q^\star = (1-\delta_q, \delta_q)$.
	Then the following properties hold:
	\begin{enumerate}[
		label={(\roman*)},
		itemsep=0em,
		topsep=0em,
		leftmargin=2.5em,
	]
	\item
	The unique and interior NE of
	$A_{\delta_p, \delta_q}$ is $w^\star$,
	and $\langle p^\star, A_{\delta_p, \delta_q} q^\star \rangle = 0$.
	\item
	Fix $A \in [-1, 1]^{2 \times 2}$,
	and suppose that $w^\star$ is an interior NE of $A$.
	Let $v = \langle p^\star, A q^\star \rangle \in \R$,
	and let the entries of $A$ be given by
	\begin{equation*}
		A =
		 \begin{pmatrix}
			a & b \\
			c & d 
		\end{pmatrix} \;.
	\end{equation*}
	Define $\gamma =  \frac{|a- v|}{\delta_p \delta_q}$,
	and assume further that $A$ is not a constant matrix 
	of the form $A = u \1 \1^\top$
	for some $u \in \R$.
	Then $w^\star$ is the unique Nash 
	equilibrium of $A$, $0 < \gamma \le 4$, and 
	\begin{equation*}
		A = \gamma \cdot A_{\delta_p, \delta_q} + v \1 \1^\top \;.
	\end{equation*}
	\end{enumerate}
\end{restatable}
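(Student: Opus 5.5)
For part (i), I will verify directly that $A_{\delta_p,\delta_q} q^\star = 0$ and $A_{\delta_p,\delta_q}^\top p^\star = 0$. The first row gives $\delta_p\delta_q(1-\delta_q) - \delta_p(1-\delta_q)\delta_q = 0$, and the second row gives $-(1-\delta_p)\delta_q(1-\delta_q) + (1-\delta_p)(1-\delta_q)\delta_q = 0$. The columns of $A^\top p^\star$ vanish symmetrically. Part (ii) of Proposition~\ref{prop:interior-NE} then shows that $w^\star$ is a NE with value $\langle p^\star, A q^\star\rangle = 0$. For uniqueness, I plan to use Lemma~\ref{lem:unique-interior-NE-helper} in reverse. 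In the $2\times2$ case $\calS^\bot$ is spanned by $(1,-1,0,0)$ and $(0,0,1,-1)$. A direct computation gives $A_{\delta_p,\delta_q}(1,-1)^\top = (\delta_p, -(1-\delta_p))^\top$ and $A_{\delta_p,\delta_q}^\top(1,-1)^\top = (\delta_q, -(1-\delta_q))^\top$, and neither of these is a constant vector. Hence no nonzero $v\in\calS^\bot$ has $Jv\in\calS$.

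This yields uniqueness as follows. Suppose $w'$ were another NE. Any NE of a game with an interior NE satisfies $Aq' = c\1$ and $A^\top p' = c\1$; this holds by the standard interchangeability of equilibria. Then $v = w' - w^\star\in\calS^\bot$ would satisfy $Jv \in \calS$, which we have just ruled out. I expect this step to be the main obstacle, in the sense that boundary equilibria must be handled. In the $2\times 2$ setting, the non-constant-column computation above gives the cleanest route: each player's best-response payoff difference is a strictly monotone affine function of the opponent's mixing probability, and it vanishes exactly at $\delta$.

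For part (ii), I will write $A = v\1\1^\top + B$. By Proposition~\ref{prop:interior-NE}(i), $Aq^\star = v\1$ and $A^\top p^\star = v\1$, so $Bq^\star = 0$ and $B^\top p^\star = 0$. The matrix $B$ therefore has its row space orthogonal to $q^\star$ and its column space orthogonal to $p^\star$. This forces $B$ to be rank one, of the form $\beta\,(\delta_p, -(1-\delta_p))^\top(\delta_q, -(1-\delta_q))$, which is exactly $\beta A_{\delta_p,\delta_q}$. Comparing the $(1,1)$ entries gives $a - v = \beta\delta_p\delta_q$.

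It remains to show $\beta = \gamma$. Up to sign, $|\beta| = \gamma$. If $\beta<0$, I would note that $-A_{\delta_p,\delta_q}$ still has $w^\star$ as its NE, so I need care here. I plan to argue, or accept, that the sign convention is forced. Failing that, I would flip the statement to allow $\gamma = (a-v)/(\delta_p\delta_q)$ with absolute value. I will check this point against the paper's later usage. Since $A$ is non-constant, $\beta\ne0$, and uniqueness then follows from part (i) scaled.

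For the bound $\gamma\le 4$, I would use that the entries lie in $[-1,1]$. The difference $a - b$ equals $\beta\delta_p(\delta_q + 1 - \delta_q) = \beta\delta_p$. Similarly $a - c = \beta\delta_q$ and $d - b = \beta(1-\delta_q)$, all bounded by $2$ in absolute value. Combining these gives $|\beta|\max(\delta_p,1-\delta_p)\le 2$, hence $|\beta|\le 4$. I expect the sign issue to be the main thing to resolve.
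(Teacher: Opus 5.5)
Your algebra is sound and your suspicion about the sign is justified. However, the step you hope to close, showing that the coefficient equals $\gamma$ (``arguing or accepting that the sign convention is forced''), cannot be done, because part (ii) is false as stated whenever $a<v$.

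Take $A=-A_{\delta_p,\delta_q}$, for instance $\delta_p=\delta_q=\tfrac12$ and $A=\tfrac14\bigl(\begin{smallmatrix}-1&1\\1&-1\end{smallmatrix}\bigr)$. Then $Aq^\star=0$ and $A^\top p^\star=0$, so $w^\star$ is an interior NE of this non-constant $A\in[-1,1]^{2\times2}$. Here $v=0$, $a=-\delta_p\delta_q$ and $\gamma=1$, so the claimed identity would force $A=A_{\delta_p,\delta_q}$, which is false.

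What your rank-one argument actually establishes, and what is true, is $A=\beta A_{\delta_p,\delta_q}+v\1\1^\top$ with the signed coefficient $\beta=(a-v)/(\delta_p\delta_q)=(a-b)-(c-d)$. Hence $0<|\beta|=\gamma\le 4$. Uniqueness of $w^\star$ also holds, since your $\calS^\bot$ criterion is unaffected by the sign or scale of $A_{\delta_p,\delta_q}$ or by adding $v\1\1^\top$. The stated form holds exactly when $a>v$. In general one must write $A=\pm\gamma A_{\delta_p,\delta_q}+v\1\1^\top$ with the sign of $a-v$; that is your fallback, not an argument that the sign is forced.

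The paper's own proof has the same slip. In its $\phi<0$ branch it asserts $v\1\1^\top-A=\phi A_{\delta_p,\delta_q}$. Its own constraints give $A-v\1\1^\top=\phi A_{\delta_p,\delta_q}$ for either sign of $\phi$, so when $\phi<0$ one gets $A=-\gamma A_{\delta_p,\delta_q}+v\1\1^\top$.

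For the downstream use in Proposition~\ref{prop:2x2-suffices}, the negative case can be repaired by swapping the players' roles. OMWU on $-A_{\delta_p,\delta_q}$ coincides with OMWU on $A_{\delta_p,\delta_q}^\top=A_{\delta_q,\delta_p}$ with $q$ as the minimizer and $p$ as the maximizer, and the duality gap is unchanged.

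Apart from this, your route is a legitimate alternative to the paper's.
\begin{itemize}
\item You read off the structure from the factorization $A_{\delta_p,\delta_q}=(\delta_p,-(1-\delta_p))^\top(\delta_q,-(1-\delta_q))$ and the two kernel conditions. The paper instead solves $(A-v\1\1^\top)q^\star=0$ and $(A-v\1\1^\top)^\top p^\star=0$ entrywise.
\item You get uniqueness from the $\calS^\bot$ criterion plus interchangeability of zero-sum equilibria. Interchangeability already covers boundary equilibria, so the step you flagged there is not an obstacle. The paper instead gets uniqueness from $\phi\neq 0$ via Proposition~\ref{prop:2x2-interior-unique}.
\end{itemize}
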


\smallskip

Before giving the proof of Proposition~\ref{prop:A-properties},
we first state and prove the following useful
facts about $2\times 2$ zero-sum games with
an interior equilibrium:

\smallskip

\begin{restatable}[Interior NE implies uniqueness in $2\times2$ games]
{prop}{propinteriortwounique}
	\label{prop:2x2-interior-unique}
	Fix $A \in \R^{2\times 2}$ with entries
	\begin{equation*}
		A = 
		\begin{pmatrix}
			a & b \\ 
			c & d
		\end{pmatrix} \;.
	\end{equation*}
	Suppose $A$ has an interior NE
	$w^\star = (p^\star, q^\star)$
	where $p^\star = (1-\mu_p, \mu_p)$
	and $q^\star = (1-\mu_q, \mu_q)$ for $\mu_p, \mu_q \in (0,1)$.
	Let $v = \langle p^\star, A q^\star \rangle \in \R$,
	and define $\phi = a-b - (c-d)$. 
	Then the following properties hold:
	\begin{enumerate}[
		label={(\roman*)},
		itemsep=0em,
		topsep=0em,
		leftmargin=2.5em,
	]	
	\item
		$\phi = 0$ if and only if $A = u \1 \1^\top$ is a constant
		matrix for some $u \in \R$.
	\item
		If $\phi \neq 0$, then $w^\star$ is the unique Nash 
		equilibrium of $A$.
	\item 
		$\phi = \frac{a-v}{\mu_p \mu_q}$.
	\end{enumerate}
\end{restatable}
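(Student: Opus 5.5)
We will use the explicit indifference conditions from Proposition~\ref{prop:interior-NE}(i). Since $w^\star$ is an interior NE with value $v$, that result gives $Aq^\star = v\1$ and $A^\top p^\star = v\1$. Written out, these are the four equations
\[
a(1-\mu_q)+b\mu_q=v,\qquad c(1-\mu_q)+d\mu_q=v,
\]
\[
a(1-\mu_p)+c\mu_p=v,\qquad b(1-\mu_p)+d\mu_p=v.
\]
Subtracting the equations in each pair gives
\[
(a-c)(1-\mu_q)+(b-d)\mu_q=0 \quad\text{and}\quad (a-b)(1-\mu_p)+(c-d)\mu_p=0.
\]

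For (iii), the plan is to solve these relations linearly. From the first equation, $a-v=\mu_q(a-b)$. From the third, $a-v=\mu_p(a-c)$. Now write $\phi=(a-b)-(c-d)$. Combining this with the second pair-difference relation gives
\[
(a-b)-\mu_p\phi=0, \qquad\text{so}\qquad a-b=\mu_p\phi.
\]
Substituting into $a-v=\mu_q(a-b)$ yields $a-v=\mu_p\mu_q\phi$. Because $\mu_p,\mu_q\in(0,1)$, this gives $\phi=(a-v)/(\mu_p\mu_q)$.

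For (i), the reverse direction is immediate: a constant matrix has $\phi=0$. For the forward direction, suppose $\phi=0$. Then $a-b=\mu_p\phi=0$, so $a=b$, and $a-v=0$. By the symmetric computation, $a-c=\mu_q\phi=0$, so $a=c$. Since $\phi=0$ gives $a-b=c-d$, we also get $c=d$. Hence all four entries equal $a$, and $A=a\1\1^\top$.

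For (ii), we intend to show that when $\phi\neq0$ any Nash equilibrium must equal $w^\star$, which needs the most care. Take any NE $(p,q)$ and parametrize $p=(1-s,s)$, $q=(1-r,r)$. Then
\[
(Aq)(1)-(Aq)(2)=(a-c)-r\phi,
\]
which vanishes exactly at $r=(a-c)/\phi=\mu_q$, using $a-c=\mu_q\phi$. The same computation for $A^\top p$ gives a single root at $s=\mu_p$.

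Now suppose $q\neq q^\star$. Then the row player has a strict best response, a pure strategy $e_i$. We then check that the column player's best response to $e_i$ is inconsistent with $q$. By the sign structure, $(Aq)(1)-(Aq)(2)=\phi(\mu_q-r)$, and the column player's payoff difference against $p$ is $\phi(\mu_p-s)$ with the appropriate sign. For example, if $\phi>0$ and $r<\mu_q$, then $(Aq)(1)>(Aq)(2)$, so the minimizing row player sets $p=e_2$, i.e.\ $s=1$. Then $\mu_p-s<0$, and the maximizing column player picks the column that forces $r=0$ or $r=1$. One of these two choices contradicts $r<\mu_q$; the other must be ruled out, and working through the sign conventions carefully is the main obstacle. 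The remaining combinations of signs of $\phi$ and of $r-\mu_q$ follow the same pattern, and the case $p\neq p^\star$ is symmetric. Together these show $(p,q)=w^\star$.

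An alternative route to (ii), which we would use if the case analysis proves messy, is the exchangeability/convexity of NE sets in zero-sum games. If there were a second equilibrium, the segment joining it to $w^\star$ would consist of equilibria, so there would be interior equilibria arbitrarily close to $w^\star$ other than $w^\star$ itself. Any interior NE satisfies $Aq=\text{const}\cdot\1$, and when $\phi\neq0$ this forces $r=\mu_q$; likewise $s=\mu_p$. This contradicts the existence of a distinct nearby equilibrium.
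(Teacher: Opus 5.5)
Your proposal is correct. Parts (i) and (iii) follow the paper's argument: both solve the indifference equations from Proposition~\ref{prop:interior-NE}(i) to get $a-b=\mu_p\phi$ and $a-c=\mu_q\phi$. The paper then writes $a-v=\mu_p(a-c)=\mu_p\mu_q\phi$, the mirror image of your substitution. For (ii), the step you flag as the main obstacle is not one. In your example ($\phi>0$, $r<\mu_q$, hence $s=1$), the column player's payoff difference is $\phi(\mu_p-1)<0$. So column~2 is its \emph{unique} best response and $r=1>\mu_q$; there is no second option to rule out. In general, the row player's forced pure choice gives $s-\mu_p$ the sign of $\phi(\mu_q-r)$, and the column player's forced pure choice gives $\mu_q-r$ the sign of $\phi(\mu_p-s)$. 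Multiplying the latter by the sign of $\phi$ shows that $\phi(\mu_q-r)$ has the sign of $\mu_p-s$, which contradicts the former since $s\neq\mu_p$. So all four sign cases close at once. Once $r=\mu_q$ is forced, $s\neq\mu_p$ would give the column player a strict pure best response $r\in\{0,1\}$, which is impossible, so your ``symmetric'' case is even easier.

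Your fallback is exactly the paper's route. The indifference equations show that any \emph{interior} equilibrium equals $w^\star$, and a boundary equilibrium is excluded by looking at the segment joining it to $w^\star$. That route is shorter but rests on the convexity (interchangeability) of the equilibrium set of a zero-sum game. You name this explicitly; the paper uses it only implicitly. The paper cites Proposition~\ref{prop:interior-NE}(ii) at that step, but that result only certifies profiles whose payoff vectors are already constant, so it does not by itself handle the segment toward a boundary point. Your primary best-response argument needs no such structural fact. It treats an arbitrary equilibrium, boundary ones included, using only the two payoff-difference formulas $\phi(\mu_q-r)$ and $\phi(\mu_p-s)$ and the definition of equilibrium as mutual best responses. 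The cost is the short sign analysis above.
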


\begin{proof}
	As $w^\star = (p^\star, q^\star)$ is an interior NE of $A$,
	by Part (i) of Proposition~\ref{prop:interior-NE}, we must have
	$A q^\star = v \1$ and $A^\top p^\star = v\1$ for 
	a constant $v \in \R$. Using the definition of $A$ and 
	$p^\star = (1-\mu_p, \mu_p)$ and 
	$q^\star = (1-\mu_q, \mu_q)$, 
	this yields the constraints 
	\begin{equation}
		\begin{cases}
			a - (a-b) \cdot \mu_q  = v \\
			c - (c-d) \cdot \mu_q  = v
		\end{cases}
		\quad\text{and}\quad
		\begin{cases}
			a - (a-c) \cdot \mu_p   = v \\
			b - (b-d) \cdot \mu_p  = v \;.
		\end{cases} 
		\label{eq:2x2-00}
	\end{equation}
	Rearranging further yields the constraints 
	\begin{equation}
		\begin{cases}
			\mu_q \cdot (a-b - (c-d)) = a-c \\
			\mu_p \cdot (a-b - (c-d)) = a-b \;.
		\end{cases}
		\label{eq:2x2-01}
	\end{equation}
	For Part (i) of the proposition, recall that $\phi = a-b - (c-d)$.
	Now observe that if all $a=b=c=d$,
	meaning $A$ is a constant matrix, then trivially $\phi =0$.
	In the other direction, if
	$\phi = 0$, then~\eqref{eq:2x2-01} implies 
	$a=c$ and $a=b$.
	By definition of $\phi$, this further implies $a = d$.
	Thus if $\phi = 0$, then $A$ is a constant matrix.
	In this degenerate case, every
	$(p, q) \in \relint(\Delta_2 \times \Delta_2)$
	is an NE.	

	For Part (ii), suppose that $\phi \neq 0$. 
	Then~\eqref{eq:2x2-01} implies that
	$\mu_q = (a-c) / \phi$ and 
	$\mu_p = (a-b) / \phi$
	have unique solutions that 
	also uniquely define $p^\star = (1-\mu_p, \mu_p)$
	and $q^\star = (1-\mu_q, \mu_q)$.
	This also further precludes $A$ from simultaneously having 
	a non-interior Nash equlibrium: if such an equilibrium $w'$ exists,
	then every interior $\w w$ on the line segment between 
	$w'$ and $w^\star = (p^\star, q^\star)$ 
	would also be a Nash equilibrium (which follows
	by Part (ii) of Proposition~\ref{prop:interior-NE}).
	However, this contradicts the uniqueness of the 
	solution of~\eqref{eq:2x2-01}.
	Thus, in the case that $\phi \neq 0$, 
	an interior NE of $A$
	is the unique NE of $A$.

	For Part (iii), observe from~\eqref{eq:2x2-00} 
	that	$a-v = (a-c) \cdot \mu_p$. 
	Substituting the first equality of~\eqref{eq:2x2-01}
	further gives $a-v = \phi \mu_q \mu_p$. 
	Rearranging for $\phi$ yields the desired statement.
\end{proof}

\smallskip

We now proceed with the proof of Proposition~\ref{prop:A-properties}:

\paragraph{Proof of Proposition~\ref{prop:A-properties}.}
We prove the two parts of the proposition separately.

\noindent
\textbf{Proof of Part (i).} 
Using the definition of $A_{\delta_p, \delta_q}$ and
$p^\star = (1-\delta_p, \delta_p)$ and $q^\star =(1-\delta_q, \delta_q)$,
observe by a direct calculation that 
\begin{equation*}
	\textstyle
	A_{\delta_p, \delta_q} 
	\begin{pmatrix}
		1-\delta_q \\
		\delta_q
	\end{pmatrix} 
	= 
	\begin{pmatrix}
		\delta_p\delta_q (1-\delta_q)
		- \delta_p (1-\delta_q)  \delta_q \\
		-(1-\delta_p) \delta_q (1-\delta_q)
		+ 
		(1-\delta_p)(1-\delta_q) \delta_q
	\end{pmatrix} 
	= 
	\begin{pmatrix} 
		0 \\ 0
	\end{pmatrix} \;.
\end{equation*}
By a similar calculation, we find 
$A^\top_{\delta_p, \delta_q} p^\star = 0$.
Thus $A_{\delta_p, \delta_q} q^\star$ and 
$A^\top_{\delta_p, \delta_q} p^\star$
are both constant vectors. Thus by Part (ii) of
Proposition~\ref{prop:interior-NE} it follows
that $w^\star = (p^\star, q^\star)$
is an NE of $A_{\delta_p, \delta_q}$,
and that 
$\langle p^\star, A_{\delta_p, \delta_q} q^\star \rangle = 0$.
Now observe that since $\delta_p, \delta_q \in (0, 1)$, the 
equilbrium $w^\star$ is interior by definition of $p^\star$
and $q^\star$.
Moreover, by Definition~\ref{def:A-2x2}$, 
A_{\delta_p, \delta_q}$ is not a constant matrix. 
Thus, by Parts (i) and (ii) of Proposition~\ref{prop:2x2-interior-unique},
it follows that $w^\star$ is also the unique NE of $A_{\delta_p, \delta_q}$.

\smallskip

\noindent
\textbf{Proof of Part (ii).} 
Recalling the entries of $A \in \R^{2\times 2}$ from the
statement of the proposition, 
let $\phi = a-b - (c-d)$. 
Recall that we assume $A$ has an interior 
Nash equilibrium $w^\star = (p^\star, q^\star)$,
and that $A$ is not a constant matrix.
Thus, by Parts (i) and (ii) of Proposition~\ref{prop:2x2-interior-unique}, 
$\phi \neq 0$, and $w^\star$ is the unique Nash equilibium of $A$.
Now, let  $v = \langle p^\star, A q^\star \rangle$,
and suppose that $\phi > 0$.
In this case, define 
\begin{equation*}
\textstyle
A' = A - v\1 \1^\top \in \R^{2\times 2} \;.
\end{equation*}
We will show that $A' = \gamma \cdot A_{\delta_p, \delta_q}$,
for some $0 < \gamma \le 4$,
which will imply that $A = \gamma A_{\delta_p, \delta_q} + v \1\1^\top$.

For this, observe by Proposition~\ref{prop:interior-NE}
that since $w^\star = (p^\star, q^\star)$
is an interior NE of $A$ with $v = \langle p^\star, A q^\star \rangle$, 
then $Aq^\star = v\1$ and $A^\top p^\star = v \1$. 
Thus by definition of $A' = A - v\1\1^\top$,
it follows that both $A' q^\star = 0$ and $A'^\top p^\star = 0$.
Using the definition of the entries of $A'$ (which depend on
the entries of $A$) it is straightforward to show that this further implies 
\begin{equation}
	b - v =  \tfrac{- (a-v) \cdot (1-\delta_q)}{\delta_q},\;
	d - v = \tfrac{(a-v) \cdot(1-\delta_p)(1-\delta_q)}{\delta_p\delta_q},\;
	\text{and}\,
	c - v = \tfrac{-(a-v) \cdot (1-\delta_p)}{\delta_p} \;.
	\label{eq:aprime-01}
\end{equation}
By Part (iii) of Proposition~\ref{prop:2x2-interior-unique},
recall that $\phi =  (a-v)/(\delta_p \delta_q)$,
and recall that we assumed $\phi > 0$.
Then factoring out $\phi$ from each term
of~\eqref{eq:aprime-01}, we find exactly that
$A' = \phi \cdot A_{\delta_p, \delta_q}$.
By definition of $A' = A - v \1\1^\top$,
we conclude $A = \phi \cdot A_{\delta_p, \delta_q} + v \1\1^\top$.

In the case that $\phi < 0$, 
we instead define $A' = v \1 \1^\top - A$. 
Repeating an identical set of calculations
again gives $A' = \phi A_{\delta_p, \delta_q}$,
and thus $A = - \phi \cdot A _{\delta_p, \delta_q} + v \1 \1^\top $.
In either case, set $\gamma := |\phi|$.
By definition of $\phi = a-b- (c-d)$
and $A \in \R^{2\times 2}$, it follows
that $|\phi| \le 4$.
Thus we conclude
$A = \gamma A_{\delta_p, \delta_q} + v \1 \1^\top$
for $0 < \gamma \le 4$, which yields the statement of
Part (ii) of the proposition.
\;~ \hfill $\blacksquare$

\medskip

\begin{restatable}{rem}{remcanonicalmatrix}
	\label{rem:canonical-2x2-matrix}
	We note that the Part (ii) of
	Proposition~\ref{prop:A-properties}
	is similar to Lemma 5 of~\cite{C25}.
	The key difference is that the ``base'' matrix
	in their lemma is not $A_{\delta_p, \delta_q}$,
	but a different $2\times 2$ payoff matrix. 
	The key property of $A_{\delta_p, \delta_q}$ is
	that the value of the game is zero
	(i.e., $\langle p^\star, A_{\delta_p, \delta_q} q^\star \rangle = 0$).
	This allows for a 
	more convenient primal-dual relationship under the 
	OMWU iterates
	(c.f., Proposition~\ref{prop:calZ-grad-map-injectivity})
	that we introduce in the sequel.
\end{restatable}

Finally, we also show that $A_{\delta_p,\delta_q}$
is well-conditioned in the sense that
its minimum restricted singular value 
$\sigma_{\min}$ is always an absolute constant.

\begin{restatable}
[Minimum restricted singular value of Canonical Matrix]
{prop}{propcanonicalrsv}
	\label{prop:rsv-canonical-2x2}
	Fix any $\delta_p, \delta_q \in (0, 1)$.
	Let $A := A_{\delta_p, \delta_q} \in [-1, 1]^{2\times 2}$ 
	be the matrix from Definition~\ref{def:A-2x2}.
	Let $\sigma_{\min}$ be the minimum restricted singular value 
	from Definition~\ref{def:restricted-sv-J}.
	Then $\sigma_{min} = \frac{1}{2}$.
\end{restatable}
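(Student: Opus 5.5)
The plan is to compute $\sigma_{\min}$ through Proposition~\ref{prop:component-restricted-sv}, which gives $\sigma_{\min} = \min\{\sigma_{\min,m}, \sigma_{\min,n}\}$ with $m = n = 2$. In two dimensions $\1^\bot = \Span\big((1,-1)\big)$, so each of the infima in~\eqref{eq:component-rsv-A} runs over a single direction up to scaling, and no optimization is needed. This is the same one-line computation used in Proposition~\ref{prop:small-nash-large-rsv}, which is the special case $\delta_p = \delta_q = \delta$.

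For $\sigma_{\min,n}$, I take $v = c(1,-1)$. Using Definition~\ref{def:A-2x2} I compute
\[
Av = c\big(\delta_p\delta_q + \delta_p(1-\delta_q),\; -(1-\delta_p)\delta_q - (1-\delta_p)(1-\delta_q)\big) = c\,(\delta_p,\; \delta_p - 1).
\]
Projecting with $\Pi_{\1^\bot} = I - \tfrac12 \1\1^\top$ subtracts the mean $c(2\delta_p - 1)/2$ from each coordinate. This gives $\Pi_{\1^\bot}(Av) = c\,(\tfrac12, -\tfrac12)$, whose norm is $|c|/\sqrt2$. Since $\|v\|_2 = |c|\sqrt2$, the ratio is $\tfrac12$, and $\sigma_{\min,n} = \tfrac12$ for every $\delta_p$.

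For $\sigma_{\min,m}$, I take $u = c(1,-1)$. The same computation gives $A^\top u = c\,(\delta_q,\; \delta_q - 1)$. After projection this becomes $c\,(\tfrac12, -\tfrac12)$, so again the ratio is $\tfrac12$.

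Combining the two via Proposition~\ref{prop:component-restricted-sv} yields $\sigma_{\min} = \tfrac12$. I expect no real obstacle here; the only step needing care is checking the entries of $Av$ and $A^\top u$. Structurally, the result holds because $(1,-1)$ applied to $A_{\delta_p,\delta_q}$ yields $e_1 - (1-\delta_p)\1$ up to sign, so its centered part is always $\pm\tfrac12(1,-1)$, independent of $\delta_p$ and $\delta_q$.
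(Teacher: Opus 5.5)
Your proposal is correct and follows essentially the same route as the paper: reduce via Proposition~\ref{prop:component-restricted-sv} to the two component-wise quantities, parametrize $\1^\bot$ by $c(1,-1)$, and check that the centered image has norm exactly half of $\|v\|_2$. Your computations $Av = c(\delta_p,\delta_p-1)$ and $A^\top u = c(\delta_q,\delta_q-1)$ are right, and so is the observation that both project to $c(\tfrac12,-\tfrac12)$ regardless of $\delta_p,\delta_q$.
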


We note that the proof is the generalization of
and follows similarly to that of
Part (2) of Proposition~\ref{prop:small-nash-large-rsv}.
There, we established for the special case
$\delta = \delta_p =\delta_q$ that $\sigma_{\min} = \frac{1}{2}$.
Proposition~\ref{prop:rsv-canonical-2x2} extends
this result to hold for general $\delta_p, \delta_q \in (0, 1)$.

\begin{proof}
First, let $\sigma_{\min, n}$ and $\sigma_{\min, m}$
be the component-wise restricted singular
values from~\eqref{eq:component-rsv-A},
and recall from Proposition~\ref{prop:component-restricted-sv}
that $\sigma_{\min} = \min\{\sigma_{\min}(A, \1^\bot), 
\sigma_{\min}(A^\top, \1^\bot)\}$.
Using calculations similar to the proof of Part (2) of 
Proposition~\ref{prop:small-nash-large-rsv}, 
we will show that
$\sigma_{\min}(A, \1^\bot)= \sigma_{\min}(A^\top, \1^\bot)
= \frac{1}{2}$.
For this, recall by definition that
\begin{equation*}
	\textstyle
	\sigma_{\min}(A, \1^{\bot})
	= 
	\inf_{v \in \1^\bot \setminus \{0\}} \,
	\frac{\|\Pi_{\1^{\bot}}(A v)\|_2}{\|v\|_2}  \;.
\end{equation*}
Moreover, for any $v \in \1^\bot \subset \R^2$, 
it follows that $v = c \cdot (-1, 1)$ for some $c \in \R$.
Then using the definition of $A = A_{\delta_p, \delta_q}$,
and recalling that 
$\Pi_{\1^\bot} = I - \frac{1}{2} \1\1^\bot \in \R^{2 \times 2}$,
it follows by a direct calculation that
\begin{equation*}
	\Pi_{\1^\bot} (A v)
	= 
	c \cdot 
	\begin{pmatrix}
		\delta_p - (\delta_p - (1-\delta_p))/2 \\ 
		-(1-\delta_p) - (\delta_p - (1-\delta_p))/2
	\end{pmatrix}
	= 
	\frac{c}{2} \cdot \begin{pmatrix}1 \\ -1 \end{pmatrix} \;.
\end{equation*}
It follows that $\|\Pi_{\1^\bot} (A v)\|_2 = |c| \sqrt{2}/2$.
Moreover, we have by definition of 
$v = c \cdot (1, -1)$ that $\|v\|_2 = |c| \sqrt{2}$.
Thus we conclude $\sigma_{\min}(A, \1^\bot) = 1/2$. 
By an identical calculation, we also find
$\sigma_{\min}(A^\top, \1^\bot) = 1/2$,
which completes the proof. 
\end{proof}

\subsection{Primal and Dual OMWU Iterates}
\label{app:2x2-setup:primal-dual}

In this section, we give details on the OMWU iterates
when run on the canonical
$2\times 2$ game $A_{\delta_p, \delta_q}$
from Definition~\ref{def:A-2x2}.
These derivations are used to prove
the lower bounds of Theorems~\ref{thm:kl-last-lower}
and Theorems~\ref{thm:uniform-lb-main},
and also the best-iterate upper bound of Theorem~\ref{thm:dg-best-2x2}.

In this setting, note that while the dual iterates 
$\{z_t\}$ all belong to $\R^4$,
the effective dual space $\calZ$ (as introduced
in Section~\ref{sec:omwu-main:primal-dual})
is a two-dimensional subspace.
This allows for a convenient two-dimensional
representation of both the primal and dual iterates 
that facilitates obtaining precise
control of the OMWU trajectory as
a function of the Nash equilibrium coordinates
$\delta_p$ and $\delta_q$.

\paragraph{Roadmap of subsection.}
Section~\ref{app:2x2-setup:primal-dual:effective-dual}
presents structural properties of 
the lower-dimensional effective dual space. 
Section~\ref{app:2x2-setup:primal-dual:mapping}
then gives simplified expressions for the 
energy function and primal-dual mapping
over the effective dual space.
Section~\ref{app:2x2-setup:primal-dual:omwu-iterates}
states the simplified primal and dual OMWU 
updates when run on $A_{\delta_p, \delta_q}$.
Throughout, we assume the following setting:


\begin{restatable}[$2\times 2$ Setting Using Canonical Matrix]
{setting}{settwotwo}
	\label{setting:2x2-canonical}
	For fixed $\delta_p, \delta_q \in (0, 1)$,
	assume that $A := A_{\delta_p, \delta_q}$
	is the canonical $2 \times 2$ payoff matrix
	from Definition~\ref{def:A-2x2}. 
\end{restatable}


In this setting, we will also make use of the 
following simplified notation for readability:

\paragraph{Simplified notation and definitions.}
Under Setting~\ref{setting:2x2-canonical},
let $\LSE \colon \R^2 \to \R$ denote the two-dimensional
log-sum-exp function, 
and let $R: \Delta_2 \to \R$ denote the
two-dimensional negative-entropy function
(as defined in Section~\ref{sec:omwu}).
Moreover, we write $\sigmoid \colon \R \to \R$
to denote the function given by 
$\sigmoid(u) = 1/(1+\exp(-u))$ for $u \in \R$.

\subsubsection{Effective Dual Space}
\label{app:2x2-setup:primal-dual:effective-dual}

Recall from Section~\ref{sec:omwu} that
$\calZ = \Span(J \calW) \subseteq \R^{2+2}$,
where $J = ((0, A), (-A^\top, 0)) \in \R^{4 \times 4}$ 
and $\calW = \Delta_2 \times \Delta_2$.
In order to introduce the lower-dimensional 
representation of $\calZ$, it will be more
convenient to further work with the individual
components of the dual variables $z = (x, y) \in \R^{2 + 2}$.
Specifically, we first introduce the component-wise
effective dual spaces $\calX, \calY \subset \R^2$:

\begin{restatable}[Component-wise effective dual spaces]
	{define}{defcomponenteffectivedual}
	\label{def:calX-calY}
	Fix $\delta_p, \delta_q \in (0, 1)$ and assume 
	Setting~\ref{setting:2x2-canonical}.
	Let $\calX, \calY \subset \R^2$ be the linear subspaces
	given by 
	\begin{equation*}
		\calX := \Span(A \Delta_2) 
		\;\text{and}\;
		\calY := \Span(A^\top \Delta_2) \;.
	\end{equation*}
\end{restatable}

Given the product structure of $\calW = \Delta_2 \times \Delta_2$,
and using the definition of $J$, the following characterization
of $\calZ = \Span(J \calW)$ is immediate:

\smallskip

\begin{restatable}{prop}{propeffectivedualequiv}
	\label{prop:2x2-calZ-equiv}
	Fix $\delta_p, \delta_q \in (0, 1)$,
	and assume Setting~\ref{setting:2x2-canonical}.
	Then $\calZ = \Span(J\calW) = \calX \times \calY$.
\end{restatable}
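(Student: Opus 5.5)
The plan is to prove the two inclusions $\calZ \subseteq \calX \times \calY$ and $\calX \times \calY \subseteq \calZ$ directly from the block structure of $J$ and the product structure of $\calW = \Delta_2 \times \Delta_2$. For any $w = (p,q) \in \calW$ we have $Jw = (Aq, -A^\top p)$. Here $Aq \in A\Delta_2 \subseteq \calX$ and $-A^\top p \in \calY$, the latter because $\calY$ is a linear subspace. So every generator $Jw$ of $\calZ$ lies in the subspace $\calX \times \calY$. Taking finite linear combinations, and using the definition of $\calZ = \Span(J\calW)$ from~\eqref{eq:calZ-def}, gives $\calZ \subseteq \calX \times \calY$.

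For the reverse inclusion, it suffices to show that $(x,0) \in \calZ$ for every $x \in \calX$ and $(0,y) \in \calZ$ for every $y \in \calY$; then $(x,y) = (x,0)+(0,y) \in \calZ$. Since $\calX$ is spanned by vectors $Aq$ with $q \in \Delta_2$, it is enough to produce $(Aq, 0)$ for each $q \in \Delta_2$. The key trick is to cancel the second block by differencing two elements of $\calW$ with the same $p$. Fix any $p_0 \in \Delta_2$. Then
\[
J(p_0, q) - J(p_0, q_0) = \big(A(q - q_0),\, 0\big),
\]
which shows $(A(q-q_0),0) \in \calZ$ for all $q, q_0 \in \Delta_2$, but not yet $(Aq,0)$ itself.

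To obtain $(Aq,0)$, I would use $w^\star$. By Proposition~\ref{prop:A-properties}(i) the game value is zero, so Proposition~\ref{prop:interior-NE}(i) gives $Aq^\star = 0$ and $A^\top p^\star = 0$. Taking $p_0 = p^\star$ and $q_0 = q^\star$ in the differencing step above yields
\[
J(p^\star, q) - J(p^\star, q^\star) = (Aq, 0) \in \calZ .
\]
More directly, $J(p^\star, q) = (Aq, -A^\top p^\star) = (Aq, 0) \in \calZ$. Symmetrically, $J(p, q^\star) = (0, -A^\top p)$, so scaling by $-1$ gives $(0, A^\top p) \in \calZ$. Extending by linearity to all of $\calX$ and $\calY$ completes the reverse inclusion.

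The only delicate point is this reverse inclusion, and it is exactly where the zero-value normalization of the canonical matrix is used. For a general $A$ with nonzero value, $Aq^\star = v\1 \neq 0$, so the blocks would not decouple cleanly; this is consistent with Remark~\ref{rem:canonical-2x2-matrix}. No other step is expected to pose difficulty.
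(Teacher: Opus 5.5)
Your proof is correct, and it follows the only natural route: two inclusions read off from the block form of $J$. The paper gives no argument beyond calling the identity immediate from the product structure of $\mathcal{W}=\Delta_2\times\Delta_2$ and the form of $J$. That justification covers the forward inclusion, which uses only $J(p,q)=(Aq,-A^\top p)$, but not the reverse one. The span of a product set can be strictly smaller than the product of the spans, e.g.\ $\mathrm{Span}\{(1,1)\}\subsetneq\mathbb{R}\times\mathbb{R}$.

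What closes this gap is exactly what you used. Because $A_{\delta_p,\delta_q}$ has zero game value, $0\in A\Delta_2$ and $0\in A^\top\Delta_2$. This gives $J(p^\star,q)=(Aq,0)$ and $J(p,q^\star)=(0,-A^\top p)$, so your write-up makes explicit the one ingredient the paper leaves implicit.

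Your closing remark can also be sharpened: the normalization is necessary, not merely convenient.
\begin{itemize}
\item In the $2\times 2$ case one always has $\mathcal{Z}=\{(Au,-A^\top u') : u_1+u_2=u'_1+u'_2\}$.
\item Take $A=\gamma A_{\delta_p,\delta_q}+v\mathbf{1}\mathbf{1}^\top$ with $v\neq 0$, and define $\mathcal{X},\mathcal{Y}$ analogously. Then $\det A=\gamma v\neq 0$, so $\mathcal{X}\times\mathcal{Y}=\mathbb{R}^4$.
\item Meanwhile $\mathcal{Z}$ is only three-dimensional. For instance, $(Aq^\star,0)=(v\mathbf{1},0)\notin\mathcal{Z}$, since $A^\top u'=0$ forces $u'=0$ while $Au=v\mathbf{1}$ forces $u=q^\star$, whose coordinates sum to $1$ rather than $0$.
\end{itemize}
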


Now define $p^\star = (1-\delta_p, \delta_p)$
and $q^\star = (1-\delta_q, \delta_q)$,
and recall from Part (i) of Proposition~\ref{prop:A-properties}
that $(p^\star, q^\star)$ is the unique and interior NE
of $A := A_{\delta_p, \delta_q}$. 
The key property of $\calX$ and $\calY$
is that these spaces are orthogonal 
to $p^\star$ and $q^\star$, respectively. 
Formally, we have the following refinement of
Proposition~\ref{prop:dual-subspace-property}
for the present $2\times 2$ setting:

\smallskip

\begin{restatable}[Component-wise orthogonality to Nash]
{prop}{proptwotwoeffectivedual}
	\label{prop:2x2-effective-dual-orthogonal}
	Fix $\delta_p, \delta_q \in (0, 1)$,
	and assume Setting~\ref{setting:2x2-canonical}.
	Then for any $x = (x(1), x(2)) \in \calX$
	and $y = (y(1), y(2)) \in \calY$, it holds that:
	\begin{equation*}
		\begin{aligned}
		\langle x, p^\star \rangle = 0 
		&\;\implies\;
		x(2) = x(1) \cdot (1-(1/\delta_p)) \\ 
		\text{and}\quad
		\langle y, q^\star \rangle = 0 
		&\;\implies\;
		y(2) = y(1) \cdot (1-(1/\delta_q)) \;.
		\end{aligned}
	\end{equation*}
\end{restatable}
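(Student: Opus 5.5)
The plan is to prove the two claims separately. The $\calX$ claim comes from two facts: Proposition~\ref{prop:dual-subspace-property} at the level of components, and the coordinates of $p^\star = (1-\delta_p, \delta_p)$. The $\calY$ claim follows by the same argument with the roles of $A$ and $A^\top$ swapped.

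First I establish the orthogonality $\langle x, p^\star\rangle = 0$ for every $x \in \calX$. Every $x \in \calX = \Span(A\Delta_2)$ can be written as $x = A u$ with $u = \sum_i \tau_i q_i$, $\tau_i\in\R$, $q_i \in \Delta_2$. By Part (i) of Proposition~\ref{prop:A-properties}, the game value is $0$. Combining this with Part (i) of Proposition~\ref{prop:interior-NE} gives $A^\top p^\star = 0$. Therefore
\[
\langle x, p^\star\rangle = \langle u, A^\top p^\star\rangle = 0 .
\]
Alternatively, one can invoke Proposition~\ref{prop:dual-subspace-property} with $z = (x,0)$. This requires $(x,0) \in \calZ$, which holds because $\calZ = \calX\times\calY$ by Proposition~\ref{prop:2x2-calZ-equiv}. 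The direct computation avoids even that step, and unlike the general proposition it does not need the cancellation $c(\langle p_i,\1\rangle - \langle q_i,\1\rangle)$, since here $c=0$.

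Second, I expand the orthogonality in coordinates:
\[
0 = (1-\delta_p)\,x(1) + \delta_p\, x(2).
\]
Since $\delta_p \in (0,1)$, I can divide by $\delta_p$ to get
\[
x(2) = -\frac{1-\delta_p}{\delta_p}\,x(1) = x(1)\bigl(1 - 1/\delta_p\bigr).
\]
Third, the identical computation for $y = A^\top v\in\calY$ uses $Aq^\star = 0$ to give $\langle y, q^\star\rangle = 0$, and hence $y(2) = y(1)(1-1/\delta_q)$.

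There is no real obstacle. The only point needing care is that the relevant payoff vectors vanish exactly, rather than merely being constant. This is what makes each component orthogonal to its equilibrium on its own. In the general setting only the joint orthogonality of Proposition~\ref{prop:dual-subspace-property} holds, and there the constants from the two components cancel only in the sum. The zero game value of $A_{\delta_p,\delta_q}$ is exactly what allows this stronger component-wise statement.
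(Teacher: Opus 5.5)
Your proposal is correct and follows the paper's proof essentially step for step. Both write $x \in \calX$ as a linear combination of vectors $A u_i$ with $u_i \in \Delta_2$. Both use the zero game value of $A_{\delta_p,\delta_q}$ (Part (i) of Proposition~\ref{prop:A-properties} together with Part (i) of Proposition~\ref{prop:interior-NE}) to get $A^\top p^\star = 0$, and hence $\langle x, p^\star \rangle = 0$. Both then solve $(1-\delta_p)\,x(1) + \delta_p\, x(2) = 0$ for $x(2)$, and treat the $\calY$ case symmetrically via $A q^\star = 0$.
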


\begin{proof}
	First, fix $x \in \calX$. By definition, this means that
	there exists $k \ge 1$ such that 
	$x = \sum_{i=1}^k \tau_i A p_i$, where all $\tau_i \in \R$
	and $p_i \in \Delta_2$.
	Then for each $i \in [k]$, observe that 
	\begin{equation*}
		\langle \tau_i A p_i, p^\star \rangle 
		= 
		\tau_i 
		\langle
			p_i, A^\top p^\star
		\rangle
		= 
		\tau_i 
		\langle
			p_i, 0
		\rangle 
		= 0 \;.
	\end{equation*}
	Here, the final equality is due to the fact
	that $(p^\star, q^\star)$ is an interior Nash equilibrium
	of $A = A_{\delta_p, \delta_q}$ 
	with $\langle p^\star, A q^\star \rangle = 0$
	(Part (i) of Proposition~\ref{prop:A-properties}),
	which implies that
	$A^\top p_{\star} = 0$ is the zero vector
	(Part (i) of Proposition~\ref{prop:interior-NE}).
	Thus $\langle x, p^\star \rangle = 0$
	by definition of $x$.
	Now, as $p^\star = (1-\delta_p, \delta_p)$,
	a direct calculation and rearrangement then finds 
	\begin{equation*}
		\langle x, p^\star \rangle
		= x(1)\cdot (1-\delta_p) + x(2) \cdot \delta_p
		= 0 
		\quad\implies\quad
		x(2) = 
		x(1) \cdot (1 - (1/\delta_p))\;.
	\end{equation*}
	This proves the first statement of the proposition.
	The second statement follows using
	an identical argument for $y \in \calY$.
\end{proof}

Before proceeding, we make the following two remarks:

\begin{restatable}[Comparison with Proposition~\ref{prop:dual-subspace-property}]
{rem}{remorthogonalcompare}
	\label{remark:calZ-orthogonal-compare}
	For a matrix $A$ with interior NE 
	$w^\star = (p^\star, q^\star)$, 
	Proposition~\ref{prop:dual-subspace-property}
	establishes for the general case that
	$\langle z, w^\star \rangle = 0$ for all $z = (x, y) \in \calZ$.
	In particular, this implies by definition of $J$ that
	$\langle x, p^\star \rangle + \langle y, q^\star \rangle = 0$.
	Proposition~\ref{prop:2x2-effective-dual-orthogonal}
	establishes under Setting~\ref{setting:2x2-canonical}
	the slightly stronger \textit{component-wise} property that
	both $\langle x, p^\star \rangle = 0$
	and $\langle y, q^\star \rangle = 0$.
	As seen in the proof
	of Proposition~\ref{prop:2x2-effective-dual-orthogonal},
	this property is a consequence of the fact that 
	$\langle p^\star, A_{\delta_p, \delta_q} q^\star \rangle = 0$,
	as proven in Part (i) of Proposition~\ref{prop:A-properties}.
	Indeed, it is then straightforward to see that this 
	stronger property also holds for any other (general-dimension)
	$A$ whose value of the game is zero. 
\end{restatable}

\begin{restatable}[Main consequence is lower-dimensional 
effective dual space]
{rem}{remcalZlowerdim}
	\label{remark:calZ-lower-dimensional}
	The key consequence of 
	Proposition~\ref{prop:2x2-effective-dual-orthogonal} 
	is that, under Setting~\ref{setting:2x2-canonical},
	the dual variable components $x \in \calX \subset \R^2$ 
	and $y \in \calY \subset \R^2$ each lie on one-dimensional
	subspaces. Thus, a dual variable $z = (x, y) \in \calZ$
	can be fully specified by the first coordinates
	$(x(1), y(1)) \in \R^2$ of its components.
	This means that the effective dual space $\calZ \subset \R^{2+2}$
	is a \textit{two-dimensional subspace}.
	This will allow for additional, convenient 
	simplifications of the energy function $F$
	and the primal-dual relationship over $\calZ$.
	We note that similar approaches were used in 
	prior works of~\cite[Section 3.2]{bailey2019fast} 
	and~\cite[Proposition 4.2]{lazarsfeld2025optimism},
	albeit for different base matrices.
\end{restatable}

\subsubsection{Primal-Dual Mapping Over Effective Dual Space}
\label{app:2x2-setup:primal-dual:mapping}

Under Setting~\ref{setting:2x2-canonical}, 
working over the lower-dimensional structure of $\calZ$ 
established by Proposition~\ref{prop:2x2-effective-dual-orthogonal}
allows for simplified chacterizations of 
the energy function $F$ and its gradient map.
Formally, we establish the following properties:

\smallskip

\begin{restatable}[Energy and primal-dual map over $\calZ$]
{prop}{propenergysimplified}
	\label{prop:2x2-energy-map-simplified}
	Fix $\delta_p, \delta_q \in (0, 1)$ and 
	assume Setting~\ref{setting:2x2-canonical}.
	Let $z = (x, y) \in \calZ = \calX \times \calY$, 
	and let $w = (p, q) = \nabla F(z) \in \relint(\calW)
	= \relint(\Delta_2) \times \relint(\Delta_2)$.
	Then the following relationships hold:
	\begin{enumerate}[
		label={(\roman*)},
		itemsep=0em,	
		topsep=0em,
		leftmargin=2.5em,
	]
	\item
	$\LSE(x) 
		= 
		x(1) + \log\big(
			1 + \exp\big(\frac{-x(1)}{\delta_p}\big)
		\big)
	$
	and 
	$\LSE(y)
		= 
		y(1) + \log\big(
			1 + \exp\big(\frac{-y(1)}{\delta_q}\big)
		\big)
	$.

	\item
	$p(1) = \sigmoid\big(\frac{x(1)}{\delta_p}\big)$
	and
	$q(1) = \sigmoid\big(\frac{y(1)}{\delta_q}\big)$.

 	\item
 	$x(1) = \delta_p \cdot \log\big(\frac{p(1)}{1-p(1)}\big)$
 	and 
 	$y(1) = \delta_q \cdot \log\big(\frac{q(1)}{1-q(1)}\big)$.
	\end{enumerate}
\end{restatable}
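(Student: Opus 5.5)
The plan is to reduce everything to the one-dimensional parametrization of $\calX$ and $\calY$ from Proposition~\ref{prop:2x2-effective-dual-orthogonal}, and then compute directly. By symmetry it suffices to treat the $x$-component; the $y$-component follows by replacing $\delta_p$ with $\delta_q$ and $p$ with $q$.

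First, since $x \in \calX$, Proposition~\ref{prop:2x2-effective-dual-orthogonal} gives
\begin{equation*}
x(2) = x(1)\cdot\big(1 - \tfrac{1}{\delta_p}\big) = x(1) - \tfrac{x(1)}{\delta_p}.
\end{equation*}

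For part (i), substitute this into the definition of log-sum-exp and factor out $\exp(x(1))$:
\begin{equation*}
\LSE(x) = \log\big(\exp(x(1)) + \exp(x(1) - x(1)/\delta_p)\big) = x(1) + \log\big(1 + \exp(-x(1)/\delta_p)\big).
\end{equation*}

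For part (ii), recall from~\eqref{eq:F-grad-full} that $p = \softmax(x)$. Dividing numerator and denominator by $\exp(x(1))$ gives
\begin{equation*}
p(1) = \frac{1}{1 + \exp(x(2) - x(1))} = \frac{1}{1 + \exp(-x(1)/\delta_p)} = \sigmoid\big(x(1)/\delta_p\big).
\end{equation*}
Alternatively, part (ii) follows by differentiating the expression in (i) along the line; the direct softmax computation is cleaner.

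For part (iii), invert the sigmoid. Since $p(1) \in (0,1)$ as $w$ is interior, the logit gives $x(1)/\delta_p = \log\big(p(1)/(1-p(1))\big)$. Multiplying by $\delta_p$ yields the claim.

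The only point needing care is to make explicit that the map $x(1) \mapsto p(1)$ restricted to $\calX$ is a bijection, so that (iii) describes the unique dual point in $\calX$. This is consistent with Proposition~\ref{prop:calZ-grad-map-injectivity}, since the game value is zero by Proposition~\ref{prop:A-properties}. Beyond this, no step should pose a real obstacle: each part is a one-line calculation once the linear relation between $x(2)$ and $x(1)$ is in hand.
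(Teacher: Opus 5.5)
Your proposal is correct and follows the paper's proof essentially line for line: substitute $x(2) = x(1)(1 - 1/\delta_p)$ from Proposition~\ref{prop:2x2-effective-dual-orthogonal}, factor out $\exp(x(1))$ in the log-sum-exp and softmax expressions, and invert the sigmoid for (iii). Your remark on bijectivity is not needed for (iii) as stated, since (iii) simply inverts (ii) pointwise; the paper makes the bijectivity observation separately in Remark~\ref{remark:2x2-primal-dual}.
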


\begin{proof}
	Note that as $x \in\calX$ and $y \in \calY$, we have
	by Proposition~\ref{prop:2x2-effective-dual-orthogonal}
	that $x(2) = x(1) \cdot (1- (1/\delta_p))$
	and $y(2) = y(1) \cdot (1 - (1/\delta_q))$.
	We use this relationship to prove of
	each of the three parts as follows:

	\smallskip

	\noindent
	\textbf{Proof of Part (i).}
	By definition of $\LSE(\cdot)$ and using 
	Proposition~\ref{prop:2x2-effective-dual-orthogonal},
	we can compute
	\begin{align*}
		\LSE(x) &= 
		\log\big(
			\exp(x(1)) + \exp(x(2))
		\big) \\
		&=  
		\log\big(
			\exp(x(1)) + \exp\big(x(1) \cdot (1- \tfrac{1}{\delta_p})\big)
		\big) \\
		&= 
		\log\big(
			\exp(x(1)) \cdot (1 + \exp\big(\tfrac{-x(1)}{\delta_p})\big)
		\big)
		= 
		x(1) + \log\big(1 + \exp\big(\tfrac{-x(1)}{\delta_p} \big)\big) \;.
	\end{align*}
	The statement for $\LSE(y)$ follows by an identical calculation. 

	\smallskip

	\noindent
	\textbf{Proof of Part (ii).}
	Recall by definition of $F$ and the assumptions
	of the proposition that
	$\nabla F(z) = (p, q) = (\softmax(x), \softmax(y))$. 
	Thus be definition of $\softmax$ and using 
	Proposition~\ref{prop:2x2-effective-dual-orthogonal},
	we can write
	\begin{align*}
		p(1) 
		= 
		\frac{\exp\big(x(1)\big)}{\exp\big(x(1)\big) + \exp\big(x(2)\big)} 
		&= 
		\frac{\exp\big(x(1)\big)}{\exp\big(x(1)\big)
		+ \exp\big(x(1) \cdot (1-\frac{1}{\delta_p})\big)} \\
		&= 
		\frac{\exp\big(x(1)\big)}
		{\exp\big(x(1)\big) 
		\cdot \big(1 + \exp\big(\frac{-x(1)}{\delta_p}\big)\big)}
		= 
		\sigmoid\Big(\frac{x(1)}{\delta_p}\Big) \;.
	\end{align*}	
	The statement for $q(1)$ follows by an identical calculation.
	\smallskip

	\noindent
	\textbf{Proof of Part (iii).}
	Recall for $u \in \R$ that 
	$\sigmoid^{-1}(u) = \log\big(\frac{u}{1-u}\big)$.
	Thus by Part (ii), we have 
	\begin{equation*}
		\frac{x(1)}{\delta_p}
		= 
		\sigmoid^{-1}(p(1))
		= \log\Big(\frac{p(1)}{1-p(1)}\Big) \;,
	\end{equation*}
	Rearranging gives 
	$x(1) = \delta_p \cdot \log\big(\frac{p(1)}{1-p(1)}\big)$.
	The claim for $y(1)$ follows identically.
\end{proof}

\smallskip

\begin{restatable}[Bijectivity of energy gradient map
	over $\calZ$.]
	{rem}{remarktwotwoprimaldual}
	\label{remark:2x2-primal-dual}
	Note that Parts (ii) and (iii) of 
	Proposition~\ref{prop:2x2-energy-map-simplified}
	establish that, under Setting~\ref{setting:2x2-canonical},
	$\nabla F: \calZ \to \relint(\calW)$ is bijective.
	Indeed, note that as the zero-sum game value under
	$A := A_{\delta_p, \delta_q}$
	is zero (Part (i) of Proposition~\ref{prop:A-properties}),
	the more general statements of Propositions~\ref{prop:calZ-grad-map} 
	and~\ref{prop:calZ-grad-map-injectivity} also together imply 
	that $\nabla F: \calZ \to \relint(\calW)$ is bijective.
	For the $2\times 2$ setting, the closed-form expression of  
	Part (iii) of Proposition~\ref{prop:2x2-energy-map-simplified}
	thus allows for recovering the coordinates of the 
	dual variable $z = (x, y) \in \calZ$ solely using the 
	coordinates of the corresponding primal variable
	$w \in \relint(\calW)$.
\end{restatable}

Part (iii) of Proposition~\ref{prop:2x2-energy-map-simplified}
also leads to several useful corollaries. 
First, it allows for defining the coordinates of the unique 
``dual'' Nash variable $z^\star \in \calZ$ that maps
to the primal NE of $A := A_{\delta_p, \delta_q}$
under $\nabla F$. More formally:


\begin{restatable}[Dual Nash coordinate in $\calZ$]
	{cor}{cordualnashtwotwo}
	\label{cor:2x2-dual-nash}
	Fix $\delta_p, \delta_q \in (0, 1)$
	and assume Setting~\ref{setting:2x2-canonical}.
	Let $w^\star = (p^\star, q^\star) \in \relint(\calW)$.
	Let $z^\star = (x^\star, y^\star) \in \calZ$ be such that
	\begin{equation*}
		\textstyle
		x^\star(1) = \delta_p \cdot \log \Big(\frac{1-\delta_p}{\delta_p}\Big)
		\;\;\text{and}\;\;
		y^\star(1) = \delta_q \cdot \log \Big(\frac{1-\delta_q}{\delta_q}\Big) \;.
	\end{equation*}
	Then $\nabla F(z^\star) = w^\star$.
\end{restatable}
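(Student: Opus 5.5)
The plan is to verify the two required properties of $z^\star$ directly. First, that it lies in $\calZ$. Second, that its image under $\nabla F$ is $w^\star$.

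\textbf{Membership in $\calZ$.} By Proposition~\ref{prop:2x2-calZ-equiv}, $\calZ = \calX \times \calY$. By Proposition~\ref{prop:2x2-effective-dual-orthogonal} and Remark~\ref{remark:calZ-lower-dimensional}, $\calX$ is a one-dimensional line, namely
\[
\calX = \{x \in \R^2 : x(2) = x(1)(1-1/\delta_p)\},
\]
and $\calY$ is the analogous line with $\delta_q$ in place of $\delta_p$. So we define $z^\star$ by the stated first coordinates $x^\star(1)$ and $y^\star(1)$. We then set $x^\star(2) = x^\star(1)(1-1/\delta_p)$ and $y^\star(2) = y^\star(1)(1-1/\delta_q)$, which places $z^\star$ in $\calZ$.

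One point should be checked: that $\calX$ is genuinely one-dimensional and not $\{0\}$. This holds because $A\Delta_2$ contains a nonzero vector; for instance, the first column $A e_1 = (\delta_p\delta_q, -(1-\delta_p)\delta_q) \neq 0$. Hence $\calX$ is the full line $p^{\star\bot}$, and likewise $\calY$ is the full line $q^{\star\bot}$.

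\textbf{Image under $\nabla F$.} We apply Part (ii) of Proposition~\ref{prop:2x2-energy-map-simplified}, which gives
\[
p(1) = \sigmoid\bigl(x^\star(1)/\delta_p\bigr) = \sigmoid\bigl(\log\tfrac{1-\delta_p}{\delta_p}\bigr).
\]
Since $\sigmoid(\log(u/(1-u))) = u$, taking $u = 1-\delta_p$ yields $p(1) = 1-\delta_p = p^\star(1)$. Because $p$ and $p^\star$ both lie in $\Delta_2$, this forces $p = p^\star$. The identical computation with $\delta_q$ gives $q = q^\star$.

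The argument could equivalently be phrased as inverting Part (iii) of the same proposition. No step here is a real obstacle; the only subtlety is the dimension check on $\calX$ and $\calY$ handled above.
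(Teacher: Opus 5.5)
Your proof is correct and follows the paper's approach. The paper derives this corollary directly from the primal--dual correspondence of Proposition~\ref{prop:2x2-energy-map-simplified}, inverting Part (iii) at $p(1)=1-\delta_p$ and $q(1)=1-\delta_q$, which is equivalent to your forward sigmoid computation via Part (ii). Your extra check that $\calX$ and $\calY$ are full lines, so that such a $z^\star \in \calZ$ actually exists, is a detail the paper leaves implicit.
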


As a second corollary, note that
the expressions $\delta_p \log(p(1)/(1-p(1)))$
and $\delta_q \log(q(1)/(1-q(1)))$ are increasing
in $p(1)$ and $q(1)$, respectively.
Similarly, $\sigmoid(x(1)/\delta_p)$
and $\sigmoid(y(1)/\delta_q)$
are increasing in $x(1)$ and $y(1)$, respectively.
Together this implies the following further
relationships between $(p, q) \in \relint(\calW)$
and $(x, y) \in \calZ$:

\smallskip

\begin{restatable}{cor}{corrprimaldualthreshold}
	\label{corr:2x2-pd-threshold}
	Fix $\delta_p, \delta_q \in (0, 1)$
	and assume Setting~\ref{setting:2x2-canonical}.
	Let $z = (x, y) \in \calZ$, and
	let $w = (p, q) = \nabla F(z) \in \relint(\calW)$.
	Fix any $\rho \in (0, 1)$. 
	Then the following relationships hold:
	\begin{equation*}
			p(1) \ge \rho 
			\,\iff\,
			x(1) \ge \delta_p \cdot 
			\log\Big(\frac{\rho}{1-\rho}\Big)
			\;\;\text{and}\;\;
			q(1) \ge \rho 
			\,\iff\,
			y(1) \ge \delta_q \cdot 
			\log\Big(\frac{\rho}{1-\rho}\Big) \;.
	\end{equation*}
\end{restatable}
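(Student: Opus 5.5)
The corollary is a direct consequence of the closed-form primal-dual correspondence in Part~(ii) and Part~(iii) of Proposition~\ref{prop:2x2-energy-map-simplified}, combined with monotonicity of the logistic map. I would prove the statement for the $p$-component; the $q$-component follows by the identical argument with $\delta_q$, $y$, $q$ in place of $\delta_p$, $x$, $p$.

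Fix $z=(x,y)\in\calZ$ and $w=(p,q)=\nabla F(z)$. By Part~(ii) of Proposition~\ref{prop:2x2-energy-map-simplified}, $p(1)=\sigmoid(x(1)/\delta_p)$.

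Next I would record the monotonicity facts. The map $u\mapsto \sigmoid(u)$ is a strictly increasing bijection from $\R$ onto $(0,1)$, with inverse $\sigmoid^{-1}(r)=\log(r/(1-r))$. Since $\delta_p>0$, the composite map $u\mapsto\sigmoid(u/\delta_p)$ is also a strictly increasing bijection $\R\to(0,1)$. Its inverse is $r\mapsto \delta_p\log(r/(1-r))$, which is exactly the formula in Part~(iii).

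For $\rho\in(0,1)$, strict monotonicity (in both directions) then gives
\[
p(1)\ge\rho \iff \sigmoid\!\big(x(1)/\delta_p\big)\ge \sigmoid\!\big(\sigmoid^{-1}(\rho)\big) \iff x(1)/\delta_p\ge \log\!\big(\tfrac{\rho}{1-\rho}\big).
\]
Multiplying through by $\delta_p>0$ yields the stated threshold condition on $x(1)$.

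There is no real obstacle here. The only points to be careful about are that the equivalence needs strict monotonicity of $\sigmoid$ (so that the reverse implication also holds), and that $\rho\in(0,1)$ guarantees $\log(\rho/(1-\rho))$ is finite.
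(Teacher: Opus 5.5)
Your proof is correct and takes the same route as the paper: from Part~(ii) of Proposition~\ref{prop:2x2-energy-map-simplified}, $p(1)$ is the logistic function evaluated at $x(1)/\delta_p$, and this map is strictly increasing with inverse $r \mapsto \delta_p \log(r/(1-r))$ given in Part~(iii), so the threshold equivalence follows. The $q$-component is handled identically.
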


\subsubsection{OMWU Dynamics}
\label{app:2x2-setup:primal-dual:omwu-iterates}

Under Setting~\ref{setting:2x2-canonical},
the primal-dual relationships over the effective
dualspace $\calZ$ lead to simplified expressions
for the primal and dual OMWU updates. 
Formally, we have the following: 

\smallskip

\begin{restatable}[OMWU update rules
on canonical $2 \times 2$ matrix]
{prop}{propomwudualtwo}
	\label{prop:omwu-pd-2x2}
	Fix $\delta_p, \delta_q \in (0, 1)$
	and assume Setting~\ref{setting:2x2-canonical}.
	Let $\{w_t\}$ and $\{z_t\}$ denote the
	primal and dual iterates of OMWU on 
	$A := A_{\delta_p, \delta_q}$ with 
	stepsize $\eta > 0$
	initialized from $w_0 = (p_0, q_0) \in \relint(\calW)$,
	where each $w_t = (p_t, q_t) \in \relint(\calW)$
	and $z_t = (x_t, y_t) \in \calZ$.
	Then for all $t \ge 1$:
	\begin{equation*}
		\begin{cases}
		\, x_{t+1}(1)
		= x_t(1) - \eta \cdot \delta_p
		\left(
			q_t(1) - (1-\delta_q)
			+ \big(q_t(1) - q_{t-1}(1)\big)
		\right) \\
		\, y_{t+1}(1)
		= y_t(1) + \eta \cdot \delta_q
		\left(
			p_t(1) - (1-\delta_p)
			+ \big(p_t(1) - p_{t-1}(1)\big)		
		\right) \;,
		\end{cases} 
	\end{equation*}
	and moreover
	\begin{equation*}
		\begin{cases}
			\, p_{t+1}(1) = \sigmoid
			\left(\frac{x_{t+1}(1)}{\delta_p}\right) \\
			\, q_{t+1}(1) = \sigmoid
			\left(\frac{y_{t+1}(1)}{\delta_q}\right) \;.
		\end{cases}
	\end{equation*}
\end{restatable}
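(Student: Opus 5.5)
The plan is to compute $J w_t$ explicitly for the canonical matrix, substitute it into the \eqref{eq:omwu-dual} update, and read off the first coordinates. For the primal part we apply Part (ii) of Proposition~\ref{prop:2x2-energy-map-simplified} at time $t+1$.

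First, fix $q\in\Delta_2$ with $q=(q(1),1-q(1))$. By Definition~\ref{def:A-2x2}, the first coordinate of $Aq$ is
\[
\delta_p\delta_q q(1)-\delta_p(1-\delta_q)(1-q(1))=\delta_p\big(q(1)-(1-\delta_q)\big).
\]
Similarly, since $(A^\top p)(1)=\delta_p\delta_q p(1)-(1-\delta_p)\delta_q(1-p(1))$, we get
\[
(A^\top p)(1)=\delta_q\big(p(1)-(1-\delta_p)\big).
\]
By definition of $J$, $Jw=(Aq,-A^\top p)$. So the first coordinate of the $x$-block of $Jw$ is $\delta_p(q(1)-(1-\delta_q))$, and the first coordinate of the $y$-block is $-\delta_q(p(1)-(1-\delta_p))$.

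Next, Proposition~\ref{prop:omwu-skew-grad} gives $w_t=\nabla F(z_t)$ and, for $t\ge1$,
\[
z_{t+1}=z_t-2\eta Jw_t+\eta Jw_{t-1}.
\]
Taking the first coordinate of the $x$-component and using the formula above for $t$ and $t-1$, we get
\[
x_{t+1}(1)=x_t(1)-\eta\delta_p\Big(2\big(q_t(1)-(1-\delta_q)\big)-\big(q_{t-1}(1)-(1-\delta_q)\big)\Big).
\]
The constant $(1-\delta_q)$ appears with net coefficient one, so the bracket equals $q_t(1)-(1-\delta_q)+(q_t(1)-q_{t-1}(1))$, which is the claimed form. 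The $y$-recursion follows the same way, except that the sign in $-A^\top p$ flips the $-\eta$ into $+\eta$.

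For the primal iterates, $z_{t+1}\in\calZ=\calX\times\calY$ by \eqref{eq:zt-wt-def} and Proposition~\ref{prop:2x2-calZ-equiv}, and $w_{t+1}=\nabla F(z_{t+1})$. Part (ii) of Proposition~\ref{prop:2x2-energy-map-simplified} then gives the sigmoid expressions for $p_{t+1}(1)$ and $q_{t+1}(1)$.

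The one point needing care is that $z_{t+1}$ really lies in $\calZ$. This requires the initialization $z_0\in\calZ$, which is possible by Proposition~\ref{prop:calZ-grad-map}; we will state that we choose $z_0$ this way. Apart from that, the argument is a direct computation.
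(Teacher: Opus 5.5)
Your proposal is correct and follows the paper's proof essentially step for step: compute the first coordinates of $Aq$ and $A^\top p$ for the canonical matrix, substitute them into the dual update $z_{t+1}=z_t-2\eta Jw_t+\eta Jw_{t-1}$, and invoke Part (ii) of Proposition~\ref{prop:2x2-energy-map-simplified} for the sigmoid form. You also note that $z_{t+1}\in\calZ$ requires choosing $z_0\in\calZ$, which Proposition~\ref{prop:calZ-grad-map} makes possible; the paper's proof passes over this with ``by definition,'' so stating it explicitly is a small improvement.
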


\begin{proof}
	First, recalling the structure of $A = A_{\delta_p, \delta_q}$
	from Definition~\ref{def:A-2x2}, a straightforward
	calculation shows that, for
	any $p, q \in \Delta_2$:
	\begin{equation}
		(Aq)(1) = \big(q(1) - (1-\delta_q)\big) \cdot \delta_p
		\;\;\text{and}\;\;
		(A^\top p)(1) 
		= 
		\big(p(1) - (1-\delta_p)
		\big) \cdot \delta_q \;.   
		\label{eq:A2-01}
	\end{equation}
	Now recall by definition of the OMWU dual iterates$\{z_t\}$ 
	from~\eqref{eq:zt-wt-def} that, for all $t \ge 1$: 
	\begin{equation*}
		z_{t+1} = 
		(x_{t+1}, y_{t+1}) = z_{t} - 
		\eta (J w_t + J (w_t - w_{t-1})) \;,
	\end{equation*}
	which means component-wise that
	\begin{equation}
		\begin{aligned}
		x_{t+1} 
		&= 
		x_t
		- \eta
		(A q_t + A(q_{t} - q_{t-1})) \\
		\text{and}\quad
		y_{t+1}
		&= 
		y_t
		+ \eta 
		(A^\top p_t + A^\top(p_{t} - p_{t-1})) \;.
		\end{aligned}
		\label{eq:A2-02}
	\end{equation}
	Considering only the first coordinates
	of the components in~\eqref{eq:A2-02}
	and applying the simplifications from~\eqref{eq:A2-01}
	yields the first statement of the proposition. 
	
	For the second statement, observe that
	all $z_t = (x_t, y_t) \in \calZ = \calX \times \calY$ by definition. 
	Moreover, by Proposition~\ref{prop:omwu-skew-grad},
	we have $w_t = \nabla F(z_t)$ for all $t \ge 0$.
	Then applying Part (ii) of 
	Proposition~\ref{prop:2x2-energy-map-simplified}
	yields the second statement and concludes the proof. 
\end{proof}

\smallskip

\begin{restatable}[Leading coordinates suffice
for tracking OMWU iterates]{rem}{remfirstcoords}
	\label{remark:2x2-omwu-first-coords}
	Due to Proposition~\ref{prop:2x2-effective-dual-orthogonal},
	for $z = (x, y) \in \calZ$,
	observe that 
	the leading coordinates $x(1) \in \R$ and $y(1) \in \R$
	fully specify $z$.
	Similarly, for $w = (p, q) \in \calW$,
	the leading coordinates $p(1) \in [0, 1]$ and $q(1) \in [0,1]$
	fully specify $w$.
	Thus, the simplified expressions 
	in Proposition~\ref{prop:omwu-pd-2x2}
	are sufficient for fully specifying the 
	the OMWU iterates $\{z_t\}$ and $\{w_t\}$
	under Setting~\ref{setting:2x2-canonical}.
\end{restatable}

\subsection{Montonicity and Cycling Properties of OMWU Iterates}
\label{app:2x2-setup:monotonicity}

Using the simplified OMWU update expressions
from Proposition~\ref{prop:omwu-pd-2x2},
we prove in this section an additional set of 
monotonicty properties that the OMWU iterates satisfy 
under Setting~\ref{setting:2x2-canonical}. 
In the statement of the lemma, we assume without loss
of generality that $\delta_p, \delta_q < \frac{1}{2}$,
and that the stepsize $\eta$ is bounded by an absolute constant.

\smallskip

\begin{restatable}{lem}{lemdriftbounds}
	\label{lemma:2x2-drift-bounds-unified}
	Fix $0 < \delta_p \le \delta_q < \frac{1}{2}$
	and assume Setting~\ref{setting:2x2-canonical}.
	Let $\{w_t\}$ and $\{z_t\}$ be the primal and dual
	iterates of OMWU with stepsize $0 < \eta < \frac{1}{8}$,
	initialized from some $w_0 \in \relint(\calW)$.
	For each $t \ge 1$,
	let $\w x_t = x_t(1), \w y_t = y_t(1)$
	and $\w p_t = p_t(1), \w q_t = q_t(1)$.
	Fix any $k \ge 1$.
	Then the following hold: 
	\begin{enumerate}[
		label={(\roman*)},
		itemsep=0em,
		topsep=0em,
		leftmargin=3em,
	]
		\item
		Suppose $\w q_{k-1} \le 1 - 3 \delta_q$.
		Then:
		$\w x_{k+1} - \w x_k \ge \frac{\eta}{2} \cdot \delta_p \delta_q$.

		\item
		Suppose $\w q_{k-1} \ge 1 - \frac{\delta_q}{3}$.
		Then: 
		$\w x_{k+1} - \w x_k \le - \frac{\eta}{3} \cdot \delta_p \delta_q$.

		\item
		Suppose $\w p_{k-1} \le 1 - 3 \delta_p$. 
		Then:
		$\w y_{k+1} - \w y_k \le -  \frac{\eta}{2} \cdot \delta_p \delta_q $.

		\item
		Suppose $\w p_{k-1} \ge 1- \frac{\delta_p}{3}$.
		Then: 
		$\w y_{k+1} - \w y_k \ge \frac{\eta}{3} \delta_p \delta_q$. 
	\end{enumerate}

\end{restatable}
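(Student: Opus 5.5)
The plan is a direct computation from the leading-coordinate update of Proposition~\ref{prop:omwu-pd-2x2}. For $k \ge 1$ that update reads
\begin{equation*}
\w x_{k+1} - \w x_k = -\eta \delta_p \big( (\w q_k - (1-\delta_q)) + (\w q_k - \w q_{k-1}) \big) = -\eta\delta_p\big(2\w q_k - \w q_{k-1} - (1-\delta_q)\big),
\end{equation*}
with the analogous expression for $\w y_{k+1} - \w y_k$ carrying a plus sign and $\delta_q, \w p$. The hypotheses constrain $\w q_{k-1}$ rather than $\w q_k$, so the key step is to bound the one-step primal movement $|\w q_k - \w q_{k-1}|$ and reduce everything to a statement about $\w q_{k-1}$.

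\textbf{Step 1 (one-step movement).} By Proposition~\ref{prop:2x2-energy-map-simplified}, $\w q = \sigmoid(\w y/\delta_q)$, and $\sigmoid' = \sigmoid(1-\sigmoid)$. Hence
\begin{equation*}
|\w q_k - \w q_{k-1}| \le \max_{\xi} \w q_\xi (1-\w q_\xi)\cdot \frac{|\w y_k - \w y_{k-1}|}{\delta_q}.
\end{equation*}
The dual step satisfies $|\w y_k - \w y_{k-1}| \le \eta\delta_q\,|2\w p_{k-1} - \w p_{k-2} - (1-\delta_p)| \le 2\eta\delta_q$. (For $k=1$ the step has no correction term and the same bound holds.) The $\delta_q$ factors cancel, so the logit of $\w q$ moves by at most $2\eta < 1/4$.

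We need the movement to be small \emph{relative to $1-\w q$}, so the multiplicative form is the useful one. Writing $1-\w q = \sigmoid(-\w y/\delta_q)$, a logit change of at most $2\eta$ gives
\begin{equation*}
e^{-2\eta}(1-\w q_{k-1}) \le 1-\w q_k \le e^{2\eta}(1-\w q_{k-1}),
\end{equation*}
with $e^{\pm 2\eta} \in [e^{-1/4}, e^{1/4}]$.

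\textbf{Step 2 (cases (i) and (ii)).} Set $u_j = 1-\w q_j$. The drift becomes
\begin{equation*}
\w x_{k+1} - \w x_k = \eta\delta_p\,\big(2u_k - u_{k-1} - \delta_q\big).
\end{equation*}
\emph{Case (i).} Here $u_{k-1} \ge 3\delta_q$. Using $u_k \ge e^{-1/4}u_{k-1}$, we get $2u_k - u_{k-1} - \delta_q \ge (2e^{-1/4}-1)u_{k-1} - \delta_q$. Since $2e^{-1/4}-1 \approx 0.557$, this is at least $(0.557\cdot 3 - 1)\delta_q \approx 0.67\,\delta_q \ge \delta_q/2$.

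\emph{Case (ii).} Here $u_{k-1} \le \delta_q/3$. The upper bound $u_k \le e^{1/4}u_{k-1}$ does not give a positive lower bound on $u_{k-1}$, so I write the drift as $2u_k - u_{k-1} - \delta_q \le (2e^{1/4}-1)u_{k-1} - \delta_q$. Since $2e^{1/4}-1 \approx 1.568$, this is at most $(1.568/3 - 1)\delta_q \approx -0.48\,\delta_q \le -\delta_q/3$.

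\textbf{Step 3 (cases (iii) and (iv)).} These are symmetric. For $\w y$ the drift is $+\eta\delta_q(2\w p_k - \w p_{k-1} - (1-\delta_p))$. Writing $v_j = 1-\w p_j$, it becomes $\eta\delta_q(\delta_p - 2v_k + v_{k-1})$. The same multiplicative bound on $v_k/v_{k-1}$ holds, because the $\w x$ step is at most $2\eta\delta_p$ and the logit of $\w p$ is $\w x/\delta_p$. Case (iii) then gives the drift $\le -\tfrac{\eta}{2}\delta_p\delta_q$ exactly as in case (i), and case (iv) gives $\ge \tfrac{\eta}{3}\delta_p\delta_q$ as in case (ii).

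\textbf{Main obstacle.} The main obstacle is Step 1: obtaining the multiplicative (logit-scale) stability of $1-\w q$. An additive bound would fail when $u$ is of order $\delta_q$. The cancellation of $\delta_q$ between the payoff scale and the sigmoid's $1/\delta_q$ is exactly what makes the bound uniform in $\delta_q$. The constants have slack under $\eta < 1/8$. The assumption $\delta_p \le \delta_q < 1/2$ should only be needed to make the hypotheses non-vacuous (e.g.\ $1-3\delta_q$ versus $1-\delta_q/3$), not in the estimates themselves.
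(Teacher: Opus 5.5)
Your proposal is correct and follows essentially the same route as the paper: bound the logit step of $\w q$ by $2\eta$ (the $\delta_q$ from the payoff scale cancels the $1/\delta_q$ in the sigmoid), deduce $e^{-2\eta}(1-\w q_{k-1}) \le 1-\w q_k \le e^{2\eta}(1-\w q_{k-1})$, and substitute into the drift rewritten as $\eta\delta_p\bigl(2(1-\w q_k)-(1-\w q_{k-1})-\delta_q\bigr)$. The only difference is cosmetic: the paper linearizes via $e^{-2\eta}\ge 1-2\eta$ and $e^{2\eta}\le 1+4\eta$, which makes the constants $\tfrac12$ and $\tfrac13$ come out exactly at $\eta=\tfrac18$, whereas your numerical constants $e^{\pm 1/4}$ leave some slack.
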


\smallskip

\begin{proof}
We start by proving Part (i) of the lemma,
and we note that Parts (ii), (iii), and (iv) will follow
by nearly identical calculations.
For this, using the update rule of 
Proposition~\ref{prop:omwu-pd-2x2},
it follows that 
\begin{equation}
	\textstyle
	\w x_{k+1} - \w x_k \ge \frac{\eta}{2} \delta_p \delta_q
	\iff
	1 - \delta_q - 2 \w q_k + \w q_{k-1} \ge \frac{\delta_q}{2} \;.
	\label{eq:qd-01}
\end{equation}
For this, let $\w q_{k-1} = 1-\alpha_{k-1}$
for $\alpha_{k-1} \in (0,1)$. 
Our goal is to derive a lower bound on $\w q_k$
in terms of $\w q_{k-1}$ that will facilitate
establishing the inequality in the right hand side 
of~\eqref{eq:qd-01}.
For this, observe again by the dual update rule
of Proposition~\ref{prop:omwu-pd-2x2} that
\begin{equation}
	\w y_k - \w y_{k-1}
	= 
	\eta \delta_q 
	\big(
		\w p_k - (1-\delta_p) + \w p_k - \w p_{k-1}
	\big) 
	\le
	2 \eta \delta_q \,
	\label{eq:qd-02}
\end{equation}
where the inequality follows from the fact that
$\w p_k \le 1$, that $\w p_k - \w p_{k-1} \le 1$,
and that $\delta_p > 0$.
Then by definition of $\w q_k = \sigmoid(\w y_k / \delta_q)$,
and using the fact that $1 - \sigmoid(u) = 1/(1+ \exp(u))$
for any $u \in \R$, it follows that
\begin{align*}
	1 - \w q_k 
	= 
	\frac{1}{1 + \exp(\w y_k / \delta_q)} 
	&\ge
	\frac{1}{1+ \exp(\w y_{k-1}/\delta_q)\cdot \exp(2 \eta)} \\
	&\ge 
	\frac{1}{(1+ \exp(\w y_{k-1}/\delta_q))\cdot \exp(2 \eta)} 
	= 
	\exp(- 2 \eta) (1- \w q_{k-1}) \;.
\end{align*}
Here, the first inequality comes from applying the bound 
$\w y_{k} \le \w y_{k-1} + 2 \eta \delta_q$
from~\eqref{eq:qd-02}, 
the second inequality comes from
$\exp(2\eta) \ge 1$,
and the final equality comes from 
$\w q_{k-1} = \sigmoid(\w y_{k-1}/\delta_q)$.
As $\exp(-u) \ge 1-u$ for all $u \in \R$, we then further have
$1-\w q_k \ge (1-2 \eta) (1-\w q_{k-1})$.
Using the definition $1- \w q_{k-1} = \alpha_{k-1}$,
this means that $\w q_{k} \le 1 - (1-2\eta) \alpha_{k-1}$,
and thus we have 
\begin{align*}
	1-\delta_q - 2 \w q_k + \w q_{k-1}
	&\ge 
	1-\delta_q - 2(1 - (1-2\eta)) \alpha_{k-1} + 1 - \alpha_{k-1} 
	= 
	(1- 4\eta) \alpha_{k-1}
	- \delta_q \;.
\end{align*}
Now under the assumption that 
$\w q_{k-1} = 1-\alpha_{k-1} \le 1 - 3 \delta_q$,
we have $\alpha_{k-1} \ge 3 \delta_q$.
Moreover as we assume $\eta \le 1/8$, we have
$1-4\eta \ge 1/2$. This allows us to further write
\begin{equation*}
	\textstyle
	1- \delta_q - 2 \w q_k + \w q_{k-1}
	\ge 
	\frac{3}{2}\delta_q - \delta_q
	= 
	\frac{1}{2} \delta_q\;,
\end{equation*}
which exactly establishes the right-hand side
of~\eqref{eq:qd-01} and thus the statment of Part (i)
of the lemma.

The proof of Part (ii) follows via nearly identical calculations
as the poof of Part (i).
In particular, we further use the identity 
$\exp(u) = \sigmoid(u)/(1-\sigmoid(u))$
for $u \in \R$ and the fact that $\exp(2\eta) \le 1 + 4\eta$
when $\eta \le 1/8$.
The proofs of Part (iii) and Part (iv) then follow
identically to Parts (i) and (ii), respectively.
This is due to the symmetry of the dual update rules 
for $\w x_{k+1}$ and $\w y_{k+1}$ in 
Proposition~\eqref{prop:omwu-pd-2x2},
and thus we omit these calculations. 
\end{proof}

\subsubsection{Maximum Width of Dual Coordinates}

In the following proposition, we establish 
a bound on the coordinates of a dual variable
in terms of its energy value.
This is used in the sequel to establish an upper
bound on the maximum distance between any two dual 
(and also primal) OMWU iterates. Formally:

\smallskip

\begin{restatable}[Energy bound implies dual coordinate bound]
{prop}{propmaxdualwidth}
	\label{prop:2x2-max-dual-width}
	Fix $\delta_p, \delta_q \in (0, 1)$
	and assume Setting~\ref{setting:2x2-canonical}.
	Fix $z = (x, y) \in \calZ$, and let $\alpha > 0$.
	If $F(z) \le \alpha$, then the following hold:
	\begin{equation*}
		\textstyle
		- \big(\frac{\delta_p}{1-\delta_p} \big) \cdot \alpha 
		\le x(1) \le \alpha
		\quad\text{and}\;\;
		- \big( \frac{\delta_q}{1-\delta_q}\big) \cdot \alpha
		\le y(1) \le \alpha \;.
	\end{equation*}
\end{restatable}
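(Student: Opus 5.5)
The plan is to work directly from the closed-form expression for the energy over $\calZ$ given in Part (i) of Proposition~\ref{prop:2x2-energy-map-simplified}. For $z=(x,y)\in\calZ$ we have
\begin{equation*}
F(z) = \LSE(x)+\LSE(y) = \big[x(1) + \log(1+e^{-x(1)/\delta_p})\big] + \big[y(1) + \log(1+e^{-y(1)/\delta_q})\big].
\end{equation*}
The first step is to show that each bracket is nonnegative. Then $F(z)\le\alpha$ forces each bracket individually to be at most $\alpha$, and the claim decouples into two identical one-dimensional statements. Nonnegativity should follow from the fact that the bracket equals $\LSE(x)$ with $x=(x(1),x(2))$ satisfying $\langle x,p^\star\rangle=0$ (Proposition~\ref{prop:2x2-effective-dual-orthogonal}). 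By convexity, $\LSE(x)\ge \langle p^\star,x\rangle + \ent(p^\star) \ge 0$ via Fenchel--Young, since $\LSE(x)=\max_{p}\{\langle p,x\rangle+\ent(p)\}$. More directly, $\LSE(x)\ge \max(x(1),x(2))$, and one of $x(1)$, $x(2)=x(1)(1-1/\delta_p)$ is nonnegative. Either route gives $0\le \LSE(x)\le\alpha$, and likewise for $y$.

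Next I will treat the scalar function $g(u)=\log(e^{u}+e^{u(1-1/\delta_p)})$ on the line and extract the two bounds from $g(u)\le\alpha$. For the upper bound, $g(u)\ge u$ because $\log(1+e^{-u/\delta_p})\ge0$, so $x(1)\le\alpha$. For the lower bound, $g(u)\ge u(1-1/\delta_p) = -u(1-\delta_p)/\delta_p$, which is the second-coordinate term. This gives $-x(1)\cdot\frac{1-\delta_p}{\delta_p}\le\alpha$, hence $x(1)\ge -\frac{\delta_p}{1-\delta_p}\alpha$. The same argument with $\delta_q$ handles $y(1)$.

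There is no real obstacle. The only point needing care is the decoupling step, namely verifying that each $\LSE$ summand is individually nonnegative on the effective dual space. This is where the component-wise orthogonality of Proposition~\ref{prop:2x2-effective-dual-orthogonal} (which relies on the game value being zero) is used. Without it, one summand could be negative and the bound on the other would degrade.
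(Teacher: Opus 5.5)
Your proposal is correct and follows the paper's proof essentially step for step. Both use the closed form of $F$ on the effective dual space and the nonnegativity of each log-sum-exp summand to decouple the two components. Both then use $\log(1+e^{-u/\delta_p})\ge 0$ for the upper bound and $\log(1+e^{-u/\delta_p})\ge -u/\delta_p$ for the lower bound. Your justification of nonnegativity is cleaner than the paper's, which only asserts it from convexity: the summand is at least $\max(x(1),x(2))$, and $x(1)$ and $x(2)=x(1)(1-1/\delta_p)$ have opposite signs (alternatively, apply Fenchel--Young at $p^\star$).
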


\smallskip

\begin{proof}
Fix $z = (x, y) \in \calZ$, and let $\w x = x(1) \in \R$ and 
$\w y = y(1) \in \R$.
Recall by definition of the energy function $F$
and its simplification over $\calZ$ from 
Proposition~\ref{prop:2x2-energy-map-simplified} that
\begin{equation*}
	F(z) 
	= \LSE(x) + \LSE(y) 
	= \big(\w x + \log(1+\exp(-\w x/\delta_p))\big)
	+ \big(\w y + \log (1 + \exp(-\w y / \delta_q))\big)
	\;.
\end{equation*}
Note that over $\calZ$, both terms
$\w x + \log(1+\exp(-\w x/\delta_p)$
and 
$\w y + \log(1+\exp(-\w y/\delta_q)$
are convex (in particular, minimized at the coordinates 
$x^\star(1)$ and $y^\star(1)$ from
Corollary~\eqref{cor:2x2-dual-nash})
and always positive for all $\delta_p, \delta_q \in (0, 1)$.
Thus, for any $z = (x, y) \in \calZ$,
both  $\LSE(x) \ge 0$ and $\LSE(y) \ge 0$.
Therefore, $F(z) \le \alpha$ implies that both
\begin{equation*}
	\LSE(x) \le \w x + \log\big(1 + \exp(-\w x / \delta_p))  \le \alpha
	\;\;\text{and}\;\;
	\LSE(y) \le \w y + \log\big(1 + \exp(-\w y / \delta_q)) \le \alpha\;.
\end{equation*}
For the $\w x $ variable, observe further that
$\log(1 + \exp(-\w x / \delta_p)) \ge 0$, and thus
\begin{equation*}
	\w x \le \alpha - \log(1+\exp(-\w x /\delta_p)) \le \alpha \;.
\end{equation*}
This proves the upper bound on $x(1) = \w x$.
For the lower bound, recall that
$\log(1+\exp(-u)) \ge -u$ for any $u \in \R$.
Thus if $\LSE(x) \le \alpha$, then
\begin{equation*}
\alpha 
\ge 
\w x + \log(1+\exp(-\w x / \delta_p))
\ge \w x - ( \w x/\delta_p) = \w x((\delta_p - 1)/\delta_p)
= - \w x((1-\delta_p)/\delta_p) \;.
\end{equation*}
Then rearranging gives
\begin{equation*}
	\w x \ge - (\delta_p/ (1-\delta_p)) \cdot \alpha \;,
\end{equation*}
which proves the lower bound on $x(1)$.
The bounds on $y(1) = \w y$ follow by identical calculations. 
\end{proof}

\subsection{Structural Properties of Distances to Nash in 2x2 Setting}
\label{app:2x2-setup:energy-distances}

In this section, we give simplified expressions
for duality gap, TV distance, and KL divergence
under Setting~\ref{setting:2x2-canonical}
(i.e., when $A := A_{\delta_p, \delta_q}$
is the canonical matrix from Definition~\ref{def:A-2x2}).
For this, we consider throughout the following subsections
$w = (p, q) \in \calW = \Delta_2 \times \Delta_2$, and we use the 
notation $\w p = p(1) \in [0, 1]$ and $\w q = q(1) \in [0, 1]$ for readability. 
Moreover, recall by Proposition~\ref{prop:A-properties}
that the unique and interior NE $w^\star = (p^\star, q^\star)$ of 
$A = A_{\delta_p, \delta_q}$ 
is given by $p^\star = (1-\delta_p, \delta_p)$
and $q^\star = (1-\delta_q, \delta_q)$.

\subsubsection{Duality Gap under Setting~\ref{setting:2x2-canonical}}
\label{sec:2x2-dg}

For $w = (p, q) \in \calW$,
recall that duality gap is defined as
\begin{equation*}
	\textstyle
	\DG(w) = \max_{q' \in \Delta_2}
	\langle q', A^\top p \rangle
	- \min_{p' \in \Delta_2} \langle p', A q\rangle .
\end{equation*}
Then in terms of the leading coordinates $\w p = p(1)$ and $\w q = q(1)$,
we have by definition of $A_{\delta_p, \delta_q}$:
\begin{align*}
	\DG(w) 
	&= 
	\max\big\{ (A^\top p)(1), (A^\top p)(2) \big\}
	+ 
	\max\big\{ (-Aq)(1), (-Aq)(2) \big\} \\
	&= 
	\max \big\{  
		(\w p - (1-\delta_p))\delta_q, 
		(1-\delta_p - \w p) (1-\delta_q)
	\big\}  \\
	&\qquad\qquad + 
		\max \big\{
			(1- \delta_q - \w q)\delta_p, 
			(\w q - (1-\delta_q)) (1-\delta_p)
		\big\} \;.
\end{align*}
Here, we also used the fact that $p(2) = 1 - \w p$
and $q(2) = 1- \w q$.
Thus $\DG(w)$ has a piecewise linear structure, 
which we can summarize depending on the 
location of $(\w p, \w q)$ relative to 
$(1-\delta_p, 1-\delta_q)$
(i.e., the leading coordinates of the unique and interior NE of 
$A_{\delta_p, \delta_q}$).
Then $\DG(w)$ is given by:
\begin{equation}
	\DG(w) = 
	\begin{cases}
		(\w p - (1-\delta_p)) \delta_q + (\w q - (1-\delta_q)) (1-\delta_p) 
        & \text{if $\w p \ge 1-\delta_p$, \, $\w q \ge 1- \delta_q$} \\
		(1-\delta_p - \w p)(1-\delta_q) + (\w q - (1-\delta_q)) (1-\delta_p) 
        & \text{if $\w p \le 1-\delta_p$, \, $\w q \ge 1- \delta_q$} \\
		(1-\delta_p - \w p )(1-\delta_q) + (1-\delta_q - \w q)\delta_p 
        & \text{if $\w p \le 1-\delta_p$, \, $\w q \le 1-\delta_q$} \\
		(\w p - (1-\delta_p))\delta_q + (1-\delta_q - \w q)\delta_p 
        & \text{if $\w p \ge 1-\delta_p$, \, $\w q \le 1-\delta_q$}\;.
	\end{cases}
	\label{eq:dg-2x2-full}
\end{equation}

\subsubsection{TV Distance under Setting~\ref{setting:2x2-canonical}}
\label{sec:2x2-tv}

Recall by definition that the total variation distance 
$\TV(w^\star, w)$ is given by 
\begin{equation*}
	\textstyle
	\TV(w^\star, w) = 
	\TV(p^\star, p)
	+ \TV(q^\star, q)
	= 
	\frac{1}{2}\|p^\star - p\|_1
	+ 
	\frac{1}{2}\|q^\star - q\|_1 \;.
\end{equation*}
As $p = (\w p, 1-\w p)$ and $p^\star = (1-\delta_p, \delta_p)$,
observe that
\begin{equation*}
	\textstyle
 	\|p^\star - p\|_1
	= 
	| 1- \delta_p - \w p | 
	+ 
	| \delta_p - (1 - \w p) |
	= 
	2 \cdot | 1 - \delta_p - \w p | \;.
\end{equation*}
Similarly, $\|q^\star - q\|_1 = 2 \cdot | 1- \delta_q - \w q|$.
Thus we have that 
\begin{equation}
	\TV(w^\star, w)
	= \TV(p^\star, p)
	+ \TV(q^\star, q)
	= 
	|1-\delta_p - \w p|
	+ 
	| 1- \delta_q - \w q |\;.
	\label{eq:tv-2x2-full}
\end{equation}

\subsubsection{KL Divergence under 
Setting~\ref{setting:2x2-canonical}}
\label{sec:2x2-kl}

Recall by definition that the KL divergence 
$\KL(w^\star, w)$ is given by
\begin{equation*}
	\KL(w^\star, w)
	= 
	\KL(p^\star, p)
	+ \KL(q^\star, q) \;.
\end{equation*}
Then under this setting, the components $\KL(p^\star, p)$
and $\KL(q^\star, q)$ can be simplified as:
\begin{equation}
	\begin{aligned}
	\KL(p^\star, p) 
	&= 
	\sum\nolimits_{i=1}^2 p^\star(i) \log\Big(\frac{p^\star(i)}{p(i)}\Big)
	= 
	(1-\delta_p) \cdot \log\Big(\frac{1-\delta_p}{\w p}\Big)
	+ \delta_p \cdot \log\Big(\frac{\delta_p}{1-\w p}\Big)\\
	\KL(q^\star, q) 
	&= 
	\sum\nolimits_{j=1}^2 q^\star(j) \log\Big(\frac{q^\star(j)}{q(j)}\Big)
	= 
	(1-\delta_q) \cdot \log\Big(\frac{1-\delta_q}{\w q}\Big)
	+ \delta_p \cdot \log\Big(\frac{\delta_q}{1-\w q}\Big)\;.
	\end{aligned}
	\label{eq:kl-2x2-full}
\end{equation}

\subsubsection{Comparison of Levelsets Between Distances}
\label{sec:2x2-levelsets-compare}

In the following set of figures, we plot the levelsets 
of $\DG(w)$, $\TV(w^\star, w)$, and $\KL(w^\star, w)$
for the $2\times 2$ setting using the simplified
expressions from Section~\ref{app:2x2-setup:energy-distances}.
We plot the levelsets over the leading coordinates
$(p(1), q(1))$ 
for three different settings  of $w^\star = (p^\star ,q^\star)$.
In each subplot, $w^\star$ is denoted by the yellow star. 
The figures highlight a fundamental distinction between 
the geometries of these distance functions over the simplex.

\paragraph{Figure~\ref{fig:levelset-uniform}: Uniform NE.}
Here, $p^\star = q^\star = (0.5, 0.5)$.
\captionsetup[figure]{labelfont=small}
\begin{figure}[htb!]
	\centering
	\includegraphics[width=0.85\textwidth]{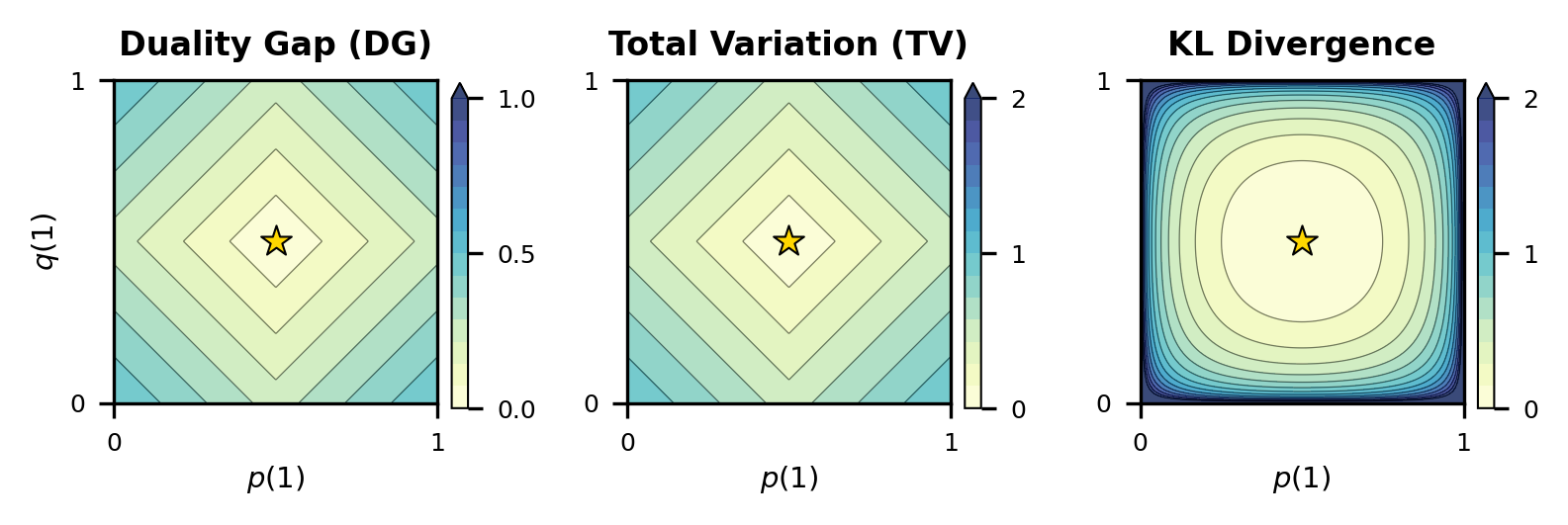}
	\vspace*{-1em}	
	\caption{%
	    \footnotesize
	    $\DG(w)$, $\TV(w^\star, w)$, and $\KL(w^\star, w)$ levelsets
	    in the $2\times 2$ setting 
	    for $p^\star = q^\star = (0.5, 0.5)$.
	}   
	\label{fig:levelset-uniform}
\end{figure}

\paragraph{Figure~\ref{fig:levelset-boundary-1}: 
Boundary NE in one component.}
Here, $p^\star = (0.9, 0.1)$
and $q^\star =(0.5, 0.5)$.

\captionsetup[figure]{labelfont=small}
\begin{figure}[htb!]
	\centering
	\includegraphics[width=0.85\textwidth]{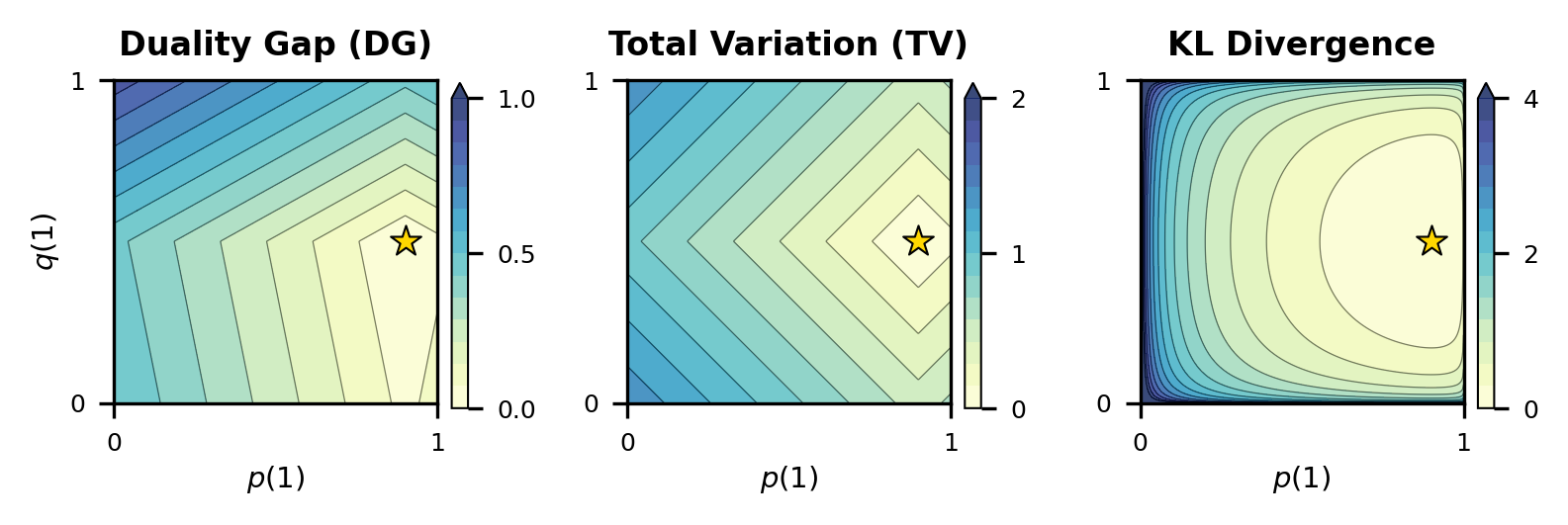}
	\vspace*{-1em}	
	\caption{%
	    \footnotesize
	    $\DG(w)$, $\TV(w^\star, w)$, and $\KL(w^\star, w)$ levelsets
	    in the $2\times 2$ setting 
	    for $p^\star = (0.9, 0.1)$, and $q^\star = (0.5, 0.5)$.
	}    
	\label{fig:levelset-boundary-1}
\end{figure}

\paragraph{Figure~\ref{fig:levelset-boundary-2}: 
Boundary NE in both components.}
Here, $p^\star = q^\star = (0.9, 0.1)$.

\captionsetup[figure]{labelfont=small}
\begin{figure}[htb!]
	\centering
	\includegraphics[width=0.9\textwidth]{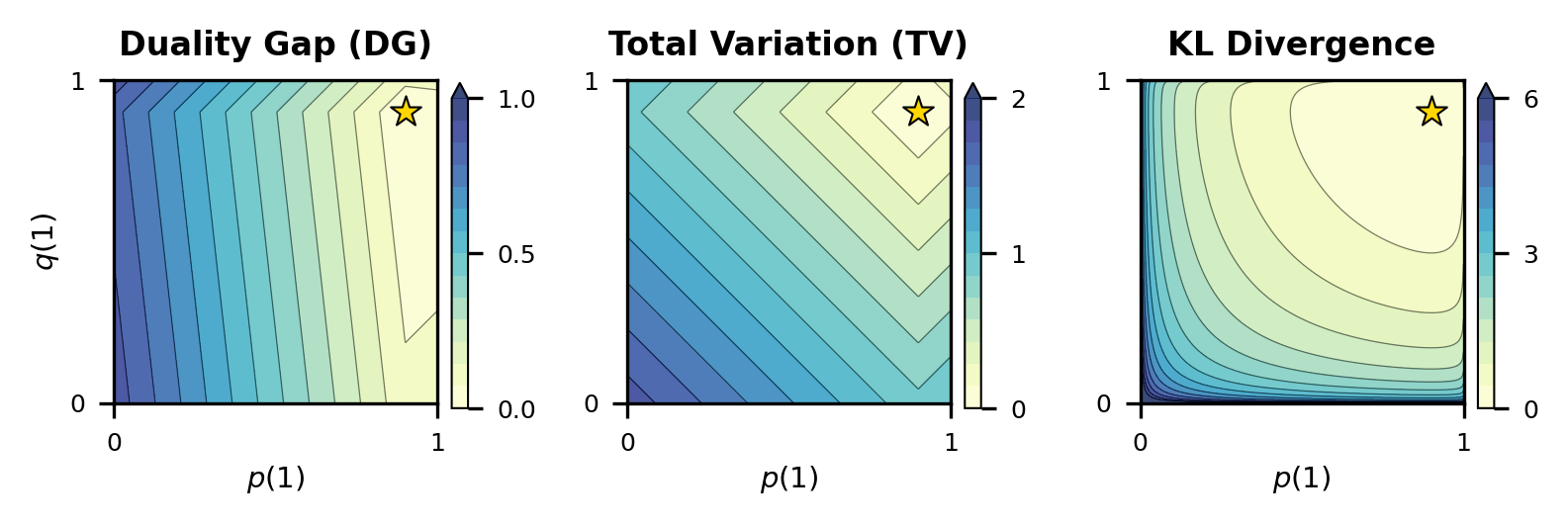}
	\vspace*{-1em}
	\caption{%
	    \footnotesize
	    $\DG(w)$, $\TV(w^\star, w)$, and $\KL(w^\star, w)$ levelsets
	    in the $2\times 2$ setting 
	    for $p^\star = q^\star = (0.9, 0.1)$.
	}    
	\label{fig:levelset-boundary-2}
\end{figure}

\bigskip

\paragraph{Discussion.}
Observe that the geometry of the levelsets of each distance function
is highly dependent on the location of the Nash equilibrium $w^\star$.
In particular, when $w^\star$ is closer to the simplex boundary 
(Figure~\ref{fig:levelset-boundary-1} and Figure~\ref{fig:levelset-boundary-2}),
the $\KL(w^\star, w)$ levelsets become 
``squished'' near the boundary closest to $w^\star$.
Note in these regions that a primal variable $w = (p, q)$  can be 
simultaneously close to $w^\star$ in $\TV$ distance
(which is symmetric), yet still far away in $\KL(w^\star, w)$. 
This property is captured in our uniform best-iterate lower
bound of Theorem~\ref{thm:uniform-lb-secondary}, Part (1).
Additionally, observe in Figure~\ref{fig:levelset-boundary-2} 
that for $w^\star$ close to a vertex, a primal variable
$w = (p, q)$ with, for example, $p(1)= 0.95$
and $q(1) = 0.5$, can simultaneously be close in $\DG(w)$,
but far in $\TV(w^\star, w)$ and $\KL(w^\star, w)$.
This property is captured by the separation in uniform 
best-iterate guarantees between
$\TV$ and $\KL$ (via the constant lower bound of 
Theorem~\ref{thm:uniform-lb-main})
and $\DG$ (via the $\widetilde O(T^{-1/2})$
best-iterate rate of Theorem~\ref{thm:dg-best-2x2}).
Finally, note that the location of the NE $w^\star$
in the example of Figure~\ref{fig:levelset-boundary-1}
roughly corresponds to that of the hard $2\times 2$
instance used in the results of~\cite{FGKLLZ24}.


\section{Details on Lower Bound for Universal Last-Iterate Convergence in KL}
\label{app:kl-last-lower}

This section gives the proof of
Theorem ~\ref{thm:kl-last-lower},
which gives a universal lower bound on the 
convergence rate of OMWU in $\KL$ divergence.
We first restate the theorem:

\thmkllastlower*

\smallskip

Observe that in Theorem~\ref{thm:kl-last-lower},
the dependence on $\delta$ in the lower bound
matches that in the upper bound of Theorem~\ref{thm:kl-last-unified},
indicating that our analysis is tight.
The proof of the theorem follows two steps that we outline here:

\paragraph{High-level overview of proof.}
Let $\{w_t\}$ and $\{z_t\}$ denote the
primal and dual iterates of the OMWU algorithm.
We have by Corollary~\ref{cor:change-kl-energy-equiv-omwu}
and the lower bound on energy dissipation 
from Lemma~\ref{lem:energy-one-step-full} that, for all $t \ge 0$:
\begin{equation*}
    \KL(w^\star, w_{t+1})
    - \KL(w^\star, w_t)
    = F(z_{t+1}) - F(z_t)
    \ge - \tfrac{5}{4} \eta^2 \|J \nabla F(z_t)\|^2_{z_t} \;.
\end{equation*}
Recall in Proposition~\ref{prop:dissipation-kl-bound}
that we established a lower bound on the dissipation
term $\|J \nabla F(z_t)\|^2_{z_t}$. This translates
into the upper bound on last-iterate convergence rate in $\KL$
in Theorem~\ref{thm:kl-last-unified}.
For the present lower bound of Theorem~\ref{thm:kl-last-lower},
we first establish in Proposition~\ref{prop:kl-dissipation-upper-2x2}
an analogous \textit{non-uniform upper bound} on the
dissipation term $\|J \nabla F(z)\|^2_{z}$.

In particular, this upper bound is a structural and general
property that we prove for the $2\times 2$ setting, 
and which holds over a region of the primal space $\calW$
near the simplex boundary. This leads to a much tighter
bound compared to the crude uniform upper bound of 
Proposition~\ref{prop:skew-grad-uniform-upper}.
Similar to the lower bound on $\|J \nabla F(z)\|^2_{z}$,
the upper bound of Proposition~\ref{prop:kl-dissipation-upper-2x2}
contains a non-uniform multiplicative factor 
that depends on the minimum coordinates $p_{\min}$ 
and $q_{\min}$ of the corresponding primal variable $w = (p, q) = \nabla F(z)$.
This further indicates that the local state-dependence
in the lower bound of Proposition~\ref{prop:dissipation-kl-bound}
is necessary. 

Then, the last-iterate lower bound of Theorem~\ref{thm:kl-last-lower}
follows by constructing a $2\times 2$ payoff matrix and a set of 
initializations where we establish
an upper bound on $p_{t, \min}$ and $q_{t, \min}$
that is exponentially small in the minimum Nash coordinate 
$\delta$, and that holds for all iterates $t$ over a fixed time horizon.
This relies on the preliminaries for OMWU 
in the $2\times 2$ setting that we developed in Section~\ref{app:omwu-2x2}.

\paragraph{Roadmap of the section.}
In Section~\ref{app:dissipation-kl:upper}, 
we state and prove the intermediate result of 
Proposition~\ref{prop:kl-dissipation-upper-2x2}.
The full proof of Theorem~\ref{thm:kl-last-lower}
then follows in Section~\ref{app:kl-last-iterate:lower}.

\subsection{Non-Uniform Upper Bound on Dissipation Term}
\label{app:dissipation-kl:upper}

In this section, we establish a non-uniform 
\textit{upper bound} on the dissipation term
$\|J \nabla F(z)\|^2_z$. 
For simplicity, we focus on the low-dimensional 
setting where $A \in \R^{2 \times 2}$.

\begin{restatable}[Non-uniform upper bound on dissipation term]
{prop}{propkldissipationuppertwotwo}
    \label{prop:kl-dissipation-upper-2x2}
    Let $A \in \R^{2\times 2}$
    have a unique and interior Nash equilibrium
    $w^\star = (p^\star, q^\star)$ such that
    $p^\star = q^\star = (1-\delta, \delta)$
    for $\delta \in (0, 0.5)$.
    Let $\sigma_{\max} = \|A\|_2$.
    Consider $z \in \R^{2+2}$ and 
    $w = (p,q) = \nabla F(z)$
    such that $p_{\min} = p(2) \le \delta$ 
    and $q_{\min} = q(2) \le \delta$. 
    Define $\widehat w_{\min} = \max(p_{\min},q_{\min})$.
    Then: 
    \begin{equation*}
        \|J \nabla F(z)\|^2_z
        \le 
        8 \cdot \sigma^2_{\max} \cdot \widehat w_{\min} \cdot \delta 
        \cdot \KL(w^\star, w) \;.
    \end{equation*}
\end{restatable}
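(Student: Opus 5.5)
We plan to use the primal characterization from Part (i) of Proposition~\ref{prop:gradient-dom-primal}, which gives $\|J \nabla F(z)\|^2_z = \Var_p(Aq) + \Var_q(A^\top p)$, and bound each variance directly in the $2\times 2$ setting. For a two-point distribution $p = (1-p_{\min}, p_{\min})$ and any $v \in \R^2$, $\Var_p(v) = p_{\min}(1-p_{\min})(v(1)-v(2))^2 \le p_{\min}\,(v(1)-v(2))^2$. Hence the factor $p_{\min}$ (resp.\ $q_{\min}$) appears explicitly, and it remains to control the squared coordinate differences of $Aq$ and $A^\top p$ by $\KL$ times $\delta$.

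For the first term, we use Part (i) of Proposition~\ref{prop:interior-NE}: $Aq^\star$ is constant, so $(Aq)(1)-(Aq)(2) = (A(q-q^\star))(1) - (A(q-q^\star))(2)$. This gives $|(Aq)(1)-(Aq)(2)| \le \sqrt{2}\,\|A(q-q^\star)\|_2 \le \sqrt{2}\,\sigma_{\max}\|q-q^\star\|_2$. Since $q - q^\star = (\delta - q_{\min})(1,-1)$, we have $\|q-q^\star\|_2^2 = 2(\delta-q_{\min})^2$. Thus $\Var_p(Aq) \le 4\sigma_{\max}^2\, p_{\min}\,(\delta - q_{\min})^2$, and symmetrically $\Var_q(A^\top p) \le 4\sigma^2_{\max}\, q_{\min}\,(\delta-p_{\min})^2$. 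Bounding both $p_{\min}, q_{\min}$ by $\widehat w_{\min}$ gives
\[
\|J\nabla F(z)\|_z^2 \le 4\sigma_{\max}^2\,\widehat w_{\min}\big((\delta-q_{\min})^2 + (\delta-p_{\min})^2\big).
\]

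The main step is a one-dimensional inequality: for $0 < r \le \delta < 1/2$, $(\delta - r)^2 \le 2\delta\cdot \KL\big((1-\delta,\delta),(1-r,r)\big)$. Summing the $p$ and $q$ versions then yields the claimed constant $8$. To prove it, fix $\delta$ and set $g(r) = 2\delta\,\KL - (\delta-r)^2$. Then $g(\delta)=0$, and we show $g'(r)\le 0$ on $(0,\delta]$. Here $\frac{d}{dr}\KL = -\frac{\delta}{r} + \frac{1-\delta}{1-r} = \frac{r-\delta}{r(1-r)}$, so
\[
g'(r) = (r-\delta)\Big(\tfrac{2\delta}{r(1-r)} - 2\Big).
\]
For $r \le \delta$, the first factor is $\le 0$, and the second is $\ge 0$ because $r(1-r)\le r \le \delta$. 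Hence $g$ is nonincreasing on $(0,\delta]$, so $g \ge g(\delta) = 0$. This monotonicity argument is where the hypothesis $p_{\min},q_{\min}\le\delta$ enters, and it is the crux. Everything else is routine bookkeeping. One should also check that the minimum coordinate is indeed the second one, as the hypotheses state, so that $q - q^\star$ has the form used above.
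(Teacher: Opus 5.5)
Your proposal is correct and follows the paper's proof in all essentials. Both arguments use the same variance identity $\|J\nabla F(z)\|_z^2=\Var_p(Aq)+\Var_q(A^\top p)$ with $\Var_p(v)=p(1)p(2)(v(1)-v(2))^2$, the same Cauchy--Schwarz step using that $Aq^\star$ is constant, and the same Euclidean-to-KL inequality $\|q-q^\star\|_2^2\le 4\delta\,\KL(q^\star,q)$ (equivalent to your $(\delta-r)^2\le 2\delta\,\KL$), and both arrive at the constant $8$. The only difference is how that inequality is proved. You use monotonicity of $g(r)=2\delta\,\KL-(\delta-r)^2$. The paper instead writes $\KL(q^\star,q)$ in integral Bregman form with Hessian $\Diag(1/q_s)$ and bounds $q_s(1)q_s(2)\le\max(q_{\min},\delta)$ along the segment, which gives the slightly more general factor $\max(q_{\min},\delta)$ in place of $\delta$.
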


\begin{proof}
To prove the lemma, we leverage the primal
perspective of the non-uniform gradient domination 
result from Proposition~\ref{prop:gradient-dom-primal}. 
Specifically, using part (i) of Proposition~\ref{prop:gradient-dom-primal},
we have
\begin{equation}
    \|J \nabla F(z)\|^2_z 
    = \Var_{p}(Aq) + \Var_q(A^\top p)\;.
    \label{eq:gdu-01}
\end{equation}
We will thus derive upper bounds on the individual
terms in~\eqref{eq:gdu-01} via the following steps:

\noindent
\textbf{1. Exact Characterization of Variance in $2 \times 2$ Setting.} 

\noindent
Let $p = (p(1) , p(2))$ and $q = (q(1), q(2))$. 
We first prove that 
\begin{align}
    \Var_p(Aq) &\le 2 \cdot p(1) \cdot p(2) \cdot \|A\|^2_2 \cdot \|q - q^\star\|^2_2 
    \label{eq:v2g-01} \\
    \text{and}\quad
    \Var_q(A^\top p) &\le 2 \cdot q(1) \cdot q(2) \cdot \|A\|^2_2 \cdot \|p - p^\star\|^2_2 \;.
    \label{eq:v2g-02}
\end{align}
To prove~\eqref{eq:v2g-01},
let $v = Aq \in \R^{2}$.
By definition of $\Var_p(v)$ from~\eqref{eq:variance-def},
it is straightforward to compute in the two-dimensional case 
that $\Var_p(v) = p(1)\cdot p(2) \cdot (v(1)-v(2))^2$.
Now, let let $\1 \in \R^2$ denote the all-ones vector. 
By Proposition~\ref{prop:RH-inner-bound}, 
we have $Aq^\star = c\1$ for some $c \in \R$.
Thus, we can write:
\begin{equation*}
    v 
    = Aq 
    = A(q^\star + (q - q^\star))
    = c \1 + A(q - q^\star) \;.
\end{equation*}
Further letting $d = (1, -1)$, it follows that we can write
and simplify
\begin{align*}
    v(1) - v(2) 
    = 
    \langle d, v \rangle 
    &= 
    \langle 
    d, c\1 + A(q - q^\star)
    \rangle \\
    &= \langle d, A(q-q^\star) \rangle  \\
    &\le 
    \|d\|_2 \cdot \|A (q-q^\star) \|_2
    \le
    \sqrt{2} \cdot \|A\|_2 \cdot \|q - q^\star \|_2 \;.
\end{align*}
Substituting this bound on $v(1)- v(2)$ into the
definition of $\Var_p(v)$ then exactly yields
the desired inequality of~\eqref{eq:v2g-01}.
By an identical calculation, we also 
find $\Var_q(A^\top p) \le 2 q(1) q(2) \|A\|^2_2 \|p-p^\star\|^2_2$, 
where we additionally use the fact that
$\|A\|_2 = \|A^\top\|_2$.

\smallskip 

\noindent
\textbf{2. Relating KL to Euclidean Distance to Nash.}

\noindent
Next, we prove the following relationships between
$\KL$ and euclidean distance to Nash:
\begin{align}
    \|p-p^\star\|^2_2 
    &\le 4 \cdot \max(p_{\min}, \delta_p)\cdot \KL(p^\star, p) 
    \label{eq:p2-003} \\
    \text{and}\quad
    \|q-q^\star\|^2_2 
    &\le 
    4 \cdot \max(q_{\min}, \delta_q) \cdot \KL(q^\star, q) \;.
    \label{eq:p2-004}
\end{align}
We start by proving~\eqref{eq:p2-003}. 
For this, recall from Section~\ref{sec:omwu}
that for $w = (p, q) \in \calW$, we write 
$R(w) = R_m(q) + R_n(q)$
to denote the (separable) negative entropy function,
where $R_m = - \ent_m$ and $R_n = - \ent_n$.
Moreover, $\KL(w^\star, w) = D_R(w^\star, w)$ is the 
Bregman divergence of $R$,
and more specifically $\KL(p^\star, p) = D_{R_m}(p^\star, p)$
and $\KL(q^\star, q) = D_{R_n}(q^\star, q)$.
Thus, by using the integral form of Bregman divergences
(see, e.g.,~\cite{nocedal2006numerical}, Theorem 2.1), 
it follows that 
\begin{equation}
    \KL(p^\star, p)
    = D_{R_m}(p^\star, p)
    = 
    \int_{0}^1 (1-s) 
    \big\langle
    (p-p^\star), \nabla^2 R_m(p_s) (p-p^\star)
    \big\rangle\, ds  \;,
    \label{eq:kl-int}
\end{equation}
where we write $p_s = p - s(p-p^\star)$ for $s \in [0, 1]$.

We simplify the integrand in~\eqref{eq:kl-int} as follows:
first, by definition in~\eqref{eq:R-hessian-full},
we have for all $s\in [0, 1]$ 
\begin{equation*}
    \textstyle
    \nabla^2 R_m(p_s) = \Diag\left(\big(\frac{1}{p_s(1)}, \dots, \frac{1}{p_s(n)}\big) \right) \;.
\end{equation*}
Thus for each $s$, using the facts that $m=2$ and
$p^\star = (1-\delta_p, \delta_q)$
we can simplify
\begin{align*}
    \big\langle
        (p-p^\star), \nabla^2 R_m(p_s) (p-p^\star)
    \big\rangle
    &= 
    \ssum\nolimits_{i=1}^2 \frac{(p(i) - p^\star(i))^2}{p_s(i)} \\
    &= 
    \frac{(p(1) - (1-\delta_p))^2}{p_s(1)}
    + 
    \frac{(1- p(1) - \delta_p)^2}{p_s(2)} \\
    &= 
    \big(p(1) - (1-\delta_p)\big)^2 
    \cdot 
    \bigg(
        \frac{1}{p_s(1)}
        + 
        \frac{1}{p_s(2)}
    \bigg) \\
    &=
    \frac{1}{2} \cdot \|p- p^\star\|^2_2 \cdot \frac{1}{p_s(1) \cdot p_s(2)} \;.
\end{align*}
Here, the last line follows from 
the fact that $\|p - p^\star\|^2_2 = 2 \cdot (p(1)-(1-\delta_p))^2$ 
in this two-dimensional setting. 
Thus, we can further rewrite~\eqref{eq:kl-int} as
\begin{equation}
    \KL(p^\star, p) = 
    \frac{1}{2} \cdot \|p - p^\star\|^2_2  \cdot \int_0^1 \frac{(1-s)}{p_s(1) \cdot p_s(2)} \, ds \;.
    \label{eq:kl-int-02}
\end{equation}
To further derive a lower bound on~\eqref{eq:kl-int-02}, 
we will show a uniform upper bound on 
$p_s(1)\cdot p_s(2) = p_s(2) \cdot (1-p_s(2))$ for all $s \in [0, 1]$. 
For this, we recall by the assumptions of the lemma
that $p^\star(2) = \delta = \min(p^\star(1), p^\star(2)) \le \frac{1}{2}$,
and also $p(2) = p_{\min} =  \min(p(1), p(2)) \le \frac{1}{2}$.
In particular, this also implies that $p_s(2) = \min(p_s(1), p_s(2)) \le \frac{1}{2}$
for all $s \in [0, 1]$. 
Then by concavity, for all $s \in [0, 1]$:
\begin{equation*}
    p_s(2) \cdot (1-p_s(2)) \le p_s(2) \le \max(p(2), \delta_p)
    = \max(p_{\min}, \delta_p)\;.
\end{equation*}
Substituting this uniform bound into~\eqref{eq:kl-int-02},
and using the fact that $\int_{0}^1 (1-s) ds = \frac{1}{2}$,
we find
\begin{equation*}
    \KL(p^\star, p) 
    \ge 
    \frac{1}{4} \cdot \|p - p^\star \|^2_2 
    \cdot \frac{1}{\max(p_{\min}, \delta_p)} \;.
\end{equation*}
Rearranging then yields the desired inequality from~\eqref{eq:p2-003}.
Using identical calculations similarly yields the inequality
from~\eqref{eq:p2-004}.

\noindent
\textbf{3. Combining the Pieces.}

\noindent
Combining the inequalities of~\eqref{eq:v2g-01} and~\eqref{eq:v2g-02}
from Step 1 and
expressions~\eqref{eq:p2-003} and~\eqref{eq:p2-004} 
from Step 2, we find
\begin{equation}
    \begin{aligned}
    \Var_p(Aq)
    &\le 
    8 \cdot \|A\|^2_{2} \cdot p_{\min} \cdot \max(q_{\min}, \delta)
    \cdot \KL(q^\star, q) \\
    \text{and}\quad
    \Var_q(A^\top p)
    &\le
    8 \cdot \|A\|^2_{2} \cdot q_{\min} \cdot \max(p_{\min}, \delta) 
    \cdot \KL(p^\star, p) \;.
    \end{aligned}
    \label{eq:p2-05}
\end{equation}
Here, we additionally used the fact that
$p(1)p(2) = (1-p(2)) p(2) \le p(2) = p_{\min}$
and similarly $q(1) q(2) \le q_{\min}$.
Then under the assumptions that $q_{\min} < \delta$ 
and $p_{\min} < \delta$, and using the 
notation $\widehat w_{\min} = \max(p_{\min}, q_{\min})$,
substituting the bounds of~\eqref{eq:p2-05}
into~\eqref{eq:gdu-01} and simplifying yields
\begin{align*}
    \|J \nabla H(z)\|^2_z
   &= 
    \Var_p(Aq) + \Var_q(A^\top p) \\
    &\le 
    8 \cdot \|A\|^2_2 \cdot 
    \big(
        \widehat w_{\min} \cdot \delta \cdot \KL(p^\star, p)
        + 
        \widehat w_{\min} \cdot \delta \cdot \KL(q^\star, q)
    \big) \\
    &= 
    8 \cdot \sigma^2_{\max} \cdot
    \widehat w_{\min} \cdot \delta 
    \cdot \KL(w^\star, w_t) \;.
\end{align*}
This yields the statement of the lemma.
\end{proof}

\subsection{Proof of Theorem~\ref{thm:kl-last-lower}}
\label{app:kl-last-iterate:lower}

We begin the proof by giving a more detailed overview of the main steps:
\begin{enumerate}[
    label={(\arabic*)},
    leftmargin=2em
]
    \item
    We consider the payoff matrix $A \in [-1, 1]^{2 \times 2}$ 
    that is the instantiation of the canonical $2 \times 2$
    matrix $A_{\delta_p, \delta_q}$
    from Definition~\ref{def:A-2x2}, 
    with $\delta_p = \delta_q = \delta \in (0, 1/2)$.
    Given the time horizon $T$, we set $\delta = 1/T$,
    which means the interior NE $w^\star = (p^\star, q^\star)$ of $A$ 
    will be near a vertex of $\calW$.
    We also consider initializations $w_0 \in \calW$ 
    that are near the same vertex of $\calW$.

    \item
    Using the control of the OMWU trajectory on
    the matrix $A_{\delta_p, \delta_q}$ established
    in Section~\ref{app:2x2-setup:primal-dual:omwu-iterates}, 
    our setting of $\delta$ and the set of initializations
    allows for proving that the minimum coordinates
    $p_{t, \min}$ and $q_{t, \min}$ of the OMWU iterates
    have an \textit{upper bound}
    of $\exp(-1/\delta)$ that holds uniformly
    over all iterates $t \in [T]$.
    Note that this matches (in its dependence on $\delta$) 
    the general uniform lower bound on 
    $p_{t, \min}$ and $q_{t, \min}$ from Lemma~\ref{lem:wtmin-bound-uniform}.

    \item
    Then, using the \textit{lower bound} on energy dissipation
    from Lemma~\ref{lem:energy-one-step-full}
    and the \textit{upper bound} on the dissipation term 
    $\|J \nabla F(z)\|^2_{z}$ from
    Proposition~\ref{prop:kl-dissipation-upper-2x2},
    we obtain the final rate.
\end{enumerate}

We proceed to give the details of these steps below:

\smallskip

\noindent
\textbf{Setup of payoff matrix and set of initializations.}

\noindent
Given $T \ge 3$, 
let $\delta = 1/T < 1/2$.
Let $A := A_{\delta, \delta} \in [-1, 1]^{2\times 2}$
be the instantiation of the matrix $A_{\delta_p, \delta_q}$
from Definition~\ref{def:A-2x2}
with $\delta_p = \delta_q = \delta$. 
Thus $A$ is the matrix 
\begin{equation*}
    A = A_{\delta,\delta}
    = 
    \begin{pmatrix}
        \delta^2 & -\delta(1-\delta) \\
        -\delta(1-\delta) & (1-\delta)^2  
    \end{pmatrix} \;.
\end{equation*}
Moreover, recall from Proposition~\ref{prop:A-properties}
that $A$ has the unique and interior NE 
$w^\star = (p^\star,q^\star)$ 
with $p^\star = q^\star = (1-\delta, \delta) \in \relint(\Delta_2)$.
By Proposition~\ref{prop:amax-sigmamax}, 
we further have $\sigma_{\max} := \|A\|_2 \le 2$.

Now, for $w = (p, q) \in \calW$ with $p = (p(1), p(2))$
and $q = (q(1), q(2))$, we define the following
regions $\calC_{\delta}, \calP_{\delta,0}, \calQ_{\delta, 0}
\subset \relint(\Delta_2)$:
\begin{equation}
    \begin{aligned}
        \calW_{\delta, 0}
        &:= 
        \big\{
            (p, q) \in \calW \;:\;
            \exp(-3/\delta) \le p(2), q(2) \le \exp(-2/\delta)
        \big\} \\
        \calC_{\delta} 
        &:= 
        \big\{
            (p, q) \in \calW \;:\;
            p(2), q(2) \le \exp(-1/\delta)
        \big\} \;.
    \end{aligned}
    \label{eq:kl-lower-inits}
\end{equation}
Observe by definition that $\calW_{\delta, 0} \subset \calC_{\delta}$,
and that $\calW_{\delta, 0}$ has positive Lebesgue measure 
for any $\delta > 0$.
Now let $\{w_t\}$ be the primal iterates of OMWU initialized
from $w_0 \in \calW_{\delta, 0}$, 
and let $\{z_t\}$ denote the corresponding dual OMWU iterates.
For all $t \ge 0$, 
let $w_t = (p_t, q_t) \in \relint(\calW)$ and $z_t = (x_t, y_t) \in \calZ$.

\smallskip

\noindent
\textbf{Tracking the OMWU trajectory.}

\noindent
We show for all $t \in [T]$ that 
$w_t \in \calC_{\delta}$. 
By definition of $\calC_{\delta}$, 
this implies that the minimum coordinate of each primal iterate
$p_t$ and $q_t$ remains bounded by $\exp(-1/\delta)$
for all times $t \in [T]$.

First, using the bounds on the trajectory of the dual iterates
from Lemma~\ref{lemma:2x2-drift-bounds-unified}, 
observe that if ever $w_{t-1}, w_t \in \calC_\delta$,
then $x_{t+1}(1) - x_t(1) < 0$ and $y_{t+1}(1) - y_t(1) > 0$.
By the primal-dual relationships of Proposition~\ref{prop:omwu-pd-2x2},
this further means that 
$p_{t+1}(1) - p_t(1) < 0$ and $q_{t+1}(1) - q_t(1) > 0$.
By definition of $\calW_{\delta, 0}$, this latter bound implies 
that $q_{t}(1) \ge 1- \exp(-2/\delta) \ge 1 - \exp(-1/\delta)$ for all $t \in [T]$.
Thus, in order to show that $w_t \in \calC_{\delta}$ for all $t \in [T]$,
it suffices to establish for all iterates that $p_t(1) \ge 1- \exp(-1/\delta)$.

For this, using the primal-dual relationship of 
Corollary~\ref{corr:2x2-pd-threshold}, observe 
for any $t \ge 0$ that 
\begin{equation}
    \textstyle
    p_t(1) \ge 1- \exp(-1/\delta)
    \;\iff \;
    x_t(1) \ge \delta \cdot \log\big(\frac{1-\exp(-1/\delta)}{\exp(-1/\delta)}) \;.    
    \label{eq:kltr-goal}
\end{equation}
Moreover, as $w_0 \in \calW_{\delta,0}$, it additionally holds 
by definition of $p_0(1) = 1-p_0(2)$ and
by Corollary~\ref{corr:2x2-pd-threshold} that 
\begin{equation}
    \textstyle
    p_0(2) \le \exp(-2/\delta) 
    \;\implies\;
    p_0(1) \ge 1- \exp(-2/\delta)
    \;\implies\; 
    x_0(1) \ge \delta \cdot \log\big(\frac{1-\exp(-2/\delta)}{\exp(-2/\delta)}) \;.
    \label{eq:kltr-01}
\end{equation}
Now, observe further from the dual update rule of
Proposition~\ref{prop:omwu-pd-2x2} that we can crudely bound
\begin{equation}
    x_{t+1}(1) - x_t(1)
    = 
    - \eta \delta \cdot 
    \big(
      q_t(1) - (1-\delta)  
    \big)
    - 
    \eta \delta \big(
        q_t(1) - q_{t-1}(1)
    \big)
    \ge
    - 2 \eta \delta \;.
    \label{eq:kltr-02}
\end{equation}
Here, the inequality is due to the fact
that $\delta, q_t(1), q_{t-1}(1) \in (0, 1)$.
Thus, for all $t \in [T]$ it holds
by~\eqref{eq:kltr-01}
and~\eqref{eq:kltr-02} that
$x_t(1) \ge x_0(1) - t \cdot 2 \eta \delta \ge x_0(1) - t \delta$, 
where the last inequality is due to the assumption $\eta \le 1/2$.
Now recalling that $\delta = 1/T$, 
this further gives for all $t \in [T]$ that
\begin{align*}
    x_t(1)
    \ge
    x_0(1)
    - T \delta  
    &\ge
    \delta \cdot \log\big(\tfrac{1-\exp(-2/\delta)}{\exp(-2/\delta)})
    - 1 \\
    &=
    \delta \cdot \log\big(\tfrac{1-\exp(-2/\delta)}{\exp(-2/\delta)})
    - \delta \cdot \log\big(\exp(\tfrac{1}{\delta})) \\
    &=
    \delta \cdot 
    \log
    \big(
        \tfrac{1- \exp(-2/\delta)}{\exp(1/\delta)\cdot \exp(-2/\delta)}
    \big) \\
    &= 
    \delta \cdot 
    \log
    \big(
        \tfrac{1- \exp(-2/\delta)}{\exp(-1/\delta)}
    \big)
    \ge
     \delta \cdot 
    \log
    \big(
        \tfrac{1- \exp(-1/\delta)}{\exp(-1/\delta)}
    \big) \;.
\end{align*}
Thus $x_t(1)$ satisfies the bound 
in~\eqref{eq:kltr-goal} for all $t \in [T]$,
which implies that 
also $p_t(1) \ge 1-\exp(-1/\delta)$
for all $t \in [T]$.
Together with the previously established fact that 
all $q_{t}(1) \ge 1- \exp(-1/\delta)$,
we then have by definition
that $w_t \in \calC_{\delta}$ for all $t \in [T]$.

\smallskip
\noindent
\textbf{Bound on change in KL divergence}.

\noindent
Since $w_t \in C_\delta$ for all $t \in [T]$, 
it follows by definition of $C_\delta$ for all $t \ge 0$ that
\begin{equation*}
    p_{t, \min} = \min(p_t(1), p_t(2)) \le \exp(-1/\delta)
    \;\;\text{and}\;\;
    q_{t, \min} = \min(q_t(1), q_t(2)) \le \exp(-1/\delta) \;.
\end{equation*}
Letting $\widehat w_{t, \min} = \max(p_{t, \min}, q_{t, \min})$,
we then also have $\widehat w_{t, \min} \le \exp(-1/\delta)$
for all $t \ge 0$.
It follows that each iterate $w_t$
satisfies the preconditions of Proposition~\ref{prop:kl-dissipation-upper-2x2}.
This allows us to bound
\begin{equation}
    \|J \nabla F(z_t) \|^2_{z_t}
    \le 
    8 \cdot \sigma^2_{\max} \cdot {\widehat w_{t, \min}}
    \cdot \delta \cdot \KL(w^\star, w_t) 
    \le
    16 \cdot \exp(-1/\delta) \cdot \KL(w^\star, w_t) \;.
    \label{eq:klb-01}
\end{equation}
Here, we use the fact that $\sigma_{\max} \le 2$ and $\delta \le 1/2$
by definition of $A$.
Moreover, recall from Corollary~\ref{cor:change-kl-energy-equiv-omwu}
that the primal and dual OMWU iterates satisfy
$\KL(w^\star, w_{t+1}) - \KL(w^\star, w_t) = F(z_{t+1}) - F(z_t)$
for all times $t \ge 0$. Then applying the 
\textit{lower bound} on energy dissipation from 
Lemma~\ref{lem:energy-one-step-full}, we can further bound
\begin{align*}
    \KL(w^\star, w_{t+1})
    - 
    \KL(w^\star, w_t)
    &= 
    F(z_{t+1}) - F(z_t) \\
    &\ge 
    - \frac{5}{4} \cdot \eta^2 \|J \nabla F(z_t)\|^2_{z_t} \\
    &\ge
    - 20 \cdot \eta^2 \exp(-1/\delta) \cdot \KL(w^\star, w_t) \;.
\end{align*}
Here, the final inequality comes from the 
upper bound on $\|J \nabla F(z_t)\|^2_{z_t}$
from expression~\eqref{eq:klb-01}.
Rearranging, we find
\begin{align}
    \KL(w^\star, w_{t+1})
    &\ge 
    \KL(w^\star,  w_t)
    \cdot 
    (1- 20 \eta^2 \cdot \exp(-1/\delta)) \nonumber \\
    &\ge 
    \KL(w^\star, w_t) 
    \cdot 
    \exp(
        - 40 \eta^2 \cdot \exp(-1/\delta)
    ) \;.
    \label{eq:klb-02}
\end{align}
The final inequality follows from the 
fact that $20 \eta^2 \cdot \exp(-1/\delta) \le 1/2$
under the constraint on $\eta$ in Assumption~\ref{ass:stepsize},
and from the bound $1-u \ge \exp(-2u)$ for all $u \in [0,0.5]$.
As the bound in~\eqref{eq:klb-02} holds uniformly 
over all $t \in [T]$, we conclude for all $t \ge 0$ that
\begin{align}
    \KL(w^\star, w_t)
    &\ge 
    \KL(w^\star, w_0)
    \cdot
    \exp(- 40 \eta^2 \cdot \exp(-1/\delta) \cdot t) \;.
    \label{eq:kllower-conclusion}
\end{align}

\smallskip

\noindent
\textbf{Initial bound on KL divergence}.

\noindent
It remains to show a constant upper bound on $\KL(w^\star, w_0)$
under initializations $w_0 = (p_0, q_0) \in \calW_{\delta, 0}$.
For this, using the definition of 
$\KL(w^\star, w_0) = \KL(p^\star, p_0) + \KL(q^\star, q_0)$
from~\eqref{eq:kl-2x2-full}, we have
\begin{equation*}
    \KL(p^\star, p_0)
    = 
    (1-\delta) \cdot \log\big(
        \tfrac{1-\delta}{p_0(1)}
    \big)
    + 
    \delta \cdot \log\big(
        \tfrac{\delta}{p_0(2)}
    \big)
    \le 
    \delta \cdot \log\big(
        \tfrac{\delta}{p_0(2)}
    \big) \;.
\end{equation*}
Here, the inequality comes from the fact that
$p_0(1) > 1-\delta$
for all $(p_0, q_0) \in \calW_{\delta, 0}$.
Additionally using the fact that $p_0(2) > \exp(\frac{-3}{\delta})$
for all $(p_0, q_0) \in \calW_{\delta, 0}$,
we can further derive the bound
\begin{equation*}
    \KL(p^\star, p_0) 
    \le 
    \delta \cdot \log\big(
        \tfrac{\delta}{\exp(-3/\delta)}
    \big)
    = 
    \delta \log \delta 
    - \delta \big(\tfrac{-3}{\delta}\big)
    \le 
    3 \;.
\end{equation*}
The final inequality is due 
to $\delta \log \delta \le 0$ for $\delta \le 1$.

By an identical calculation,
we also find $\KL(q^\star, q_0) \le 3$
for all $w_0 = (p_0, q_0) \in \calW_{\delta,0}$.
Thus it follows for all such $w_0$ that
$\KL(w^\star, w_0) \le 6$, which concludes the proof.
~\hfill $\blacksquare$


\section{Details on Uniform Convergence Rate Lower Bounds}
\label{app:lowerbound-details}

In this section, we develop the proof of the
uniform best-iterate convergence rate lower bounds
of Theorem~\ref{thm:uniform-lb-main} that
were introduced in Section~\ref{sec:uniform-bounds}.
To restate the theorem:

\thmuniformlbmain*

Theorem~\ref{thm:uniform-lb-main}
establishes a separation between 
$\TV$ distance and $\KL$ divergence 
and duality gap in terms of
uniform best-iterate convergence guarantees for OMWU:
while a quantitative uniform best-iterate convergence 
rate is attainable in duality gap
under the uniform initialization 
(due to ~\cite{C25},
and also Theorem~\ref{thm:dg-best-2x2}),
no such rate is possible under the stronger
notions of $\TV$ and $\KL$ divergence.

Using the intuition for the geometric bottlenecks
developed in Section~\ref{sec:energy-dissipation},
the hard instances in the proof of the theorem have a 
(symmetric) NE close to a simplex vertex, and with
$\delta$ scaling inversely with~$T$. In this regime,
we rely on the primal-dual relationship of the OMWU iterates
to exactly track the OMWU dynamics in the $2 \times 2$ setting
(see Section~\ref{app:omwu-2x2} for these preliminaries). 
Subsequently, this allows for establishing on these instances 
that \textit{all $T$ iterates} of OMWU remain a bounded
distance away from equilibrium 
in $\TV$ and $\KL$.

\smallskip

\paragraph{Additional Lower Bounds in KL and DG
under Non-Uniform Initializations.}
In addition to Theorem~\ref{thm:uniform-lb-main},
we also prove a second set of uniform best-iterate convergence
rate lower bounds that further highlight the 
different geometries of the distance functions
and their interplay with the OMWU iterates.
Specifically, when initialized from some positive measure 
subsets of non-uniform distributions
(that may be near the boundary of the simplex),
we prove two additional lower bounds:
(i) all iterates of OMWU can be a constant distance away
from Nash  in $\KL$, but convergent in $\TV$ and $\DG$,
and (ii) all iterates of OMWU can also be a constant
distance away from Nash in $\DG$.
Formally:

\begin{restatable}[Additional Uniform Best-Iterate Lower Bounds]
{thm}{thmuniformkl}
	\label{thm:uniform-lb-secondary}
	For every $T \ge 3$, it holds that:
	\begin{enumerate}[
		label={(\arabic*)}
	]
	\item
	There exists $A \in [-1, 1]^{2 \times 2}$
	with interior NE $w^\star$ and a positive 
	measure set of initializations, not including the 
	joint uniform distributions, such that,
	for the iterates $\{w_t\}$ of~\eqref{eq:omwu-primal} 
	on $A$ with stepsize $\eta$ satisfying Assumption~\ref{ass:stepsize}:
	\begin{equation*}
	\text{for all $t \in [T]$, both}
	\begin{cases}
		\TV(w^\star, w_t) \le O(T^{-1}) \\
		\DG(w_t) \le O(T^{-1}) 
	\end{cases}
	\text{while also}\;
	\min\nolimits_{t \in [T]} \KL(w^\star, w_t) \ge \tfrac{1}{6}\;.
	\end{equation*}
	\item
	There exists $A \in [-1, 1]^{2 \times 2}$
	with interior NE $w^\star$ and a positive 
	measure set of initializations, not including the 
	joint uniform distributions, such that,
	for the iterates $\{w_t\}$ of~\eqref{eq:omwu-primal} 
	on $A$ with stepsize $\eta$ satisfying Assumption~\ref{ass:stepsize}:
	\begin{equation*}
		\min\nolimits_{t \in [T]} \DG(w_t) \ge \tfrac{1}{6} \;.
	\end{equation*}
	\end{enumerate}
\end{restatable}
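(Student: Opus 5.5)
Both parts use the canonical matrix $A_{\delta_p,\delta_q}$ of Definition~\ref{def:A-2x2} with $\delta$ of order $1/T$. On this matrix Proposition~\ref{prop:omwu-pd-2x2} tracks the leading dual coordinates $x_t(1),y_t(1)$ exactly. The crude drift bound $|x_{t+1}(1)-x_t(1)|\le 2\eta\delta_p$, and likewise for $y$, then confines the whole trajectory over $[T]$ to a window of width $O(\eta\delta T)=O(\eta)$ around the initial dual point. Corollary~\ref{corr:2x2-pd-threshold} converts this dual window into primal thresholds, exactly as in the proof of Theorem~\ref{thm:kl-last-lower}. For each part the work is therefore to choose $\delta$ and an open box of initializations $\calW_{\delta,0}$ so that every primal point reachable within $T$ steps has the required distance properties, computed from the explicit $2\times2$ formulas \eqref{eq:dg-2x2-full}, \eqref{eq:tv-2x2-full}, \eqref{eq:kl-2x2-full}.

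\textbf{Part (1).} Take $\delta_p=\delta_q=\delta=1/T$, so $p^\star=q^\star=(1-\delta,\delta)$. Initialize in the box
\[
\exp(-c_2/\delta)\le p_0(2),\,q_0(2)\le \exp(-c_1/\delta),
\]
with constants $c_2>c_1>1$; this is the same construction as for Theorem~\ref{thm:kl-last-lower}, and the box has positive measure and excludes the uniform distribution. The argument there, using Lemma~\ref{lemma:2x2-drift-bounds-unified} and the drift bound, shows $p_t(2),q_t(2)\le \exp(-1/\delta)$ for all $t\in[T]$. Then \eqref{eq:tv-2x2-full} gives $\TV\le 2\delta+2e^{-1/\delta}=O(1/T)$, and Proposition~\ref{prop:dg-tv-relationship} gives $\DG\le 2\TV=O(1/T)$.

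For KL, \eqref{eq:kl-2x2-full} gives
\[
\KL(p^\star,p_t)\ge \delta\log\bigl(\delta/p_t(2)\bigr)+(1-\delta)\log(1-\delta).
\]
With $p_t(2)\le e^{-1/\delta}$ this is at least $1+\delta\log\delta-O(\delta)$, which exceeds $1/6$ for $T\ge3$ after a check of the constants; adding the $q$ term only increases it. If the $T=3$ constants are too tight, I would widen the exponents, e.g. use $c_1=3$ and $\delta=1/(2T)$.

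\textbf{Part (2).} Take $\delta_p=\delta_q=\delta=1/T$ but initialize on the opposite side: $p_0(1)$ near $1$ and $q_0(1)$ small, say $q_0(2)\ge 1/2$. By \eqref{eq:dg-2x2-full} in the regime $\w p\ge 1-\delta$, $\w q\le 1-\delta$, we have $\DG\ge (1-\delta-\w q)\delta_p$, which is only $O(\delta)$. This regime is therefore useless, and the duality gap has to come from the coefficient $(1-\delta_p)$ or $(1-\delta_q)$. That requires $\w q\ge 1-\delta$ or $\w p\le 1-\delta$ by a constant margin.

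So I would instead use the regime $\w p\le 1-\delta$, $\w q\ge 1-\delta$, where
\[
\DG=(1-\delta-\w p)(1-\delta)+(\w q-(1-\delta))(1-\delta)\ge (1-\delta-\w p)(1-\delta).
\]
Initialize with $p_0(1)\le 1/2$ and $q_0(2)\le e^{-2/\delta}$. The dual window shows $x_t(1)$ moves by at most $2\eta\delta T=2\eta$, so $p_t(1)=\sigmoid(x_t(1)/\delta)$ could change a lot, since dividing by $\delta$ amplifies the shift. I would control this by using the sign of the drift instead. With $q$ near $1$, Lemma~\ref{lemma:2x2-drift-bounds-unified}(ii) gives $x$ decreasing, hence $\w p$ decreasing and staying at most $1/2$. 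Meanwhile $\w p\le 1-3\delta$, so (iii) gives $y$ decreasing, hence $q(1)$ decreasing. By the window bound $y_t(1)\ge y_0(1)-2\eta$, which keeps $q_t(1)\ge 1-e^{-1/\delta}\ge 1-\delta$ for the chosen box. Then $\DG\ge(1/2-\delta)(1-\delta)\ge 1/6$ for $T\ge3$.

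The main obstacle is this confinement step: verifying that the monotone drifts keep the iterates in the regime for all $t\le T$, including the $t=1$ step, which uses the non-optimistic update $z_1=z_0-\eta Jw_0$. I would handle that step by applying the crude bound directly.
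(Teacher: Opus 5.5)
Your proposal works and uses the same mechanism as the paper. That mechanism is the instance $A_{\delta,\delta}$, the per-step drift bound $2\eta\delta$ on the leading dual coordinates, Corollary~\ref{corr:2x2-pd-threshold} to turn the dual window into primal thresholds, and the explicit $2\times2$ formulas.

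Part (2) is the paper's argument at a different scale. The paper takes $\delta=e^{-T}$, $q_0(2)\in[\delta^2/6,\delta^2/3]$ and $p_0(1)\in[\tfrac12,\tfrac34]$. In both versions the point is that $y_0(1)$ exceeds the threshold for $q(1)\ge 1-\delta/3$ by more than the total drift over $T$ steps.

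The genuine difference is the KL bound in Part (1).
\begin{itemize}
\item The paper lower-bounds $\KL(w^\star,w_0)$ and propagates it through the last-iterate lower bound of Theorem~\ref{thm:kl-last-lower}. That is, it goes through the dissipation lower bound of Lemma~\ref{lem:energy-one-step-full} and Proposition~\ref{prop:kl-dissipation-upper-2x2}.
\item You read the bound off the confinement. $p_t(2)\le e^{-1/\delta}$ gives $\KL(p^\star,p_t)\ge 1+\delta\log\delta+(1-\delta)\log(1-\delta)\ge 1-\log 2>\tfrac16$, so no widening of constants is needed.
\end{itemize}
Your route is more elementary and uses nothing about $\eta$ beyond $\eta<\tfrac18$. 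It is also sturdier. The paper's written bound on $\KL(p^\star,p_0)$ goes through the term $(1-\delta)\log\frac{1-\delta}{p_0(1)}$, which is negative on $\calW_{\delta,0}$ because $p_0(1)>1-\delta$. Its constant $\tfrac43$ is therefore recoverable only from $\delta\log(\delta/p_0(2))$, which is exactly the term you use. The paper's route additionally yields a quantitative lower bound on the KL decay rate, which this theorem does not need.

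A few details need repair.
\begin{itemize}
\item In Part (2), the inequality $(\tfrac12-\delta)(1-\delta)\ge\tfrac16$ fails at $T=3$, where the left side equals $\tfrac19$. You dropped the second term of \eqref{eq:dg-2x2-full}. Keeping it, in this regime $\DG(w_t)=(1-\delta)\bigl(q_t(1)-p_t(1)\bigr)\ge(1-\delta)\bigl(\tfrac12-e^{-1/\delta}\bigr)>\tfrac14$ for all $T\ge3$. Alternatively, shrink the box to $p_0(1)\le\tfrac14$.
\item Lemma~\ref{lemma:2x2-drift-bounds-unified}(ii) needs $q_{k-1}(1)\ge 1-\delta/3$, not merely $q_{k-1}(1)\ge 1-\delta$. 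This holds because $e^{-1/\delta}\le\delta/3$ for $\delta\le\tfrac13$.
\item Drop the appeal to part (iii) of that lemma. At $T=3$ its hypothesis reads $p(1)\le 0$, and it is unnecessary because the window bound on $y_t(1)$ alone confines $q_t(1)$.
\item In Part (1), the window argument needs $c_1\ge 1+2\eta$ rather than just $c_1>1$. The paper's choice $c_1=2$, $c_2=3$ works.
\end{itemize}
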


Part (1) of Theorem~\ref{thm:uniform-lb-secondary} 
is based on the \textit{universal} last-iterate lower bound in 
$\KL$ from Theorem~\ref{thm:kl-last-lower}.
Note that this result gives a separation in uniform
best-iterate convergence guarantees between 
$\TV$ and $\KL$
(which is not established by Theorem~\ref{thm:uniform-lb-main}).

Moreover, Part (2) of Theorem~\ref{thm:uniform-lb-secondary} 
further shows that, when initialized near the simplex boundary,
no best-iterate convergence guarantee is possible
in duality gap. 
Note that this does not contradict the uniform best-iterate
convergence rate upper bounds of
Theorem~\ref{thm:dg-best-2x2}
and of~\cite{C25}, both of which hold under 
the assumption of joint uniform initializations.

In the following subsections, we proceed
to give the proofs of Theorem~\ref{thm:uniform-lb-main}
and Theorem~\ref{thm:uniform-lb-secondary},
both of which use the preliminaries for OMWU 
on the canonical $2\times 2$ instance $A_{\delta_p,\delta_q}$
from Section~\ref{app:omwu-2x2}.

\subsection{Proof of Theorem~\ref{thm:uniform-lb-main}}

To prove the theorem, we instantiate the payoff matrix $A$ 
and the set of intializations as follows:
\begin{itemize}[
	topsep=0em,
	itemsep=0em,
	leftmargin=2em,
]
\item
Set $A := A_{\delta_p, \delta_q} \in [-1, 1]^{2\times 2}$
from Definition~\ref{def:A-2x2} 
with $\delta := \delta_p = \delta_q = \frac{1}{6 \exp(T)}$. 

\item
Initialize $w_0 = (p_0, q_0) \in \calP_0 \times \calQ_0$,
where:
\begin{equation*}
	\calP_0 
	= 
	\{ p \in \relint(\Delta_2) \,:\, p(1) \le 1/2 \}
	\;\text{and}\;
	\calQ_0 
	= 
	\{q \in \relint(\Delta_2) \,:\, q(1) \le 1/2 \} \;.
\end{equation*}
\end{itemize}
Let $\calW_0 = \calP_0 \times \calQ_0 \subset \relint(\calW)$,
and observe that $\calW_0$ has positive Lebesgue measure,
and that $\calP_0$ and $\calQ_0$
both contain the uniform distributions 
$p_0 = (1/2, 1/2)$ and $q_0 = (1/2, 1/2)$.
Moreover, by the setting of $A$, we have
by Corollary~\ref{corr:2x2-pd-threshold}
that the corresponding dual initializations
$(x_0, y_0) \in \calZ$
under OMWU satisfy $x_0(1) \le 0$ 
and $y_0(1) \le 0$ for all $(p_0, q_0) \in \calW_0$.

The proof then proceeds by (1) tracking the trajectory 
of the dual OMWU iterates over the time horizon $T$,
which (2) leads to a lower bound on the 
TV distance and KL divergence from Nash
that holds uniformly over all iterates.
We proceed to give the details of these two steps:

\smallskip

\noindent
\textbf{Bounds on Trajectory.}\;
For every $t$, let $w_t = (p_t, q_t) \in \relint(\Delta_2)$
and $z_t = (x_t, y_t) \in \calZ$ denote the primal 
and dual iterate of OMWU. 
Moreover, let $\w p_t = p_t(1) \in (0, 1)$, $\w q_t = q_t(1) \in (0, 1)$
and $\w x_t = x_t(1) \in \R$, $\w y_t = y_t(1) \in \R$
denote the leading coordinate of each variable. 
Under $A := A_{\delta, \delta}$, Proposition~\ref{prop:omwu-pd-2x2}
then gives for $t+1 \ge 1$ the simplified OMWU dual update rule of
\begin{equation*}
	\textstyle
	\w x_{t+1} - \w x_t 
	= 
	\eta \delta
	\cdot \big(
		1-\delta - \w q_t + \w q_{t-1} - \w q_t
	\big)
	<
	2 \eta \delta 
	\le \delta
	\;.
\end{equation*}
Here, the inequality comes from $0 < \w q_t, \w q_{t-1}< 1$ and
$\delta > 0$, and the assumption that $\eta < 1/2$.
It follows that, under the initialization $\w p_0 \in \calP_0$:
\begin{equation*}
	\textstyle
	\w x_t 
	\le \w x_0 + \delta \cdot  t
	\le \delta   T
	\le \delta \log \big(\frac{1}{6\delta}\big)
	\le \delta \log \big(\frac{1-3\delta}{3\delta}\big) 
	\;\text{for all $t \in [T]$}
	\;.
\end{equation*}
Here, the first inequality comes from $\w x_0 \le 0$, the second inequality comes from the 
setting of $\delta = 1/(6\exp(T))$,
and the third inequality holds when $\delta \le 1/6$, 
which is true for all $T \ge 2$.
By Corollary~\ref{corr:2x2-pd-threshold}, it then follows that
\begin{equation}
	\textstyle
	\w x_t \le \delta \log \big(\frac{1-3\delta}{3\delta}\big) 
	\;\iff\; 
	\w p_t \le 1- 3 \delta 
	\;\;\text{for all $t \in [T]$.}
	\label{eq:tvlb-01}
\end{equation}
By Part (iii) of Lemma~\ref{lemma:2x2-drift-bounds-unified},
this further implies that 
$\w y_{t+1} - \w y_{t} \le 0$ for all $t \in [T]$. 
Under the initializations $\w q_0 \in \calQ_0$ 
and again using Corollary~\ref{corr:2x2-pd-threshold},
it then holds that
\begin{equation}
	\textstyle
	\w y_t \le \w y_0 \le 0 
	\;\iff\; 
	\w q_t \le \frac{1}{2}
	\;\;\text{for all $t \in [T]$.}
	\label{eq:tvlb-02}
\end{equation}

\smallskip 

\noindent
\textbf{Bounds on Distance to Nash.}\;
Recall for $A = A_{\delta, \delta}$ that
the unique NE $w^\star = (p^\star, q^\star)$ of $A$ is given 
by $p^\star = q^\star =(1-\delta, \delta)$.
Then by the definition of $\TV(w^\star, w_t)$
from~\eqref{eq:tv-2x2-full},
and using the bound $\w q_t \le 1/2$ from~\eqref{eq:tvlb-02},
we have for all $t \in [T]$ that
\begin{equation}
	\textstyle
	\TV(w^\star, w_t)
	= 
	| 1- \delta - \w p_t|
	+ 
	| 1- \delta - \w q_t |
	\ge 
	| 1- \delta - \w q_t |
	\ge 
	| \frac{1}{2} - \delta | \;.
\end{equation}
As $\delta \le 1/6$ for all $T \ge 2$, 
it then follows that
\begin{equation*}
	\textstyle
	\TV(w^\star, w_t) \ge \frac{1}{3} 
	\;\;\text{for all $t \in [T]$.}
\end{equation*}
Finally, by the relationship
$\KL(w^\star, w_t) \ge \TV(w^\star, w_t)^2$
from Corollary~\ref{cor:distance-relations},
we also have 
\begin{equation*}
	\textstyle
	\KL(w^\star, w_t)	
	\ge 
	\TV(w^\star, w_t)^2
	\ge 
	\frac{1}{9} 
	\;\;\text{for all $t \in [T]$},
\end{equation*}
which concludes the proof.~\hfill $\blacksquare$

\subsection{Proof of Theorem~\ref{thm:uniform-lb-secondary}}

In this section we give the proof of 
Theorem~\ref{thm:uniform-lb-secondary}. 
We prove the two parts separately:

\subsubsection{Proof of Part 1}

The proof follows as a corollary 
of the \textit{universal} last-iterate lower bound
in $\KL$ from Theorem~\ref{thm:kl-last-lower}.
There, recall from the proof that 
for all $T \ge 3$, the game instance
is $A = A_{\delta, \delta}$ with
$\delta = 1/T$. 
We start by recalling the properties established
in the proof of Theorem~\ref{thm:kl-last-lower}:

\paragraph{Conclusions from
Proof of Theorem~\ref{thm:kl-last-lower}.}
Recall from~\eqref{eq:kl-lower-inits} that 
we define the set of initializations $\calW_{\delta, 0} \subset \relint(\calW)$
and the set $\calC_{\delta} \subset \relint(\calW)$ as:
\begin{equation*}
    \begin{aligned}
        \calW_{\delta, 0}
        &:= 
        \big\{
            (p, q) \in \calW \;:\;
            \exp(-3/\delta) \le p(2), q(2) \le \exp(-2/\delta)
        \big\} \\
        \calC_{\delta} 
        &:= 
        \big\{
            (p, q) \in \calW \;:\;
            p(2), q(2) \le \exp(-1/\delta)
        \big\} \;.
    \end{aligned}
\end{equation*}
The proof of Theorem~\ref{thm:kl-last-lower} established
the following two conclusions: 
for all initial $w_0 \in \calW_{\delta, 0}$ and 
for all times $t \in [T]$:
\begin{equation}
	w_t \in \calC_{\delta}\;\text{and}\;
	\KL(w^\star, w_t) 
	\ge 
	\KL(w^\star, w_0) 
	\cdot 
	\exp(-40 \eta^2 \cdot \exp(-1/\delta) \cdot t) \;.
	\label{eq:kllb2-claims}
\end{equation}
We now show further implications of these conclusions:

\paragraph{All iterates are convergent in TV and Duality Gap.}
By definition of the set $\calC_{\delta}$, 
the fact that $w_t \in \calC_{\delta}$ for all $t \in [T]$ implies that
\begin{equation}
	1-\exp(-1/\delta) \le p_t(1), q_t(1) \le 1
	\;\;\text{for all $t \in [T]$}.
	\label{eq:klwtbound}
\end{equation}
By the definition of $\TV(w^\star, w)$ from~\eqref{eq:tv-2x2-full},
it then follows for all $t \in [T]$ that:
\begin{equation*}
	\TV(w^\star, w_t)
	= |p_t(1) - 1-\delta| + |q_t(1) - 1- \delta|
	\le 
	2 \delta
	= \tfrac{2}{T} \;.
\end{equation*}
Here, the final equality is to due to the setting $\delta = 1/T$.
Moreover, using the relationship between
$\DG(w_t)$ and $\TV(w_t)$ from Corollary~\ref{cor:distance-relations},
and using the fact that $A_{\delta, \delta} \in [-1, 1]^{2\times 2}$,
we further conclude for all $t \in [T]$
that $\DG(w_t) \le \sqrt{2} \TV(w^\star, w_t) \le 2\sqrt{2}/T$. 
Thus it holds on this instance that 
\begin{equation*}
		\DG(w_t) \le O\big(T^{-1}\big)
		\;\text{and}\;\;
		\TV(w^\star,w_t) \le O\big(T^{-1}\big) \;\text{for all $t \in [T]$}.
\end{equation*}

\paragraph{All iterates remain far in KL.}
We now show that the conclusions from~\eqref{eq:kllower-conclusion}
imply that all iterates $\KL(w^\star, w_t)$ remain 
bounded below by a constant. 
First, we show a constant lower bound on 
$\KL(w^\star, w_0)$ for all initializations $w_0 \in \calW_{\delta_0}$.
For this, observe that $\calW_{\delta, 0} \subset \calC_{\delta}$,
and thus the bounds in~\eqref{eq:klwtbound}
also apply for any initialization $w_0 = (p_0, q_0) \in \calW_{\delta}$.
Then, using the definition of 
$\KL(w^\star, w) = \KL(p^\star, p_0) + \KL(q^\star, q_0)$
from~\eqref{eq:kl-2x2-full}, we have
\begin{align*}
    \KL(p^\star, p_0)
    &= 
    (1-\delta) \cdot \log\big(
        \tfrac{1-\delta}{p_0(1)}
    \big)
    + 
    \delta \cdot \log\big(
        \tfrac{\delta}{p_0(2)}
    \big) \\
    &\ge 
    (1-\delta) \cdot \log\big(
        \tfrac{1-\delta}{p_0(1)}
    \big)
    \ge 
     (1-\delta) \cdot \log\big(
        \tfrac{1-\delta}{1-\exp(-1/\delta)}
    \big)
    \ge 
    \tfrac{2}{3} \;.
\end{align*}
Here, the final inequality holds
when $\delta \le 1/2$, 
which is true by the setting of $\delta = 1/T$
for $T \ge 2$.
An identical calculation 
also gives $\KL(q^\star, q_0) \ge 2/3$,
and thus by definition 
$\KL(w^\star, w_0) \ge 4/3$ for all
initializations $w_0 \in \calW_{\delta, 0}$.
Further using that $T = 1/\delta$, 
it follows from~\eqref{eq:kllower-conclusion} that
for all $t \in [T]$:
\begin{align*}
	\KL(w^\star, w_t) 
	&\ge 
	\KL(w^\star, w_0)
	\cdot 
	\exp(- 40 \eta^2 \cdot \exp(-1/\delta) \cdot T) \\
	&\ge
	\tfrac{4}{3}
	\cdot 
	\exp(- 40 \eta^2 T/\exp(T)) \\
	&\ge 
	\tfrac{4}{3} \cdot \tfrac{1}{\exp(2)} 
	\ge \tfrac{1}{6} \;.
\end{align*}
Here, the penultimate inequality comes 
from the assumption that $\eta \le \tfrac{1}{2}$,
and since $T/\exp(T) \le \tfrac{1}{5}$ for all $T \ge 3$.
Thus we find on this instance that
$\KL(w^\star, w_t) \ge 1/6$.
\hfill ~ $\blacksquare$

\subsubsection{Proof of Part 2}

The structure of the proof is similar to that
of Theorem~\ref{thm:uniform-lb-main}.
First, we instantiate the payoff matrix $A$
and the set of initializations as follows:
\begin{itemize}[
	topsep=0em,
	itemsep=0em,
	leftmargin=2em,
]
\item
Set $A := A_{\delta_p, \delta_q} \in [-1, 1]^{2\times 2}$
from Definition~\ref{def:A-2x2} 
with $\delta := \delta_p = \delta_q = \frac{1}{\exp(T)}$. 

\item
Initialize $w_0 = (p_0, q_0) \in \calP_0 \times \calQ_0$,
where:
\begin{equation*}
	\calP_0 
	= 
	\{ p \in \relint(\Delta_2) \,:\, \tfrac{1}{2} \le p(1) \le \tfrac{3}{4} \} 
	\;\;\text{and}\;\;
	\calQ_0 
	= 
	\{q \in \relint(\Delta_2) \,:\, 1-\tfrac{\delta^2}{3} 
	\le q(1) \le 1- \tfrac{\delta^2}{6} \} \;.
\end{equation*}
\end{itemize}
Let $\calW_0 = \calP_0 \times \calQ_0 \subset \relint(\calW)$,
and observe that $\calW_0$ has positive Lebesgue measure.
Moreover, for every $T \ge 3$, 
we have $\delta < 1/8$, and thus 
all initializations $p_0 \in \calP_0$ satisfy $p_0 < 1-\delta$. 
On the other hand, note that 
$\calW_0$ does not contain 
the uniform initialization $p_0 = q_0 = (1/2, 1/2)$,
which is in contrast to 
the proof of Theorem~\ref{thm:uniform-lb-main}.

We now proceed to (1) track the trajectory
of the dual OMWU iterates over the time horizon $T$,
which (2) will lead to a constant lower bound on the
duality gap that holds over all iterates.
In more details:

\noindent
\textbf{Bounds on Trajectory.}\;
For every $t$, let $w_t = (p_t, q_t) \in \relint(\Delta_2)$
and $z_t = (x_t, y_t) \in \calZ$ denote the primal 
and dual iterate of OMWU, and
let $\w p_t = p_t(1) \in (0, 1)$, $\w q_t = q_t(1) \in (0, 1)$
and $\w x_t = x_t(1) \in \R$, $\w y_t = y_t(1) \in \R$.
Now under $A := A_{\delta, \delta}$, 
we can use Proposition~\ref{prop:omwu-pd-2x2}
to crudely bound for any $t \ge 0$
\begin{equation}
	\w y_{t+1} - \w y_t
	=
	- \eta \delta
	(1-\delta - \w p_t - (\w p_t  - \w p_{t-1}))
	\ge 
	- 2 \eta \delta 
	\ge
	- \delta \;.
	\label{eq:dgydrift}
\end{equation}
Here, the first inequality is due to
$\w p_t, \w p_{t-1}, \delta \in (0, 1)$,
and the second is by the assumption 
that $\eta < 1/2$.
Now under the initialization $q_0 \in \calQ_0$,
we have $\w q_0 \ge 1 - \delta^2/3$.
By Corollary~\ref{corr:2x2-pd-threshold},
this implies that 
\begin{equation}
	\w q_0 \ge 1 - \tfrac{\delta^2}{3}
	\;\implies\;
	\w y_0 \ge \delta \log\big(\tfrac{1-\delta^2/3}{\delta^2/3}\big) \;.
	\label{eq:dgylb}
\end{equation}
Then using~\eqref{eq:dgydrift}, it follows for
all $t \in [T]$ that we can bound
\begin{align}
	\w y_{t} 
	\ge \w y_0 - \delta \cdot t 
	\ge \w y_0 - \delta \cdot T 
	&\ge \delta \log\big(\tfrac{1-\delta^2/3}{\delta^2/3}\big) 
	- \delta \log\big(\tfrac{1}{\delta}\big) \nonumber \\
	&= 
	\delta \log\big( \tfrac{1-\delta^2/3}{\delta/3}\big)
	\ge 
	\delta \log \big( \tfrac{1-\delta/3}{\delta/3} \big) \;.
	\label{eq:dgytlb}
\end{align}
Here, the second inequality comes from
applying the lower bound on $\w y_t$ from~\eqref{eq:dgylb}
and using the setting $\delta = 1/\exp(T) \implies T = \log(1/\delta)$.
Again by Corollary~\ref{corr:2x2-pd-threshold}, it follows 
for all $t \in [T]$ that
\begin{equation}
	\w y_t \ge \delta \log \big( \tfrac{1-\delta/3}{\delta/3} \big)
	\;\implies\;
	\w q_t \ge 1-\delta/3 \;.
	\label{eq:dgqtlb}
\end{equation}
Moreover, using the fact that $\w p_0 \le 3/4 < 1-\delta$ for 
for all $p_0 \in \calP_0$, this further implies via
Part (ii) of Lemma~\ref{lemma:2x2-drift-bounds-unified}
that all $\w x_{t+1} - \w x_t < 0 \implies \w p_{t+1} - \w p_t < 0$. 
Thus for all $t \in [T]$ it holds that
\begin{equation}
	\w p_t \le \w p_0 \le 3/4 < 1-\delta \;.
	\label{eq:dgptub}
\end{equation}
We now derive the implications of the uniform
coordinate bounds of~\eqref{eq:dgqtlb}
and~\eqref{eq:dgptub} on the duality gap 
of the iterates.

\smallskip 

\noindent
\textbf{Bounds on Distance to Nash.}\;
For $A = A_{\delta, \delta}$, 
the unique NE $w^\star = (p^\star, q^\star)$ of $A$ is given 
by $p^\star = q^\star =(1-\delta, \delta)$.
Then by the definition of $\DG(w_t)$
from~\eqref{eq:dg-2x2-full},
and using the bounds $\w q_t \ge 1-\delta/3$
and $\w p_t < 1-\delta$, 
we have for all $t \in [T]$ that
\begin{align}
	\textstyle
	\DG(w_t)
	&= 
	(1-\delta - \w p_t) (1-\delta)
	+ (\w q_t - (1-\delta)) (1-\delta) \label{eq:dglbf01}\\
	&\ge 
	(1/4-\delta) \cdot (1-\delta) \\
	&\ge (1/4 - 1/20)\cdot (19/20) \ge 1/6 \;.
	\label{eq:dglbfinal}
\end{align}
The first inequality comes from the fact 
that the second term in~\eqref{eq:dglbf01} is non-negative,
and also by applying the upper bound $\w p_t \le 3/4$.
The second inequality comes
from the fact that $\delta = 1/\exp(T)$,
and thus $\delta \le 1/20$ for all $T \ge 3$.
As the bound of~\eqref{eq:dglbfinal} holds uniformly
for all $t \in [T]$, we conclude on this instance 
that $\min_{t \in [T]} \DG(w_t) \ge 1/6$,
which concludes the proof.
\hfill ~ $\blacksquare$

\section{Details on Best-Iterate Convergence Rate in Duality Gap
for 2x2 Setting}
\label{app:dg-upper}

In this section, we develop the proof of 
Theorem~\ref{thm:dg-best-2x2},
which establishes a fast uniform best-iterate
convergence rate upper bound in duality gap
for the $2\times 2$ setting. 
To restate the theorem:

\smallskip

\thmdgbest*

Note that this result improves over the prior 
best-known $O(T^{-1/6})$ rate of~\cite{C25}
for the same setting. 
This improvement is due to a refined proof
technique that leverages the new
analysis of OMWU in $\KL$ divergence
from Theorem~\ref{thm:kl-last-unified}.
We give a brief comparison with the analysis 
of~\cite{C25} and a high-level overview of our proof here:

\paragraph{Comparison 
with~\cite{C25}.}
We recall that the proof of~\cite{C25} uses a 
two-regime approach depending 
on the magnitude of the time horizon $T$
with respect to the minimum Nash coordinate
$\delta = \min(\delta_p, \delta_q)$.
They show a best-iterate bound in duality gap
scaling like 
$\min(\delta, T^{-1/4} \delta^{-1/2})$.
Depending on whether $T \ge \delta^{-1/6}$
or $T < \delta^{-1/6}$, this term always is 
at most $T^{-1/6}$, which results in their bound.
For large $T$, their analysis relies on  a
universal upper bound on the random-iterate
convergence in duality gap.
For small $T$, their analysis relies on 
tracking the trajectory of the OMWU iterates
in the primal space. 

\paragraph{High-level overview of proof.}
In contrast to the proof of~\cite{C25}, our proof uses a three-regime approach
that allows for much tighter bounds when
$T$ is large with respect to $\delta$.
In the largest regime, we use our new universal 
last-iterate analysis of OMWU in $\KL$ divergence 
(Theorem~\ref{thm:kl-last-unified})
to establish a fast \textit{last-iterate} bound in $\DG$. 
For the moderate and small $T$ regimes,
we directly track the OMWU trajectory
and obtain tighter best-iterate bounds 
of roughly $O(\delta^2)$ and $O(\max(1/T, \delta))$, respectively.
Together, using the definition of the three regimes, 
this leads to the overall $\widetilde O(T^{-1/2})$ 
best-iterate bound in duality gap.

We give more details on these regimes and their
consequences in the proof overview in the following section.
However, we first make two additional remarks regarding 
the assumption of uniform initialization,
and on the possibility of generalizing
our techniques to higher-dimensional settings.

\medskip

\begin{restatable}[On Uniform Initialization]{rem}{remdgtwoinitialization}
	\label{remark:2x2-dg-upper-initialize}
	The proof of Theorem~\ref{thm:dg-best-2x2} 
	assumes a uniform initialization
	(similar to the results of~\cite{DaskalakisP19}, 
	\cite{wei2021linear}, \cite{FGKLLZ24},~\cite{C25}).
	However, our proof also naturally extends to 
	a positive measure set of initializations
	and leads to the same rate,
	up to the leading absolute constants.
	Note that such initializations 
	are all interior and near the uniform distribution;
	as we showed in Part (ii) of
	Theorem~\ref{thm:uniform-lb-secondary},
	there is in general no quantitative uniform convergence rate
	when initialized near the boundary of the simplex.
\end{restatable}

\smallskip

\begin{restatable}[On Generalizing to Higher Dimensions]
{rem}{remdgtwogeneralize}
	\label{remark:2x2-dg-generalize}
	Even in the $2\times 2$ setting,
	the proof for obtaining the $\widetilde O(T^{-1/2})$
	uniform best-iterate convergence rate in duality gap
	is intricate and involves understanding 
	the interplay between (a)
	the trajectory of the OMWU iterates
	and (b) the non-uniform rate of energy dissipation 
	over the iterates. 
	In the $2\times 2$ setting, the low-dimensional
	property of the effective dual space
	(as detailed in 
	Section~\ref{app:2x2-setup:primal-dual:effective-dual})
	allows for obtaining fine-grained control of 
	the iterate trajectory.
	This helps for establishing
	a phase transition in the rate of energy dissipation 
	when $T$ is sufficiently large (see 
	Lemma~\ref{lem:best-dg-large-T} below).
	While such a transition empirically holds
	for higher-dimensional settings,
	establishing an analogous bound 
	remains a challenging and open
	technical question.
	Moreover, controlling the trajectory of the iterates
	in the $2\times 2$ setting
	also allows for proving that the primal iterates always enter
	a region of the simplex with a sufficiently small duality gap,
	even when far from Nash under a metric like TV distance
	(see Sections~\ref{app:2x2-setup:energy-distances}
	and~\ref{sec:2x2-levelsets-compare}). 
	Thus, extending the result to higher-dimensions
	likely requires some additional control 
	of the iterates (e.g., establishing the number of 
	steps spent in various regions of the primal
	and dual spaces, as a function of the
	location of the interior NE).
	We conjecture that OMWU still obtains
	a $\widetilde O(T^{-1/2})$ uniform best-iterate
	convergence rate in the general-dimension
	setting, and we leave this for future work. 
\end{restatable}

\smallskip

\subsection{Proof of Theorem~\ref{thm:dg-best-2x2}}

In this section, we give the proof of
Theorem~\ref{thm:dg-best-2x2}. The proof uses
several intermediate components and 
results that we outline here,
and which we prove in the later subsections.

\paragraph{Assumptions on NE.}
First, we assume without loss of generality 
that $A \in [-1, 1]^{2\times 2}$ is not a constant matrix.
Then by Part (ii) of Proposition~\ref{prop:2x2-interior-unique},
since $w^\star = (p^\star, q^\star)$ is an
interior NE of $A$, then $w^\star$ is the unique NE of $A$.
Let $p^\star = (1-\delta_p, \delta_p) \in \relint(\Delta_2)$
and $q^\star = (1-\delta_q, \delta_q) \in \relint(\Delta_2)$
for $\delta_p, \delta_q \in (0, 1)$.
In the remainder of the proof, we will assume without
loss of generality that
$0 < \delta_p \le \delta_q \le \frac{1}{10}$.
Otherwise, if $\delta_p, \delta_q \ge \frac{1}{10}$, 
then the universal last-iterate bound of 
Theorem~\ref{thm:kl-last-unified},
together with Corollary~\ref{cor:distance-relations},
would directly give $\KL(w^\star, w_T) \le O(T^{-1/2})$
for all $T \ge 1$.

\smallskip

\paragraph{Assumptions on Initialization.}
We assume that the initialization $w_0 = (p_0, q_0) \in \relint(\calW)$
is the uniform distribution $p_0 = (1/2, 1/2)$
and $q_0 = (1/2, 1/2)$.

\paragraph{Considering Canonical $2 \times 2$ Matrix Suffices.} 
The proof relies on the properties of the 
$2\times2$ setting introduced in Section~\ref{app:omwu-2x2}.
In particular, we recall from Proposition~\ref{prop:A-properties}
that, for $v \in \R$ and $0 < \gamma \le 4$, 
the matrix $A$ can be decomposed as
\begin{equation*}
	A = \gamma \cdot A_{\delta_p, \delta_q} + v \1 \1^\top \;.
\end{equation*}
Here, $A_{\delta_p, \delta_q} \in [-1,1]^{2\times 2}$ 
is the canonical $2\times 2$ matrix from Definition~\ref{def:A-2x2} 
with a unique and interior NE $w^\star = (p^\star, q^\star)$
(i.e., the same NE as $A$).
The key consequence of this decomposition is that,
to establish bounds on duality gap for the iterates
of OMWU on $A$, it suffices to analyze
the iterates of OMWU on $A_{\delta_p, \delta_q}$
with an appropriately scaled stepsize. 
In particular, we have the following equivalence:

\begin{restatable}[OMWU on Canonical Matrix]
{prop}{proptwotwosuffices}
	\label{prop:2x2-suffices}
	Fix $A \in [-1, 1]^{2\times 2}$
	with an interior Nash equilibrium $w^\star = (p^\star, q^\star)$,
	where $p^\star = (1-\delta_p, \delta_p)$
	and $q^\star = (1-\delta_q, \delta_q)$ for $\delta_p, \delta_q \in (0, 1)$.
	Let $A_{\delta_p, \delta_q} \in [-1, 1]^{2\times 2}$ 
	be the payoff matrix from Definition~\ref{def:A-2x2}.
	Fix the initialization $w_0 = (p_0, q_0)$
	for $p_0 = q_0 = (0.5,0.5)$.
	Then there exists $0 \le \gamma \le 4$ such that 
	the following holds:
	\begin{itemize}[
		label=-,
		itemsep=0em,
		topsep=0em,
		leftmargin=2em
	]
		\item
		Let $\{w_t\}$ be the iterates of~\eqref{eq:omwu-primal}
		on $A$ with stepsize $\eta > 0$,
		initialized from $w_0$.
		\item
		Let $\{w'_t\}$ be the iterates of~\eqref{eq:omwu-primal}
		on $A_{\delta_p, \delta_q}$ with stepsize 
		$\eta' = \eta \gamma > 0$,
		initialized from $w_0$.
	\end{itemize}
	Then $w_t = w'_t$ for all $t \ge 0$.
\end{restatable}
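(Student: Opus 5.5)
The plan is to use Part (ii) of Proposition~\ref{prop:A-properties}. It writes $A = \gamma A_{\delta_p,\delta_q} + v\1\1^\top$ with $0<\gamma\le 4$. If $A$ is constant, take $\gamma=0$: then both $A$ and $A_{\delta_p,\delta_q}$ with stepsize $0$ give updates in which every coordinate shifts by the same amount, so every iterate stays at $w_0$, and the claim is trivial. The main task is therefore to show that the OMWU update on $A$ with stepsize $\eta$ coincides with the update on $A_{\delta_p,\delta_q}$ with stepsize $\eta\gamma$. We argue by induction on $t$, with base cases $w_0=w_0'$ and $w_{-1}=w_{-1}'=w_0$ by convention.

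For the inductive step, assume $w_s=w_s'$ for all $s\le t$. Consider the $p$-update in~\eqref{eq:omwu-primal}. Since $q_t,q_{t-1}\in\Delta_2$, we have $\1\1^\top q_t=\1$, which gives
\begin{equation*}
2Aq_t - Aq_{t-1} = \gamma\big(2A_{\delta_p,\delta_q}q_t - A_{\delta_p,\delta_q}q_{t-1}\big) + v\1 .
\end{equation*}
Hence $p_{t+1}(i)\propto p_t(i)\exp(-\eta\gamma(\cdots)(i))\cdot \exp(-\eta v)$. The factor $\exp(-\eta v)$ is common to all coordinates, so it cancels in the normalization. This is exactly the update for $A_{\delta_p,\delta_q}$ with stepsize $\eta'=\eta\gamma$ applied to $p_t'=p_t$. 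The same computation with $A^\top = \gamma A_{\delta_p,\delta_q}^\top + v\1\1^\top$ and $\1^\top p=1$ handles the $q$-update. Together these give $w_{t+1}=w'_{t+1}$.

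An equivalent route works through the dual form: $z_{t+1}$ differs from $z'_{t+1}$ only by shifts in $\calS$, and Proposition~\ref{prop:energy-linear} says $\nabla F$ is invariant to such shifts. The primal computation above is the more direct of the two, so it is the one to use.

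There is no substantial obstacle here. The only points needing care are the bookkeeping of the $t=1$ step, where the update uses $w_{-1}=w_0$ (the identity above still holds), and the degenerate constant-matrix case, in which Proposition~\ref{prop:A-properties} does not apply and which is covered by allowing $\gamma=0$.
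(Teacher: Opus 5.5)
Your proof is correct and is essentially the paper's argument. Both use induction on $t$, the decomposition $A=\gamma A_{\delta_p,\delta_q}+v\1\1^\top$ from Part~(ii) of Proposition~\ref{prop:A-properties}, and cancellation of the resulting common constant shift in the softmax normalization. The paper writes this via the cumulative-payoff softmax form rather than the one-step multiplicative update; that difference is immaterial, and your version does not even need the uniform initialization.

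Your separate handling of the constant-matrix case is a point the paper's proof skips. Note, though, that choosing $\gamma=0$ gives $\eta'=0$, which conflicts with the statement's own requirement $\eta'=\eta\gamma>0$. In that case the statement holds only in this relaxed sense: with $\gamma>0$ it fails unless $\delta_p=\delta_q=\tfrac12$.
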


Moreover, we also have the following relationship 
between the duality gaps under $A$ and $A_{\delta_p, \delta_q}$:

\begin{restatable}[Relationship between Duality Gaps]
{prop}{propdgcompare}
	\label{prop:2x2-dg-compare}
	Let $\DG_A(\cdot)$ 
	denote the duality gap under matrix $A$,
	and let $\DG_{A'}(\cdot)$
	denote the duality gap under matrix $A'$,
	where $A = \gamma \cdot A' + v \1\1^\top$ 
	for constants $\gamma, v \in \R$ with  $\gamma > 0$.
	Then for any $w = (p, q) \in \calW$
	and value $B > 0$:
	\begin{equation*}
		\DG_{A'}(w) \le B 
		\;\implies\;
		\DG_{A}(w) \le \gamma B \;.
	\end{equation*}
\end{restatable}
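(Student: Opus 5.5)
The plan is to unpack the definition of the duality gap and use the affine relation $A = \gamma A' + v\1\1^\top$ term by term. For any $w = (p,q) \in \calW$, write
\begin{equation*}
	\DG_A(w) = \max\nolimits_{q' \in \Delta_n} \langle q', A^\top p \rangle - \min\nolimits_{p' \in \Delta_m} \langle p', A q \rangle .
\end{equation*}
The first step is to compute the shifted inner products. For $p' \in \Delta_m$ and $q \in \Delta_n$,
\begin{equation*}
	\langle p', A q \rangle = \gamma \langle p', A' q \rangle + v \langle p', \1 \rangle \langle \1, q \rangle = \gamma \langle p', A' q \rangle + v ,
\end{equation*}
because both $p'$ and $q$ are probability vectors. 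In the same way, $\langle q', A^\top p \rangle = \gamma \langle q', A'^\top p \rangle + v$ for every $q' \in \Delta_n$.

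Next, I substitute these identities into the max and the min. Since $\gamma > 0$, the map $u \mapsto \gamma u + v$ is increasing, so it commutes with taking a max or a min over the simplex. This gives
\begin{equation*}
	\max\nolimits_{q'} \langle q', A^\top p \rangle = \gamma \max\nolimits_{q'} \langle q', A'^\top p \rangle + v
	\quad\text{and}\quad
	\min\nolimits_{p'} \langle p', A q \rangle = \gamma \min\nolimits_{p'} \langle p', A' q \rangle + v .
\end{equation*}
Subtracting the two, the constants $v$ cancel, and I get the exact identity $\DG_A(w) = \gamma \cdot \DG_{A'}(w)$.

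The conclusion then follows at once: if $\DG_{A'}(w) \le B$, then $\DG_A(w) = \gamma \DG_{A'}(w) \le \gamma B$. None of these steps should be a real obstacle. The only points to check are that $\gamma > 0$, which justifies pulling $\gamma$ out of the max and min, and that the constant shift contributes exactly $v$ on both sides because the strategies are normalized. Once those are verified, the identity is exact rather than just an inequality.
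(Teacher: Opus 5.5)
Your proposal is correct and follows the same route as the paper's proof. Both expand the two terms of the duality gap under $A = \gamma A' + v\1\1^\top$, use the normalization of the strategies to see that the shift contributes exactly $v$ to each term, pull $\gamma > 0$ through the max and the min, and obtain the exact identity $\DG_A(w) = \gamma\,\DG_{A'}(w)$, from which the implication is immediate.
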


The proofs of Proposition~\ref{prop:2x2-suffices}
and Proposition~\ref{prop:2x2-dg-compare}
are in Section~\ref{app:dg-upper:canonical}
and we grant them as true for now. 
As a result, the remainder of the proof focuses
on establishing bounds on the best-iterate in duality gap 
for the OMWU iterates when run 
on the matrix $A_{\delta_p, \delta_q}$
with scaled stepsize.
For this, we require the following slightly modified
assumption on the constant stepsize:

\paragraph{Assumptions on Stepsize for 
OMWU on Canonical Matrix.}
For OMWU run on any instantiation of the 
canonical matrix $A_{\delta_p, \delta_q} \in [-1, 1]^{2\times 2}$,
we assume that the stepsize $\eta$ satisfies the following:

\begin{restatable}[Stepsize on Canonical Matrix]
{ass}{assconstantdgupper}
	\label{ass:stepsize-dg-upper}
	$0 < \eta \le \frac{1}{4} \cdot \frac{1}{4(54\sigma_{\max} + 9)}$
	is an absolute constant.
\end{restatable}

Note that under Assumption~\ref{ass:stepsize-dg-upper},
we have that $\eta \cdot \gamma$ is bounded 
as in Assumption~\ref{ass:stepsize}
for $0 < \gamma \le 4$ (i.e., for the $\gamma$
arising in the equivalence of Proposition~\ref{prop:2x2-suffices}).
Thus, when analyzing the OMWU iterates on 
$A_{\delta_p, \delta_q}$ with stepsize 
$\eta' = \eta \gamma$ and $\eta$ satisfying 
Assumption~\ref{ass:stepsize},
then $\eta'$ also satisfies Assumption~\ref{ass:stepsize}.

We also reiterate Remark~\ref{remark:constant-stepsize}:
our proofs do not attempt to optimize
the constant stepsize constraint,
and the conclusions of this section likely hold under
even larger constant stepsize settings.

\paragraph{Uniform Best-Iterate Duality Gap Bounds
on Canonical $2 \times 2$ Matrix.}
Given the equivalence of Proposition~\ref{prop:2x2-suffices},
the core technical component of the proof 
is to establish uniform best-iterate bounds in duality gap
for OMWU run on the canonical matrix $A_{\delta_p, \delta_q}$.
For this, we prove three separate bounds 
that depend on the magnitude of the time horizon $T$
with respect to the minimum Nash coordinates $\delta_p$
and $\delta_q$. We refer to these regimes
as \textit{large $T$}, \textit{moderate $T$},
and \textit{small $T$}, and we give details below.

First, we make use of the values $V, L > 0$,
which we define as
\begin{equation}
	V = \delta_q \cdot \log\big(\tfrac{1-(\delta_q/3)}{\delta_q/3}\big) 
	\;\;\text{and}\;\;
	L = 40 \;.
	\label{eq:V-L-def}
\end{equation}
We then define the three regimes in terms
of $V, L$ and $\eta$, where
$\eta$ is the stepsize used for the iterates
of OMWU on the canonical matrix $A_{\delta_p, \delta_q}$.

\paragraph{Large $T$ regime.}
In the large $T$ regime, we assume that 
\begin{equation}
	\frac{1440 \cdot  L \cdot \log(2/V)}{\eta^2 (\delta_p \delta_q)^2}
	\le T \;,
	\label{eq:large-T}
	\tag{Large $T$}
\end{equation}
In this regime, we prove an upper bound on
the \textit{last-iterate} in $\KL$ divergence, which 
translates into an upper bound on the last-iterate
in duality gap. In particular, we prove the following:


\begin{restatable}[Large $T$ Last-Iterate Bound]
{lem}{lemlargeT}
	\label{lem:best-dg-large-T}
	Fix $0 < \delta_p, \delta_q \le \tfrac{1}{10}$,
	and let $A := A_{\delta_p, \delta_q}$. 
	Let $\{w_t\}$ be the iterates of OMWU on 
	$A$ with $\eta$ satisfying Assumption~\ref{ass:stepsize-dg-upper}.
	Let $T$ be as in \eqref{eq:large-T}.
	Then:
	\begin{equation*}
		\DG(w_T) 
		\le 
		\sqrt{4 \KL(w^\star, w_T)} 
		\le 
		O\Big( 
		\exp\big( - \tfrac{1}{2} \eta \sqrt{T} \big)
		\Big) \;.
	\end{equation*}
\end{restatable}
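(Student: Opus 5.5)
The first inequality is routine. For $A = A_{\delta_p,\delta_q}\in[-1,1]^{2\times 2}$ we have $a_{\max}\le 1$, so Corollary~\ref{cor:distance-relations} gives $\DG(w_T)\le 2\,\TV(w^\star,w_T)\le 2\sqrt{\KL(w^\star,w_T)}=\sqrt{4\KL(w^\star,w_T)}$. All the work is in the exponential bound on $\KL(w^\star,w_T)$.

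\textbf{Setup.} I would start from Part (1) of Theorem~\ref{thm:kl-last-unified}, in the sharper form of Corollary~\ref{cor:kl-last-improved-one-step}. By Proposition~\ref{prop:rsv-canonical-2x2} we have $\sigma_{\min}=\tfrac12$ for the canonical matrix. Iterating the corollary then gives
\[
\KL(w^\star,w_T)\le \tfrac54\KL(w^\star,w_0)\exp\Big(-\tfrac{\eta^2}{80}\sum_{t=1}^{T-1}p_{t,\min}q_{t,\min}\Big).
\]
With the uniform initialization, $\KL(w^\star,w_0)\le 2\log 2$. It therefore suffices to lower bound the cumulative sum $S_T:=\sum_{t<T}p_{t,\min}q_{t,\min}$. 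The crude uniform bound of Lemma~\ref{lem:wtmin-bound-uniform} is exponentially small in $1/\delta$ and is useless here. Instead I would show that a constant fraction of the iterates are ``good'', meaning $p_{t,\min}\ge \delta_p/3$ and $q_{t,\min}\ge\delta_q/3$ (in the $2\times2$ setting, $p_{t,\min}=\min(\w p_t,1-\w p_t)$ and $\delta_p\le\delta_q\le\tfrac1{10}$).

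\textbf{Trajectory control.}
\begin{itemize}
\item[(a)] Confine the dual iterates. Energy is nonincreasing after the first step by Lemma~\ref{lem:energy-one-step-full}, and $F(z_t)-F(z^\star)=\KL(w^\star,w_t)\le\tfrac54\cdot 2\log2$. Proposition~\ref{prop:2x2-max-dual-width}, applied after shifting by $z^\star$ (Corollary~\ref{cor:2x2-dual-nash}), then confines $\w x_t,\w y_t$ to intervals of width $O(\log(2/V))$, with $V$ as in~\eqref{eq:V-L-def}.
\item[(b)] Bound the bad stretches. Lemma~\ref{lemma:2x2-drift-bounds-unified} says that whenever $\w q_{k-1}\ge 1-\delta_q/3$ or $\w q_{k-1}\le 1-3\delta_q$, the coordinate $\w x$ drifts monotonically at rate at least $\tfrac{\eta}{3}\delta_p\delta_q$; the analogous statement holds for $\w y$ in terms of $\w p$. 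Since the coordinates stay in a bounded interval, any stretch of consecutive steps where $q$ sits near its boundary ($\w q\ge 1-\delta_q/3$) has length at most $K:=O(\log(2/V)/(\eta\delta_p\delta_q))$, and similarly for $p$.
\item[(c)] Bound the good stretches from below. By Corollary~\ref{corr:2x2-pd-threshold}, crossing the band between the thresholds $1-3\delta$ and $1-\delta/3$ requires a dual displacement of order $\delta\log 9$. By Proposition~\ref{prop:omwu-dual-diff}, each step moves the dual by at most $O(\eta\delta_p\delta_q)$ (the canonical update has this scale). Hence every pass through the good band lasts $\Omega(1/(\eta\delta_{q}))$ steps.
\end{itemize}
Combining (b) and (c) over the cyclic motion, every window of length $O(K)$ contains $\Omega(1/(\eta\delta_q))$ good steps, each contributing at least $\delta_p\delta_q/9$ to $S_T$. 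This gives $S_T\gtrsim T\,\delta_p\delta_q\cdot\eta\,\delta_p\delta_q/(\eta\delta_q\log(2/V))$, up to constants.

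\textbf{Closing the bound.} I would plug this lower bound on $S_T$ into the exponent. The assumption~\eqref{eq:large-T} gives $\eta\delta_p\delta_q\sqrt T\ge\sqrt{1440L\log(2/V)}$, and I would use it to rewrite the exponent $\tfrac{\eta^2}{80}S_T$ as at least $\eta\sqrt T$. The constants $1440$ and $L=40$ are presumably tuned so that the exponent dominates $\eta\sqrt T$. Squaring root then yields $\exp(-\tfrac12\eta\sqrt T)$ for $\sqrt{\KL}$.

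\textbf{Main obstacle.} I expect step (c), and the coupling of (b) with (c), to be the hardest part. One has to show that the iterates genuinely cycle through the good band rather than hugging the boundary in both coordinates simultaneously. This requires combining the four drift cases of Lemma~\ref{lemma:2x2-drift-bounds-unified} into a rotation argument (the sign pattern of the drifts forces the $(\w x,\w y)$ path to circulate around $z^\star$) and counting time per revolution. I would first try to prove that each full revolution takes $O(K)$ steps and contains a good stretch of length $\Omega(1/(\eta\delta_q))$. If the powers of $\delta$ do not match the target, I would fall back on the weaker goodness threshold $w_{t,\min}\gtrsim\delta_p\delta_q$, which may be what the $(\delta_p\delta_q)^2$ factor in~\eqref{eq:large-T} is designed to absorb.
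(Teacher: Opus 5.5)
Your first inequality is fine, and so is the reduction (via Corollary~\ref{cor:kl-last-improved-one-step} with $\sigma_{\min}=\tfrac12$) to lower-bounding $S_T=\sum_{t<T}p_{t,\min}q_{t,\min}$. The closing step, however, does not follow, and no choice of constants repairs it. Your count does not deliver the ``constant fraction of good iterates'' you aim for. Even granting step (c), you get a fraction $\asymp\delta_p/\log(2/V)$, hence $S_T\gtrsim T\delta_p^2\delta_q/\log(2/V)$. Using~\eqref{eq:large-T}, the exponent $\tfrac{\eta^2}{80}S_T$ is then only guaranteed to be $\gtrsim\eta\sqrt{T}\cdot\delta_p/\sqrt{\log(2/V)}$. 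At the left endpoint of the regime it is of order $1/\delta_q$, whereas $\tfrac12\eta\sqrt T$ is of order $\sqrt{\log(2/V)}/(\delta_p\delta_q)$ there.

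Step (c) is also unjustified. Proposition~\ref{prop:omwu-dual-diff} only gives $O(\eta)$ per step. The canonical update moves $x_t(1)$ by $\eta\delta_p\,|q_t(1)-(1-\delta_q)|$ up to the optimistic correction, which is of order $\eta\delta_p$, not $\eta\delta_p\delta_q$, whenever $q_t(1)$ is far from $1-\delta_q$ (as in the early revolutions from the uniform start). So a pass through the band is only guaranteed to last $\Omega(1/\eta)$ steps. This is exactly Lemma~\ref{lemma:2x2-epoch-bounds}(b): $1/\eta$ good steps per window of length $\tau=L/(\eta\delta_p\delta_q)$. With that count, $S_T\gtrsim T(\delta_p\delta_q)^2$, and~\eqref{eq:large-T} makes the exponent merely $\Omega(\log(2/V))$, i.e.\ a constant-factor decrease of KL. It dominates $\eta\sqrt T$ only once $T\gtrsim\eta^{-2}(\delta_p\delta_q)^{-4}$. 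The loss comes from treating all revolutions alike: revolutions that begin at KL of order one genuinely spend most of their time with $1-p_t(1)<\delta_p/3$ or $1-q_t(1)<\delta_q/3$.

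What is missing is the phase transition on which the paper's proof rests. The cyclic count is used only for a startup phase. Over $K=720\log(2/V)/(\eta\delta_p\delta_q)$ windows of length $\tau$, Lemma~\ref{lemma:2x2-epoch-bounds}(b) supplies $K/\eta$ steps with $p_{t,\min}\ge\delta_p/3$ and $q_{t,\min}\ge\delta_q/3$. Each such step contracts KL by $\exp(-\eta^2\delta_p\delta_q/720)$, so KL drops below the threshold $V$ of~\eqref{eq:V-L-def} by time $T_{\mathrm{startup}}=K\tau=720L\log(2/V)/(\eta^2(\delta_p\delta_q)^2)\le T/2$. This startup length, not a per-step rate, is what the $(\delta_p\delta_q)^2$ in~\eqref{eq:large-T} pays for.

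From then on KL is nonincreasing (Lemma~\ref{lem:energy-one-step-full} with Corollary~\ref{cor:change-kl-energy-equiv-omwu}). Hence all later iterates stay in a small KL sublevel set on which $p_{t,\min}\gtrsim\delta_p$ and $q_{t,\min}\gtrsim\delta_q$ hold at \emph{every} step. Each of the remaining $T/2$ steps then contracts KL by $\exp(-\Omega(\eta^2\delta_p\delta_q))$, so the exponent is $\Omega(\eta^2\delta_p\delta_q T)$. The bound $\delta_p\delta_q\ge\sqrt{1440L}/(\eta\sqrt T)$ converts this into $\Omega(\eta\sqrt T)$.

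Your rotation argument, with (c) corrected to $\Omega(1/\eta)$, is essentially Lemma~\ref{lemma:2x2-epoch-bounds} and suffices for the startup phase. By construction, though, it cannot exploit the fact that late iterates never return to the boundary. When you write the threshold step, check directly that the sublevel set forces $1-p_t(1)\ge\delta_p/3$ and $1-q_t(1)\ge\delta_q/3$. For instance, $\KL(p^\star,p)\ge(\log 3-\tfrac23)\,\delta_p$ whenever $1-p(1)\le\delta_p/3$, while Lemma~\ref{lemma:2x2-max-coord-bound}(iii) keeps $p_t(1),q_t(1)\ge\tfrac1{12}$.
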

\noindent
The proof of the lemma is in Section~\ref{app:2x2-dg:large-T-proof}.

\paragraph{Moderate $T$ regime.}
In the moderate $T$ regime, we assume that 
\begin{equation}
	\frac{L}{\eta \delta_p \delta_q}
	\le T 
	< 
	\frac{1440 \cdot L \cdot \log(2/V)}{\eta^2 (\delta_p \delta_q)^2}
	 \;.
	\label{eq:mod-T}
	\tag{Moderate $T$}
\end{equation}
For this regime, we prove the following bound on the 
best-iterate in duality gap in terms of the NE parameters
$\delta_p$ and $\delta_q$. 
Formally, we prove:

\begin{restatable}[Moderate $T$ Best-Iterate Bound]
{lem}{lemmodT}
	\label{lem:best-dg-mod-T}
	Fix $0 < \delta_p, \delta_q \le \tfrac{1}{10}$,
	and let $A := A_{\delta_p, \delta_q}$. 
	Let $\{w_t\}$ be the iterates of OMWU on 
	$A$ with $\eta$ satisfying Assumption~\ref{ass:stepsize-dg-upper}.
	Let $T$ be as in \eqref{eq:mod-T}.
	Then there exists $t \in [T]$ such that
	\begin{equation*}
		\DG(w_t) 
		\le 
		O\big( \delta_p \delta_q \big) \;.
	\end{equation*}
\end{restatable}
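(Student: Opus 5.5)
The plan is to track the two leading dual coordinates $\w x_t = x_t(1)$ and $\w y_t = y_t(1)$ of the iterates of OMWU on $A_{\delta_p,\delta_q}$, using the drift bounds of Lemma~\ref{lemma:2x2-drift-bounds-unified} and the primal--dual thresholds of Corollary~\ref{corr:2x2-pd-threshold}. The goal is to exhibit a time $t \le L/(\eta\delta_p\delta_q) \le T$ at which
\begin{equation*}
\w p_t \ge 1-\delta_p \quad\text{and}\quad 1-3\delta_q \le \w q_t \le 1-\delta_q .
\end{equation*}
In that cell of the piecewise formula~\eqref{eq:dg-2x2-full} we have $\DG(w_t) = (\w p_t-(1-\delta_p))\delta_q + (1-\delta_q-\w q_t)\delta_p \le \delta_p\delta_q + 2\delta_q\delta_p = O(\delta_p\delta_q)$, which is the claim. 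Only the lower bound $T \ge L/(\eta\delta_p\delta_q)$ of the moderate regime is used.

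The first step is an a priori confinement of the dual iterates. From the uniform initialization, $\KL(w^\star,w_0) \le 2\log 2 + O(1)$. By Corollary~\ref{cor:change-kl-energy-equiv-omwu}, Lemma~\ref{lem:energy-one-step-full} and Lemma~\ref{lem:kl-init-upperbound}, the energy satisfies $F(z_t) \le \alpha$ for an absolute constant $\alpha$. Here I use that over $\calZ$ we have $\min F = -R(w^\star) = O(1)$, which is the normalization of Proposition~\ref{prop:calZ-minimizer}. Proposition~\ref{prop:2x2-max-dual-width} then gives $\w x_t \ge -O(\delta_p)$ and $\w y_t \ge -O(\delta_q)$ for all $t$.

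The second step is Phase~1, driving $p$ to the vertex. Initially $\w q_0 = \tfrac12 \le 1-3\delta_q$. As long as $\w q_{k-1} \le 1-3\delta_q$, part~(i) of Lemma~\ref{lemma:2x2-drift-bounds-unified} gives $\w x_{k+1}-\w x_k \ge \tfrac{\eta}{2}\delta_p\delta_q$, so $\w x$, and hence $\w p = \sigmoid(\w x/\delta_p)$, is strictly increasing. The threshold $\w p \ge 1-\delta_p/3$ corresponds to $\w x \ge \delta_p\log\bigl((3-\delta_p)/\delta_p\bigr) = O(1)$, using $\delta_p\log(1/\delta_p) \le 1$. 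Starting from $\w x \ge -O(\delta_p)$, this threshold is reached within $O(1)/(\eta\delta_p\delta_q)$ steps, unless $\w q$ has already crossed $1-3\delta_q$ earlier; that case is handled in Phase~2.

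The third step is Phase~2, the rise of $q$ into the band. Once $\w p \ge 1-\delta_p/3$ (and it stays there, since $\w x$ keeps increasing while $\w q \le 1-3\delta_q$), part~(iv) of Lemma~\ref{lemma:2x2-drift-bounds-unified} gives $\w y_{k+1}-\w y_k \ge \tfrac{\eta}{3}\delta_p\delta_q$. The target $\w q \ge 1-3\delta_q$ corresponds to $\w y \ge \delta_q\log\bigl((1-3\delta_q)/(3\delta_q)\bigr) = O(1)$. By the confinement step, $\w y$ started Phase~2 at least at $-O(\delta_q)$, so the target is reached in another $O(1)/(\eta\delta_p\delta_q)$ steps. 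Choosing the constant $L=40$ large enough, both phases fit within $T$.

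Let $t$ be the first time with $\w q_t \ge 1-3\delta_q$. There are two things to check.
\begin{itemize}
\item $\w q_t$ does not jump past $1-\delta_q$. The per-step change of the logit $\w y/\delta_q$ is at most $3\eta$ (Proposition~\ref{prop:omwu-dual-diff} restricted to the first coordinate), whereas the band $[1-3\delta_q, 1-\delta_q]$ has logit width $\log\frac{3(1-\delta_q)}{1-3\delta_q} \ge \log 3 > 3\eta$.
\item $\w p_t \ge 1-\delta_p$. Since $\w q_{t-1} < 1-3\delta_q$, the coordinate $\w x$ has been nondecreasing up to time $t$. Moreover, $\w q$ can only have risen if $\w y$ increased at some earlier step, and by the update rule of Proposition~\ref{prop:omwu-pd-2x2} this requires $\w p \gtrsim 1-\delta_p$ at that step; monotonicity of $\w x$ then transfers this to time $t$.
\end{itemize}

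The main obstacle is exactly this last point, certifying $\w p_t \ge 1-\delta_p$ at the crossing time. The optimistic correction term $\w p_k - \w p_{k-1}$ in the $\w y$-update makes the sign condition slightly fuzzy. I would handle this either by working with the threshold $1-\delta_p/3$ from part~(iv) together with a small slack, or by absorbing an $O(\delta_p)$ deficit in $\w p$ into the first term of the duality gap, which still yields $O(\delta_p\delta_q)$.
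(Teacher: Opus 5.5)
The gap is the certification of $\widetilde p_t \ge 1-\delta_p$ at the first time $t$ with $\widetilde q_t \ge 1-3\delta_q$. It bites in the case where $\widetilde q$ crosses $1-3\delta_q$ before $\widetilde p$ reaches $1-\delta_p/3$.

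Otherwise your skeleton is the paper's. The paper also picks a time $t\le \tau = 40/(\eta\delta_p\delta_q)\le T$ with $\widetilde p_t\ge 1-\delta_p$ and $1-3\delta_q\le \widetilde q_t\le 1-\delta_q$, and reads $\DG(w_t)\le 3\delta_p\delta_q$ off \eqref{eq:dg-2x2-full}. It gets that time by citing Lemma~\ref{lemma:2x2-epoch-bounds}(a) with $t'=0$, whereas you re-derive it. Your confinement, timing and no-overshoot steps are fine. For the no-overshoot step, take the per-step bound from Proposition~\ref{prop:omwu-pd-2x2}, which gives $|\widetilde y_{k+1}-\widetilde y_k|<2\eta\delta_q$. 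Proposition~\ref{prop:omwu-dual-diff} as stated only gives $3\eta\sigma_{\max}$, which is useless after dividing by $\delta_q$.

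Here is why your argument for the crossing-time bound falls short. A rise $\widetilde y_{s+1}>\widetilde y_s$ only says $2\widetilde p_s-\widetilde p_{s-1}>1-\delta_p$. Combined with the logit step bound, this yields at best $\widetilde p_s>1-\delta_p/(2-e^{2\eta})=1-\delta_p(1+O(\eta))$, and monotonicity of $\widetilde x$ transfers only that to time $t$. Your fallback of absorbing the deficit into the duality gap does not work. In the cell $\widetilde p\le 1-\delta_p$, $\widetilde q\le 1-\delta_q$ of \eqref{eq:dg-2x2-full}, the deficit $1-\delta_p-\widetilde p$ is weighted by $1-\delta_q$, not by $\delta_q$. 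So a deficit of order $\eta\delta_p$ costs order $\eta\delta_p$, which is not $O(\delta_p\delta_q)$ once $\delta_q\ll\eta$. An $O(\delta_p)$ bound here would only give $\widetilde O(T^{-1/4})$ in Theorem~\ref{thm:dg-best-2x2}. Your other fallback, the threshold $1-\delta_p/3$, covers only the opposite case, where $\widetilde p$ gets there before $\widetilde q$ crosses.

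The missing idea is to telescope the optimistic correction rather than look at a single step. Summing the $\widetilde y$-update of Proposition~\ref{prop:omwu-pd-2x2} from $\widetilde y_0=0$ (uniform initialization) gives
\[
\widetilde y_t=\eta\delta_q\sum_{s=0}^{t-1}\bigl(\widetilde p_s-(1-\delta_p)\bigr)+\eta\delta_q\bigl(\widetilde p_{t-1}-\widetilde p_0\bigr).
\]
Suppose $\widetilde p_t<1-\delta_p$. Then every summand is negative, because:
\begin{itemize}
\item $\widetilde p_0=\tfrac12<1-\delta_p$;
\item $\widetilde p$ is nondecreasing on $\{1,\dots,t\}$, which you already have from part (i) of Lemma~\ref{lemma:2x2-drift-bounds-unified} since $\widetilde q_s<1-3\delta_q$ for $s<t$.
\end{itemize}
Hence $\widetilde y_t<\eta\delta_q(\widetilde p_{t-1}-\widetilde p_0)<\eta\delta_q/2$. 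But by Corollary~\ref{corr:2x2-pd-threshold}, $\widetilde q_t\ge 1-3\delta_q$ forces $\widetilde y_t\ge \delta_q\log\frac{1-3\delta_q}{3\delta_q}\ge \delta_q\log\frac{7}{3}>\eta\delta_q/2$, a contradiction. So $\widetilde p_t\ge 1-\delta_p$ holds exactly at the crossing time. With this step in place, the rest of your argument yields $\DG(w_t)\le 3\delta_p\delta_q$.
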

\noindent
The proof of the lemma is in Section~\ref{app:2x2-dg:mod-T-proof}.

\paragraph{Small $T$ regime.}
Finally, in the small $T$ regime, we assume that 
\begin{equation}
	1 \le T
	<
	\frac{L}{\eta \delta_p \delta_q} \;.
	\label{eq:small-T}
	\tag{Small $T$}
\end{equation}
In this regime, we prove the following bound:

\begin{restatable}
[Small $T$ Best-Iterate Bound]
{lem}{lemsmallT}
	\label{lem:best-dg-small-T}
	Fix $0 < \delta_p, \delta_q \le \tfrac{1}{10}$,
	and let $A := A_{\delta_p, \delta_q}$. 
	Let $\{w_t\}$ be the iterates of OMWU on 
	$A$ with $\eta$ satisfying Assumption~\ref{ass:stepsize-dg-upper}.
	Let $T$ be as in \eqref{eq:small-T}.
	Then there exists a time $t \in [T]$ such that
	\begin{equation*}
		\DG(w_t) 
		\le
		O\Big(
		\max\Big\{\frac{1}{\eta T}, \delta_p \Big\}\Big) \;.
	\end{equation*}
\end{restatable}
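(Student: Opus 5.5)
The plan is to track the OMWU trajectory on $A := A_{\delta_p,\delta_q}$ from the uniform initialization using the one-dimensional dual coordinates $\w x_t = x_t(1)$ and $\w y_t = y_t(1)$ of Proposition~\ref{prop:omwu-pd-2x2}. The target is a time $t\in[T]$ at which $\DG(w_t)$ is at most order $\max\{1/(\eta T),\delta_p\}$. In the piecewise formula~\eqref{eq:dg-2x2-full}, every term carries a factor that is either $\delta_p$, $\delta_q$, or a coordinate gap $|1-\delta_p-\w p|$, $|1-\delta_q-\w q|$.

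The natural candidates are the iterates in the quadrant $\w p_t \ge 1-\delta_p$, $\w q_t\le 1-\delta_q$. There
\[
\DG(w_t) = (\w p_t-(1-\delta_p))\delta_q + (1-\delta_q-\w q_t)\delta_p \le \delta_p\delta_q + \delta_p \le 2\delta_p .
\]
So the first step is a case split.

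\textbf{Case 1: the trajectory reaches that quadrant before time $T$.} The bound $O(\delta_p)$ is immediate, and we are done.

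\textbf{Case 2: it does not.} We must then show that the trajectory is already close to the NE in the relevant coordinates.

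\textbf{Initial phase.} Since $p_0=q_0=(1/2,1/2)$, we have $\w x_0=\w y_0=0$ and $\w p_0 = 1/2 \le 1-3\delta_p$. Lemma~\ref{lemma:2x2-drift-bounds-unified}(iii) then gives $\w y_{k+1}-\w y_k\le -\tfrac{\eta}{2}\delta_p\delta_q$ for as long as $\w p_{k-1}\le 1-3\delta_p$. Meanwhile $\w q$ stays $\le 1/2\le 1-3\delta_q$, so part (i) gives $\w x_{k+1}-\w x_k\ge \tfrac{\eta}{2}\delta_p\delta_q$. Hence $\w x_t \ge \tfrac{\eta}{2}\delta_p\delta_q (t-1)$.

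Write $\w x^{(c)} = \delta_p\log\frac{1-c\delta_p}{c\delta_p}$, which is $O(\delta_p\log(1/\delta_p))$. For $\w p_t$ to remain below $1-3\delta_p$ we need $\w x_t \le \w x^{(3)}$ (Corollary~\ref{corr:2x2-pd-threshold}), and that can only hold for $t \lesssim \log(1/\delta_p)/(\eta\delta_q)$.

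\textbf{Middle phase.} After $\w p$ crosses $1-3\delta_p$, $\w x$ keeps increasing while $\w q\le 1-3\delta_q$. This increase continues until $\w p$ enters $[1-\delta_p,1)$, which is the good quadrant unless $\w q$ has meanwhile exceeded $1-\delta_q$. If $\w q$ has exceeded it, $\w p$ is stuck between $1-3\delta_p$ and $1-\delta_p$ with $\w q > 1-\delta_q$. The duality gap is then
\[
(1-\delta_p-\w p)(1-\delta_q) + (\w q-(1-\delta_q))(1-\delta_p) \le 2\delta_p + (\w q -(1-\delta_q)) .
\]
This is small only if $\w q$ is also near $1-\delta_q$.

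\textbf{Where $1/(\eta T)$ enters.} To control this piece, I would convert the time spent before time $T$ into a displacement. While $\w p\le 1-\delta_p$ and $\w q\le 1-3\delta_q$, the $\w x$ coordinate drifts by at least $\tfrac{\eta}{2}\delta_p\delta_q$ per step. Choosing the iterate $t$ that minimizes the relevant gap among the first $T$ steps, I would bound the gap by the total range of the dual coordinate divided by the number of steps. That range is $O(\delta_p\log(1/\delta))$ by Proposition~\ref{prop:2x2-max-dual-width} together with the energy bound $F(z_t)\le \tfrac54\KL(w^\star,w_0) + F(z^\star)$. The primal duality-gap factor then yields roughly $\delta_p\delta_q\cdot\frac{\text{range}}{\eta\delta_p\delta_q T}$, which is $O(1/(\eta T))$ after using the sigmoid Lipschitz factor $1/\delta$.

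\textbf{Main obstacle.} I expect the hardest part to be this accounting for the phase in which $\w q$ overshoots $1-\delta_q$ before $\w p$ reaches $1-\delta_p$. There the monotonicity from Lemma~\ref{lemma:2x2-drift-bounds-unified} is lost in the band $1-3\delta_q<\w q<1-\delta_q/3$.

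My first attempt would be to use the width bound to argue that $\w y$ can take at most $O(\log(1/\delta_q)/(\eta\delta_p))$ steps across this band. I would then pick the iterate at which $\w q$ is closest to $1-\delta_q$, and use the per-step movement $|\w q_{t+1}-\w q_t| = O(\eta\delta_p)$ (the sigmoid derivative times $3\eta\delta_q$ over $\delta_q$) to guarantee that some iterate lands within $O(\eta\delta_p)\le O(\delta_p)$ of $1-\delta_q$. Finally, the small-$T$ hypothesis $T< L/(\eta\delta_p\delta_q)$ would be used to show that any iterate not yet at the band satisfies $\DG=O(1/(\eta T))$, through the linear drift lower bound.
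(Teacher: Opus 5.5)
Your proposal has a genuine gap, and it sits exactly where the $1/(\eta T)$ term has to come from.

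\textbf{The $1/(\eta T)$ branch.} In the branch where $\w p_t$ stays below $1-3\delta_p$, you use only the generic drift $\w x_{k+1}-\w x_k\ge\tfrac{\eta}{2}\delta_p\delta_q$ from Lemma~\ref{lemma:2x2-drift-bounds-unified}(i). Through $\w p_t=\sigmoid(\w x_t/\delta_p)$ this gives only $\DG(w_t)\le 1-\w p_t\le 2/(\eta\delta_q(t-1))$, which is a factor $1/\delta_q$ too weak. For example, take $T\approx 1/(\eta\sqrt{\delta_q})$. This $T$ lies in the small-$T$ regime, and your phase-length bound $\lesssim\log(1/\delta_p)/(\eta\delta_q)$ cannot exclude it from this branch. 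There your bound gives only $O(1)$, against a target of $O(\sqrt{\delta_q}+\delta_p)$.

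The ``range divided by number of steps'' step does not repair this. A drift lower bound plus a range bound controls how \emph{long} a monotone phase can last, not the gap at a chosen iterate. Moreover, your own arithmetic produces $\log(1/\delta)/(\eta T)$, whose $\log(1/\delta)$ factor is unbounded for fixed $T$ and would ruin the uniform rate.

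The missing observation is that in this branch $\w y$ is nonincreasing by part (iii), so $\w q_t\le\w q_{t-1}\le\w q_0=1/2$. Reading the update of Proposition~\ref{prop:omwu-pd-2x2} directly then gives $\w x_{t+1}-\w x_t\ge\eta\delta_p(\tfrac12-\delta_q)\ge\tfrac{\eta}{4}\delta_p$. This drift is of order $\eta\delta_p$, not $\eta\delta_p\delta_q$. Hence $\w p_t\ge\sigmoid(\eta t/4)$, the $\delta_p$ cancels, and $\DG(w_t)\le 1-\w p_t\le 4/(\eta t)$.

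\textbf{The middle phase.} Your ``middle phase'' and ``main obstacle'' are an artifact of targeting the quadrant $\{\w p\ge 1-\delta_p,\ \w q\le 1-\delta_q\}$. Split instead on whether $\w p_t\ge 1-3\delta_p$ for some $t\in[T]$. At the first such $t$, part (iii) has applied at every earlier step, so still $\w q_t\le 1/2<1-\delta_q$. Then \eqref{eq:dg-2x2-full} gives $\DG(w_t)\le 3\delta_p$, whether $\w p_t$ lies below or above $1-\delta_p$.

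So the overshoot of $\w q$ past $1-\delta_q$ never has to be analyzed. Your sketched resolution of it would in any case have to locate $\w p$ at the selected iterate, which you cannot do once $\w x$ stops being monotone in the band $1-3\delta_q<\w q<1-\delta_q/3$. Nor is the small-$T$ hypothesis needed: the two cases give $\min_{t\in[T]}\DG(w_t)\le\max\{3\delta_p,\,4/(\eta T)\}$ for every $T$.
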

\noindent
We note that the statement of the lemma
is similar to that of Theorem 5 of~\cite{C25}.
We give the proof in Section~\ref{app:2x2-dg:small-T-proof}.

\smallskip

Moreover, we note that, in the proofs of each of
Lemmas~\ref{lem:best-dg-large-T},~\ref{lem:best-dg-mod-T},
and~\ref{lem:best-dg-small-T},
the key technical component involves controlling
the amount of time until the iterates enter certain
regions of the primal space (defined relative to $w^\star$), 
and the number of iterations
spent within these regions. 
Using the tools introduced in
Section~\ref{app:2x2-setup:monotonicity},
we establish such bounds in 
Lemma~\ref{lemma:2x2-epoch-bounds},
which we state and 
prove in Section~\ref{app:2x2-dg:iterate-trajectory}.

\smallskip

\paragraph{Combining the Pieces.}
Using the preceding lemmas, the final bound
on the best-iterate in duality gap is straightforward.
For this, let $\{w_t\}$ be the iterates of running
OMWU on $A$ with stepsize $\eta$,
and let $\{w'_t\}$ be the iterates 
of running OMWU on $A_{\delta_p, \delta_q}$ 
with stepsize $\eta' = \gamma \eta$,
as described in Proposition~\ref{prop:2x2-suffices}.

\noindent
Then for the iterates $\{w'_t\}$, we have 
by Lemmas~\ref{lem:best-dg-large-T},~\ref{lem:best-dg-mod-T},
and~\ref{lem:best-dg-small-T} the following:
\begin{itemize}[
	leftmargin=1.5em
]
	\item
	If $T$ satisfies~\eqref{eq:large-T}, 
	then by Lemma~\ref{lem:best-dg-large-T},
	we have for all such $T$ that
	\begin{equation}
		\DG_{A_{\delta_p, \delta_q}}(w'_T) 
		\le O(\exp(-\tfrac{1}{2} \eta' T)) \le \frac{1}{\eta' \sqrt{T}}  \;.
		\label{eq:c-large}
	\end{equation}

	\item
	If $T$ satisfies~\eqref{eq:mod-T},
	then observe by definition that 
	\begin{equation*}
		T \ge \frac{L}{\eta' \delta_p \delta_q} \ge \frac{L}{\eta' \delta_p^2}
		\;\implies\;
		\delta_p \ge \sqrt{\frac{L}{T \eta'}} \;.
	\end{equation*}
	Moreover, by definition of $V$ from~\eqref{eq:V-L-def}, 
	it holds for $\delta_q \le 1/10$ that $V \ge \delta_q \ge \delta_p$,
	and thus
	$\log(2/V) \le \log(2/\delta_q) \le \log(2\sqrt{T\eta'/ L})
	\le \log(\sqrt{T/L}) \le \log(T/L)$,
	where the penultimate inequality is due to $\eta' \le 1/4$
	under Assumption~\ref{ass:stepsize-dg-upper}.
	Thus it further follows by the upper constraint on $T$ that
	\begin{equation*}
		\delta_p \delta_q 
		\le
		\frac{\sqrt{1440 L \log (2/V)}}{\eta' \sqrt{T}}
		\le 
		\frac{\sqrt{1440 L \log (T/L)}}{\eta' \sqrt{T}} \;.
	\end{equation*}
	Then using Lemma~\ref{lem:best-dg-mod-T}, we find
	\begin{equation}
		\min\nolimits_{t \in [T]}\,\DG_{A_{\delta_p, \delta_q}}(w'_t)
		\le 
		O(\delta_p \delta_q)
		\le O\Big( \frac{\sqrt{\log T}}{\eta' \sqrt{T}}\Big) \;.
		\label{eq:c-mod}
	\end{equation}

	\item
	Finally, if $T$ satisfies~\eqref{eq:small-T}, then
	observe that 
	\begin{equation*}
		T < \frac{L}{\eta' \delta_p \delta_q} \le \frac{L}{\eta' \delta_p^2}
		\;\;\implies\;\;
		\delta_p 
		\le 
		\sqrt{\frac{L}{\eta' T}}
		\le 
		\frac{\sqrt{L}}{\eta' \sqrt{T}} \;.
	\end{equation*}
	Then by Lemma~\ref{lem:best-dg-small-T}, we conclude
	\begin{equation}
		\min\nolimits_{t \in [T]}\,\DG_{A_{\delta_p, \delta_q}}(w'_t)
		\le 
		O\Big( 
			\max\Big\{ \frac{1}{\eta' T}, \delta_p \Big\}
		\Big)
		\le 
		O\Big(
			\frac{\sqrt{\log T}}{\eta' \sqrt{T}}
		\Big) \;.
		\label{eq:c-small}
	\end{equation}
\end{itemize}
Thus using the fact that $\eta' = \gamma \cdot \eta$, 
for any $T \ge 1$,
combining the conclusions of the three regimes 
in expressions~\eqref{eq:c-large},~\eqref{eq:c-mod},
and~\eqref{eq:c-small} together gives
\begin{equation*}
	\min\nolimits_{t \in [T]}\,\DG_{A_{\delta_p, \delta_q}}(w'_t)
	\le 
	O\Big(\frac{1}{\gamma \eta \cdot \sqrt{T}}\Big) \;.
\end{equation*}
Finally, by Proposition~\ref{prop:2x2-dg-compare}, 
it follows that
\begin{equation*}
	\min\nolimits_{t \in [T]}\,\DG_{A_{\delta_p, \delta_q}}(w'_t)
	\le 
	O\Big(\frac{\sqrt{\log T}}{\gamma \eta \cdot \sqrt{T}}\Big) 
	\;\implies\;
	\min\nolimits_{t \in [T]}\,\DG_{A}(w_t)
	\le 
	O\Big(\frac{\sqrt{\log T}}{\eta \cdot \sqrt{T}}\Big) \;.
\end{equation*}
Noting that $\eta$ is an absolute constant
under Assumption~\ref{ass:stepsize-dg-upper}
concludes the proof.
~\hfill $\blacksquare$

\subsection{Sufficiency of OMWU on 
Canonical $2 \times 2$ Matrix}
\label{app:dg-upper:canonical}

Here, we give the proofs of 
Propositions~\ref{prop:2x2-suffices} and \ref{prop:2x2-dg-compare},
which are used in the proof of Theorem~\ref{thm:dg-best-2x2}.

\subsubsection{Proof of Proposition~\ref{prop:2x2-suffices}}

\begin{proof}
	Let $A$ have entries $A = ((a, b), (c, d)) \in \R^{2\times 2}$.
	Let $v = \langle p^\star, A q^\star \rangle \in \R$.
	For readability, let $A' = A_{\delta_p, \delta_q}$. 
	For each $t \ge 0$, let $w_t = (p_t, q_t)$
	and $w'_t = (p'_t, q'_t)$.
	Now recall from the definition of~\eqref{eq:omwu-primal} 
 	and by Proposition~\ref{prop:oftrl-softmax},
	we have under the uniform initialization that for all $t \ge 0$:
	\begin{align*}
		&\begin{cases}
			p_t = \softmax(
				- \eta (\sum_{k=0}^{t-1} Aq_k + A q_{t-1} )
			) \\
			q_t = \softmax(
				\eta (\sum_{k=0}^{t-1} A^\top p_k + A^\top p_{t-1})
			)
		\end{cases} \\
		\text{and}\quad
		&\begin{cases}
		p'_t = \softmax(
			- \eta' (\sum_{k=0}^{t-1} Aq'_k + A q'_{t-1} )
		) \\
		q'_t = \softmax(
			\eta' (\sum_{k=0}^{t-1} A'^\top p'_k + A'^\top p'_{t-1})
		).
		\end{cases}
	\end{align*}
	We will show inductively that $p_t = p'_t$ and $q_t = q'_t$ for 
	all $t \ge 0$. The base case at time $t=0$ trivially holds
	by assumption in the proposition. 
	Now suppose the claim also holds for all $0 \le k \le t-1$.
	Now recall from Part (ii) of Proposition~\ref{prop:A-properties}
	that for $\gamma = \frac{|a - v|}{\delta_p \delta_q} \in (0, 4]$, 
	$A$ can be decomposed as 
	\begin{equation*}
		A = \gamma A' + v \1 \1^\top \;.
	\end{equation*}
	Thus at time $t$, observe that the cumulative payoff vector
	can be written as
	\begin{align}
		- \eta\Big(\sum_{k=0}^{t-1} Aq_k + Aq_{t-1}\Big)
		&=
		- \eta \Big(\sum_{k=0}^{t-1} (\gamma A' q_k + c \1) 
		+ \gamma A' q_{t-1} + c \1 \Big) \nonumber \\
		&= 
		- \eta \gamma 
		\Big(
		A' q'_k + A' q'_{t-1}
		\Big)
		- \eta c (t+1) \1 \;.
		\label{eq:sm-01}
	\end{align}
	Here, we applied in the second line the inductive
	hypothesis $q_k = q'_k$ for all $0 \le k \le t-1$.
	Then applying the softmax operator to~\eqref{eq:sm-01},
	it follows that 
	\begin{align*}
		\textstyle
		p_t &= \softmax\Big(
			- \eta \Big(\sum_{k=0}^{t-1} Aq_k + Aq_{t-1}\Big)
		\Big) \\
		&= 
		\softmax\Big(
			- \eta \gamma 
			\Big(\sum_{k=0}^{t-1} A'q'_k + A'q'_{t-1}\Big)
			- \eta c(t+1) \1
		\Big) \\
		&= 
		\softmax\Big(
			- \eta'
			\Big(\sum_{k=0}^{t-1} A'q'_k + A'q'_{t-1}\Big)
		\Big)
		= p'_t \;,
	\end{align*}
	where the penultimate equality comes from the
	invariance of softmax to constant shifts,
	and the final equality comes from the definition of $p'_t$.
	Repeating an identitical calculation, we similarly find
	$q_t = q'_t$, which establishes $w_t = w'_t$ for all $t \ge 0$
	and concludes the proof.
\end{proof}

\medskip

\begin{restatable}{rem}{remtwotwosuffices}
	\label{remark:2x2-suffices-softmax}
	We note that the statement of the 
	result is similar to Proposition 2 of~\cite{C25},
	but using a different ``base matrix'' 
	(see also Remark~\ref{rem:canonical-2x2-matrix}).
	Also, observe that the proof of
	Proposition~\ref{prop:2x2-suffices} only
	relies on the invariance of the softmax function 
	to constant shifts.
	Thus, the statement	 also holds under other MWU variants
	(e.g., standard MWU, c.f.
	Section~\ref{app:omwu-prelims:omwu-related}).
	Finally, we assume for simplicity that the initialization
	$w_0 = ((\frac{1}{2}, \frac{1}{2}),(\frac{1}{2}, \frac{1}{2}))$
	is uniform, but the proof also extends in a straightforward
	way to non-uniform initializations.
\end{restatable}

\subsubsection{Proof of Proposition~\ref{prop:2x2-dg-compare}}

\begin{proof}
	Recall definition of duality gap (i.e., from Section~\ref{sec:prelims})
	that for any $w = (p, q) \in \calW$,  we have
	\begin{equation*}
		\begin{aligned}
			\DG_A(w) 
			&= \max_{q' \in \Delta_2} \langle q', A^\top p \rangle
			- 
			\min_{p' \in \Delta_2} \langle p', A q \rangle \\
			\text{and}\;\;
			\DG_{A'}(w) 
			&= \max_{q' \in \Delta_2} \langle q', A'^\top p \rangle
			- 
			\min_{p' \in \Delta_2} \langle p', A' q \rangle \;.
		\end{aligned}
	\end{equation*}
	Now using the fact that 
	$A = \gamma A' + v \1\1^\top$ for $\gamma, v \in \R$
	and $\gamma > 0$, it follows that
	\begin{align*}
		\DG_{A}(w)
		&=
		\gamma \cdot 
		\max_{q' \in \Delta_2} \langle q', A' p \rangle + v
		- 
		\big(\gamma \cdot \min_{p' \in \Delta_2} \langle p', A' q \rangle + v\big) \\
		&= 
		\gamma \cdot 
		\big(
			\max_{q' \in \Delta_2} \langle q', A'^\top p \rangle
			- 
			\min_{p' \in \Delta_2} \langle p', A' q \rangle
		\big) \\
		&= \gamma \cdot \DG_{A'}(w) \;.
	\end{align*}
	Thus it follows that if $\DG_{A'}(w) \le B$ for any $B > 0$, 
	then $\DG_A(w) \le \gamma\cdot$.
\end{proof}

\subsubsection{Properties of Iterate Trajectory}
\label{app:2x2-dg:iterate-trajectory}

In this section, we state and prove two key technical lemmas
regarding the OMWU iterates on the canonical 
$2\times 2$ matrix $A_{\delta_p, \delta_q}$.
In the following lemma, we start by establishing
(somewhat crude) uniform bounds on the coordinates of 
the primal and dual iterates when initialized from
the uniform distribution.

\smallskip

\begin{restatable}[Energy and Coordinate Bounds
on Canonical 2x2 Matrix]
{lem}{lemtwotwouniformcoords}
	\label{lemma:2x2-max-coord-bound}
	Fix $0 < \delta_p \le \delta_q \le \frac{1}{10}$.
	Let $A := A_{\delta_p, \delta_q} \in [-1,  1]^{2\times2}$
	be the matrix from Definition~\ref{def:A-2x2}.
	Let $\{w_t\}$ and $\{z_t\}$ be the primal and dual iterates
	of OMWU for $\eta$ satisfying 
	Assumption~\ref{ass:stepsize-dg-upper}
	and initialized from the uniform distribution. 
	For each $t \ge 0$, let $z_t = (x_t, y_t)$
	and let $w_t = (p_t, q_t)$.
	Then for all $t \ge 0$, the following bounds hold:
	\begin{enumerate}[
		label={(\roman*)},
		itemsep=0em,
		topsep=0em,
		leftmargin=3em,
	]
		\item
		$F(z_t) < F(z_{t-1}) < \dots < F(z_1) \le
		\big(\tfrac{5}{4}\big) \cdot F(z_0) \le 2$.
		\item
		$- \big(\frac{\delta_p}{1-\delta_p}\big) \cdot 2 \le x_t(1) \le 2$ and 
		$- \big(\frac{\delta_q}{1-\delta_q}\big) \cdot 2\le y_t(1) \le 2$.
		\item
		$\min\{p_t(1), p_t(2)\} \ge \min\{\frac{1}{12}, 1-p_t(1)\}$ ~and~ 
		$\min\{q_t(1), q_t(2)\} \ge \min\{\frac{1}{12}, 1-q_t(1)\}$.
	\end{enumerate}
\end{restatable}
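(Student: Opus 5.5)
Part (i) is the base, and parts (ii) and (iii) follow from it in turn.

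For part (i), I would first pin down the dual initialization. Under Setting~\ref{setting:2x2-canonical} the game value is zero, so by Remark~\ref{remark:2x2-primal-dual} the map $\nabla F:\calZ\to\relint(\calW)$ is a bijection. The uniform primal point therefore corresponds to a unique $z_0\in\calZ$, and by Proposition~\ref{prop:2x2-energy-map-simplified}(iii) it is $x_0(1)=y_0(1)=\delta\log 1=0$. Proposition~\ref{prop:2x2-effective-dual-orthogonal} then gives $x_0=y_0=0$, so $F(z_0)=2\log 2$. Applying Lemma~\ref{lem:initial-change-energy} together with the crude bound of Lemma~\ref{lem:kl-init-upperbound}, through Corollary~\ref{cor:change-kl-energy-equiv-omwu}, yields $F(z_1)-F(z_0)\le \tfrac14\KL(w^\star,w_0)$. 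I also need $\KL(w^\star,w_0)\le F(z_0)$. This holds by Proposition~\ref{prop:calZ-minimizer}, because $\min_{\calZ}F\ge 0$: on $\calZ$ each one-dimensional term $\w x+\log(1+e^{-\w x/\delta_p})$ is at least its value at $x^\star(1)$, which is $-\ent_2(p^\star)\cdot(-1)$. More simply, $F(z^\star)=-R(w^\star)=\ent(w^\star)\ge0$. Hence $F(z_1)\le \tfrac54 F(z_0)=\tfrac52\log2<2$. The strict decrease $F(z_{t+1})<F(z_t)$ for $t\ge1$ is the upper bound of Lemma~\ref{lem:energy-one-step-full}; it is strict because the dissipation term vanishes only at $w^\star$ (step 2 of the proof of Theorem~\ref{thm:asymptotic-convergence}). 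At $w_t=w^\star$ the iterates are stationary and the claim is understood with $\le$. I also need to check that the stepsize in Assumption~\ref{ass:stepsize-dg-upper} implies Assumption~\ref{ass:stepsize}, which it does trivially.

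Part (ii) is immediate. Part (i) gives $F(z_t)\le 2$, and Proposition~\ref{prop:2x2-max-dual-width} with $\alpha=2$ then yields the stated interval for $x_t(1)$ and $y_t(1)$.

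For part (iii), I would translate the dual bounds to the primal side using $p_t(1)=\sigmoid(x_t(1)/\delta_p)$ from Proposition~\ref{prop:2x2-energy-map-simplified}(ii). The lower bound $x_t(1)\ge -2\delta_p/(1-\delta_p)$ gives $x_t(1)/\delta_p\ge -2/(1-\delta_p)\ge -20/9$. Therefore $p_t(1)\ge \sigmoid(-20/9)=1/(1+e^{20/9})$. Since $e^{20/9}\approx 9.2<11$, this is at least $1/12$. Consequently $\min\{p_t(1),p_t(2)\}$ equals either $p_t(1)\ge 1/12$ or $p_t(2)=1-p_t(1)$, so it is at least $\min\{1/12,1-p_t(1)\}$. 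The argument for $q$ is identical.

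The main obstacle is justifying $\KL(w^\star,w_0)\le F(z_0)$, equivalently $\min_{\calZ}F\ge0$, together with the exact identification $z_0=0\in\calZ$. Both rest on the bijectivity and zero-value structure of the canonical matrix, and I would verify them carefully before the numerical constants.
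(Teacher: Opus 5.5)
Your proposal is correct and follows essentially the same route as the paper. That route is: $z_0=0$ so $F(z_0)=2\log 2$, monotone decrease via Lemma~\ref{lem:energy-one-step-full}, then Proposition~\ref{prop:2x2-max-dual-width} with $\alpha=2$, then the sigmoid bound $1/(1+e^{20/9})\ge 1/12$. Your justification of $F(z_1)\le\tfrac54 F(z_0)$ is more complete than the paper's: you go through Lemma~\ref{lem:kl-init-upperbound} and $\KL(w^\star,w_0)=F(z_0)-F(z^\star)\le F(z_0)$, using $F(z^\star)=\ent(w^\star)\ge 0$, whereas the paper cites Lemma~\ref{lem:initial-change-energy} directly, and that lemma only bounds $F(z_1)-F(z_0)$ by $\tfrac58\eta^2\|J\nabla F(z_0)\|^2_{z_0}$. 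Your side remark that OMWU is stationary at $w^\star$ is not accurate, since the optimistic term still involves $w_{t-1}$, but nothing in your argument depends on it.
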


\begin{proof}
	First, under the uniform initialization $p_0 = (1/2, 1/2)$
	and $q_0 = (1/2, 1/2)$, observe by the 
	the primal-dual relationship in Part (iii) of
	Proposition~\ref{prop:2x2-energy-map-simplified} that
	$x_0 = (0, 0)$ and $y_0 = (0, 0)$.
	Thus by definition of $F(z_0)$, we have by Part (i) of 
	Proposition~\ref{prop:2x2-energy-map-simplified} that
	\begin{equation*}
		\textstyle
		F(z_0) = 
		x_0(1) + \log(1+\exp(\frac{-x_0(1)}{\delta_p})) + 
		y_0(1) + \log(1+\exp(\frac{-y_0(1)}{\delta_q}))  
		= 2 \log 2 \;.
	\end{equation*}
	Under the setting of $\eta$, recall by 
	Lemma~\ref{lem:energy-one-step-full} 
	that we have $F(z_t) - F(z_{t-1}) < 0$ for all $t \ge 2$.
	Moreover, by Lemma~\ref{lem:initial-change-energy},
	we further have 
	$F(z_1) \le \big(\tfrac{5}{4}\big) \cdot F(z_0)$
	Thus for all $t \ge 0$, we have 
	\begin{equation*}
		F(z_t) \le \big(\tfrac{5}{4}\big) \cdot F(z_0) 
		\le \tfrac{5}{2} \log 2 
		\le 2 \;.
	\end{equation*}
	This proves claim (i).

	Claim (ii) then follows by a direct application of 
	Proposition~\ref{prop:2x2-max-dual-width}
	using the bound $F(z_t) \le 2$.

	For claim (iii), observe from Proposition~\ref{prop:omwu-pd-2x2}
	(and also by Part (ii) of Proposition~\ref{prop:2x2-energy-map-simplified})
	that $p_t(1) = \sigmoid(x_t(1)/\delta_p)$ for all $t \ge 0$.
	Thus, using $x_t(1) \ge - 2 (\delta_p/(1-\delta_p))$, 
	we can further bound
	\begin{equation*}
		p_t(1) = 
		\frac{1}{1+\exp(2/(1-\delta_p))}
		\ge 
		\frac{1}{12} \;,
	\end{equation*}
	where the final inequality holds for all $\delta_p \le 1/10$.
	Thus 
	\begin{equation*}
		\textstyle
		\min\{p_t(1), p_t(2)\} = \min\{p_t(1), 1-p_t(1)\}
		\ge \min\{1/12, 1-p_t(1)\} \;.
	\end{equation*}
	Using an identitical calculation,
	we similarly find $\min\{q_t(1), q_t(2)\} \ge \min\{1/12, 1-q_t(2)\}$,
	which yields claim (iii) of the lemma and concludes the proof.
\end{proof}

\smallskip
The next lemma establishes a more fine-grained control
of the trajectory of the OMWU iterates on $A_{\delta_p, \delta_q}$.
In particular, the lemma proves two bounds in parallel:
first, it gives bounds on the amount of time 
until the iterates enter a region
of the primal space near the interior NE.
Second, it establishes a lower bound on the number 
of steps taken in regions where the minimum coordinates
of the primal iterates are uniformly lower bounded.
This latter property is used in the proof of Theorem~\ref{thm:dg-best-2x2} 
to establish a faster decay in $\KL$ divergence
(compared to the worst-case bound of Theorem~\ref{thm:kl-last-unified})
when $T$ is in the large regime.
Formally, we prove the following:

\smallskip

\begin{restatable}[Epoch invariants]{lem}{lemepochunified}
	\label{lemma:2x2-epoch-bounds}
	Fix $0 < \delta_p \le \delta_q \le \frac{1}{10}$,
	and let $A := A_{\delta_p, \delta_q} \in [-1, 1]^{2\times 2}$
	be the matrix from Definition~\ref{def:A-2x2}.
	Let $\{w_t\}$ be the iterates of OMWU
	on $A$ with stepsize $\eta > 0$ satisfying
	Assumption~\ref{ass:stepsize-dg-upper},
	and let $w_t = (p_t, q_t)$.
	Let $\tau = 40/(\eta \delta_p \delta_q)$.
	Then for any $t' \ge 0$:
	\begin{enumerate}[
		label={(\alph*)},
		itemsep=0em,
		topsep=0em,
		leftmargin=2.5em
	]
	\item 
	There exists at least one iteration $t' \le t \le t' + \tau$
	such that
	\begin{equation*}
		p_{t}(1) \ge 1- \delta_p
		\;\;\text{and}\;\;
		1 - 3 \delta_q \le q_t(1) \le 1-\delta_q\;.
	\end{equation*}

	\item
	There exist at least $\frac{1}{\eta}$ iterations
	$t' \le t \le t' + \tau$ such that
	\begin{equation*}
		p_{t}(1) \le 1- (\delta_p/3)
		\;\;\text{and}\;\;
		q_{t}(1) \le 1- (\delta_q/3) \;.
	\end{equation*}
	\end{enumerate}
\end{restatable}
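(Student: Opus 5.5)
The plan is to reduce everything to the two leading dual coordinates $\w x_t = x_t(1)$ and $\w y_t = y_t(1)$. By Proposition~\ref{prop:omwu-pd-2x2} and Remark~\ref{remark:2x2-omwu-first-coords}, these evolve by a scalar recursion. By Corollary~\ref{corr:2x2-pd-threshold}, every primal condition in the statement becomes a threshold condition on $\w x_t$ or $\w y_t$. For $\rho\in(0,1)$ write $\theta_p(\rho)=\delta_p\log\frac{1-\rho}{\rho}$ and $\theta_q(\rho)=\delta_q\log\frac{1-\rho}{\rho}$. Then $p_t(1)\ge 1-\rho$ iff $\w x_t\ge\theta_p(\rho)$, and similarly for $q_t(1)$ and $\w y_t$. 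The relevant thresholds are $\rho\in\{3\delta,\delta,\delta/3\}$. Consecutive thresholds differ by $\delta\log 3$ up to lower-order terms, for example $\theta_q(\delta_q/3)-\theta_q(3\delta_q)\ge \delta_q\log 9-O(\delta_q^2)$.

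Two pieces of control come from earlier results. First, Lemma~\ref{lemma:2x2-max-coord-bound}(ii) confines $(\w x_t,\w y_t)$ to the box $[-\tfrac{2\delta_p}{1-\delta_p},2]\times[-\tfrac{2\delta_q}{1-\delta_q},2]$, whose side lengths are at most $3$. Second, the update rule gives the speed limits $|\w x_{t+1}-\w x_t|\le 2\eta\delta_p$ and $|\w y_{t+1}-\w y_t|\le 2\eta\delta_q$, which follow because all the terms in the recursion lie in $(-1,1)$.

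The core step is a rotation argument driven by Lemma~\ref{lemma:2x2-drift-bounds-unified}. I will split the plane into the four ``drift zones'' determined by whether $q_{k-1}(1)\le 1-3\delta_q$ or $q_{k-1}(1)\ge 1-\delta_q/3$, and whether $p_{k-1}(1)\le 1-3\delta_p$ or $p_{k-1}(1)\ge 1-\delta_p/3$. The thin bands in between have width at most about $\delta\log 9$, so they are crossed in at most $O(1/\eta)$ steps by the speed bounds, and I will argue that being stuck inside a band forces drift out of it.
\begin{itemize}
\item While $q$ is low, the drift bounds give $\w x$ increasing at rate $\ge\tfrac{\eta}{2}\delta_p\delta_q$.
\item While $p$ is high, $\w y$ increases at rate $\ge \tfrac{\eta}{3}\delta_p\delta_q$.
\item While $q$ is high, $\w x$ decreases.
\item While $p$ is low, $\w y$ decreases.
\end{itemize}
This yields the standard cyclic motion around $z^\star$. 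Since each coordinate moves monotonically at speed $\ge\tfrac{\eta}{3}\delta_p\delta_q$ across a box of width $\le 3$, each quarter-turn takes at most $9/(\eta\delta_p\delta_q)$ steps. A full traversal, plus the $O(1/\eta)\le O(1/(\eta\delta_p\delta_q))$ band crossings, fits into $\tau=40/(\eta\delta_p\delta_q)$ from any starting time $t'$. The one-step lag in the hypotheses of Lemma~\ref{lemma:2x2-drift-bounds-unified} (conditions at time $k-1$, drift at step $k$) only adds $O(1)$ steps per phase transition.

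For part (a), follow the phase in which $p$ is high and $\w y$ increases. During this phase, $\w y$ moves upward from below $\theta_q(3\delta_q)$ to above $\theta_q(\delta_q)$ in increments of at most $2\eta\delta_q\ll \delta_q\log 3$. Hence some iterate has $\w y_t\in[\theta_q(3\delta_q),\theta_q(\delta_q)]$ while still $\w x_t\ge\theta_p(\delta_p)$, using the speed limit on $\w x$ over this short window. This gives (a).

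For part (b), use the phase in which both $p_t(1)\le 1-3\delta_p$ and $q_t(1)\le 1-3\delta_q$; the rotation must pass through it, namely the low/low quadrant. From any time in it, the speed limits mean that reaching $\w x\ge\theta_p(\delta_p/3)$ or $\w y\ge\theta_q(\delta_q/3)$ takes at least $\min\{\tfrac{\delta_p\log 9-o(\delta_p)}{2\eta\delta_p},\tfrac{\delta_q\log 9-o(\delta_q)}{2\eta\delta_q}\}\ge 1/\eta$ steps. This gives (b).

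The main obstacle is making the rotation rigorous. Specifically, I need to show that the trajectory cannot linger near the bands or near $z^\star$ without progressing, and I need to pin down the exact order of the zones visited within the window $[t',t'+\tau]$. I would first handle the case where the start lies near $z^\star$ (both coordinates in bands) by a direct argument from the explicit recursion. Failing that, I would note that the statement only requires existence within a window long enough to contain a complete pass, and check the constant $40$ by summing the per-phase bounds.
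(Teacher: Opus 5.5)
Your route is the paper's: reduce to the scalar recursion for $x_t(1),y_t(1)$, cycle through the four drift zones of Lemma~\ref{lemma:2x2-drift-bounds-unified}, and bound each phase by box width over minimal drift. You then get (a) from a small-step threshold crossing and (b) from the escape-time lower bound $\log 9/(2\eta)\ge 1/\eta$; your (b) even tracks both coordinates, which the paper's text does not.

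\textbf{The gap.} What you call ``the main obstacle'' is a genuine gap, not a technicality. Let $\calI$ be the set where simultaneously $1-3\delta_p<p(1)<1-\delta_p/3$ and $1-3\delta_q<q(1)<1-\delta_q/3$, i.e.\ both coordinates sit in their bands. On $\calI$, Lemma~\ref{lemma:2x2-drift-bounds-unified} says nothing about either coordinate.
\begin{itemize}
\item Your claim that the bands are ``crossed in at most $O(1/\eta)$ steps by the speed bounds'' has the logic reversed. The speed limits $|x_{t+1}(1)-x_t(1)|\le 2\eta\delta_p$ give only a \emph{lower} bound on crossing times, which is exactly what you use correctly in (b). An upper bound needs a drift lower bound, and that exists only while the other coordinate is outside its band, i.e.\ outside $\calI$.
\item $\calI$ is not just a special starting configuration. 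It is an open neighborhood of $w^\star$, so by Theorem~\ref{thm:asymptotic-convergence} (whose stepsize condition is implied by Assumption~\ref{ass:stepsize-dg-upper}) every window $[t',t'+\tau]$ with $t'$ large lies entirely in $\calI$.
\item In such windows no pass through the drift zones ever occurs. So the remark that the window ``is long enough to contain a complete pass'' proves nothing there.
\end{itemize}

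\textbf{Fixing (b).} This needs one observation you missed, and it is how the paper's case split handles it. $\calI$ is contained in the target set $\{p(1)\le 1-\delta_p/3,\ q(1)\le 1-\delta_q/3\}$. If at least $\tau/5\ge 1/\eta$ iterates of the window lie in $\calI$, then (b) holds trivially. Otherwise at most $\tau/5$ of the window is spent there, and the rotation argument is applied to the remainder.

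\textbf{Fixing (a).} There is no such shortcut. You must show that iterates confined near $z^\star$ still circulate and pass through $\{p(1)\ge 1-\delta_p,\ 1-3\delta_q\le q(1)\le 1-\delta_q\}$ within $\tau$ steps. This requires information inside the bands, for example:
\begin{itemize}
\item the exact sign rule of the recursion: $x_t(1)$ increases iff $2q_t(1)-q_{t-1}(1)<1-\delta_q$, and $y_t(1)$ increases iff $2p_t(1)-p_{t-1}(1)>1-\delta_p$;
\item a quantitative rotation rate: near $z^\star$ the linearization rotates with angular speed of order $\eta\sqrt{\delta_p\delta_q}$, so its period is $\ll\tau$. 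The intermediate, nonlinear region would still need handling.
\end{itemize}
The paper disposes of this case only briefly, by appeal to monotone energy decrease and strict convexity of $F$ on $\calZ$. Your ``direct argument from the explicit recursion'' is where the real work lies, and as written the proposal does not supply it.
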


\begin{proof}
To prove the theorem, we repeatedly apply the inequalities
from Lemma~\ref{lemma:2x2-drift-bounds-unified}
to control the trajectory of the dual (and thus primal)
OMWU iterates. 
To start, we set some notation: 
for any $t \ge 0$, let $z_t = (x_t, y_t) \in \calZ$
be the dual iterate of OMWU that evolves as in
Proposition~\ref{prop:omwu-pd-2x2}.
Let $\w x_t = x_t(1), \w y_t = y_t(1) \in \R$,
and let $\w p_t = p_t(1), \w q_t = q_t(1) \in (0, 1)$. 

We also make use of the following regions of 
$\calW = \Delta_2 \times \Delta_2$.
These are defined in terms of the leading coordinates 
$(\w p, \w q) \in [0, 1]^2$ for $(p, q) \in  \Delta_2 \times \Delta_2$
and $\w p = p(1), \w q = q(1)$:
\begin{equation}
\begin{aligned}
	\calW_{\rno}
	&= \{ (p, q) \in \calW \;:\; \w p \le 1-\delta_p / 3,\; \w q \le 1-3\delta_q\} \\
	\calW_{\rnt}
	&= \{ (p, q) \in \calW \;:\; \w p \ge 1-\delta_p/3,\; \w q \le 1-\delta_q/3\} \\
	\calW_{\rnth}
	&= \{ (p, q) \in \calW \;:\; \w p \ge 1-3\delta_p,\; \w q \ge 1-\delta_q/3\} \\
	\calW_{\rnf}
	&= \{ (p, q) \in \calW \;:\; \w p \le 1-3\delta_p,\; \w q \ge 1-3\delta_q\} \\
	\calI 
	&= 
	\{(p, q) \in\calW\;:\; 1-3\delta_p < \w  p <  \delta_p / 3\;,\;
	1-3\delta_q < \w q < \delta_q/ 3\} \;.
\end{aligned}
\label{eq:calW-defs}
\end{equation}
Without loss of generality, fix $t' = 0$.
We will consider two cases. In the first case, 
suppose for all iterations $t \in [\tau]$ that 
$w_t = (p_t, q_t) \notin \calI$.
If $w_0 \in \calW_{\rno}$, then using 
Part (i) of Lemma~\ref{lemma:2x2-drift-bounds-unified}, 
it follows for all $t \ge 1$ that 
while $\w q_{t-1} \le 1-3\delta_q$, then
\begin{equation*}
	\textstyle
	\w x_{t+1} - \w x_t \ge \frac{1}{2} \cdot \eta \delta_p \delta_q \;.
\end{equation*}
Now let $t' > 0$ be the first time such that $w_{t'} \notin \calW_{\rno}$.
Since we assume that $w_t \notin \calI$ for all $t \in [\tau]$,
then we must have $w_{t'} \in \calW_{\rnt}$.
Moreover, as $\w w_{t'} \in \calW_{\rnt}$ 
implies $\w p_{t'} \ge 1-\delta_p/3$, 
we have by the primal-dual relationship 
of Corollary~\ref{corr:2x2-pd-threshold}
that $\w x_{t'} \ge \delta_p \log((1-\delta_p/3)/(\delta_p/3))$ .

Further observe from Part (ii) of 
Lemma~\ref{lemma:2x2-max-coord-bound}
that $-2 \le \w x_t  \le 2$
for any $t \ge 0$,
which holds under the assumption that $\delta_p, \delta_q \le 1/10$.
Thus for any $t, t' \ge 0$, the maximum distance between 
any two dual iterates is bounded by $|\w x_t - \w x_{t'}| \le 4$.
Together with the bound 
$\w x_{t+1} - \w x_t \ge \tfrac{1}{2} \eta \delta_p \delta_q$, 
we must have 
\begin{equation*}
t' \le 4 / (\tfrac{1}{2} \eta \delta_p \delta_q)
=  8 / (\eta \delta_p \delta_q) = \tau / 5 \;.
\end{equation*}
Repeating a similar argument via parts (ii), (iii), and (iv)
of Lemma~\ref{lemma:2x2-drift-bounds-unified}, we 
further find that the iterates $\{w_t\}$ 
cycle from $\calW_{\rnt}$ to $\calW_{\rnth}$
within at most $\tau/5$ iterations, 
from $\calW_{\rnth}$ to $\calW_{\rnf}$
within at most $\tau/5$ iterations,
and from $\calW_{\rnf}$ back to $\calW_{\rno}$
within at most $\tau/5$ iterations. 

Now observe that, by definition of the regions 
in~\eqref{eq:calW-defs},  
between the first $t$ such that $w_t \in \calW_{\rno}$
and the first time $t' \ge t$ such that $w_{t'} \in \calW_{\rnf}$,
there must be an iterate $t''$ such that
$\w p_{t''} \ge 1-\delta_p$ and 
$1-3\delta_q \le \w q_{t''} \le 1-\delta_q$. 
Observe that such an iterate $t''$ satisfies property (a) of the lemma,
as we have from the previous calculation that
$0 \le t'' \le (4/5) \tau \le \tau$, as required.

Moreover, between the first time $t$ such that 
$w_t \in \calW_{\rnf}$ and the first time $t' \ge t$
such that $w_{t'} \in \calW_{\rnt}$,
there must be at least one iteration $t''$ such that
$\w q_{t''} \le 1-3\delta_q$ and $\w p_{t''} \le 1-3\delta_p$. 
Now observe from the dual OMWU update rule of 
Proposition~\ref{prop:omwu-pd-2x2} that we can bound
\begin{equation*}
	\w x_{t+1} - \w x_t 
	= 
	\eta \delta_p \cdot 
	(1-\delta_q - \w q_t - (\w q_t - \w q_{t-1}))
	\le 2 \eta \delta_p  \;.
\end{equation*}
Then letting $t''' > t''$ be the first time such that
$\w p_{t'''} \ge 1-\delta_p/3$, it holds via the
primal-dual relationship of Corollary~\ref{corr:2x2-pd-threshold} that
\begin{equation*}
	t''' - t'' \ge \frac{ \delta_p \cdot \log((9-3\delta_p)/(1-\delta_p/3))}{2 \eta \delta_p}
	\ge 
	\frac{1}{\eta} \;.
\end{equation*}
Here the final inequality comes from 
the fact that 
$ \log((9-3\delta_p)/(1-\delta_p/3)) \ge 2$ for all $\delta_p > 0$.
As $t''' \le 4/5 \tau \le \tau$ by the prior arguments,
this establishes property (b) of the lemma. 

Now, for the second case, suppose that
there exists at least one iteration $t \in [\tau]$ such that $w_t \in \calI$.
If the total number of iterations $w_t \in \calI$
is at least $\tau/5$, 
then property (b) of the lemma is trivially satisfied.
Note in this case that we must also have at least one iteration 
$t$ such that $\w p_t \ge 1-\delta_p$ and $\w q_t \le 1-\delta_q$
due to the strict convexity of the energy function $F$ over $\calZ$, 
and by the fact that $F(z_{t+1}) < F(z_{t})$ for all $t \ge 1$ (by Part (i)
of Lemma~\ref{lemma:2x2-max-coord-bound}).
Thus, property (a) of the lemma is also satisfied in
this subcase. 

On the other hand, suppose the total number of iterations $t \in [\tau]$
with $w_t \in \calI$ is at most $\tau/5$. 
In this case, note again by the strict convexity of $F$
over $\calZ$ and the fact that $F(z_{t+1}) < F(z_{t})$, 
the remaining at least $4\tau/5$ iterations
must at some point transition between the regions 
$\calW_{\rno} \to \calW_{\rnt}
\to \calW_{\rnth} \to \calW_{\rnf} \to \calW_{\rno}$.
Using the prior arguments for the first case, 
it then holds that properties (a) and (b) are both satisfied
within the remaining at least $4\tau/5$ iterations,
which concludes the proof.
\end{proof}

\subsection{Proof of Lemmas~\ref{lem:best-dg-large-T}, 
~\ref{lem:best-dg-mod-T}, and~\ref{lem:best-dg-small-T}}
\label{app:2x2-dg:lemmas-T}

\subsubsection{Proof of Lemma~\ref{lem:best-dg-large-T}}
\label{app:2x2-dg:large-T-proof}

To prove the lemma for the regime~\eqref{eq:large-T}, 
we analyze the iterates of OMWU in two phases.
In the first startup phase, the energy dissipation
(and thus contraction in $\KL$) is fast for at least $1/\eta$ steps
per every epoch of $O(1/(\delta_p\delta_q))$ iterations.
This follows from establishing a tighter lower bound on 
the minimum coordinates $p_{t, \min}$ and $q_{t, \min}$
than the worst-case bound of
Lemma~\ref{lem:wtmin-bound-uniform}.

After $O(1/(\delta_p \delta_q))$ epochs
in the startup phase, meaning
 $O(1/(\delta_p \delta_q)^2)$ total iterations,
the energy will then be sufficiently small enough establish
a uniform lower bound on $p_{t, \min}$ and $q_{t, \min}$
for all remaining iterations. Under the~\eqref{eq:large-T},
this then leads to a bound on the last-iterate 
$\KL(w^\star, w_T) \le O(\exp(- \eta \sqrt{T}))$,
which by~\eqref{cor:distance-relations} translates into a
last-iterate bound in duality gap. 
We proceed to detail these steps:

\paragraph{Energy threshold for uniformly bounded coordinates.} 
Let $V > 0$ 
be the threshold value from expression~\eqref{eq:V-L-def}.
By Proposition~\ref{prop:2x2-max-dual-width}, it holds
for any $z = (x, y) \in \calZ$ that if $F(z) \le V$, then 
\begin{equation*}
-(\delta_q/(1-\delta_q)) \cdot V \le x(1) \le  V
\;\;\text{and}\;\;
-(\delta_q/(1-\delta_q)) \cdot V \le y(1) \le  V \;.
\end{equation*}
By definition of $V$, and using the primal-dual relationships of 
Proposition~\ref{prop:2x2-energy-map-simplified},
it then follows for $w = (p, q) = \nabla F(z) \in \relint(\calW)$
that $p(1) \le 1 - \delta_q/3$ and $q(1) \le 1- \delta_q/3$.

Using Parts (i) and (iii) of 
Lemma~\ref{lemma:2x2-max-coord-bound},
we thus conclude that
if ever $F(z_{T'}) \le V$ for some $T' \ge 1$, then
$\min\{p_t(1), p_t(2)\}, \min\{q_t(1), q_t(2)\} \ge \delta_q/3$
for all subsequent $t \ge T'$
(which holds under the assumption that
$\delta_p \le \delta_q \le 1/10$).

Moreover, $T'' \ge 1$
is the first iterate such that $\KL(w^\star, w_{T''}) \le V$.
By Proposition~\ref{prop:calZ-minimizer}, 
we have $\KL(w^\star, w) \le F(z)$ for $w = \nabla F(z)$.
Thus if $T'$ is the first time such that $F(z_{T'}) \le V$,
then $T' \le T''$.
We will show in the following that $T'' \le T/2$,
which will then establish the uniform bound 
on $\min\{p_t(1), p_t(2)\}$ and $\min\{q_t(1), q_t(2)\}$
for all subsequent $t \ge T''$.

\paragraph{Energy decay in startup phase.}
To establish the upper bound on $T''$, 
recall by Property (b) of Lemma~\ref{lemma:2x2-epoch-bounds},
that simultaneously $p_t(1) \ge 1-\delta_p/3$
and $q_t(1) \ge 1-\delta_q/3$
for at least $1/\eta$ iterations $t$ every 
$\tau = \frac{L}{\eta \delta_p\delta_q}$ total steps,
where $L$ is as defined in~\eqref{eq:V-L-def}.
By Part (iii) of Lemma~\ref{lemma:2x2-max-coord-bound},
this means for all such $t$ that 
$p_{t, \min} = \min\{p_t(1), p_t(2)\} \ge \delta_p/3$ and
$q_{t, \min} = \min\{q_t(1), q_t(2)\} \ge \delta_q/3$. 
Now define the values $K$ and $T_{\text{startup}}$ by
\begin{equation}
	K = \frac{720 \cdot \log(2/V)}{\eta \delta_p \delta_q}
	\;\;\text{and}\;\;
	T_{\text{startup}}
	= 
	K \cdot \tau = K \cdot \frac{L}{\eta \delta_p \delta_q} \;.
	\label{eq:K-Tthresh-def}
\end{equation}
Further recall from Proposition~\ref{prop:rsv-canonical-2x2}
that for $A = A_{\delta_p, \delta_q}$, we have 
$\sigma_{\min} = 
\sigma_{\min}(A_{\delta_p, \delta_q}, \calS^\bot) = \frac{1}{2}$.
Then by the universal one-step multiplicative change 
in $\KL$ divergence from Theorem~\ref{thm:kl-last-unified}
(and in particular, using the expression from
Corollary~\ref{cor:kl-last-improved-one-step}, recall that
for every such $t$ where 
$p_{t, \min} \ge \delta_p/3$ and $q_{t, \min} \ge \delta_q/3$,
we have
\begin{align*}
	\KL(w^\star, w_{t+1})
	&\le 
	\KL(w^\star, w_t) \cdot 
	\exp\big(
		- \tfrac{1}{20} \eta^2 \sigma_{\min}^2 \cdot
		p_{t, \min} \cdot q_{t, \min}
	\big) \\
	&\le 
	\KL(w^\star, w_t) \cdot 
	\exp\big(
		- \tfrac{1}{720} \eta^2 \delta_p \delta_q
	\big) \;.
\end{align*}
By definition of $T_{\text{startup}}$, 
Property (b) of Lemma~\ref{lemma:2x2-epoch-bounds} 
implies that such a bound on 
$p_{t, \min}$ and $q_{t, \min}$
must hold for at least $K/\eta$ iterations.
Thus, by definition of $K$, 
we have at time $T_{\text{startup}}$ that
\begin{align*}
	\KL(w^\star, w_{T_{\text{startup}}})
	&\le 
	\KL(w^\star, w_0)
	\cdot 
	\exp\big(- 
		 \tfrac{1}{720} \eta^2 \delta_p \delta_q \cdot 
		 \tfrac{K}{\eta}
	\big) \\
	&\le 
	\KL(w^\star, w_0) 
	\cdot 
	\tfrac{V}{2} 
	\le 
	V \;.
\end{align*}
Here, the final inequality is due to $\KL(w^\star, w_0) \le 2$,
which follows from the fact that 
$F(z_0) \le 2$ 
(by  Part (i) of Lemma~\ref{lemma:2x2-max-coord-bound}),
and the fact that $\KL(w^\star, w_0) \le F(z_0)$ 
(by Proposition~\ref{prop:calZ-minimizer}).

Moreover, observe that in the regime~\eqref{eq:large-T}, 
we have by definition of $K$ and $T_{\text{startup}}$ that 
\begin{equation}
	T_{\text{startup}} 
	= 
	\frac{720 \cdot L \cdot \log(2/V)}
	{\eta^2 (\delta_p \delta_q)^2}
	\le \frac{T}{2} \;.
\end{equation}

\paragraph{Final last-iterate bound in KL Divergence and
Duality Gap.}
By the preceding arguments, it thus holds for 
all $T_{\text{startup}} \le t \le T$ that 
$p_{t, \min} \ge \delta_q/3$ 
and $q_{t, \min} \ge \delta_q/3$.
Again applying the bound on the one-step multiplicative 
change in $\KL$ from Theorem~\ref{thm:kl-last-unified},
we find
\begin{align}
	\KL(w^\star, w_T) 
	&\le 
	\KL(w^\star, w_{T_{\text{startup}}})
	\cdot 
	\exp
	\big(
		- \tfrac{1}{720} 
		\eta^2 \delta_q^2 \cdot
		(T - T_{\text{startup}})
	\big) \nonumber \\
	&\le 
	V
	\cdot 
	\exp\big(
		- \tfrac{1}{720}
		\eta^2 \delta_q \delta_p \cdot \tfrac{T}{2}
	\big) \;. 
	\label{eq:lt-01}
\end{align}
In the final inequality, we used the facts that 
$\delta_q \ge \delta_p$, that
$\KL(w^\star, w_{T_{\text{startup}}}) \le \KL(w^\star, w_0) \le V$,
and that $T-T_{\text{startup}} \ge T/2$. 
Now, by definition of the regime~\eqref{eq:large-T}, 
observe that 
\begin{equation*}
	\frac{1440 \cdot  L \cdot \log(2/V)}{\eta^2 (\delta_p\delta_q)^2} 
	\le T
	\;\;\implies\;\;
	\delta_p \delta_q 
	\ge 
	\frac{\sqrt{1440 \cdot  L \cdot \log(2/V)}}{\eta \sqrt{T}} 
	\ge 
	\frac{\sqrt{1440L}}{\eta \sqrt{T}}  \;.
\end{equation*}
Here, the last inequality follows from the fact 
that $V \le 1$ by definition of $V$ from~\eqref{eq:V-L-def},
and thus $\log(2/V) \ge 1$.
Then continuing from~\eqref{eq:lt-01}, 
we can further bound
\begin{align*}	
	\KL(w^\star, w_T) 
	\le 
	V
	\cdot 
	\exp\big(
	- \tfrac{\sqrt{2L}}{\sqrt{720}} \cdot \eta \sqrt{T}
	\big) 
	\le
	\exp\big(
	- \tfrac{\sqrt{2L}}{\sqrt{720}} \cdot \eta \sqrt{T}
	\big) \;.
\end{align*}
Now by the relationships between $\DG(w_T)$
and $\KL(w^\star, w_T)$ of Corollary~\ref{cor:distance-relations},
and using the fact that $A = A_{\delta_p, \delta_q} \in [-1,1]^{2\times 2}$,
we find
\begin{equation*}
 	\DG(w_T)
 	\le \sqrt{4 \KL(w^\star, w_T)} 
 	\le 
 	O\big(\exp(- \tfrac{1}{2} \cdot \eta \sqrt{T})\big) \;,
\end{equation*}
which concludes the proof.
~\hfill $\blacksquare$

\subsubsection{Proof of Lemma~\ref{lem:best-dg-mod-T}}
\label{app:2x2-dg:mod-T-proof}

For the regime~\eqref{eq:mod-T}, 
recall by Part (b) of Proposition~\ref{lemma:2x2-epoch-bounds}
that, within at most $\tau = \frac{L}{\eta \delta_p\delta_q}$
iterations (for $L$ defined in~\eqref{eq:V-L-def})
there exists at least one iteration $t \in [\tau]$ such that
\begin{equation}
	p_t(1) \ge 1-\delta_p
	\;\;\text{and}\;\; 1-3\delta_q \le q_t(1) \le 1-\delta_q \;.
	\label{eq:modt-01}
\end{equation}
For such an iterate $w_t = (p_t, q_t)$,
observe then that
\begin{equation*}
	p_t(1) - (1-\delta_p) \le 1
	\;\;\text{and}\;\;
	(1-\delta_q) - q_t(1) \le 2 \delta_q \;.
\end{equation*}
Then using the characterization of $\DG(\cdot)$
under $A_{\delta_p, \delta_q}$
from~\eqref{eq:dg-2x2-full} in Section~\ref{sec:2x2-dg},
this means that we can bound for this iterate
\begin{equation}
	\DG(w_t)
	\le 
	(p_t(1) - (1-\delta_p)) \cdot \delta_q
	+ (1-\delta_q - q_t(1) )\cdot \delta_p
	\le
	\delta_p \delta_q
	+ 2\delta_p \delta_q
	\le 
	3 \delta_p \delta_q \;.
	\label{eq:modt-02}
\end{equation}
For the regime~\eqref{eq:mod-T}, recall that we have 
$T \ge L / (\eta \delta_p \delta_q) = \tau$.
Thus by definition,
we are guaranteed the iterate $w_t$ satisfying~\eqref{eq:modt-01}
(and thus~\eqref{eq:modt-02})
occurs within $T$ total iterations.
~ \hfill $\blacksquare$

\subsubsection{Proof of Lemma~\ref{lem:best-dg-small-T}}
\label{app:2x2-dg:small-T-proof}

The proof distinguishes between two cases:

First, using the fact that initially $p_0 = q_0 = (1/2, 1/2)$,
we have by Part (iii) of Lemma~\ref{lemma:2x2-drift-bounds-unified} 
and the primal-dual relationship of 
Proposition~\ref{prop:omwu-pd-2x2}
the following: 
while $p_{t-1}(1), p_t(1) < 1-3\delta_p$, then 
\begin{equation}
	y_{t+1}(1) - y_t(1) < 0 
	\;\implies\;
	q_{t+1}(1) - q_t(1) < 0 \;.	
	\label{eq:st-01}
\end{equation}
Then the proof distinguishes between the following two cases:
\begin{enumerate}[
	label={(\alph*)}
]	
\item
There exists $1 \le t \le T$ such that
$p_t(1) \ge 1-3 \delta_p$. 
\item
It holds for all $1 \le t \le T$
that $p_t(1) < 1- 3\delta_p$.
\end{enumerate}

For Case (a), let $t$ be the first such time
where the iterate $w_t = (p_t,  q_t)$
satisfies $p_t(1) \ge 1-3\delta_p$.
By~\eqref{eq:st-01}, this further implies 
$q_t(1) \le q_t(0) \le 1/2 < 1- \delta_q$, 
where the final inequality holds under the assumption 
$\delta_q \le 1/10$.
Then by definition of $\DG(\cdot)$ under $A = A_{\delta_p, \delta_q}$
from~\eqref{eq:dg-2x2-full} in Section~\ref{sec:2x2-dg}, 
it holds that
\begin{equation}
	\DG(w_t) 
	= 
	(1-\delta_p - p_t(1)) (1-\delta_q)	
	+ 
	(1-\delta_q - q_t(1)) \delta_p 
	\le 
	2 \delta_p + \delta_p 
	= 3 \delta_p \;.
	\label{eq:st-casea}
\end{equation}

For Case (b), expression~\eqref{eq:st-01} further implies
that $q_t(1) \le q_{t-1}(1) \le q_0(1) = 1/2 \le 1-3\delta_q$
for all $t \in [T]$,
where the final inequality holds for $\delta_q \le 1/10$.
Now for each $t$, let $z_t = (x_t, y_t) \in \calZ$
be the corresponding dual OMWU iterate. 
By the update rule of Proposition~\ref{prop:omwu-pd-2x2},
it follows that 
\begin{align*}
	x_{t+1}(1) - x_t(1)
	&= 
	\eta \delta_p \cdot 
	\big(
		1-\delta_q - q_t(1) 
		- (q_t(1) - q_{t-1}(1))
	\big) \\
	&\ge 
	\eta \delta_p \cdot 
	\big(
		1-\delta_q - q_t(1) 
	\big)  
	&&\quad\text{(since $q_t(1) - q_{t-1}(1) \le 0$)}  \\
	&\ge 
	\eta \delta_p \cdot 
	\big(
		\tfrac{1}{2} - \delta_q
	\big)  
	&&\quad\text{(since $q_t(1) \le 1/2$)}  \\ 
	&\ge 
	\tfrac{1}{4} \cdot \eta \delta_p 
	&&\quad\text{(since $\delta_q \le 1/10 \le 1/4$)} \;.
\end{align*}
Thus for all $t \in [T]$ we find 
\begin{equation}
	x_{t}(1) \ge  x_0(1) + t \cdot \tfrac{\eta}{4} \delta_p 
	= t \cdot \tfrac{\eta}{4} {\delta_p}
	\;\;\implies\;\;
	p_t \ge 
	\sigmoid\big(\tfrac{t\eta \delta_p}{4\delta_p}\big)
	= 
	\sigmoid\big(\tfrac{t\eta}{4}\big) \;.
	\label{eq:st-02}
\end{equation}
Here, the final equality in the lefthand side
comes from $x_0(1) = 0$
under $p_0 = (1/2, 1/2)$,
and the implication is due 
to the primal-dual relationship of 
Proposition~\ref{prop:omwu-pd-2x2}.
Again using the definition of $\DG(\cdot)$ under $A = A_{\delta_p, \delta_q}$
from~\eqref{eq:dg-2x2-full}, we then find for all $t \in [T]$ that
\begin{align*}
	\DG(w_t)
	&= 
	(1-\delta_p - p_t(1))(1-\delta_q)
	+ 
	(1-\delta_q - q_t(1)) \delta_p \\
	&\le
	1-\delta_p - p_t(1)
	+ \delta_p  \\
	&= 
	1- p_t(1)
	\le
	1- \sigmoid(\tfrac{t \eta}{4}) 
	\le 
	\tfrac{4}{\eta t} \;,
\end{align*}
where the final inequality holds due to $t \eta > 0$.
Thus we conclude for Case (b) that
\begin{equation}
	\min\nolimits_{t \in [T]} \DG(w_t) \le \frac{4}{\eta T} \;.
	\label{eq:st-caseb}
\end{equation}
The statement of the lemma then follows by
combining the conclusions of 
expression~\eqref{eq:st-casea} for Case (a)
and expression~\eqref{eq:st-caseb} for Case (b).
~ \hfill $\blacksquare$

\end{document}